\theoremstyle{plain}
\newtheorem{thm}{Theorem}
\newtheorem{cor}[thm]{Corollary}
\newtheorem{lem}[thm]{Lemma}
\newtheorem{prop}[thm]{Proposition}
\theoremstyle{definition}
\newtheorem{defn}{Definition}
\theoremstyle{definition}
\newtheorem{rem}{Remark}
\theoremstyle{definition}
\theoremstyle{definition}
\theoremstyle{definition}
\newcommand{\qqed}{\hfill $\Box$}
\begin{document}
%
\title{Using Chinese Characters To Generate Text-Based Passwords For Information Security}


\author{\IEEEauthorblockN{Bing \textsc{Yao}$^{1,8}$,  Yarong \textsc{Mu}$^{1}$,  Yirong \textsc{Sun}$^{1}$, Hui \textsc{Sun}$^{1,4}$, Xiaohui \textsc{Zhang}$^{1}$, Hongyu \textsc{Wang}$^{4}$,  Jing \textsc{Su}$^{4}$\\ Mingjun \textsc{Zhang}$^{2,\dagger}$,  Sihua \textsc{Yang}$^{2}$,  Meimei  \textsc{Zhao}$^{3}$,  Xiaomin \textsc{Wang}$^{4}$, Fei \textsc{Ma}$^{4}$\\ Ming \textsc{Yao}$^{5}$, Chao \textsc{Yang}$^{6}$, Jianming \textsc{Xie}$^{7}$}
\IEEEauthorblockA{{1} College of Mathematics and Statistics,
 Northwest Normal University, Lanzhou, 730070, CHINA}
 \IEEEauthorblockA{{2} School of Information Engineering, Lanzhou University of Finance and Economics, Lanzhou, 730030, CHINA}
\IEEEauthorblockA{{3} College of Science, Gansu Agricultural University, Lanzhou 730070, CHINA}
\IEEEauthorblockA{{4} School of Electronics Engineering and Computer Science, Peking University, Beijing, 100871, CHINA}
\IEEEauthorblockA{{5} Department of Information Process and Control Engineering, Lanzhou Petrochemical College of\\ Vocational Technology, Lanzhou, 730060, CHINA}
\IEEEauthorblockA{{6} School of Mathematics, Physics \& Statistics, Shanghai University of Engineering Science, Shanghai, 201620, CHINA}
\IEEEauthorblockA{{7} Department of Mathematics of Lanzhou City University, Lanzhou, 730070, CHINA}
\IEEEauthorblockA{{8} School of Electronics and Information Engineering, Lanzhou Jiaotong University, Lanzhou, 730070, CHINA\\
$^\dagger$ Corresponding authors: shuxue1998@163.com}

\thanks{Manuscript received December 8, 2017; revised December 26, 2017.
Corresponding author: Bing Yao, email: yybb918@163.com.}}


%


\maketitle

\begin{abstract}
Graphical passwords (GPWs) are in many areas of the current world, in which the two-dimensional code has been applied successfully nowadays. Topological graphic passwords (Topsnut-gpws) are a new type of cryptography, and they differ from the existing GPWs. A Topsnut-gpw consists of two parts: one is a topological structure (graph), and one is a set of discrete elements (a graph labelling, or coloring), the topological structure connects these discrete  elements together to form an interesting ``story''. It is not easy to remember passwords made up of longer bytes for many Chinese people. Chinese characters are naturally
topological structures and have very high information density, especially, Chinese couplets form natively public keys and private keys in authentication. Our idea is to transform  Chinese characters into computer and electronic equipments with touch screen by speaking, writing and keyboard for forming  Hanzi-gpws (one type of Topsnut-gpws). We will translate Chinese characters into graphs (Hanzi-graphs), and apply mathematical techniques (graph labellings) to construct Hanzi-gpws, and next using Hanzi-gpws  produces text-based passwords (TB-paws) with longer bytes as desired. We will explore a new topic of encrypting networks by means of algebraic groups, called graphic groups (Ablian additive finite group), and construct several kinds of self-similar Hanzi-networks, as well as some approaches for the encryption of networks, an important topic of researching information security. The stroke order of writing Chinese characters motivates us to study directed Hanzi-gpws based on directed graphs. We will introduce flawed graph labellings on disconnected Hanzi-graphs such that each Hanzi-gpw with a flawed graph labelling can form a set of connected Topsnut-gpws, like a group. Moreover, we introduce connections between different graphic groups that can be used to encrypt networks based on community partition.\\[4pt]
\end{abstract}
\textbf{\emph{Keywords---Chinese Characters; Text-Based Passwords; topological graphic passwords; security. }}


%
\IEEEpeerreviewmaketitle

\pagestyle{fancy}
\pagestyle{plain}

\section{Introduction and preliminary}

Security of cyber and information  is facing more challenges  and thorny problems in today's world. There may exist such situation: a protection by the virtue of AI (artificial intelligence) resists attackers equipped by AI in current networks. We have to consider: How to overcome various attacker equipped by AI's tools?

The origin of AI was generally acknowledged in Dartmouth Conference in 1956. In popularly,  AI is defined as: ``\emph{Artificial intelligence (AI) is a branch of computer science. It attempts to understand the essence of intelligence and produce a new kind of intelligence machine that can respond in a similar way to human intelligence. Research in this field includes robots, language recognition, image recognition, natural language processing and expert systems}.''

In fact, the modern AI can be divided into two parts, namely, ``artificial'' and ``intelligence''. It is difficult for computer to learn ``qualitative change independent of quality'' in terms of learning and ``practice''. It is difficult for them to go directly from on ``quality'' to another ``quality'' or from one ``concept'' to another ``concept''. Because of this, practice here is not the same practice as human beings, since the process of human practice includes both experience and creation.

For the above statement on AI, we cite an important sentence: \emph{A key signature of human intelligence is the ability to make `infinite use of finite means'} (\cite{Humboldt-W-1999-1836} in 1836; \cite{Chomsky-N-1965} in 1965), as the beginning of an article entitled  ``\emph{Relational inductive biases, deep learning, and graph networks}'' by Battaglia \emph{et al.} in \cite{Battaglia-27-authors-arXiv1806-01261v2}. They have pointed out: ``\emph{in which a small set of elements (such as words) can be productively composed in limitless ways (such as into new sentences)}'', and they argued that \emph{combinatorial generalization} must be a top priority for AI to achieve human-like abilities, and that structured representations and computations are key to realizing this object. As an example of supporting `infinite use of finite means', \emph{self-similarity} is common phenomena between a part of a complex system and the whole of the system.

Yao \emph{et al.} in \cite{Yao-Sun-Zhang-Mu-Wang-Xu-2018} have listed some advantages of Chinese characters. Wang Lei, a teacher and researcher of Shenyang Institute of Education, stepped on the stage of ``I am a speaker'' and explained the beauty of Chinese characters as: (1) Chinese characters are pictographs, and each one of Chinese characters represents a meaning, a truth, a culture, a spirit. (2) Chinese characters are naturally topological structures. (3) The biggest advantage of Chinese characters is that the information density is very high. (4) Chinese characters is their inheritance and stability. Chinese characters are picturesque in shape, beautiful in sound and complete in meaning. It is concise, efficient and vivid, and moreover it is the most advanced written language in the world.

\subsection{Researching background}

The existing \emph{graphical passwords} (GPWs) were investigated for a long time (Ref. \cite{Suo-Zhu-Owen-2005, Biddle-Chiasson-van-Oorschot-2009, Gao-Jia-Ye-Ma-2013}). As an alternation, Wang \emph{et al.} in \cite{Wang-Xu-Yao-2016} and \cite{Wang-Xu-Yao-Key-models-Lock-models-2016} present a new-type of graphical passwords, called \emph{topological graphic passwords} (Topsnut-gpws), and show their Topsnut-gpws differing from the existing GPWs. A Topsnut-gpw consists of two parts: one is a topological structure (graph), and one is a set of discrete elements (here, a graph labelling, or a coloring), the topological structure connects these discrete  elements together to form an interesting ``story'' for easily remembering. Graphs of graph theory are ubiquitous in the real world, representing objects and their relationships such as social networks, e-commerce networks, biology networks and traffic networks and many areas of science such as Deep Learning, Graph Neural Network, Graph Networks (Ref. \cite{Battaglia-27-authors-arXiv1806-01261v2} and \cite{LeCun-Bengio-Hinton-2015}).  Topsnut-gpws based on techniques of graph theory, in recent years, have been investigated fast and produce abundant fruits (Ref. \cite{SUN-ZHANG-YAO-IMCEC-2018,ZHANG-SUN-YAO-Liu-IMCEC-2018,ZHANG-SUN-YAO-IMCEC-2018}).

As examples, two Topsnut-gpws is shown in Fig.\ref{fig:example-1} (b) and (c).

\begin{figure}[h]
\centering
\includegraphics[height=4.2cm]{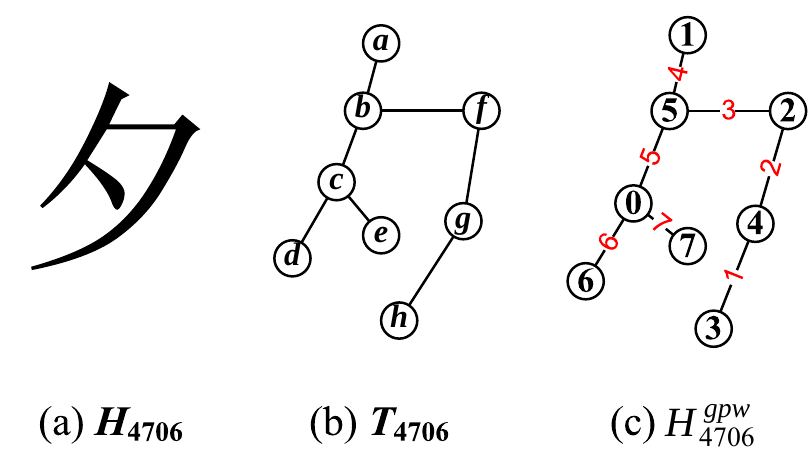}\\
\caption{\label{fig:example-1} {\small (a) A simplified Chinese character $H_{4706}$ defined in \cite{GB2312-80}; (b) a mathematical model $T_{4706}$ of $H_{4706}$; (c) another mathematical model $H^{gpw}_{4706}$ of $H_{4706}$.}}
\end{figure}

\begin{figure}[h]
\centering
\includegraphics[height=4.6cm]{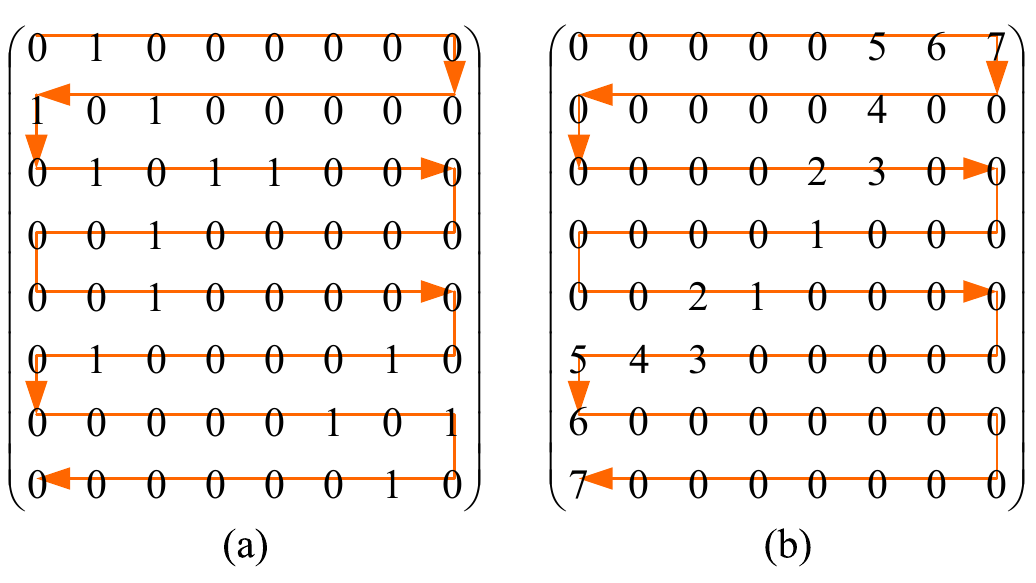}\\
\caption{\label{fig:example-1-matrix} {\small (a) A popular matrix $A(T_{4706})$ of $T_{4706}$; (b) the Hanzi-matrix $A^*(H^{gpw}_{4706})$ of $H^{gpw}_{4706}$.}}
\end{figure}

There are many advantages of Topsnut-gpws, such as, the space of Topsnut-gpws is large enough such that the decrypting Topsnut-gpws will be terrible and horrible if using current computer. In graph theory, Cayley's formula (Ref. \cite{Bondy-2008})
\begin{equation}\label{eqa:Cayley-Formula}
\tau (K_n)=n^{n-2}
\end{equation}
pointed that the number $\tau (K_n)$ of spanning trees (tree-like Topsnut-gpws) of a \emph{complete graph} (network) $K_n$ is non-polynomial, so Topsnut-gpws are \emph{computationally security}; Topsnut-gpws are suitable for people who need not learn new rules and are allowed to use their private knowledge in making Topsnut-gpws for the sake of remembering easily; Topsnut-gpws, very often, run fast in communication networks because they are saved in computer by popular matrices rather than pictures; Topsnut-gpws are suitable for using mobile equipments with touch screen and speech recognition; Topsnut-gpws can generate quickly text-based passwords (TB-paws) with bytes as long as desired, but these TB-paws can not reconstruct the original Topsnut-gpws, namely, it is \emph{irreversible}; many mathematical conjectures (NP-problems) are related with Topsnut-gpws such that they are really \emph{provable security}.

The idea of ``translating Chinese characters into Topsnut-gpws'' was first proposed in \cite{Yao-Sun-Zhang-Mu-Wang-Xu-2018}. Topsnut-gpws were made by \emph{Hanzi-graphs} are called \emph{Hanzi-gpws} (Ref. \cite{MU-YAO-2018-11, MU-YAO-2018-22, MU-ZHANG-YAO-2018}), see a Hanzi-graph $T_{4706}$ and a Hanzi-gpw $H^{gpw}_{4706}$ are shown in Fig.\ref{fig:example-1} (b) and (c). By the narrowed line under the Hanzi-matrix $A^*(T_{4706})$ shown in Fig.\ref{fig:example-1-matrix}(a), we get a text-based password (TB-paw) as follows
$${
\begin{split}
A(T_{4706})=&01000000000001010101100000000100\\
&00100000010000100000010101000000
\end{split}}
$$
and furthermore we obtain another TB-paw
$${
\begin{split}
A(H^{gpw}_{4706})=&00000567004000000000230000010000\\
&00210000000003456000000000000007
\end{split}}
$$
along the narrowed line under the Hanzi-matrix $A^*(H^{gpw}_{4706})$ shown in Fig.\ref{fig:example-1-matrix}(b). There are efficient algorithms for writing $A(T_{4706})$ and $A(H^{gpw}_{4706})$ from the Hanzi-matrices. It is not difficult to see there are at least $(64)!$ TB-paws made by two matrices $A^*(T_{4706})$ and $A^*(H^{gpw}_{4706})$, respectively.

There are many unsolved problems in graph theory, which can persuade people to believe that Topsnut-gpws can withdraw cipher's attackers, such a famous example is: ``\emph{If a graph $R$ with the maximum $R(s,k)-1$ vertices has no a complete graph $K_s$ of $s$ vertices and an independent set of $k$ vertices, then we call $R$ a \emph{Ramsey graph} and $R(s,k)-1$ a \emph{Ramsey number}. As known, it is a terrible job for computer to find Ramsey number $R(5,5)$, although we have known $46\leq R(5,5) \leq49$}''. Joel Spencer said:``\emph{Erd\"{o}s asks us to imagine an alien force, vastly more powerful than us, landing on Earth and demanding the value of $R(5,5)$ or they will destroy our planet. In that case, he claims, we should marshal all our computers and all our mathematicians and attempt to find the value. But suppose, instead, that they ask for $R(6,6)$. In that case, he believes, we should attempt to destroy the aliens}''.

\subsection{Researching tasks}

Although Yao \emph{et al.}  \cite{Yao-Sun-Zhang-Mu-Wang-Xu-2018} have proposed Hanzi-graphs and Hanzi-gpws, however, we think that their junior work is just a beginning on Topsnut-gpws made by the idea of ``Hanzi-graphs puls graph labellings''.

Our goals are: (1) To design passwords of Chinese characters by voice inputting, hand inputting into computers and  mobile equipments with touch screen; (2) to make more complex TB-paws for encrypting electronic files, or encrypting networks.

In technique, we will introduce how to construct mathematical models of Chinese characters, called \emph{Hanzi-graphs}, and then use Hanzi-graphs and graph labelling/colorings to build up Hanzi-graph passwords, called \emph{Hanzi-gpws}. Then, several types of Hanzi-matrices will be defined for producing TB-paws. Moreover, we will explore to encrypt dynamic networks, such as deterministic networks, scale-free networks, self-similar networks, and so on.

In producing TB-paws from Hanzi-gpws, we can get TB-paws  with hundreds bytes. As known,   brute-force attacks work by calculating every possible combination that could make up a password and testing it to see if it is the correct password. As the password's length increases, the amount of time, on average, to find the correct password increases exponentially. AES  (Advanced Encryption Standard) permits the use of 256-bit keys.  How many possible combinations of $2^{256}$ (or 256-bit)  encryption are there?  There are

115,792,089,237,316,195,423,570,985,008,687,907,853,

269,984,665,640,564,039,457,584,007,913,129,639,936\\
(78 digits) possible combinations for 256-bit keys \cite{256-bit-encryption}. Breaking a symmetric 256-bit key by brute force requires $2^{128}$ times more computational power than a 128-bit key. Fifty supercomputers that could check a billion billion ($10^{18}$) AES keys per second (if such a device could ever be made) would, in theory, require about $3\times 10^{51}$ years to exhaust the 256-bit key space, cited from ``Brute-force attack'' in Wikipedia.

\subsection{Preliminaries: terminology, notation and definitions}

Undefined labelling definitions, terminology and algorithms mentioned here are cited from \cite{Bondy-2008} and \cite{Gallian2018}. The following terminology and notation will be used in this article:
\begin{asparaenum}[$\ast$ ]
\item Hanzis (Chinese characters) mentioned here are listed in GB2312-80 encoding of Chinese characters, in which there are 6763 simplified Chinese characters and 682 signs (another Chinese encoding is GBK,  formed in Oct. 1995, containing 21003 simplified Chinese characters and 883 signs, \cite{GB2312-80}).
\item A $(p,q)$-graph $G$ has $p$ vertices (nodes) and $q$ edges (links), notations $V(G)$ and $E(G)$ are the sets of  vertices and edges of $G$, respectively.
\item The number of elements of a set $X$ is called \emph{cardinality}, denoted as $|X|$.
\item The set of neighbors of a vertex $x$ is denoted as $N_{ei}(x)$, and the number of elements of the set $N_{ei}(x)$ is denoted as $|N_{ei}(x)|$, also, $|N_{ei}(x)|$ is called the \emph{degree} of the vertex $x$, very often, write $\textrm{deg}(x)=|N_{ei}(x)|$.
\item A vertex $u$ is called a ``leaf'' if its degree $\textrm{deg}(u)=|N_{ei}(u)|=1$.
\item A subgraph $H$ of a graph $G$ is called a \emph{vertex-induced subgraph} over a subset $S$ of $V(G)$ if $V(H)=S$ and $u,v\in S$ for any $uv\in E(H)$. Very often, we write this subgraph as $H=G[S]$.
\item An \emph{edge-induced graph} $G[E\,^*]$ over an edge subset $E\,^*$ of $E(G)$ is a subgraph having its edge set $E\,^*$ and its vertex set $V(G[E\,^*])\subseteq V(G)$ containing two ends of every edge of $E\,^*$.
\end{asparaenum}

We will use various labelling techniques of graph theory in this article.
\begin{defn}\label{defn:define-labelling-basic}
\cite{Yao-Sun-Zhang-Mu-Sun-Wang-Su-Zhang-Yang-Yang-2018arXiv} A \emph{labelling} $h$ of a graph $G$ is a mapping $h:S\subseteq V(G)\cup E(G)\rightarrow [a,b]$ such that $h(x)\neq h(y)$ for any pair of elements $x,y$ of $S$, and write the label set $h(S)=\{h(x): x\in S\}$. A \emph{dual labelling} $h'$ of a labelling $h$ is defined as: $h'(z)=\max h(S)+\min h(S)-h(z)$ for $z\in S$. Moreover, $h(S)$ is called the \emph{vertex label set} if $S=V(G)$, $h(S)$ the \emph{edge label set} if $S=E(G)$, and $h(S)$ the \emph{universal label set} if $S=V(G)\cup E(G)$. Furthermore, if $G$ is a bipartite graph with vertex bipartition $(X,Y)$, and holds $\max h(X)<\min h(Y)$, we call $h$ a \emph{set-ordered labelling} of $G$.\qqed
\end{defn}

We use a notation $S^2$ to denote the set of all subsets of a set $S$. For instance, $S=\{a,b,c\}$, so $S^2$ has its own elements: $\{a\}$, $\{b\}$, $\{c\}$, $\{a,b\}$, $\{a,c\}$, $\{b,c\}$ and $\{a,b,c\}$. The empty set $\emptyset$ is not allowed to belong to $S^2$ hereafter. We will use set-type of labellings defined in the following Definition \ref{defn:set-labelling}.

\begin{defn}\label{defn:set-labelling}
\cite{Yao-Sun-Zhang-Mu-Sun-Wang-Su-Zhang-Yang-Yang-2018arXiv} Let $G$ be a $(p,q)$-graph $G$. We have:

(i) A \emph{set mapping} $F: V(G)\cup E(G)\rightarrow [0, p+q]^2$ is called a \emph{total set-labelling} of $G$ if $F(x)\neq F(y)$ for distinct elements $x,y\in V(G)\cup E(G)$.

(ii) A \emph{vertex set mapping} $F: V(G) \rightarrow [0, p+q]^2$ is called a \emph{vertex set-labelling} of $G$ if $F(x)\neq F(y)$ for distinct vertices $x,y\in V(G)$.

(iii) An \emph{edge set mapping} $F: E(G) \rightarrow [0, p+q]^2$ is called an \emph{edge set-labelling} of $G$ if $F(uv)\neq F(xy)$ for distinct edges $uv, xy\in E(G)$.

(iv) A \emph{vertex set mapping} $F: V(G) \rightarrow [0, p+q]^2$ and a proper edge mapping $g: E(G) \rightarrow [a, b]$ are called a \emph{v-set e-proper labelling $(F,g)$} of $G$ if $F(x)\neq F(y)$ for distinct vertices $x,y\in V(G)$ and two edge labels $g(uv)\neq g(wz)$ for distinct edges $uv, wz\in E(G)$.

(v) An \emph{edge set mapping} $F: E(G) \rightarrow [0, p+q]^2$ and a proper vertex mapping $f: V(G) \rightarrow [a,b]$ are called an \emph{e-set v-proper labelling $(F,f)$} of $G$ if $F(uv)\neq F(wz)$ for distinct edges $uv, wz\in E(G)$ and two vertex labels $f(x)\neq f(y)$ for distinct vertices $x,y\in V(G)$.\qqed
\end{defn}

\section{Translating Chinese characters into graphs}

Hanzis, also Chinese characters, are expressed in many forms, such as: font, calligraphy, traditional Chinese characters, simplified Chinese characters, brush word, \emph{etc.} As known, \emph{China Online Dictionary} includes Xinhua Dictionary, Modern Chinese Dictionary, Modern Idiom Dictionary, Ancient Chinese Dictionary, and other  12 dictionaries total, China Online Dictionary contains about 20950 Chinese characters; 520,000 words; 360,000 words (28,770 commonly used words); 31920 idioms; 4320 synonyms; 7690 antonyms; 14000 allegorical sayings; 28070 riddles; and famous aphorism 19420.

\subsection{Two types of Chinese characters}

In general, there are two type of Chinese characters used in the world, one is called \emph{traditional Chinese characters} and another one \emph{simplified Chinese characters}, see Fig.\ref{fig:jian-fan-zi}. We, very often, call a traditional  Chinese characters or a simplified Chinese characters as a \emph{Hanzi} (Chinese character).

The stroke number of a Hanzi $S^{CC}$ is less than that of the traditional Chinese character $T^{CC}$ corresponding with $S^{CC}$, in general. We can compute the difference of two strokes of two-type Chinese characters $S^{CC}$ and $T^{CC}$, denoted as $D(H)=s(T^{CC})-s(S^{CC})$. For example, $D(H_{13})=s(T^{CC}_{13})-s(S^{CC}_{13})=24-3=21$, where the Hanzi $T^{CC}_{13}=S^{CC}_{13}$ is shown in Fig.\ref{fig:jian-fan-zi}(13). And, $D(H_{3})=s(T^{CC}_{3})-s(S^{CC}_{3})=23-3=20$, where the Hanzi $T^{CC}_{3}=S^{CC}_{3}$ is shown in Fig.\ref{fig:jian-fan-zi}(3).

\begin{figure}[h]
\centering
\includegraphics[height=2.2cm]{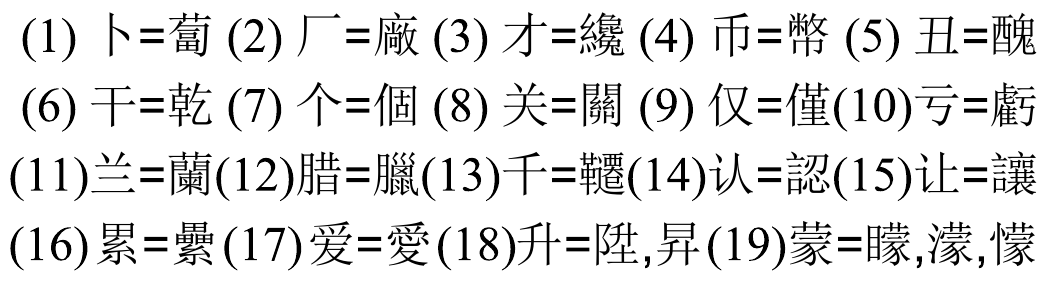}\\
\caption{\label{fig:jian-fan-zi} {\small Left is a simplified Chinese character, and Right is a traditional Chinese character in each of equations above.}}
\end{figure}

Some Hanzis are no distinguishing about traditional Chinese characters and simplified Chinese characters.

\begin{figure}[h]
\centering
\includegraphics[height=2.8cm]{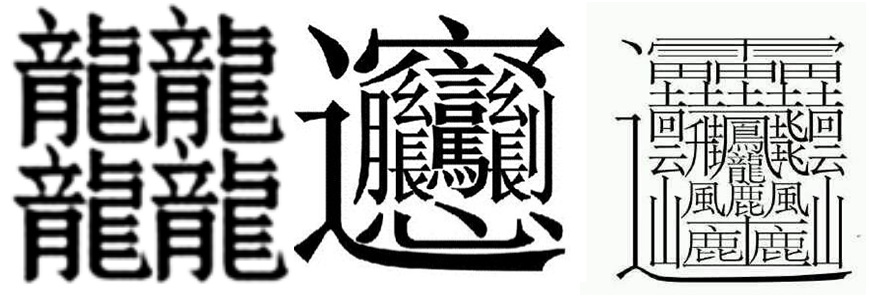}\\
\caption{\label{fig:largest-strokes} {\small Three Chinese characters with more strokes, Left has 64 strokes, Middle has 56 strokes.}}
\end{figure}

\subsection{Different fonts of Hanzis}

There are four fonts in printed Hanzis. In Fig.\ref{fig:different-font}, we give four basic fonts: \emph{Songti}, \emph{Fangsong}, \emph{Heiti} and \emph{Kaiti}. Clearly, there are differences in some printed Hanzis. These differences will be important for us when we build up mathematical models of Hanzis.

\begin{figure}[h]
\centering
\includegraphics[height=4.2cm]{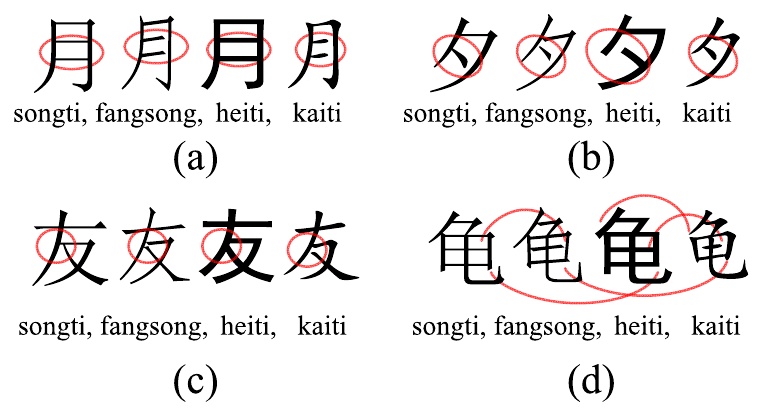}\\
\caption{\label{fig:different-font} {\small Difference between four fonts in printed Hanzis.}}
\end{figure}

\subsection{Matching behaviors of Hanzi-graphs}

\vskip 0.4cm

\subsubsection{Dui-lians, also, Chinese couplets} In Chinese culture, a sentence, called ``\emph{Shang-lian}'', has its own matching sentence, named as ``\emph{Xia-lian}'', and two sentences Shang-lian and Xia-lian form a \emph{Chinese couplet}, refereed as  ``\emph{Dui-lian}'' in Chinese. The sentence (a) of Fig.\ref{fig:duilian-1} is a Shang-lian, and the sentence (b) of Fig.\ref{fig:duilian-1} is a Xia-lian of the Shang-lian (a). We can use Dui-lians to design Topsnut-gpws. For example, we can consider the Shang-lian (a) shown in Fig.\ref{fig:duilian-1} as a public key, the Xia-lian (b) shown in Fig.\ref{fig:duilian-1} as a private key, and the Dui-lian (c) as the authentication. Moreover, the Dui-lian (c) can be made as a public key, and it has its own matching Dui-lian (d) as a private key, we have the authentication (e) of two Dui-lians (c) and (d). However, Dui-lians have their complex, for instance, the Shang-lian (f) shown in Fig.\ref{fig:duilian-1} has over $70,000$ candidate private keys. As known, a Dui-lian ``Chongqing Yonglian'' written by Xueyi Long has 1810 Hanzis. Other particular Chinese couplets are shown in Fig.\ref{fig:duilian-2} and Fig.\ref{fig:duilian-3}.

\begin{figure}[h]
\centering
\includegraphics[height=4.6cm]{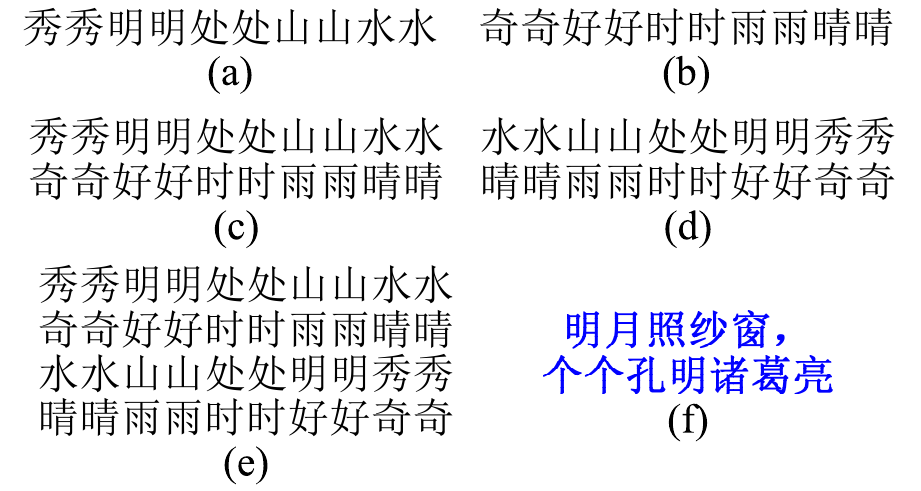}\\
\caption{\label{fig:duilian-1} {\small Couplets: (a) is a public key; (b) is a private key; (c) is the authentication of the public key (a) and the private key (b); (d) is a private key matching with the public key (c); (e) is the authentication of (c) and (d); (f) is a famous public key having no matching, although there are over $70,000$ candidate private keys.}}
\end{figure}

\begin{figure}[h]
\centering
\includegraphics[height=5.8cm]{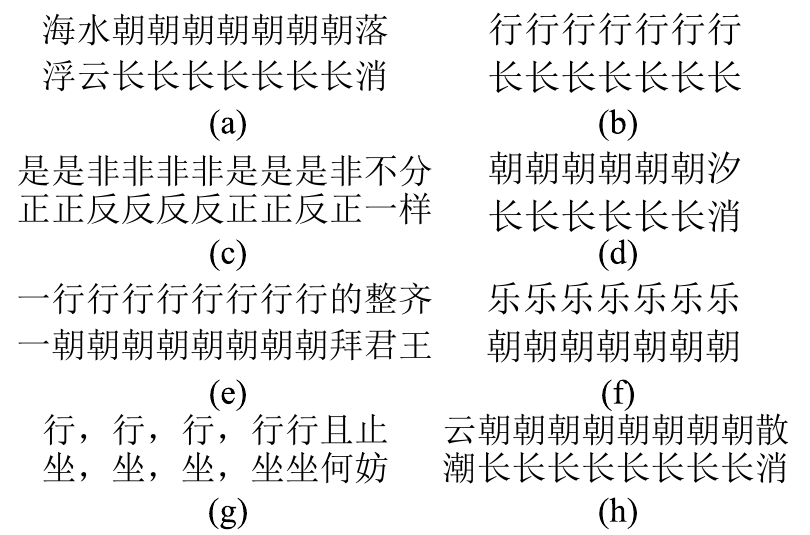}\\
\caption{\label{fig:duilian-2} {\small One Hanzi (Chinese character) may appear two or more times in a Dui-lian (couplet).}}
\end{figure}

\begin{figure}[h]
\centering
\includegraphics[width=9cm]{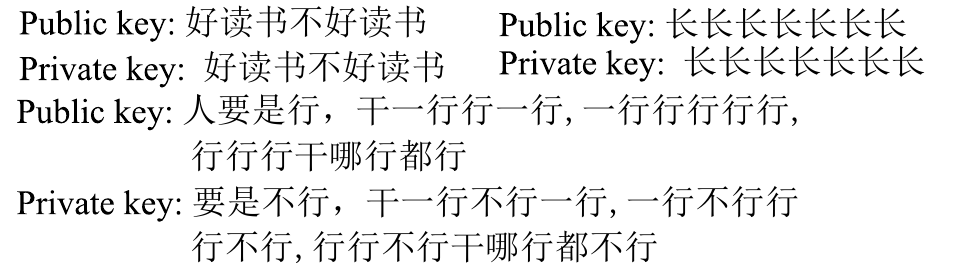}\\
\caption{\label{fig:duilian-3} {\small The same Hanzis in a couplet.}}
\end{figure}

\vskip 0.4cm

\subsubsection{Conundrums in Chinese} Chinese riddles (also ``Miyu'') are welcomed by Chinese people, and Chinese riddles appear in many where and actions of China. (see Fig. \ref{fig:riddles-miyu})

\begin{figure}[h]
\centering
\includegraphics[height=6.2cm]{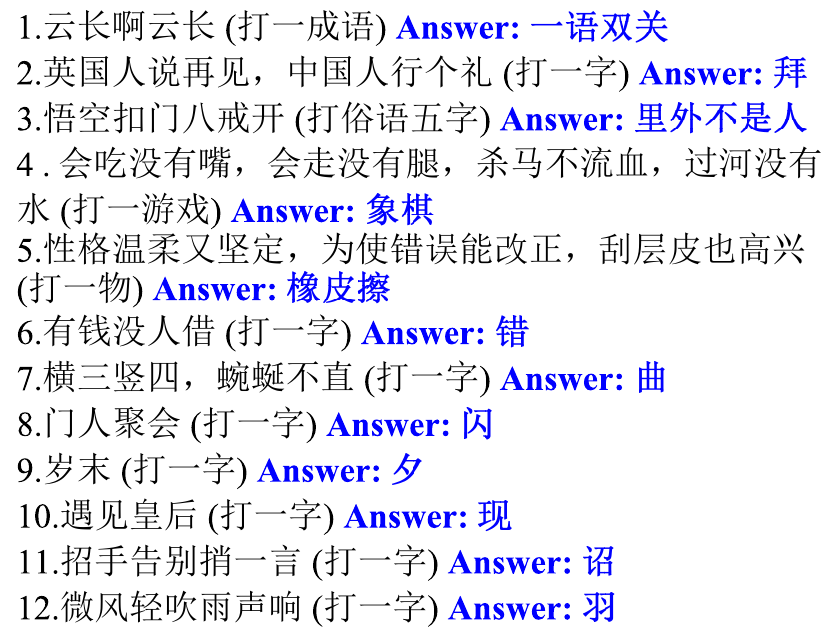}\\
\caption{\label{fig:riddles-miyu} {\small Twelves Chinese conundrums.}}
\end{figure}

\subsubsection{Chinese Xie-hou-yu} ``\emph{Xie-hou-yu}'' is a two-part allegorical saying, of which the first part, always stated, is descriptive, while the second part, sometimes unstated, carries the message (see Fig.\ref{fig:xiehouyu-2-part}).

\begin{figure}[h]
\centering
\includegraphics[height=5.8cm]{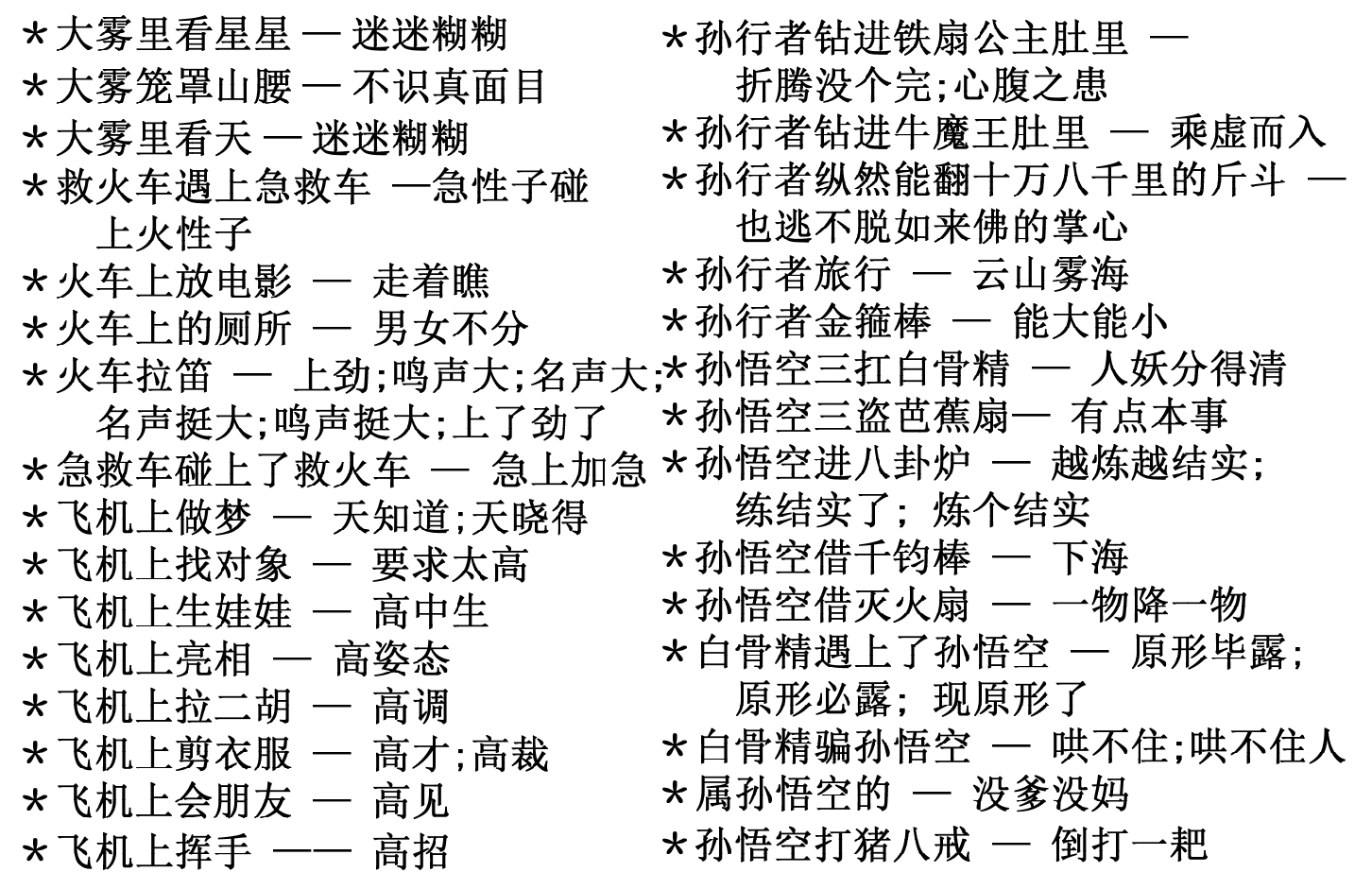}\\
\caption{\label{fig:xiehouyu-2-part} {\small Some Xie-hou-yus in Chinese.}}
\end{figure}

\subsubsection{Chinese tongue twisters} Chinese tongue twisters are often applied in Chinese comic dialogue (cross talk), which are popular in China. (see Fig. \ref{fig:tongue-twisters})

\begin{figure}[h]
\centering
\includegraphics[height=6cm]{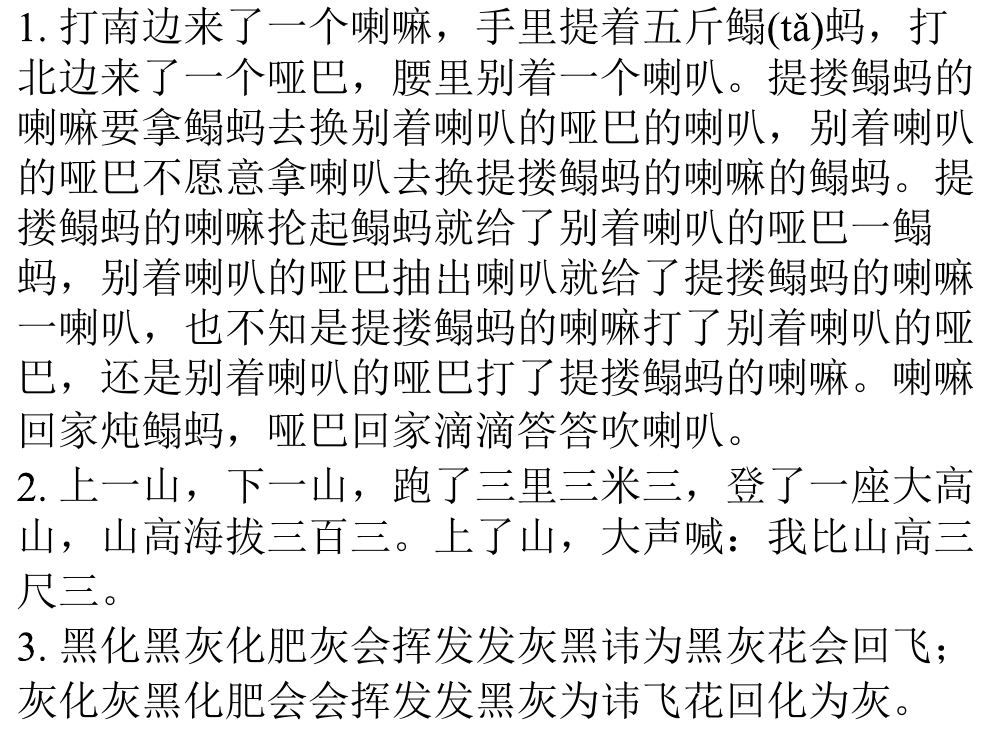}\\
\caption{\label{fig:tongue-twisters} {\small Three Chinese tongue twisters.}}
\end{figure}

\subsubsection{Understanding by insight, homonyms} Such examples shown in Fig.\ref{fig:homonyms}.
\begin{figure}[h]
\centering
\includegraphics[width=8.6cm]{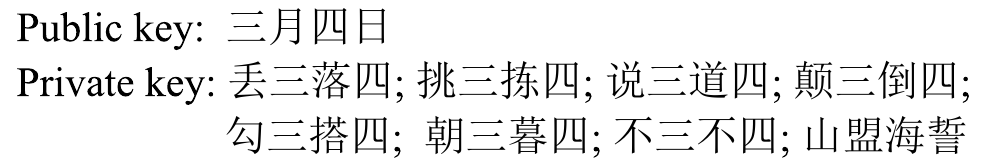}\\
\caption{\label{fig:homonyms} {\small Private keys obtained by homonyms, or understanding by insight.}}
\end{figure}

\subsubsection{Same pronunciation, same Pianpang} In Fig.\ref{fig:same-pronunciation-pianpang}, we can see eight Hanzis with the same pronunciation shown in Fig.\ref{fig:same-pronunciation-pianpang} (a) and ten Hanzis with the same Pianpang shown in Fig.\ref{fig:same-pronunciation-pianpang} (b). Moreover, all Hanzia have the same pronunciation ``ji'' in a famous Chinese paragraph  shown in Fig.\ref{fig:ji-ji-ji}.

\begin{figure}[h]
\centering
\includegraphics[height=3cm]{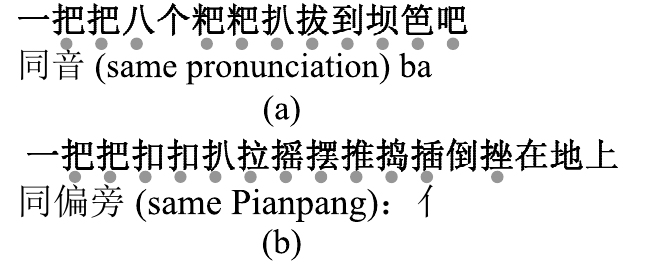}\\
\caption{\label{fig:same-pronunciation-pianpang} {\small (a) Ten Hanzis with the same pronunciation; (b) twelves Hanzis with the same Pianpang.}}
\end{figure}

\begin{figure}[h]
\centering
\includegraphics[width=8cm]{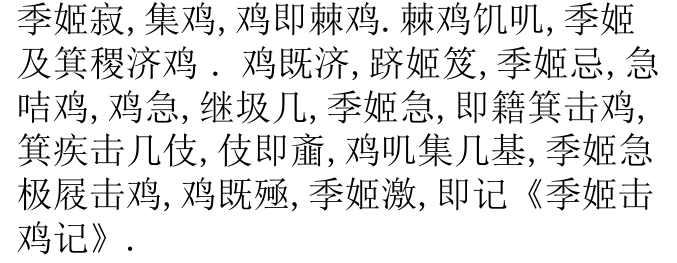}\\
\caption{\label{fig:ji-ji-ji} {\small All Hanzia have the same pronunciation ``ji''.}}
\end{figure}

\subsubsection{Chinese dialects} (also, ``Fangyan'') One Chinese word may have different replacements in local dialects of Chinese. For example, father, daddy can be substituted as Fig.\ref{fig:dialect-11}. And, different expressions of a sentence ``Daddy, where do you go?'' is shown in Fig.\ref{fig:dialect-22}.

\begin{figure}[h]
\centering
\includegraphics[width=8cm]{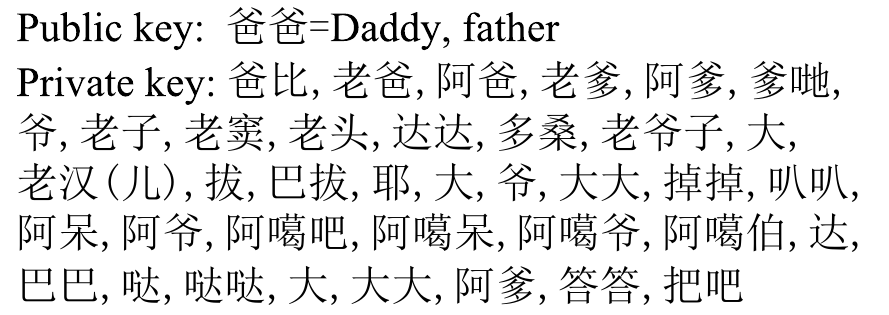}\\
\caption{\label{fig:dialect-11} {\small Father, daddy in Chinese dialects.}}
\end{figure}

\begin{figure}[h]
\centering
\includegraphics[width=8cm]{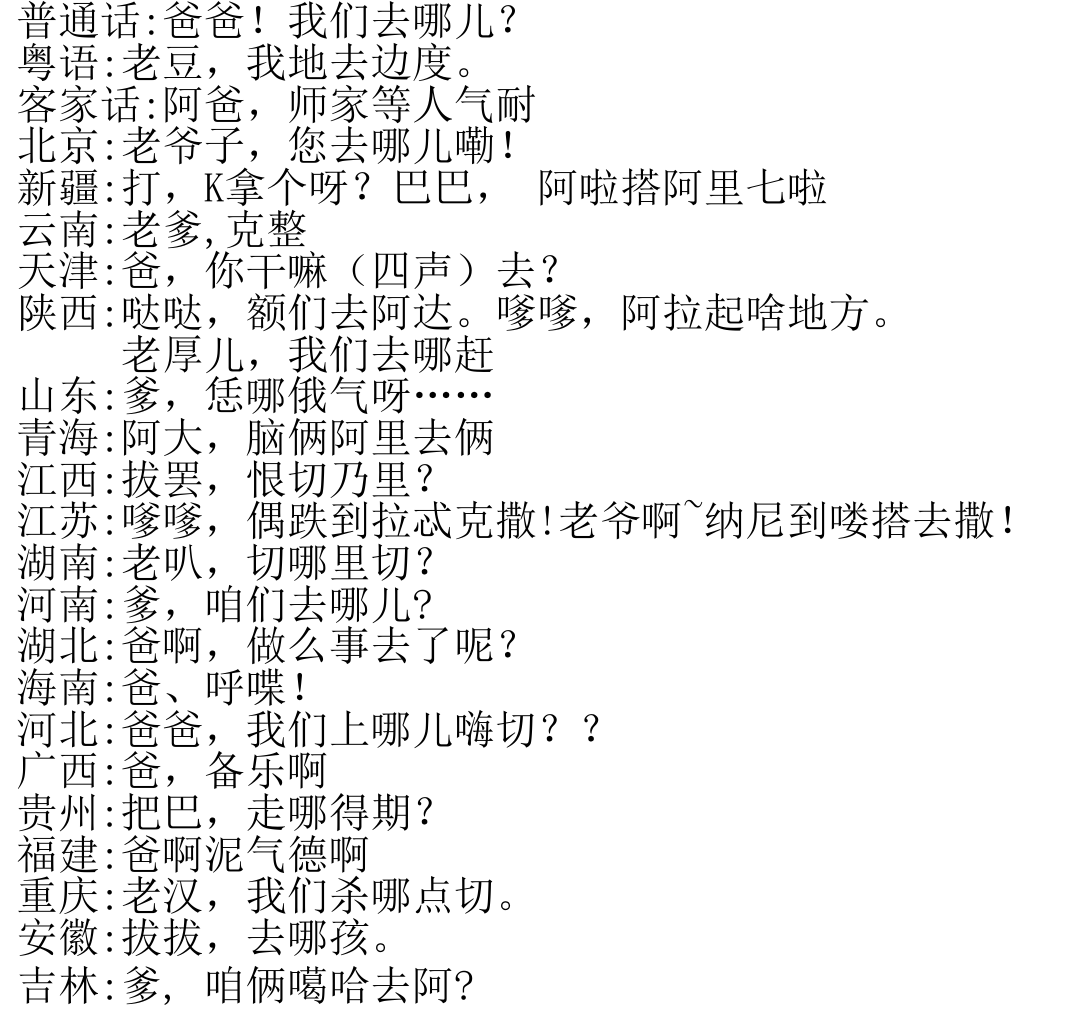}\\
\caption{\label{fig:dialect-22} {\small Daddy, where do you go?}}
\end{figure}

\subsubsection{Split Hanzis, building Hanzis}

An example is shown in Fig.\ref{fig:split-building-words} (a) for illustrating ``split a word into several words'', and Fig.\ref{fig:split-building-words} (b) is for building words by a given word.

\begin{figure}[h]
\centering
\includegraphics[height=8.6cm]{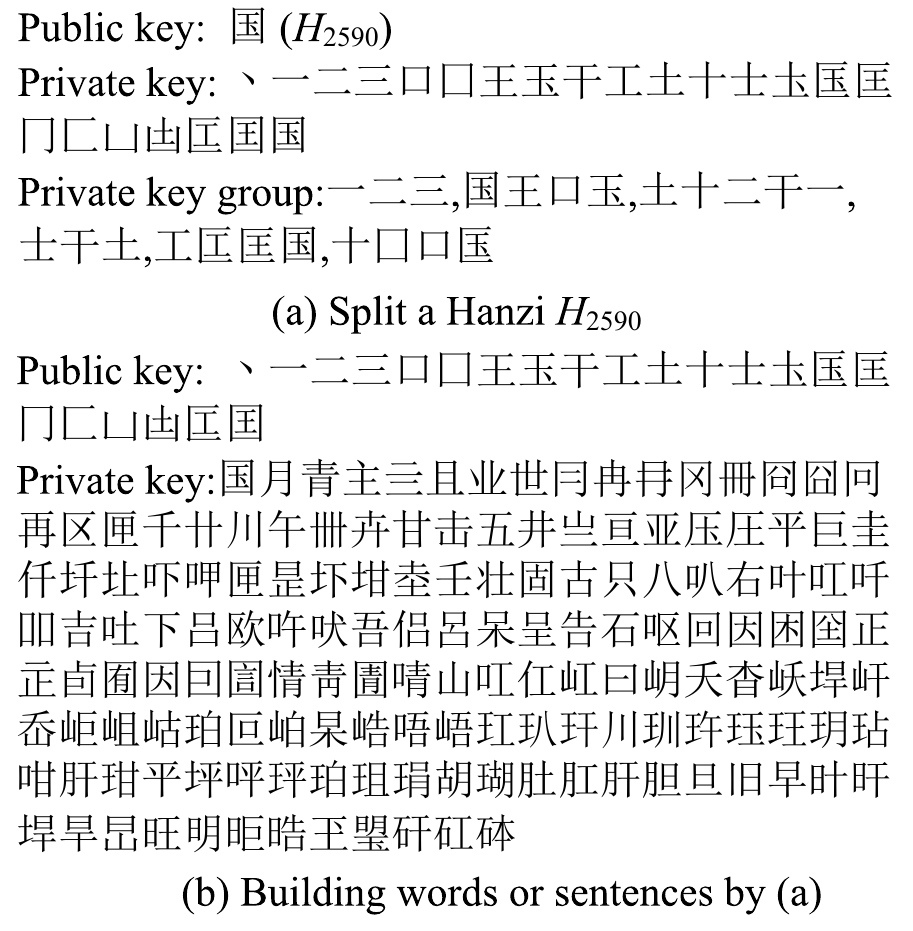}\\
\caption{\label{fig:split-building-words} {\small (a) Split a word into several words; (b) building words by a group of words obtained from splitting a given word.}}
\end{figure}

\subsubsection{Explaining Hanzis} See examples are shown in Fig.\ref{fig:explain-words}.

\begin{figure}[h]
\centering
\includegraphics[width=8.2cm]{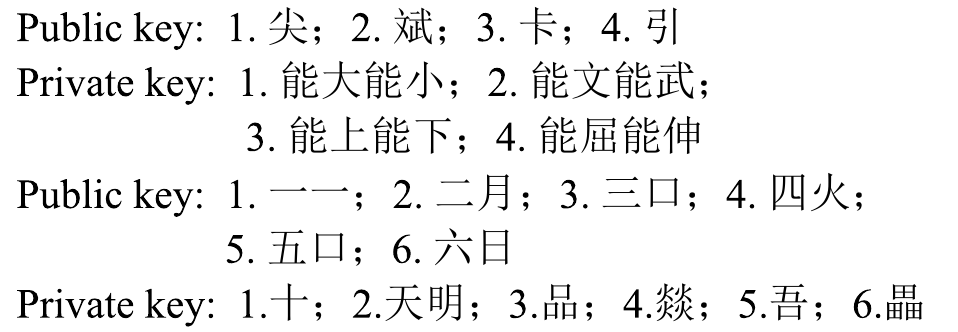}\\
\caption{\label{fig:explain-words} {\small Explaining words.}}
\end{figure}

\subsubsection{Tang poems}

As known, there are at least 5880195 Tang poems in China (see Fig.\ref{fig:Tang-poems}).
\begin{figure}[h]
\centering
\includegraphics[width=8.2cm]{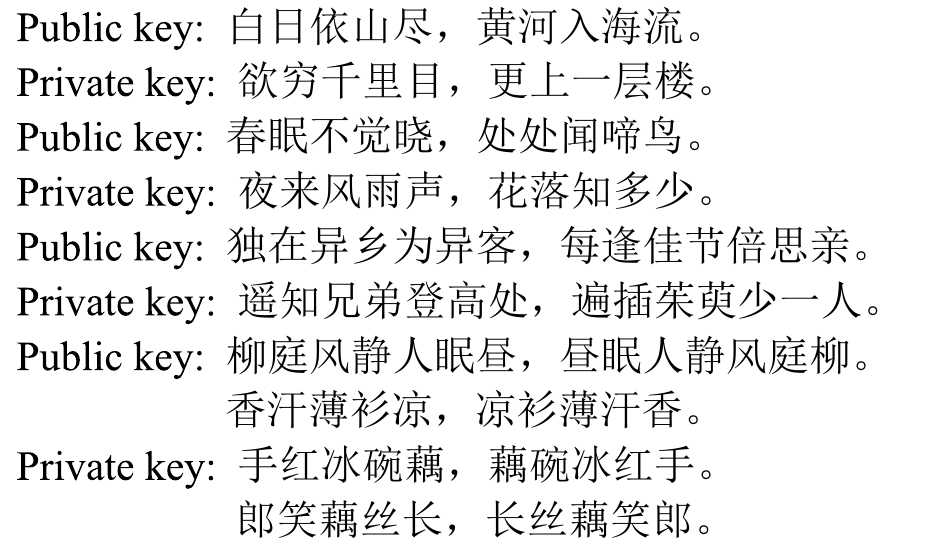}\\
\caption{\label{fig:Tang-poems} {\small Tang poems.}}
\end{figure}

\subsubsection{Idioms and Hanzi idiom-graphs}

A Hanzi idiom-graph $H$ (see Fig.\ref{fig:2-idioms}) is one labelled with Hanzi idioms by a vertex labelling $f$, two vertices $u,v$ are joined by an edge labelled with $f(uv)=f(u)\cap f(v)$.

\begin{figure}[h]
\centering
\includegraphics[width=8cm]{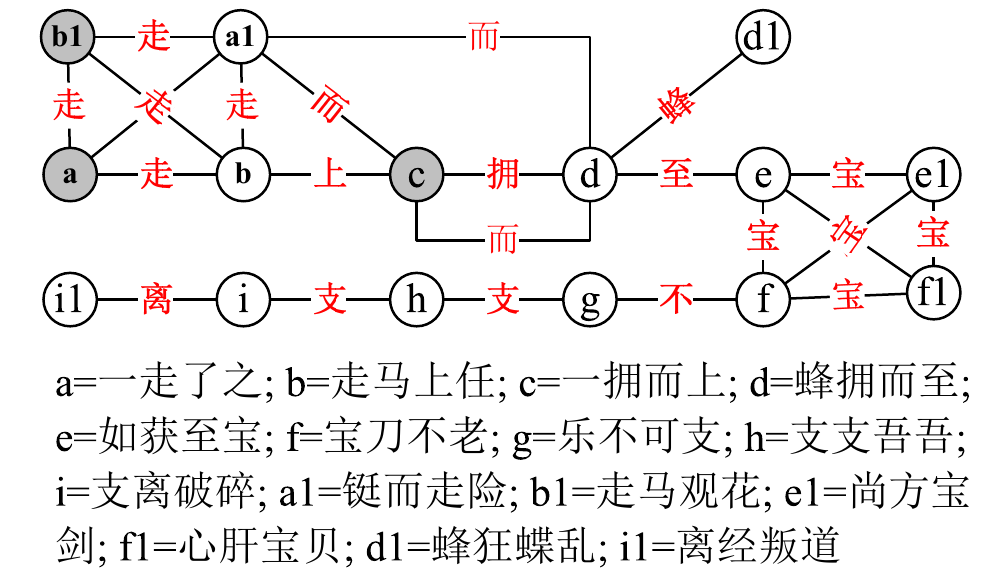}\\
\caption{\label{fig:2-idioms} {\small A Hanzi idiom-graph.}}
\end{figure}

\vskip 0.4cm

\subsubsection{Traditional Chinese characters are complex than Simplified  Chinese characters} Expect the stroke number of a traditional Chinese character is greater than that of a simplified  Chinese character, the usage of some traditional Chinese characters, also, is not unique, such examples are shown in Fig.\ref{fig:two-vs-comparine}.

\begin{figure}[h]
\centering
\includegraphics[width=8cm]{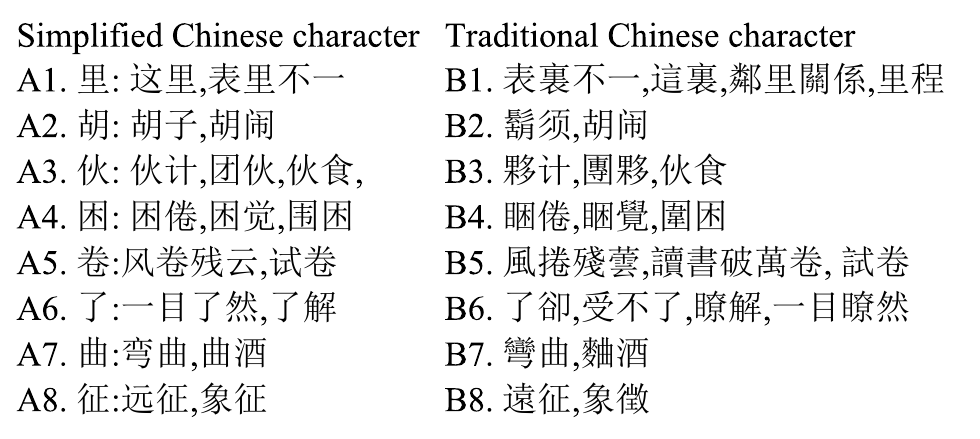}\\
\caption{\label{fig:two-vs-comparine} {\small The usage of a traditional Chinese character B$x$ with $x\in [1,8]$ is not unique.}}
\end{figure}

\vskip 0.4cm

\subsubsection{Configuration in Hanzis}

\begin{asparaenum}[$\ast$ ]
\item Symmetry means that Hanzis posses horizontal symmetrical structures, or vertical symmetrical structures, or two directional symmetries. We select some Hanzis having symmetrical structures in Fig.\ref{fig:symmetrical-overlapping} (a), (b), (c) and (f).
\item Overlapping Hanzis. See some overlapping Hanzis shown in Fig.\ref{fig:symmetrical-overlapping} (d), (e), (f) and (g). Moreover, in Fig.\ref{fig:symmetrical-overlapping} (g), a Hanzi $H_{4311}$ (read `shu\={a}ng') (\emph{2-overlapping Hanzi}) is consisted of two Hanzis $H_{5154}$ (read `y\`{o}u'), and another $H_{5458}$ (read `ru\`{o}') (\emph{3-overlapping Hanzi}) is consisted of three Hanzis $H_{5154}$. Moreover, four Hanzi $H_{5154}$ construct a Hanzi (read as `zhu\'{o}', \emph{4-overlapping Hanzi}).
\end{asparaenum}

\begin{figure}[h]
\centering
\includegraphics[height=5.4cm]{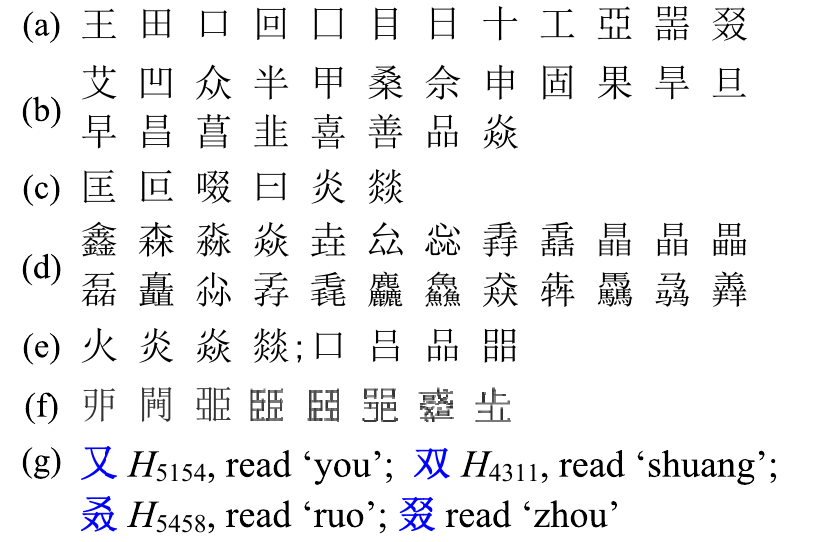}\\
\caption{\label{fig:symmetrical-overlapping} {\small Hanzis with symmetrical structure and overlapping structure.}}
\end{figure}

\subsection{Mathematical models of Hanzis}

We will build up mathematical models of Hanzis, called Hanzi-graphs, in this subsection.

\vskip 0.4cm

\subsubsection{The existing expressions of Hanzis}

In fact, a Hanzi has been expressed in the way: (1) a ``\emph{pinyin}'' in oral communication, for example, the pinyin ``r\'{e}n'' means ``man'', but it also stands for other 12 Hanzis at least (see Fig.\ref{fig:same-pinyin-ren}(a)); (2) a word with four English letters and numbers of $0,1,2,\dots ,8,9$, for instance, ``r\'{e}n''=4EBA (see Fig.\ref{fig:same-pinyin-ren}(b), also called a \emph{code}); (3) a number code ``4043'' defined in ``GB2312-80 Encoding of Chinese characters'' \cite{GB2312-80}S, which is constituted by $0,1,2,\dots ,8,9$ (see Fig.\ref{fig:same-pinyin-ren}(c)).

Clearly, the above three ways are not possible for making passwords with bytes as long as desired. We introduce the fourth way, named as Topsnut-gpw, see an example shown in Fig.\ref{fig:example-1}(c).

\begin{figure}[h]
\centering
\includegraphics[height=2.8cm]{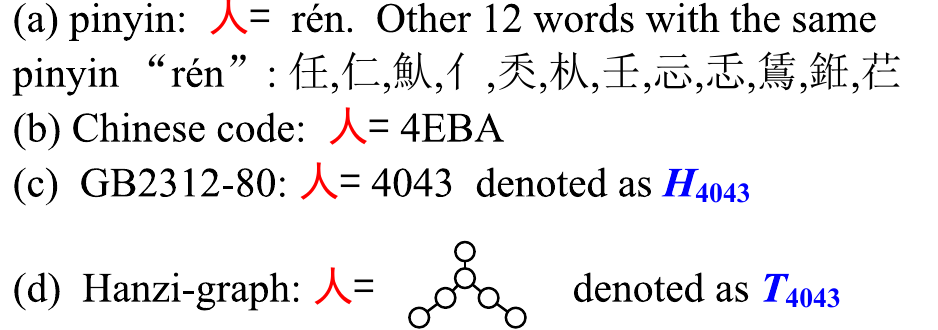}\\
\caption{\label{fig:same-pinyin-ren} {\small Four expressions of a Hanzi $H_{4043}$ (= man).}}
\end{figure}

As known, Hanzi-graphs are saved in computer by popular matrices, see a Hanzi-graph $T_{4706}$ shown in Fig.\ref{fig:example-1} (b) and its  matrix $A(T_{4706})$ shown in Fig.\ref{fig:example-1-matrix} (a).

\begin{figure}[h]
\centering
\includegraphics[height=5cm]{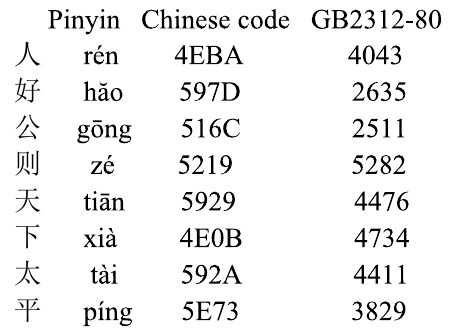}\\
\caption{\label{fig:rrhg-ztxtp} {\small Three substituted expressions of eight Hanzis.}}
\end{figure}

\begin{figure}[h]
\centering
\includegraphics[height=3.6cm]{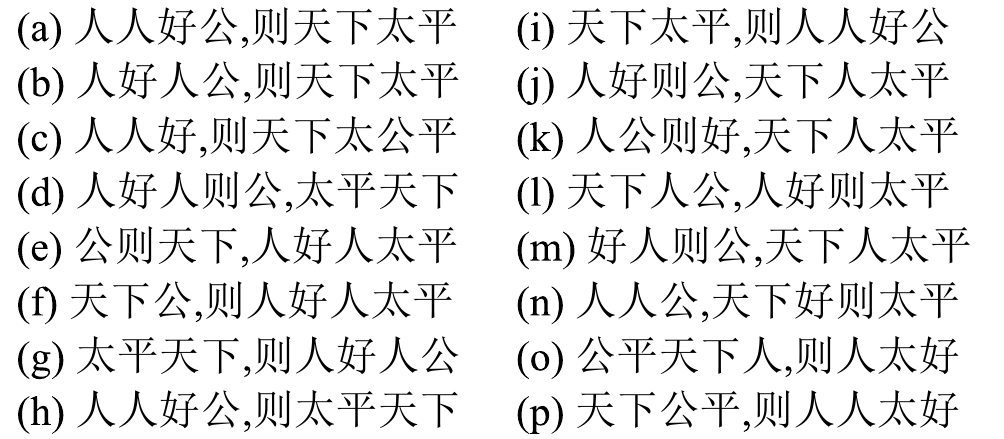}\\
\caption{\label{fig:rrhg-GB2312-80} {\small The meaningful paragraphs obtained from nine Hanzis of GB2312-80.}}
\end{figure}

In Fig.\ref{fig:rrhg-GB2312-80}, we use two expressions (a1) and (a2) to substitute a Chinese sentence (a), that is, (a)=(a1), or (a)=(a2). By this method, we have

\begin{flushleft}
(a1) $H_{4043}H_{4043}H_{2635}H_{2511}H_{5282}H_{4476}H_{4734}H_{4411}H_{3829}$;\\
(a2) $404340432635251152824476473444113829$.

(b1) $H_{4043}H_{2635}H_{4043}H_{2511}H_{5282}H_{4476}H_{4734}H_{4411}H_{3829}$;\\
(b2) $404326354043251152824476473444113829$

(c1) $H_{4043}H_{4043}H_{2635}H_{5282}H_{4476}H_{4734}H_{4411}H_{2511}H_{3829}$;\\
(c2) $404340432635528244764734441125113829$

(d1) $H_{4043}H_{2635}H_{4043}H_{5282}H_{2511}H_{4411}H_{3829}H_{4476}H_{4734}$;\\
(d2) $404326354043528225114411382944764734$

(e1) $H_{2511}H_{5282}H_{4476}H_{4734}H_{4043}H_{2635}H_{4043}H_{4411}H_{3829}$;\\
(e2) $251152824476473440432635404344113829$

(f1) $H_{4476}H_{4734}H_{2511}H_{5282}H_{4043}H_{2635}H_{4043}H_{4411}H_{3829}$;\\
(f2) $447647342511528240432635404344113829$

(g1) $H_{4411}H_{3829}H_{4476}H_{4734}H_{5282}H_{4043}H_{2635}H_{4043}H_{2511}$;\\
(g2) $441138294476473452824043263540432511$

(h1) $H_{4043}H_{4043}H_{2635}H_{2511}H_{5282}H_{4411}H_{3829}H_{4476}H_{4734}$;\\
(h2) $404340432635251152824411382944764734$

(i1) $H_{4476}H_{4734}H_{4411}H_{3829}H_{5282}H_{4043}H_{4043}H_{2635}H_{2511}$;\\
(i2) $447647344411382952824043404326352511$

(j1) $H_{4043}H_{2635}H_{5282}H_{2511}H_{4476}H_{4734}H_{4043}H_{4411}H_{3829}$;\\
(j2) $404326355282251144764734404344113829$

(k1) $H_{4043}H_{2511}H_{5282}H_{2635}H_{4476}H_{4734}H_{4043}H_{4411}H_{3829}$;\\
(k2) $404325115282263544764734404344113829$

(l1) $H_{4476}H_{4734}H_{4043}H_{2511}H_{4043}H_{2635}H_{5282}H_{4411}H_{3829}$;\\
(l2) $447647344043251140432635528244113829$
\end{flushleft}

Fig.\ref{fig:rrhg-GB2312-80} shows some permutations of nine Hanzis $H_{4043}$, $H_{4043}$, $H_{2635}$, $H_{2511}$, $H_{5282}$, $H_{4476}$, $H_{4734}$, $H_{4411}$, $H_{3829}$. In fact, there are about $9!=362,880$ permutations made by these nine Hanzis. If a paragraph was made by a fixed group $W$ of 50 Hanzis, then we may have about $50!\approx 2^{214}$ paragraphs made by the same group $W$. So, we have enough large space of Hanzi-graphs for making Hanzi-gpws.

\vskip 0.4cm

\subsubsection{Basic rules for Hanzi-graphs} For the task of building mathematical models of Hanzis, called \emph{Hanzi-graphs}, we give some rules for transforming Hanzis into Hanzi-graphs.

\begin{asparaenum}[\underline{Rule-}1 ]
\item \textbf{Stroke rule.} It is divided into several parts by the strokes of Hanzis. Some examples are shown in Fig.\ref{fig:pianpang-1} and Fig.\ref{fig:pianpang-2} based on ``Pianpang'' of Hanzis.

\item \textbf{Crossing and overlapping rules.} Hanzi-graphs are obtained by the crossing and overlapping rules (see Fig.\ref{fig:pianpang-2} (a) and Fig.\ref{fig:cross-overlapping}).

\begin{figure}[h]
\centering
\includegraphics[height=4cm]{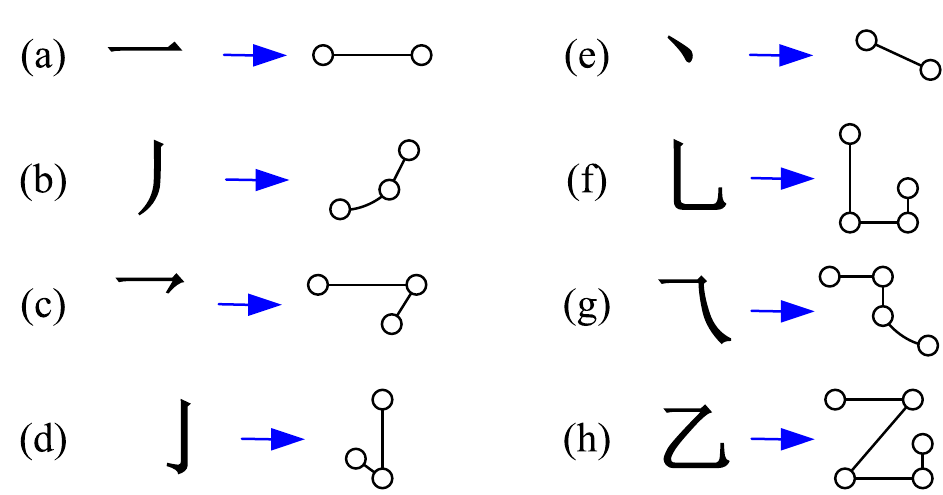}\\
\caption{\label{fig:pianpang-1} {\small Hanzi-graphs with one stroke, in which Hanzi-graphs (b), (c) and (d) can be considered as one from the topology of view. So, (a) and (e) are the same Hanzi-graph.}}
\end{figure}

\begin{figure}[h]
\centering
\includegraphics[height=5cm]{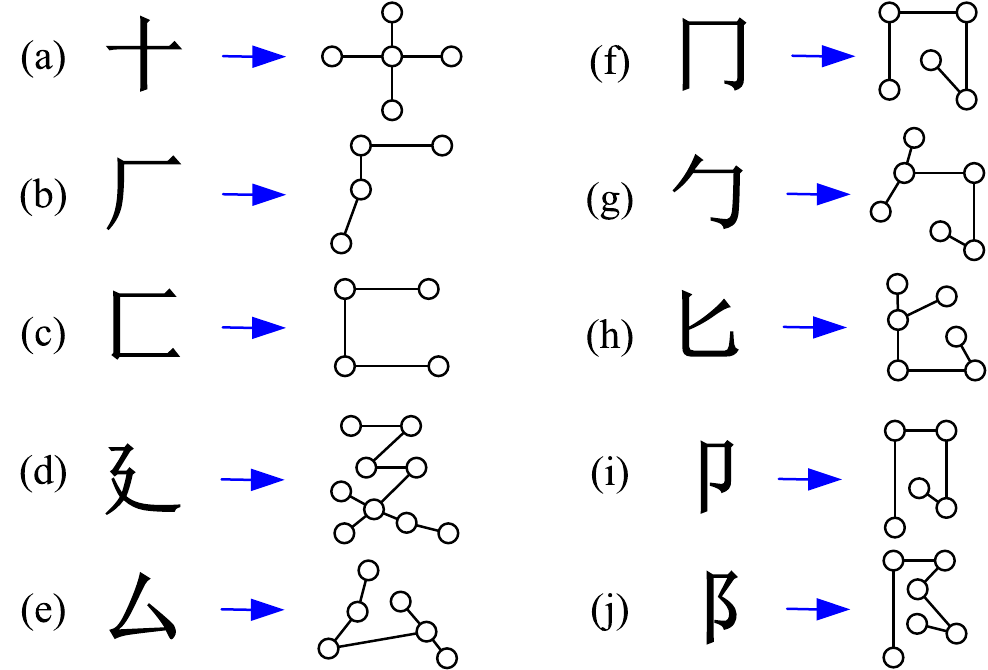}\\
\caption{\label{fig:pianpang-2} {\small Hanzi-graphs with two strokes. According to the topology of view, Hanzi-graphs (b), (c) and Hanzi-graphs (f) and (g) shown in Fig.\ref{fig:pianpang-1} can be considered as one; Hanzi-graphs (e), (f), (i) and Hanzi-graphs (h) shown in Fig.\ref{fig:pianpang-1} are the same; and Hanzi-graphs (g) and (h) are the same.}}
\end{figure}

\begin{figure}[h]
\centering
\includegraphics[height=4.4cm]{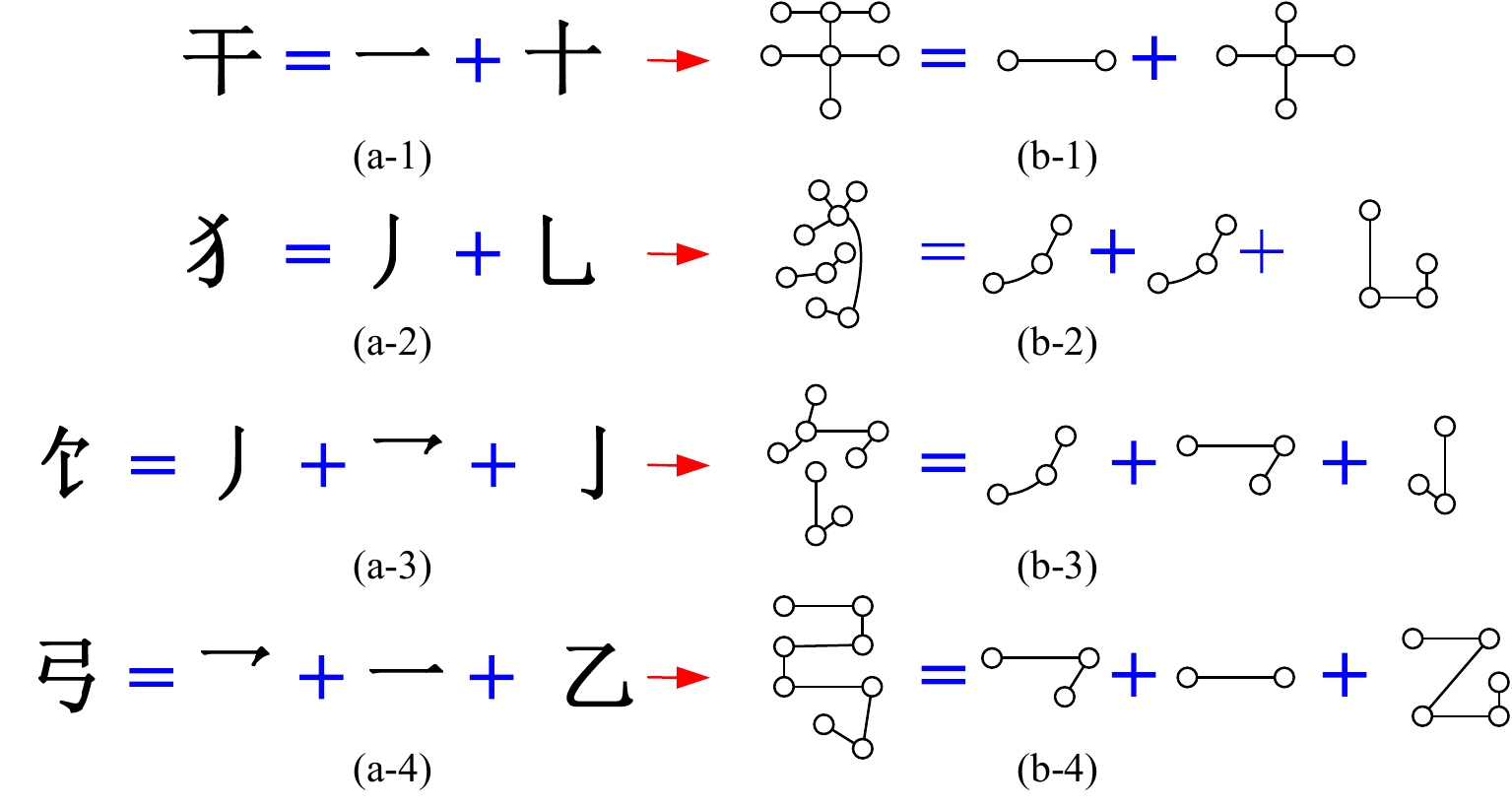}\\
\caption{\label{fig:cross-overlapping} {\small Hanzi-(a-$k$) is transformed into Hanzi-graph-(b-$k$) with $k\in[1,4]$.}}
\end{figure}

\item \textbf{Set-orderedable rule.} We abide for fonts: songti, fangsong, heiti and kaiti to construct Hanzi-graphs that admit set-ordered graceful labellings (see Definition \ref{defn:proper-bipartite-labelling-ongraphs}). Some examples are shown in Fig.\ref{fig:split-standard-1} and Fig.\ref{fig:split-standard-2}.
\item \textbf{No odd-cycles.} We restrict our Hanzi-graphs have no odd-cycles for the guarantee of set-ordered graceful labellings (see Fig.\ref{fig:split-standard-2}). There are over 6763 Hanzis in \cite{GB2312-80}, and we have 3500 Hanzis in frequently used. So it is not an easy job to realize the set-ordered gracefulness of the Hanzi-graphs in \cite{GB2312-80}. Clearly, the 0-rotatable gracefulness of the Hanzi-graphs in \cite{GB2312-80} will be not slight, see Definition \ref{defn:graceful-0-rotatable}.
\end{asparaenum}

\begin{figure}[h]
\centering
\includegraphics[height=4.2cm]{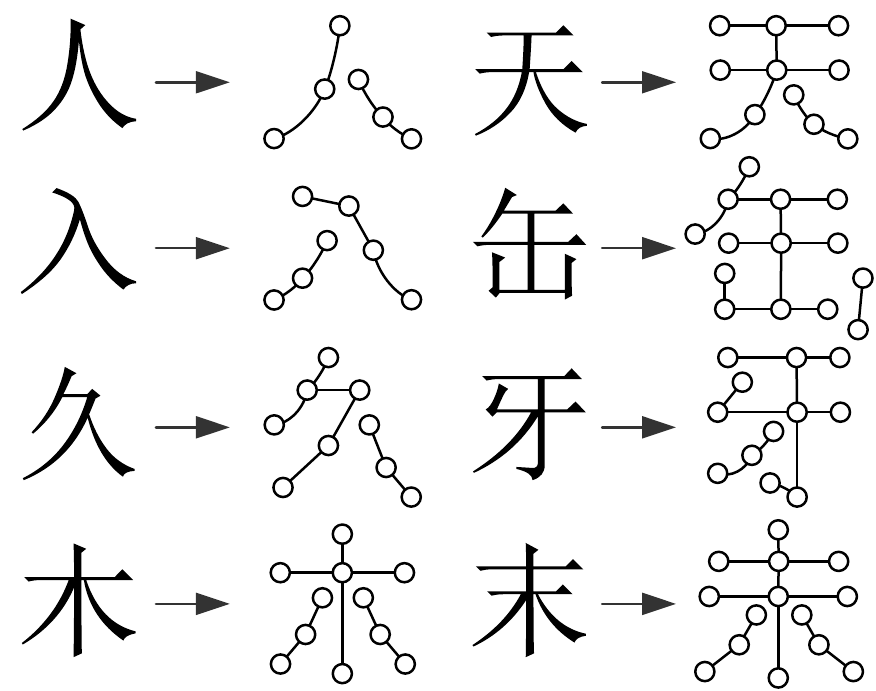}\\
\caption{\label{fig:split-standard-1} {\small First group of mathematical models of Hanzis components and radicals.}}
\end{figure}

\begin{figure}[h]
\centering
\includegraphics[height=6.8cm]{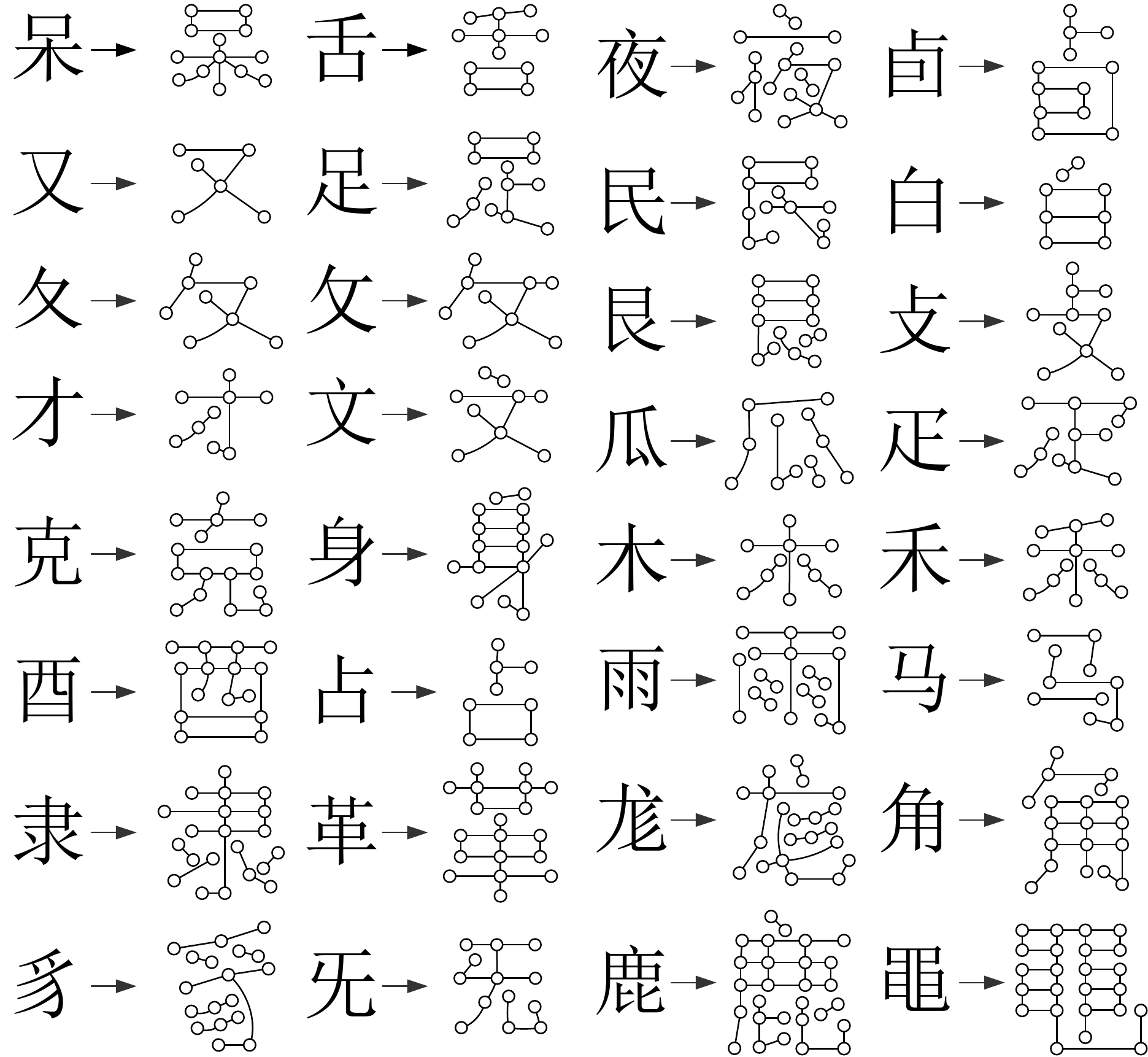}\\
\caption{\label{fig:split-standard-2} {\small Second group of mathematical models of Hanzis components and radicals.}}
\end{figure}

A group of Hanzi-graphs made by Rule-$k$ with $k\in[1,4]$ is shown in Fig.\ref{fig:rrhg-topological}. If a Hanzi-graph is disconnected, and has $k$ components, we refer to it as a \emph{$k$-Hanzi-graph} directly.

\begin{figure}[h]
\centering
\includegraphics[height=3.2cm]{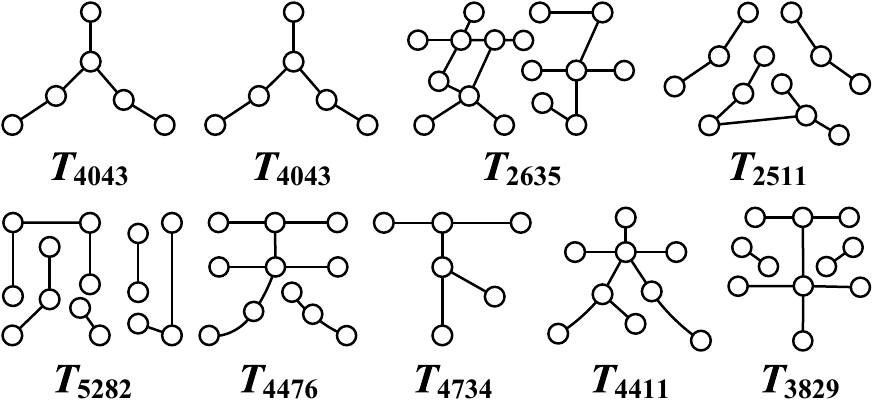}\\
\caption{\label{fig:rrhg-topological} {\small The topological structure (Hanzi-graphs) of Hanzis shown in Fig.\ref{fig:rrhg-ztxtp}.}}
\end{figure}

\subsection{Space of Hanzi-graphs}

A list of commonly used Hanzis in modern Chinese was issued by The State Language Work Committee and The State Education Commission in 1988, with a total of 3500 characters. The commonly used part of the Hanzis with a coverage rate of 97.97\% is about 2500 characters. This means that the commonly used 2500 characters can help us to make a vast space of Hanzi-graphs.

For example, the probability of a Hanzi appearing just once in a Chinese paragraph is a half, so the space of paragraphs made in Hanzis contains at lest $2^{2500}\sim 2^{6700}$ elements, which is far more than the number of sands on the earth. It is known that the number of sands on the earth is about $8\cdot 10^{22}\sim 1.3\cdot 10^{23}$, or about $2^{76.1}\sim 2^{77}$, someone estimates the number of sands on the earth as $5.1 \times 10^{18}\approx 2^{62}$.

\section{Mathematical techniques}

Since some Topsnut-gpws were made by graph coloting/labellings, we show the following definitions of graph coloting/labellings for easily reading and quickly working.

\subsection{Known labellings}

\begin{defn}\label{defn:edge-magic-total-graceful-labelling}
\cite{Yao-Mu-Sun-Zhang-Wang-Su-2018} An \emph{edge-magic total graceful labelling} $g$ of a $(p,q)$-graph $G$ is defined as: $g: V(G)\cup E(G)\rightarrow [1,p+q]$ such that $g(x)\neq g(y)$ for any two elements $x,y\in V(G)\cup E(G)$, and $g(uv)+|g(u)-g(v)|=k$ with a constant $k$ for each edge $uv\in E(G)$. Moreover, $g$ is \emph{super} if $\max g(E(G))<\min g(V(G))$ (or $\max g(V(G))<\min g(E(G))$).\qqed
\end{defn}

In Definition \ref{defn:proper-bipartite-labelling-ongraphs} we restate several known
labellings that can be found in \cite{Gallian2018}, \cite{Bing-Yao-Cheng-Yao-Zhao2009}, \cite{Zhou-Yao-Chen-Tao2012, Zhou-Yao-Chen2013} and \cite{Yao-Cheng-Yao-Zhao-2009}. We write $f(V(G))=\{f(u):u\in V(G)\}$ and $f(E(G))=\{f(uv):uv\in E(G)\}$ hereafter.

\begin{defn} \label{defn:proper-bipartite-labelling-ongraphs}
Suppose that a connected $(p,q)$-graph $G$ admits a mapping $\theta:V(G)\rightarrow \{0,1,2,\dots \}$. For edges $xy\in E(G)$
the induced edge labels are defined as $\theta(xy)=|\theta(x)-\theta(y)|$. Write $\theta(V(G))=\{\theta(u):u\in V(G)\}$,
$\theta(E(G))=\{\theta(xy):xy\in E(G)\}$. There are the following constraints:
\begin{asparaenum}[(a)]
\item \label{Proper01} $|\theta(V(G))|=p$.
\item \label{Proper02} $|\theta(E(G))|=q$.
\item \label{Graceful-001} $\theta(V(G))\subseteq [0,q]$, $\min \theta(V(G))=0$.
\item \label{Odd-graceful-001} $\theta(V(G))\subset [0,2q-1]$, $\min \theta(V(G))=0$.
\item \label{Graceful-002} $\theta(E(G))=\{\theta(xy):xy\in E(G)\}=[1,q]$.
\item \label{Odd-graceful-002} $\theta(E(G))=\{\theta(xy):xy\in E(G)\}=\{1,3,5,\dots ,2q-1\}=[1,2q-1]^o$.
\item \label{Set-ordered} $G$ is a bipartite graph with the bipartition
$(X,Y)$ such that $\max\{\theta(x):x\in X\}< \min\{\theta(y):y\in
Y\}$ ($\max \theta(X)<\min \theta(Y)$ for short).
\item \label{Graceful-matching} $G$ is a tree containing a perfect matching $M$ such that
$\theta(x)+\theta(y)=q$ for each edge $xy\in M$.
\item \label{Odd-graceful-matching} $G$ is a tree having a perfect matching $M$ such that
$\theta(x)+\theta(y)=2q-1$ for each edge $xy\in M$.
\end{asparaenum}

Then we have: a \emph{graceful labelling} $\theta$ satisfies (\ref{Proper01}), (\ref{Graceful-001}) and (\ref{Graceful-002}); a \emph{set-ordered
graceful labelling} $\theta$ holds (\ref{Proper01}), (\ref{Graceful-001}), (\ref{Graceful-002}) and (\ref{Set-ordered}) true;
a \emph{strongly graceful labelling} $\theta$ holds (\ref{Proper01}), (\ref{Graceful-001}), (\ref{Graceful-002}) and
(\ref{Graceful-matching}) true; a \emph{strongly set-ordered graceful labelling} $\theta$ holds (\ref{Proper01}), (\ref{Graceful-001}), (\ref{Graceful-002}), (\ref{Set-ordered}) and (\ref{Graceful-matching}) true. An \emph{odd-graceful labelling} $\theta$ holds (\ref{Proper01}), (\ref{Odd-graceful-001}) and (\ref{Odd-graceful-002}) true; a \emph{set-ordered odd-graceful labelling} $\theta$ holds (\ref{Proper01}), (\ref{Odd-graceful-001}), (\ref{Odd-graceful-002})
and (\ref{Set-ordered}) true; a \emph{strongly odd-graceful labelling} $\theta$ holds (\ref{Proper01}), (\ref{Odd-graceful-001}),
(\ref{Odd-graceful-002}) and (\ref{Odd-graceful-matching}) true; a \emph{strongly set-ordered odd-graceful labelling} $\theta$ holds (\ref{Proper01}), (\ref{Odd-graceful-001}), (\ref{Odd-graceful-002}), (\ref{Set-ordered}) and (\ref{Odd-graceful-matching}) true.\qqed
\end{defn}

\begin{defn}\label{defn:total-graceful-labelling}
A \emph{total graceful labelling} $f$ of a $(p,q)$-graph $G$ is defined as: $f: V(G)\cup E(G)\rightarrow [1,p+q]$ such that $f(uv)=|f(u)-f(v)|$ for each edge $uv\in E(G)$, and $f(x)\neq f(y)$ for any two elements $x,y\in V(G)\cup E(G)$. Moreover, $f$ is super if $\max f(E(G))<\min f(V(G))$ (or $\max f(V(G))<\min f(E(G))$).\qqed
\end{defn}

\begin{defn} \label{defn:connections-several-labellings}
Let $G$ be a $(p, q)$-graph having $p$ vertices and $q$ edges, and let $S_{k,d}=\{k, k + d, \dots , k +(q-1)d\}$ for integers $k\geq 1$ and $d\geq 1$.

(1) \cite{Gallian2018} A \emph{felicitous labelling} $f$ of $G$ holds: $f(V(G))\subseteq [0,q]$, $f(x)\neq f(y)$ for distinct $x,y\in V(G)$ and $f(E(G))=\{f(uv)=f(u)+f(v)~ (\bmod ~q): uv\in E(G)\}=[0,q-1]$; and furthermore, $f$ is \emph{super} if
$f(V(G))=[0,p-1]$.

(2) \cite{S-M-Hegde2000} A \emph{$(k,d)$-graceful labelling} $f$ of $G$ holds $f(V(G))\subseteq [0, k + (q-1)d]$, $f(x)\neq f(y)$ for distinct $x,y\in V(G)$ and $f(E(G))=\{|f(u)-f(v)|;\ uv\in E(G)\}=S_{k,d}$. Especially, a $(k,1)$-graceful labelling is also a $k$-graceful labelling.

(3) \cite{Gallian2018} An \emph{edge-magic total labelling} $f$ of $G$ holds $f(V(G)\cup E(G))=[1,p+q]$ such that for any edge $uv\in E(G)$, $f(u)+f(uv)+f(v)=c$ where the magic constant $c$ is a fixed integer; and furthermore $f$ is \emph{super} if $f(V(G))=[1,p]$.

(4) \cite{Gallian2018} A $(k,d)$-\emph{edge antimagic total labelling} $f$ of $G$ holds $f(V(G)\cup E(G))=[1,p+q]$ and
$\{f(u)+f(uv)+f(v): uv\in E(G)\}=S_{k,d}$, and furthermore $f$ is \emph{super} if $f(V(G))=[1,p]$.

(5) \cite{Zhou-Yao-Chen2013} An \emph{odd-elegant labelling} $f$ of $G$ holds $f(V(G))\subset [0,2q-1]$, $f(u)\neq f(v)$ for distinct $u,v\in V(G)$, and $f(E(G))=\{f(uv)=f(u)+f(v)~(\bmod~2q):uv\in E(G)\}$ $=$ $\{1,3,5,\dots, 2q-1\}=[1,2q-1]^o$.

(6) \cite{Acharya-Hegde1990} A labeling $f$ of $G$ is said to be $(k, d)$-\emph{arithmetic} if $f(V(G))\subseteq [0, k+(q-1)d]$, $f(x)\neq f(y)$ for distinct $x,y\in V(G)$ and $\{f(u)+f(v): uv\in E(G)\}=S_{k,d}$.

(7) \cite{Gallian2018} A \emph{harmonious labelling} $f$ of $G$ holds $f(V(G))\subseteq [0, q-1]$, $\min f(V(G))=0$ and $f(E(G))=\{f(uv)=f(u)+f(v)~ (\bmod ~q): uv\in E(G)\}=[0,q-1]$ such that (i) if $G$ is not a tree, $f(x)\neq f(y)$ for distinct $x,y\in V(G)$; (ii) if $G$ is a tree, $f(x)\neq f(y)$ for distinct $x,y\in V(G)\setminus\{w\}$, and $f(w)=f(x_0)$ for some $x_0\in V(G)\setminus\{w\}$.\qqed
\end{defn}

\begin{defn}\label{defn:relaxed-Emt-labelling}
\cite{Yao-Sun-Zhang-Mu-Sun-Wang-Su-Zhang-Yang-Yang-2018arXiv} Let $f:V(G)\cup E(G)\rightarrow [1,p+q]$ be a total labelling of a $(p,q)$-graph $G$. If there is a constant $k$ such that $f(u)+f(uv)+f(v)=k$, and each edge $uv$ corresponds another edge $xy$ holding $f(uv)=|f(x)-f(y)|$, then we name $f$ as a \emph{relaxed edge-magic total labelling} (relaxed Emt-labelling) of $G$ (called a \emph{relaxed Emt-graph}).\qqed
\end{defn}

\begin{defn}\label{defn:Oemt-labelling}
\cite{Yao-Sun-Zhang-Mu-Sun-Wang-Su-Zhang-Yang-Yang-2018arXiv} Suppose that a $(p,q)$-graph $G$ admits a vertex labelling $f:V(G) \rightarrow [0,2q-1]$ and an edge labelling $g:E(G)\rightarrow [1,2q-1]^o$. If there is a constant $k$ such that $f(u)+g(uv)+f(v)=k$ for each edge $uv\in E(G)$, and $g(E(G))=[1,2q-1]^o$, then we refer to $(f,g)$ as an \emph{odd-edge-magic matching labelling} (Oemm-labelling) of $G$ (called an \emph{Oemm-graph}). \qqed
\end{defn}

\begin{defn}\label{defn:relaxed-Oemt-labelling}
\cite{Yao-Sun-Zhang-Mu-Sun-Wang-Su-Zhang-Yang-Yang-2018arXiv} Suppose that a $(p,q)$-graph $G$ admits a vertex labelling $f:V(G)\rightarrow [0,2q-1]$ and an edge labelling $g:E(G)\rightarrow [1,2q-1]^o$, and let $s(uv)=|f(u)-f(v)|-f(uv)$ for $uv\in E(G)$. If (i) each edge $uv$ corresponds an edge $u'v'$ such that $g(uv)=|f(u')-f(v')|$; (ii) and there exists a constant $k'$ such that each edge $xy$ has a matching edge $x'y'$ holding $s(xy)+s(x'y')=k'$ true; (iii) there exists a constant $k$ such that $f(uv)+|f(u)-f(v)|=k$ for each edge $uv\in E(G)$. Then we call $(f,g)$ an \emph{ee-difference odd-edge-magic matching labelling} (Eedoemm-labelling) of $G$ (called a \emph{Eedoemm-graph}).\qqed
\end{defn}

\begin{defn}\label{defn:6C-labelling}
\cite{Yao-Sun-Zhang-Mu-Sun-Wang-Su-Zhang-Yang-Yang-2018arXiv} A total labelling $f:V(G)\cup E(G)\rightarrow [1,p+q]$ for a bipartite $(p,q)$-graph $G$ is a bijection and holds:

(i) (e-magic) $f(uv)+|f(u)-f(v)|=k$;

(ii) (ee-difference) each edge $uv$ matches with another edge $xy$ holding $f(uv)=|f(x)-f(y)|$ (or $f(uv)=2(p+q)-|f(x)-f(y)|$);

(iii) (ee-balanced) let $s(uv)=|f(u)-f(v)|-f(uv)$ for $uv\in E(G)$, then there exists a constant $k'$ such that each edge $uv$ matches with another edge $u'v'$ holding $s(uv)+s(u'v')=k'$ (or $2(p+q)+s(uv)+s(u'v')=k'$) true;

(iv) (EV-ordered) $\min f(V(G))>\max f(E(G))$ (or $\max f(V(G))<\min f(E(G))$, or $f(V(G))\subseteq f(E(G))$, or $f(E(G))\subseteq f(V(G))$, or $f(V(G))$ is an odd-set and $f(E(G))$ is an even-set);

(v) (ve-matching) there exists a constant $k''$ such that each edge $uv$ matches with one vertex $w$ such that $f(uv)+f(w)=k''$, and each vertex $z$ matches with one edge $xy$ such that $f(z)+f(xy)=k''$, except the \emph{singularity} $f(x_0)=\lfloor \frac{p+q+1}{2}\rfloor $;

(vi) (set-ordered) $\max f(X)<\min f(Y)$ (or $\min f(X)>\max f(Y)$) for the bipartition $(X,Y)$ of $V(G)$.

We refer to $f$ as a \emph{6C-labelling}.\qqed
\end{defn}

\begin{defn}\label{defn:Dgemm-labelling}
\cite{Yao-Sun-Zhang-Mu-Sun-Wang-Su-Zhang-Yang-Yang-2018arXiv} Suppose that a $(p,q)$-graph $G$ admits a vertex labelling $f:V(G)\rightarrow [0,p-1]$ and an edge labelling $g:E(G)\rightarrow [1,q]$, and let $s(uv)=|f(u)-f(v)|-g(uv)$ for $uv\in E(G)$. If there are: (i) each edge $uv$ corresponds an edge $u'v'$ such that $g(uv)=|f(u')-f(v')|$ (or $g(uv)=p-|f(u')-f(v')|$); (ii) and there exists a constant $k''$ such that each edge $xy$ has a matching edge $x'y'$ holding $s(xy)+s(x'y')=k''$ true; (iii) there exists a constant $k$ such that $|f(u)-f(v)|+f(uv)=k$ for each edge $uv\in E(G)$; (iv) there exists a constant $k'$ such that each edge $uv$ matches with one vertex $w$ such that $f(uv)+f(w)=k'$, and each vertex $z$ matches with one edge $xy$ such that $f(z)+f(xy)=k'$, except the \emph{singularity} $f(x_0)=0$. Then we name $(f,g)$ as an \emph{ee-difference graceful-magic matching labelling} (Dgemm-labelling) of $G$ (called a \emph{Dgemm-graph}).\qqed
\end{defn}

\begin{defn}\label{defn:difference-sum-labelling}
\cite{Yao-Sun-Zhang-Mu-Sun-Wang-Su-Zhang-Yang-Yang-2018arXiv} Let $f:V(G)\rightarrow [0,q]$ be a labelling of a $(p,q)$-graph $G$, and let $$S_{um}(G,f)=\sum_{uv\in E(G)}|f(u)-f(v)|,$$ we say $f$ to be a \emph{difference-sum labelling}. Find two extremum $\max_f S_{um}(G,f)$ (profit) and $\min_f S_{um}(G,f)$ (cost) over all difference-sum labellings of $G$.\qqed
\end{defn}

\begin{defn}\label{defn:felicitous-sum-labelling}
\cite{Yao-Sun-Zhang-Mu-Sun-Wang-Su-Zhang-Yang-Yang-2018arXiv} Let $f:V(G)\rightarrow [0,q]$ be a labelling of a $(p,q)$-graph $G$, and let $$F_{um}(G,f)=\sum_{uv\in E(G)}[f(u)+f(v)~(\bmod ~q+1)],$$ we call $f$ a \emph{felicitous-sum labelling}. Find two extremum $\max_f F_{um}(G,f)$ and $\min_f F_{um}(G,f)$ over all felicitous-sum labellings of $G$.\qqed
\end{defn}

Motivated from Definition \ref{defn:difference-sum-labelling} and Definition \ref{defn:felicitous-sum-labelling}, we design:
\begin{defn}\label{defn:three-new-sum-labelling}
$^*$ A connected $(p,q)$-graph $G$ admits a labelling $f:V(G)\cup E(G)\rightarrow [1,p+q]$, such that $f(x)\neq f(w)$ for any pair of elements $x,w\in V(G)\cup E(G)$. We have the following sums:
\begin{equation}\label{eqa:ve-11}
F_{ve}(G,f)=\sum _{uv\in E(G)}\big [f(uv)+|f(u)-f(v)|\big ],
\end{equation}
\begin{equation}\label{eqa:ve-22}
F_{|ve|}(G,f)=\sum _{uv\in E(G)}\big |f(u)+f(v)-f(uv)\big |,
\end{equation}
and
\begin{equation}\label{eqa:ve-magic}
F_{magic}(G,f)=\sum _{uv\in E(G)}\big [f(u)+f(uv)+f(v)\big ]=kq.
\end{equation}

We call $f$ to be: (1) a \emph{ve-sum-difference labelling} of $G$ if it holds (\ref{eqa:ve-11}) true; (2)  a \emph{ve-difference labelling} of $G$ if it holds (\ref{eqa:ve-22}) true; (3) a \emph{k-edge-average labelling} of $G$ if it holds (\ref{eqa:ve-magic}) true.\qqed
\end{defn}

Find these six extremum $\min_fF_{ve}(G,f)$, $\max_fF_{ve}(G,f)$, $\min_fF_{|ve|}(G,f)$, $\max_fF_{|ve|}(G,f)$, $\min_fF_{magic}(G,f)$ and $\max_fF_{magic}(G,f)$ over all $\varphi$-labellings of $G$, where $\varphi\in \{$ve-sum-difference, ve-difference, k-edge-average$\}$.

\begin{defn} \label{defn:set-ordered-felicitous}
\cite{Zhang-Yao-Wang-Wang-Yang-Yang-2013} Let $(X, Y)$ be the bipartition of a bipartite $(p, q)$-graph $G$. If $G$ admits a felicitous labelling $f$ such that $\max\{f(x):x\in X\}<b=\min\{f(y):y\in Y\}$, then we refer to $f$ as a \emph{set-ordered felicitous labelling} and $G$ a \emph{set-ordered felicitous graph}, and write this case as $f(X)<f(Y)$, and moreover $f$ is called an \emph{optimal set-ordered felicitous labelling} if $f(E(G))=[b,b+q-1]$ and $f(E(G))(\bmod~q)=[0,q-1]$.\qqed
\end{defn}

\begin{defn}\label{defn:edge-odd-graceful-labelling}
\cite{Yao-Zhang-Sun-Mu-Sun-Wang-Wang-Ma-Su-Yang-Yang-Zhang-2018arXiv} A $(p,q)$-graph $G$ admits an \emph{edge-odd-graceful total labelling} $h:V(G)\rightarrow [0,q-1]$ and $h:E(G)\rightarrow [1,2q-1]^o$ such that $$\{h(u)+h(uv)+h(v):uv\in E(G)\}=[a,b]$$ with $b-a+1=q$.\qqed
\end{defn}

\begin{defn}\label{defn:multiple-meanings-vertex-labelling}
\cite{Yao-Zhang-Sun-Mu-Sun-Wang-Wang-Ma-Su-Yang-Yang-Zhang-2018arXiv} A $(p,q)$-graph $G$ admits a \emph{multiple edge-meaning vertex labelling} $f:V(G)\rightarrow [0,p-1]$ such that (1) $f(E(G))=[1,q]$ and $f(u)+f(uv)+f(v)=$a constant $k$; (2) $f(E(G))=[p,p+q-1]$ and $f(u)+f(uv)+f(v)=$a constant $k'$; (3) $f(E(G))=[0,q-1]$ and $f(uv)=f(u)+f(v)~(\bmod~q)$; (4) $f(E(G))=[1,q]$ and $|f(u)+f(v)-f(uv)|=$a constant $k''$; (5) $f(uv)=$an odd number for each edge $uv\in E(G)$ holding $f(E(G))=[1,2q-1]^o$, and $\{f(u)+f(uv)+f(v):uv\in E(T)\}=[a,b]$ with $b-a+1=q$.\qqed
\end{defn}

\begin{defn}\label{defn:graceful-odd-graceful-total-set-labelling}
\cite{Yao-Zhang-Sun-Mu-Sun-Wang-Wang-Ma-Su-Yang-Yang-Zhang-2018arXiv} A $(p,q)$-graph $G$ admits a vertex set-labelling $f:V(G)\rightarrow [1,q]^2$~(or $[1,2q-1]^2)$, and induces an edge set-labelling $f'(uv)=f(u)\cap f(v)$. If we can select a \emph{representative} $a_{uv}\in f'(uv)$ for each edge label set $f'(uv)$ with $uv \in E(G)$ such that $$\{a_{uv}:~uv\in E(G)\}=[1,q],~ (\textrm{or}~[1,2q-1]^o),$$ we then call $f$ a \emph{graceful-intersection (or an odd-graceful-intersection) total set-labelling} of $G$.\qqed
\end{defn}

\begin{defn}\label{defn:graph-graceful-group-labelling}
\cite{Yao-Zhang-Sun-Mu-Sun-Wang-Wang-Ma-Su-Yang-Yang-Zhang-2018arXiv} Let $F_{n}(H,h)$ be an every-zero graphic group. A $(p,q)$-graph $G$ admits a \emph{graceful group-labelling} (or an \emph{odd-graceful group-labelling}) $F:V(G)\rightarrow F_{n}(H,h)$ such that each edge $uv$ is labelled by $F(uv)=F(u)\oplus F(v)$ under a \emph{zero} $H_k$, and $F(E(G))=\{F(uv):uv \in E(G)\}=\{H_1,H_2,\dots ,H_q\}$ (or $F(E(G))=\{F(uv):uv \in E(G)\}=\{H_1,H_3,\dots ,H_{2q-1}\}$).\qqed
\end{defn}

\begin{defn}\label{defn:perfect-odd-graceful-labelling}
\cite{Yao-Zhang-Sun-Mu-Sun-Wang-Wang-Ma-Su-Yang-Yang-Zhang-2018arXiv} Let $f$ be an odd-graceful labelling of a $(p,q)$-graph $G$, such that $f(V(G))\subset [0,2q-1]^o$ and $f(E(G))=[1,2q-1]^o$. If $\{|a-b|:~a,b\in f(V(G))\}=[1,p]$, then $f$ is called a \emph{perfect odd-graceful labelling} of $G$.\qqed
\end{defn}

\begin{defn}\label{defn:perfect-varepsilon-labelling}
\cite{Yao-Zhang-Sun-Mu-Sun-Wang-Wang-Ma-Su-Yang-Yang-Zhang-2018arXiv} Suppose that a $(p,q)$-graph $G$ admits an $\varepsilon$-labelling $h: V(G)\rightarrow S\subseteq [0,p+q]$. If $\{|a-b|:~a,b\in f(V(G))\}=[1,p]$, we call $f$ a \emph{perfect $\varepsilon$-labelling} of $G$.\qqed
\end{defn}

\begin{defn}\label{defn:image-labelling}
\cite{Yao-Zhang-Sun-Mu-Sun-Wang-Wang-Ma-Su-Yang-Yang-Zhang-2018arXiv} Let $f_i:V(G)\rightarrow [a,b]$ be a labelling of a $(p,q)$-graph $G$ and let each edge $uv\in E(G)$ have its own label as $f_i(uv)=|f_i(u)-f_i(v)|$ with $i=1,2$. If each edge $uv\in E(G)$ holds $f_1(uv)+f_2(uv)=k$ true, where $k$ is a positive constant, we call $f_1$ and $f_2$ are a matching of \emph{image-labellings}, and $f_i$ a \emph{mirror-image} of $f_{3-i}$ with $i=1,2$.\qqed
\end{defn}

\begin{defn} \label{defn:twin-k-d-harmonious-labellings}
\cite{Yao-Zhang-Sun-Mu-Sun-Wang-Wang-Ma-Su-Yang-Yang-Zhang-2018arXiv} A $(p,q)$-graph $G$ admits two $(k,d)$-harmonious labellings $f_i:V(G)\rightarrow X_0\cup X_{k,d}$ with $i=1,2$, where $X_0=\{0,d,2d, \dots ,(q-1)d\}$ and $X_{k,d}=\{k,k+d,k+2d, \dots ,k+(q-1)d\}$, such that each edge $uv\in E(G)$ is labelled as $f_i(uv)-k=[f_i(u)+f_i(v)-k~(\textrm{mod}~qd)]$ with $i=1,2$. If $f_1(uv)+f_2(uv)=2k+(q-1)d$, we call $f_1$ and $f_2$ a matching of \emph{$(k,d)$-harmonious image-labellings} of $G$.\qqed
\end{defn}

\begin{defn} \label{defn:twin-k-d-harmonious-labellings}
\cite{Yao-Zhang-Sun-Mu-Sun-Wang-Wang-Ma-Su-Yang-Yang-Zhang-2018arXiv} A $(p,q)$-graph $G$ admits a $(k,d)$-labelling $f$, and another $(p',q')$-graph $H$ admits another $(k,d)$-labelling $g$. If $(X_0\cup X_{k,d})\setminus f(V(G)\cup E(G))=g(V(H)\cup E(H))$ with $X_0=\{0,d,2d, \dots ,(q-1)d\}$ and $X_{k,d}=\{k,k+d,k+2d, \dots ,k+(q-1)d\}$, then $g$ is called a \emph{complementary $(k,d)$-labelling} of $f$, and both $f$ and $g$ are a matching of \emph{twin $(k,d)$-labellings} of $G$.\qqed
\end{defn}

\begin{defn}\label{defn:odd-6C-labelling}
\cite{Yao-Zhang-Sun-Mu-Sun-Wang-Wang-Ma-Su-Yang-Yang-Zhang-2018arXiv} A $(p,q)$-graph $G$ admits a total labelling $f:V(G)\cup E(G)\rightarrow [1,4q-1]$. If this labelling $f$ holds:

(i) (e-magic) $f(uv)+|f(u)-f(v)|=k$, and $f(uv)$ is odd;

(ii) (ee-difference) each edge $uv$ matches with another edge $xy$ holding $f(uv)=2q+|f(x)-f(y)|$, or $f(uv)=2q-|f(x)-f(y)|$,  $f(uv)+f(xy)=2q$;

(iii) (ee-balanced) let $s(uv)=|f(u)-f(v)|-f(uv)$ for $uv\in E(G)$, then there exists a constant $k'$ such that each edge $uv$ matches with another edge $u'v'$ holding $s(uv)+s(u'v')=k'$ (or $(p+q+1)+s(uv)+s(u'v')=k'$) true;

(iv) (EV-ordered) $\max f(V(G))<\min f(E(G))$, and $\{|a-b|:a,b\in f(V(G))\}=[1,2q-1]$;

(v) (ve-matching) there are two constants $k_1,k_2$ such that each edge $uv$ matches with one vertex $w$ such that $f(uv)+f(w)=k_1~(\textrm{or }k_2)$;

(vi) (set-ordered) $\max f(X)<\min f(Y)$ for the bipartition $(X,Y)$ of $V(G)$.

We call $f$ an \emph{odd-6C-labelling} of $G$.\qqed
\end{defn}

A parameter sequence is defined as follows:
$$\{(k_i,d_i)\}^m_1=\{(k_1,d_1), (k_2,d_2),\dots ,(k_m,d_m)\},$$
and let $$S(k_i,d_i)^q_1=\{k_i, k_i + d_i, \dots , k_i +(q-1)d_i\}$$ be a recursive set for integers $k_i\geq 1$ and $d_i\geq 1$. Thereby, we have a \emph{Topsnut-gpw sequence} $\{G_{(k_i,d_i)}\}^m_1$ made by some integer sequence $\{(k_i,d_i)\}^m_1$ and a $(p, q)$-graph $G$, where each Topsnut-gpw $G_{(k_i,d_i)}\cong G$, and each Topsnut-gpw $G_{(k_i,d_i)}\in \{G_{(k_i,d_i)}\}^m_1$ admits one labelling of four parameter labellings defined in Definition \ref{defn:3-parameter-labellings}.

\begin{defn} \label{defn:3-parameter-labellings}
\cite{Gallian2016} (1) A \emph{$(k_i,d_i)$-graceful labelling} $f$ of $G_i$
hold $f(V(G_i))\subseteq [0, k_i + (q-1)d_i]$, $f(x)\neq f(y)$ for
distinct $x,y\in V(G_i)$ and $\pi(E(G_i))=\{|\pi(u)-\pi(v)|;\ uv\in
E(G_i)\}=S(k_i,d_i)^q_1$.

(2) A labelling $f$ of $G_i$ is said to be $(k_i,d_i)$-\emph{arithmetic} if $f(V(G_i))\subseteq [0, k_i+(q-1)d_i]$, $f(x)\neq f(y)$ for distinct $x,y\in V(G_i)$ and $\{f(u)+f(v): uv\in E(G_i)\}=S(k_i,d_i)^q_1$.

(3) A $(k_i,d_i)$-\emph{edge antimagic total labelling} $f$ of $G_i$ hold $f(V(G_i)\cup E(G_i))=[1,p+q]$ and
$\{f(u)+f(v)+f(uv): uv\in E(G_i)\}=S(k_i,d_i)^q_1$, and furthermore $f$ is \emph{super} if $f(V(G_i))=[1,p]$.

(4) A \emph{$(k_i,d_i)$-harmonious labelling} of a $(p,q)$-graph $G_i$ is defined by a mapping $h:V(G)\rightarrow [0,k+(q-1)d_i]$ with $k_i,d_i\geq 1$, such that $f(x)\neq f(y)$ for any pair of vertices $x,y$ of $G$, $h(u)+h(v)(\bmod^*~qd_i)$ means that $h(uv)-k=[h(u)+h(v)-k](\bmod~qd_i)$ for each edge $uv\in E(G)$, and the edge label set $h(E(G))=S(k_i,d_i)^q_1$ holds true.\qqed
\end{defn}

We can see the complex of a Topsnut-gpw sequence $\{G_{(k_i,d_i)}\}^m_1$ as:

(i) $\{(k_i,d_i)\}^m_1$ is a random sequence or a sequence with many restrictions.

(ii) $G_{(k_i,d_i)}\cong G$ is a regularity.

(iii) Each $G_{(k_i,d_i)}\in \{G_{(k_i,d_i)}\}^m_1$ admits randomly one labelling in Definition \ref{defn:3-parameter-labellings}.

(iv) Each $G_{(k_i,d_i)}\in \{G_{(k_i,d_i)}\}^m_1$ has its matching $H_{(k_i,d_i)}\in \{H_{(k_i,d_i)}\}^m_1$ under the meaning of image-labelling, inverse labelling and twin labelling, and so on.

\vskip 0.4cm

The goal of applying Topsnut-gpw sequences is for encrypting graphs/networks. Moreover, we have

\begin{defn} \label{defn:Topsnut-gpw-sequences-graph-labellings}
\cite{Yao-Zhang-Sun-Mu-Sun-Wang-Wang-Ma-Su-Yang-Yang-Zhang-2018arXiv} Let $\{(k_i,d_i)\}^m_1$ be a sequence with integers $k_i\geq 0$ and $d_i\geq 1$, and $G$ be a $(p,q)$-graph with $p\geq 2$ and $q\geq 1$. We define a labelling $F:V(G)\rightarrow \{G_{(k_i,d_i)}\}^m_1$, and $F(u_iv_j)=(|k_i-k_j|, ~d_i+d_j~(\textrm{mod}~M))$ with $F(u_i)=G_{(k_i,d_i)}$ and $F(v_j)=G_{(k_j,d_j)}$ for each edge $u_iv_j\in E(G)$. Then

(1) If $\{|k_i-k_j|:~u_iv_j\in E(G)\}=[1,2q-1]^o$ and $\{d_i+d_j~(\textrm{mod}~M)):~u_iv_j\in E(G)\}=[0,2q-3]^o$, we call $F$ a \emph{twin odd-type graph-labelling} of $G$.

(2) If $\{|k_i-k_j|:~u_iv_j\in E(G)\}=[1,q]$ and $\{d_i+d_j~(\textrm{mod}~M)):~u_iv_j\in E(G)\}=[0,2q-3]^o$, we call $F$ a \emph{graceful odd-elegant graph-labelling} of $G$.

(3) If $\{|k_i-k_j|:~u_iv_j\in E(G)\}$ and $\{d_i+d_j~(\textrm{mod}~M)):~u_iv_j\in E(G)\}$ are generalized Fibonacci sequences, we call $F$ a \emph{twin Fibonacci-type graph-labelling} of $G$.\qqed
\end{defn}

\begin{defn}\label{defn:twin-labellings}
\cite{Wang-Xu-Yao-2017-Twin2017} Suppose $f:V(G)\rightarrow [0,2q-1]$ is an odd-graceful labelling of a $(p,q)$-graph $G$ and $g:V(H)\rightarrow [1,2q]$ is a labelling of another $(p',q')$-graph $H$ such that each edge $uv\in E(H)$ has its own label defined as $h(uv)=|h(u)-h(v)|$ and the edge label set $f(E(H))=[1,2q-1]^o$. We say $(f,g)$ to be a \emph{twin odd-graceful labelling}, $H$ a \emph{twin odd-graceful matching} of $G$. \qqed
\end{defn}

\begin{defn}\label{defn:Edge-magic-total-graph-labelling}
\cite{Yao-Zhang-Sun-Mu-Sun-Wang-Wang-Ma-Su-Yang-Yang-Zhang-2018arXiv} Let $M_{pg}(p,q)$ be the set of maximal planar graphs $H_i$ of $i+3$ vertices with $i\in [1,p+q]$, where each face of each planar graph $H_i$ is a triangle. We use a \emph{total labelling} $f$ to label the vertices and edges of a $(p,q)$-graph $G$ with the elements of $M_{pg}(p,q)$, such that $i+ij+j=k$ (a constant), where $f(u_i)=H_i$, $f(u_iv_j)=H_{ij}$ and $f(v_j)=H_j$ for each edges $u_iv_j\in E(G)$. We say $f$ an \emph{edge-magic total planar-graph labelling} of $G$ based on $M_{pg}(p,q)$.\qqed
\end{defn}

\begin{defn}\label{defn:multiple-graph-matching}
\cite{Yao-Sun-Zhang-Mu-Sun-Wang-Su-Zhang-Yang-Yang-2018arXiv} If a $(p,q)$-graph $G$ admits a vertex labelling $f: V(G)\rightarrow [0, p-1]$, such that $G$ can be decomposed into edge-disjoint graphs $G_1,G_2,\dots ,G_m$ with $m\geq 2$ and $E(G)=\bigcup^m_{i=1}E(G_i)$ and $E(G_i)\cap E(G_j)=\emptyset $ for $i\neq j$, and each graph $G_i$ admits a proper labelling $f_i$ induced by $f$. We call $G$ a \emph{multiple-graph matching partition}, denoted as $G=\odot_f\langle G_i\rangle ^m_1$.\qqed
\end{defn}

\begin{thm}\label{thm:set-ordered-matchings-10-labellings}
\cite{Yao-Sun-Zhang-Mu-Sun-Wang-Su-Zhang-Yang-Yang-2018arXiv} If a tree $T$ admits a set-ordered graceful labelling $f$, then $T$ matches with a multiple-tree matching partition $\oplus_f\langle T_i\rangle ^m_1$ with $m\geq 10$.
\end{thm}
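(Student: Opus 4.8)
The plan is to treat the given labelling $f$ as a single ``seed'' and to manufacture from it a family of at least ten labellings of $T$ — equivalently ten trees $T_i\cong T$, each carrying a labelling $f_i$ that is an explicit function of $f$ — and then to bundle this family as the multiple-tree matching partition $\oplus_f\langle T_i\rangle^m_1$ in the sense of Definition \ref{defn:multiple-graph-matching}. First I would record the rigid structure that set-orderedness forces. Since $f$ is graceful on the $(p,q)$-tree $T$ with $q=p-1$, we have $f(V(T))=[0,q]=[0,p-1]$ and $f(E(T))=[1,q]$; together with $\max f(X)<\min f(Y)$, writing $X=\{x_1,\dots,x_s\}$ and $Y=\{y_1,\dots,y_t\}$ with the labels increasing, this yields the block form $f(X)=[0,s-1]$, $f(Y)=[s,p-1]$, $s+t=p$, and $f(x_iy_j)=f(y_j)-f(x_i)$ on every edge. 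Every construction below is then checked using only these three facts: $f(X)=[0,s-1]$, $f(Y)=[s,p-1]$, and $f(E(T))=[1,q]$.

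Second, I would exhibit the transformations in four groups, verifying each edge-label identity by a short computation of exactly the type appearing in Definitions \ref{defn:proper-bipartite-labelling-ongraphs}--\ref{defn:3-parameter-labellings}. \textbf{Difference type:} $f$ itself (set-ordered graceful); its dual $f'(w)=q-f(w)$ (again set-ordered graceful, parts swapped, $f'(E(T))=[1,q]$); the rescaling $g(x)=2f(x)$ on $X$, $g(y)=2f(y)-1$ on $Y$, for which $g(xy)=2f(xy)-1$ runs over $[1,2q-1]^o$ (set-ordered odd-graceful); and its dual $g'(w)=(2q-1)-g(w)$. \textbf{Additive/modular type:} reading $f$ modulo $q$ to obtain a harmonious labelling (using the block form to control $f(u)+f(v)\ (\bmod\ q)$ over the edges, and placing the single repeated vertex value at a leaf, as permitted in Definition \ref{defn:connections-several-labellings}(7)); the same modular device with the $X/Y$ split to get a felicitous labelling (Definition \ref{defn:connections-several-labellings}(1)); and a modular shift of $g$ to get an odd-elegant labelling (Definition \ref{defn:connections-several-labellings}(5)). \textbf{Magic type:} the standard passage from a set-ordered graceful (i.e.\ $\alpha$-) labelling to a super edge-magic total labelling, keeping $f$ on $V(T)$ rescaled to $[1,p]$ and giving edge $uv$ the complementary value so that $f(u)+f(uv)+f(v)$ is constant — this works precisely because the edge differences exhaust $[1,q]$ — together with its dual and a super $(k,d)$-edge-antimagic total labelling on the same vertex part (Definition \ref{defn:connections-several-labellings}(3)--(4)). \textbf{Parametric type:} for every $k\ge 1$, $d\ge 1$, a $(k,d)$-graceful labelling $f_{k,d}$ (scale $X$ and $Y$ linearly so the differences become $S_{k,d}$), and likewise $(k,d)$-arithmetic and $(k,d)$-harmonious labellings (Definitions \ref{defn:connections-several-labellings}, \ref{defn:3-parameter-labellings}); this single parametrized family already supplies infinitely many pairwise distinct labellings of $T$, so certainly $m\ge 10$.

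Finally, since each $f_i$ above is written as a fixed function of the one labelling $f$, the whole collection is ``induced by $f$'', and declaring $\oplus_f\langle T_i\rangle^m_1$ to be this family of (at least ten) copies of $T$ completes the argument. If one instead reads $\oplus_f$ literally as an edge-disjoint decomposition of $T$ into $m\ge 10$ subtrees, the proof is augmented by the (routine) observations that such a decomposition exists once $T$ has enough internal vertices and that the restriction of a set-ordered graceful labelling to a subtree, after subtracting its minimum vertex label, remains a proper labelling of that piece, so each $T_i$ inherits its $f_i$ from $f$.

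The main obstacle is the additive and magic constructions — harmonious, felicitous, odd-elegant, super edge-magic/antimagic — where a sum, or a sum taken modulo $q$, of labels must cover a prescribed interval or arithmetic progression $S_{k,d}$; this is exactly where the block structure $f(X)=[0,s-1]$, $f(Y)=[s,p-1]$ must be exploited carefully, where the correct shift constants have to be pinned down, and where the single permitted vertex-label collision in the harmonious case must be handled. The difference-type labellings and their duals are by comparison immediate, so the real work — and, if one wants it, the entire count toward $m\ge 10$ — can also be sidestepped by simply invoking the $(k,d)$-parametrized family.
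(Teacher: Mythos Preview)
The paper does not prove this theorem; it is simply quoted from the cited reference, so there is no in-paper proof to compare against. Your primary strategy --- treating the set-ordered graceful labelling $f$ as a seed and manufacturing from it at least ten explicit labellings of $T$ (graceful and its dual, odd-graceful and its dual, felicitous, harmonious, odd-elegant, super edge-magic total, super $(k,d)$-edge antimagic total, and the $(k,d)$-parametric families) --- is exactly the mechanism behind the immediately preceding Theorem~\ref{thm:connections-several-labellings}, which already supplies eight such equivalences. Extending that list past ten is straightforward, and this is almost certainly what the cited source does; your sketch of the block structure $f(X)=[0,s-1]$, $f(Y)=[s,p-1]$ and the edge-by-edge verifications is the right way to carry it out.

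Your fallback paragraph, however, contains a genuine gap. Definition~\ref{defn:multiple-graph-matching} literally demands an edge-disjoint decomposition $E(G)=\bigcup_{i=1}^{m} E(G_i)$, and your claim that such a decomposition of $T$ into $m\ge 10$ subtrees is ``routine once $T$ has enough internal vertices'' does not prove the theorem as stated: a path on three vertices admits a set-ordered graceful labelling but has only two edges. Note also that the theorem says $T$ \emph{matches with} the object $\oplus_f\langle T_i\rangle^m_1$, not that $T$ \emph{is} that object, and that the symbol $\oplus_f$ differs from the $\odot_f$ of Definition~\ref{defn:multiple-graph-matching}. So either the intended reading is your first one (ten labelled copies of $T$, the ``partition'' being of the family of labellings rather than of $E(T)$), in which case your argument stands, or the host of the edge-partition is some larger auxiliary graph assembled from the $T_i$, in which case you have produced the right ten trees but have not named the graph they decompose. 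You should commit to the first reading and drop the second paragraph, or else make the host graph explicit.
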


\subsection{Flawed graph labellings}

We will show a \emph{(flawed) set-ordered graceful labelling} for each single Hanzi-graph, in other word, the set-ordered graceful labelling is one of standard labellings/colorings in Hanzi-gpws (see examples shown in Fig.\ref{fig:rrhg-set-graceful}). An example from Fig.\ref{fig:rrhg-flawed-set-graceful}, Fig.\ref{fig:rrhg-set-graceful-0} and Fig.\ref{fig:rrhg-set-graceful-1} is used to illustrate our \emph{flawed type of graph labellings}.

\begin{figure}[h]
\centering
\includegraphics[height=3.8cm]{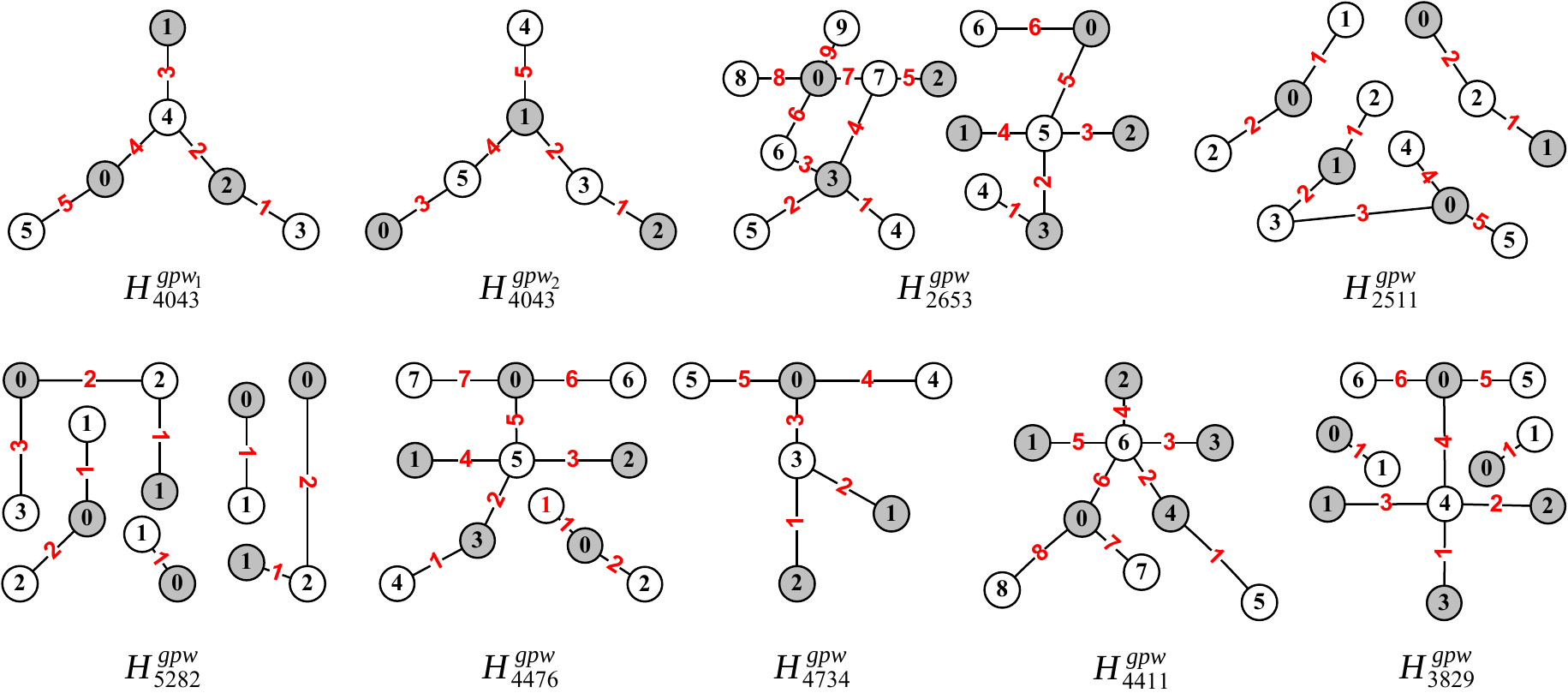}\\
\caption{\label{fig:rrhg-set-graceful} {\small Each connected Hanzi-graph of a Hanzi-graph group $T$ shown in Fig.\ref{fig:rrhg-topological} admits a set-ordered graceful labelling.}}
\end{figure}
In Fig.\ref{fig:rrhg-set-graceful}, the Hanzi-graph group $T$ is
$$T=T_{4043}T_{4043}T_{2635}T_{2511}T_{5282}T_{4476}T_{4734}T_{4411}T_{3829},$$
so we get a Hanzi-gpw group $H$ as follows
$$H=H^{gpw_1}_{4043}H^{gpw_2}_{4043}H^{gpw}_{2635}H^{gpw}_{2511}H^{gpw}_{5282}H^{gpw}_{4476}H^{gpw}_{4734}H^{gpw}_{4411}H^{gpw}_{3829}.$$
Join them by edges for producing a connected graph $T^*=T+E^*$, and then Fig.\ref{fig:rrhg-set-graceful-0} shows us a set-ordered graceful labelling $f^*$ of $T^*$ by the set-ordered graceful labellings of the Hanzi-graph group $T$, and moreover we get a \emph{flawed set-ordered graceful labelling} $f$ of the Hanzi-graph group $T$ by $f(V(T))=f^*(V(T^*))$, $f(E(T))=f^*(E(T^*))\setminus f^*(E^*)$ (see Fig.\ref{fig:rrhg-flawed-set-graceful}).

\begin{figure}[h]
\centering
\includegraphics[height=3cm]{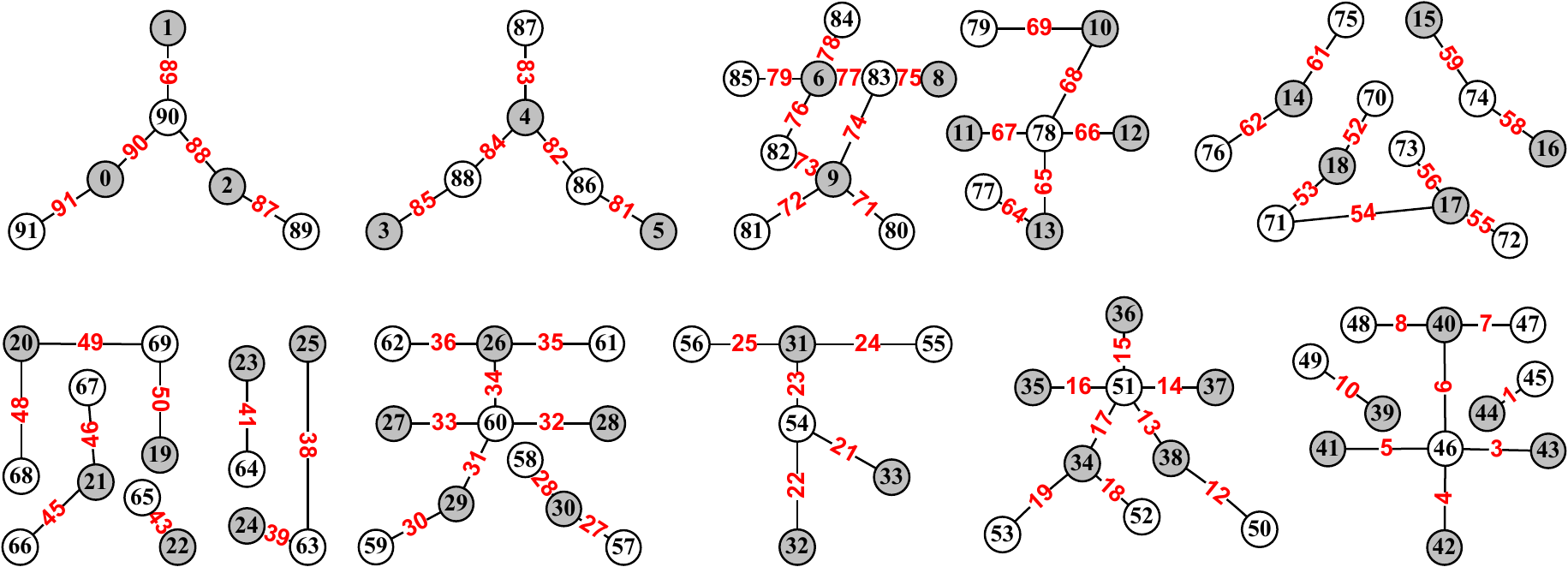}\\
\caption{\label{fig:rrhg-flawed-set-graceful} {\small A Hanzi-graph group $T$ shown in Fig.\ref{fig:rrhg-set-graceful} admits a \emph{flawed set-ordered graceful labelling}.}}
\end{figure}

\begin{figure}[h]
\centering
\includegraphics[height=4.1cm]{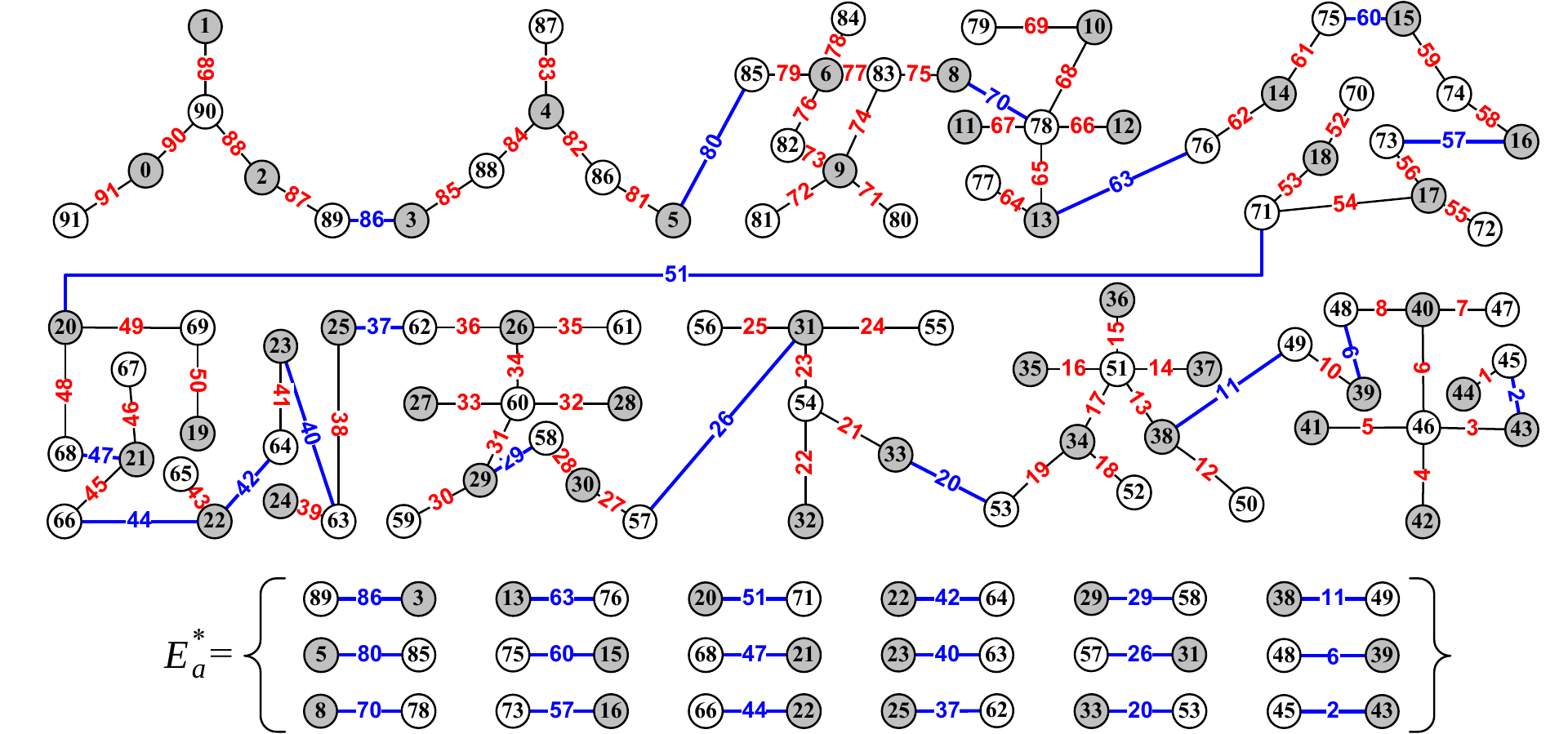}\\
\caption{\label{fig:rrhg-set-graceful-0} {\small A connected graph $T+E^*_a$ based on Fig.\ref{fig:rrhg-flawed-set-graceful} admits a set-ordered graceful labelling.}}
\end{figure}

\begin{figure}[h]
\centering
\includegraphics[height=4.1cm]{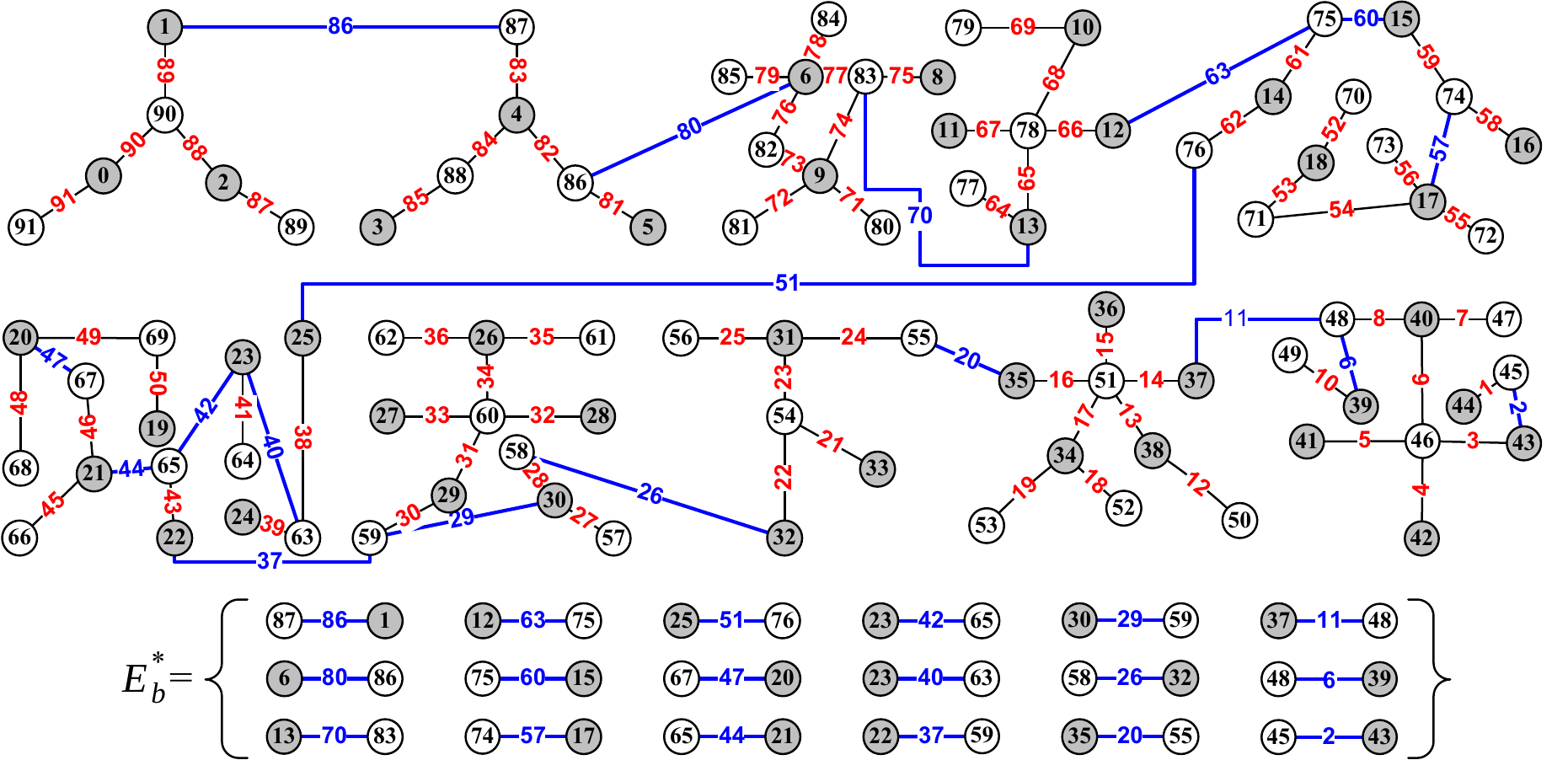}\\
\caption{\label{fig:rrhg-set-graceful-1} {\small Another connected graph $T+E^*_b$ based on Fig.\ref{fig:rrhg-flawed-set-graceful} admits a set-ordered graceful labelling.}}
\end{figure}

In general, we have a result shown in the following:
\begin{thm}\label{defn:group-flawed-labellings}
Let $G_1,G_2,\dots, G_m$ be disjoint connected graphs, and $E^*$ be an edge set such that each edge $uv$ of $E^*$ has one end $u$ in some $G_i$ and another end $v$ is in some $G_j$ with $i\neq j$, and $E^*$ joins $G_1,G_2,\dots, G_m$ together to form a connected graph $H$, denoted as $H=E^*+G$, where $G=\bigcup^m_{i=1}G_i$. We say $G=\bigcup^m_{i=1}G_i$ to be a disconnected $(p,q)$-graph with $p=|V(H)|=\sum^m_{i=1}|V(G_i)|$ and $q=|E(H)|-|E^*|=(\sum^m_{i=1}|E(G_i)|)-|E^*|$. If $H$ admits an $\alpha$-labelling shown in the following
\begin{asparaenum}[\emph{\textrm{Flawed}}-1. ]
\item $\alpha$ is a graceful labelling, or a set-ordered graceful labelling, or graceful-intersection total set-labelling, or a graceful group-labelling.
\item $\alpha$ is an odd-graceful labelling, or a set-ordered odd-graceful labelling, or an edge-odd-graceful total labelling, or an odd-graceful-intersection total set-labelling, or an odd-graceful group-labelling, or a perfect odd-graceful labelling.
\item $\alpha$ is an elegant labelling, or an odd-elegant labelling.
\item $\alpha$ is an edge-magic total labelling, or a super edge-magic total labelling, or super set-ordered edge-magic total labelling, or an edge-magic total graceful labelling.
\item $\alpha$ is a $(k,d)$-edge antimagic total labelling, or a $(k, d)$-arithmetic.
\item $\alpha$ is a relaxed edge-magic total labelling.
\item $\alpha$ is an odd-edge-magic matching labelling, or an ee-difference odd-edge-magic matching labelling.
\item $\alpha$ is a 6C-labelling, or an odd-6C-labelling.
\item $\alpha$ is an ee-difference graceful-magic matching labelling.
\item $\alpha$ is a difference-sum labelling, or a felicitous-sum labelling.
\item $\alpha$ is a multiple edge-meaning vertex labelling.
\item $\alpha$ is a perfect $\varepsilon$-labelling.
\item $\alpha$ is an image-labelling, or a $(k,d)$-harmonious image-labelling.
\item $\alpha$ is a twin $(k,d)$-labelling, or a twin Fibonacci-type graph-labelling, or a twin odd-graceful labelling.
\end{asparaenum}
Then $G$ admits a \emph{flawed $\alpha$-labelling} too.\qqed
\end{thm}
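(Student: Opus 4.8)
The plan is to produce the flawed $\alpha$-labelling of $G$ simply by \emph{restriction}. Since $G=\bigcup_{i=1}^{m}G_i$ and $H=E^*+G$, we have $V(G)=V(H)$ (so $G$ is a spanning subgraph of $H$) and $E(G)=E(H)\setminus E^*$, with $|V(H)|=p$ and $|E(H)|=q+|E^*|=:q^*$. Given an $\alpha$-labelling of $H$, set $f:=\alpha|_{V(G)\cup E(G)}$, so that $f(V(G))=\alpha(V(H))$ and $f(E(G))=\alpha(E(H))\setminus\alpha(E^*)$. This is exactly the recipe already used for the set-ordered graceful case in the paragraph preceding the theorem (cf. Fig.\,\ref{fig:rrhg-flawed-set-graceful}); what remains is to verify, family by family in the list, that $f$, together with any auxiliary data $\alpha$ carries, deserves the name \emph{flawed $\alpha$-labelling}.

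First I would sort the defining clauses of the labellings in Definitions \ref{defn:proper-bipartite-labelling-ongraphs}--\ref{defn:odd-6C-labelling} into three kinds. (I) \emph{Distinctness clauses} --- $h(x)\neq h(y)$ over $V(G)$, over $E(G)$, or over $V(G)\cup E(G)$: inherited, because a subfamily of pairwise distinct values is pairwise distinct. (II) \emph{Local clauses} attached to a single edge $uv$ and its ends, e.g. $\theta(uv)=|\theta(u)-\theta(v)|$, $f(uv)=f(u)+f(v)\ (\bmod\ q^*)$, $f(u)+f(uv)+f(v)=k$, $f(uv)+|f(u)-f(v)|=k$, $|f(u)+f(v)-f(uv)|=k''$, together with the ordering requirements $\max f(X)<\min f(Y)$: each holds for every edge of $H$, hence for every edge of $G\subseteq H$, and the bipartition $(X,Y)$ of $H$ is still one of $G$. (III) \emph{Global set clauses} pinning a whole label set to a prescribed set --- $f(E(H))=[1,q^*]$, $f(E(H))=[1,2q^*-1]^o$, $f(V(H)\cup E(H))=[1,p+q^*]$, the group edge-label set $\{H_1,\dots,H_{q^*}\}$ of Definition \ref{defn:graph-graceful-group-labelling}, and the representative condition of the intersection labellings: restricting to $G$ converts each such equality into a \emph{proper containment} by a set of the correct cardinality ($q$ for edge sets, $p+q$ for the universal set), the $|E^*|$ absent values being precisely the labels of the connecting edges. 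This, and only this, controlled weakening of the global set clauses is what ``flawed'' names. Clauses concerning the vertex-label set alone, such as $\{|a-b|:a,b\in f(V)\}=[1,p]$ in Definitions \ref{defn:perfect-odd-graceful-labelling} and \ref{defn:perfect-varepsilon-labelling}, are in fact preserved exactly (as $V(G)=V(H)$), while the extremal sums of Definitions \ref{defn:difference-sum-labelling}--\ref{defn:three-new-sum-labelling} simply range over the smaller set $E(G)$. With this trichotomy the items Flawed-1, Flawed-2, Flawed-3, Flawed-4, Flawed-5, Flawed-10, Flawed-11, Flawed-12 are immediate.

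The hard part will be the \emph{matching-type} labellings: Flawed-6 through Flawed-9 (Definitions \ref{defn:relaxed-Emt-labelling}, \ref{defn:Oemt-labelling}, \ref{defn:relaxed-Oemt-labelling}, \ref{defn:6C-labelling}, \ref{defn:Dgemm-labelling}, \ref{defn:odd-6C-labelling}) and Flawed-13, Flawed-14 (Definitions \ref{defn:image-labelling}--\ref{defn:twin-labellings}). There a defining clause pairs each edge of $H$ with a partner edge (or each vertex with a partner edge) so that a designated sum is constant; if an edge of $G$ has its $\alpha$-partner inside the deleted set $E^*$, that partner vanishes under restriction and the clause can no longer be certified for that edge. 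The resolution, consistent with the paper's informal use of ``flawed'', is to equip the flawed labelling, besides $f$, with the \emph{induced partial pairing} $\{(e,e'):e,e'\in E(G),\ e'\text{ is the }\alpha\text{-partner of }e\}$ (and the analogous partial vertex--edge pairing), and to demand the e-magic, ee-difference, ee-balanced and ve-matching identities only along these partial pairings; at most $2|E^*|$ edges and their mates are left unpaired, and these constitute the flaw. I would state this once as a uniform ``flawed matching-type $\alpha$-labelling'' and then note that the e-magic, EV-ordered and set-ordered clauses of all six definitions are local in the sense of (II) and survive intact, so that only the (now partial) pairing clauses are relaxed. For the twin/image items the same argument applies to the labelling carried by the large connected graph, its partner labelling (the mirror $f_{3-i}$, or the labelling on the twin graph) being restricted in parallel, the one subtlety being to read ``complementary $(k,d)$-labelling'' in Definition \ref{defn:twin-labellings} relative to the shrunken ground set $[1,2q^*-1]$. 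Collecting all cases yields a flawed $\alpha$-labelling of $G$ in every instance, which is the assertion.
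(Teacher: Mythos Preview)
Your approach by restriction is exactly right and matches what the paper does. In fact the paper offers no proof at all for this statement: it is stated with a terminal $\Box$ and the text immediately moves on, and the label is even \texttt{defn:group-flawed-labellings}---the result is intended as the \emph{definition} of a flawed $\alpha$-labelling, not as a theorem requiring argument. The paragraph preceding the statement makes this explicit in the graceful case: one takes the labelling $f^*$ of $H=E^*+G$ and sets $f(V(T))=f^*(V(T^*))$, $f(E(T))=f^*(E(T^*))\setminus f^*(E^*)$; your $f:=\alpha|_{V(G)\cup E(G)}$ is the same recipe.

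Your careful trichotomy (distinctness clauses, local edge clauses, global set clauses) and your handling of the matching-type labellings are considerably more thorough than anything the paper provides, and they correctly identify that the only content of ``flawed'' is the controlled weakening of the global edge-label-set equalities by the $|E^*|$ missing values. That said, since the paper treats the whole list as definitional, you are not obliged to verify item by item that the restricted object ``deserves the name''; the name is simply being assigned. Your proposal is correct, just more scrupulous than the paper itself.
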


\vskip 0.2cm

The above Theorem \ref{defn:group-flawed-labellings} shows us over thirty-three flawed graph labellings.

\begin{thm}\label{thm:flawed-graceful-labelling-forest}
Let $G=\bigcup^m_{i=1} G_i$ be a forest with components $G_1,G_2,\dots ,G_m$, where each $G_i$ is a $(p_i,p_i-1)$-tree with bipartition $(X_i,Y_i)$ and admits a set-ordered graceful labelling $f_i$ holding $\max f_i(X_i)<\min f_i(Y_i)$ with $i\in [1,m]$ true. Then $G$ admits a flawed set-ordered graceful labelling.
\end{thm}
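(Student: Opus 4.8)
The plan is to reduce the statement to a single application of Theorem~\ref{defn:group-flawed-labellings}: it suffices to exhibit an edge set $E^*$ joining $G_1,G_2,\dots ,G_m$ into a connected tree $H=E^*+G$ that admits a set-ordered graceful labelling, since then Theorem~\ref{defn:group-flawed-labellings}, applied with $\alpha$ a set-ordered graceful labelling, yields a flawed set-ordered graceful labelling of $G$ (equivalently, one restricts that labelling to $V(G)$ and $E(G)$, dropping the labels of $E^*$, exactly as in the construction preceding Theorem~\ref{defn:group-flawed-labellings}). First I would record the normal form of a set-ordered graceful labelling (Definition~\ref{defn:proper-bipartite-labelling-ongraphs}): writing $s_i=|X_i|$, the labelling $f_i$ is forced to satisfy $f_i(X_i)=[0,s_i-1]$, $f_i(Y_i)=[s_i,p_i-1]$ and $\{f_i(y)-f_i(x):xy\in E(G_i)\}=[1,p_i-1]$, because $f_i(V(G_i))=[0,p_i-1]$ with distinct values and $\max f_i(X_i)<\min f_i(Y_i)$.

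The core is a merging lemma for two trees, to be applied iteratively. Let $A,B$ be vertex-disjoint trees with set-ordered graceful labellings $g_A,g_B$, bipartitions $(X_A,Y_A),(X_B,Y_B)$, orders $p_A,p_B$, and put $s_A=|X_A|$, $s_B=|X_B|$. Define $F$ on $V(A)\cup V(B)$ by $F(z)=g_A(z)+s_B$ for $z\in V(A)$, $F(x)=g_B(x)$ for $x\in X_B$, and $F(y)=g_B(y)+p_A$ for $y\in Y_B$. Interval arithmetic shows the four blocks $g_B(X_B)=[0,s_B-1]$, $g_A(X_A)+s_B=[s_B,s_A+s_B-1]$, $g_A(Y_A)+s_B=[s_A+s_B,p_A+s_B-1]$, $g_B(Y_B)+p_A=[p_A+s_B,p_A+p_B-1]$ tile $[0,p_A+p_B-1]$ consecutively, so $F$ is injective; that $X_A\cup X_B$ receives exactly $[0,s_A+s_B-1]$ and $Y_A\cup Y_B$ exactly $[s_A+s_B,p_A+p_B-1]$, so this bipartition is set-ordered; that the edges of $A$ keep the label set $[1,p_A-1]$ while the edges of $B$ now carry the label set $[p_A+1,p_A+p_B-1]$. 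Hence only the value $p_A$ is missing. Taking $u_0\in X_A$ with $g_A(u_0)=0$ and $v_0\in Y_B$ with $g_B(v_0)=s_B$, we get $F(u_0)=s_B$ and $F(v_0)=p_A+s_B$, so adding the edge $u_0v_0$ (which joins an $X$-vertex to a $Y$-vertex) produces a connected graph on $p_A+p_B$ vertices with $p_A+p_B-1$ edges, i.e. a tree $H=A\cup B\cup\{u_0v_0\}$, on which $F$ is a set-ordered graceful labelling whose edge label set is $[1,p_A+p_B-1]$.

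I would then conclude by induction on $m$. For $m=1$ the claim is the hypothesis itself (with $E^*=\emptyset$). For the step, apply the induction hypothesis to $G_1,\dots ,G_{m-1}$ to obtain a tree $H'=E'+\bigcup_{i<m}G_i$ carrying a set-ordered graceful labelling, then apply the merging lemma to $A=H'$ and $B=G_m$; the resulting tree is $H=E^*+G$ with $E^*=E'\cup\{u_0v_0\}$, $|E^*|=m-1$, and every edge of $E^*$ has its two ends in two distinct components $G_i,G_j$, as required by Theorem~\ref{defn:group-flawed-labellings}. The main obstacle I anticipate is the bookkeeping inside the merging lemma: pinning down the two shift amounts (by $s_B$ on all of $A$, by $p_A$ on $Y_B$) so that the two trees' internal edge labels fall into the complementary blocks $[1,p_A-1]$ and $[p_A+1,p_A+p_B-1]$ with no collision and a single reachable gap at $p_A$, and checking that the filler edge really does run between an $X$-vertex and a $Y$-vertex so set-orderedness survives; once the shifts are fixed, injectivity of $F$, the bipartition inclusion, and the edge-label identity are routine. (A degenerate single-vertex component carries no set-ordered graceful labelling in the sense of Definition~\ref{defn:proper-bipartite-labelling-ongraphs}, so we tacitly assume each $G_i$ has at least one edge.)
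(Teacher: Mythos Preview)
Your argument is correct, but it is organized differently from the paper's own proof. The paper does not pass through Theorem~\ref{defn:group-flawed-labellings} at all: it writes down, in one stroke, a single global labelling $g$ on the whole forest $G=\bigcup_i G_i$ by explicit shift formulas (the $X_i$'s are stacked in the order $X_1,X_2,\dots ,X_m$ into the interval $[0,M-1]$ with $M=\sum_i s_i$, and the $Y_i$'s are stacked in the reverse order $Y_m,Y_{m-1},\dots ,Y_1$ into $[M,M+N-1]$), and then computes the resulting edge-label blocks $g(E(G_i))$ directly, observing that together they cover $[1,\sum_i(p_i-1)]$ up to the ``missing'' values that would have been carried by an $E^*$. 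Your route is more modular: you prove an explicit two-tree merging lemma (this is essentially the content of Theorem~\ref{thm:two-graphs-link-identify} in the paper, made concrete with the shifts $+s_B$ on $A$ and $+p_A$ on $Y_B$ and with the bridging edge identified), iterate it to build a connected $H=E^*+G$ carrying a genuine set-ordered graceful labelling, and then invoke Theorem~\ref{defn:group-flawed-labellings} to restrict. Unrolling your induction recovers the same block structure the paper writes down directly; what your approach buys is cleaner bookkeeping and an explicit set $E^*$ with $|E^*|=m-1$, while the paper's approach avoids the induction and the appeal to Theorem~\ref{defn:group-flawed-labellings} but at the cost of heavier index gymnastics.
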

\begin{proof}
Suppose that $(X_i,Y_i)$: $X_i=\{x_{i,j}:~j\in[1,s_{i}]\}$, and $Y_i=\{y_{i,j}:~j\in[1,t_{i}]\}$. Let $M=\sum ^{m}_{k=1}s_{k}$ and $N=\sum ^{m}_{k=1}t_{k}$. Furthermore, $0\leq \max f_i(X_i)<\min f_i(x_{i,j+1})\leq s_{i}-1$ for $j\in[1,s_{i}]$ and $s_{i}\leq f_i(y_{i,j})<f_i(y_{i,j+1})\leq s_{i}+t_{i}-1=p_i-1$ for $j\in[1,t_{i}]$ with $i\in[1,m]$.

We set a new labelling $g$ as follows:

(1) $g(x_{1,j})=f_1(x_{1,j})$ with $j\in[1,s_{1}]$.

(2) $g(x_{i,j})=f_i(x_{i,j})+\sum ^{i-1}_{k=1}s_{k}$ for $j\in[1,s_{i}]$ and $i\in[2,m]$.

(3) $g(y_{m,j})=f_m(y_{m,j})+M-s_m$ with $j\in[1,t_{m}]$.

(4) $g(y_{\ell,j})=f_{\ell}(y_{\ell,j})+M+\sum ^{m-\ell}_{k=1}t_{k}$ for $j\in[1,t_{\ell}]$ and $\ell\in[1,m-1]$.

Clearly,
\begin{equation}\label{eqa:flawed-labellings-00}
0=g(x_{1,1})\leq g(u)<g(v)\leq g(y_{1,t_{1}})=M+N-1
\end{equation} for $u\in \bigcup^m_{i=1} X_i$ and $v\in \bigcup^m_{i=1} Y_i$.

Notice that $f_i(E(G_i))=[1,p_i-1]=\{f_{i}(y_{i,a})-f_{i}(x_{i,b}):~x_{i,b}y_{i,a}\in E(G_i)\}$. We compute edge labels $g(x_{i,b}y_{i,a})=g(y_{i,a})-g(x_{i,b})$ as follows: For each edge $x_{m,b}y_{m,a}\in E(G_m)$, we have
\begin{equation}\label{eqa:c3xxxxx}
{
\begin{split}
&\quad g(x_{m,b}y_{m,a})=g(y_{m,a})-g(x_{m,b})\\
&=[f_m(y_{m,a})+M-s_m]-\left [f_m(x_{m,b})+\sum ^{m-1}_{k=1}s_{k}\right ]\\
&=f_m(y_{m,a})-f_m(x_{m,b}),
\end{split}}
\end{equation}
so $g(E(G_m))=[1,p_m-1]$. Next, for each edge $x_{i,b}y_{i,a}\in E(G_i)$ with $i\in [1,m-1]$, we can compute
\begin{equation}\label{eqa:c3xxxxx}
{
\begin{split}
&\quad g(x_{i,b}y_{i,a})=g(y_{i,a})-g(x_{i,b})\\
&=\left [f_{i}(y_{i,j})+M+\sum ^{m-i}_{k=1}t_{k}\right ]-\left [f_i(x_{i,j})+\sum ^{i-1}_{k=1}s_{k}\right ]\\
&=f_{i}(y_{i,j})-f_i(x_{i,j})+\sum ^{m}_{k=i}s_{k}+\sum ^{m-i}_{k=1}t_{k}.
\end{split}}
\end{equation}
So, we obtain
\begin{equation}\label{eqa:c3xxxxx}
{
\begin{split}
&\quad g(E(G_i))=[1+M_i, p_i-1+M_i]
\end{split}}
\end{equation} where $M_i=\sum ^{m}_{k=i}s_{k}+\sum ^{m-i}_{k=1}t_{k}$. Thereby, we get
\begin{equation}\label{eqa:c3xxxxx}
g(E(G))=\bigcup ^m_{i=1}g(E(G_i))=\left [1,\bigcup^m_{i=1}(p_i-1)\right ]\setminus g(E^*)
\end{equation}

We claim that $g$ is a flawed set-ordered graceful labelling of the forest $G$.
\end{proof}

\begin{cor}\label{thm:flawed-graceful-labelling-graph}
Let $G$ be a disconnected graph with components $H_1,H_2,\dots ,H_m$, where each $H_i$ is a connected bipartite $(p_i,q_i)$-graph with bipartition $(X_i,Y_i)$ and admits a set-ordered graceful labelling $f_i$ holding $\max f_i(X_i)<\min f_i(Y_i)$ with $i\in [1,m]$. Then $G$ admits a flawed set-ordered graceful labelling.
\end{cor}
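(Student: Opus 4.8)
The plan is to reduce Corollary \ref{thm:flawed-graceful-labelling-graph} to Theorem \ref{thm:flawed-graceful-labelling-forest}, which has just been proved, rather than to redo the labelling construction from scratch. The only difference between the two statements is that the components $H_i$ are arbitrary connected bipartite graphs admitting set-ordered graceful labellings, instead of trees. But the construction in the proof of Theorem \ref{thm:flawed-graceful-labelling-forest} never used that each $G_i$ is a tree except to write $q_i=p_i-1$ and to name the edge label set $[1,p_i-1]$; the actual mechanism was purely a relabelling of the vertex classes $X_i$ and $Y_i$ by adding suitable constant shifts, so that all $X$-labels stay below all $Y$-labels and the blocks of edge-differences of the distinct components occupy disjoint consecutive integer intervals. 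So first I would restate, for the bipartite graph setting, that each $H_i$ has bipartition $(X_i,Y_i)$ with $|X_i|=s_i$, $|Y_i|=t_i$, $p_i=s_i+t_i$, $q_i=|E(H_i)|$, and that the set-ordered graceful labelling $f_i$ gives $f_i(X_i)\subseteq[0,q_i]$, $\max f_i(X_i)<\min f_i(Y_i)$, and $f_i(E(H_i))=[1,q_i]$.

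Next I would carry over verbatim the shift construction: define $M=\sum_{k=1}^m s_k$, $N=\sum_{k=1}^m t_k$, and set $g$ on $X_1$ equal to $f_1$, on $X_i$ ($i\ge 2$) equal to $f_i$ plus $\sum_{k=1}^{i-1}s_k$, on $Y_m$ equal to $f_m$ plus an appropriate constant pushing it above all $X$-labels, and on $Y_\ell$ ($\ell\le m-1$) equal to $f_\ell$ plus a constant built from $M$ and partial sums of the $t_k$. The required bookkeeping is that the shift applied to $Y_i$ minus the shift applied to $X_i$ equals the running total of block sizes of the earlier components, so that each $g(E(H_i))$ becomes a translate $[1+M_i,\,q_i+M_i]$ of the interval $[1,q_i]$ with $M_i$ chosen exactly so the intervals for $i=1,\dots,m$ tile an initial segment of the positive integers; this is the content of the three displayed computations in the proof above, and they go through identically with $p_i-1$ replaced by $q_i$. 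Then $g$ restricted to $X$'s all lies below $g$ restricted to $Y$'s (inequality \eqref{eqa:flawed-labellings-00} in the forest proof, with $M+N-1$ replaced by the analogous upper endpoint), which is precisely the set-ordered condition, and $g(V(G))$ is a set of distinct labels since within each component $f_i$ is injective and across components the shifts separate the ranges.

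Finally I would observe that the union of edge-label blocks is $g(E(G))=\bigcup_{i=1}^m g(E(H_i))=[1,\sum_{i=1}^m q_i]$ as a set; since $\sum_{i=1}^m q_i = |E(H)|-|E^*| = q$, this is exactly $[1,q]$, possibly described as $[1,\sum_{i=1}^m q_i]\setminus g(E^*)$ in the ambient connected graph $H=E^*+G$, which is the defining condition of a flawed set-ordered graceful labelling in the sense of Theorem \ref{defn:group-flawed-labellings}. Hence $g$ is a flawed set-ordered graceful labelling of $G$, proving the corollary.

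I do not expect a genuine obstacle here; the result is essentially a one-line corollary of Theorem \ref{thm:flawed-graceful-labelling-forest} once one notices that "tree" was never essential to that proof. The one place that needs a little care is the indexing of the constant shifts on the $Y_\ell$ blocks (getting the order of the components right so the translated edge-intervals abut without gaps or overlaps), and confirming that the set-ordered inequality $\max g(\bigcup X_i) < \min g(\bigcup Y_i)$ survives the shifts — but both are exactly the routine bookkeeping already done in the forest case, now with $q_i$ in place of $p_i-1$. If one wanted to be maximally economical, one could even phrase the whole proof as: "apply the construction in the proof of Theorem \ref{thm:flawed-graceful-labelling-forest} verbatim, reading $q_i$ for $p_i-1$ throughout; nothing in that argument used acyclicity of the components."
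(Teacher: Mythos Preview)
Your proposal is correct and matches the paper's approach: the paper states the corollary immediately after Theorem~\ref{thm:flawed-graceful-labelling-forest} with no separate proof, treating it as following from the same shift construction with $q_i$ in place of $p_i-1$, which is exactly what you outline. Your final one-line summary (``apply the construction in the proof of Theorem~\ref{thm:flawed-graceful-labelling-forest} verbatim, reading $q_i$ for $p_i-1$ throughout'') is precisely the intended argument.
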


If $T=\bigcup ^m_{i=1}T_i$ is a forest having disjoint trees $T_1,T_2,\dots ,T_m$. Does $T$ admit a flawed set-ordered graceful labelling if and only if each tree $T_i$ admits a set-ordered graceful labelling with $i\in [1,m]$? Unfortunately, we have a counterexample for this question shown in Fig.\ref{fig:a-counterexample}, in which $T=T_1\cup T_2$, also, $T=(T_1\ominus T_2)-uv$ admits flawed set-ordered graceful labellings, however, $T_i$ does not admit any set-ordered graceful labelling with $i=1,2$.

In \cite{Yao-Liu-Yao-2017}, the authors have proven the following mutually equivalent labellings:

\begin{thm} \label{thm:connections-several-labellings}
\cite{Yao-Liu-Yao-2017} Let $T$ be a tree on $p$ vertices, and let $(X,Y)$ be its
bipartition of vertex set $V(T)$. For all values of integers $k\geq 1$ and $d\geq 1$, the following assertions are mutually equivalent:

$(1)$ $T$ admits a set-ordered graceful labelling $f$ with $\max f(X)<\min f(Y)$.

$(2)$ $T$ admits a super felicitous labelling $\alpha$ with $\max \alpha(X)<\min \alpha(Y)$.

$(3)$ $T$ admits a $(k,d)$-graceful labelling $\beta$ with
$\beta(x)<\beta(y)-k+d$ for all $x\in X$ and $y\in Y$.

$(4)$ $T$ admits a super edge-magic total labelling $\gamma$ with $\max \gamma(X)<\min \gamma(Y)$ and a magic constant $|X|+2p+1$.

$(5)$ $T$ admits a super $(|X|+p+3,2)$-edge antimagic total labelling $\theta$ with $\max \theta(X)<\min \theta(Y)$.

$(6)$ $T$ has an odd-elegant labelling $\eta$ with $\eta(x)+\eta(y)\leq 2p-3$ for every edge $xy\in E(T)$.

$(7)$ $T$ has a $(k,d)$-arithmetic labelling $\psi$ with $\max \psi(x)<\min \psi(y)-k+d\cdot |X|$ for all $x\in X$ and $y\in Y$.

$(8)$ $T$ has a harmonious labelling $\varphi$ with $\max \varphi(X)<\min \varphi(Y\setminus \{y_0\})$ and $\varphi(y_0)=0$.
\end{thm}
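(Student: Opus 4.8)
The plan is to use assertion~$(1)$ as a hub and establish $(1)\Leftrightarrow(j)$ for each $j\in\{2,\dots,8\}$ by exhibiting, in each case, an explicit pair of mutually inverse relabelling formulas (the labelling types are as in Definition~\ref{defn:connections-several-labellings}). The first step is a \emph{normalization}: since $T$ has $p$ vertices and hence $q=p-1$ edges, a set-ordered graceful labelling $f$ with $\max f(X)<\min f(Y)$ is forced to be a bijection $V(T)\to[0,p-1]$ with $f(X)=[0,s-1]$ and $f(Y)=[s,p-1]$, where $s=|X|$; every edge $xy$ (with $x\in X$, $y\in Y$) has $f(xy)=f(y)-f(x)$, and these differences exhaust $[1,q]$. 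All constructions below are assembled from two elementary moves applied to this $f$: an affine rescaling $v\mapsto av+b$ of all vertex labels, and a one-sided dualization $x\mapsto c-f(x)$ on the part $X$ alone, which converts the well-controlled differences $f(y)-f(x)$ into well-controlled sums. The converse implication in each case is just the inverse formula, legitimate because the side condition attached to clause~$(j)$ pins the labelling down to exactly the canonical shape produced by the forward map; so only the forward directions are described.

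For the difference-type targets the maps are immediate. For $(3)$, put $\beta(x)=d\,f(x)$ on $X$ and $\beta(y)=d\,f(y)+k-d$ on $Y$, so $\beta(y)-\beta(x)=d\big(f(y)-f(x)\big)+k-d$ sweeps $S_{k,d}$ and $\beta(x)<\beta(y)-k+d$ reduces to $f(x)<f(y)$. For $(7)$, dualize on $X$: take $\psi(x)=d(s-1-f(x))$ on $X$ and $\psi(y)=d(f(y)-s)+k$ on $Y$, so $\psi(x)+\psi(y)=k+d\big(f(y)-f(x)-1\big)$ again sweeps $S_{k,d}$, and $\max\psi(X)<\min\psi(Y)-k+d|X|$ reduces to $d(s-1)<ds$. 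For $(2)$, dualize on $X$ but read the sums modulo~$q$: with $\alpha(x)=f(x)$ on $X$ and $\alpha(y)=(p-1)-f(y)+s$ on $Y$, the map $\alpha$ is a bijection onto $[0,p-1]$ with $\alpha(X)<\alpha(Y)$, and $\alpha(x)+\alpha(y)\equiv s+(p-1)-(f(y)-f(x))\pmod q$ runs through $[0,q-1]$, so $\alpha$ is a super set-ordered felicitous labelling.

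Clauses $(4)$ and $(5)$ come from one common vertex labelling. Shift $f$ into $[1,p]$ by $g'=f+1$, then dualize on $Y$ via $h(x)=g'(x)$ on $X$ and $h(y)=(p+s+1)-g'(y)$ on $Y$; then $h(V)=[1,p]$ and the edge vertex-sums $h(u)+h(v)$ form the $q$ consecutive integers $\{s+2,\dots,p+s\}$. Assigning the edge labels $[p+1,p+q]$ to the edges in \emph{decreasing} order of their vertex-sums makes $h(u)+h(uv)+h(v)$ constant, equal to $|X|+2p+1$, which gives $(4)$; assigning them in \emph{increasing} order of their vertex-sums makes the totals an arithmetic progression with first term $|X|+p+3$ and common difference~$2$, which gives the super $(|X|+p+3,2)$-edge antimagic total labelling of $(5)$. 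For $(6)$, double one side: $\eta(x)=2f(x)$ on $X$, $\eta(y)=2(q-f(y))+1$ on $Y$, so $\eta(x)+\eta(y)=2\big(q-(f(y)-f(x))\big)+1$ runs through $[1,2q-1]^o$ with no modular wraparound; thus $\eta$ is odd-elegant and $\eta(x)+\eta(y)\le 2q-1=2p-3$. For $(8)$, dualize on $X$ and reduce $Y$ modulo~$q$: $\varphi(x)=s-1-f(x)$ on $X$, $\varphi(y)=f(y)\bmod q$ on $Y$. The unique vertex $y_0$ with $f(y_0)=p-1=q$ collapses to the label~$0$, which is already taken by the vertex of $X$ with $f$-value $s-1$; this is exactly the single repeated vertex label permitted in a harmonious labelling of a tree. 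One checks $\varphi(y_0)=0$, $\max\varphi(X)=s-1<s=\min\varphi(Y\setminus\{y_0\})$, and $\varphi(x)+\varphi(y)\equiv(s-1)+(f(y)-f(x))\pmod q$ exhausts $[0,q-1]$; so $\varphi$ is the required harmonious labelling.

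The main obstacle is clause~$(8)$. A tree on $p$ vertices admits only $q=p-1$ harmonious vertex labels, so one repetition is forced; the delicate issue is to show the collision lands on a vertex the definition genuinely allows to be exceptional, and that reducing the $Y$-labels modulo~$q$ leaves the induced edge labels a bijection onto $[0,q-1]$ --- this is precisely where the set-ordered hypothesis enters. A secondary delicate point is checking that every prescribed side parameter (the magic constant $|X|+2p+1$, the antimagic pair $(|X|+p+3,2)$, and the various strict inequalities involving $k$ and $d$) emerges exactly as stated from the \emph{single} normalization above, so that each forward map is demonstrably invertible and the whole cycle of equivalences closes; these are routine affine computations but must be carried out uniformly. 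No ``non-super'' or ``non-set-ordered'' cases need separate treatment, since every clause carries a side condition that forces the canonical bijective form of $f$.
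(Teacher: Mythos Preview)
The paper does not contain its own proof of this theorem: it is quoted from \cite{Yao-Liu-Yao-2017} and merely stated. So there is nothing in this paper to compare against; your hub-and-spoke strategy through $(1)$ with explicit affine/dualizing formulas is exactly the standard way such equivalence packages are established, and your forward maps for $(2)$, $(3)$, $(4)$, $(5)$, $(6)$, $(8)$ all check out cleanly against Definition~\ref{defn:connections-several-labellings}.

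Two places deserve more care before the argument is complete. First, your construction for $(7)$ gives $\psi(X)=\{0,d,\ldots,(s-1)d\}$ and $\psi(Y)=\{k,k+d,\ldots\}$, and these can overlap when $d\mid k$ and $k<sd$: on the path $P_3$ with $X$ the two leaves, $s=2$, and $k=d=1$, you get $\psi(\text{leaf})=1=\psi(\text{center})$, so $\psi$ is not injective. In fact no $(1,1)$-arithmetic labelling of $P_3$ satisfies the side inequality in $(7)$ for that ordering of the bipartition, so either the theorem carries an implicit convention about which part is called $X$ (and then you must say so and show your formula works for that choice), or the range constraint on $(k,d)$ is sharper than stated. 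Second, your converse directions are not literally ``the inverse formula'': a general $(k,d)$-graceful $\beta$ satisfying the inequality in $(3)$ need not have $\beta(X)\subseteq d\mathbb{Z}$, so you cannot simply divide by $d$. What does work is to rank $\beta$ separately on $X$ and on $Y$ and use those ranks as $f$; you should make that explicit rather than assert invertibility.
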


We have some results similarly with that in Theorem \ref{thm:connections-several-labellings} about flawed graph labellings as follows:

\begin{thm} \label{thm:connection-flawed-labellings}
Suppose that $T=\bigcup ^m_{i=1}T_i$ is a forest having disjoint trees $T_1,T_2,\dots ,T_m$, and $(X,Y)$ be its vertex
bipartition. For all values of integers $k\geq 1$ and $d\geq 1$, the following assertions are mutually equivalent:
\begin{asparaenum}[F-1. ]
\item $T$ admits a flawed set-ordered graceful labelling $f$ with $\max f(X)<\min f(Y)$;
\item $T$ admits a flawed set-ordered odd-graceful labelling $f$ with $\max f(X)<\min f(Y)$;

\item $T$ admits a flawed set-ordered elegant labelling $f$ with $\max f(X)<\min f(Y)$;

\item $T$ has a flawed odd-elegant labelling $\eta$ with $\eta(x)+\eta(y)\leq 2p-3$ for every edge $xy\in E(T)$.

\item $T$ admits a super flawed felicitous labelling $\alpha$ with $\max \alpha(X)<\min \alpha(Y)$.

\item $T$ admits a super flawed edge-magic total labelling $\gamma$ with $\max \gamma(X)<\min \gamma(Y)$ and a magic constant $|X|+2p+1$.
\item $T$ admits a super flawed $(|X|+p+3,2)$-edge antimagic total labelling $\theta$ with $\max \theta(X)<\min \theta(Y)$.

\item $T$ has a flawed harmonious labelling $\varphi$ with $\max \varphi(X)<\min \varphi(Y\setminus \{y_0\})$ and $\varphi(y_0)=0$.
\end{asparaenum}
\end{thm}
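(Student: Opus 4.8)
The plan is to funnel every equivalence through the single statement F-1, exploiting the fact that a \emph{flawed} set-ordered labelling of the forest $T=\bigcup^{m}_{i=1}T_i$ is, by the very definition of ``flawed'' (cf. Theorem \ref{defn:group-flawed-labellings}), nothing but an \emph{honest} labelling of a connected supergraph $T^{*}=T+E^{*}$ restricted back to $T$ by deleting the edge labels carried by $E^{*}$. First I would record that this completing graph may always be taken to be a \emph{tree} on $p=|V(T)|$ vertices whose bipartition is exactly the prescribed $(X,Y)$: the $m-1$ phantom edges of $E^{*}$ can be inserted one per ``gap'' between consecutive components and chosen to run between $X$ and $Y$ (each component carrying an edge meets both parts), so that $T^{*}$ is a bipartite $(p,p-1)$-tree and the condition $\max f(X)<\min f(Y)$ for a flawed labelling of $T$ coincides with the corresponding set-ordered condition for an honest labelling of $T^{*}$. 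It is essential that the argument runs through $T^{*}$ rather than through the pieces $T_i$: the counterexample in Fig.\ref{fig:a-counterexample} shows $T$ may be flawed-set-ordered-graceful while no $T_i$ admits a set-ordered graceful labelling, so $T^{*}$ — not the components — is the correct intermediary.

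For the forward direction F-1 $\Rightarrow$ F-$k$ I would take a completing tree $T^{*}$ as above carrying an honest set-ordered graceful labelling $f^{*}$ with $\max f^{*}(X)<\min f^{*}(Y)$. Since $T^{*}$ is a tree on $p$ vertices with the same part $X$, Theorem \ref{thm:connections-several-labellings} applies to $T^{*}$ verbatim and produces, on $T^{*}$, honest instances of a super felicitous labelling, an odd-elegant labelling with $\eta(x)+\eta(y)\le 2p-3$, a super edge-magic total labelling with magic constant $|X|+2p+1$, a super $(|X|+p+3,2)$-edge antimagic total labelling, and a harmonious labelling with $\varphi(y_0)=0$, each with the stated set-ordered condition; for F-2 and F-3 (set-ordered odd-graceful and set-ordered elegant), which are not among the items of Theorem \ref{thm:connections-several-labellings}, I would apply the standard bipartite transformation $x\mapsto 2f^{*}(x)$ on $X$, $y\mapsto 2f^{*}(y)-1$ on $Y$ (respectively the analogous modular shift producing an elegant labelling) directly to $f^{*}$ on $T^{*}$. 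Restricting each of these honest labellings of $T^{*}$ back to $T$ — keeping all vertex labels and discarding exactly the $m-1$ edge labels of $E^{*}$ — then yields the flawed labellings F-2 through F-8: the edge-label interval (respectively the odd interval, respectively the total interval $[1,p+q_{T^{*}}]$) loses precisely $m-1$ values, which is the defining shape of a flawed labelling, while the constants $|X|+2p+1$ and $|X|+p+3$ depend only on $p$ and $|X|$, which are the same for $T$ and $T^{*}$.

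The reverse direction F-$k$ $\Rightarrow$ F-1 is symmetric: a flawed $\alpha_k$-labelling of $T$ lifts to an honest $\alpha_k$-labelling of some completing tree $T^{*}_{k}$; for $k\in\{4,5,6,7,8\}$ the implications (6)$\Rightarrow$(1), (2)$\Rightarrow$(1), (4)$\Rightarrow$(1), (5)$\Rightarrow$(1), (8)$\Rightarrow$(1) of Theorem \ref{thm:connections-several-labellings} upgrade it to a set-ordered graceful labelling of $T^{*}_{k}$, and for $k\in\{2,3\}$ the inverse of the transformation above does the same; restricting to $T$ gives F-1. Combining the two directions gives F-1 $\Leftrightarrow$ F-$k$ for every $k$, hence the eight assertions are mutually equivalent.

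The main obstacle, and the only part requiring genuine care rather than bookkeeping, is the step-one normalization together with the ``restriction respects the definition'' checks: one must verify that a completing \emph{tree} with the \emph{prescribed} bipartition exists for every flawed labelling in play, and that after removing the $m-1$ phantom edge labels the surviving label sets still satisfy the exact defining constraints of each named flawed labelling — in particular for the ``super'' and magic-constant variants F-5, F-6, F-7, and for the elegant/odd-elegant cases F-3, F-4, where the modular edge rule makes the transformation the fiddliest computation. Beyond that, the proof is a transcription of the proof of Theorem \ref{thm:connections-several-labellings} with an $(m-1)$-element hole carried along inside the edge-label set.
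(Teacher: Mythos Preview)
The paper does not supply a proof of this theorem; it simply states the result as being ``similar'' to Theorem~\ref{thm:connections-several-labellings} and moves on. Your plan---complete $T$ to a tree $T^{*}=T+E^{*}$ on the same vertex set and bipartition, invoke Theorem~\ref{thm:connections-several-labellings} on $T^{*}$, then delete the $m-1$ phantom edge labels to obtain the flawed versions---is precisely the argument the paper intends by that remark, and it is the correct one. Your observation that one must work through $T^{*}$ rather than through the individual $T_i$ (because of the counterexample in Fig.~\ref{fig:a-counterexample}) is exactly right and is the one genuinely non-formal point.

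Two small remarks. First, items F-2 (set-ordered odd-graceful) and F-3 (set-ordered elegant) are not among the eight equivalents in Theorem~\ref{thm:connections-several-labellings}; you handle F-2 with the standard doubling map $x\mapsto 2f^{*}(x)$, $y\mapsto 2f^{*}(y)-1$, which is fine, but for F-3 you should spell out the transformation and its inverse rather than gesture at ``the analogous modular shift'', since the paper never actually defines an elegant labelling. Second, in the reverse direction you assume the completing graph attached to a given flawed labelling is a tree; for F-6 and F-7 the stated constants $|X|+2p+1$ and $|X|+p+3$ force $q_{T^{*}}=p-1$ and hence force a tree, but for F-2 through F-5 and F-8 you should say explicitly why (or that) the completing graph may be taken to have exactly $m-1$ extra edges---this is the ``normalization'' you flag as the main obstacle, and it deserves one line of justification rather than being left as a caveat.
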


We present some equivalent definitions with parameters $k,d$ for flawed $(k,d)$-labellings.

\begin{thm} \label{thm:flawed-(k,d)-labellings}
Let $T=\bigcup ^m_{i=1}T_i$ be a forest having disjoint trees $T_1,T_2,\dots ,T_m$, and $V(T)=X\cup Y$. For all values of two integers $k\geq 1$ and $d\geq 1$, the following assertions are mutually equivalent:
\begin{asparaenum}[P-1. ]
\item $T$ admits a flawed set-ordered graceful labelling $f$ with $\max f(X)<\min f(Y)$.

\item $T$ admits a flawed $(k,d)$-graceful labelling $\beta$ with $\max \beta(x)<\min \beta(y)-k+d$ for all $x\in X$ and $y\in Y$.

\item $T$ has a flawed $(k,d)$-arithmetic labelling $\psi$ with $\max \psi(x)<\min \psi(y)-k+d\cdot |X|$ for all $x\in X$ and $y\in Y$.
\item $T$ has a flawed $(k,d)$-harmonious labelling $\varphi$ with $\max \varphi(X)<\min \varphi(Y\setminus \{y_0\})$ and $\varphi(y_0)=0$.
\end{asparaenum}
\end{thm}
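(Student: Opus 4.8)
The plan is to fix two integers $k\ge 1$ and $d\ge 1$ once and for all and to prove the four statements equivalent by establishing P-1 $\Leftrightarrow$ P-2, P-1 $\Leftrightarrow$ P-3 and P-1 $\Leftrightarrow$ P-4 separately. The organizing device is the \emph{completion tree}: by (the proof of) Theorem~\ref{thm:flawed-graceful-labelling-forest}, the forest $T=\bigcup_{i=1}^m T_i$ carries a flawed set-ordered graceful labelling if and only if there is an edge set $E^*$ of $m-1$ edges joining the $m$ components so that $H=T+E^*$ is a tree on $p=|V(T)|$ vertices admitting an (ordinary) set-ordered graceful labelling $\bar f$; and, by the general ``flawed'' mechanism of Theorem~\ref{defn:group-flawed-labellings}, the same statement holds verbatim with ``set-ordered graceful'' replaced by ``$(k,d)$-graceful'', ``$(k,d)$-arithmetic'' or ``$(k,d)$-harmonious''. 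Since deleting the $m-1$ edges of $E^*$ does not touch any vertex label, every side condition appearing in P-1--P-4 (for instance $\max\beta(x)<\min\beta(y)-k+d$ or $\max\psi(x)<\min\psi(y)-k+d\cdot|X|$) holds on $T$ precisely when it holds on the completion tree $H$. Consequently ``$T$ admits a flawed $\alpha$-labelling with the stated side condition'' is equivalent to ``some completion tree $H$ of $T$ admits an ordinary $\alpha$-labelling with the stated side condition'', and the desired equivalences P-1 $\Leftrightarrow$ P-2 $\Leftrightarrow$ P-3 follow immediately from the already-proved single-tree equivalences $(1)\Leftrightarrow(3)\Leftrightarrow(7)$ of Theorem~\ref{thm:connections-several-labellings} applied to $H$.

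To make this concrete (and because it is what is really needed for P-4), I would record the explicit relabellings. Starting from a flawed set-ordered graceful labelling $f$ of $T$ with $f(X)=[0,s-1]$, $f(Y)=[s,p-1]$ ($s=|X|$) and edge label set $[1,p-1]\setminus D$, where $D$ is the $(m-1)$-element set of labels carried by the deleted edges, one obtains a flawed $(k,d)$-graceful labelling by $\beta(x)=d\,f(x)$ for $x\in X$ and $\beta(y)=k+d\bigl(f(y)-1\bigr)$ for $y\in Y$, which turns each edge label into $k+d(e-1)$ with $e=f(y)-f(x)$, hence into the full $(k,d)$-set $\{k,k+d,\dots,k+(p-2)d\}$ of $H$ minus the flaw $\{k+d(e-1):e\in D\}$; and a flawed $(k,d)$-arithmetic labelling by the reflection $\psi(x)=d\bigl(s-1-f(x)\bigr)$ on $X$ and $\psi(y)=k+d\bigl(f(y)-s\bigr)$ on $Y$, for which $\psi(x)+\psi(y)=k+d(e-1)$ again. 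The reverse implications run the same formulas backwards; the only nonformal point is that a $(k,d)$-graceful (resp. $(k,d)$-arithmetic) labelling satisfying its set-ordered side condition forces the vertex labels on $X$ to be $\equiv 0$ and those on $Y$ to be $\equiv k\pmod d$, so that division by $d$ returns an integer-valued flawed set-ordered graceful labelling --- this is exactly the congruence observation used in the single-tree proof of Theorem~\ref{thm:connections-several-labellings}.

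The step I expect to be the real obstacle is P-1 $\Leftrightarrow$ P-4, because Theorem~\ref{thm:connections-several-labellings} supplies a harmonious (item $(8)$) but not a general $(k,d)$-harmonious counterpart, so this equivalence must be proved directly rather than merely quoted. Here the edge label of $uv$ is $\varphi(uv)=k+\bigl(\varphi(u)+\varphi(v)-k\ \mathrm{mod}\ q\,d\bigr)$, and I would take $q=p-1$ (the edge count of the completion tree $H$, not of $T$) and build $\varphi$ from $f$ by $\varphi(x)=f(x)$ on $X$, $\varphi(y)=f(y)-s$ on $Y$, then rescale by $d$ and shift by $k$ as above; the point to check is that reduction modulo $(p-1)d$ still yields all of $\{k,k+d,\dots,k+(p-2)d\}$ on $H$ (hence that set minus the image of $D$ on $T$), with the single legitimate collision occurring exactly at the vertex one then names $y_0$ with $\varphi(y_0)=0$, and with no spurious collisions among the non-flawed edge labels. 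Verifying that this bookkeeping survives both the modular wrap-around and the deletion of the $E^*$-edges --- i.e. that the ``missing'' labels are exactly the images of $D$ and that $y_0$ can be chosen inside $Y$ so that $\max\varphi(X)<\min\varphi(Y\setminus\{y_0\})$ still holds --- is where the care is needed; once it is done, combining with P-1 $\Leftrightarrow$ P-2 $\Leftrightarrow$ P-3 finishes the proof.
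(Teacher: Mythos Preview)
The paper does not actually supply a proof of this theorem; it is stated immediately after Theorem~\ref{thm:connections-several-labellings} with the prefatory remark ``We have some results similarly with that in Theorem~\ref{thm:connections-several-labellings} about flawed graph labellings as follows,'' and no argument is given. Your approach---passing to a completion tree $H=T+E^*$, invoking the single-tree equivalences $(1)\Leftrightarrow(3)\Leftrightarrow(7)$ of Theorem~\ref{thm:connections-several-labellings} for P-1\,$\Leftrightarrow$\,P-2\,$\Leftrightarrow$\,P-3, and then handling P-4 by a direct construction---is precisely the argument the paper's placement and phrasing suggest, and your explicit relabelling formulas are correct specializations of the transformations behind Theorem~\ref{thm:connections-several-labellings}.

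Two minor cautions. First, Theorem~\ref{defn:group-flawed-labellings} is phrased only as ``if $H$ admits $\alpha$ then $G$ admits flawed $\alpha$''; you are tacitly using the converse (a flawed $\alpha$-labelling of $G$ \emph{is by definition} the restriction of an $\alpha$-labelling of some completion $H$), which is how the paper uses the term throughout but is worth stating explicitly. Second, your instinct about P-4 is right: the $(k,d)$-harmonious case is not among the items of Theorem~\ref{thm:connections-several-labellings}, so it cannot simply be quoted, and the modular bookkeeping you describe (checking that the wrap-around modulo $(p-1)d$ produces exactly one collision at $y_0$ and that the flaw set is the image of $D$) is genuinely the content that has to be supplied. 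Once that verification is written out cleanly, your proof is complete and is more than the paper itself provides.
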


\begin{rem}\label{thm:problem-00}
It is interesting to discover new (flawed)  $(k,d)$-labellings for making sequence labellings.
\end{rem}

\subsection{Rotatable labellings}

\begin{defn}\label{defn:mf-graceful-mf-odd-graceful}
$^*$ For any vertex $u$ of a connected and bipartite $(p,q)$-graph $G$, there exist a vertex labelling $f:V(G)\rightarrow [0,q]$ (or $f:V(G)\rightarrow [0,2q-1]$) such that (i) $f(u)=0$; (ii) $f(E(G))=\{f(xy)=|f(x)-f(y)|: ~xy\in E(G)\}=[1,q]$ (or $f(E(G))=[1,2q-1]^o$); (iii) the bipartition $(X,Y)$ of $V(G)$ holds $\max f(X)<\min f(Y)$. Then we say $G$ admits a \emph{$0$-rotatable set-ordered system of (odd-)graceful labellings}, abbreviated as \emph{$0$-rso-graceful system} (\emph{$0$-rso-odd-graceful system}).\qqed
\end{defn}

We can develop Definition \ref{defn:mf-graceful-mf-odd-graceful} to other definitions of (flawed) $0$-rotatable set-ordered system of $\varepsilon$-labellings. However, many Hanzi-graphs (resp. general graphs) do not admit a $0$-rso-graceful system (or $0$-rso-odd-graceful system), even simpler tree-like Hanzi-graphs, for example, a Hanzi-graph made by the Hanzi $H_{4043}$ does not admit a $0$-rso-graceful system. For real application, we can do some operations on those Hanzi-graphs refusing $0$-rso-graceful systems (or $0$-rso-odd-graceful systems). In Fig.\ref{fig:mf-labelling-11}, we can show that (a-1), (a-2), (b-1) and (b-2) admit $0$-rso-graceful systems or $0$-rso-odd-graceful systems.

\begin{figure}[h]
\centering
\includegraphics[height=4.8cm]{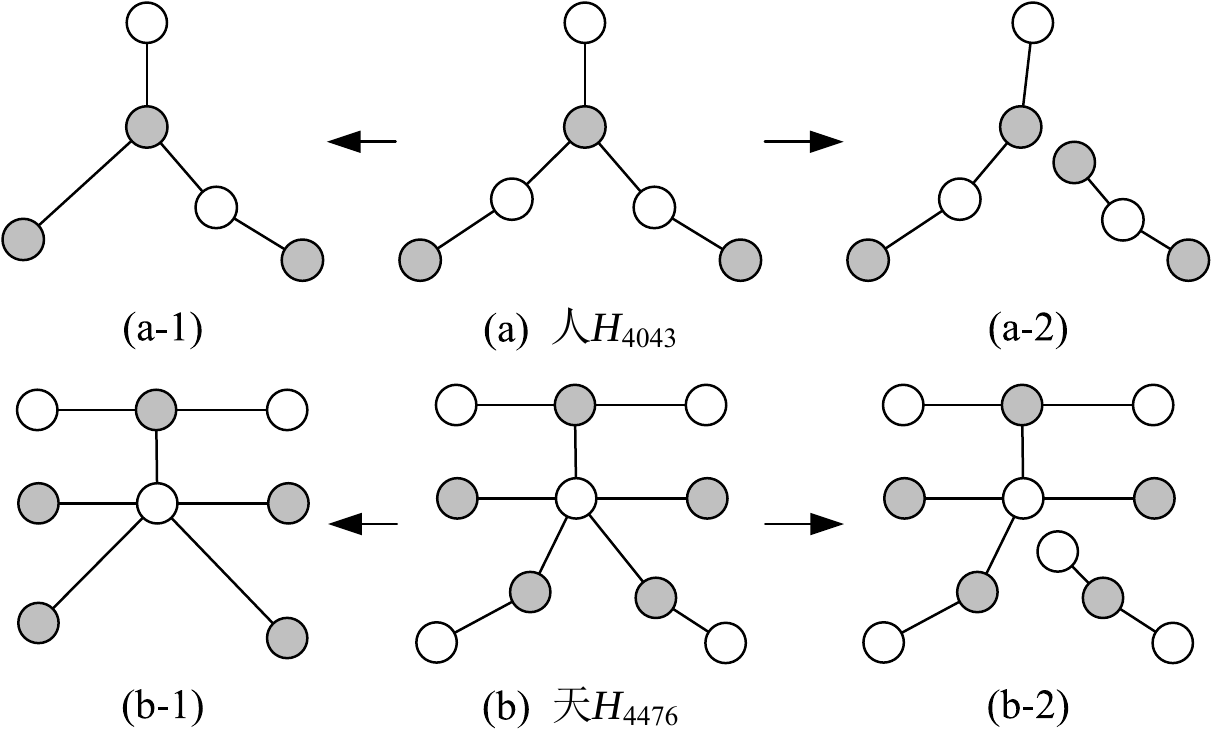}\\
\caption{\label{fig:mf-labelling-11} {\small (a) and (b) do not admit $0$-rso-graceful systems, but (a-1), (a-2), (b-1) and (b-2) admit $0$-rso-graceful systems.}}
\end{figure}

\begin{lem}\label{thm:symmetric-tree}
If a tree $T$  admits a $0$-rotatable system of (odd-)graceful labellings, then its symmetric tree $T\ominus T'$ admits a $0$-rotatable set-ordered system of (odd-)graceful labellings.
\end{lem}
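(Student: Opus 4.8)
The task is to produce, for \emph{every} vertex $w$ of the symmetric tree $S=T\ominus T'$, an explicit set-ordered (odd-)graceful labelling $g$ of $S$ with $g(w)=0$. First I would fix notation: write $T$ as a $(p,q)$-tree with $p=q+1$ and bipartition $(X,Y)$, $|X|=s$, $|Y|=t$, $s+t=p$; let $(X',Y')$ be the parts of the copy $T'$ and let $c,c'$ be the coincident/linked vertices, with $c\in X$ after possibly renaming the parts. The decisive structural remark is that, $S$ being obtained from the two copies by a \emph{single} link, the bipartition of $S$ is the parity-flipped pair $\bigl(X\cup Y',\ Y\cup X'\bigr)$, whose two parts have equal size $p$. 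This is exactly why the lemma can hold: a set-ordered graceful labelling of $S$ restricts on the copy $T$ only to an $(X,Y)$-monotone labelling, which need not be graceful, so $S$ can be set-orderably graceful even when $T$ is not. I would also note that swapping the two copies is an automorphism $\sigma$ of $S$ interchanging the two parts of this bipartition, and that $g\mapsto (2p-1)-g$ turns a set-ordered (odd-)graceful labelling into one with the opposite orientation permitted by Definition~\ref{defn:proper-bipartite-labelling-ongraphs}(\ref{Set-ordered}); combining these symmetries it suffices to treat $w\in V(T)$.

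So fix $w\in V(T)$. The plan is a four-block assignment of the label interval: place $X$ on the top block of the lower half, $Y'$ on the bottom block of the lower half, $Y$ on the bottom block of the upper half, and $X'$ on the top block of the upper half, so that $g(X\cup Y')<g(Y\cup X')$ holds automatically (set-orderedness). A short computation then shows that the edges inside the copy $T$ carry differences lying in $[1,q]$, the edges inside the copy $T'$ carry differences lying in $[q+2,\,|E(S)|]$, and the link $cc'$ can be forced to carry the single leftover value $q+1$ by arranging $g(c')=g(c)+p$; hence the edge-label set is exactly $[1,|E(S)|]$. What remains is to exhibit, on the copy $T$, an $(X,Y)$-monotone labelling landing in the designated blocks with $w\mapsto 0$ and edge-label set $[1,q]$, and similarly on the copy $T'$; here I would feed in the members of the given $0$-rotatable system of $T$ — a graceful labelling $f_w$ with $f_w(w)=0$ on the first copy (using that the vertex of $T$ labelled $q$ is then forced to be a neighbour of $w$, which helps place $0$'s neighbour inside its block), and a reflected, translated copy of a graceful labelling $f_c$ with $f_c(c)=0$ on the second copy. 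The odd-graceful case is identical after replacing every interval $[1,m]$ by $[1,2m-1]^{o}$, the reflection $v\mapsto q-f(v)$ by $v\mapsto(2q-1)-f(v)$, and the block lengths and shifts by their odd analogues; nothing else changes. Letting $w$ range over $V(S)$ then gives the asserted $0$-rotatable set-ordered system.

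The step I expect to be the main obstacle is precisely the last one: converting a \emph{graceful} labelling of $T$ into an $(X,Y)$-monotone labelling with the prescribed edge-label interval. A graceful labelling of $T$ need not be set-ordered, it cannot in general be ``stretched'' to realize a shifted edge-label interval, and a naive transport or a mere sort-by-value of its vertex labels destroys the distinctness of the $q$ edge differences — so one cannot simply rescale or reshuffle $f_w$. The content that must be exploited is that the symmetric doubling supplies the missing room, and that the hypothesis provides not one but a whole \emph{rotatable} family of graceful labellings; the delicate point is to show that, for every position of $w$ — with the genuinely distinct subcases $w=c$, $w\in X\setminus\{c\}$ and $w\in Y$, in which the block forced to contain $0$ and the placement of the $q$-labelled neighbour of $w$ differ — one can pick the right members of the system for the two copies and interleave them against the bipartition so that the two edge-label ranges abut without overlap. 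That compatibility check is the technical heart of the argument; once it is settled, the injectivity of $g$ and the verification that the edge labels exhaust $[1,|E(S)|]$ are routine bookkeeping.
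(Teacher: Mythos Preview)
Your setup is correct: the bipartition of $S$ is $(X\cup Y',\,Y\cup X')$, the swap automorphism and the dual reduce the problem to $w\in V(T)$, and the target edge-label split $[1,q]\cup\{q+1\}\cup[q+2,2q+1]$ is exactly what one wants. The gap is in your four-block plan. Forcing $X$, $Y'$, $Y$, $X'$ into four \emph{disjoint} sub-intervals is equivalent to asking that the labelling used on the copy $T$ be set-ordered with respect to $(X,Y)$; you notice this yourself when you say that a graceful labelling ``need not be set-ordered'' and that one ``cannot simply rescale or reshuffle $f_w$.'' The rotatable hypothesis gives you no such set-ordered member, so the four-block scheme cannot be completed as stated, and the ``compatibility check'' you postpone is not a bookkeeping step but the whole difficulty.

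The paper's construction sidesteps this entirely by \emph{not} separating $X$ from $Y'$ (nor $Y$ from $X'$). Given a graceful labelling $f$ of $T$ with $f(w)=0$, set
\[
g(x)=f(x)\ (x\in X),\quad g(y')=f(y)\ (y'\in Y'),\quad g(y)=f(y)+p\ (y\in Y),\quad g(x')=f(x)+p\ (x'\in X').
\]
Then $g(X\cup Y')=f(V(T))=[0,p-1]$ and $g(Y\cup X')=[p,2p-1]$, so $g$ is set-ordered regardless of whether $f$ was. For an edge $xy\in E(T)$ with $|f(x)-f(y)|=k$, one computes $g(xy)=p+\bigl(f(y)-f(x)\bigr)$ and $g(x'y')=p-\bigl(f(y)-f(x)\bigr)$; these two values are $p+k$ and $p-k$ in some order, and as $k$ ranges over $[1,p-1]$ they fill $[1,2p-1]\setminus\{p\}$, while the bridge $x_0x_0'$ always gets $|g(x_0)-g(x_0')|=p$. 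So no block structure and no second labelling $f_c$ are needed: the same $f$ is used on both copies, merely shifted by $p$ on the ``upper'' bipartition class, and the edge labels of the two copies are automatically complementary about $p$. Replace $p$ by $2p-1$ and $k$ by $2k-1$ for the odd-graceful version.
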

\begin{proof} Let $f$ be a graceful labelling of a tree $T$ having $p$ vertices, and $(X,Y)$ be the bipartition of vertex set of $T$, so $V(T)=X\cup Y$. We take a copy $T'$ of $T$, correspondingly, $(X',Y')$ is the bipartition of vertex set of $T'$, namely, $V(T)=V(T')$ and $X=X',~Y=Y'$. We join a vertex $x_0$ of $T$ with its image $x'_0$ of $T'$ by an edge, and the resultant tree is just a  symmetric tree $T\ominus T'$ with its vertex bipartition $V(T\ominus T')=(X\cup Y',Y\cup X')$. Next, we define a labelling $g$ for the tree $T\ominus T'$ in the way: $g(x)=f(x)$ with $x\in X\subset X\cup Y'$; and  for each $w'\in Y'$ we set $g(w')=f(w)$ with $w\in Y$ where $w'$ is the image of $w$;  for each $x'\in X'$ we let $g(x')=f(x)+p$ with $x\in X$, where $x'$ is the image of $x$; and for each $y'\in Y\subset \cup X'$ we have $g(y')=f(y)+p$ with $y\in Y$. Clearly, $g(X\cup Y')=f(V(T))$ and $g(Y\cup X')=\{f(w)+p:~w\in V(T)\}$ prove that  $\max g(X\cup Y')<\min g(Y\cup X')$. Furthermore, $g(E(T'))=[1,p-1]$, $g(x_0x'_0)=p$ and $g(E(T))=[1+p,2p-1]$. Thereby, we claim that $g$ is a set-ordered graceful labelling of the tree $T\ominus T'$.

By Definition \ref{defn:mf-graceful-mf-odd-graceful}, each vertex $u$ of the  tree $T$  admitting a $0$-rotatable system of (odd-)graceful labellings can be labelled with $f(u)=0$ by some graceful labelling $f$ of $T$. Thereby, this  vertex $u$ (or its image $u'$) can be labelled with $g(u)=0$ by some set-ordered graceful labelling $g$ of the tree $T\ominus T'$. The proof about odd-graceful labellings is very similar with that above, we omit it.
\end{proof}

By Lemma \ref{thm:symmetric-tree}, we can prove the following results (see an example shown in Fig.\ref{fig:a-counterexample}):

\begin{thm}\label{thm:0-rotatable-set-ordered-system00}
Suppose that a connected and bipartite $(p,q)$-graph $G$ admits a $0$-rotatable set-ordered system of (odd-)graceful labellings. Then the edge symmetric graph $G\ominus G'$ admits a $0$-rotatable set-ordered system of (odd-)graceful labellings too, where $G'$ is a copy of $G$, and $G\ominus G'$ is obtained by joining a vertex of $G$ with its image in $G'$ by an edge.
\end{thm}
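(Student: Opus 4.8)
The plan is to follow the proof of Lemma~\ref{thm:symmetric-tree}, upgrading it from trees to arbitrary connected bipartite $(p,q)$-graphs. Write $(X,Y)$ for the bipartition of $G$; since the bipartition of a connected bipartite graph is unique up to interchanging the two classes, I may assume the joined vertex $c$ of $G$ (with image $c'$ in $G'$) lies in $X$. Then $G\ominus G'$ is connected and bipartite with bipartition $(X\cup Y',\, Y\cup X')$, has $2p$ vertices and $2q+1$ edges, and carries the ``swap'' automorphism $\sigma$ interchanging each vertex of $G$ with its image in $G'$; composing any (set-ordered) (odd-)graceful labelling of $G\ominus G'$ with $\sigma$ yields another one of the same kind, because $\sigma$ permutes the edge set and merely exchanges the two bipartition classes, and ``set-ordered'' does not depend on which class is listed first. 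Hence, by Definition~\ref{defn:mf-graceful-mf-odd-graceful}, it suffices to construct, for every vertex $z\in V(G)$, a set-ordered (odd-)graceful labelling $g$ of $G\ominus G'$ with $g(z)=0$; the vertices lying in $G'$ are then handled by pre-composing such a labelling with $\sigma$.

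For the graceful case, suppose first $z\in X$. By hypothesis $G$ has a set-ordered graceful labelling $f$ with $f(z)=0$ and $\max f(X)<\min f(Y)$; extend it to $G'$ by $f(v'):=f(v)$. Put $g(x)=f(x)$, $g(y')=f(y)$ for $x\in X$, $y\in Y$, and $g(y)=f(y)+(q+1)$, $g(x')=f(x)+(q+1)$ for $y\in Y$, $x\in X$. Then the class $X\cup Y'$ receives exactly $f(V(G))\subseteq[0,q]$ and the class $Y\cup X'$ receives $f(V(G))+(q+1)\subseteq[q+1,2q+1]$, so the $2p$ vertex labels are distinct, lie in $[0,2q+1]$, and $g$ is set-ordered with $\max g(X\cup Y')=q<q+1=\min g(Y\cup X')$. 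A direct check shows the edges of $G$ get the labels $[q+2,2q+1]$, the edges of $G'$ get $[1,q]$, and the joining edge $cc'$ gets $q+1$, so the edge label set is exactly $[1,2q+1]$; moreover $g(z)=f(z)=0$. If instead $z\in Y$ (the class not containing $c$), I take a set-ordered graceful labelling $\tilde f$ of $G$ with $\tilde f(z)=0$, so $\max\tilde f(Y)<\min\tilde f(X)$, and run the same construction with the roles of the two classes interchanged; this again yields a set-ordered graceful labelling of $G\ominus G'$ sending $z$ to $0$.

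The odd-graceful case is parallel, but the shift amounts must respect parity. The key observation is that in any set-ordered odd-graceful labelling of a connected bipartite graph one class carries only even labels and the other only odd labels: every edge label $f(y)-f(x)$ is odd, so by connectivity the two classes have opposite parity, and $0\in f(X)$ forces $f(X)$ to be the even class (so $\max f(X)\le 2q-2$, $\max f(Y)=2q-1$, $\min f(Y)\ge1$). Taking $f$ with $f(z)=0$ for $z\in X$ and $f(v'):=f(v)$, I put $g(x)=f(x)$, $g(y')=f(y)+1$ (these make up the even class $X\cup Y'$) and $g(y)=f(y)+(2q+2)$, $g(x')=f(x)+(2q+1)$ (the odd class $Y\cup X'$). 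A short computation gives: the edges of $G$ get $[2q+3,4q+1]^o$, the edges of $G'$ get $\{\,2q-(f(y)-f(x)):xy\in E(G)\,\}=[1,2q-1]^o$, and $cc'$ gets $2q+1$, so the edge labels are exactly $[1,4q+1]^o$; the vertex labels are distinct, contained in $[0,4q+1]$, and satisfy $\max g(X\cup Y')=2q<2q+1=\min g(Y\cup X')$. Thus $g$ is a set-ordered odd-graceful labelling with $g(z)=0$; the mirror construction for $z\in Y$ and the reduction to $G'$ via $\sigma$ are exactly as before. Combining the two cases, $G\ominus G'$ admits a $0$-rotatable set-ordered system of (odd-)graceful labellings.

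The step I expect to be the main obstacle is the odd-graceful bookkeeping: one must verify simultaneously that the three blocks of odd labels tile $[1,4q+1]^o$, that every vertex label stays inside $[0,4q+1]$ (this is what pins down the shifts $2q+2$, $2q+1$, $1$, using $\max f(X)\le 2q-2$ and $\min f(Y)\ge1$), and that no two vertices collide; the parity observation above is what makes all three hold at once. A minor but indispensable point throughout is that the property ``set-ordered'' is insensitive to which bipartition class is named first, which is exactly what legitimizes both the normalization $c\in X$ and the mirror construction invoked when the target vertex $z$ lies in the class opposite to $c$.
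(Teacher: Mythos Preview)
Your proof is correct and follows essentially the same approach as the paper: the paper proves the theorem by a direct appeal to Lemma~\ref{thm:symmetric-tree}, whose construction (shift one bipartition class by $p=q+1$) is exactly your graceful construction, upgraded verbatim from trees to general connected bipartite $(p,q)$-graphs. You are in fact more thorough than the paper---you supply the odd-graceful shifts explicitly (the paper omits that case), you make the swap automorphism $\sigma$ explicit to cover vertices of $G'$, and you spell out the $z\in Y$ mirror case---but the underlying idea is the same.
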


\begin{figure}[h]
\centering
\includegraphics[height=8cm]{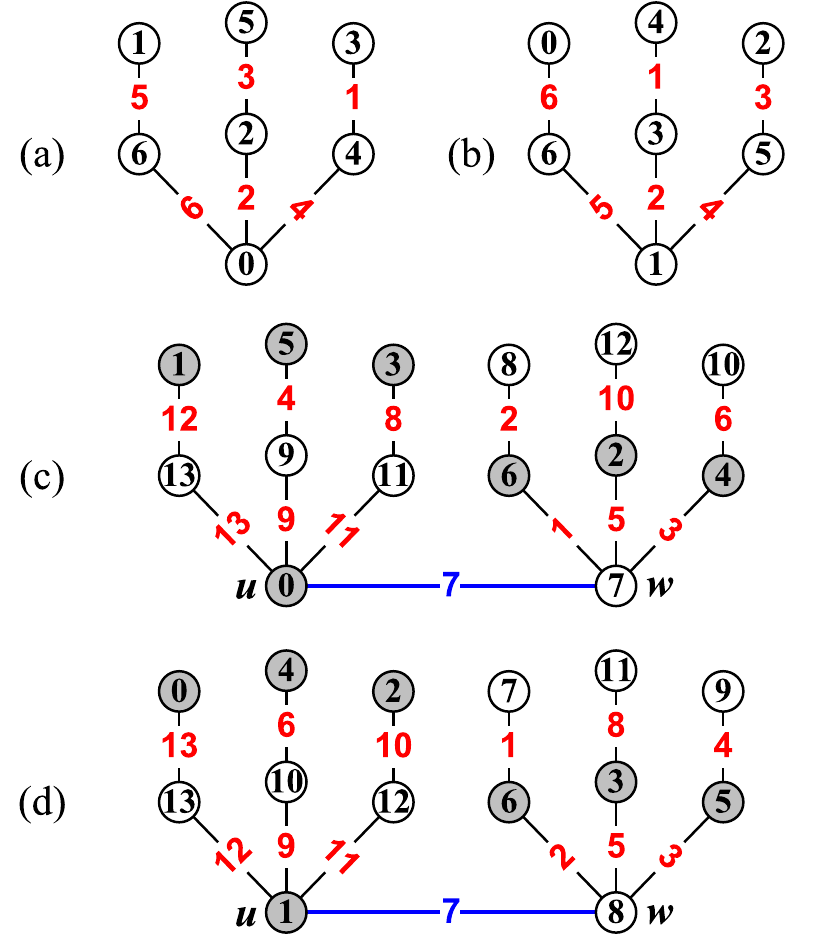}\\
\caption{\label{fig:a-counterexample} {\small Two trees $T_1,T_2$ shown in (a) and (b) admit $0$-rotatable system of graceful labellings, but set-ordered. A tree $T_1\ominus T_2$ admits a $0$-rotatable system of graceful labellings.}}
\end{figure}

\begin{thm}\label{thm:0-rotatable-set-ordered-system11}
There are infinite graphs admit $0$-rotatable set-ordered systems of (odd-)graceful labellings.
\end{thm}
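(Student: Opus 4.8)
The plan is to exhibit an explicit infinite family of graphs, each of which admits a $0$-rotatable set-ordered system of (odd-)graceful labellings, by bootstrapping from a single seed graph using the edge-symmetric construction of Theorem~\ref{thm:0-rotatable-set-ordered-system00}. First I would produce one connected bipartite $(p,q)$-graph $G_0$ that admits a $0$-rotatable set-ordered system of (odd-)graceful labellings; the simplest choice is a star $K_{1,n}$ or a path $P_n$, for which one checks directly that for every vertex $u$ there is a (set-ordered) graceful labelling $f$ with $f(u)=0$. For a path this is classical, and the set-ordered condition $\max f(X)<\min f(Y)$ can be arranged because $P_n$ is bipartite and a zig-zag labelling respects the bipartition; this establishes the base case.

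Next I would iterate the edge-symmetric operation. Define $G_{k+1}=G_k\ominus G_k'$, where $G_k'$ is a copy of $G_k$ and the two copies are joined by a single edge between a vertex of $G_k$ and its image. By Theorem~\ref{thm:0-rotatable-set-ordered-system00}, if $G_k$ admits a $0$-rotatable set-ordered system of (odd-)graceful labellings, then so does $G_{k+1}$. Since each $G_{k+1}$ has strictly more vertices than $G_k$ (indeed $|V(G_{k+1})|=2|V(G_k)|$), the graphs $G_0,G_1,G_2,\dots$ are pairwise non-isomorphic, so $\{G_k:k\geq 0\}$ is an infinite family, each member of which admits a $0$-rotatable set-ordered system of (odd-)graceful labellings. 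This proves the theorem. One may also invoke Lemma~\ref{thm:symmetric-tree} directly when the seed is a tree, giving an infinite family of trees with the stronger tree-specific conclusion as a corollary.

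I expect the only real obstacle to be the base case: verifying that some concrete small graph genuinely admits a $0$-\emph{rotatable} system (i.e.\ the rotatability must hold for \emph{every} vertex, not just one), \emph{and} that each such labelling can be taken set-ordered with respect to the bipartition. The remark following Corollary~\ref{thm:flawed-graceful-labelling-graph} and the discussion around Fig.~\ref{fig:mf-labelling-11} show this can fail for innocuous-looking graphs (e.g.\ a single-Hanzi graph from $H_{4043}$), so the seed must be chosen with care; a path, or one of the graphs (a-1), (a-2), (b-1), (b-2) of Fig.~\ref{fig:mf-labelling-11} already shown to work, is a safe choice. Once a valid seed is fixed, the rest is a purely mechanical induction driven by Theorem~\ref{thm:0-rotatable-set-ordered-system00}, with no further calculation required.
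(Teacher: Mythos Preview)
Your proposal is correct and matches the paper's intended argument: the paper gives no explicit proof for this theorem but simply flags it as a consequence of Lemma~\ref{thm:symmetric-tree} (and its corollary Theorem~\ref{thm:0-rotatable-set-ordered-system00}), which is precisely the iterative edge-symmetric construction you carry out. One small simplification you can take from the paper: Lemma~\ref{thm:symmetric-tree} only requires the seed tree to admit a $0$-rotatable system of (odd-)graceful labellings, not a \emph{set-ordered} one, and already outputs a set-ordered system on $T\ominus T'$; so your base-case worry about arranging $\max f(X)<\min f(Y)$ on the seed is unnecessary---any $0$-rotatable tree (e.g.\ the trees $T_1,T_2$ of Fig.~\ref{fig:a-counterexample}) suffices to launch the induction.
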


\begin{rem}\label{thm:CCCCCC}
Definition \ref{defn:mf-graceful-mf-odd-graceful} can generalized to other labellings, such as edge-magic total labelling, elegant/odd-elegant labellings, felicitous labelling, $(k,d)$-graceful labelling, edge antimagic total labelling, $(k, d)$-arithmetic, harmonious labelling, odd-edge-magic matching labelling, relaxed edge-magic total labelling, 6C-labelling, odd-6C-labelling, and so on.
\end{rem}

\subsection{Pan-labelling and problems}

We restate a pan-labelling definition (Ref. \cite{Yao-Sun-Zhang-Mu-Sun-Wang-Su-Zhang-Yang-Yang-2018arXiv} and \cite{Yao-Mu-Sun-Zhang-Wang-Su-2018}) as follows:

\begin{defn}\label{defn:sequence-labelling}
\cite{Yao-Mu-Sun-Zhang-Wang-Su-2018} Let $G$ be a $(p,q)$-graph, and let $A_M=\{a_i\}_1^M$ and $B_q=\{b_j\}_1^q$ be two monotonic increasing sequences of non-negative numbers with $M\geq p$. There are the following restrict conditions:
\begin{asparaenum}[Seq-1. ]
\item \label{condi:vertex-mapping} A vertex mapping $f:V(G)\rightarrow A_M$ such that $f(u)\neq f(v)$ for distinct vertices $u,v\in V(G)$.
\item \label{condi:total-mapping} A total mapping $g:V(G)\cup E(G) \rightarrow A_M\cup B_q$ such that $g(x)\neq g(y)$ for distinct elements $x,y\in V(G)\cup E(G)$.
\item \label{condi:induced-edge-label} An induced edge label $f(uv)=O(f(u),f(v))$ for $uv\in E(G)$.
\item \label{condi:total-equation} An $F$-equation $F(g(u),g(uv),g(v))=0$ holds true.
\item \label{condi:total-equation-edge} An $E$-equation $E(f(u),f(uv),f(v))=0$ holds true for an edge labelling $f:E(G)\rightarrow B_q$.
\item \label{condi:not-full-1} $f(E(G))\subseteq B_q$.
\item \label{condi:not-full-2} $g(V(G)\cup E(G))\subseteq A_M\cup B_q$.
\item \label{condi:half-full} $f(V(G))\subseteq A_M$ and $f(E(G))=B_q$.
\item \label{condi:all-full} $f(V(G))=A_M$ and $f(E(G))=B_q$.
\end{asparaenum}
\quad We refer to $f$ as:
\begin{asparaenum}[(1) ]
\item a \emph{sequence-$(A_M,B_q)$ labelling} if Seq-\ref{condi:vertex-mapping} and Seq-\ref{condi:induced-edge-label} hold true;

\item a \emph{sequence-$(A_M,B_q)$ total labelling} if Seq-\ref{condi:total-mapping}, Seq-\ref{condi:induced-edge-label} and Seq-\ref{condi:not-full-2} hold true;

\item a \emph{full sequence-$(A_M,B_q)$ labelling} if Seq-\ref{condi:total-mapping}, Seq-\ref{condi:total-equation} and Seq-\ref{condi:half-full} hold true;

\item a \emph{graceful sequence-$(A_M,B_q)$ labelling} if Seq-\ref{condi:vertex-mapping}, Seq-\ref{condi:induced-edge-label} and Seq-\ref{condi:all-full} hold true;

\item a \emph{total sequence-$(A_M,B_q)$ labelling} if Seq-\ref{condi:total-mapping} and Seq-\ref{condi:total-equation} hold true;

\item a \emph{sequence-$(A_M,B_q)$ $F$-total graceful labelling} if Seq-\ref{condi:total-mapping}, Seq-\ref{condi:total-equation} and Seq-\ref{condi:all-full} hold true;

\item a \emph{sequence-$(A_M,B_q)$ mixed labelling} if Seq-\ref{condi:induced-edge-label}, Seq-\ref{condi:total-equation-edge} and Seq-\ref{condi:half-full} hold true.\qqed
\end{asparaenum}
\end{defn}

If two sets $A_M=\{a_i\}_1^M$ and $B_q=\{b_j\}_1^q$ defined in Definition \ref{defn:sequence-labelling} correspond a graph labelling admitted by graphs, we say $(A_M, B_q)$ a \emph{graph-realized sequence matching}.

\begin{rem}\label{thm:problem-11}
It may be interesting to consider a problem proposed in \cite{Yao-Sun-Zhang-Mu-Sun-Wang-Su-Zhang-Yang-Yang-2018arXiv}: Determine the conditions for $A_M$ and $B_q$ and any $a\in A_M$ corresponds two numbers $a^*\in A_M$, $b^*\in B_q$ such that $b^*=|a-a^*|$. Then determine such sequence pair $(A_M,B_q)$ defined in Definition \ref{defn:sequence-labelling} such that the sequence type of graph labellings defined in Definition \ref{defn:sequence-labelling} hold true on some graphs. If any $b\in B_q$ corresponds two numbers $a',a''\in A_M$ holding $b=|a'-a''|$ true, then we can find at least a forest $T$ admitting a graceful sequence-$(A_M,B_q)$ labelling defined in Definition \ref{defn:sequence-labelling}.
\end{rem}

\section{Construction of Hanzi-gpws}

Our goal is to translate a paragraph written by Hanzis into Hanzi-gpws. First, we will discuss two topics: One is decomposing Hanzi-graphs, another is constructing Hanzi-graphs; second we show approaches for building up Hanzi-gpws.

\subsection{Decomposing graphs into Hanzi-graphs}

We introduce some operations on planar graphs, since Hanzi-graphs are planar.

\begin{figure}[h]
\centering
\includegraphics[height=2.6cm]{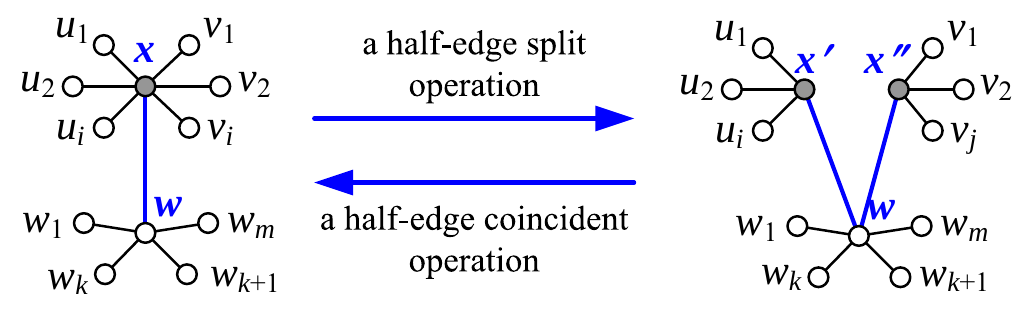}\\
\caption{\label{fig:split-operation-half-edge} {\small A scheme for a half-edge split
operation from left to right, and a half-edge coincident operation from right to left.}}
\end{figure}

\begin{defn}\label{defn:split-operation-combinatoric}
Let $xw$ be an edge of a $(p,q)$-graph $G$, such that $N(x)=\{w,u_1,u_2,\dots ,u_i, v_1,v_2,\dots ,v_j\}$ and $N(w)=\{x,w_1,w_2,\dots ,w_m\}$, and $xw\in E(G)$.
\begin{asparaenum}[Op-1. ]
\item A \emph{half-edge split operation} is defined by deleting the edge $xw$, and then splitting the vertex $x$ into two vertices $x',x''$ and joining $x'$ with these vertices $w,u_1,u_2,\dots ,u_i$, and finally joining $x''$ with these vertices $w,v_1,v_2,\dots ,v_j$. The resultant graph is denoted as $G \wedge ^{1/2}xw$, named as a \emph{half-edge split graph}, and $N(x')\cap N(x'')=\{w\}$ in $H$. (see Fig.\ref{fig:split-operation-half-edge})
\item A \emph{half-edge coincident operation} is defined as: Suppose that $N(x')\cap N(x'')=\{w\}$, we coincide $x'$ with $x''$ into one, denoted as $x=(x',x'')$, such that $N(x)=N(x')\cap N(x'')$, that is, delete one multiple edge. The resultant graph is denoted as $G(x'w\odot x'w)$, called a \emph{half-edge coincident graph}. (see Fig.\ref{fig:split-operation-half-edge})
\item \cite{Yao-Sun-Zhang-Mu-Sun-Wang-Su-Zhang-Yang-Yang-2018arXiv} A \emph{vertex-split operation} is defined in the way: Split $x$ into two vertices $x',x''$ such that $N(x')=\{w,u_1,u_2,\dots ,u_i\}$ and $N(x'')=\{v_1,v_2,\dots ,v_j\}$ with $N(x')\cap N(x'')=\emptyset$; the resultant graph is written as $G\wedge x$, named as a \emph{vertex-split graph}. (see Fig.\ref{fig:split-operation-vertex-edge} from (a) to (b))
\item \cite{Yao-Sun-Zhang-Mu-Sun-Wang-Su-Zhang-Yang-Yang-2018arXiv} A \emph{vertex-coincident operation} is defined by coinciding two vertices $x'$ and $x''$ in to one $x=(x',x'')$ such that $N(x)=N(x')\cup N(x'')$; the resultant graph is written as $G(x'\odot x'')$, called a \emph{vertex-coincident graph}. (see Fig.\ref{fig:split-operation-vertex-edge} from (b) to (a))

\begin{figure}[h]
\centering
\includegraphics[height=5.6cm]{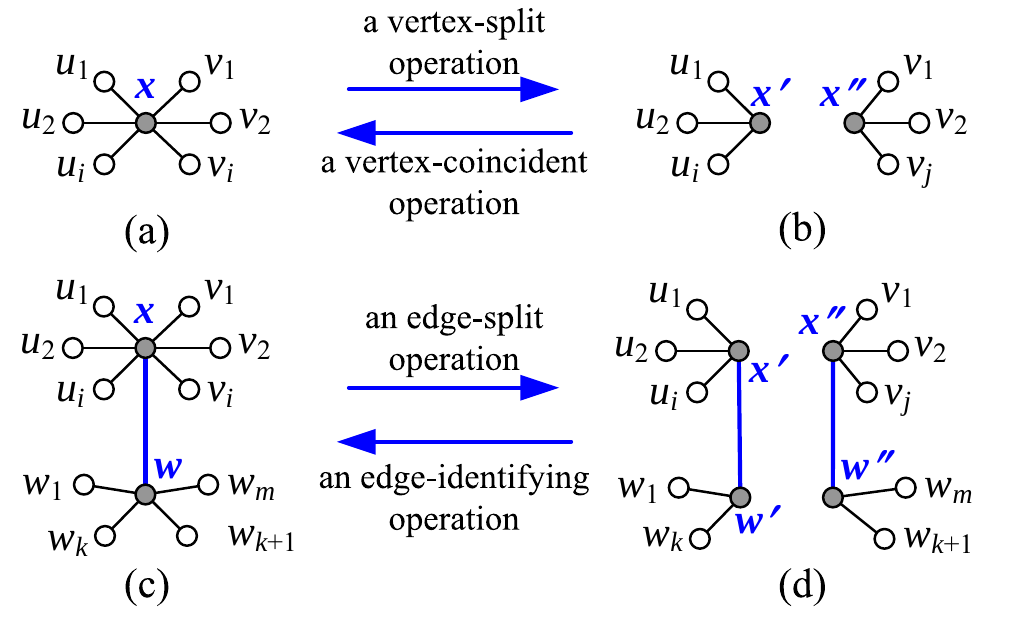}\\
\caption{\label{fig:split-operation-vertex-edge} {\small (a) A \emph{vertex-coincident graph} $G(x'\odot x'')$ obtained by a vertex-coincident operation from (b) to (a); (b) a \emph{vertex-split graph} $G\wedge x$ obtained by a vertex-split
operation from (a) to (b); (c) an \emph{edge-coincident graph} $G(x'w'\odot x''w'')$ obtained by an edge-coincident operation from (d) to (c); (d) an \emph{edge-split graph} $G\wedge xw$ obtained by an edge-split operation from (c) to (d).}}
\end{figure}

\item \cite{Yao-Sun-Zhang-Mu-Sun-Wang-Su-Zhang-Yang-Yang-2018arXiv} An \emph{edge-split operation} is defined as: Split the edge $xw$ into two edges $x'w'$ and $x''w''$ such that $N(x')=\{w',u_1,u_2,\dots ,u_i\}$ and $N(x'')=\{w'',v_1,v_2,\dots ,v_j\}$, $N(w')=\{x',w_1,w_2,\dots ,w_k\}$ and $N(w'')=\{x'',w_{k+1},w_{k+2},\dots ,w_{m}\}$; the resultant graph is written as $G\wedge xw$, named as an \emph{edge-split graph}, with $N(w')\cap N(x')=\emptyset$, $N(w')\cap N(x'')=\emptyset$, and $N(x')\cap N(w')=\emptyset$, $N(x')\cap N(w'')=\emptyset$. (see Fig.\ref{fig:split-operation-vertex-edge} from (c) to (d))
\item \cite{Yao-Sun-Zhang-Mu-Sun-Wang-Su-Zhang-Yang-Yang-2018arXiv} An \emph{edge-coincident operation} is defined by coinciding two edges $x'w'$ and $x''w''$ into one edge $xw$ such that $N(x)=N(x')\cup N(x'')\cup \{w=(w',w'')\}$ and $N(w)=N(w')\cup N(w'')\cup \{x=(x',x'')\}$; the resultant graph is written as $G(x'w'\odot x''w'')$, called an \emph{edge-coincident graph}. (see Fig.\ref{fig:split-operation-vertex-edge} from (d) to (c))
\item \cite{Bondy-2008} In Fig.\ref{fig:split-operation-contract-subdivision}, an \emph{edge-contracting operation} is shown as: Delete the edge $xy$ first, and then coincide $x$ with $y$ into one vertex $w=(x,y)$ such that $N(w)=[N(x)\setminus \{y\}]\cup [N(y)\setminus \{x\}]$. The resultant graph is denoted as $G\triangleleft xy$, named as an \emph{edge-contracted graph}.
 \item \cite{Bondy-2008} In Fig.\ref{fig:split-operation-contract-subdivision}, an \emph{edge-subdivided operation} is defined in the way: Split the vertex $w$ into two vertices $x,y$, and join $x$ with $y$ by a new edge $xy$, such that $N(x)\cap N(y)=\emptyset$, $y\in N(x)$, $x\in N(y)$ and $N(w)=[N(x)\setminus \{y\}]\cup [N(y)\setminus \{x\}]$. The resultant graph is denoted as $G\triangleright w$, called an \emph{edge-subdivided graph}.\qqed

\begin{figure}[h]
\centering
\includegraphics[height=1.8cm]{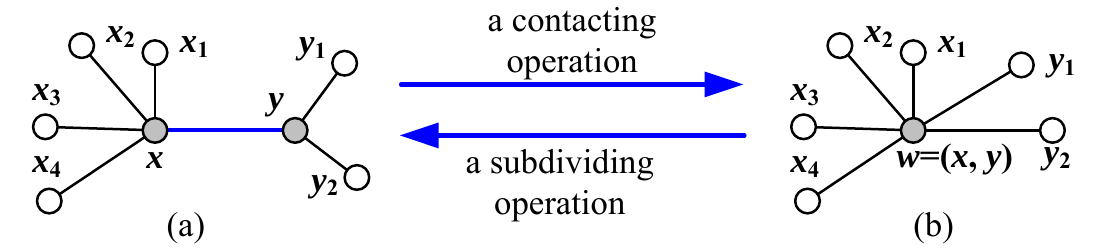}\\
\caption{\label{fig:split-operation-contract-subdivision} {\small (a) An \emph{edge-subdivided graph} $G\triangleright w$ obtained by subdividing a vertex $w=(x,y)$ into an edge $xy$ from (b) to (a); (b) an \emph{edge-contracted graph} $G\triangleleft xy$ obtained by contracting an edge $xy$ to a vertex $(x,y)$ from (a) to (b).}}
\end{figure}
\end{asparaenum}
\end{defn}

\begin{figure}[h]
\centering
\includegraphics[height=7.4cm]{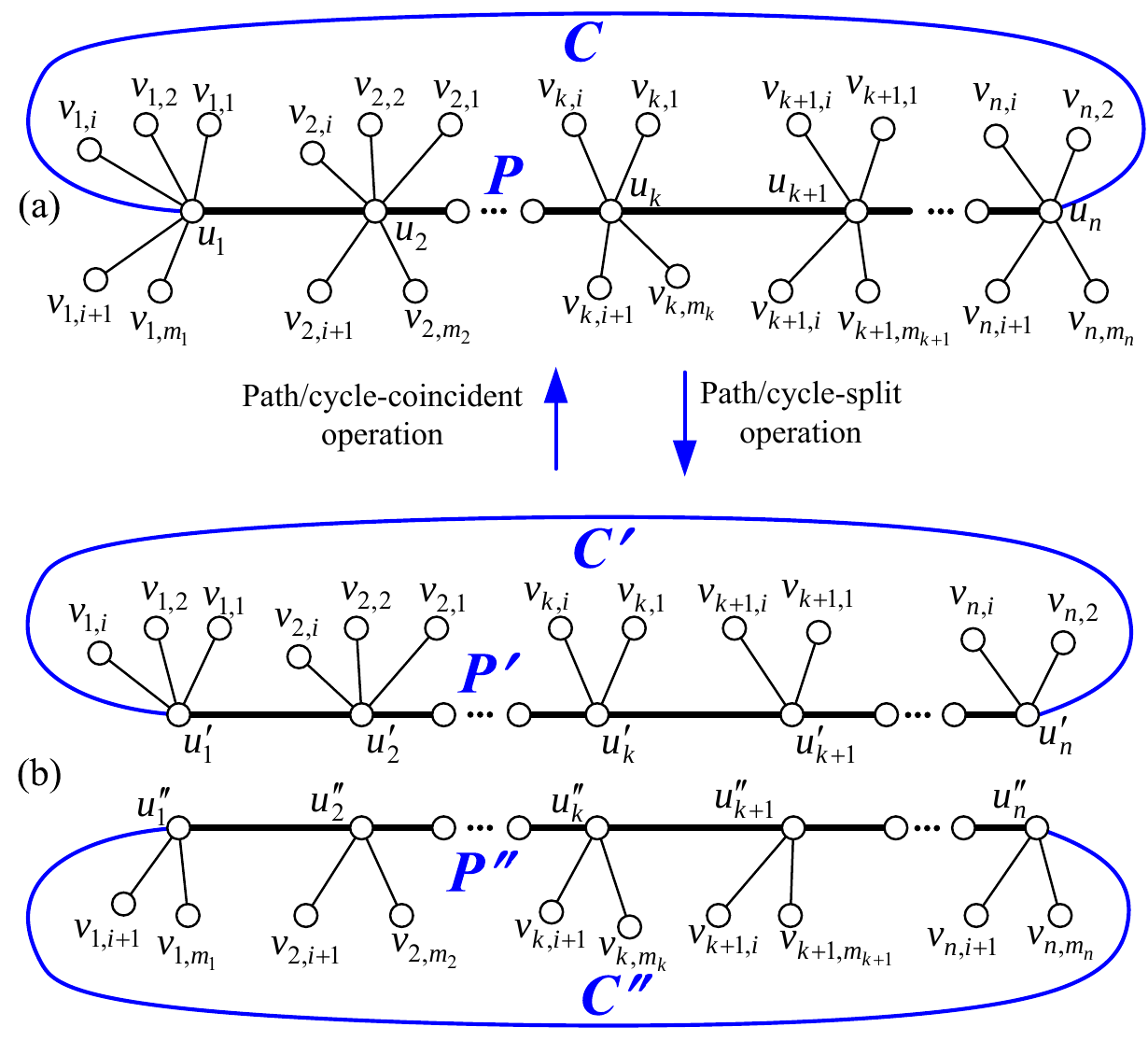}\\
\caption{\label{fig:split-operation-path-cycle} {\small (a) A path $P=u_1u_2\cdots u_n$ in black, and a cycle $C=P+u_1u_n$ in black and blue; (b) two cycles $C'=P'+u'_1u'_n$ and $C''=P''+u''_1u''_n$ obtained by a series of half-edge split operations.}}
\end{figure}

A simple example of using a series of half-edge split operations is shown in Fig.\ref{fig:use-half-edge-operation}. Furthermore, we obtain a half-edge split graph $H=G \wedge ^{1/2}\{u_iu_{i+1}\}^{n-1}_1$ shown in Fig.\ref{fig:split-operation-path-cycle} (b), where $P=u_1u_2\cdots u_n$, and $H$ contains two cycles $C'=P'+u'_1u'_n$ and $C''=P''+u''_1u''_n$, two paths $P'$ and $P''$ by a series of half-edge split operations.

\begin{figure}[h]
\centering
\includegraphics[height=6.8cm]{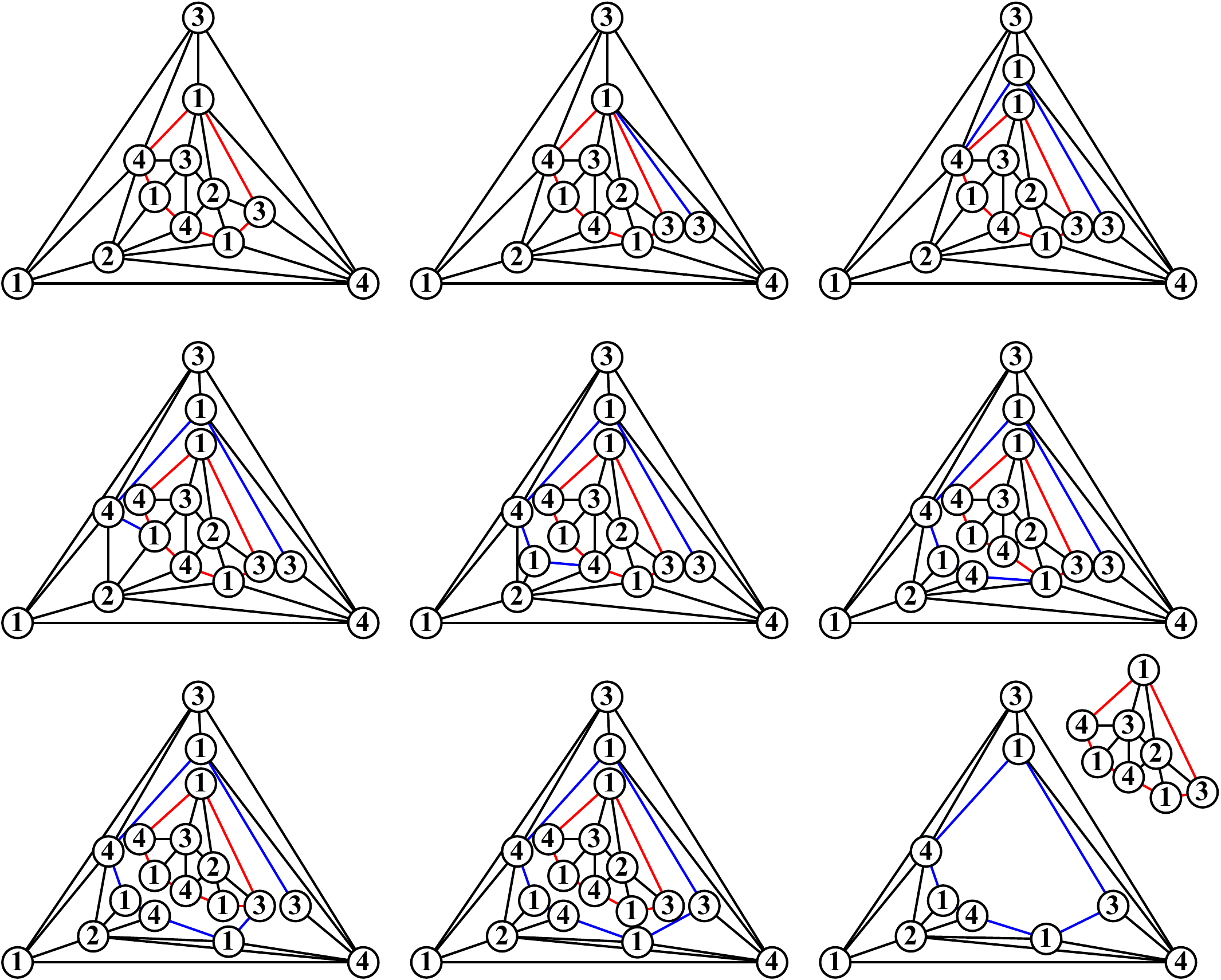}\\
\caption{\label{fig:use-half-edge-operation} {\small An example for illustrating the half-edge split operation.}}
\end{figure}

By the operations defined in Definition \ref{defn:split-operation-combinatoric} and the induction, we obtain the following result:

\begin{thm} \label{thm:decomposed-into-Hanzi-graphs}
Any simple graph $G$ can be decomposed into Hanzi-graphs $G_1,G_2,\dots, G_m$ with $G=\bigcup^m_{i=1}G_i$, $E(G)=\bigcup^m_{i=1}E(G_i)$ and $E(G_i)\cap E(G_j)=\emptyset $ if $i\neq j$.
\end{thm}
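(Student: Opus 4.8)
The plan is to show that every simple graph can be broken into pieces each of which qualifies as a Hanzi-graph, where by ``Hanzi-graph'' we mean (following the Rule-1 through Rule-4 discussion in Section~2) a connected planar graph with no odd cycles admitting a set-ordered graceful labelling. The cleanest route is to reduce to the most primitive building blocks. First I would observe that any simple graph $G$ can be written as an edge-disjoint union of its edges, i.e. as a union of single-edge graphs $K_2$; this is trivial but already proves a degenerate form of the theorem, since a single edge is a (very small) Hanzi-graph and is certainly planar, bipartite, and admits a set-ordered graceful labelling. So the real content is to decompose $G$ into \emph{fewer, larger} Hanzi-graphs, or to match the specific list of Hanzi-graph components pictured in Fig.\ref{fig:split-standard-1} and Fig.\ref{fig:split-standard-2}.

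Second, I would use the split operations of Definition~\ref{defn:split-operation-combinatoric} to reduce the problem to paths and even cycles. The key fact is that a series of half-edge split operations applied along the edges of a cycle $C=P+u_1u_n$ turns $C$ into two cycles $C'$ and $C''$ (Fig.\ref{fig:split-operation-path-cycle}(b)), and more generally that vertex-split and edge-split operations let us ``tear apart'' any vertex of high degree. The induction would be on the number of edges of $G$: pick any edge $xw$; if $G$ is already a path or even cycle we are done since these are Hanzi-graphs admitting set-ordered graceful labellings; otherwise apply a vertex-split or edge-split operation at a vertex of degree $\geq 3$ (or an edge-contraction/subdivision to kill an odd cycle) to obtain a graph $G'$ with the same edge set but strictly simpler structure, apply the induction hypothesis to each connected component of $G'$, and note that the edge-disjoint decomposition of the components of $G'$ is also an edge-disjoint decomposition of $G$ because all these operations preserve the edge set (they only split or merge vertices). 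This gives $E(G)=\bigcup_{i=1}^m E(G_i)$ with $E(G_i)\cap E(G_j)=\emptyset$ for $i\neq j$, which is exactly the claimed conclusion.

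Third, I would handle planarity and the odd-cycle/set-ordered requirements: planarity of each $G_i$ is automatic once each $G_i$ is a path or an even cycle (or one of the explicit small radicals of Fig.\ref{fig:split-standard-1}--\ref{fig:split-standard-2}, all of which are drawn planar); bipartiteness is automatic for paths and even cycles; and paths and even cycles are classically known to admit (set-ordered) graceful labellings, which is already implicit in the material cited from \cite{Gallian2018}. The main obstacle is that the original graph $G$ may contain \emph{odd cycles}, which can never sit inside a bipartite Hanzi-graph; here the escape is that we are only asked for an edge-disjoint decomposition, not an induced-subgraph decomposition, so an odd cycle of length $2k+1$ can be cut by a single half-edge split at one of its vertices into a path of length $2k+1$ (a Hanzi-graph) plus nothing else, or distributed among several path-pieces. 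Making this bookkeeping precise — arguing that the recursion always terminates and that at termination every piece genuinely lies on the Hanzi-graph list — is the step that needs the most care; the rest is routine application of the operations in Definition~\ref{defn:split-operation-combinatoric} together with induction on $|E(G)|$.
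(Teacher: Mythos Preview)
Your proposal and the paper take the same route: the paper's entire argument is the single sentence ``By the operations defined in Definition~\ref{defn:split-operation-combinatoric} and the induction, we obtain the following result,'' and your first paragraph already contains a complete proof along exactly these lines, since the single edge $K_2$ is the Hanzi-graph of the character $H_{5027}$ (the one-stroke Hanzi ``yi'') and every simple graph is trivially an edge-disjoint union of copies of $K_2$.

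One caution on your second and third paragraphs: a Hanzi-graph in this paper is not an arbitrary bipartite planar graph admitting a set-ordered graceful labelling, but a graph drawn from the fixed finite list produced by applying Rules~1--4 to the characters in GB2312-80. So ``a path of length $2k+1$'' or ``an even cycle $C_{2n}$'' is not automatically a Hanzi-graph for large $k$ or $n$; your reduction to paths and even cycles does not by itself finish the argument. Fortunately this does not matter, because your first-paragraph observation (decompose into single edges) already suffices, and the more elaborate recursion via half-edge splits is only relevant if one cares about the parameter $D_{Hg}(G)$ discussed immediately after the theorem.
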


Let $D_{Hg}(G)$ be the smallest number of Hanzi-graphs obtained by decomposing a simple graph $G$. In general, there exists some simple graph $G$ such that $D_{Hg}(G)<D_{Hg}(H)$ for some proper subgraph $H$ of $G$. We present an example shown in Fig.\ref{fig:tree-56-55-yu}.

\begin{figure}[h]
\centering
\includegraphics[height=7cm]{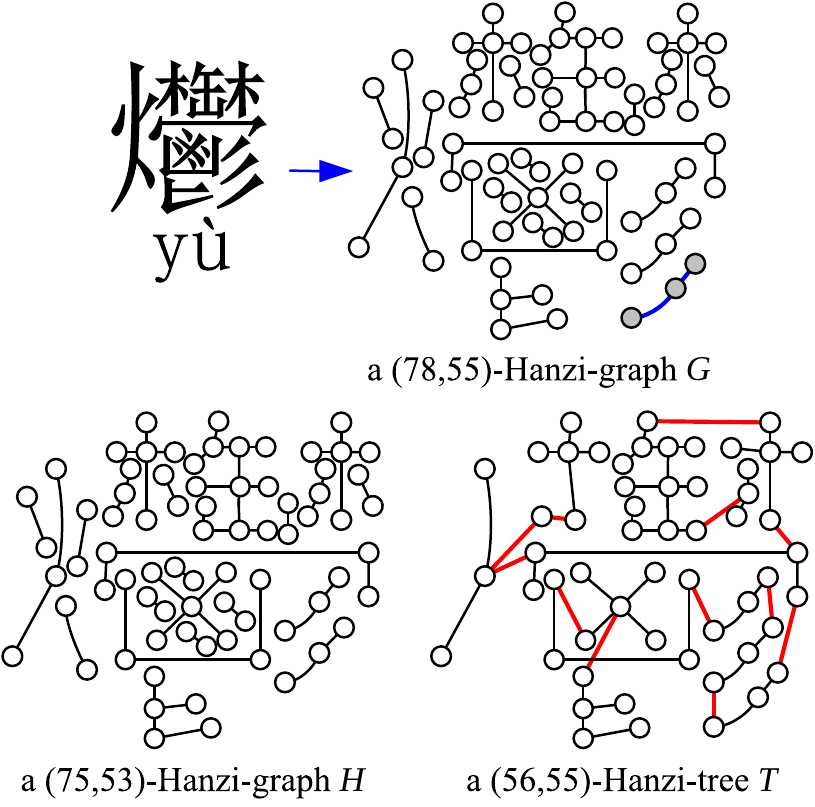}\\
\caption{\label{fig:tree-56-55-yu} {\small An example for illustrating $D_{Hg}(G)<D_{Hg}(H)$, since $H$ is not a Hanzi, $D_{Hg}(G)=1$, but $D_{Hg}(H)\geq 2$.}}
\end{figure}

In Fig.\ref{fig:tree-56-55-yu}, $G$ is a $(78,55)$-Hanzi-graph obtained from a Hanzi $H_{7229}$ with code 7229 in \cite{GB2312-80}, $H$ is a proper subgraph of $G$, and $T$ is a tree obtained from $G$. Clearly, the tree $T$ can be decomposed into a Hanzi-graph $G$ by the vertex-split operation. Moreover, the tree $T$ admits a set-ordered graceful labelling, since it is a caterpillar. However, $D_{Hg}(G)=1<2=D_{Hg}(H)$, since there does not exist one Hanzi having its Hanzi-graph to be $H$.

We show an \emph{algorithm} for making a flawed set-ordered graceful labelling of the $(78,55)$-Hanzi-graph $G$ by Fig.\ref{fig:tree-yu-flawed-graceful}:

\textbf{CATERPILLAR-CONSTRUCTION algorithm}

\textbf{Step 1.} Rearrange the caterpillar-like components $H_1,H_2,\dots ,H_n$ of a disconnected graph $G$ by $H_{i_1},H_{i_2},\dots ,H_{i_n}$, see (a) in Fig.\ref{fig:tree-yu-flawed-graceful};

\textbf{Step 2.} Join the components $H_{i_1},H_{i_2},\dots ,H_{i_n}$ of $G$ by new edges for obtaining a caterpillar $T'=\ominus ^m_{j=1}H_{j_1}$ with the new edge set $E^*=\{u_su_t: u_s\in E(H_{i_s}),u_t\in E(H_{i_t})\}$, see (b) in Fig.\ref{fig:tree-yu-flawed-graceful};

\textbf{Step 3.} Give a graceful set-ordered labelling of $T'$, see (c) in Fig.\ref{fig:tree-yu-flawed-graceful};

\textbf{Step 4.} Delete the added edges and then we get a flawed set-ordered graceful labelling of $G$, see (d) in Fig.\ref{fig:tree-yu-flawed-graceful}.

It is noticeable, we have two or more caterpillars like $T'$ by different permutations $H_{i_1},H_{i_2},\dots ,H_{i_n}$ of $H_1,H_2,\dots ,H_n$ of the Hanzi-graph $G$ and distinct joining ways to join $H_{i_1},H_{i_2},\dots ,H_{i_n}$ together by new edges. Thus, we claim that the Hanzi-graph $G$ admits two or more flawed set-ordered graceful labellings, which enables us to make  more complex TB-paws.

Fig.\ref{fig:tree-yu-flawed-graceful} (a) shows a Hanzi-graph $H_{yu}$ (read `y\`{u}') with 23 components $J_1,J_2,\dots ,J_{23}$, so we have $(23)!$ different permutations $J_{i_1}J_{i_2}\dots J_{i_{23}}$ of $J_1J_2\dots J_{23}$, which distribute us $(23)!$ Hanzi-gpws having flawed set-ordered graceful labellings, here, $(23)!\approx 2^{74.5}$. Thereby, Hanzis can produce various larger scale spaces of Hanzi-gpws.

\begin{figure}[h]
\centering
\includegraphics[height=9cm]{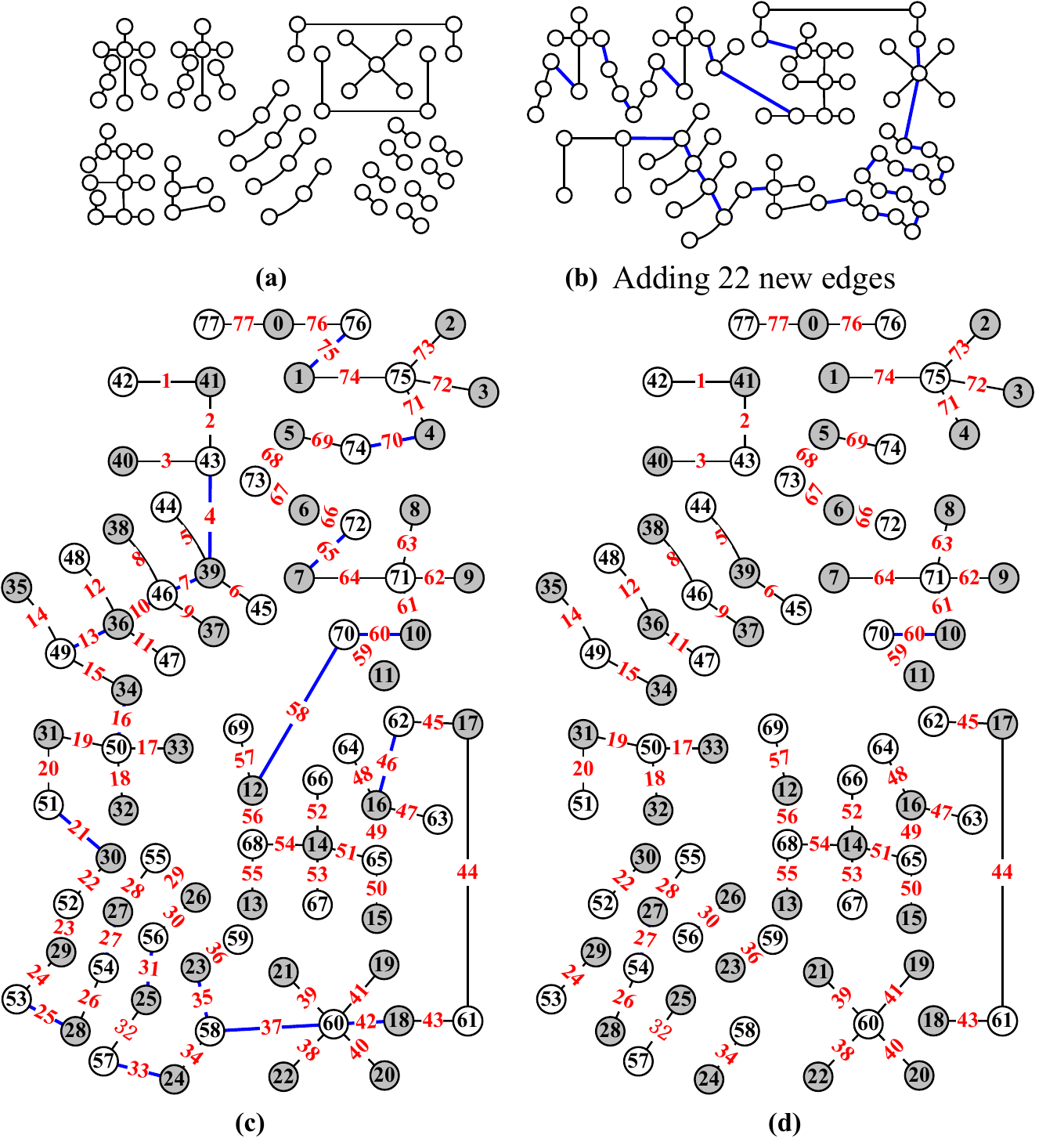}\\
\caption{\label{fig:tree-yu-flawed-graceful} {\small A scheme for illustrating the procedure of finding a flawed set-ordered graceful labelling of the $(78,55)$-Hanzi-graph $G$ shown in Fig.\ref{fig:tree-56-55-yu}.}}
\end{figure}

Our split operations can be used to decompose graphs and form new colorings/labellings. In \cite{Wang-Wang-Yao2019-Euler-Split} and \cite{Wang-Ma-Yao2019-spltting-connectivity}, the authors investigate the v-split  and e-split connectivity of graphs/networks. They define two new connectivities as follows:

\textbf{The v-divided connectivity.} A \emph{v-divided $k$-connected graph} $H$ holds: $H\wedge V^*$ (or $H\wedge \{x_i\}^{k}_1$) is disconnected, where $V^*=\{x_1,x_2,\dots,x_k\}$ is a subset of $V(H)$, each component $H_j$ of $H\wedge \{x_i\}^{k}_1$ has at least a vertex $w_j\not \in V^*$, $|V(H\wedge \{x_i\}^{k}_1)|=k+|V(H)|$ and $|E(H\wedge \{x_i\}^{k}_1)|=|E(H)|$. The smallest number of $k$ for which $H\wedge \{x_i\}^{k}_1$ is disconnected is called the \emph{v-divided connectivity} of $H$, denoted as $\kappa_{d}(H)$ (see an example shown in Fig.\ref{fig:5-degree-connectivity-new}).

\textbf{The e-divided connectivity.} An \emph{e-divided $k$-connected graph} $H$ holds: $H\wedge \{e_i\}^{k}_1$ (or $H\wedge E^*$) is disconnected, where $E^*=\{e_1,e_2,\dots,e_k\}$ is a subset of $E(H)$, each component $H_j$ of $H\wedge \{e_i\}^{k}_1$ has at least a vertex $w_j$ being not any end of any edge of $E^*$, $|V(H\wedge \{e_i\}^{k}_1)|=2k+|V(H)|$ and $|E(H\wedge \{e_i\}^{k}_1)|=k+|E(H)|$. The smallest number of $k$ for which $H\wedge \{e_i\}^{k}_1$ is disconnected is called the \emph{e-divided connectivity} of $H$, denoted as $\kappa'_{d}(H)$ (see an example shown in  Fig.\ref{fig:5-degree-connectivity-new}).

\begin{figure}[h]
\centering
\includegraphics[width=8.6cm]{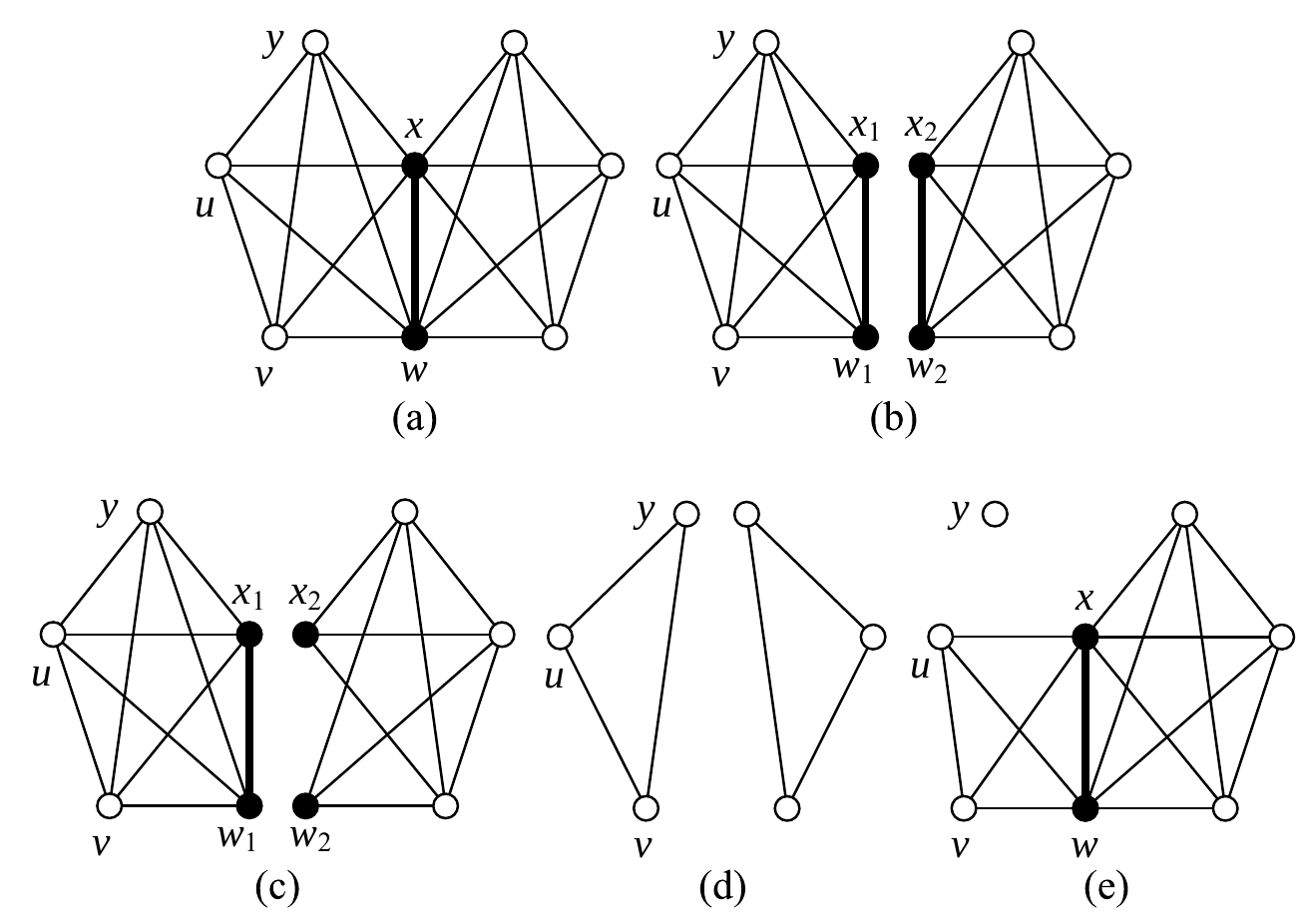}
\caption{\label{fig:5-degree-connectivity-new}{\small  (a) A graph $H$ with minimum degree $\delta(H)=4$;  (b) an e-divided graph $H\wedge xw$ with $\kappa'_{d}(H)=1$; (c) a v-divided graph $H\wedge \{x,w\}$ with $\kappa_{d}(H)=2$; (d) a v-deleted graph $H-\{x,w\}$ with $\kappa(H)=2$; (e) an e-deleted graph $H-\{yx,yw,yu,yv\}$ with $\kappa'(H)=4$.}}
\end{figure}

Recall that the minimum degree $\delta(H)$, the vertex connectivity $\kappa(H)$ and the edge connectivity $\kappa'(H)$ of a simple graph $G$ hold the following inequalities \cite{Bondy-2008}
\begin{equation}\label{eqa:popular-connectivity}
\kappa(H)\leq \kappa'(H)\leq \delta(H)
\end{equation}
true. Unfortunately, we  do not have the inequalities (\ref{eqa:popular-connectivity}) about the minimum degree $\delta(H)$, the v-divided connectivity $\kappa_{d}(H)$ and the e-divided connectivity $\kappa'_{d}(H)$ for a simple graph $H$. Moreover, we have
\begin{lem}\label{thm:equivalent-proof}
A graph $G$ is $k$-connected if and only if it is v-divided $k$-connected, namely, $\kappa_{d}(H)=\kappa(H)$.
\end{lem}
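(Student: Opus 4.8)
The plan is to establish the stronger numerical identity $\kappa_{d}(G)=\kappa(G)$, from which ``$G$ is $k$-connected if and only if $G$ is v-divided $k$-connected'' follows immediately. Everything hinges on one elementary remark about the vertex-split operation of Definition~\ref{defn:split-operation-combinatoric}: for any subset $V^{*}=\{x_{1},\dots ,x_{k}\}\subseteq V(G)$ and any choice of splits producing $G\wedge V^{*}$, deleting the $2k$ freshly created vertices restores the deleted subgraph, that is,
\[
\big(G\wedge V^{*}\big)-\{x_{1}',x_{1}'',\dots ,x_{k}',x_{k}''\}=G-V^{*},
\]
since an edge of $G\wedge V^{*}$ meets a split vertex exactly when the corresponding edge of $G$ meets $V^{*}$. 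Equivalently, $G-V^{*}$ is the subgraph of $G\wedge V^{*}$ induced on $V(G)\setminus V^{*}$, so every path of $G-V^{*}$ is still a path of $G\wedge V^{*}$.

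First I would prove $\kappa(G)\le\kappa_{d}(G)$. Take a witnessing v-division: a set $V^{*}$ with $|V^{*}|=\kappa_{d}(G)$ for which $G\wedge V^{*}$ is disconnected and each of its components contains an original (non-split) vertex. Choosing such vertices $a,b$ in two distinct components, the remark above shows there is no $a$--$b$ path in $G-V^{*}$; hence $V^{*}$ is a vertex cut of $G$ and $\kappa(G)\le |V^{*}|=\kappa_{d}(G)$. The degenerate cases ($G$ disconnected, or complete, or with very few vertices) I would settle separately by the usual conventions --- for a disconnected $G$ both invariants equal $0$.

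Next, for a connected non-complete $G$ I would prove $\kappa_{d}(G)\le\kappa(G)$ by an explicit construction. Start from a \emph{minimum} vertex cut $S=\{x_{1},\dots ,x_{k}\}$ with $k=\kappa(G)$; being minimum it is minimal, and minimality forces every $x\in S$ to have a neighbour in every component of $G-S$. Partition the components of $G-S$ into two nonempty families with unions $A$ and $B$ (so $G$ has no $A$--$B$ edge), and split each $x_{i}$ by declaring $N(x_{i}')=(N(x_{i})\cap A)\cup(N(x_{i})\cap S)$ and $N(x_{i}'')=N(x_{i})\cap B$, routing all edges inside $S$ to the primed copies; these two sets are disjoint, cover $N(x_{i})$, and are both nonempty by minimality. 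A direct check then shows that no edge of $G\wedge S$ joins the set consisting of $A$ and the primed copies $x_{1}',\dots ,x_{k}'$ to the set consisting of $B$ and the double-primed copies $x_{1}'',\dots ,x_{k}''$, so $G\wedge S$ is disconnected; moreover each $x_{i}'$ keeps a neighbour in $A$ and each $x_{i}''$ a neighbour in $B$, so every component of $G\wedge S$ still contains a non-split vertex. Hence $\kappa_{d}(G)\le |S|=\kappa(G)$, and combined with the previous paragraph this gives $\kappa_{d}(G)=\kappa(G)$.

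The step I expect to be genuinely delicate is not an inequality but the bookkeeping forced by the non-triviality clause ``each component contains a vertex that is not a split vertex.'' It is precisely this clause that compels the use of a \emph{minimal} cut in the construction --- for an arbitrary cut some $x\in S$ might have no neighbour on one side, leaving an isolated split vertex as an illegal component --- and it is the same clause that makes a short caveat necessary for complete graphs, where $\kappa(K_{n})=n-1$ by convention even though $K_{n}$ can be genuinely separated neither by vertex deletions nor by vertex splits.
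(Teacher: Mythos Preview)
The paper does not actually supply a proof of this lemma: it is stated as a result from the companion reference \cite{Wang-Ma-Yao2019-spltting-connectivity} and left unproved in the present paper. So there is no ``paper's own proof'' to compare against.

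Your argument is correct and is the natural one. The key observation that $(G\wedge V^{*})-\{x_{1}',x_{1}'',\dots ,x_{k}',x_{k}''\}=G-V^{*}$ immediately gives $\kappa(G)\le\kappa_{d}(G)$, and your explicit split of a minimum cut $S$---sending the $A$-neighbours and the intra-$S$ edges to the primed copies, the $B$-neighbours to the double-primed copies---cleanly yields $\kappa_{d}(G)\le\kappa(G)$. You are right that minimality of $S$ is exactly what guarantees each $x_{i}$ has a neighbour in every component of $G-S$, so that neither $x_{i}'$ nor $x_{i}''$ becomes an isolated (and hence illegal) component; this is the only delicate point and you have handled it. Your caveat about $K_{n}$ is also well placed: just as $\kappa(K_{n})=n-1$ is a convention rather than the size of a genuine separator, $\kappa_{d}(K_{n})$ must be set by the same convention, since any two non-split vertices of $K_{n}$ remain adjacent after any sequence of splits and therefore lie in the same component.
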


\begin{thm}\label{thm:lemma-equivalent-proof}
If a $k$-connected graph has a property related with its $k$-connectivity, so do a v-divided $k$-connected graph.
\end{thm}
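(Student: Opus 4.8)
The plan is to obtain this statement as an immediate consequence of Lemma~\ref{thm:equivalent-proof}, which carries the actual content. Before starting I would fix the meaning of the phrase ``a property related with its $k$-connectivity'': I read it as any graph property $P$ such that every $k$-connected graph enjoys $P$ as a logical consequence of being $k$-connected (Menger-type disjoint-path statements, fan lemmas, the existence of $k$ internally disjoint paths between any two vertices, Whitney-type inequalities, and so on). With this reading the proof is a two-line unwinding of definitions.

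First I would invoke Lemma~\ref{thm:equivalent-proof} to record the identity $\kappa_{d}(H)=\kappa(H)$ valid for every simple graph $H$. Next I would unwind the two connectivity notions: a graph $G$ is $k$-connected exactly when $\kappa(G)\geq k$, and $G$ is v-divided $k$-connected exactly when $\kappa_{d}(G)\geq k$; by the displayed identity these two inequalities are equivalent, so the class of v-divided $k$-connected graphs coincides, as a class, with the class of $k$-connected graphs. Finally, given any v-divided $k$-connected graph $G$ and any property $P$ that all $k$-connected graphs possess by virtue of their $k$-connectivity, the equivalence just established shows $G$ is $k$-connected, hence $G$ satisfies $P$. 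This is precisely the assertion of the theorem.

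I do not expect a real obstacle in proving the theorem itself, since all the work has been delegated to Lemma~\ref{thm:equivalent-proof} (that lemma, relating the vertex-split operation to ordinary vertex deletion, is where the genuine argument lives). The only point requiring care is the scope of the word ``property'': the transfer above is legitimate only for properties stated in terms of the $k$-connectivity of the original graph. If a statement instead refers to the internal structure of the split graph $H\wedge\{x_i\}^{k}_1$ (its extra vertices or the particular way components are separated), it must be re-examined on its own. Accordingly, in the write-up I would either state the theorem with this qualification, or make explicit that ``property related with $k$-connectivity'' is to be understood as ``property implied by $k$-connectivity'', after which the proof is the short argument sketched above.
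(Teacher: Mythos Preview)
Your proposal is correct and matches the paper's approach exactly: the paper states the theorem immediately after Lemma~\ref{thm:equivalent-proof} with no separate proof, treating it as a direct corollary of the identity $\kappa_{d}(H)=\kappa(H)$. Your added clarification about the intended scope of ``property related with $k$-connectivity'' is a sensible precaution the paper does not make explicit.
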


\begin{thm}\label{thm:vertex-divided-vs-e-divided}
Any connected graph $G$ holds the inequalities $\kappa'_{d}(G)\leq \kappa_{d}(G)\leq 2\kappa'_{d}(G)$ true, and the boundaries are reachable.
\end{thm}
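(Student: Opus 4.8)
The plan is to reduce the statement to ordinary connectivity via the cited Lemma~\ref{thm:equivalent-proof}, which gives $\kappa_d(G)=\kappa(G)$, so it suffices to prove $\kappa'_d(G)\le\kappa(G)$ and $\kappa(G)\le 2\kappa'_d(G)$. For the right-hand inequality I would argue directly: let $E^*=\{e_1,\dots,e_\ell\}$ with $\ell=\kappa'_d(G)$ be a minimum edge set realizing the e-divided connectivity, write $e_i=x_iy_i$, and set $V^*=\bigcup_{i=1}^{\ell}\{x_i,y_i\}$, so $|V^*|\le 2\ell$. I claim $V^*$ is a vertex cut. Choose two components $D_1\ne D_2$ of $G\wedge E^*$ and vertices $w_1\in D_1$, $w_2\in D_2$ that are not split vertices (these exist because $G\wedge E^*$ realizes an e-divided disconnection), so $w_1,w_2\in V(G)\setminus V^*$. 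A $w_1$--$w_2$ path in $G-V^*$ would use only vertices of $V(G)\setminus V^*$ and only edges having both ends outside $V^*$; since every $e_i$ has both ends in $V^*$ and the edge-split operation leaves unchanged everything outside the split edges and their ends, such a path would persist in $G\wedge E^*$, contradicting $w_1\in D_1\ne D_2\ni w_2$. Hence $G-V^*$ is disconnected and $\kappa(G)\le|V^*|\le 2\ell=2\kappa'_d(G)$.

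For the left-hand inequality, take a minimum vertex cut $V^*=\{x_1,\dots,x_k\}$ with $k=\kappa(G)$, let $C$ be a component of $G-V^*$, and put $A=V(C)$, $B=V(G)\setminus(V^*\cup A)$, both nonempty. Minimality forces each $x_i$ to have a neighbour in $A$ and one in $B$; applying Menger's theorem to the $A$--$B$ separation yields $k$ pairwise vertex-disjoint $A$--$B$ paths, each meeting $V^*$ in exactly one vertex, which I relabel so that $P_i$ runs through $x_i$. If $a_i\in A$ is the neighbour of $x_i$ on $P_i$, then the $a_i$ are distinct and the $k$ edges $e_i=a_ix_i$ are pairwise vertex-disjoint. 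I would e-split all of $e_1,\dots,e_k$ and choose the neighbour distribution to keep $A$ on one side: assign to $x_i'$ the split edge $x_i'a_i'$, all $A$-neighbours of $x_i$, and (consistently over all $i,j$) every edge $x_ix_j\in E(G)$; assign to $a_i'$ all neighbours of $a_i$ other than $x_i$; assign to $x_i''$ the $B$-neighbours of $x_i$; and give $a_i''$ the lone neighbour $x_i''$. One then checks there is no edge between $A'\cup\{x_1',\dots,x_k'\}$ and $B\cup\{x_1'',a_1'',\dots,x_k'',a_k''\}$ (where $A'$ is $A$ with each $a_i$ renamed $a_i'$), so $G\wedge\{e_1,\dots,e_k\}$ is disconnected; and since each $x_i$ has a $B$-neighbour, every vertex on the $B$-side lies in a component containing a genuine vertex of $B$, so the $B$-side meets the e-divided requirement. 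This gives $\kappa'_d(G)\le k=\kappa(G)=\kappa_d(G)$.

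For reachability: a path $P_n$ with $n\ge 4$ has $\kappa_d(P_n)=\kappa(P_n)=1$, and one e-split of an interior edge, routing one pendant direction into each half, disconnects $P_n$ into two pieces each retaining an original vertex, so $\kappa'_d(P_n)=1=\kappa_d(P_n)$; and the graph of Fig.~\ref{fig:5-degree-connectivity-new} (more generally, two $2$-connected blocks glued along a single edge whose split detaches them) has $\kappa'_d=1$ and $\kappa_d=2=2\kappa'_d$. The main obstacle is verifying, in the left-hand inequality, that the $A$-side of $G\wedge\{e_1,\dots,e_k\}$ also meets the e-divided requirement: when $|A|=k$, so $A=\{a_1,\dots,a_k\}$ contains only split vertices, a resulting $A$-side component could contain no original vertex. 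I expect to dispose of this by re-choosing $(A,B)$ whenever some component of $G-V^*$ has more than $k$ vertices, and otherwise exploiting $\delta(G)\ge\kappa(G)=k$ (so the cut vertices $x_i$ carry spare neighbours that can be routed to anchor every $A$-side component at an original vertex), treating the few remaining small configurations by hand.
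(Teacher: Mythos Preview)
The paper does not actually supply a proof of this theorem; it is quoted as a result from the external reference \cite{Wang-Ma-Yao2019-spltting-connectivity}, so there is no in-paper argument to compare against. I can therefore only assess your proposal on its own merits.

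Your treatment of the upper bound $\kappa_d(G)\le 2\kappa'_d(G)$ is clean and correct: taking $V^*$ to be the set of endpoints of a minimum e-dividing edge set and observing that any $w_1$--$w_2$ path in $G-V^*$ is untouched by the edge-splits is exactly the right mechanism, and the e-divided definition guarantees the existence of suitable $w_1,w_2\notin V^*$.

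For the lower bound $\kappa'_d(G)\le\kappa_d(G)$ your construction is sound in outline, and one point you flag as a worry is actually fine: because $A$ is a single component of $G-V^*$ and each $a_i'$ retains all of $a_i$'s $A$-neighbours, the $A$-side $A'\cup\{x_1',\dots,x_k'\}$ is \emph{connected}, so there is only one $A$-side component to equip with a non-split witness. Thus your patch ``choose $A$ with $|A|>k$ whenever some component of $G-V^*$ has more than $k$ vertices'' already completes the argument in that case. The genuine residual gap is the regime where \emph{every} component of $G-V^*$ has at most $k$ vertices; your appeal to $\delta(G)\ge k$ and ``spare neighbours'' is suggestive but not yet an argument, and you should also note that when $G$ is complete there is no vertex cut at all, so the very first step (``take a minimum vertex cut'') does not apply---you would need either a separate convention for $\kappa'_d(K_n)$ or a direct construction there. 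These are the places where the proposal is still incomplete.
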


\begin{thm}\label{thm:Euler-graph-vs-hamilton-cycle}
Suppose that a connected graph $G$ has a subset $X$ holding $G-X$ to be not connected and to have the most number $n(G-X)=n_{dis}(G)$ of components if and only if each component of $G-X$ is a complete graph.
\end{thm}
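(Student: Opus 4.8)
The plan is to reduce the statement to the elementary fact that $n_{dis}(G)$, the largest value of $n(G-X)$ over all $X\subseteq V(G)$, equals the independence number $\alpha(G)$ of $G$, and then to read off the structure of the extremal sets $X$ from the proof of that identity. Here $n(G-X)$ is the number of components of $G-X$, as in the statement, and $\alpha(G)$ is the size of a largest independent set of $G$.

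First I would establish $n_{dis}(G)=\alpha(G)$ in two steps. For the inequality $n(G-X)\le\alpha(G)$, choose one vertex from each component of $G-X$; the resulting set is independent in $G$, because two vertices lying in distinct components of $G-X$ cannot be adjacent in $G$. For the reverse inequality, take $X=V(G)\setminus S$ with $S$ a maximum independent set; then every vertex of $S$ is isolated in $G-X$, so $n(G-X)=|S|=\alpha(G)$. Hence $n_{dis}(G)=\alpha(G)$, and this value is attained.

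For the ``only if'' direction, suppose $n(G-X)=n_{dis}(G)=\alpha(G)$, and let $C_1,\dots,C_k$ with $k=\alpha(G)$ be the components of $G-X$. Since no edge of $G$ joins distinct $C_i$, we get $\alpha(G)\ge\alpha(G-X)=\sum_{i=1}^{k}\alpha(C_i)\ge k=\alpha(G)$, so equality holds throughout; in particular $\alpha(C_i)=1$ for every $i$, i.e.\ each component of $G-X$ is a complete graph. For the ``if'' direction I would reverse this chain: if every component of $G-X$ is complete then $n(G-X)=\alpha(G-X)$, so $X$ is extremal exactly when $\alpha(G-X)=\alpha(G)$, and in that case $n(G-X)=\alpha(G-X)=\alpha(G)=n_{dis}(G)$.

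The hard part is the ``if'' direction exactly as phrased. The condition that every component of $G-X$ be complete does not by itself force $n(G-X)=n_{dis}(G)$: for $G=P_5$ and $X$ its central vertex, $G-X$ is two disjoint copies of $K_2$, yet $n(G-X)=2<3=n_{dis}(P_5)$. So for a genuine equivalence the statement needs the extra hypothesis $\alpha(G-X)=\alpha(G)$ (equivalently, that $X$ misses some maximum independent set), after which the previous paragraph closes the proof. I would therefore prove the corrected equivalence and point out that the ``only if'' half, which is the substantive content (every set attaining $n_{dis}(G)$ partitions its complement into cliques), holds with no extra hypothesis.
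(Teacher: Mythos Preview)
The paper does not actually prove this theorem; it is quoted from the companion manuscript \cite{Wang-Ma-Yao2019-spltting-connectivity} and stated here without argument, so there is no ``paper's own proof'' to compare against. Your approach must therefore be judged on its own merits.

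Your identification $n_{dis}(G)=\alpha(G)$ and the ``only if'' argument are correct and clean: the chain $\alpha(G)\ge\alpha(G-X)=\sum_i\alpha(C_i)\ge k=\alpha(G)$ forces $\alpha(C_i)=1$ for each component, hence each $C_i$ is complete. That is the substantive half of the result.

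You are also right that the ``if'' direction is false as literally stated. Your $P_5$ counterexample is decisive: deleting the middle vertex leaves two copies of $K_2$, all components complete, yet $n(G-X)=2<3=\alpha(P_5)=n_{dis}(P_5)$. Your proposed repair --- adding the hypothesis $\alpha(G-X)=\alpha(G)$, after which $n(G-X)=\alpha(G-X)=\alpha(G)=n_{dis}(G)$ follows immediately --- is the natural one and yields a genuine equivalence. In short: your ``only if'' proof is complete and correct; the ``if'' direction cannot be proved because the theorem as stated in the paper is wrong in that direction, and you have diagnosed and fixed the error appropriately.
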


As the application of the v-divided and v-coincident operations, the article  \cite{Wang-Ma-Yao2019-spltting-connectivity} shows

\begin{thm}\label{thm:equivalent-Euler-graph}
A simple graph $G$ of $n$ edges is a connected Euler's graph if and only if

(E-1) it can be divided into a cycle $C_n$ by a series of vertex divided operations;

(E-2) its overlapping kennel graph $H$ holds diameter $D(H)\leq 2$ and no vertex of $H$ is adjacent to two vertices of odd-degrees in $H$, simultaneously.
\end{thm}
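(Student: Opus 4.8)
The plan is to use Euler's classical theorem --- a connected simple graph is an Euler graph if and only if every vertex has even degree --- as the hinge, and to prove that each of (E-1) and (E-2) is, by itself, equivalent to ``$G$ is connected and every vertex of $G$ has even degree''.

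For (E-1) I would argue both directions directly from the split/coincide calculus of Definition~\ref{defn:split-operation-combinatoric}. Suppose first that $G$ can be divided into a cycle $C_n$ by a series of vertex-divided operations. Reading this sequence backwards, $G$ is obtained from $C_n$ by the inverse operations, i.e.\ vertex-coincident operations, each of which here merges two vertices $x',x''$ with $N(x')\cap N(x'')=\emptyset$ into one vertex $x$ with $N(x)=N(x')\cup N(x'')$; such a step keeps the graph connected and replaces the two degrees $d(x'),d(x'')$ by their sum. Since in $C_n$ every degree equals $2$, every intermediate graph along the way is connected with all degrees even, so $G$ is a connected Euler graph. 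Conversely, if $G$ is a connected Euler graph, fix an Euler circuit $W=v_0e_1v_1\cdots e_nv_n$ with $v_n=v_0$; a vertex $v$ occurring $k$ times along $W$ has $d_G(v)=2k$, and splitting off one occurrence at a time --- letting each new vertex inherit exactly the two edges meeting $v$ at that occurrence --- is a legitimate vertex-divided operation, because in the \emph{simple} graph $G$ two distinct occurrences of $v$ use distinct edges, so the two endpoints pulled off at one occurrence cannot coincide and cannot lie among the endpoints used at another occurrence, giving $N(x')\cap N(x'')=\emptyset$. Unfolding $W$ in this way produces a connected $2$-regular graph on $n$ vertices and $n$ edges, namely $C_n$. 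Hence (E-1) holds if and only if $G$ is a connected Euler graph.

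For (E-2) the argument must be routed through the overlapping kernel graph $H$ of $G$, in the sense introduced in \cite{Wang-Ma-Yao2019-spltting-connectivity}, which records how vertices of $G$ overlap under the coincident operations that assemble $G$. I would first turn the parity statement into a structural one: because building $G$ from a disjoint union of cycles (equivalently, from $C_n$ after reconnection) only ever adds even amounts to degrees, ``$G$ is Euler'' forces which vertices of $H$ may carry odd degree and constrains their mutual positions; conversely an odd-degree vertex of $G$, or a failure of connectivity, must surface in $H$ as a bad configuration. The forward implication ``$G$ connected Euler $\Rightarrow$ $D(H)\le 2$ and no vertex of $H$ is adjacent to two odd-degree vertices of $H$'' would then follow by showing the kernel of a graph possessing an Euler circuit collapses to something of diameter at most $2$ with the stated odd-degree restriction; the converse I would prove contrapositively, producing from a violation of Euler's condition either two vertices of $H$ at distance exceeding $2$ or a vertex of $H$ adjacent to two odd-degree vertices.

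The main obstacle I expect is exactly this (E-2) half. The notion of overlapping kernel graph is imported from \cite{Wang-Ma-Yao2019-spltting-connectivity}, and the delicate part is the bookkeeping that links parities of degrees in $G$, the adjacency-and-odd-degree pattern in $H$, and the diameter bound $D(H)\le 2$ simultaneously --- in particular the ``not adjacent to two odd-degree vertices at once'' clause and the small boundary cases where $H$ has very few vertices. The (E-1) half, by contrast, is essentially the remark that an Euler circuit \emph{is} a cycle presented with vertex identifications, so only the simplicity of $G$ needs a short verification there.
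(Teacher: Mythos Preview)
The paper does not actually prove this theorem: it is quoted verbatim from \cite{Wang-Ma-Yao2019-spltting-connectivity} and left without argument, so there is no ``paper's own proof'' to compare against. What I can assess is whether your proposal stands on its own.

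Your treatment of (E-1) is correct and is the natural argument. The only point that deserves a word of care is the claim $N(x')\cap N(x'')=\emptyset$ when you peel off one occurrence of $v$ along the Euler circuit: you justify it by simplicity of $G$, which is right --- each neighbour of $v$ is joined to $v$ by a unique edge, that edge is traversed exactly once in $W$, hence the two neighbours used at one passage through $v$ are disjoint from those used at the remaining passages. With that checked, the unfolding of $W$ is a single cycle because $W$ is a single closed walk, so you genuinely land on $C_n$ and not on a disjoint union of cycles.

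For (E-2) you do not have a proof, and you say so honestly. The difficulty is structural rather than technical: the term ``overlapping kernel graph'' is not defined anywhere in the present paper --- it, too, is imported from \cite{Wang-Ma-Yao2019-spltting-connectivity} --- so without consulting that source you cannot even formulate what has to be shown. Your heuristic (parities in $G$ should translate into forbidden adjacency patterns in $H$, and connectivity into a diameter bound) is plausible, but it is not an argument; in particular the clause ``no vertex of $H$ is adjacent to two odd-degree vertices of $H$'' cannot be attacked until one knows precisely which auxiliary graph $H$ is and how its edges arise from the coincidence data. This half therefore remains a genuine gap, and closing it requires the external definition.
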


\subsection{Tree-like structure of Hanzi-gpws}

The sentence ``tree-like structure'' means also ``linear structure''. We have: linear edge-joining structure, linear vertex-combined structure and linear ve-mixed structure.

We present some techniques for building up Hanzi-gpws in the following:

1. Suppose that each of disjoint graphs $G_1,G_2, \dots , G_m$ is connected. We join a vertex $u_i$ of $G_i$ with some vertex $u_j$ of $G_j$ for $i\neq j$ by a new edge, and denote this edge by $u_iu_j$ (called a \emph{joining edge}), such that each graph $G_t$ is joined with another graph $G_s$ with $s,t\in [1,m]$, and the resultant graph is denoted as $G=\ominus ^m_{i=1}G_i$, and the new edge set is denoted as $E^*=\{u_iu_j: u_i\in E(G_i),u_j\in E(G_j)\}$. We refer to this procedure as a \emph{linear edge-joining operation} if $|E^*|=m-1$.

2. If a Hanzi-graph $H$ is disconnected, and it has its components $H_1,H_2, \dots , H_n$, so we can do a linear edge-joining operation to $H$, such that the resultant graph $H^*$ is connected.

3. We combine a vertex $u_i$ of $G_i$ and somr vertex $u_j$ of $G_j$ with $i\neq j$ into one $w_{i,j}=(u_i,u_j)$, we say that $G_i$ is combined with $G_j$, such that each graph $G_t$ is combined with another graph $G_s$ with $s,t\in [1,m]$ and $s \neq t$. The resultant graph is denoted as $G=\odot ^m_1G_i$, we name the procedure of obtaining $G=\odot ^m_1G_i$ a \emph{linear vertex-combined operation} if the overlapping vertex set $\{w_{i,j}=(u_i,u_j)\}$ has just $m-1$ vertices.

We say that each linear edge-joining operation corresponds a linear vertex-combined operation, and call them as a \emph{linear matching operation}, since a linear vertex-combined operation can be defined as: Suppose that $G=\ominus ^m_{i=1}G_i$ is obtained by doing a linear edge-joining operation to disjoint connected Hanzi-graphs $G_1,G_2, \dots , G_m$, and $E^*$ is the set of all joining edges. We contract each joining edge of $E^*$ to a vertex (see Fig.\ref{fig:split-operation-contract-subdivision}), so two ends of the edge are coincided into one, called a \emph{combined vertex}, and the resultant graph is just $\odot ^m_{i=1}G_i$ defined above. Conversely, we can subdivide each combined vertex $x$ of $\odot ^m_{i=1}G_i$ (see Fig.\ref{fig:split-operation-contract-subdivision}), which is the result of combining a vertex $x_i$ of $G_i$ and a vertex $x_j$ of $G_j$ with $i\neq j$ into one, to be an edge $x_ix_j$. Thereby, the result of subdividing all combined vertices of $\odot ^m_{i=1}G_i$ is just $\ominus ^m_{i=1}G_i$ defined above. The operation mentioned here are defined in Definition \ref{defn:split-operation-combinatoric}.

\begin{thm} \label{thm:contract-result}
Suppose $G=\ominus ^m_{i=1}G_i$ obtained by doing a series of edge-joining operations on disjoint connected graphs $G_1,G_2, \dots , G_m$. We contract each $G_i$ to a vertex, the resultant graph is a tree if and only if the joining edge set $E^*$ of $G$ has only $(m-1)$ edges.
\end{thm}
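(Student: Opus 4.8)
The plan is to reduce the statement to the elementary characterisation ``a connected graph on $n$ vertices is a tree if and only if it has $n-1$ edges'', applied to the graph obtained by collapsing each $G_i$ to a single vertex.

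First I would build this quotient graph $H$ explicitly: for each $i\in[1,m]$, contract every edge lying inside $G_i$, which is a repeated application of the edge-contracting operation of Definition \ref{defn:split-operation-combinatoric}. Since each $G_i$ is connected, this collapses $G_i$ to a single vertex $v_i$, so $V(H)=\{v_1,\dots,v_m\}$ and $|V(H)|=m$; this $H$ is precisely ``the resultant graph'' in the statement. The key bookkeeping step is to match the edges of $H$ with $E^*$: every internal edge of every $G_i$ has been contracted away, whereas each joining edge $u_iu_j\in E^*$, with $u_i\in V(G_i)$, $u_j\in V(G_j)$ and $i\neq j$, survives and becomes the edge $v_iv_j$ of $H$. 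Because $i\neq j$, no joining edge turns into a loop, and two parallel joining edges simply stay parallel, so we obtain a bijection between $E(H)$ and $E^*$, hence $|E(H)|=|E^*|$. Finally I would record that edge contraction preserves connectedness, so the connectedness of $G=\ominus^m_{i=1}G_i$ (which is what the joining construction is designed to produce) carries over to $H$.

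With $H$ on the table both implications are short. If $H$ is a tree, then $|E(H)|=|V(H)|-1=m-1$, and the bijection gives $|E^*|=m-1$. Conversely, if $|E^*|=m-1$, then $H$ is connected with $m$ vertices and exactly $m-1$ edges; should $H$ contain any cycle (in particular a loop or a pair of parallel edges), deleting one edge of that cycle would leave $H$ connected on $m$ vertices with only $m-2$ edges, contradicting the basic bound that a connected graph on $m$ vertices has at least $m-1$ edges (\cite{Bondy-2008}). Hence $H$ is acyclic and connected, that is, a tree.

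I do not expect a serious obstacle here; the only points demanding care are the edge correspondence $E(H)\leftrightarrow E^*$ (no joining edge gets contracted or becomes a loop, and multiplicities are respected) and the fact that the statement tacitly needs $G$ connected. Without connectedness the ``if'' direction fails: for instance with $m=4$ and $E^*$ consisting of two parallel joining edges between $G_1$ and $G_2$ together with one joining edge between $G_3$ and $G_4$, one has $|E^*|=m-1$ while the contracted graph is disconnected and thus not a tree.
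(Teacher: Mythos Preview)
Your argument is correct and is the natural one: pass to the quotient graph $H$ on the $m$ vertices $v_1,\dots,v_m$, observe that the surviving edges are exactly those of $E^*$, and then invoke the elementary fact that a connected (multi)graph on $m$ vertices is a tree if and only if it has $m-1$ edges. The paper itself states Theorem~\ref{thm:contract-result} without proof, so there is nothing to compare against; your write-up would serve perfectly well as the missing justification.

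Your side remark about connectedness is well taken. The paper's description of the joining construction only requires that ``each graph $G_t$ is joined with another graph $G_s$'', which, read literally, does not force $G=\ominus^{m}_{i=1}G_i$ to be connected, and your four-block counterexample shows the ``if'' direction then fails. In the paper's usage, however, $\ominus^{m}_{i=1}G_i$ is always meant to produce a connected graph (cf.\ the surrounding discussion of linear edge-joining and the caterpillar constructions), so it is reasonable to treat connectedness of $G$ as part of the hypothesis; you are right to flag that this should be made explicit.
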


\begin{thm} \label{thm:contract-subdivide}
Suppose $G=\odot ^m_{i=1}G_i$ obtained by doing a series of vertex-combined operations on disjoint connected graphs $G_1,G_2, \dots , G_m$. We subdivide each combined vertex $x$ of $\odot ^m_{i=1}G_i$ by an edge $x_ix_j$ with $x_i\in V(G_i)$ and $x_j\in V(G_j)$, so we get $G\wedge \{x\}=\ominus ^m_{i=1}G_i$, where $\{x\}$ is the set of combined vertices. Then $G=\odot ^m_{i=1}G_i$ is obtained by doing linear vertex-combined operations on disjoint connected graphs $G_1,G_2, \dots , G_m$ if and only if $G\wedge \{x\}=\ominus ^m_{i=1}G_i$ is a result of doing linear edge-joining operations on disjoint connected graphs $G_1,G_2, \dots , G_m$.
\end{thm}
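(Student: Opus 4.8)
The plan is to reduce the statement to a bijection between the set $\{x\}$ of combined vertices of $\odot^m_{i=1}G_i$ and the joining-edge set $E^*$ of $\ominus^m_{i=1}G_i$, and then to match the two ``linear'' conditions through the equality of cardinalities this bijection provides. First I would record the effect of a single edge-subdivided operation from Definition \ref{defn:split-operation-combinatoric}: if $x=(x_i,x_j)$ is a combined vertex, with $x_i\in V(G_i)$ and $x_j\in V(G_j)$ for some $i\neq j$, then subdividing $x$ deletes nothing, keeps every old edge, creates exactly one new edge $x_ix_j$, and this new edge is a joining edge of the resulting graph because its two ends lie in the distinct components $G_i$ and $G_j$. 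Conversely the edge-contracted operation applied to $x_ix_j$ restores $x$ and nothing else. Performing this for every combined vertex simultaneously is exactly the passage from $\odot^m_{i=1}G_i$ to $(\odot^m_{i=1}G_i)\wedge\{x\}=\ominus^m_{i=1}G_i$ described just before the theorem, so the assignment $x\mapsto x_ix_j$ is a well-defined map $\{x\}\to E^*$; it is injective, since distinct combined vertices are distinct vertices of $\odot^m_{i=1}G_i$ and yield subdivisions at different places, and surjective, since every joining edge of $\ominus^m_{i=1}G_i$ is by construction one of the inserted edges $x_ix_j$. Hence $|\{x\}|=|E^*|$.

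With this equality in hand the theorem is immediate from the two definitions. By the definition of the linear vertex-combined operation, $G=\odot^m_{i=1}G_i$ is obtained by linear vertex-combined operations on $G_1,G_2,\dots,G_m$ precisely when the overlapping vertex set satisfies $|\{x\}|=m-1$; by the definition of the linear edge-joining operation, $\ominus^m_{i=1}G_i$ is obtained by linear edge-joining operations precisely when $|E^*|=m-1$. Since $|\{x\}|=|E^*|$, one condition holds if and only if the other does. The ``each component is linked to another component'' clause of both definitions transfers automatically: subdividing a combined vertex $x=(x_i,x_j)$ leaves $G_i$ and $G_j$ joined (now by the edge $x_ix_j$ rather than by a shared vertex), and contracting a joining edge $x_ix_j$ leaves them combined, so the incidence pattern among $G_1,G_2,\dots,G_m$ is the same in the two pictures; in particular, in the linear case, contracting each $G_i$ to a single vertex yields a tree on $m$ vertices in either description by Theorem \ref{thm:contract-result}.

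The only delicate point, and the one I would be most careful about, is the bookkeeping that makes $x\mapsto x_ix_j$ genuinely a bijection: one must use that each combined vertex is the identification of exactly one vertex of one $G_i$ with exactly one vertex of another $G_j$ (as the notation $w_{i,j}=(u_i,u_j)$ in the definition already stipulates), so that its subdivision inserts a single edge rather than several and so that no joining edge is produced twice; if a single vertex of some $G_i$ were simultaneously merged with vertices of several other components, one would first rewrite this as a sequence of pairwise combinations before applying the argument. Once that is granted, no further computation is needed — the proof is pure counting via the subdivide/contract correspondence of Definition \ref{defn:split-operation-combinatoric}.
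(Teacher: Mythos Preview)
Your proposal is correct and mirrors the paper's approach: the paper does not supply a formal proof for this theorem, but the paragraph immediately preceding Theorems \ref{thm:contract-result} and \ref{thm:contract-subdivide} sketches exactly the subdivide/contract correspondence you spell out, observing that contracting each joining edge of $\ominus^m_{i=1}G_i$ yields $\odot^m_{i=1}G_i$ and, conversely, subdividing each combined vertex of $\odot^m_{i=1}G_i$ recovers $\ominus^m_{i=1}G_i$. Your write-up simply makes the bijection $\{x\}\to E^*$ and the cardinality comparison $|\{x\}|=|E^*|=m-1$ explicit, which is the intended argument.
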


\subsection{Non-tree-like structure of Hanzi-gpws}

Let $\{G_i\}^m_1=\{G_1,G_2,\dots ,G_m\}$. We join a vertex $u\in V(G_i)$ with some vertex $v\in V(G_j)$ by an edge $uv$ with $i,j\in [1,m]$ and $i\neq j$, the resultant graph is denoted as $H=\ominus \langle \{G_i\}^m_1\rangle $, named as a \emph{compound graph}, and particularly, $H=\ominus \langle G_1, G_2\rangle $ if $m=2$. If any pair of $G_i$ and $G_j$ is joined by a unique edge $uv$ with $u\in V(G_i)$ and $v\in V(G_j)$, then we say the compound graph $H$ to be \emph{simple}, otherwise $H$ to be \emph{multiple}. We call $H$ a non-tree-like structure if $H$ contains a cycle $C=u_{i_1}u_{i_2}\cdots u_{i_k}u_{i_1}$ with $k\geq 3$ and $u_{i_j}\in V(G_{i_j})$ and $j\in [1,k]$, as well as $G_{i_j}\in \{G_i\}^m_1$ with $i_j\neq i_t$ if $j\neq t$.

\subsection{Graphs made by Hanzi-gpws}

\begin{defn}\label{defn:graceful-k-rotatable}
A $(p,q)$-graph $G$ is \emph{graceful $k$-rotatable} for a constant $k\in [0,q]$ if any vertex $u$ of $G$ can be labelled as $f(u)=k$ by some graceful labelling $f$ of $G$.\qqed
\end{defn}

We can generalize Definition \ref{defn:graceful-k-rotatable} to other labellings as follows:
\begin{defn}\label{defn:graceful-0-rotatable}
(\cite{Gallian2018, Wang-Xu-Yao-2016, Wang-Xu-Yao-Key-models-Lock-models-2016, Wang-Xu-Yao-2017-Twin2017,Yao-Sun-Zhang-Mu-Sun-Wang-Su-Zhang-Yang-Yang-2018arXiv, Yao-Zhang-Sun-Mu-Sun-Wang-Wang-Ma-Su-Yang-Yang-Zhang-2018arXiv}) A $(p,q)$-graph $G$ is \emph{$\varepsilon$-$k$-rotatable} for a constant $k\in [a,b]$ if any vertex $u$ of $G$ can be labelled as $f(u)=k$ by some $\varepsilon$-labelling $f$, where $\varepsilon\in \{$odd-graceful, twin odd-graceful, elegant, odd-elegant, twin odd-elegant, felicitous, edge-magic total, edge-magic total graceful, 6C-labelling, etc.$\}$.\qqed
\end{defn}

In fact, determining the $0$-rotatable gracefulness of trees is not slight, even for caterpillars (Ref. \cite{Zhou-Yao-Cheng-2011}). A hz-$\varepsilon$-graph $G$ (see Fig.\ref{fig:graceful-transformation-1} and Fig.\ref{fig:hz-varepsilon-graph}) has its vertices corresponding Hanzi-gpws, and two vertices of $G$ are adjacent to each other if the corresponding Hanzi-gpws $H^{gpw}_{u_i}$ having its $\varepsilon$-labelling $f_i$ and topological structure $T_{u_i}$ with $i=1,2$ can be transformed to each other by the following rules, where $f_1$ and $f_2$ are the same-type of labellings, for example, $f_1$ and $f_2$ are flawed graceful, and so on.

\begin{asparaenum}[\textbf{Rule}-1. ]
\item \label{item:dual-labelling} The labelling of $H^{gpw}_{u_i}$ is the dual labelling of the labelling of $H^{gpw}_{u_{3-i}}$ with $i=1,2$, so $T_{u_1}\cong T_{u_2}$, here $T_{u_i}$ is the topological structure of Hanzi-gpw $H^{gpw}_{u_i}$ with $i=1,2$.
\item \label{item:adding-reducing-v-e} $H^{gpw}_{u_i}$ is obtained by adding (reducing) vertices and edges to $H^{gpw}_{u_{3-i}}$, such that $f_i$ can be deduced by $f_{3-i}$ with $i=1,2$.
\item \label{item:complete-same} $H^{gpw}_{u_1}=H^{gpw}_{u_2}$, namely, $f_1=f_2$ and $T_{u_1}\cong T_{u_2}$.
\item \label{item:image} $H^{gpw}_{u_i}$ is an image of Hanzi-gpw $H^{gpw}_{u_{3-i}}$ with $i=1,2$.
\item \label{item:inverse} $H^{gpw}_{u_i}$ is an inverse of Hanzi-gpw $H^{gpw}_{u_{3-i}}$ with $i=1,2$.
\item \label{item:same-structure-different-lab} $T_{u_1}\cong T_{u_2}$, but $f_1\neq f_2$.
\end{asparaenum}

\begin{figure}[h]
\centering
\includegraphics[height=10cm]{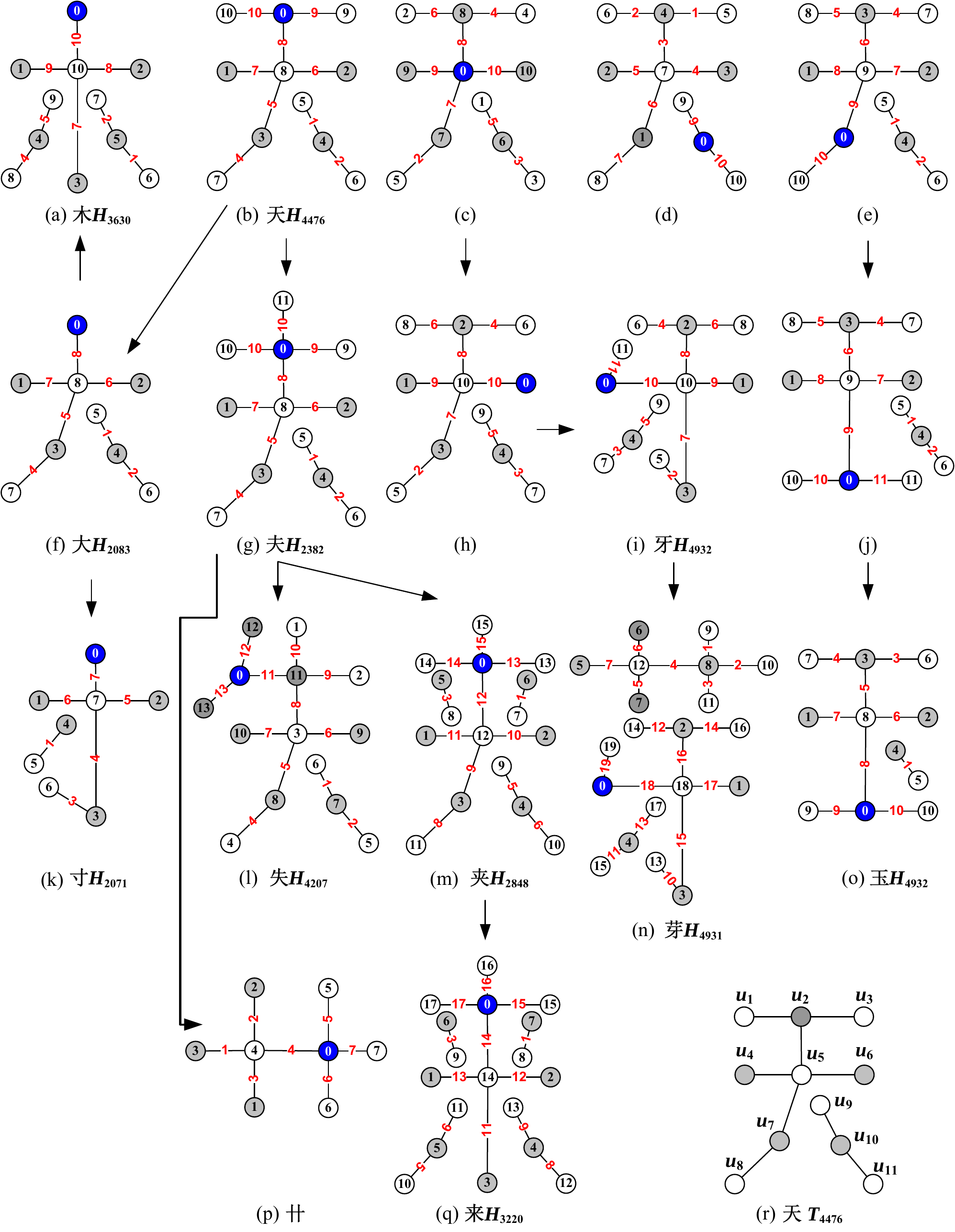}\\
\caption{\label{fig:graceful-transformation-1} {\small A derivative hanzi-system built on a Hanzi $H_{4476}$.}}
\end{figure}

A \emph{generalized Hanzi-gpw system} shown in Fig.\ref{fig:graceful-transformation-1} is based on a Hanzi $H_{4476}$ with its topological structure $T_{4476}$ shown in Fig.\ref{fig:graceful-transformation-1} (r). Based on the Hanzi $H_{4476}$, four Hanzi-gpws $H^{gpw_j}_{4476}$ with $j\in[1,4]$ are (b)$=H^{gpw_1}_{4476}$, (c)$=H^{gpw_2}_{4476}$, (d)$=H^{gpw_3}_{4476}$ and (e)$=H^{gpw_4}_{4476}$ respectively, which tell us that the Hanzi $H_{4476}$ admits \emph{flawed graceful $o$-rotatable labellings}, since each vertex of the Hanzi $H_{4476}$ can be labelled with $0$ by some graceful labelling of $H_{4476}$ under \textbf{Rule}-\ref{item:dual-labelling}. Because each Hanzi-gpws $H^{gpw_j}_{4476}$ admit a \emph{flawed set-ordered graceful labelling} $f_j$ with $j\in[1,4]$, so $f_j(V(H^{gpw_j}_{4476}))=[0,10]$ for $j\in[1,4]$ according to Definition \ref{defn:group-flawed-labellings}.

\begin{figure}[h]
\centering
\includegraphics[height=3cm]{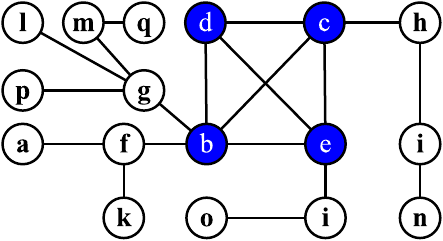}\\
\caption{\label{fig:hz-varepsilon-graph} {\small A hz-$\varepsilon$-graph made by the derivative hanzi-system of a Hanzi $H_{4476}$ shown in Fig.\ref{fig:graceful-transformation-1}, where $\varepsilon$ is the graceful labelling.}}
\end{figure}

Clearly, our hz-$\varepsilon$-graphs are Hanzi-gpws too. In Fig.\ref{fig:graceful-transformation-1}, we can see the following facts:

\begin{asparaenum}[(1) ]
\item Two flawed set-ordered graceful labellings in Fig.\ref{fig:graceful-transformation-1}(c) and (h) are \emph{dual} to each other. Again, adding a new vertex and a new edge makes (i)$=H^{gpw}_{4932}$; then (n)$=H^{gpw}_{493}$ is obtained by adding 8 new vertices and 7 new edges to (i).

\item By \textbf{Rule}-\ref{item:adding-reducing-v-e}, $H^{gpw}_{4476} \rightarrow H^{gpw}_{2328}$ from (b) to (g) after adding a vertex and an edge to $H^{gpw}_{4476}$.

\item Do \textbf{Rule}-\ref{item:dual-labelling} to (g)$=H^{gpw}_{2328}$, and adding two vertices and two edges produces (l)$=H^{gpw}_{4207}$ by \textbf{Rule}-\ref{item:adding-reducing-v-e}.

\item Deleting two vertices and two edges from (b)$=H^{gpw}_{4476}$, under \textbf{Rule}-\ref{item:adding-reducing-v-e}, so obtaining (f)$=H^{gpw}_{2083}$; and deleting a vertex and an edge of (f)$=H^{gpw}_{2083}$ produces (k)$=H^{gpw}_{2071}$; and do \textbf{Rule}-\ref{item:adding-reducing-v-e} two times for getting (a)$=H^{gpw}_{3630}$.

\item The Hanzi-gpw (q)$=H^{gpw}_{3220}$ is the result of adding four new vertices and two new edges, and deleting an old vertex and an old edge to (m)$=H^{gpw}_{2848}$.
\end{asparaenum}

\begin{figure}[h]
\centering
\includegraphics[height=9.2cm]{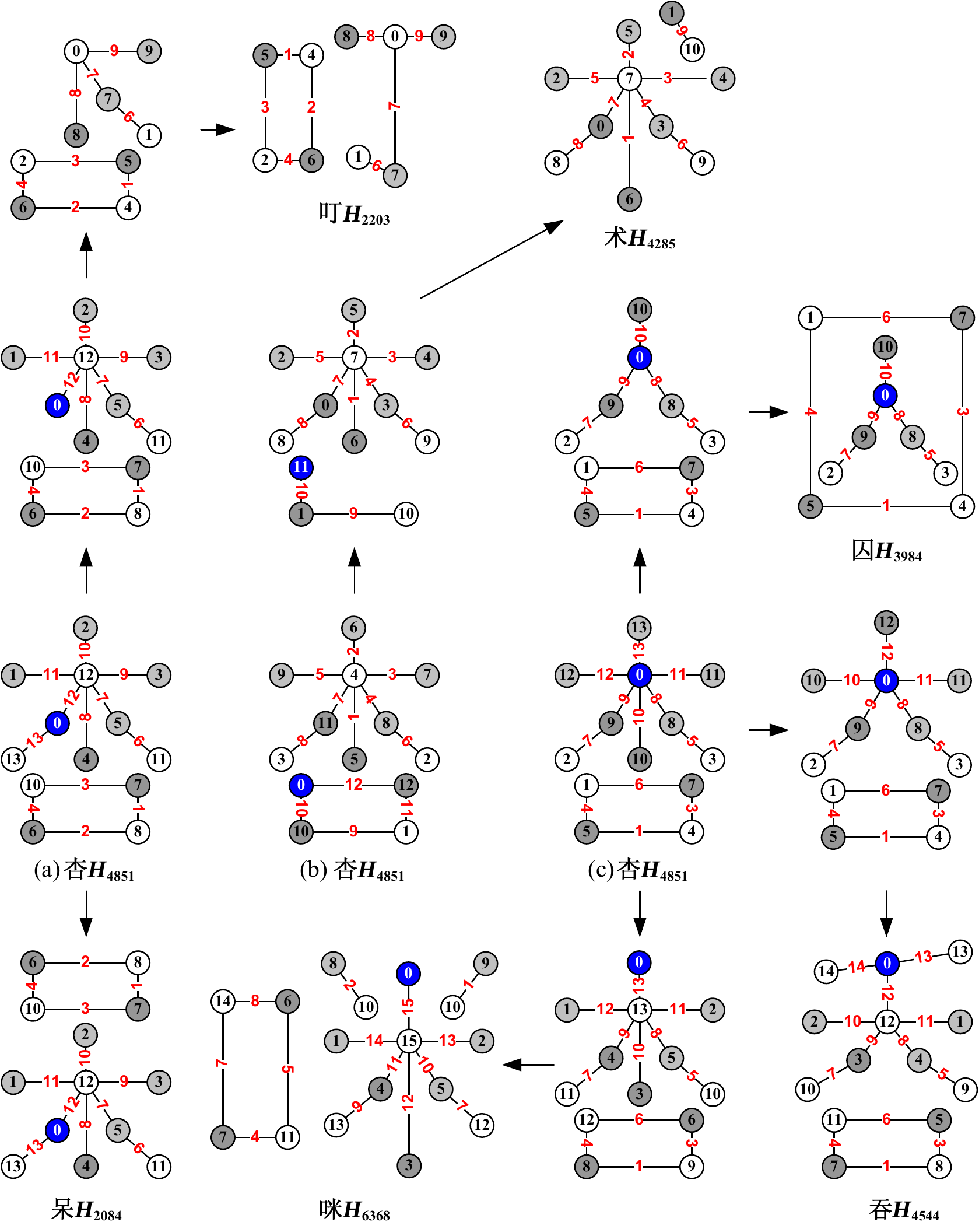}\\
\caption{\label{fig:graceful-transformation-2} {\small Hanzi-gpws made by Hanzis lebelled with $H_{abcd}$ and pseudo-Hanzis with no labels.}}
\end{figure}

\subsection{Estimating space of Hanzi-gpws}

Estimating the space of Hanzi-gpws will be facing the following basic problems:

\begin{asparaenum}[\textrm{Space}-1. ]
\item Estimating the space of Hanzi-graphs.

\item Find out graph labellings that are admitted by Hanzi-graphs.
\end{asparaenum}

As known, there are over 200 graph labellings introduced in the famous survey \cite{Gallian2018}. Meanwhile, new graph labellings emerge everyday.

We present an example in Fig.\ref{fig:1-group-more-flawed-graceful}, in which there are seven connected components in a group of three Hanzis-gpws $H^{gpw}_{4043}$, $H^{gpw}_{2511}$ and $H^{gpw}_{3829}$. Clearly, we have $7!=5040$ flawed set-ordered graceful labellings on $H^{gpw}_{4043}$, $H^{gpw}_{2511}$ and $H^{gpw}_{3829}$, which can deduce $7!=5040$ flawed set-ordered $\varepsilon$-labellings, here, the flawed set-ordered $\varepsilon$-labelling is equivalent to the flawed set-ordered graceful labelling in Definition \ref{defn:group-flawed-labellings}.

\begin{figure}[h]
\centering
\includegraphics[height=10cm]{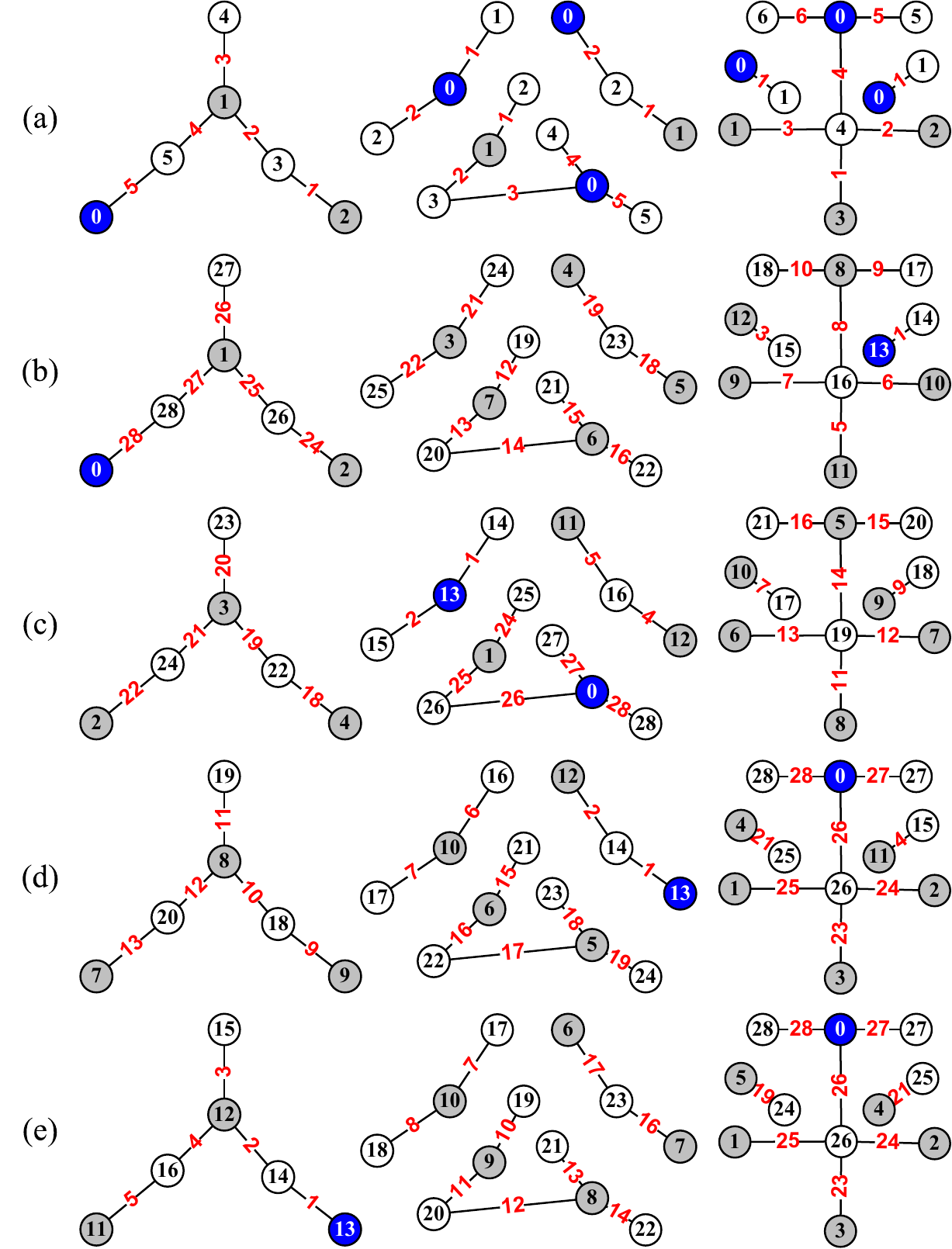}\\
\caption{\label{fig:1-group-more-flawed-graceful} {\small (a) A group of three Hanzis-gpws with their seven connected components having set-ordered graceful labellings; (b)-(e) four different flawed set-ordered graceful labellings.}}
\end{figure}

Two examples shown in Fig.\ref{fig:rrhg-set-graceful-0} and Fig.\ref{fig:rrhg-set-graceful-1} motivate us to find such edge sets $E^*$ to join disjoint connected graphs $G_1,G_2,\dots, G_m$, such that the resultant graph $H=E^*+G$, where $G=\bigcup ^m_{i=1}G_i$, admits a set-ordered graceful labelling $f$. So, $G=\bigcup^m_{i=1}G_i$ admits a flawed set-ordered graceful labelling defined in Definition \ref{defn:group-flawed-labellings}. Notice that

(1) $E(G)=E(H)\setminus E^*$ with $|E(H)|=91$,

(2) $V(G)=V(H)=X\cup Y$ such that $x\in X$ and $y\in Y$ for each edge $xy\in E(G)$ or $xy\in E(H)$, and ``set-ordered'' means $\max f(X)<\min f(Y)$.

Hence, we have the edge label set $f(E^*)=\{f(x_iy_i)=f(y_i)-f(x_i): ~x_iy_i\in E^*\}=\{2$, $6$, $11$, $20$, $26$, $29$, $37$, $40$, $42$, $44$, $47$, $51$, $57$, $60$, $63$, $70$, $80$, $86\}$ with $18=|E^*|$. Thereby, we can set: $(2+k_1)-k_1=2$, $(6+k_2)-k_2=6$, $(11+k_3)-k_3=11$, $(20+k_4)-k_4=20$, $(26+k_5)-k_5=26$, $(29+k_6)-k_6=29$, $(37+k_7)-k_7=37$, $(40+k_8)-k_8=40$, $(42+k_9)-k_9=42$, $(44+k_{10})-k_{10}=44$, $(47+k_{11})-k_{11}=47$, $(51+k_{12})-k_{12}=51$, $(57+k_{13})-k_{13}=57$, $(60+k_{14})-k_{14}=60$, $(63+k_{15})-k_{15}=63$, $(70+k_{16})-k_{16}=70$, $(80+k_{17})-k_{17}=80$, $(86+k_{18})-k_{18}=86$. Each $k_i\in [a_i,b_i]$ with $i\in [1,18]$, for example, $k_{18}\in [0,5]$, $k_{17}\in [0,11]$, $k_{16}\in [0,21]$, $k_{15}\in [0,32]$, etc. We have deduced $k_i\in [0,91-f(x_iy_i)]=[0,91-f(y_i)+f(x_i)]$ with $i\in [1,18]$.

We refer to the set $f(E^*)$ as an \emph{$f$-base} under the set-ordered graceful labelling $f$ of $H$. Each group $L=\{k_{i_1},k_{i_2},\dots ,k_{i_{18}}\}$ with $k_i\in [a_i,b_i]$ for $i\in [1,18]$ yields an edge set $E(E^*,L)$ to form a connected graph $H_L$ having a set-ordered graceful labelling $f_L$ and $V(G)=V(H_L)$, as well as $E(G)=E(H)\setminus E(E^*,L)$ and $|E(H_L)|=91$ based on the $f$-base. We are ready to present a generalization of flawed type labellings.

\begin{defn}\label{defn:generalization-flawed-labellings-00}
$^*$ Let $G_1,G_2,\dots, G_m$ be disjoint connected graphs, let $G=\bigcup^m_{i=1}G_i$ and let $E_j$ with $j\in [1,n]$ be an edge set such that each edge $uv$ of $E_j$ has one end $u$ in some $G_i$ and another end $v$ is in some $G_j$ with $i\neq j$, and $E_j$ joins $G_1,G_2,\dots, G_m$ together to form a connected graph $H_j$, denoted as $H_j=E_j+G$. We say $G=\bigcup^m_{i=1}G_i$ to be a disconnected $(p,q)$-graph with $p=|V(H_j)|=\sum^m_{i=1}|V(G_i)|$ and $q=|E(H_j)|-|E_j|=(\sum^m_{i=1}|E(G_i)|)-|E_j|$.

Suppose that $H_k$ admits a set-ordered graceful labelling $f_k$ with $k\in [1,n_k]$, where $n_k$ is the number of set-ordered graceful labellings admitted by $H_k$, correspondingly, $G$ admits a flawed set-ordered graceful labelling $g_k$ with $k\in [1,n_k]$, such that $\max f_k(X)<\min f_k(Y)$ and $\max g_k(X)<\min g_k(Y)$ with $V(G)=V(H_k)=X\cup Y$. We call the edge label set $f_k(E_k)=\{f_k(x_{k,i}y_{k,i})=f_k(y_{k,i})-f_k(x_{k,i}): ~x_{k,i}y_{k,i}\in E_k,~i\in [1,|E_k|]\}$ as an \emph{$f_k$-base}. Since $[f_k(x_{k,i}y_{k,i})+k_s]-k_s=f_k(x_{k,i}y_{k,i})$ as each $k_s\in [a_s,b_s]$ ($a_s\in f_k(X)$ and $b_s\in f_k(Y)$) with $s\in [1,|E_k|]$, each new edge set $E^s_k=\{u_{s,i}v_{s,i}:~i\in [1,|E_k|]\}$ holding $f_k(E^s_k)=f_k(E_k)$ induces a connected adding-edge graph $H^s_k=E^s_k+G$.\qqed
\end{defn}

In Definition \ref{defn:generalization-flawed-labellings-00}, it is easy to see $m-1=\min\{|E_j|:j\in [1,n]\}$, however, determining $\max \{|E_j|:j\in [1,n]\}$ seems to be not slight. Since there are many connected graphs $H^s_k=E^s_k+G$ induced by each $f_k$-base based on $H_k$ shown in Definition \ref{defn:generalization-flawed-labellings-00}. Let $N_{um}(H^s_k)$ be the number of set-ordered graceful labellings admitted by $H^s_k$, and so $G$ admits $n_k\cdot |E_k|\cdot N_{um}(H^s_k)$ set-ordered graceful labellings at least.

\begin{figure}[h]
\centering
\includegraphics[height=6.2cm]{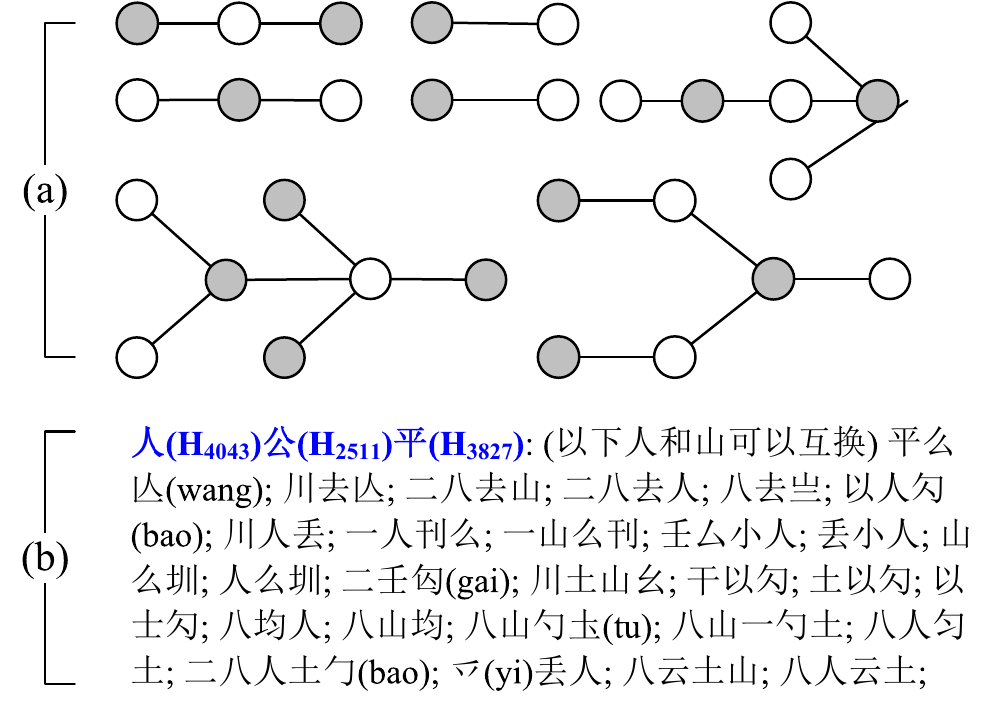}\\
\caption{\label{fig:1-group-more-flawed-graceful} {\small Paragraphs can be transformed into Hanzi-graphs having the same components with that of three Hanzis $H_{4043}$, $H_{2511}$ and $H_{3829}$. (a) A public key; (b) possible private keys.}}
\end{figure}

\subsection{Self-growable Hanzi-gpws}

Some Hanzi-gpws can grow to many Topsnut-gpws. An example shown in Fig.\ref{fig:self-growing-11} is just a \emph{self-growable Hanzi-gpw}, in which $H^{gpw}_{4476}$ is as a public key, $H^{gpw}_{4585}$ is as a private key, the authentication is shown in (e). And the graph $G_1=H^{gpw_0}_{4476}$ shown in (a) admits a flawed set-ordered graceful labelling. It is not hard to see $G_1\rightarrow G_2\rightarrow \cdots \rightarrow G_m$, each $G_i$ admits a flawed set-ordered graceful labelling. Since $G_i$ is a proper subgraph of $G_{i+1}$, we name the sequence $\{G_i\}^m_1$ as a \emph{self-growing Hanzi-gpw sequence}.

\begin{figure}[h]
\centering
\includegraphics[height=7cm]{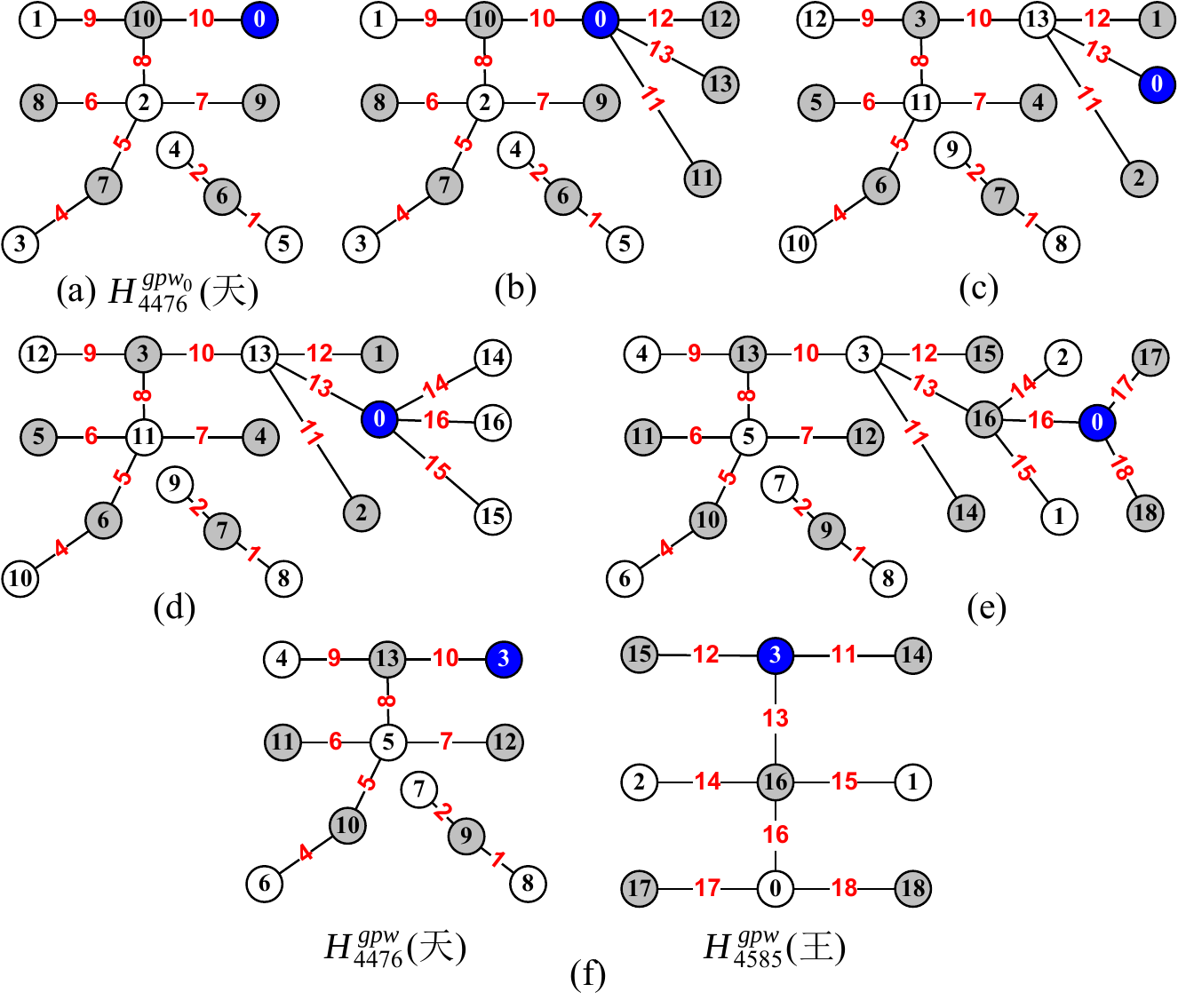}\\
\caption{\label{fig:self-growing-11} {\small Two Hanzi-gpws ($H_{4476}$, $H_{4585}$) shown in (f) are made from (a) to (e).}}
\end{figure}

Observe examples shown in Fig.\ref{fig:self-growing-11} and Fig.\ref{fig:self-growing-22}, we can define a new labelling in Definition \ref{defn:joint-flawed-labellings} as follows:
\begin{defn}\label{defn:joint-flawed-labellings}
$^*$ Let $H=\bigcup^m_{i=1}G_i$ with subgraphs $G_1,G_2,\dots, G_m$, and let $H$ admit a flawed set-ordered graceful labelling $h$. If each $G_i$ admits a flawed set-ordered graceful labelling $h_i$ induced by $h$, such that $|h_i(E(G_i))\cap h_{i+1}(E(G_{i+1}))|=1$ for $i\in [1,m-1]$, then we say $h$ a \emph{flawed jointly set-ordered graceful labelling} of $H$ (see examples shown in Fig.\ref{fig:self-growing-11} and Fig.\ref{fig:self-growing-22}).\qqed
\end{defn}

Comparing Definition \ref{defn:generalization-flawed-labellings-00} with Definition \ref{defn:joint-flawed-labellings}, we can claim
\begin{thm}\label{thm:flawed-jointly-set-ordered}
A graph  $G$ defined in Definition \ref{defn:generalization-flawed-labellings-00} admits a flawed set-ordered graceful labelling if and only if $H$ defined in Definition \ref{defn:joint-flawed-labellings} admits a flawed jointly set-ordered graceful labelling.
\end{thm}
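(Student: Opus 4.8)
The plan is to establish both implications by exposing the common object that secretly underlies Definitions \ref{defn:generalization-flawed-labellings-00} and \ref{defn:joint-flawed-labellings}: a \emph{connected host graph} $\widehat{H}$ carrying a genuine set-ordered graceful labelling $\widehat{h}$ with $\max\widehat{h}(X)<\min\widehat{h}(Y)$, together with a distinguished set $E^{\ast}$ of ``linking'' edges of $\widehat{H}$ whose removal splits $\widehat{H}$ into the pieces $G_1,\dots,G_m$, the recorded flaw being exactly the label set $\widehat{h}(E^{\ast})$ (the $f$-base). Both definitions say precisely this; the extra content in Definition \ref{defn:joint-flawed-labellings} is only the demand that the pieces together with their incident linking edges can be listed as a chain $\widehat{G}_1,\dots,\widehat{G}_m$ in which consecutive members meet in a single edge, i.e. $|h_i(E(\widehat{G}_i))\cap h_{i+1}(E(\widehat{G}_{i+1}))|=1$ for $i\in[1,m-1]$.

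For the ``if'' direction ($H$ flawed jointly $\Rightarrow$ $G$ flawed) the argument is short: a flawed jointly set-ordered graceful labelling $h$ of $H=\bigcup_{i=1}^{m}G_i$ is, by Definition \ref{defn:joint-flawed-labellings}, already a flawed set-ordered graceful labelling, and the chain structure supplies $E^{\ast}$ as the union of the single edges shared by consecutive $\widehat{G}_i,\widehat{G}_{i+1}$; cutting these edges decomposes the host into the pieces, so Definition \ref{defn:generalization-flawed-labellings-00} applies verbatim with this $E^{\ast}$ as one admissible $E_j$ and $h(E^{\ast})$ as the associated $f$-base, the bipartition inequality being inherited. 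For the ``only if'' direction ($G$ flawed $\Rightarrow$ $H$ flawed jointly) I would start from $\widehat{H}=E_k+G$ with set-ordered graceful labelling $f_k$ realizing the flaw on $G=\bigcup_{i=1}^{m}G_i$; after discarding superfluous linking edges one may assume $|E_k|=m-1$, whence by Theorem \ref{thm:contract-result} contracting each $G_i$ in $\widehat{H}$ yields a tree $T$ on $m$ nodes. When $T$ is a path (the situation produced by the linear edge-joining operation), relabel the pieces along $T$ as $G_1,\dots,G_m$ with $e_j\in E_k$ the unique edge between $G_j$ and $G_{j+1}$, put $\widehat{G}_1=G_1\cup\{e_1\}$, $\widehat{G}_j=G_j\cup\{e_{j-1},e_j\}$ for $1<j<m$, $\widehat{G}_m=G_m\cup\{e_{m-1}\}$, and take $h=f_k$; then $\bigcup_j\widehat{G}_j=\widehat{H}$, consecutive subgraphs meet exactly in $e_j$, each $h_j:=f_k|_{\widehat{G}_j}$ is a flawed set-ordered graceful labelling with $\max h_j(X\cap V(\widehat{G}_j))<\min h_j(Y\cap V(\widehat{G}_j))$, so $h$ is flawed jointly set-ordered graceful. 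For a general tree $T$ one first rearranges the pieces by induction on $m$ (peel a leaf piece together with its unique linking edge, apply the inductive hypothesis to the remainder so that its chain can be made to terminate at the parent piece, then re-attach), after which the same construction goes through.

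The step I expect to be the main obstacle is exactly this last rearrangement in the ``only if'' direction: forcing the pieces-plus-linking-edges of an arbitrary Definition \ref{defn:generalization-flawed-labellings-00} configuration into a \emph{linear} chain of overlapping subgraphs, since a tree admits a Hamiltonian ordering of its nodes only when it is itself a path. In general one must therefore either appeal to the linearity built into the linear edge-joining operation or re-decompose the host $\widehat{H}$ into a fresh family of subgraphs (legitimate, since Definition \ref{defn:joint-flawed-labellings} only fixes $\bigcup_i\widehat{G}_i=\widehat{H}$). The routine but necessary bookkeeping throughout will be to verify that the set-ordered bipartition inequality $\max h(X)<\min h(Y)$ and the precise description of the flaw as the missing label set $\widehat{h}(E^{\ast})$ survive every regrouping and restriction; this is not deep, but it is where the write-up must be careful.
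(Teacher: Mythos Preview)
The paper provides no proof of this theorem at all: after stating Definitions \ref{defn:generalization-flawed-labellings-00} and \ref{defn:joint-flawed-labellings}, the authors simply write ``Comparing Definition \ref{defn:generalization-flawed-labellings-00} with Definition \ref{defn:joint-flawed-labellings}, we can claim'' and then state the theorem without further argument. So there is nothing to compare your proposal against on the paper's side; your sketch is already far more detailed than what the authors offer.

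Your approach is sound in spirit and correctly identifies the only real content: both definitions package the same underlying object (a connected set-ordered graceful host $\widehat{H}$ together with a distinguished edge set $E^\ast$ whose deletion leaves the disjoint pieces), and the extra ``jointly'' condition in Definition \ref{defn:joint-flawed-labellings} just asks that the pieces-with-linking-edges can be arranged as a chain overlapping in single edges. Your ``if'' direction is immediate and correct. For the ``only if'' direction you correctly flag the genuine obstacle: if the quotient tree $T$ obtained by contracting each $G_i$ is not a path, there is no Hamiltonian ordering of the pieces, and the chain condition $|h_i(E(\widehat{G}_i))\cap h_{i+1}(E(\widehat{G}_{i+1}))|=1$ cannot be met with the given decomposition. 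Your proposed fix---exploiting the freedom in Definition \ref{defn:joint-flawed-labellings} to re-choose the subgraphs (since that definition only requires $H=\bigcup_i G_i$ with overlapping subgraphs, not the original disjoint components)---is the right move and is implicitly what the paper's examples (Figures \ref{fig:self-growing-11} and \ref{fig:self-growing-22}) do: there the $G_i$ are a \emph{nested} or \emph{chained} family built from the host, not the original disjoint pieces. Given how loosely both definitions are stated, your level of care about the bipartition inequality and the flaw bookkeeping already exceeds what the paper demands; I would not invest further effort trying to make the non-path case fully rigorous, since the theorem as written is more of an informal observation than a precise claim.
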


\begin{figure}[h]
\centering
\includegraphics[height=7.6cm]{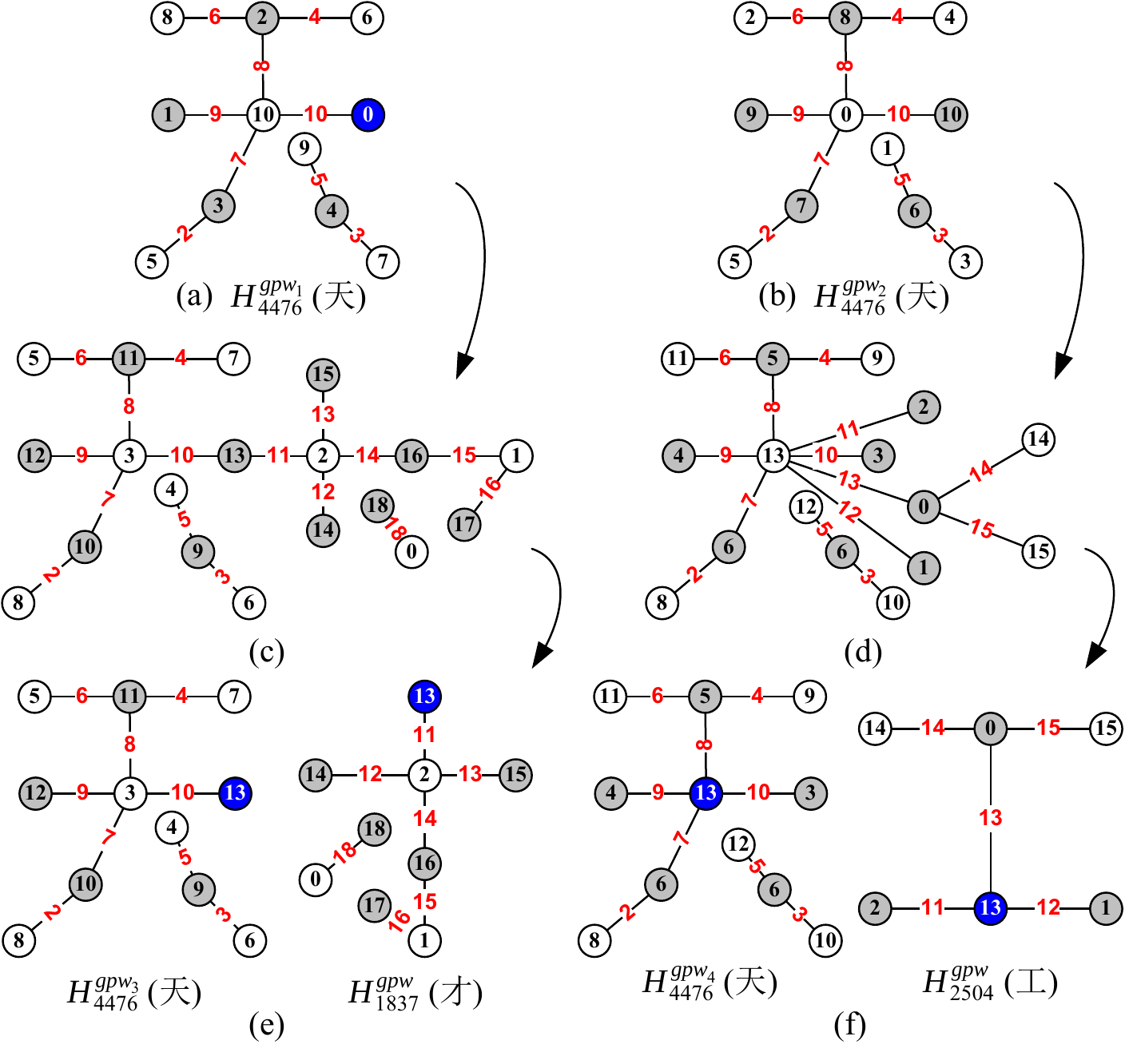}\\
\caption{\label{fig:self-growing-22} {\small Two examples for self-growing Hanzi-gpws admitting flawed set-ordered graceful labellings.}}
\end{figure}

\section{Producing text-based passwords from Hanzi-gpws}

In \cite{Yao-Zhang-Sun-Mu-Sun-Wang-Wang-Ma-Su-Yang-Yang-Zhang-2018arXiv}, some techniques were introduced for making TB-paws from Topsnut-gpws. We will produce TB-paws from Hanzi-gpws in this section.

We are facing the following tasks:

(1) How many techniques are there for generating TB-paws from Hanzi-gpws? And can these techniques be translated into efficient algorithms?

(2) How long bytes of those TB-paws made by Hanzi-gpws are there? As known, the largest prime is $M_{77232917}=2^{77232917}-1$, which has 23,249,425 (twenty-three million two hundred and forty-nine thousand four hundred and twenty-five) bytes long.

(3) How to reconstruct Hanzi-gpws by TB-paws?

\subsection{Matrix expressions of Hanzi-gpws}

As known, each $(p,q)$-graph $G$ of graph theory has itself \emph{adjacency matrix} $A(G)=(a_{ij})_{p\times p}$, where $a_{ij}=1$ if vertex $i$ is adjacent with vertex $j$, otherwise $a_{ij}=0$, as well as $a_{ii}=0$. Here, we adjust $A(G)$ as $A(G,f)=(f(a_{ij}))_{p\times p}$ for a (an odd-)graceful labelling $f$ of $G$ by defining $f(a_{ij})=|f(i)-f(j)|$ if edge $ij\in E(G)$, otherwise $f(a_{ij})=0$ if edge $ij\not\in E(G)$, and $f(a_{ii})=0$. We refer to $A(G,f)$ as an \emph{adjacency edge-value matrix} of $G$.

Motivated from the matrix expression of graphs, such as \emph{incidence matrices} of graphs, we have the following Topsnut-matrix definition:

\begin{defn}\label{defn:Topsnut-matrix}
(\cite{Sun-Zhang-Zhao-Yao-2017, Yao-Sun-Zhao-Li-Yan-2017, Yao-Zhang-Sun-Mu-Sun-Wang-Wang-Ma-Su-Yang-Yang-Zhang-2018arXiv}) A \emph{Topsnut-matrix} $A_{vev}(G)$ of a $(p,q)$-graph $G$ is defined as
\begin{equation}\label{eqa:a-formula}
\centering
{
\begin{split}
A_{vev}(G)&= \left(
\begin{array}{ccccc}
x_{1} & x_{2} & \cdots & x_{q}\\
e_{1} & e_{2} & \cdots & e_{q}\\
y_{1} & y_{2} & \cdots & y_{q}
\end{array}
\right)_{3\times q}\\
&=(X~W~Y)^{-1}_{3\times q}
\end{split}}
\end{equation}\\
where
\begin{equation}\label{eqa:three-vectors}
{
\begin{split}
&X=(x_1 ~ x_2 ~ \cdots ~x_q), W=(e_1 ~ e_2 ~ \cdots ~e_q)\\
&Y=(y_1 ~ y_2 ~\cdots ~ y_q),
\end{split}}
\end{equation}
where each edge $e_i$ has its own two ends $x_i$ and $y_i$ with $i\in [1,q]$; and $G$ has another \emph{Topsnut-matrix} $A_{vv}(G)$ defined as $A_{vv}(G)=(X ~ Y)^{-1}_{2\times q}$, where $X,Y$ are called \emph{vertex-vectors}, $W$ \emph{edge-vector}.\qqed
\end{defn}

By these two new matrices $A(G,f)=(f(a_{ij}))_{p\times p}$ and $A_{vev}(G)=(X~W~Y)^{-1}_{3\times q}$, we can make more complex TB-paws.

\begin{lem}\label{thm:Topsnut-matrix-vs-graph}
Any matrix $(X~W~Y)^{-1}_{3\times q}$ corresponds a graph $G$ with $V(G)=(XY)^*$ and $E(G)=W^*$, where \begin{equation}\label{eqa:Topsnut-matrix-vs-graph}
{
\begin{split}
&X=(x_1 ~ x_2 ~ \cdots ~x_q), W=(e_1 ~ e_2 ~ \cdots ~e_q)\\
&Y=(y_1 ~ y_2 ~\cdots ~ y_q),
\end{split}}
\end{equation} are three vectors of real numbers, and $(XY)^*$ is the set of different elements in $X$ and $Y$, and $W^*$ is the set of different elements in $W$.
\end{lem}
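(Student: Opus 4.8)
The plan is to build the graph $G$ straight out of the three rows of the matrix and then check that the matrix is, by construction, a Topsnut-matrix of $G$ in the sense of Definition~\ref{defn:Topsnut-matrix}; no labelling condition is being asserted, so the work here is purely structural.

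First I would read the matrix as the vectors $X=(x_1~\cdots~x_q)$, $W=(e_1~\cdots~e_q)$, $Y=(y_1~\cdots~y_q)$. Set $V=(XY)^{*}$, the set of pairwise distinct reals occurring among the $x_i$ and the $y_i$, and $E=W^{*}$, the set of distinct values among the $e_i$; then, for each column $i\in[1,q]$, declare the element $e_i\in E$ to be an edge whose two ends are the vertices $x_i$ and $y_i$. This recipe defines an incidence structure $G=(V,E)$ with $V(G)=(XY)^{*}$ and $E(G)=W^{*}$, which is exactly the graph demanded by the statement.

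Next I would verify that the $i$-th column of the original matrix is precisely $(x_i,~e_i,~y_i)^{\mathrm T}$ and that, by the construction above, the edge $e_i$ has exactly the ends $x_i,y_i$; comparing with \eqref{eqa:a-formula}--\eqref{eqa:three-vectors} then shows the matrix has the defining shape of $A_{vev}(G)$, and deleting the middle row $W$ returns $A_{vv}(G)=(X~Y)^{-1}_{2\times q}$, just as in Definition~\ref{defn:Topsnut-matrix}. So the matrix is a Topsnut-matrix of $G$, which completes the correspondence.

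The one place that needs care --- and the step I expect to be the only real obstacle --- is making precise what ``a graph'' means here, i.e.\ handling degenerate inputs. If $x_i=y_i$ for some $i$ the recipe produces a loop; if $\{x_i,y_i\}=\{x_j,y_j\}$ for $i\neq j$ it produces parallel edges; and if the $e_i$ are not pairwise distinct then $|W^{*}|<q$ and two columns would name the same edge. I would therefore state the clean case explicitly: when $x_i\neq y_i$ for all $i$, the unordered pairs $\{x_i,y_i\}$ are pairwise distinct, and $e_1,\dots,e_q$ are pairwise distinct, $G$ is a simple $(p,q)$-graph with $p=|(XY)^{*}|$ and $q$ edges; otherwise the same construction still yields a multigraph (with loops allowed), and the correspondence is to be read in that wider category. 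Everything else is immediate from the definitions, so the argument is short.
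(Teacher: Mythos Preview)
The paper states this lemma without proof; it moves directly to the next lemma. Your construction --- taking $V=(XY)^{*}$, $E=W^{*}$, and declaring that each column $i$ names an edge $e_i$ with ends $x_i,y_i$ --- is precisely the intended reading of Definition~\ref{defn:Topsnut-matrix} run in reverse, and it is the only natural argument here. Your handling of the degenerate cases (loops when $x_i=y_i$, parallel edges when $\{x_i,y_i\}=\{x_j,y_j\}$, and repeated edge names when $|W^{*}|<q$) is in fact more careful than anything the paper says; the paper simply asserts the correspondence and, a few lines later, adopts the convention ``we regard $(XY)^{*}=X\cup Y$ and $W^{*}=W$ in $A_{vev}(G)$ hereafter,'' which tacitly assumes the clean case you isolated.
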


\begin{lem}\label{thm:set-ordered-matrix}
If $\max \{x_i:i\in [1,q]\}<\min \{y_j:j\in [1,q]\}$ in a matrix $A_{vev}(G)=(X~W~Y)^{-1}_{3\times q}$ of a $(p,q)$-graph $G$ if and only if $G$ has no odd-cycle.
\end{lem}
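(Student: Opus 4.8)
The plan is to reduce the statement to the classical theorem that a graph is bipartite if and only if it contains no odd cycle (see \cite{Bondy-2008}), by observing that the matrix condition $\max\{x_i:i\in[1,q]\}<\min\{y_j:j\in[1,q]\}$ is nothing more than an encoding of a bipartition of $G$ together with a set-ordered vertex labelling in the sense of Definition \ref{defn:define-labelling-basic}. Throughout I read the entries $x_i,y_i$ of $A_{vev}(G)=(X~W~Y)^{-1}_{3\times q}$ as the labels of the two ends of the edge $e_i$, so that choosing such a matrix for $G$ amounts to choosing a vertex labelling of $G$ together with, for each edge, a choice of which end heads its column. Isolated vertices of $G$ never occur in the matrix and affect neither the matrix condition nor the presence of odd cycles, so I may assume $G$ has no isolated vertex; the interpretation of the Lemma is then: $G$ admits a Topsnut-matrix $A_{vev}(G)$ with $\max X<\min Y$ if and only if $G$ has no odd cycle.

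For the forward implication, suppose $A_{vev}(G)$ is a matrix of $G$ with $\max\{x_i\}<\min\{y_j\}$. Put $V_X=\{x_1,x_2,\dots,x_q\}$ and $V_Y=\{y_1,y_2,\dots,y_q\}$, viewed as sets of vertices via their labels. The inequality forces $V_X\cap V_Y=\emptyset$, and since every edge $e_i$ of $G$ has one end in $V_X$ and one end in $V_Y$, while (by the no-isolated-vertex assumption) $V_X\cup V_Y=V(G)$, the pair $(V_X,V_Y)$ is a bipartition of $G$. Hence $G$ is bipartite and therefore contains no odd cycle.

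For the converse, suppose $G$ has no odd cycle; then by the classical characterisation $G$ is bipartite, say with bipartition $(A,B)$ and $|A|=a$. Label the vertices of $A$ injectively by $0,1,\dots,a-1$ and the vertices of $B$ injectively by $a,a+1,\dots,a+|B|-1$, so that every label in $A$ is strictly smaller than every label in $B$. For each edge $e_i$ let $x_i$ be the label of its end in $A$ and $y_i$ the label of its end in $B$, and let $W$ list the edges. Then $(X~W~Y)^{-1}_{3\times q}$ is a Topsnut-matrix of $G$ in the sense of Definition \ref{defn:Topsnut-matrix} (its associated graph is $G$ by Lemma \ref{thm:Topsnut-matrix-vs-graph}), and $\max\{x_i\}\le a-1<a\le\min\{y_j\}$, which is exactly the asserted matrix condition.

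Since both directions pass through the bipartite characterisation, the only real work is the dictionary between ``matrix with $\max X<\min Y$'' and ``bipartition with set-ordered labelling''. I expect the main obstacle to be stating this correspondence precisely rather than anything deep: in particular, pinning down the convention that the $x_i,y_i$ are endpoint labels (not arbitrary real entries) and handling cleanly the cases where $G$ is disconnected or has isolated vertices. Once that bookkeeping is fixed, the equivalence is immediate from \cite{Bondy-2008}.
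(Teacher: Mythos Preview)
The paper states this lemma without proof; it is simply asserted and then immediately used to justify the name ``set-ordered Topsnut-matrix''. Your argument is correct and is exactly the natural one: the condition $\max X<\min Y$ forces the $X$-ends and $Y$-ends to form a bipartition of $G$ (so no odd cycle), and conversely any bipartition $(A,B)$ can be turned into such a matrix by labelling $A$ below $B$ and placing each edge's $A$-end in the $X$-row. Since the paper offers nothing to compare against, there is no divergence to discuss; your handling of isolated vertices and your appeal to the classical bipartite characterisation in \cite{Bondy-2008} are appropriate.
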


We name a Topsnut-matrix $A_{vev}(G)=(X~W~Y)^{-1}_{3\times q}$ of a $(p,q)$-graph $G$ with $\max X<\min Y$ as a \emph{set-ordered Topsnut-matrix}. And, $A_{vev}(G)$ is \emph{$X$-increasing} if $x_i\leq x_{i+1}$ with $i\in[1,q-1]$, $A_{vev}(G)$ is \emph{edge-ordered} if $e_j\leq e_{j+1}$ with $j\in[1,q-1]$. For the simplicity of statement, we regard $(XY)^*=X\cup Y$ and $W^*=W$ in $A_{vev}(G)$ hereafter.

\subsection{Operations on Topsnut-matrices \cite{Yao-Zhang-Sun-Mu-Sun-Wang-Wang-Ma-Su-Yang-Yang-Zhang-2018arXiv}}

Suppose that $A_{vev}(G)=(X~W~Y)^{-1}_{3\times q}$ is a Topsnut-matrix of a $(p,q)$-graph $G$. There are some operations on Topsnut-matrices based on Definition \ref{defn:Topsnut-matrix} in the following.

\begin{asparaenum}[Mo-1. ]
\item \textbf{Compound operation.} We define a particular Topsnut-matrix by an edge $e_i=x_iy_i$ of $G$ as $A(e_i)=(x_i~e_i~y_i)^{-1}$, and set a \emph{compound operation} ``$\odot$'' between these Topsnut-matrices $A(e_i)$ with $i\in [1,q]$. Hence, we get
\begin{equation}\label{eqa:edge-topsnut-matrix}
{
\begin{split}
A(e_i)\odot A(e_j)&= \left(
\begin{array}{ccccc}
x_{i}\\
e_{i}\\
y_{i}
\end{array}
\right)\odot
\left(
\begin{array}{ccccc}
x_{j}\\
e_{j}\\
y_{j}
\end{array}
\right)
\\
&= \left(
\begin{array}{ccccc}
x_{i}&x_{j}\\
e_{i}&e_{j}\\
y_{i}&y_{j}
\end{array}
\right)_{3\times 2}\\
&=(x_i~e_i~y_i)^{-1}\odot (x_j~e_j~y_j)^{-1},
\end{split}}
\end{equation}
so we can rewrite the Topsnut-matrix $A_{vev}(G)$ of $G$ in another way
\begin{equation}\label{eqa:dge-topsnut-matrix-0}
{
\begin{split}
A_{vev}(G)=\odot ^q_{i=1}A(e_i).
\end{split}}
\end{equation}

For a group of disjoint Topsnut-gpws $G_1,G_2,\dots ,G_m$, we have a $(p,q)$-graph $G=\bigcup ^m_{i=1}G_i$, where $p=\sum^m_{i=1}|V(G_i)|$ and $q=\sum^m_{i=1}|E(G_i)|$. Thereby we get a Topsnut-matrix of $G$ as
\begin{equation}\label{eqa:G}
A_{vev}(G)=\odot ^m_{i=1}A(G_{j_i})
\end{equation}
where $G_{j_1}G_{j_2}\dots G_{j_m}$ is a permutation of $G_{1}G_{2}\dots G_{m}$.

\item \textbf{Joining operation.} We have a vev-type TB-paw obtained by a \emph{joining operation ``$\uplus$''} as follows:
\begin{equation}\label{eqa:TB-paw-by-dges}
{
\begin{split}
D_{vev}(G)&=\uplus^q_{i=1}D_{vev}(e_i)=\uplus^q_{j=1}x_{i_j1}e_{i_j}y_{i_j}\\
&=x_{i_1}e_{i_1}y_{i_1}x_{i_2}e_{i_2}y_{i_2}\cdots x_{i_q}e_{i_q}y_{i_q}.
\end{split}}
\end{equation}
where $x_{i_1}e_{i_1}y_{i_1}x_{i_2}e_{i_2}y_{i_2}\cdots x_{i_q}e_{i_q}y_{i_q}$ is a permutation of $x_{1}e_{1}y_{1}x_{2}e_{2}y_{2}\cdots x_{q}e_{q}y_{q}$.

\item \textbf{Column-exchanging operation.} We exchange the positions of two columns $(x_i~e_i~y_i)^{-1}$ and $(x_j~e_j~y_j)^{-1}$ in $A_{vev}(G)$, so we get another Topsnut-matrix $A'_{vev}(G)$. In mathematical symbol, the \emph{column-exchanging operation} $c_{(i,j)}(A_{vev}(G))=A'_{vev}(G)$ is defined by
$${
\begin{split}
&\quad c_{(i,j)}(x_1 ~ x_2 ~ \cdots ~\textcolor[rgb]{0.00,0.00,1.00}{x_i}~ \cdots ~\textcolor[rgb]{0.00,0.00,1.00}{x_j}~ \cdots ~x_q)\\
&=(x_1 ~ x_2 ~ \cdots ~\textcolor[rgb]{0.00,0.00,1.00}{x_j}~ \cdots ~\textcolor[rgb]{0.00,0.00,1.00}{x_i}~ \cdots ~x_q),
\end{split}}$$
$$
{
\begin{split}
&\quad c_{(i,j)}(e_1 ~ e_2 ~ \cdots ~\textcolor[rgb]{0.00,0.00,1.00}{e_i}~ \cdots ~\textcolor[rgb]{0.00,0.00,1.00}{e_j}~ \cdots ~e_q)\\
&=(e_1 ~ e_2 ~ \cdots ~\textcolor[rgb]{0.00,0.00,1.00}{e_j}~ \cdots ~\textcolor[rgb]{0.00,0.00,1.00}{e_i}~ \cdots ~e_q),
\end{split}}
$$

and
$$
{
\begin{split}
&\quad c_{(i,j)}(y_1 ~ y_2 ~ \cdots ~\textcolor[rgb]{0.00,0.00,1.00}{y_i}~ \cdots ~\textcolor[rgb]{0.00,0.00,1.00}{y_j}~ \cdots ~y_q)\\
&=(y_1 ~ y_2 ~ \cdots ~\textcolor[rgb]{0.00,0.00,1.00}{y_j}~ \cdots ~\textcolor[rgb]{0.00,0.00,1.00}{y_i}~ \cdots ~y_q).
\end{split}}
$$
\item \textbf{XY-exchanging operation.} We exchange the positions of $x_i$ and $y_i$ of the $i$th column of $A_{vev}(G)$ by an \emph{XY-exchanging operation} $l_{(i)}$ defined as:
$${
\begin{split}
&\quad l_{(i)}(x_1 ~ x_2 ~ \cdots x_{i-1}~\textcolor[rgb]{0.00,0.00,1.00}{x_i}~x_{i+1} \cdots ~x_q)\\
&=(x_1 ~ x_2 ~ \cdots x_{i-1}~\textcolor[rgb]{0.00,0.00,1.00}{y_i}~x_{i+1} \cdots ~x_q)
\end{split}}
$$
and
$${
\begin{split}
&\quad l_{(i)}(y_1 ~ y_2 ~ \cdots y_{i-1}~\textcolor[rgb]{0.00,0.00,1.00}{y_i}~y_{i+1} \cdots ~y_q)\\
&=(y_1 ~ y_2 ~ \cdots y_{i-1}~\textcolor[rgb]{0.00,0.00,1.00}{x_i}~y_{i+1} \cdots ~y_q),
\end{split}}
$$
the resultant matrix is denoted as $l_{(i)}(A_{vev}(G))$.
\end{asparaenum}

\begin{figure}[h]
\centering
\includegraphics[height=3cm]{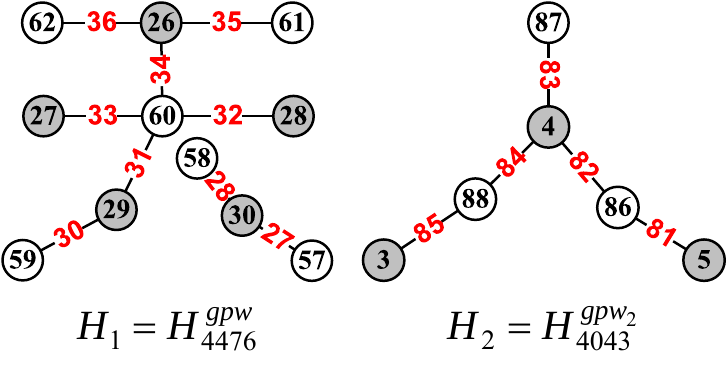}\\
\caption{\label{fig:TB-haoren} {\small A Hanzi-gpw $H_1\cup H_2$ made by two Hanzi-gpws $H^{gpw}_{4476}$ and $H^{gpw_2}_{4043}$ shown in Fig.\ref{fig:rrhg-set-graceful-0}.}}
\end{figure}

{\small
\begin{equation}\label{eqa:tian-ren-11}
\centering
A(H_{1})= \left(
\begin{array}{cccccccccc}
57 & 58 & 29 & 29& 28& 60& 60& 61& 62\\
27 & 28 & 30 & 31& 32& 33& 34& 35& 36\\
30 & 30 & 59 & 60& 60& 27& 26& 26& 26
\end{array}
\right)
\end{equation}

\begin{equation}\label{eqa:tian-ren-22}
A(H_2)= \left(
\begin{array}{cccccc}
5 & 4 & 87 & 88& 88\\
81 & 82 & 83 & 84& 85\\
86 & 86 & 4 & 4& 3
\end{array}
\right)
\end{equation}
}

Two matrices (\ref{eqa:tian-ren-11}) and (\ref{eqa:tian-ren-22}) are two Topsnut-matrices of two Hanzi-gpws $H_1=H^{gpw}_{4476}$ and $H_2=H^{gpw}_{4043}$ shown in Fig.\ref{fig:TB-haoren}, respectively. Now, we do a column-exchanging operation $c_{(1,4)}$ on the Topsnut-matrix $A(H_2)$ in (\ref{eqa:tian-ren-22}) as in the following (\ref{eqa:tian-ren-33}):

{\small
\begin{equation}\label{eqa:tian-ren-33}
c_{(1,4)}(A(H_2))= \left(
\begin{array}{cccccc}
\textbf{\textcolor[rgb]{0.00,0.00,1.00}{88}} & 4 & 87 & \textbf{\textcolor[rgb]{0.00,0.00,1.00}{5}}& 88\\
\textbf{\textcolor[rgb]{0.00,0.00,1.00}{84}} & 82 & 83 & \textbf{\textcolor[rgb]{0.00,0.00,1.00}{81}}& 85\\
\textbf{\textcolor[rgb]{0.00,0.00,1.00}{4}} & 86 & 4 & \textbf{\textcolor[rgb]{0.00,0.00,1.00}{86}}& 3
\end{array}
\right)
\end{equation}\\
}
And, we do an XY-exchanging operation $l_{(3)}$ on the Topsnut-matrix $A(H_2)$ in (\ref{eqa:tian-ren-22}) for getting the following Topsnut-matrix (\ref{eqa:tian-ren-44}):

{\small
\begin{equation}\label{eqa:tian-ren-44}
l_{(3)}(A(H_2))= \left(
\begin{array}{cccccc}
5 & 4 & \textbf{\textcolor[rgb]{0.00,0.00,1.00}{4}} & 88& 88\\
81 & 82 & 83 & 84& 85\\
86 & 86 & \textbf{\textcolor[rgb]{0.00,0.00,1.00}{87}} & 4& 3
\end{array}
\right)
\end{equation}
}

A result of a mixed operation of column-exchanging and XY-exchanging operations on the Topsnut-matrix $A(H_2)$ in (\ref{eqa:tian-ren-22}) is as in (\ref{eqa:tian-ren-55}):

{\small
\begin{equation}\label{eqa:tian-ren-55}
{
\begin{split}
l_{(3)}c_{(1,4)}(A(H_2))&=l_{(3)}\left(\begin{array}{cccccc}
\textbf{\textcolor[rgb]{0.00,0.00,1.00}{88}} & 4 & 87 & \textbf{\textcolor[rgb]{0.00,0.00,1.00}{5}}& 88\\
\textbf{\textcolor[rgb]{0.00,0.00,1.00}{84}} & 82 & 83 & \textbf{\textcolor[rgb]{0.00,0.00,1.00}{81}}& 85\\
\textbf{\textcolor[rgb]{0.00,0.00,1.00}{4}} & 86 & 4 & \textbf{\textcolor[rgb]{0.00,0.00,1.00}{86}}& 3
\end{array}\right)\\
&=\left(\begin{array}{cccccc}
\textbf{\textcolor[rgb]{0.00,0.00,1.00}{88}} & 4 & \textbf{\textcolor[rgb]{0.00,0.00,1.00}{4}} & \textbf{\textcolor[rgb]{0.00,0.00,1.00}{5}}& 88\\
\textbf{\textcolor[rgb]{0.00,0.00,1.00}{84}} & 82 & 83 & \textbf{\textcolor[rgb]{0.00,0.00,1.00}{81}}& 85\\
\textbf{\textcolor[rgb]{0.00,0.00,1.00}{4}} & 86 & \textbf{\textcolor[rgb]{0.00,0.00,1.00}{87}} & \textbf{\textcolor[rgb]{0.00,0.00,1.00}{86}}& 3
\end{array}
\right)
\end{split}}
\end{equation}
}

{\small
\begin{equation}\label{eqa:tian-ren-together}
{
\begin{split}
&\quad A(H_1\odot H_2)=A(H_1)\odot A(H_2)\\
&= \left(
\begin{array}{l}
57 ~ 58 ~ 29 ~ 29~ 28~ 60~ 60~ 61~ 62~5\quad 4 ~~ 87 ~ 88~ 88\\
27 ~ 28 ~ 30 ~ 31~ 32~ 33~ 34~ 35~ 36~81 ~ 82 ~ 83 ~ 84~ 85\\
30 ~ 30 ~ 59 ~ 60~ 60~ 27~ 26~ 26~ 26~86 ~ 86 ~ 4\quad 4~ ~3
\end{array}
\right)
\end{split}}
\end{equation}
}
Clearly,
$${
\begin{split}
A(H_1\odot H_2)&=A(H_1)\odot A(H_2)\\
&\neq A(H_2)\odot A(H_1)=A(H_2\odot H_1).
\end{split}}$$

Now, we do a series of column-exchanging operations $c_{(i_k,j_k)}$ with $k\in [1,m]$, and a series of XY-exchanging operations $l_{(i_s)}$ with $s\in [1,n]$ to $A_{vev}(G)$, the resultant matrix is written by $T_{(c,l)}(A_{vev}(G))$.
\begin{lem}\label{thm:2-trees-matrices-isomorphic}
Suppose $T$ and $H$ are trees of $q$ edges. If $T_{(c,l)}(A_{vev}(T))=A_{vev}(H)$, then these two trees may be isomorphic to each other, or may not be isomorphic to each other.
\end{lem}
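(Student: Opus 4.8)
The plan is to prove the two alternatives separately by explicit constructions, since the statement asserts precisely that neither implication (matrix-equivalence forcing isomorphism, or forbidding it) holds in general. The first thing I would record is that both admissible operations are \emph{graph-invariant}: a column-exchanging operation $c_{(i,j)}$ merely permutes the columns of a Topsnut-matrix, and an XY-exchanging operation $l_{(i)}$ merely transposes the top and bottom entry of one column, so the unordered family $\{\{x_i,y_i\},e_i\}_{i=1}^{q}$ of labelled edges carried by $A_{vev}(T)$ is left untouched by $T_{(c,l)}$. Hence any graph reconstructed from $T_{(c,l)}(A_{vev}(T))$ via Lemma \ref{thm:Topsnut-matrix-vs-graph} is reconstructed from the same vertex-value set $(XY)^*$ and the same edge-value set $W^*$ as the one reconstructed from $A_{vev}(T)$.

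For the isomorphic alternative I would take $H=T$ together with the empty sequence of operations, or with a single column swap; then $T_{(c,l)}(A_{vev}(T))=A_{vev}(H)$ and $T\cong H$ trivially. A concrete and non-trivial witness is already in the text: applying $c_{(1,4)}$, $l_{(3)}$ and their composition to the Topsnut-matrix in (\ref{eqa:tian-ren-22}) yields the matrices (\ref{eqa:tian-ren-33}), (\ref{eqa:tian-ren-44}) and (\ref{eqa:tian-ren-55}), all of which are Topsnut-matrices of one and the same caterpillar, so this single computation settles the first case.

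For the non-isomorphic alternative the idea is to exploit the fact that the correspondence of Lemma \ref{thm:Topsnut-matrix-vs-graph} records only the number-sets $(XY)^*$ and $W^*$ and does not by itself determine the incidence between them; so I would search for two non-isomorphic caterpillars $T$ and $H$ of the same size $q$, with labellings $f$ of $T$ and $g$ of $H$, for which $A_{vev}(H)$ agrees with $A_{vev}(T)$ column for column after some $c$'s and $l$'s. The two levers I would use are: (i) allowing $g$ to be a degenerate labelling in the sense of Definition \ref{defn:Topsnut-matrix} (where the $e_i$ are only required to be attached to their ends, not to be pairwise distinct), so that collapsing repeated edge-values in $W^*$ lets $H$'s matrix coincide with $T$'s although $H$ itself is larger; and (ii) the vertex-coincident and vertex-split operations of Definition \ref{defn:split-operation-combinatoric}, under which a caterpillar and a non-isomorphic tree sharing the same string of vertex- and edge-values can be converted into one another. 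I expect this half to be the main obstacle: one must (a) exhibit both trees explicitly, (b) produce the two labellings so that every column matches, and (c) verify that no sequence of $c$'s and $l$'s secretly reinstates an isomorphism -- the last check being the subtle one, because under a \emph{proper} (vertex-distinct and edge-distinct) labelling the graph-invariance noted in the first paragraph would already force $T\cong H$, so the witness has to be engineered genuinely outside the proper-labelling regime. A small pair of caterpillars differing by one repeated edge-value, verified by hand, should be enough to finish this case.
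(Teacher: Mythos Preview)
The paper states this lemma without proof; the sentence that follows it (``Notice that two Topsnut-gpws $G$ and $Q$ are labelled graphs\ldots'') is a remark, not an argument. So there is nothing to compare your proposal against on the paper's side, and the question is simply whether your plan stands on its own.

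Your overall structure is right, and your first paragraph contains the key observation: both $c_{(i,j)}$ and $l_{(i)}$ preserve the multiset of unordered triples $\bigl\{\{x_i,y_i\},e_i\bigr\}_{i=1}^{q}$, so if \emph{both} trees carry injective vertex labellings the matrix determines the labelled tree and $T\cong H$ is forced. The isomorphic alternative is then immediate, as you say.

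The non-isomorphic alternative is where your proposal wobbles. Lever~(i) as written does not work: both $T$ and $H$ have exactly $q$ edges by hypothesis, so their Topsnut-matrices both have $q$ columns regardless of how many \emph{distinct} values occur in $W$; the phrase ``$H$ itself is larger'' cannot apply. Lever~(ii) is also off target, since vertex-split and vertex-coincident operations change the underlying graph rather than the labelling and so do not help produce two \emph{given} non-isomorphic trees with coinciding matrices. What you actually need is much simpler than either lever: give both trees non-injective \emph{vertex} labellings chosen so that the multisets of endpoint-pairs agree. A minimal witness at $q=3$ is the path $P_4=a\,b\,c\,d$ labelled $f(a)=1,\ f(b)=2,\ f(c)=1,\ f(d)=3$ and the star $K_{1,3}$ with centre labelled $1$ and leaves labelled $2,2,3$; both yield the column multiset $\{(1,\ast,2),(1,\ast,2),(1,\ast,3)\}$ (with matching edge values inserted at will), so a suitable $T_{(c,l)}$ carries one matrix to the other, yet $P_4\not\cong K_{1,3}$. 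Replace your two levers by this construction and the proof is complete.
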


Notice that two Topsnut-gpws  $G$ and $Q$  are labelled graphs. If $T_{(c,l)}(A_{vev}(G))=A_{vev}(Q)$ can induce $G\cong Q$, but it  is not a solution of the \emph{Graph Isomorphic Problem} in graph theory. The column-exchanging operation and the XY-exchanging operation tell us that a $(p,q)$-graph $G$ may have many Topsnut-matrices according to the labellings admitted by $G$. Moreover, we have the number of TB-paws obtained from a Topsnut-matrix of a Topsnut-gpw below.

\begin{thm}\label{thm:number-tb-paws-a-labelling}
Suppose that a $(p,q)$-graph $G$ admits a labelling $f:V(G)\rightarrow [a,b]$ and $e_i=f(u_iv_i)=F(f(u_i),f(v_i))=(x_i,y_i)$, then there are $2q+q!$ Topsnut-matrices $A^s_{vev}(G)$ with $s\in [1,2q+q!]$, and a Topsnut-matrix $A^s_{vev}(G)$ can produce $(3q)!$ TB-paws, in which each TB-paw is a permutation of $x_{1}e_{1}y_{1}x_{2}e_{2}y_{2}\cdots x_{q}e_{q}y_{q}$.
\end{thm}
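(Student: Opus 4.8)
I would split the statement into its two independent counts — the number of Topsnut-matrices reachable from a fixed labelling $f$, and the number of TB-paws producible from one such matrix — and reduce each to counting the elements of an explicitly described set.

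For the matrix count, the plan is first to encode a Topsnut-matrix $A_{vev}(G)$ compatible with $f$ by two pieces of data: a bijection $\sigma$ from the edge set $E(G)=\{e_1,\dots,e_q\}$ onto the column indices $[1,q]$ (which edge sits in which column), and, for each column, one bit recording which end of the corresponding edge is placed in the top $X$-row versus the bottom $Y$-row. By Definition \ref{defn:Topsnut-matrix} this data determines the matrix, and conversely. Next I would check directly from the definitions of the column-exchanging operation and the XY-exchanging operation that a composition of operations $c_{(i,j)}$ realises an arbitrary permutation of the columns (hence every admissible $\sigma$), and a composition of operations $l_{(i)}$ realises the admissible changes of the orientation bits; thus the mixed operator $T_{(c,l)}$ sweeps out exactly the matrices described by this encoding. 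The count of distinct matrices is then obtained by enumerating these $q!$ column orderings together with the orientation changes coming from the $l_{(i)}$ and deleting repetitions, which yields the stated total $2q+q!$.

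For the TB-paw count, fix one Topsnut-matrix $A^s_{vev}(G)=(X~W~Y)^{-1}_{3\times q}$, whose entries are the $3q$ symbols $x_1,e_1,y_1,\dots,x_q,e_q,y_q$. A TB-paw obtained from it via the joining operation $\uplus$ is, by definition, a word using each of these $3q$ symbols exactly once; equivalently it is the image of the reference list $x_1e_1y_1x_2e_2y_2\cdots x_qe_qy_q$ under a bijection of its $3q$ positions. There are exactly $(3q)!$ such bijections, which gives the count. (If several of the induced edge values $e_i$, or some vertex values, happen to coincide, some of these words collapse as strings, but the number of position-permutations is still $(3q)!$, which is what the statement records.)

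The main obstacle I anticipate lies entirely in the bookkeeping of the first count: distinct pairs $(\sigma,\text{orientation bits})$ need not give distinct matrices, since an automorphism of $G$ compatible with $f$, or a repetition among the induced edge labels that makes two columns literally equal, forces coincidences, so column ordering and column orientation are not freely independent data. Pinning down precisely which coincidences occur, and verifying that after removing them the surviving number is exactly $2q+q!$ — rather than the naive product one would write down if the two kinds of choices were unconstrained — is the delicate step, and is presumably why the statement is posed for a labelling $f$ whose relevant values are pairwise distinct; the rest of the argument is the routine verification that the operations of Mo-2, Mo-3 and Mo-4 act as claimed on the encoding.
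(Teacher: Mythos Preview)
Your overall strategy coincides with the paper's: the paper's justification (given in the paragraph immediately following the theorem) is exactly the two-step count you describe --- $q!$ column orderings of the edges, a further count coming from the XY-exchanges on each column, and then $(3q)!$ permutations of the $3q$ entries of a fixed matrix.

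The gap is in your ``delicate step''. You correctly observe that the encoding by (column permutation, orientation bits) gives the naive cardinality $2^{q}\cdot q!$, not $2q+q!$, and you propose to close the gap by quotienting out coincidences coming from label repetitions or automorphisms compatible with $f$. That mechanism cannot produce the stated number. When the vertex labels $f(u_i),f(v_i)$ are pairwise distinct (the regime you yourself single out), every pair $(\sigma,\text{bits})$ yields a literally different $3\times q$ array, so the orbit under the operations $c_{(i,j)}$ and $l_{(i)}$ has exactly $2^{q}\cdot q!$ elements and there are no coincidences to remove. Conversely, if labels do repeat, the reductions depend on the multiplicity pattern of $f$ and will not collapse uniformly to the additive expression $2q+q!$ for all $(p,q)$-graphs. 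In short, the bookkeeping you flag as ``delicate'' is not merely delicate --- it cannot be completed to reach $2q+q!$ under the hypotheses as stated. The paper's own sentence (``for a fixed edge-permutation there are $2q$ arrangements of $X_i$ and $Y_i$, so we have $2q+q!$'') does not resolve this either; it simply asserts the count, combining the two ingredients additively rather than multiplicatively without explanation. So your approach matches the paper's, but the step you identify as the main obstacle is a genuine obstacle that neither your proposal nor the paper's argument overcomes.
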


For proving Theorem \ref{thm:number-tb-paws-a-labelling}, we can see that there are $q!$ edge-permutations of $e_{1}e_{2}\cdots e_{q}$, and for a fixed edge-permutation there are $2q$ arrangements of $X_{i}$ and $Y_i$, so we have $2q+q!$ Topsnut-matrices of the $(p,q)$-graph $G$ with a labelling $f$ stated in Theorem \ref{thm:number-tb-paws-a-labelling}.

\subsection{Basic ways for producing TB-paws from Topsnut-matrices}

If a string $T_b(G)=a_1a_2\dots a_{3q}$ with $a_i\in X\cup W\cup Y$, such that $a_1a_2\dots a_{3q}$ is a permutation of $x_{1}e_{1}y_{1}x_{2}e_{2}y_{2}\cdots x_{q}e_{q}y_{q}$, we say $T_b(G)$ a \emph{text-based password} (TB-paw) from a Hanzi-matrix $A_{vev}(G)=(X~W~Y)^{-1}$ of a $(p,q)$-graph $G$ (see Definition \ref{defn:Topsnut-matrix}). The string $T^{-1}_b(G)=a_{3q}a_{3q-1}\dots a_{2}a_{1}$ is called the \emph{inverse} of the string $T_b(G)$.

We have two types of TB-paws from a Hanzi-matrix $A_{vev}(G)$:

\textbf{Type-1.} If any $a_ia_{i+1}\in T_b(G)$ holds one of the following cases shown in (\ref{eqa:k-lines-TB-paw})
\begin{equation}\label{eqa:k-lines-TB-paw}
{
\begin{split}
&a_ia_{i+1}=x_ix_{i+1}, a_ia_{i+1}=e_ie_{i+1}, a_ia_{i+1}=y_iy_{i+1},\\
&a_ia_{i+1}=x_ie_{i+1}, a_ia_{i+1}=e_ix_{i+1},\\
&a_ia_{i+1}=e_{i+1}x_i, a_ia_{i+1}=x_{i+1}e_i,\\
&a_ia_{i+1}=x_ie_{i}, a_ia_{i+1}=e_ix_{i},\\
&a_ia_{i+1}=x_iy_{i+1}, a_ia_{i+1}=y_ix_{i+1},\\
&a_ia_{i+1}=y_ie_{i}, a_ia_{i+1}=e_iy_{i},\\
&a_ia_{i+1}=e_iy_{i+1}, a_ia_{i+1}=y_ie_{i+1}\\
&a_ia_{i+1}=y_{i+1}e_i, a_ia_{i+1}=e_{i+1}y_i
\end{split}}
\end{equation}
then we say this $T_b(G)$ a \emph{$1$-line TB-paw}.

\textbf{Type-2.} If $T_b(G)$ can be cut into $k$ segments, namely $T_b(G)=\uplus^k_{j=1}T^j_b$ with $T^1_b=a_1a_2\dots a_{t_1}$, $T^2_b=a_{t_1+1}a_{t_1+2}\dots a_{t_1+t_2}$, and for $j\in [2,k]$ we have
\begin{equation}\label{eqa:c3xxxxx}
T^j_b=a_{t_1+\cdots +t_{j-1}+1}a_{t_1+\cdots +t_{j-1}+2}\dots a_{t_1+\cdots +t_{j-1}+t_j}
\end{equation}
where $3q=\sum ^k_{j=1}t_j$ with $t_j\geq 2$ and $k\geq 2$, and any $a_ia_{i+1}\in T^j_b$ holds one of the cases in (\ref{eqa:k-lines-TB-paw}), then we say $T_b(G)$ to be a \emph{$k$-line TB-paw}.

Now, we show several examples for illustrating Type-1 and Type-2 introduced above.

\textbf{Example 1.} We can write a $1$-line TB-paw $T^{(1)}_b(H_1\odot H_2)$ by the Topsnut-matrix $A(H_1\odot H_2)=A(H_1)\odot A(H_2)$ in the form (\ref{eqa:tian-ren-together}) in the following
$${
\begin{split}
T^{(1)}_b(H_1\odot H_2)=&57582929286060616254878888\\
&8584838281363534333231302827\\
&3030596060272626268686443
\end{split}}
$$
Furthermore, a $3$-line TB-paw $T^{(3)}_b(H_1\cup H_2)$ is based on the Topsnut-matrix $A(H_1\cup H_2)=A(H_1)\odot A(H_2)$ in the form (\ref{eqa:tian-ren-together}) as follows
$${
\begin{split}
T^{(3)}_b(H_1\cup H_2)=&57582929286060616254878888\\
&2728303132333435368182838485\\
&3030596060272626268686443
\end{split}}
$$ such that
$$T^{(3)}_b(H_1\cup H_2)=D_1\uplus D_2\uplus D_3$$
with

$D_1=57582929286060616254878888$,

$D_2=2728303132333435368182838485$ and

$D_3=3030596060272626268686443$.

\begin{figure}[h]
\centering
\includegraphics[height=12cm]{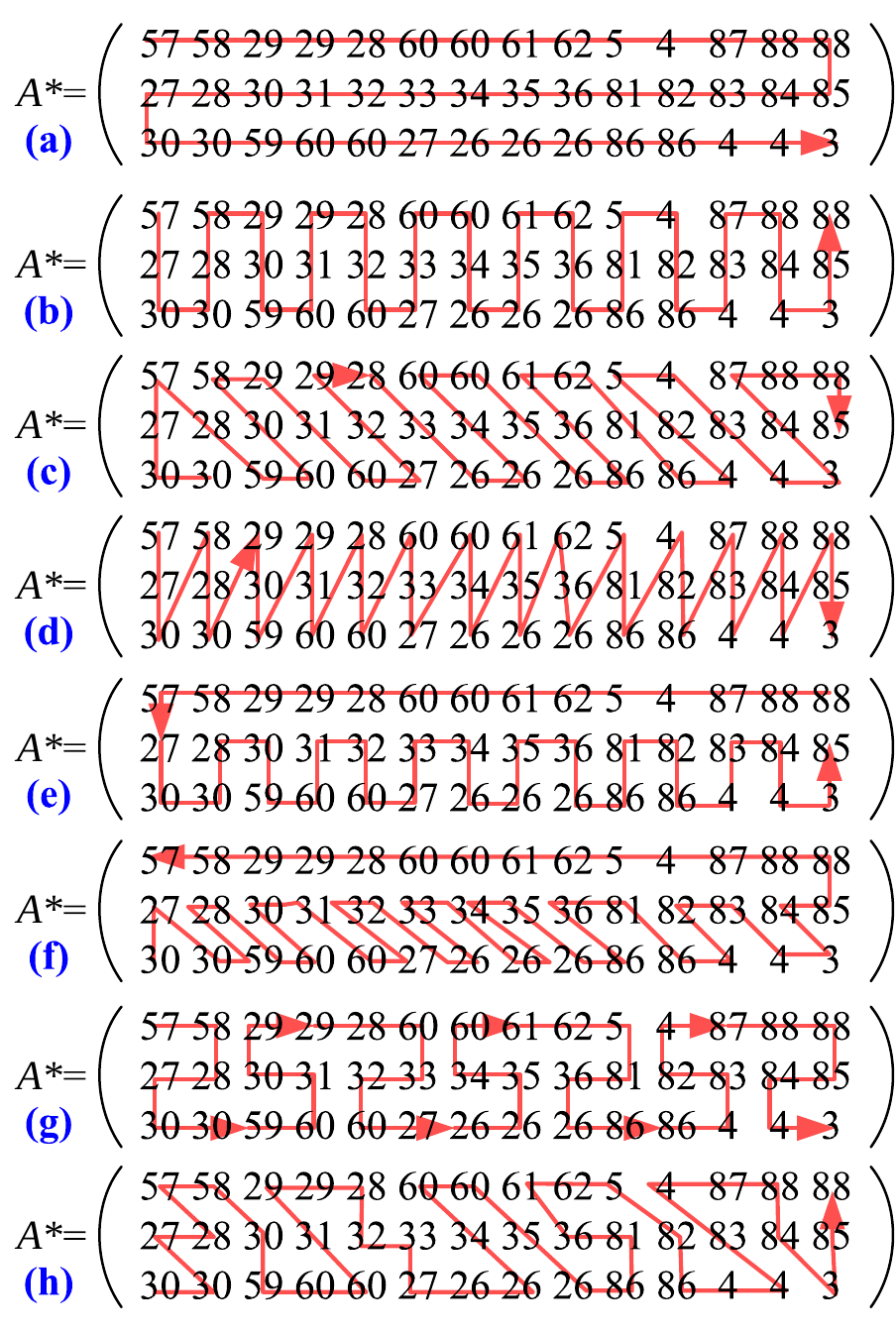}\\
\caption{\label{fig:TB-from-basic-11} {\small The lines in (a), (b), (c) and (d) are the basic rules for producing TB-paws; others are examples for showing there exist many $1$-line TB-paws.}}
\end{figure}

\textbf{Example 2.} By Fig.\ref{fig:TB-from-basic-11}, we have the following TB-paws:
$${
\begin{split}
T^{(a)}_b=&575829292860606162548788882\\
&72830313233343536818283848\\
&53030596060272626268686443
\end{split}}$$

$${
\begin{split}
T^{(b)}_b=&572730302858293059603129283\\
&2602733606034262635616236\\
&268681548286483878884438588
\end{split}}$$

$${
\begin{split}
T^{(c)}_b=&303027572859603058293160273\\
&229283326263460603526863861\\
&6281864825483438487888885
\end{split}}$$

$${
\begin{split}
T^{(d)}_b=&57273058283029305929316028\\
&32606033276034266135266236265\\
&818648386878348884488853
\end{split}}$$

$${
\begin{split}
T^{(e)}_b=&888887456261606028292958572\\
&73030283059603132602733342626\\
&35362686818286483844385
\end{split}}$$

$${
\begin{split}
T^{(f)}_b=&3027305928306060313227263334\\
&26263536863681864828343848588\\
&888745626160602829295857
\end{split}}$$

$${
\begin{split}
T^{(g)}_b=&5758282730305960313029292860\\
&33326027262635346061625\\
&8136268686483824878888858443
\end{split}}$$

$${
\begin{split}
T^{(h)}_b=&30302728575830596060312929283\\
&2332726263460603526868136\\
&6162582864483487888438588
\end{split}}$$

Clearly, $T^{(i)}_b\neq T^{(j)}_b$ for $i,j\in \{a,b,c,d,e,f,g,h\}$ and $i\neq j$. In Fig.\ref{fig:TB-from-basic-4-curves}, we give each basic $1$-line O-$k$ and its reciprocal $1$-line O-$k$-r and inverse $1$-line O-$k$-i with $k\in [1,4]$. Moreover, there are the following efficient algorithms for writing TB-paws from $1$-line O-$k$-r and inverse $1$-line O-$k$-i ($k\in [1,4]$) introduced above.

\begin{figure}[h]
\centering
\includegraphics[height=6cm]{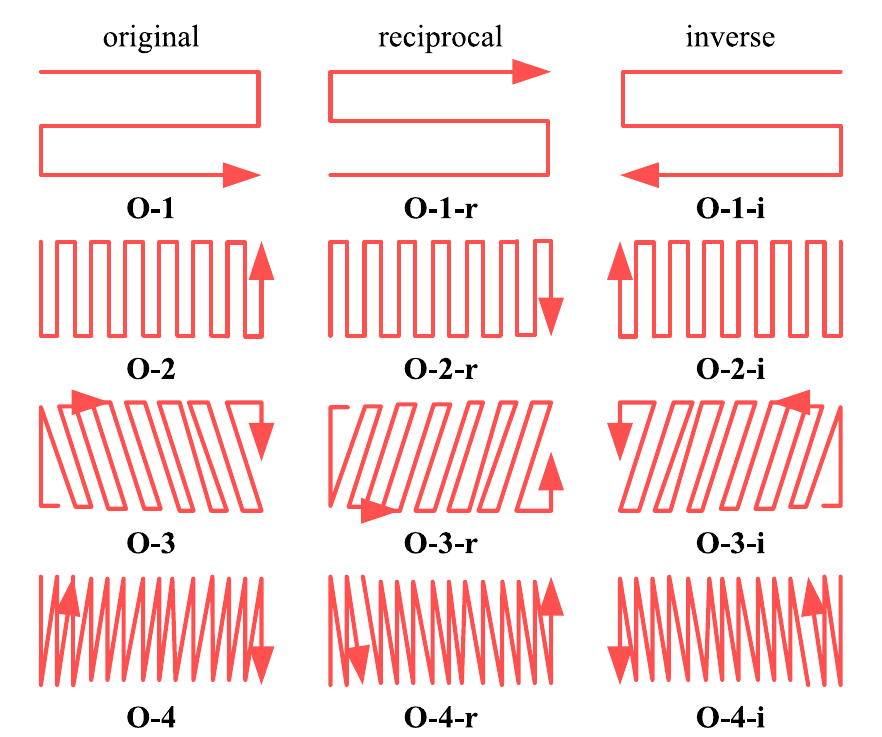}\\
\caption{\label{fig:TB-from-basic-4-curves} {\small Four basic $1$-line O-$k$ with $k\in [1,4]$ used in general matrices, such as  Fig.\ref{fig:TB-from-basic-11} (a), (b), (c) and (d).}}
\end{figure}

\vskip 0.2cm

\textbf{ALGORITHM-I ($1$-line-O-$1$)}

\textbf{Input:} $A_{vev}(G)=(X~W~Y)^{-1}_{3\times q}$

\textbf{Output:} TB-paws: $1$-line O-$1$ TB-paw $T^{O-1}_b$; $1$-line O-$1$-$r$ TB-paw $T^{O-1-r}_b$; and $1$-line O-$1$-$i$ TB-paw $T^{O-1-i}_b$ as follows:
$${
\begin{split}
T^{O-1}_b=&x_1x_2\cdots x_qe_qe_{q-1}\cdots e_2e_1y_1 y_2\cdots y_q\\
T^{O-1-r}_b=&y_1 y_2\cdots y_qe_qe_{q-1}\cdots e_2e_1x_1x_2\cdots x_q\\
T^{O-1-i}_b=&x_qx_{q-1}\cdots x_2x_1 e_1 e_2\cdots e_qy_qy_{q-1}\cdots y_2y_1
\end{split}}$$

\vskip 0.2cm

\textbf{ALGORITHM-II ($1$-line-O-$2$)}

\textbf{Input:} $A_{vev}(G)=(X~W~Y)^{-1}_{3\times q}$

\textbf{Output:} TB-paws: $1$-line O-$2$ TB-paw $T^{O-2}_b$; $1$-line O-$2$-$r$ TB-paw $T^{O-2-r}_b$; and $1$-line O-$2$-$i$ TB-paw $T^{O-2-i}_b$ as follows:
$${
\begin{split}
T^{O-2}_b=&x_1e_1y_1y_2e_2x_2x_3e_3y_3y_4\dots x_{q-1}x_qe_qy_q~(\textrm{odd }q)\\
T^{O-2}_b=&x_1e_1y_1y_2e_2x_2x_3e_3y_3y_4\dots y_{q-1}y_qe_qx_q~(\textrm{even }q)\\
T^{O-2-r}_b=&y_1e_1x_1x_2e_2y_2y_3\dots y_{q-1}y_qe_qx_q~(\textrm{odd }q)\\
T^{O-2-r}_b=&y_1e_1x_1x_2e_2y_2y_3\dots x_{q-1}x_qe_qy_q~(\textrm{even }q)\\
T^{O-2-i}_b=&x_qe_qy_qy_{q-1}e_{q-1}x_{q-1}x_{q-2}\dots x_1e_1y_1~(\textrm{odd }q)\\
T^{O-2-i}_b=&x_qe_qy_qy_{q-1}e_{q-1}x_{q-1}x_{q-2}\dots y_1e_1x_1~(\textrm{even }q)
\end{split}}$$

\vskip 0.2cm

\textbf{ALGORITHM-III ($1$-line-O-$3$)}

\textbf{Input:} $A_{vev}(G)=(X~W~Y)^{-1}_{3\times q}$

\textbf{Output:} TB-paws: $1$-line O-$3$ TB-paw $T^{O-3}_b$; $1$-line O-$3$-$r$ TB-paw $T^{O-3-r}_b$; and $1$-line O-$3$-$i$ TB-paw $T^{O-3-i}_b$ as follows:
$${
\begin{split}
T^{O-3}_b=&y_2y_1e_1x_1e_2y_3y_4e_3x_2x_3e_4y_5y_6\dots \\
&y_qe_{q-1}x_{q-2}x_{q-1}x_qe_q\\
T^{O-3-r}_b=&x_2x_1e_1y_1e_2x_3x_4e_3y_2y_3e_4x_5x_6\dots \\
&x_qe_{q-1}y_{q-2}y_{q-1}y_qe_q\\
T^{O-3-i}_b=&y_{q-1}y_qe_qx_qe_{q-1}y_{q-2}y_{q-3}e_{q-2}x_{q-1}x_{q-2}\dots \\
&y_1e_2x_2x_1e_1
\end{split}}$$

\vskip 0.2cm

\textbf{ALGORITHM-IV ($1$-line-O-$4$)}

\textbf{Input:} $A_{vev}(G)=(X~W~Y)^{-1}_{3\times q}$

\textbf{Output:} TB-paws: $1$-line O-$4$ TB-paw $T^{O-4}_b$; $1$-line O-$4$-$r$ TB-paw $T^{O-4-r}_b$; and $1$-line O-$4$-$i$ TB-paw $T^{O-4-i}_b$ as follows:
$${
\begin{split}
T^{O-4}_b=&x_1e_1y_1x_2e_2y_2\dots y_{q-1}x_{q}e_qy_q\\
T^{O-4-r}_b=&y_1e_1x_1y_2e_2x_2\dots x_{q-1}y_{q}e_qx_q \\
T^{O-4-i}_b=&x_q e_q y_{q}x_{q-1}e_{q-1}y_{q-1}x_{q-2}\dots x_2e_2y_2x_1e_1 y_1
\end{split}}$$

\subsection{Basic ways for producing TB-paws from adjacency e-value and ve-value matrices}

Let $f:V(G)=\{1,2,\dots, p\}\rightarrow [a,b]$ be a labelling of a $(p,q)$-graph $G$, and the induced edge labelling $f(ij)=F(f(i),f(j))$ for each edge $ij\in E(G)$.

Each \emph{adjacency e-value matrix} $A(G,f)=(f(a_{ij}))_{p\times p}$ is a symmetric matrix along its main diagonal, here, $f(a_{ij})=f(ij)$ for each edge $ij\in E(G)$. An \emph{adjacency ve-value matrix} $V_A(G,f)=(f(b_{ij}))_{(p+1)\times (p+1)}$ is defined as: $f(b_{ij})=f(a_{ij})\in A(G,f)$ if $i\neq 1$ and $j\neq 1$, and $f(b_{1j})=f(j)$ for $j\geq 2$ and $f(b_{i1})=f(i)$ for $i\geq 2$ (see two examples $A^{(1)}(H_{4043})$ and $A^{(2)}(H_{4043})$ shown in Fig.\ref{fig:ren-ve-value-matrix}).

\begin{figure}[h]
\centering
\includegraphics[height=6.4cm]{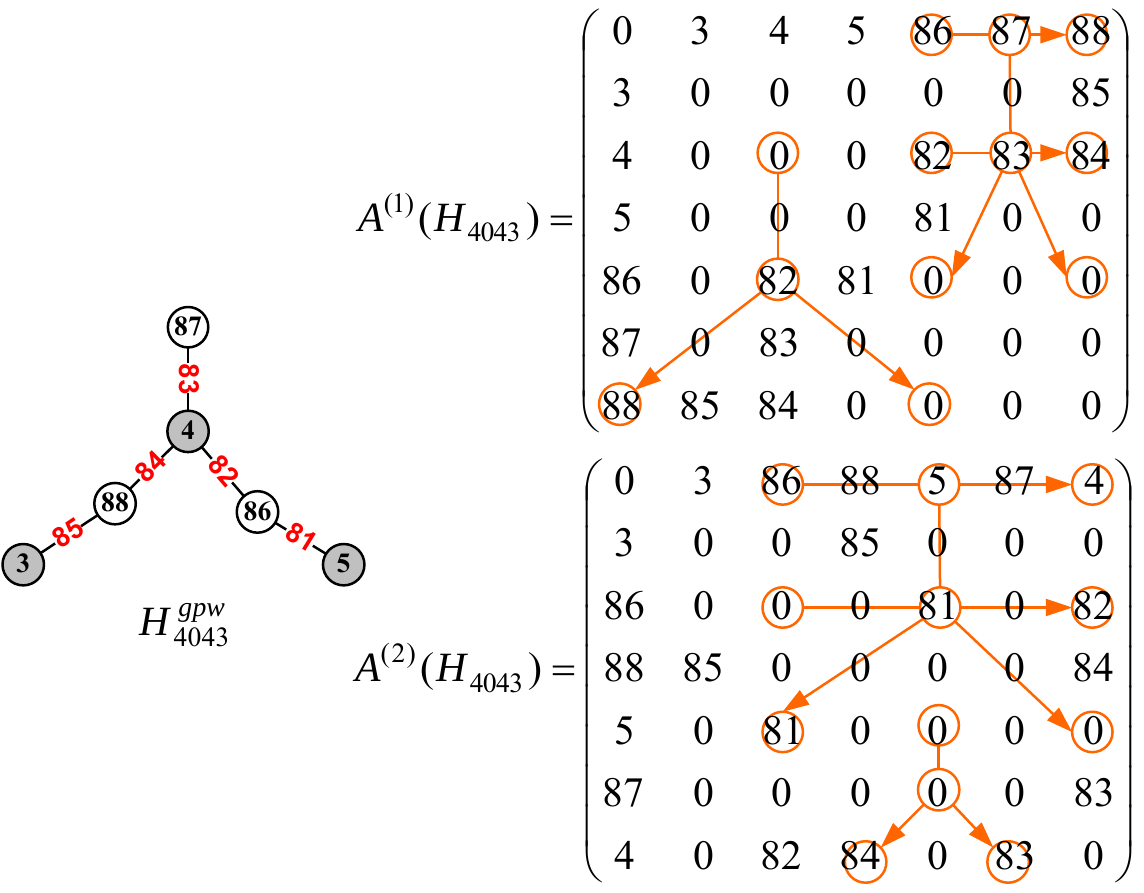}\\
\caption{\label{fig:ren-ve-value-matrix} {\small A Hanzi $H_{4043}$ with its Hanzi-gpw $H^{gpw}_{4043}$ and two adjacency ve-value matrices $A^{(1)}(H_{4043})$ and $A^{(2)}(H_{4043})$.}}
\end{figure}

Two matrices $A^{(1)}(H_{4043})$ and $A^{(2)}(H_{4043})$ shown in Fig.\ref{fig:ren-ve-value-matrix} are two \emph{ve-value matrices}. We present four standard $1$-line Vo-$t$ with $t\in [1,4]$ in Fig.\ref{fig:TB-from-basic-4-curves-00} for making TB-paws from adjacency e-value and ve-value matrices.

\begin{figure}[h]
\centering
\includegraphics[height=8cm]{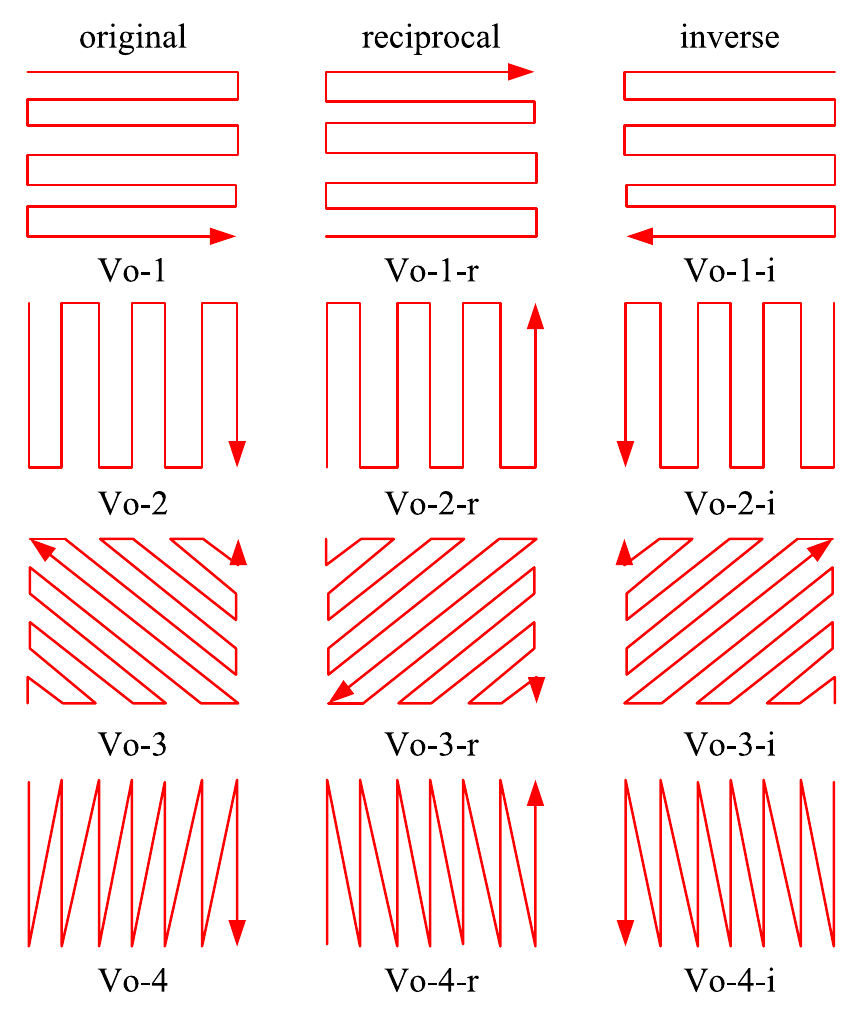}\\
\caption{\label{fig:TB-from-basic-4-curves-00} {\small Four standard $1$-line Vo-$t$ with $t\in [1,4]$ used in general matrices, such as the adjacency e-value matrices and the adjacency ve-value matrices.}}
\end{figure}

It is easy to write efficient algorithms for $1$-line Vo-$t$ with $t\in [1,4]$, we omit them here. Thereby, we show examples for using these four standard $1$-line Vo-$t$ with $t\in [1,4]$. Thereby, we have
$${
\begin{split}
T^{\textrm{Vo-}1}_{b4043}=&03458687888500000340008\\
&2838400810005860828100\\
&00000830878885840000=T^{\textrm{Vo-}2}_{b4043}
\end{split}}$$

$${
\begin{split}
T^{\textrm{Vo-}3}_{b4043}=&888785840865083000820430\\
&081000000000300810000820\\
&45083084086878588
\end{split}}$$

$${
\begin{split}
T^{\textrm{Vo-}4}_{b4043}=&0345868788300000854000828\\
&3845000810086082810008708\\
&300008885840000
\end{split}}$$

\begin{figure}[h]
\centering
\includegraphics[height=7cm]{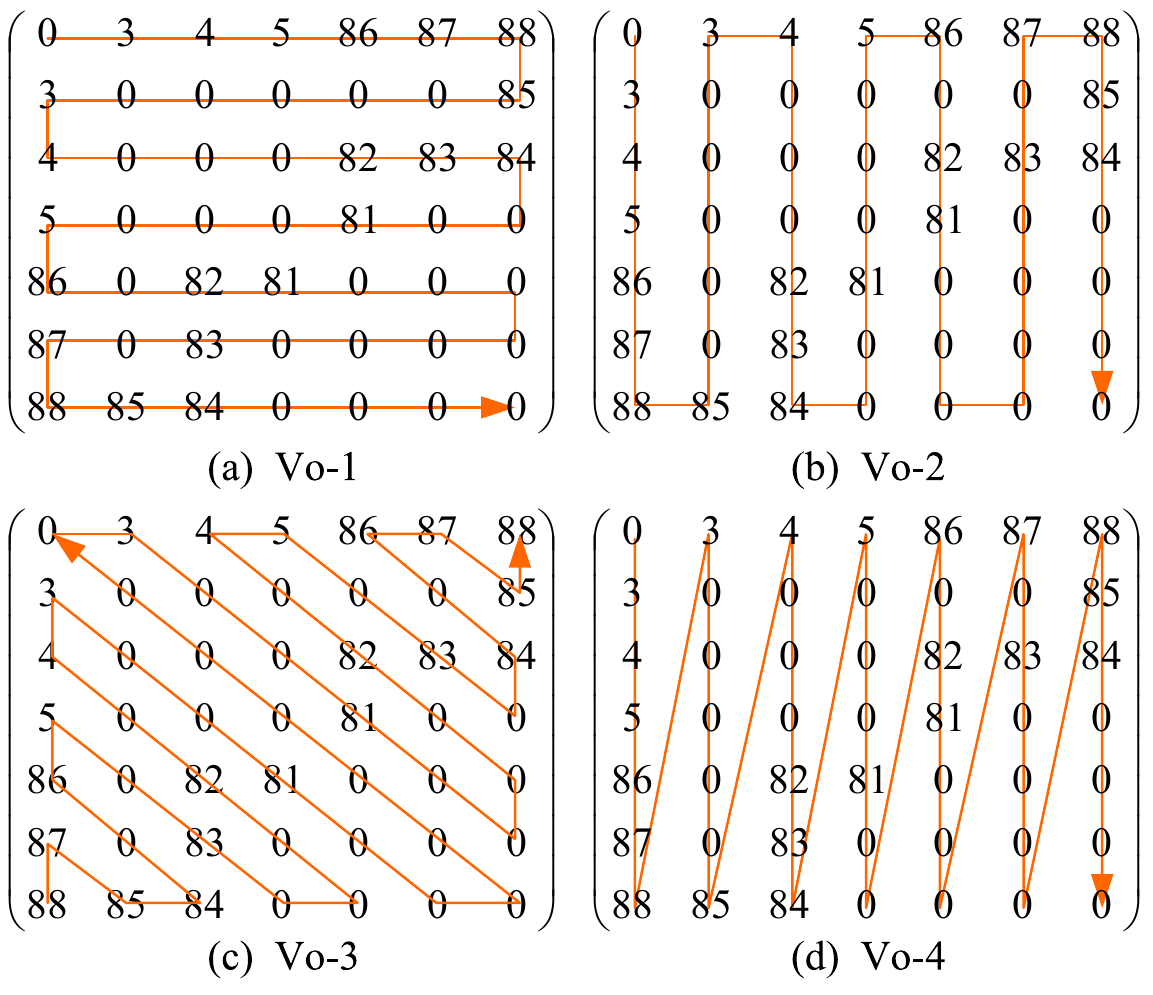}\\
\caption{\label{fig:TB-from-basic-4-curves-00-lizi} {\small Examples for using four standard $1$-line Vo-$t$ with $t\in [1,4]$ shown in Fig.\ref{fig:TB-from-basic-4-curves-00}.}}
\end{figure}

The authors in \cite{Mu-Yao-ITNEC2019} introduce a way of producing TB-paws by combination of Hanzi-graphs and their various matrices. There are two Hanzi-graphs $T_{4043}$ and $T_{4476}$ under the adjacent ve-value matrix $A^{(1)}(H_{4043})$ shown in Fig.\ref{fig:ren-ve-value-matrix}, we can write two TB-paws

$D_{4476}=86878882838487830830$\\
and $D_{4043}=00820888200$. Another group of two Hanzi-graphs $T_{4043}$ and $T_{4476}$ under the adjacent ve-value matrix $A^{(2)}(H_{4043})$ shown in Fig.\ref{fig:ren-ve-value-matrix} distributes us two TB-paws

$D'_{4476}=86885874008108250810818100$\\
and $D'_{4043}=0084083$.

The space of TB-paws made by the adjacent e-value matrices and the adjacent ve-value matrices is larger than that of Topsnut-matrices. Suppose that $A=(a_{ij})_{n\times n}$ is a popular matrix, so it has $n^2$ elements $a_{ij}$ with $i,j\in [1,n]$. We can obtain $(n^2)!$ TB-paws with $n^2$ bytes from $A=(a_{ij})_{n\times n}$. However, a $(p,q)$-graph $G$ has many its own adjacent e-value and adjacent ve-value matrices, in which two adjacent e-value matrices (or adjacent ve-value matrices) $A^{(i)},A^{(j)}$ are similar to each other $A^{(i)}\sim A^{(j)}$, that is, $A^{(i)}=B A^{(j)}B^{-1}$ by a non-singular matrix $B$ in linear algebra. See an example shown in Fig.\ref{fig:ren-ve-value-matrix}, where $A^{(1)}(H_{4043})$ is similar with $A^{(2)}(H_{4043})$, so there exists a non-singular matrix $P$ with its inverse $P^{-1}$ holding $A^{(1)}(H_{4043})=PA^{(2)}(H_{4043})P^{-1}$.

\subsection{Writing stroke orders of Hanzis in Hanzi-gpws}
We can abide the writing stroke order of a Hanzi to yield TB-paws. By Fig.\ref{fig:TB-haoren} and Fig.\ref{fig:directed-tian-00}, we have the following TB-paws
$${
\begin{split}
T_b(H^{gpw}_{4476})=&62362635612733603228263460\\
&312930595828302757
\end{split}}$$
and
$$T_b(H^{gpw}_{4043})=87834848885348286815.$$
Clearly, this way is natural for making TB-paws from Hanzi-gpws.

\subsection{Compound TB-paws from Hanzi-keys vs Hanzi-keys}

Compound TB-paws are similar with compound functions of calculus. Hanzi-couplets of Hanzi-keys vs Hanzi-keys can provide complex Hanzi-gpws made by Hanzi-graphs and various graph labellings. Here, we only show several Hanzi-couplets of Hanzi-keys vs Hanzi-keys in Fig.\ref{fig:group-keys-in-keys}, and omit detail process for producing Hanzi-gpws.

\begin{figure}[h]
\centering
\includegraphics[width=8.2cm]{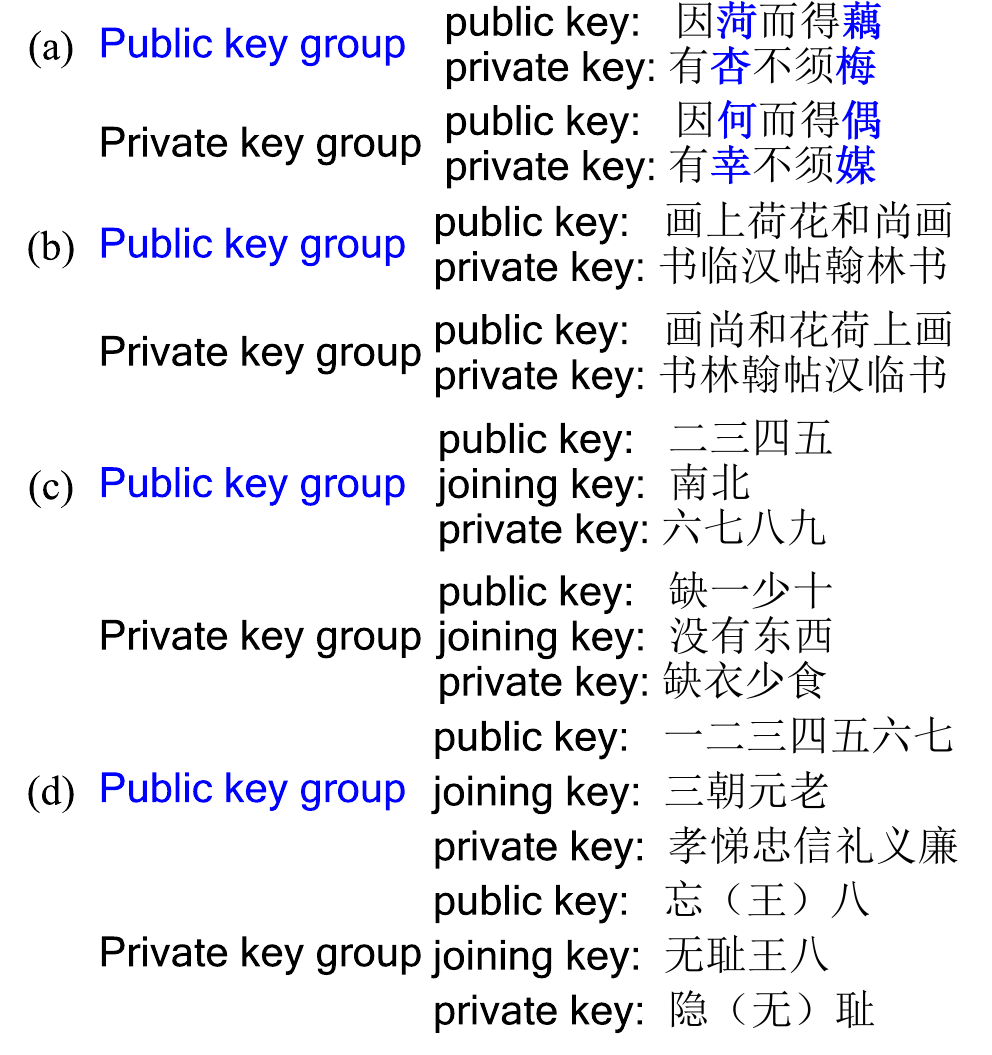}\\
\caption{\label{fig:group-keys-in-keys} {\small Four groups of keys in keys.}}
\end{figure}

\subsection{TB-paws from matrices with elements of Hanzi-GB2312-80 or Chinese code}

\begin{defn}\label{defn:Hanzi-GB2312-80-matrix}
$^*$ A \emph{Hanzi-GB2312-80 matrix} $A_{han}(H)$ of a Hanzi-sentence $H=\langle H_i \rangle ^m_{i=1}$ made by $m$ Hanzis $H_1$, $H_2$, $\dots$, $H_m$ is defined as
\begin{equation}\label{eqa:a-formula}
\centering
{
\begin{split}
A_{han}(H)&= \left(
\begin{array}{ccccc}
a_{1} & a_{2} & \cdots & a_{m}\\
b_{1} & b_{2} & \cdots & b_{m}\\
c_{1} & c_{2} & \cdots & c_{m}\\
d_{1} & d_{2} & \cdots & d_{m}
\end{array}
\right)_{4\times m}\\
&=(A~B~C~D)^{-1}_{4\times m}
\end{split}}
\end{equation}\\
where
\begin{equation}\label{eqa:three-vectors}
{
\begin{split}
&A=(a_1 ~ a_2 ~ \cdots ~a_m), B=(b_1 ~ b_2 ~ \cdots ~b_m)\\
&C=(c_1 ~ c_2 ~ \cdots ~c_m), D=(d_1 ~ d_2 ~ \cdots ~d_m)
\end{split}}
\end{equation}
where each Hanzi $H_i$ has its own Hanzi-code $a_ib_ic_id_i$ defined in ``GB2312-80 Encoding of Chinese characters'' in  \cite{GB2312-80}.\qqed
\end{defn}

It is easy to define a Hanzi-code matrix with elements defined by Chinese code defined in \cite{GB2312-80}, we omit it here. Since there are efficient algorithms for four standard $1$-line Vo-$t$ with $t\in [1,4]$ in Fig.\ref{fig:TB-from-basic-4-curves-00} for making TB-paws from adjacency e-value and ve-value matrices. We, as an example, have a Hanzi-GB2312-80 matrix

\begin{equation}\label{eqa:Hanzi-sentence-GB2312-80}
\centering
{
\begin{split}
A_{han}(G^*)&= \left(
\begin{array}{ccccccccc}
4 & 4 & 2 & 2 & 5 & 4 & 4 & 4 & 3\\
0 & 0 & 6 & 5 & 2 & 4 & 7 & 4 & 8\\
4 & 4 & 3 & 1 & 8 & 7 & 3 & 1 & 2\\
3 & 3 & 5 & 1 & 2 & 6 & 4 & 1 & 9
\end{array}
\right)
\end{split}}
\end{equation}
according to a Hanzi-sentence $G^*=H_{4043}$ $H_{4043}$ $H_{2635}$ $H_{2511}$ $H_{5282}$ $H_{4476}$ $H_{4734}$ $H_{4411}H_{3829}$ shown in Fig.\ref{fig:rrhg-GB2312-80} (a). Moreover, the matrix $A_{han}(G^*)$ distributes us some TB-paws as follows:
$$D_1(G^*)=442254443847425600443187312914621533$$
$$D_2(G^*)=404334042635115252826744473411443829$$
$$D_3(G^*)=343540403121642586472254319174442843$$
and
$$D_4(G^*)=404340432635251152824476473444113829$$
by four standard $1$-line Vo-$t$ with $t\in [1,4]$. This Hanzi-sentence $G^*$ induces another matrix as follows:
{\small
\begin{equation}\label{eqa:Hanzi-sentence-code}
\centering
{
\begin{split}
A_{han}(G^*)&= \left(
\begin{array}{ccccccccc}
4 & 4 & 5 & 5 & 5 & 5 & 4 & 5 & 5\\
E & E & 9 & 1 & 2 & 9 & E & 9 & E\\
B & B & 7 & 6 & 1 & 2 & 0 & 1 & 7\\
A & A & D & C & 9 & 9 & B & A & 3
\end{array}
\right)
\end{split}}
\end{equation}
}by Chinese code of Chinese dictionary, and it gives us four TB-paws
{\small
$$C_1(G^*)=445555455E9E9219EEBB76120173AB99CDAA$$
$$C_2(G^*)=4EBAABE4597DC615521992954E0BA1955E73$$
$$C_3(G^*)=ABADBE4E7C96945119B225590A31E5497E55$$
and
$$C_4(G^*)=4EBA4EBA597D516C521959294E0B591A5E73$$
}by four standard $1$-line Vo-$t$ with $t\in [1,4]$.

\subsection{Systems of linear equations of Hanzis}

Let $X=(x_1~x_2~x_3~x_4)^{-1}$ and $Y=(y_1~y_2~y_3~y_4)^{-1}$ be two vectors with $x_i,y_i\in [0,9]$, and let $A_{han}(H)=(A~B~C~D)^{-1}_{4\times 4}$. So, we have a system of linear equations
\begin{equation}\label{eqa:system-linear-equations-hanzi}
Y=A_{han}(H)X.
\end{equation}
The system (\ref{eqa:system-linear-equations-hanzi}) can help us to find unknown \emph{private key} $Y$ from known \emph{public key} $X$, or other applications.

\subsection{Hanzi-gpws with variable labellings}

\begin{figure}[h]
\centering
\includegraphics[width=8cm]{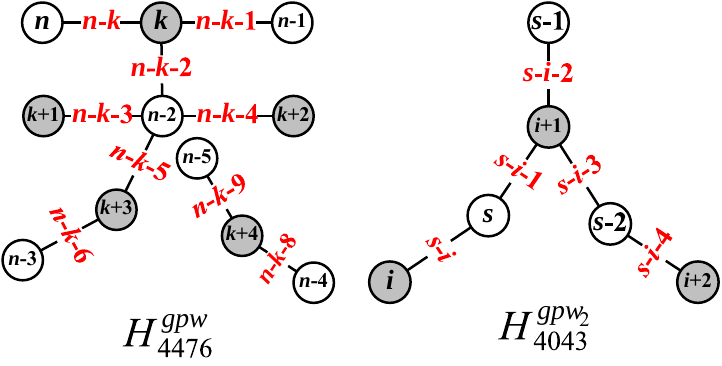}\\
\caption{\label{fig:variable-tian-ren} {\small Two Hanzi-gpws having variable labellings.}}
\end{figure}

We show an example of Hanzi-gpws with variable labellings in Fig.\ref{fig:variable-tian-ren}. By the writing stroke order of Hanzis, we have
$${
\begin{split}
&T_b(H_{4476};n,k)=n(n-k)k(n-k-1)(n-1)(k+1)\\
&(n-k-3)(n-2)(n-k-4)(k+2)k(n-k-2)\\
&(n-2)(n-k-5)(k+3)(n-k-6)(n-3)(n-5)\\
&(n-k-9)(k+4)(n-k-8)(n-4)
\end{split}}$$
$${
\begin{split}
&T_b(H_{4043};s,i)=(s-1)(s-i-2)(i+1)(s-i-1)\\
&s(s-i)i(i+1)(s-i-3)(s-2)(s-i-4)(i+2).
\end{split}}$$

We can apply Hanzi-gpws with variable labellings to build up large scale of Abelian groups, also graph groups introduced in \cite{Yao-Zhang-Sun-Mu-Sun-Wang-Wang-Ma-Su-Yang-Yang-Zhang-2018arXiv}, for encrypting dynamic networks.

\subsection{Hanzis in xOy-plane}

In Fig.\ref{fig:2-hanzi-plane}, we express two Hanzi-graphs $T_{4535}$ and $T_{8630}$ into the popular xOy-plane, such that each vertex has a coordinate $(x,y)$ with non-negative integers $x,y$. So, we can write an edge with two ends $(x,y)$ and $(u,v)$ as $(x,y|u,v)$ (or $(u,v|x,y)$), for instance, $uv=(1,0|2,1)$ (or $uv=(2,1|1,0)$) and $st=(3,0|3,2)$ (or $st=(3,2|3,0)$) shown in Fig.\ref{fig:2-hanzi-plane} (a). We refer to the \emph{graphic expressions} of two Hanzis $T_{4535}$ and $T_{8630}$ as \emph{analytic Hanzis}.

\begin{figure}[h]
\centering
\includegraphics[height=7cm]{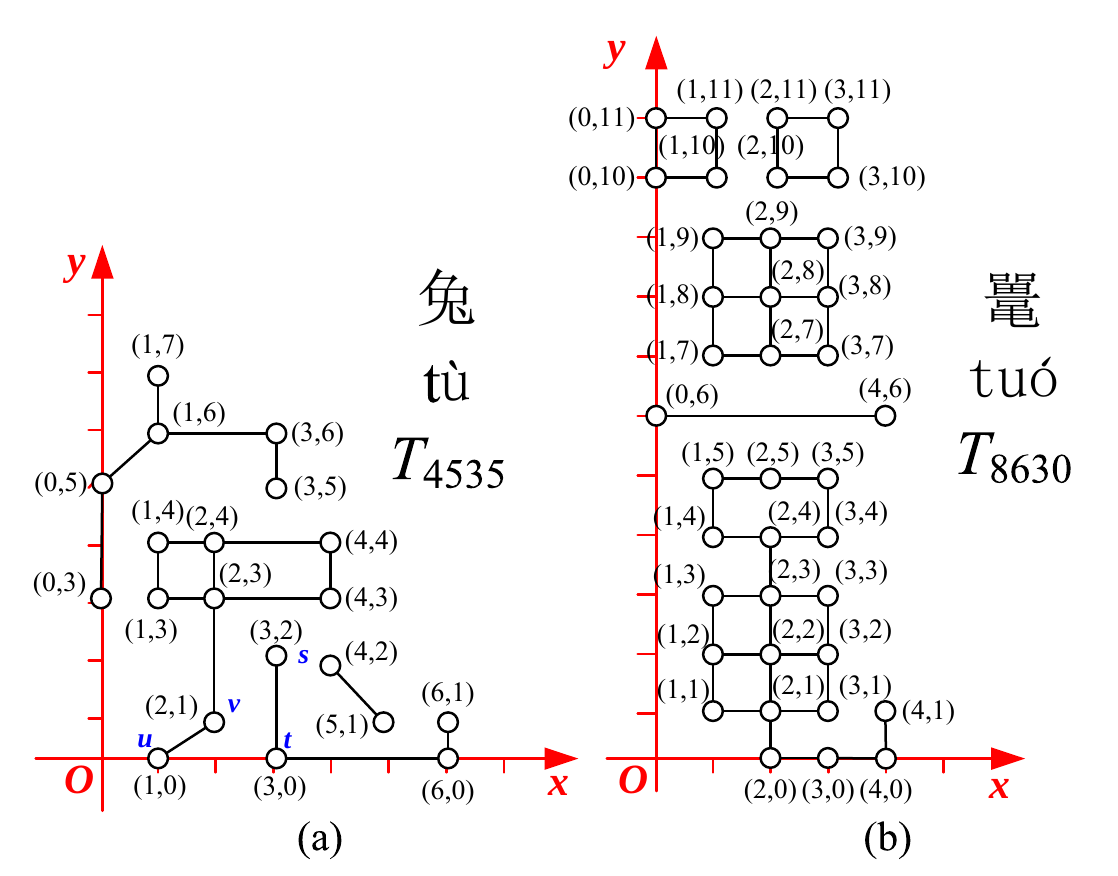}\\
\caption{\label{fig:2-hanzi-plane} {\small (a) A Hanzi-graph $T_{4535}$ is expressed in xOy-plane; (b) a Hanzi-graph $T_{8630}$ is expressed in xOy-plane.}}
\end{figure}

By Fig.\ref{fig:1-hanzi-xOy-plane-TB} and the $1$-line O-$k$ with $k\in [1,4]$ based on (\ref{eqa:k-lines-TB-paw}), we can write the following TB-paws
$${
\begin{split}
T^{(1)}_b(T_{4585})=&02121201111110108765432\\
&11222111122100020,
\end{split}}$$

$${
\begin{split}
T^{(2)}_b(T_{4585})=&0211222212123111140\\
&111522106111070020810,
\end{split}}$$

$${
\begin{split}
T^{(3)}_b(T_{4585})=&22121022111131212422\\
&10501116002071110108,
\end{split}}$$
and
$${
\begin{split}
T^{(4)}_b(T_{4585})=&021121222212311014\\
&1111522116101070010820
\end{split}}$$

\begin{figure}[h]
\centering
\includegraphics[height=5cm]{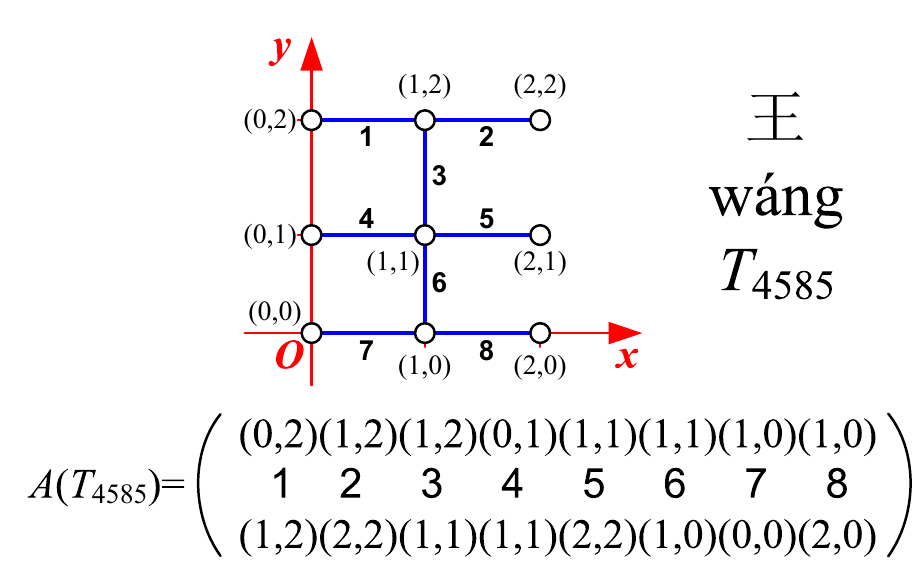}\\
\caption{\label{fig:1-hanzi-xOy-plane-TB} {\small A Hanzi-graph $T_{4585}$ in xOy-plane and its analytic Hanzi-matrix.}}
\end{figure}

\section{Self-similar Hanzi-networks}

Self-similarity is common phenomena between a part of a complex system and the whole of the system. The similarity between the fine structure or property of different parts can reflect the basic characteristics of the whole. In other word, the invariance under geometric or non-linear transformation: the similar properties in different magnification multiples, including geometry. The mathematical expression of self-similarity is defined by
\begin{equation}\label{eqa:self-similarity}
\theta(\lambda r)=\lambda\alpha \theta(r), \textrm{ or } \theta(r)\sim r\alpha,
\end{equation} where $\lambda $ is called \emph{scaling factor}, and $\alpha$ is called \emph{scaling exponent} (fractal dimension) and describes the spatial properties of the structure. The function $\theta(r)$ is a measure of the occupancy number, quantity and other properties of area, volume, mass, etc (Wikipedia).

\subsection{An example of self-similar Hanzi-networks}

The previous four steps of a self-similar Hanzi-network $N_{4043}(t)$ are shown in Fig.\ref{fig:self-similar-00} and Fig.\ref{fig:self-similar-00aa}.

\begin{figure}[h]
\centering
\includegraphics[height=6cm]{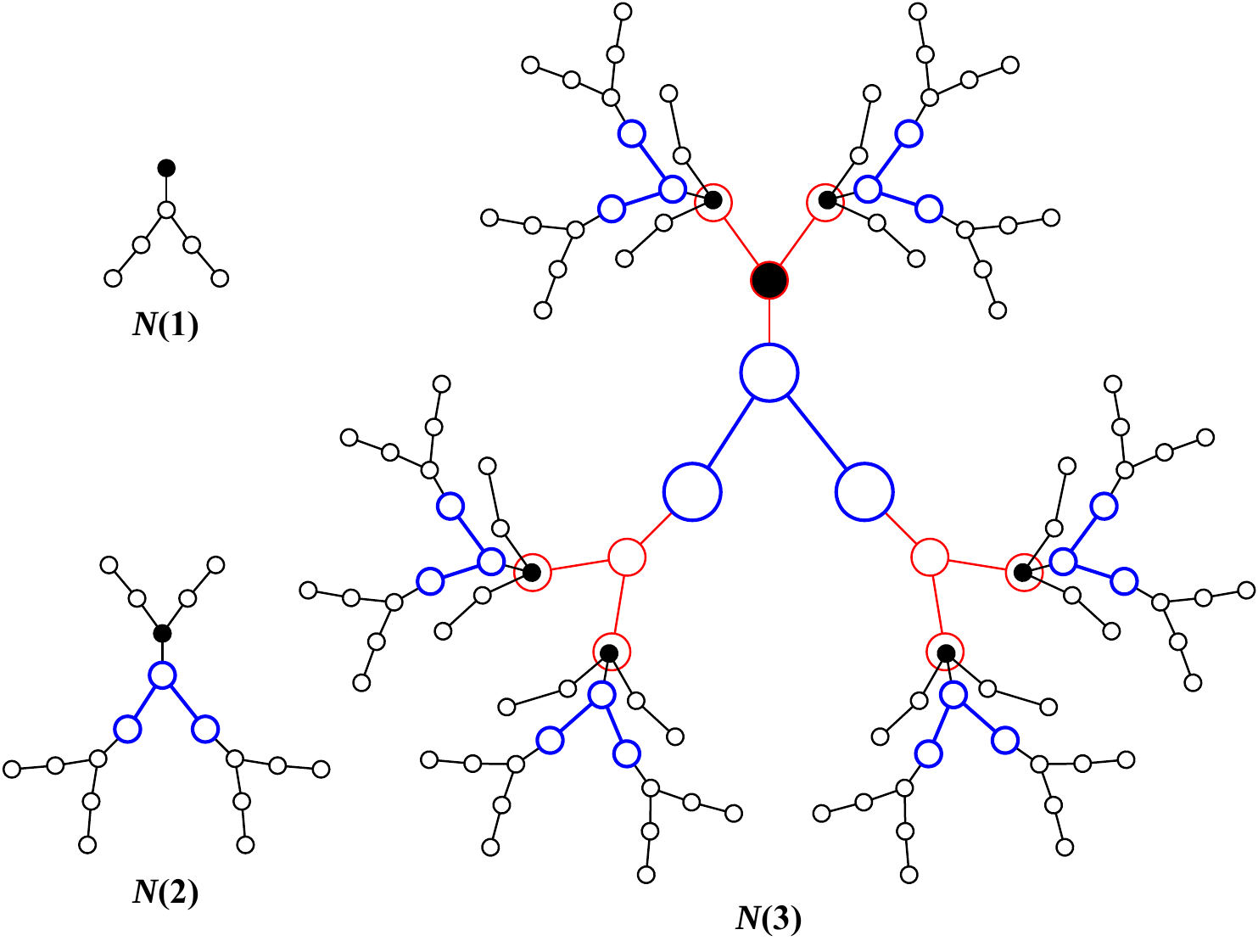}\\
\caption{\label{fig:self-similar-00} {\small A self-similar Hanzi-network $N_{4043}(t)$ with first three steps $t=1,2,3$, where $N(1)=T_{4043}$.}}
\end{figure}

\begin{figure}[h]
\centering
\includegraphics[height=7.8cm]{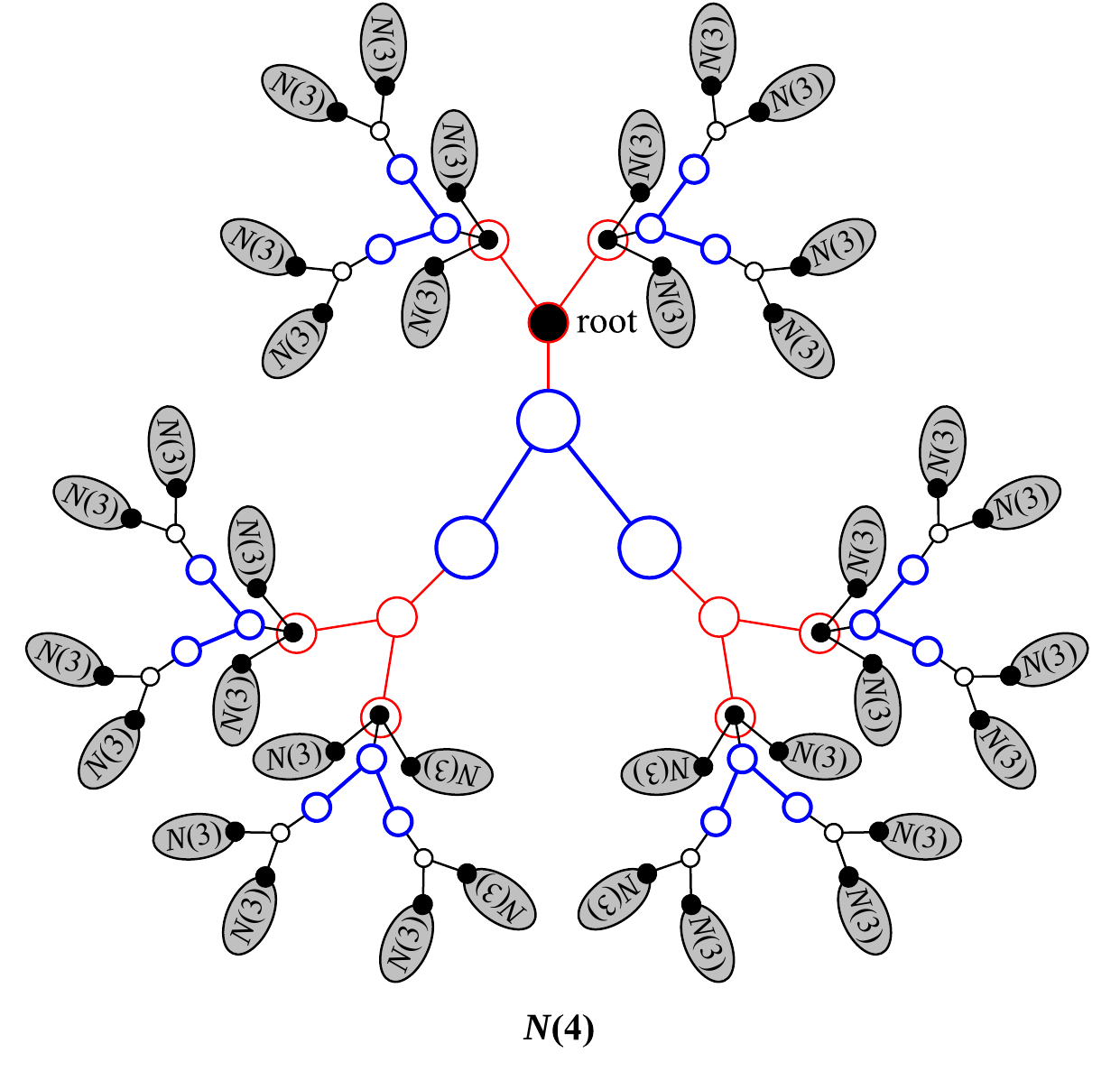}\\
\caption{\label{fig:self-similar-00aa} {\small The self-similar Hanzi-network $N_{4043}(t)$ shown in Fig.\ref{fig:self-similar-00} at the forth step $t=4$.}}
\end{figure}

\subsection{Self-similar tree-like Hanzi-graphs}

In mathematics, a self-similar object is exactly or approximately similar to a part of itself (i.e. the whole has the same shape as one or more of the parts). Many objects in the real world, such as coastlines, are statistically self-similar: parts of them show the same statistical properties at many scales (Ref. \cite{Mandelbrot-Benoit-B-1967}). Some of self-similar Hanzi-graphs are shown in Fig.\ref{fig:self-similar-11}, Fig.\ref{fig:self-similar-algorithm-A}, Fig.\ref{fig:self-similar-00}, Fig.\ref{fig:self-similar-00aa} and Fig.\ref{fig:self-similar-wang}, we call them \emph{tree-like self-similar Hanzi-graphs}, since there are no cycles in them. Another reason is that many tree-like Hanzi-graphs admit many graph labellings for making Hanzi-gpws easily.

\begin{figure}[h]
\centering
\includegraphics[width=8.2cm]{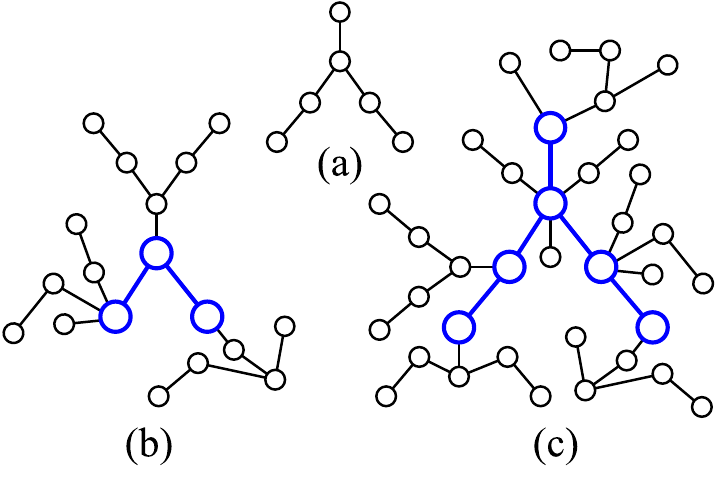}\\
\caption{\label{fig:self-similar-11} {\small (a) Hanzi-graph $T_{4043}$; (b) and (c) are two non-uniformly self-similar Hanzi-graphs.}}
\end{figure}

\begin{figure}[h]
\centering
\includegraphics[width=8.2cm]{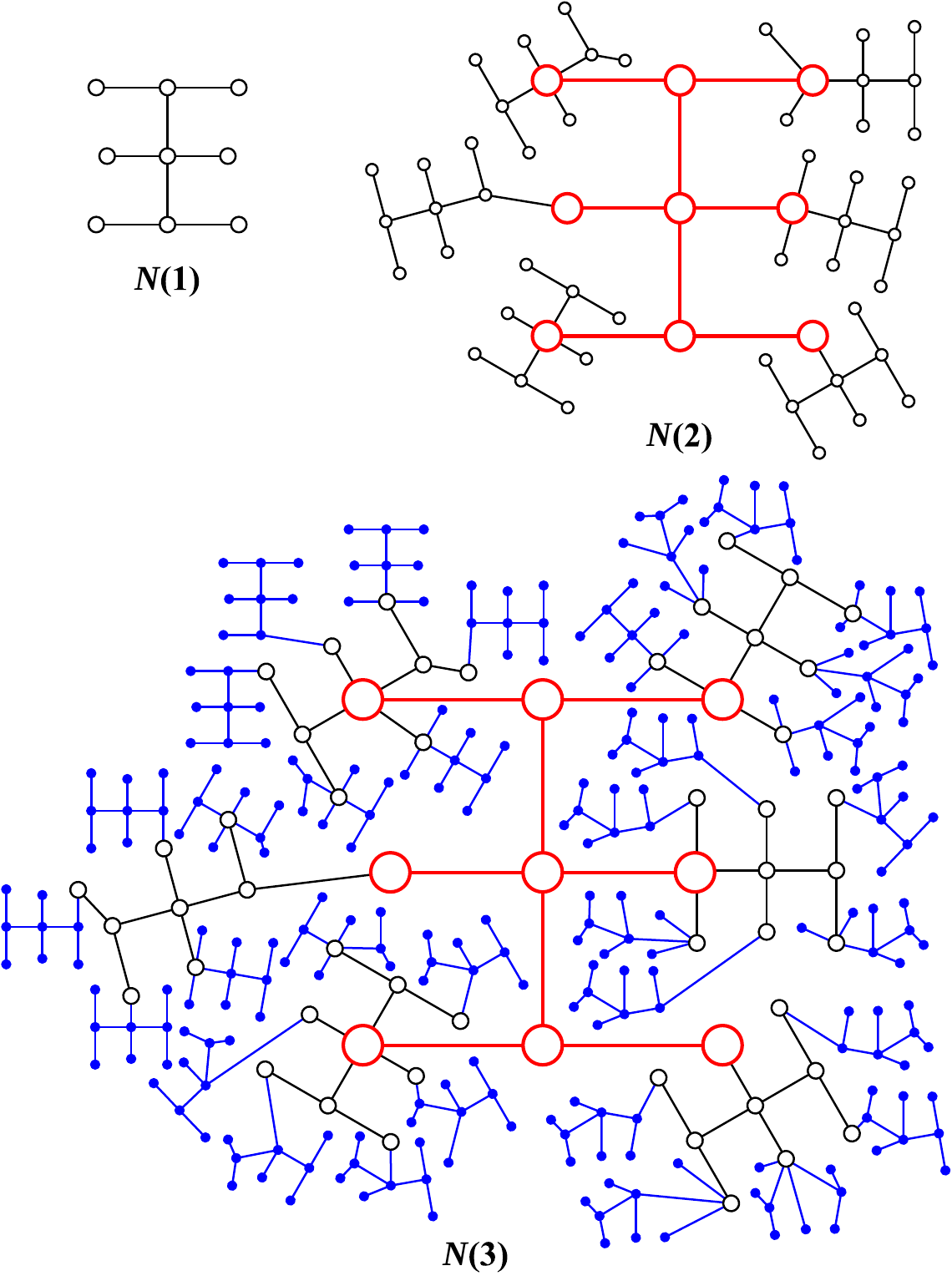}\\
\caption{\label{fig:self-similar-wang} {\small A self-similar Hanzi-network $N_{4585}(t)$ based on a Hanzi-graph $T_{4585}$ for $t=1,2,3$.}}
\end{figure}

We present the following constructive leaf-algorithms for building up particular self-similar tree-like networks. Let $T_0$ be a tree on $n\geq 3$ vertices and let $L(T_0)=\{u_i:i\in[1,m]\}$ be the set of leaves of $T_0$. We refer a vertex $u_0\in V(T_0)$ to be the \emph{root} of $T_0$, and write $L(T_0)\setminus \{u_0\}=\{u_j:j\in [1,m]\}$, where $m=|L(T_0)\setminus \{u_0\}|$, and assume that each leaf $u_i\in L(T_0)\setminus \{u_0\}$ is adjacent to $v_i\in V(T_0)\setminus L(T_0)$ with $i\in [1,m]$.

\vskip 0.2cm

\subsubsection{\textbf{Leaf-algorithm-A}} There are the copies $T_{0,i}$ of $T_0$ with $u_{0,i}\in V(T_{0,i})$ to be the image of the root $u_0$ with $i\in [1,m]$. Deleting each leaf $u_i\in L(T_0)\setminus \{u_0\}$, and then coinciding the \textrm{root vertex} $u_{0,i}$ of the tree $T_{0,i}$ with $v_i$ into one vertex for $i\in [1,m]$, the resultant tree is denoted as $T_1=\odot_A\langle T_0,\{T_{0,i}\}^m_1\rangle$ and called a \emph{uniformly $1$-rank self-similar tree} with root $u_0$. Go on in the way, we have \emph{uniformly $T_0$-leaf $t$-rank self-similar trees} $T_t=\odot_A \langle T_{0},\{T_{t-1,i}\}^m_1\rangle$ with the root $u_0$ and $t\geq 1$. Moreover, we called $T_t$ as a \emph{uniformly $T_0$-leaf self-similar Hanzi-network} with the root at time step $t$ if $T_0$ is a Hanzi-graph. (see Fig.\ref{fig:self-similar-algorithm-A} (a), which is a Hanzi-graph obtained from a Hanzi $H_{4043}$)

\vskip 0.4cm

Obviously, every uniformly $k$-rank self-similar tree $T_k=\odot_A \langle T_{0},\{T_{k-1,i}\}^m_1\rangle$ is similar with $T_0$ as regarding each $T_{k-1,i}$ as a ``leaf''. If the root $u_0$ is a leaf of $T_0$, then the uniformly $k$-rank self-similar trees $T_k=\odot_A \langle T_{0},\{T_{k-1,i}\}^m_1\rangle$ have some good properties.

The vertex number $v(T_t)$ and edge number $e(T_t)$ of each uniformly $T_0$-leaf $t$-rank self-similar tree $T_t=\odot_A \langle T_{0},\{T_{t-1,i}\}^m_1\rangle$ can be computed by the following way:
\begin{equation}\label{eqa:Leaf-algorithm-A}
\left\{
{
\begin{split}
v(T_t)&=v(T_0)m^t+[v(T_0)-2m]\sum^{t-1}_{k=0}m^k\\
e(T_t)&=v(T_t)-1
\end{split}}
\right.
\end{equation}
where $m=|L(T_0)\setminus \{u_0\}|$.

\begin{figure}[h]
\centering
\includegraphics[height=12cm]{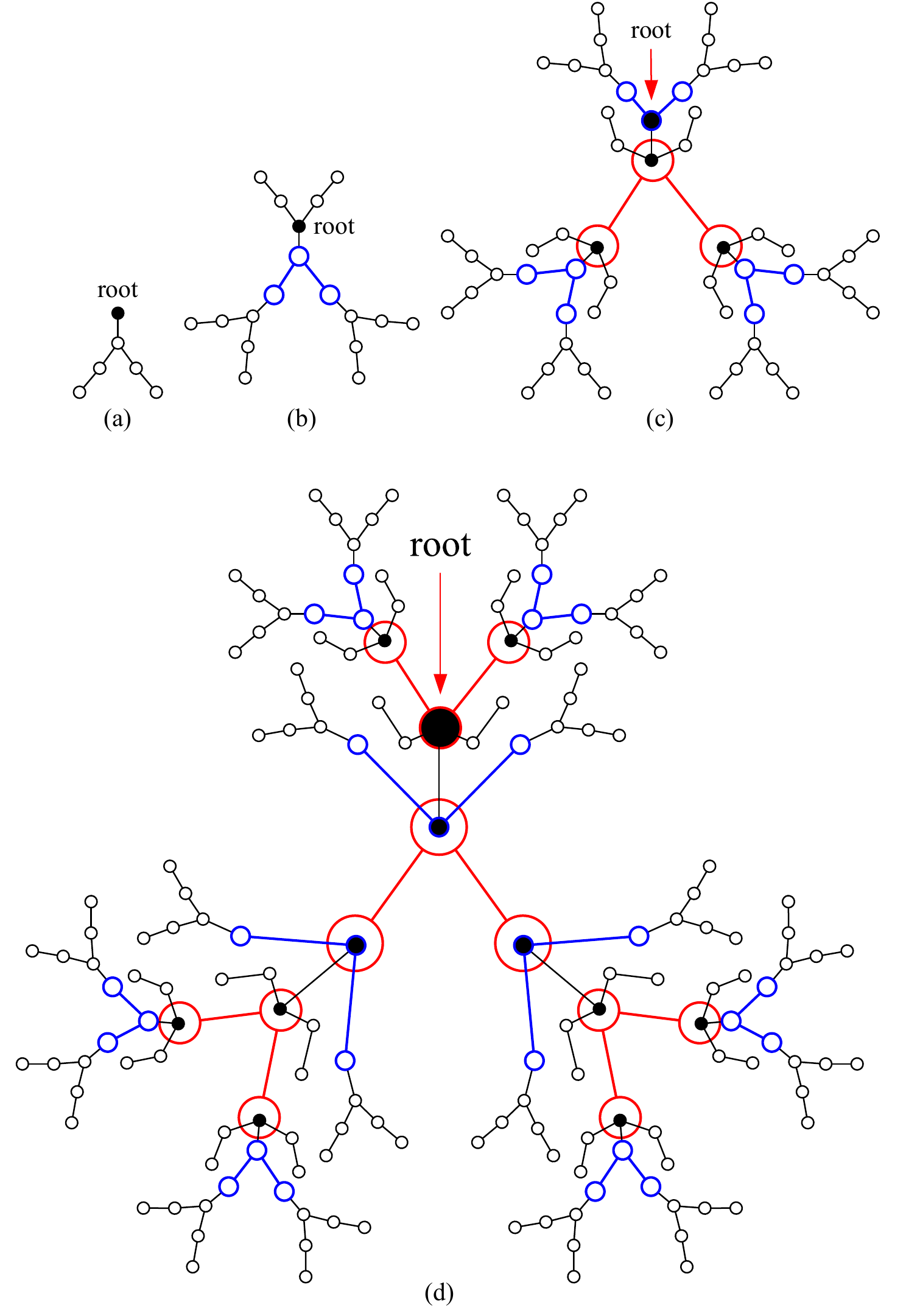}\\
\caption{\label{fig:self-similar-algorithm-A} {\small According to $T_0=T_{4043}$ and Leaf-algorithm-A: (a) A Hanzi-graph $T_{0}$; (b) a uniformly self-similar Hanzi-graph $T_1=\odot_A\langle T_0,\{T_{0,i}\}^3_1\rangle$ with the root; (c) $T_2=\odot_A\langle T_{0},\{T_{2,i}\}^3_1\rangle$ with the root; (d) $T_3=\odot_A\langle T_{0},\{T_{3,i}\}^3_1\rangle$ with the root.}}
\end{figure}

\vskip 0.4cm

\subsubsection{\textbf{Leaf-algorithm-B}} We take $n$ copies $H_{0,1}$, $H_{0,2}$, $\dots $, $H_{0,n}$ of a tree $H_0$, where $n=|L(H_0)|$ is the number of leaves of $H_0$, and do: (1) delete each leaf $x_i$ from $H_0$, where $x_i$ is adjacent with $y_i$ such that the edge $x_iy_i\in E(H_0)$, clearly, $y_i$ may be adjacent two or more leaves; then (2) coincide some vertex $x_{0,i}\in V(H_{0,i})$ with the vertex $y_i$ into one vertex with $i\in [1,n]$. The resultant tree is denoted as $H_1=\odot_B \langle H_0,\{H_{0,i}\}^n_1\rangle $. Proceeding in this way, we get trees $H_j=\odot_B\langle H_0, \{H_{j-1,i}\}^n_1\rangle$ for $j\geq 2$, where $H_{j-1,1}$, $H_{j-1,2}$, $\dots $, $H_{j-1,n}$ are the copies of $H_{j-1}$, and deleting leaves $x_i$ from $H_0$ and coincide an arbitrary vertex $x_{j-1,i}\in V(H_{j-1,i})$ with the vertex $y_i$ of $H_0$ into one for $i\in [1,n]$. It refers to each tree $H_k=\odot_B\langle H_0, \{H_{k-1,i}\}^n_1\rangle$ as an \emph{$H_0$-leaf $k$-rank self-similar tree} without root with $k\geq 1$. We name $H_k$ as an \emph{$H_0$-leaf self-similar Hanzi-network} at time step $k$ if $H_0$ is a Hanzi-graph.

The vertex number $v(H_t)$ and edge number $e(H_t)$ of each $H_0$-leaf $k$-rank self-similar tree $H_t=\odot_B \langle H_{0},\{H_{t-1,i}\}^n_1\rangle$ can be computed in the way:
\begin{equation}\label{eqa:Leaf-algorithm-B}
\left\{
{
\begin{split}
v(H_t)&=v(H_0)n^t+[v(H_0)-2n]\sum^{t-1}_{k=0}n^k\\
e(H_t)&=v(H_t)-1.
\end{split}}
\right.
\end{equation}

\vskip 0.4cm

\subsubsection{\textbf{Leaf-algorithm-C}} Let disjoint trees $G_{0,1}$, $G_{0,2}$, $\dots $, $G_{0,m(0)}$ be the copies of a tree $G_0$, where $m(0)=|L(G_0)|$ is the number of leaves of $G_0$. We delete each leaf $x_i$ from $G_0$ and coincide some vertex $x_{0,i}$ of $G_{0,i}$ with the vertex $y_i$ into one for $i\in [1,m(0)]$, where the edge $x_iy_i\in E(G_0)$. The resultant tree is denoted as $G_1=\odot _C\langle G_0,\{G_{0,i}\}^{m(0)}_1\rangle $. Proceeding in this way, each tree $G_k=\odot_C\langle G_{k-1}, \{G_{k-1,i}\}^{m(k-1)}_1\rangle$ with $k\geq 2$ is obtained by removing each leaf $w_i$ of $G_{k-1}$, and then coincide some vertex $z_{k-1,i}$ of $G_{k-1,i}$ being a copy of $G_{k-1}$ with the vertex $w'_i$ of $G_{k-1}$ into one vertex with $i\in [1,m(k-1)]$, where the leaf $w_i$ is adjacent with $w'_i$ in $G_{k-1}$, and $m(k-1)=|L(G_{k-1})|$ is the number of leaves of $G_{k-1}$. We refer to each tree $G_k$ as a \emph{leaf-$k$-rank self-similar tree} with $k\geq 1$, and call $G_k=\odot_C\langle G_{k-1}, \{G_{k-1,i}\}^{m(k-1)}_1\rangle$ a \emph{leaf-$k$-rank self-similar Hanzi-network} at time step $k\geq 1$ if $G_0$ is a Hanzi-graph.

\vskip 0.4cm

Let $(a_1,a_2,\dots ,a_{k})$ be a combinator selected from integer numbers $0,1,\dots ,s-1$ with $s>1$, so we have the number ${s\choose k}$ of different combinators $(a_1,a_2,\dots ,a_{k})$ in total, and put them into a set $F_k$. Each leaf-$k$-rank self-similar tree $G_t$ has its vertex number $v(G_t)$ and edge number $e(G_t)$ as follows:
\begin{equation}\label{eqa:Leaf-algorithm-B}
\left\{
{
\begin{split}
v(G_t)&=v(G_0)\prod^{s-1}_{k=0}[1+m(k)]\\
&-2\sum^{s-1}_{k=1}\sum^{{s\choose k}}_{(a_1,a_2,\dots ,a_{k})\in F_k}m(a_1)m(a_2)\cdots m(a_k)\\
e(G_t)&=v(G_t)-1.
\end{split}}
\right.
\end{equation}
Moreover, if the vertex $z_{k-1,i}$ of $G_{k-1,i}$ is not a leaf of $G_{k-1,i}$, then each $G_{k-1}$ has $m(k)=[m(0)]^{2^k}$ leaves in total.

\vskip 0.2cm

It is noticeable, some $H_k$ contains $T_k$ in Leaf-algorithm-B. Moreover, each trees $T_k$, $H_k$ and $G_k$ are similar to $T_0$ while we see each subgraphs $T_{k-1}$, $H_{k-1}$ and $G_{k-1}$ of $T_k$, $H_k$ and $G_k$ as `leaves' in the above three algorithms. Common phenomena are that a local part and the whole are similar to each other, and a local shape of a local part and the whole are similar to each other too.

\subsection{Self-similar graphs}

What is a definition of a self-similar graph? We apply the vertex-coincident operation and the vertex-split operation of graphs in Definition \ref{defn:split-operation-combinatoric} to show some types of self-similar graphs.
\begin{defn}\label{defn:vertex-edge-split-self-similar-graphs}
$^*$ Let $G_0$ be a $(p_0,q_0)$-graph. If a graph $G_1$ can be vertex-split (resp. edge-split) into proper subgraphs $H_{1,j}$ with $j\in [1,m_1]$ by the vertex-split (resp. edge-split) operation defined in Definition \ref{defn:split-operation-combinatoric}, such that $H_{1,j}\cong G_0$ for each $j\in [1,m_1]$ and $E(G_1)=\bigcup ^{m_1}_{j=1}E(H_{1,j})$, we say $G_1$ to be a \emph{vertex-split (resp. edge-split) $m_1$-scaling $G_0$-similar graph}, denoted as $G_1=\odot^{m_1}_{j} H_{1,j}$ (resp. $G_1=\ominus^{m_1}_{j} H_{1,j}$). Furthermore, if we can do a series of vertex-split operations to a graph $G_i$  to obtain subgraphs $H_{i,j}$ with $j\in [1,m_i]$ such that $H_{i,j}\cong G_{i-1}$ and $E(G_i)=\bigcup ^{m_i}_{j=1}E(H_{i,j})$ with $i\in [1,n]$, we call $G_i$ a \emph{vertex-split (resp. edge-split) $(m_j)^i_1$-scaling $G_0$-similar graph}, denoted as $G_i=\odot^{m_i}_{j} H_{i,j}$ (resp. $G_i=\ominus^{m_i}_{j} H_{i,j}$), with $i\in [1,n]$ for $n\geq 1$.\qqed
\end{defn}

We, in Definition \ref{defn:vertex-edge-split-self-similar-graphs}, can see: (1) $E(H_{i,k})\cap E(H_{i,j})$ may be not empty in general; (2) do only vertex-split operations, or do only edge-split operation, no mixed. Thereby, we can give the following self-similar graphs:

\begin{defn}\label{defn:vertex-edge-split-mixed}
$^*$ Let $G_0$ be a $(p_0,q_0)$-graph. If a graph $G_1$ can be vertex-split and edge-split into proper subgraphs $H_{1,j}$ with $j\in [1,m_1]$ by two vertex-split and edge-split operations defined in Definition \ref{defn:split-operation-combinatoric}, such that $H_{1,j}\cong G_0$ for each $j\in [1,m_1]$ and $E(G_1)=\bigcup ^{m_1}_{j=1}E(H_{1,j})$, we say $G_1$ to be a \emph{mixed-split $m_1$-scaling self-similar graph with respect to $G_0$}, denoted as $G_1=(\ominus\cup \odot)^{m_1}_{j} H_{1,j}$. Furthermore, if we can do a series of vertex-split and edge-split operations to each graph $G_i$ ($i\in [1,n]$ for $n\geq 1$) to obtain subgraphs $H_{i,j}$ with $j\in [1,m_i]$ such that $H_{i,j}\cong G_{i-1}$ and $E(G_i)=\bigcup ^{m_i}_{j=1}E(H_{i,j})$, we call $\{G_i\}$ a \emph{mixed-split $\prod^i_1 m_j$-scaling self-similar sequence with respect to $G_0$}, denoted as $G_i=(\ominus\cup \odot)^{m_i}_{j} H_{i,j}$.\qqed
\end{defn}

In Definition \ref{defn:vertex-edge-split-mixed}, we can see $\theta(m_1\cdot G_0)=m_1\cdot \theta(G_0)$ by the mathematical expression of self-similarity defined in the equation (\ref{eqa:self-similarity}), where $m_1$ is the \emph{scaling factor}. Moreover, we show another type of self-similar graphs in Definition \ref{defn:edge-contracting-self-similar} below.

\begin{defn}\label{defn:edge-contracting-self-similar}
$^*$ Let $G$ be a $(p,q)$-graph. If a graph $H$ contains proper subgraphs $H_{i}$ with $n\in [1,n]$, and each $H_{i}$ is a mixed-split $m_i$-scaling self-similar graph with respect to $G$, we do an edge-contracting operation defined in Definition \ref{defn:split-operation-combinatoric} to all edges of $H_{i}$ such that each $H_{i}$ contracts a graph having a unique vertex and no loops, the resultant graph is denoted as $J$. If $J$ does not contains any proper subgraph being a mixed-split $r$-scaling self-similar graph with respect to $G$, and $J\cong G$, we call $H$ a \emph{complete self-similar graph with respect to $G$}.\qqed
\end{defn}

In general, we may meet so-called pan-self-similar graphs defined as follows:

\begin{defn}\label{defn:multiple-element-self-similar}
$^*$ Let $F(k)$ be a graph set of graphs $G^1_{0},G^2_{0},\dots ,G^k_{0}$ with $k\geq 2$. If a graph $H$ can be vertex-split (edge-split) into proper subgraphs $H_{j}$ with $j\in [1,m]$ by the vertex-split (edge-split) operation defined in Definition \ref{defn:split-operation-combinatoric}, such that each $H_{j}$ holds $H_{j}\cong G^t_0$ for some $G^t_0\in F$ and $E(H)=\bigcup ^{m}_{j=1}E(H_{j})$, we say $H$ to be a \emph{vertex-split (edge-split) $m$-scaling pan-self-similar graph with respect to the set $F(k)$}, denoted as $H=F\odot^{m_i}_{j} H_{i,j}$.\qqed
\end{defn}

For example, assume that $G_0$ defined in Definition \ref{defn:multiple-element-self-similar} is a vertex-split (edge-split) $m$-scaling pan-self-similar graph with respect to a graph set $F(k)$, then the mixed-split $(m_j)^i_1$-scaling self-similar sequence $\{G_i\}$ with respect to $G_0$ is a pan-self-similar graph with respect to the set $F(k)$. We introduce the following ways for constructing self-similar networks:

\vskip 0.2cm

\textbf{$\bullet$ Self-similar vertex-coincided algorithm.}

\emph{Step 1.} Let $N(0)$ be a network with a unique \emph{active vertex} $u_0$, and let $n_v(0)$ be the number of vertices of the network $N(0)$, so we have the vertex set $V(N(0))=\{v_i:~i\in[1,n_v(0)]\}$. We take $n_v(0)$ copies $N^i(0)$ of $N(0)$ with $i\in[1,n_v(0)]$, and the vertex $u^i_0$ of the $i$th copy $N^i(0)$ is the image of the active vertex $u_0$ of $N(0)$. And then we coincide the $i$th image $u^i_0$ with the vertex $v_i$ of the network $N(0)$ into one for $i\in[1,n_v(0)]$. The resultant network is denoted as $N(1)=N(0)\overrightarrow{\odot}N(0)$. Here, we define the active vertex $u_0$ of $N(0)$ as the active vertex of $N(1)$.

\vskip 0.2cm

\emph{Step 2.} The following vertex-coincided algorithms are independent to each other:

\vskip 0.2cm

\emph{Step 2.1.} \textbf{Vertex-coincided algorithm-I.}

We coincide the unique active vertex $u^s_{k}$ of the $s$th copy $N^s(k)$ of $N(k)$ with the vertex $v^s_{k-1}$ of the $N(k-1)$ into one vertex for $s\in [1,n_v(k-1)]$ with $k\geq 1$, where $n_v(k-1)=|V(N(k-1))|$ is the number of vertices of $N(k-1)$, the resultant network is denoted as $N(k+1)=N(k)\overrightarrow{\odot }N(k-1)$, and called a \emph{uniformly vertex-split $n_v(j)^k_0$-scaling $N(0)$-similar network} according to Definition \ref{defn:vertex-edge-split-self-similar-graphs}, and let the active vertex $u_{k-1}$ of $N(k-1)$ to be the unique active vertex of $N(k+1)$. Then, $n_v(j)^k_0$ are called \emph{scaling factors}.

\begin{figure}[h]
\centering
\includegraphics[height=7.8cm]{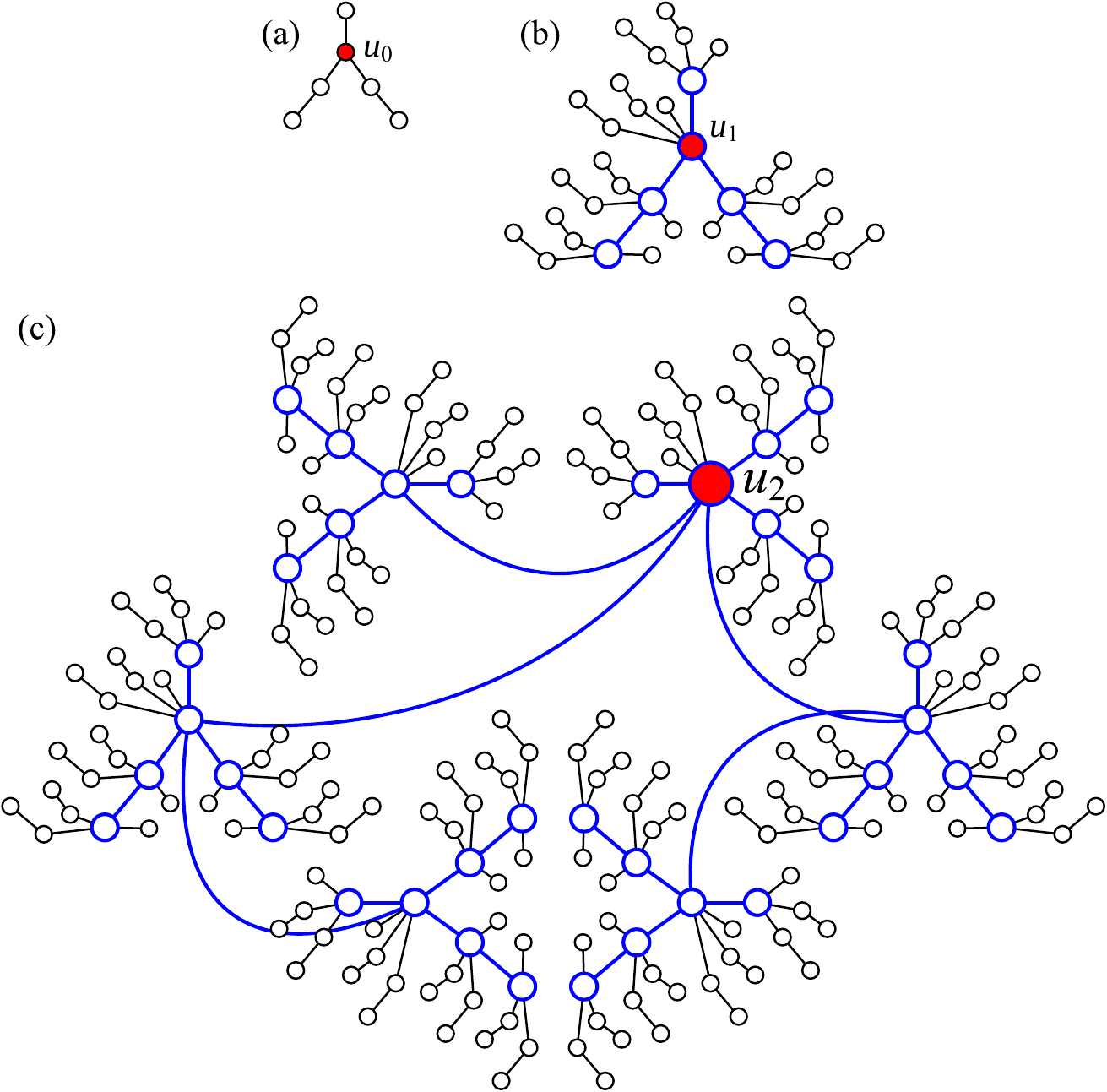}\\
\caption{\label{fig:vertex-coincided-algorithm-b} {\small For illustrating the vertex-coincided algorithm-I.}}
\end{figure}

Fig.\ref{fig:vertex-coincided-algorithm-b} is for illustrating the vertex-coincided algorithm-I, where (a) is $N(0)=T_{4043}$, (b) is $N(1)=N(0)\overrightarrow{\odot }N(0)$, and (c) is $N(2)=N(1)\overrightarrow{\odot }N(0)$.

\vskip 0.2cm

\emph{Step 2.2.} \textbf{Vertex-coincided algorithm-II.}

Coinciding the unique active vertex $u^i_{k-1}$ of the $i$th copy $N^i(k-1)$ of $N(k-1)$ with the vertex $v^i_{k}$ of the network $N(k)$ into one vertex for $i\in [1,n_v(k)]$ and $k\geq 1$, where $n_v(k)=|V(N(k))|$ is the number of vertices of $N(k)$, will produce a new network $N(k+1)$, we write it as $N(k+1)=N(k-1)\overrightarrow{\odot }N(k)$, and call it a \emph{uniformly vertex-split $n_v(j)^k_0$-scaling $N(0)$-similar network} according to Definition \ref{defn:vertex-edge-split-self-similar-graphs}. Here, we define the unique active vertex $u_k$ of $N(k)$ as the \emph{ unique active vertex} of $N(k+1)$, rewrite it as $u_{k+1}$. Then, $n_v(j)^k_0$ are called \emph{scaling factors}. (see examples shown in Fig.\ref{fig:vertex-coincided-algorithm-b})

\begin{figure}[h]
\centering
\includegraphics[height=6cm]{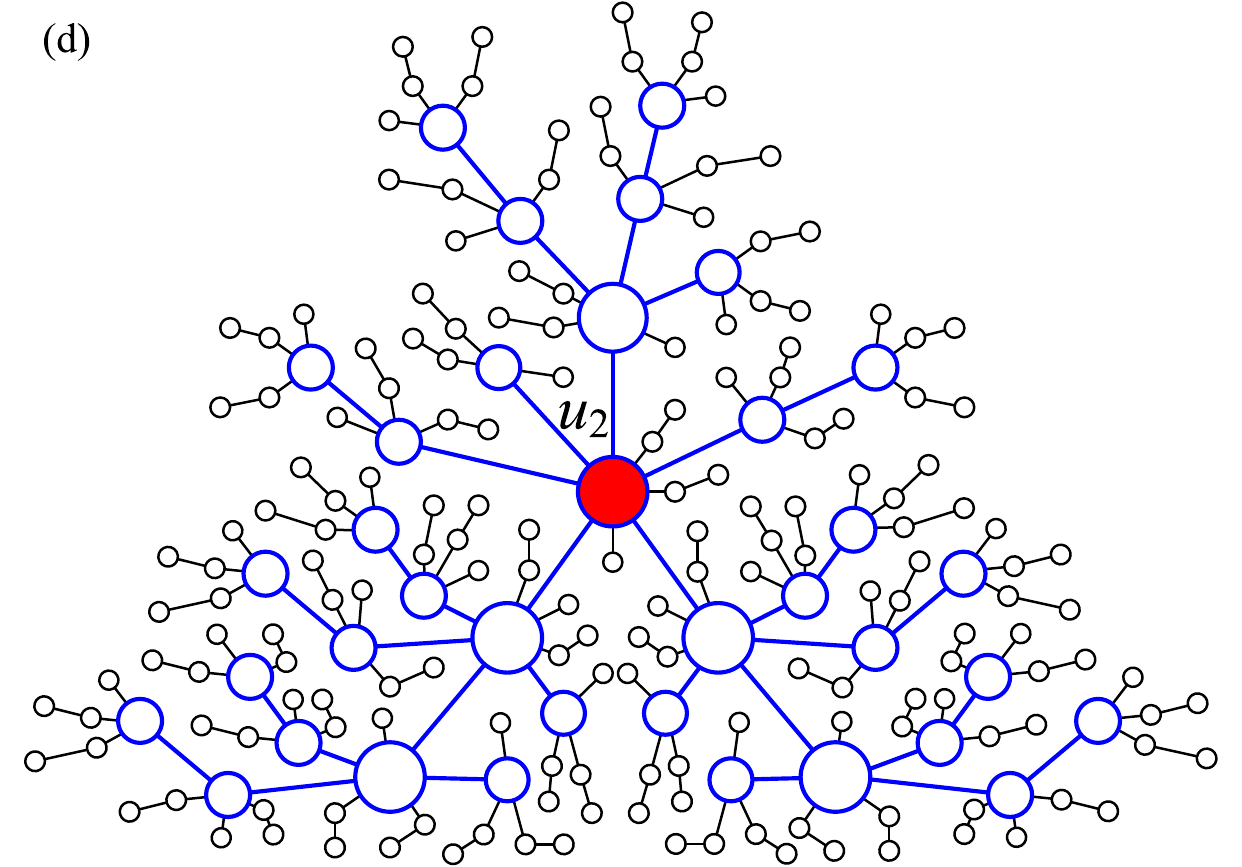}\\
\caption{\label{fig:vertex-coincided-algorithm-c} {\small For illustrating the vertex-coincided algorithm-II.}}
\end{figure}

Fig.\ref{fig:vertex-coincided-algorithm-c} is for illustrating the vertex-coincided algorithm-II, so we can see (d)$=N(2)=N(0)\overrightarrow{\odot}N(1)$ by (a)$=N(0)$ and (b)$=N(1)$ shown in Fig.\ref{fig:vertex-coincided-algorithm-b}.

\vskip 0.2cm

\emph{Step 2.3.} \textbf{Vertex-coincided algorithm-III.}

Let $N^*(k)$ be a duplicated network of $N(k)$. Coinciding the unique active vertex $u^j_{k}$ of the $s$th copy $N^j(k)$ of $N(k)$ with the vertex $v^s_{k}$ of the $N(k)$ into one vertex for $j\in [1,n_v(k)]$ with $k\geq 1$ produces a new network $N(k+1)=N(k)\overrightarrow{\odot}N(k)$, called a \emph{uniformly vertex-split $n_v(k)$-scaling $N(0)$-similar network} according to Definition \ref{defn:vertex-edge-split-self-similar-graphs}, and the unique active vertex of $N(k)$ is defined as the unique active vertex of $N(k+1)$. Then, $n_v(k)$ is the \emph{scaling factor}.

\vskip 0.2cm

\emph{Step 2.4.} \textbf{Vertex-coincided algorithm-IV.} Let $N'(0)$, $N'(1)$, $ \dots $, $N'(m)$ be defined well by Step 1 and $N'(s)\overrightarrow{\odot }N'(t)$ be defined by: We take the copies $N'_j(s)$ of $N'(s)$ having unique active vertex $u_{s,0}$ with $j\in [1,|V(N'(t))|]$, each copy $N'_j(s)$ has its own unique active vertex $u^j_{s,0}$. We coincide the active vertex $u^j_{s,0}$ of $N'_j(s)$ with the vertex $w_j\in V(N'(s))=\{w_j:~j\in [1,|V(N'(t))|]\}$ into one vertex, the resultant network is just $N'(s)\overrightarrow{\odot }N'(t)$.

Then we can make self-similar network
\begin{equation}\label{eqa:similar-network-make-11}
{
\begin{split}
N'(m+1)&=\overrightarrow{\odot }^k_{i=1}N'(a_i)\\
&=N'(a_1)\overrightarrow{\odot }N'(a_2)\overrightarrow{\odot }\cdots \overrightarrow{\odot }N'(a_k)
\end{split}}
\end{equation} with $k\geq 2$, where pairwise distinct $a_1,a_2,\dots ,a_k$ are selected from the set $[0,m]$. As the following form
\begin{equation}\label{eqa:similar-network-make-33}
N'(t+1)=N'(t-2)\overrightarrow{\odot}N'(t-1)\overrightarrow{\odot}N'(t)
\end{equation}
for each $t\geq 2$ holds true, we say $N'(t+1)$ to be a \emph{Fibonacci $N'(0)$-similar network} on time step $t\geq 0$.

\vskip 0.4cm

\textbf{$\bullet$ Self-similar edge-coincided algorithm.}

\vskip 0.2cm

\emph{Step (1)} Let $M(0)$ be a network with a unique \emph{active edge} $x_0w_0$, and let $n_e(0)$ be the number of vertices of the network $M(0)$, so we have the edge set $E(M(0))=\{x_iy_i:~i\in[1,n_e(0)]\}$. We take $n_e(0)$ copies $M^i(0)$ of $M(0)$ with $i\in[1,n_e(0)]$, and the edge $x^i_0w^i_0$ of the $i$th copy $M^i(0)$ is the image of the active edge $x_0w_0$ of $M(0)$. Next, the $i$th image $x^i_0w^i_0$ is identified with the edge $x_iy_i$ of the network $M(0)$ into one edge for each $i\in[1,n_e(0)]$. The resultant network is denoted as $M(1)=M(0)\overrightarrow{\ominus}M(0)$. Here, we define the active edge $x_0w_0$ of $M(0)$ as the active edge of $M(1)$, denoted as $x_1w_1$.

\vskip 0.2cm

\emph{Step (2)} The following edge-coincided algorithms are independent to each other:

\vskip 0.2cm

\emph{Step (2.1)} \textbf{Edge-coincided algorithm-I.}

Overlapping the unique active edge $x^i_{t-1}w^i_{t-1}$ of the $i$th copy $M^i(t-1)$ of $M(t-1)$ with the edge $x^i_{k}y^i_{k}$ of the network $M(t)$ into one edge for $i\in [1,n_e(t)]$ and $k\geq 1$, where $n_e(t)=|E(M(t))|$ is the number of edges of $M(t)$, gives us a new network $M(t+1)=M(t-1)\overrightarrow{\ominus}M(t)$, called as a \emph{uniformly edge-split $n_e(j)^t_0$-scaling $M(0)$-similar network} by Definition \ref{defn:vertex-edge-split-self-similar-graphs}. Here, we define the unique active edge $x_kw_k$ of $M(t)$ as the \emph{unique active edge} of $M(t+1)$, rewrite it as $x_{k+1}w_{k+1}$, $n_e(j)^t_0$ are called \emph{edge-scaling factors} (see example shown in Fig.\ref{fig:edge-coincided-algorithm-b} and Fig.\ref{fig:edge-coincided-algorithm-c}).

\begin{figure}[h]
\centering
\includegraphics[height=5cm]{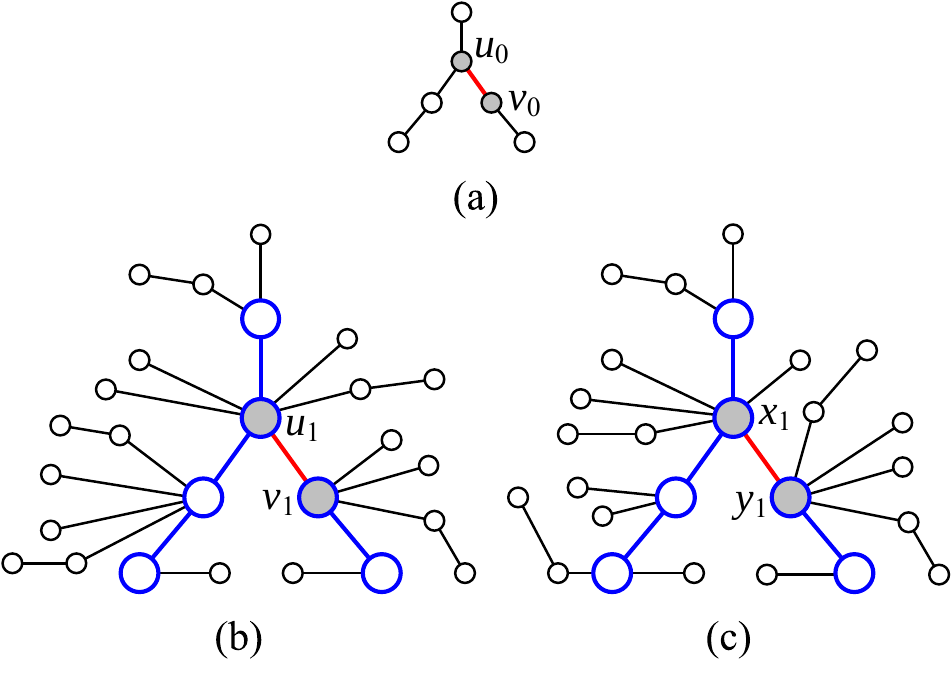}\\
\caption{\label{fig:edge-coincided-algorithm-b} {\small Self-similar Hanzi-graphs on the edge-coincided algorithm-I.}}
\end{figure}

Fig.\ref{fig:edge-coincided-algorithm-b} is for illustrating the edge-coincided algorithm-I: (a) $M(0)$ was made by a Hanzi-graph $T_{4043}$ with a unique active edge $u_0v_0$; (b) $M(1)=M(0)\overrightarrow{\ominus} M(0)$ with a unique active edge $u_1v_1$; (c) another $M'(1)=M(0)\overrightarrow{\ominus} M(0)$ with a unique active edge $x_1y_1$. Clearly, $M(1)\not \cong M'(1)$.

\begin{figure}[h]
\centering
\includegraphics[width=8cm]{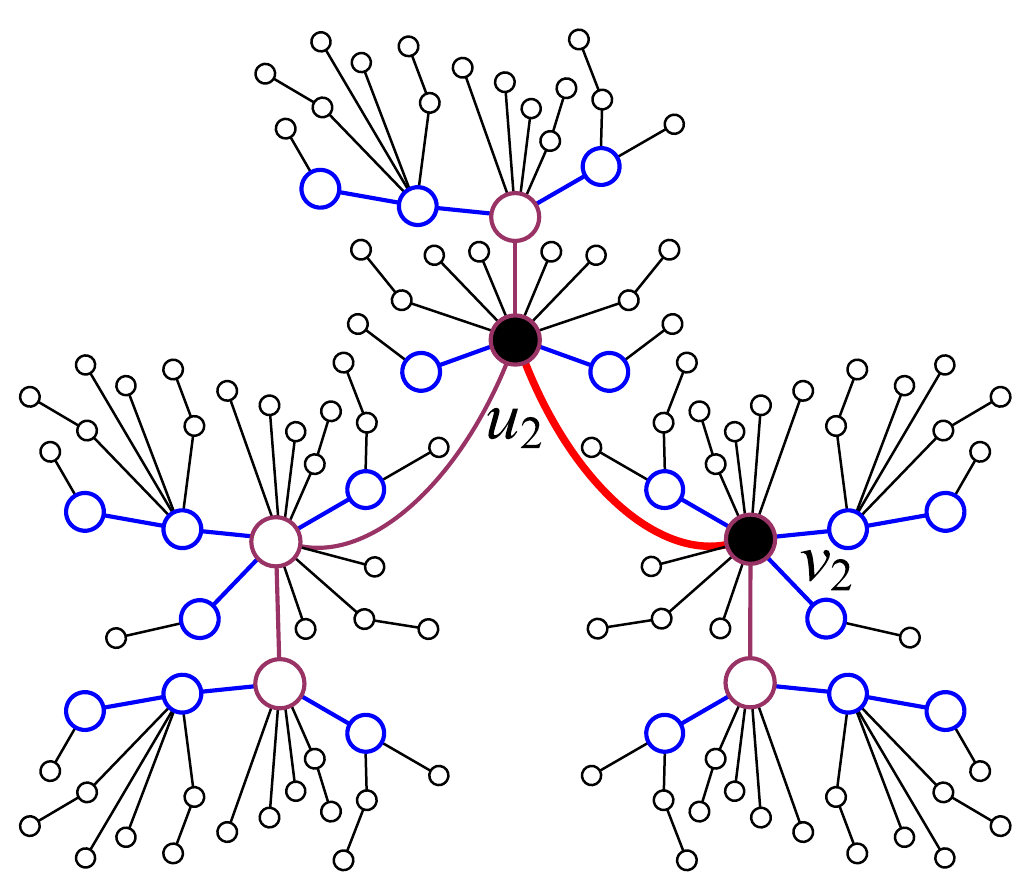}\\
\caption{\label{fig:edge-coincided-algorithm-c} {\small A self-similar Hanzi-graph with its unique \emph{active edge} $u_2v_2$.}}
\end{figure}

The self-similar Hanzi-graph shown in Fig.\ref{fig:edge-coincided-algorithm-c} is $M(2)=M(1)\overrightarrow{\ominus} M(0)$, where $M(0)$=(a), $M(1)$=(b) shown in Fig.\ref{fig:edge-coincided-algorithm-b}.

\vskip 0.2cm

\emph{Step (2.2)} \textbf{Edge-coincided algorithm-II.}

We overlap the unique active edge $x^s_{t}w^s_{t}$ of the $s$th copy $M^s(t)$ of $M(t)$ with the edge $x^s_{t-1}y^s_{t-1}$ of the $M(t-1)$ into one edge for $s\in [1,n_e(t-1)]$ with $k\geq 1$, where $n_e(t-1)=|E(M(t-1))|$ is the edge number of $M(t-1)$, so we get a new network denoted as $M(t+1)=M(t)\overrightarrow{\ominus}M(t-1)$, called a \emph{uniformly edge-split $n_e(j)^t_0$-scaling $M(0)$-similar network} under Definition \ref{defn:vertex-edge-split-self-similar-graphs}, and let the unique active edge $x_{t-1}w_{t-1}$ of $M(t-1)$ to be the unique active edge of $M(t+1)$, denoted as $x_{t+1}w_{t+1}$. We say $n_e(j)^k_0$ to be \emph{edge-scaling factors} (see examples shown in Fig.\ref{fig:edge-coincided-algorithm-b} and Fig.\ref{fig:edge-coincided-algorithm-d}).

\begin{figure}[h]
\centering
\includegraphics[width=8cm]{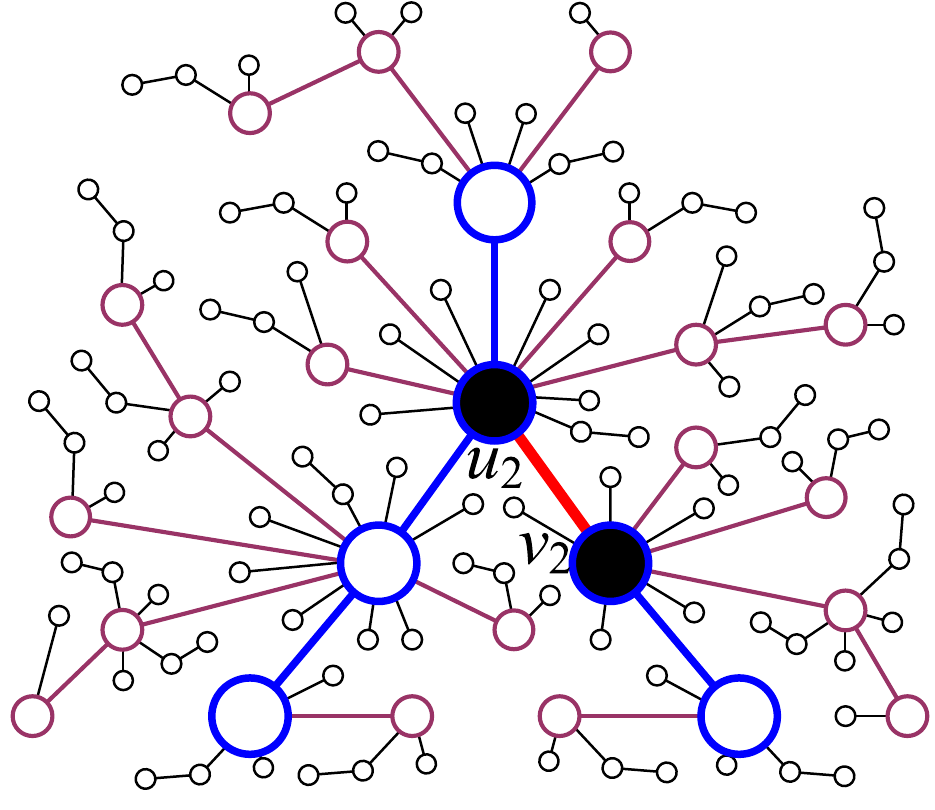}\\
\caption{\label{fig:edge-coincided-algorithm-d} {\small A self-similar Hanzi-graph made by the edge-coincided algorithm-II, where $u_2v_2$ is its unique \emph{active edge}.}}
\end{figure}

The self-similar Hanzi-graph shown in Fig.\ref{fig:edge-coincided-algorithm-d} is $M(2)=M(0)\overrightarrow{\ominus} M(1)$, where $M(0)$=(a), $M(1)$=(b) shown in Fig.\ref{fig:edge-coincided-algorithm-b}.

\vskip 0.2cm

\emph{Step (2.3)} \textbf{Edge-coincided algorithm-III.}

Take a duplicated network of $M(t)$, and denote as $M^*(t)$. We overlap the unique active edge $x^s_{t}w^s_{t}$ of the $s$th copy $M^j(t)$ of $M(t)$ with the edge $x^s_{t}y^s_{t}$ of the $M(t)$ into one edge for $s\in [1,n_e(t)]$ with $k\geq 1$, such that the resultant network $M(t+1)=M(t)\overrightarrow{\ominus}M(t)$, called a \emph{uniformly edge-split $n_e(t)$-scaling $M(0)$-similar network} by Definition \ref{defn:vertex-edge-split-self-similar-graphs}, and the unique active edge of $M(t+1)$ is defined as the unique active edge $x_{t}w_{t}$ of $M(t)$. Then, $n_e(t)$ is \emph{edge-scaling factor}.

It is noticeable, self-similar networks $M(k+1)=M(k)\overrightarrow{\ominus}M(k-1)$ made by the edge-coincided algorithm-I, the edge-coincided algorithm-II and the edge-coincided algorithm-III are not unique. So, we claim that self-similar networks $M(k+1)=M(k)\overrightarrow{\ominus}M(k-1)$ are very complex greater than self-similar networks made by three vertex-coincided algorithm-$x$ with $x=$I,II.

\vskip 0.2cm

\emph{Step (2.4)} \textbf{Edge-coincided algorithm-IV.} Let $M'(0)$, $M'(1)$, $ \dots $, $M'(n)$ be defined well by Step 1 and the operation $M'(s)\overrightarrow{\ominus }M'(t)$ defined as: There are the copies $M'_j(s)$ of $M'(s)$ having unique active edge $x_{s,0}u_{s,0}$ with $j\in [1,|E(M'(t))|]$, each copy $M'_j(s)$ has its own unique active edge $x^j_{s,0}u^j_{s,0}$. Now, we overlap the active edge $x^j_{s,0}u^j_{s,0}$ of $M'_j(s)$ with the edge $z_jw_j\in E(M'(s))=\{z_jw_j:~j\in [1,|E(M'(t))|]\}$ into one edge, the resultant network is denoted by $M'(s)\overrightarrow{\ominus }M'(t)$. A self-similar network is produced as
\begin{equation}\label{eqa:similar-network-edge-make-11}
{
\begin{split}
M'(m+1)&=\overrightarrow{\ominus }^k_{i=1}M'(b_i)\\
&=M'(b_1)\overrightarrow{\ominus }M'(b_2)\overrightarrow{\ominus }\cdots \overrightarrow{\ominus }M'(b_k)
\end{split}}
\end{equation} with $k\geq 2$, where $b_1,b_2,\dots ,b_k$ are selected from the set $[0,n]$ and are pairwise distinct. If every self-similar network
\begin{equation}\label{eqa:similar-network-edge-make-33}
M(t+1)=M(t-2)\overrightarrow{\ominus }M(t-1)\overrightarrow{\ominus }M(t)
\end{equation}
for each time step $t\geq 2$ holds true, then $M(t+1)$ is called an \emph{edge-Fibonacci $M'(0)$-similar network} on time step $t\geq 0$, under the edge-coincided algorithms defined above.

\vskip 0.2cm

\textbf{$\bullet$ Self-similar mixed-coincided algorithm.} We allow: (i) A new self-similar network is constructed by the vertex-coincided algorithms and the edge-coincided algorithms defined above, see (\ref{eqa:Mixed-coincided-algorithm-11}) and (\ref{eqa:Mixed-coincided-algorithm-22}); (ii) a new self-similar network is made by coinciding the unique active vertex/edge of a self-similar network with part of vertices/edges of another self-similar network in probabilistic way. Here, we list the following cases
\begin{equation}\label{eqa:Mixed-coincided-algorithm-11}
N(k+1)=\left\{
\begin{array}{l}
N(k)\overrightarrow{\odot }N(k-1);\\
N(k-1)\overrightarrow{\odot }N(k);\\
N(k)\overrightarrow{\ominus }N(k-1);\\
N(k-1)\overrightarrow{\ominus }N(k);
\end{array}
\right.
\end{equation}
and
\begin{equation}\label{eqa:Mixed-coincided-algorithm-22}
N(k+2)=\left\{
\begin{array}{l}
N(k+1)\overrightarrow{\odot }N(k)\overrightarrow{\ominus }N(k-1);\\
N(k+1)\overrightarrow{\ominus }N(k)\overrightarrow{\odot }N(k-1);\\
N(k-1)\overrightarrow{\ominus }N(k)\overrightarrow{\odot }N(k+1);\\
N(k-1)\overrightarrow{\odot }N(k)\overrightarrow{\ominus }N(k+1),
\end{array}
\right.
\end{equation}
and omit writing detail expressions of such probabilistic self-similar networks. It may be interesting to uncover properties of self-similar networks made by (\ref{eqa:Mixed-coincided-algorithm-22}) and the vertex-coincided algorithms and the edge-coincided algorithms.

\subsection{Computation of parameters of self-similar networks made by the vertex-coincided algorithms and the edge-coincided algorithms}

\subsubsection{\textbf{Numbers of vertices and edges}} We compute the number $n_v(t)$ of vertices of a self-similar network $N(t)$, and the number $n_e(t)$ of edges of $N(t)$, where $N(t+1)=N(t)\overrightarrow{\odot }N(t-1)$, or $N(t+1)=N(t-1)\overrightarrow{\odot }N(t)$ with $t\geq 1$. Notice that $N(1)=N(0)\overrightarrow{\odot }N(0)$, and let $a_0=n_v(0)$ and $b_0=n_e(0)$. Thereby, $$n_v(1)=n_v(0)[n_v(0)-1]+n_v(0)=n^2_v(0)=a^2_0.$$

\begin{thm}\label{thm:vertex-coincided-algorithm-numbers}
For $N(t+1)=N(t)\overrightarrow{\odot }N(t-1)$ with $N(1)=N(0)\overrightarrow{\odot }N(0)=N(0)\overrightarrow{\odot }N(0)$ and time step $t\geq 1$, we have
\begin{equation}\label{eqa:vertex-coincided-algorithm-11}
\left\{
\begin{array}{l}
\displaystyle n_v(t+1)=n_v(t-1)n_v(t)=a_0^{r(t+1)};\\
\displaystyle n_e(t+1)=b_0\sum^{r(t)-1}_{k=0}a^k_0
\end{array}
\right.
\end{equation}
where $r(0)=1$, $r(1)=2$, $r(2)=3$, $r(3)=5$, $\dots$, $r(t+1)=r(t-1)+r(t)$, namely, $\{r(t)\}$ is a generalized Fibonacci sequence with the initial values $r(0)=1$ and $r(1)=2$.
\end{thm}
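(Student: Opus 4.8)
The plan is to prove both identities simultaneously by induction on the time step $t$, using the recursive construction $N(t+1)=N(t)\overrightarrow{\odot}N(t-1)$ of the vertex-coincided algorithm-I together with the defining Fibonacci recursion $r(t+1)=r(t-1)+r(t)$ for the exponent sequence $\{r(t)\}$.

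First I would record the two ``local'' recursions that come straight from the construction. When $N(t+1)$ is assembled, $n_v(t-1)$ copies of $N(t)$ are glued to $N(t-1)$ by coinciding, for each $s\in[1,n_v(t-1)]$, the unique active vertex of the $s$th copy with the $s$th vertex of $N(t-1)$; the only identifications made are those $n_v(t-1)$ vertex pairs, and no edges are merged. Counting vertices then gives
\[
n_v(t+1)=n_v(t-1)+n_v(t-1)\bigl(n_v(t)-1\bigr)=n_v(t-1)\,n_v(t),
\]
and counting edges — the edges carried by $N(t-1)$ together with those of the $n_v(t-1)$ attached copies of $N(t)$ — yields a two-term recursion of the shape $n_e(t+1)=n_e(t)+n_v(t-1)\,n_e(t-1)$. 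The base data $n_v(0)=a_0$, $n_e(0)=b_0$ are given, while $N(1)=N(0)\overrightarrow{\odot}N(0)$ must be unwound by hand; checking that the resulting $n_v(1)$ and $n_e(1)$ agree with $a_0^{r(1)}$ and with $b_0\sum^{r(0)-1}_{k=0}a_0^k$ fixes the convention $r(0)=1$, $r(1)=2$ and aligns the induction index with the summation index.

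Next I would run the induction. For the vertex count, assuming $n_v(t)=a_0^{r(t)}$ and $n_v(t-1)=a_0^{r(t-1)}$, the first recursion gives $n_v(t+1)=a_0^{r(t-1)}\cdot a_0^{r(t)}=a_0^{r(t-1)+r(t)}=a_0^{r(t+1)}$, where the last equality is exactly the definition of $\{r(t)\}$. For the edge count, assuming the one-step-down formula $n_e(t)=b_0\sum^{r(t-1)-1}_{k=0}a_0^k$ and $n_v(t)=a_0^{r(t)}$, I would substitute into the edge recursion and split the geometric sum along the Fibonacci decomposition $r(t+1)=r(t)+r(t-1)$,
\[
\sum^{r(t+1)-1}_{k=0}a_0^k=\sum^{r(t)-1}_{k=0}a_0^k+a_0^{r(t)}\sum^{r(t-1)-1}_{j=0}a_0^j ,
\]
so that the extra term is precisely $n_v(t)\,n_e(t)$ added to the previous value $n_e(t+1)$, closing the induction; when $a_0\neq 1$ one may also rewrite $n_e(t+1)=b_0\,(a_0^{r(t)}-1)/(a_0-1)$.

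I expect the main obstacle to be not the algebra — the Fibonacci-sum splitting above is the routine part — but the combinatorial bookkeeping that sets up the edge recursion and its base case: one must be careful about which instance of $N(t-1)$ (resp.\ $N(t)$) ``owns'' each shared active vertex, whether the base graph retains its edges under $\overrightarrow{\odot}$, and how the degenerate first step $N(1)=N(0)\overrightarrow{\odot}N(0)$ feeds the recursion, since an off-by-one there shifts the entire exponent and summation indexing. Getting that alignment exactly right is what makes the closed forms $n_v(t+1)=a_0^{r(t+1)}$ and $n_e(t+1)=b_0\sum^{r(t)-1}_{k=0}a_0^k$ emerge cleanly.
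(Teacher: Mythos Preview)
Your approach is the same induction scheme as the paper's proof, and the vertex-count half is fine and identical to what the paper does. The problem is in the edge-count half: your own verbal description of the construction --- ``the edges carried by $N(t-1)$ together with those of the $n_v(t-1)$ attached copies of $N(t)$'' --- gives
\[
n_e(t+1)=n_e(t-1)+n_v(t-1)\,n_e(t),
\]
not the swapped relation $n_e(t+1)=n_e(t)+n_v(t-1)\,n_e(t-1)$ that you actually wrote down and then used in the induction. A quick check at $t=1$ already separates the two: from $n_e(1)=b_0(a_0+1)$ the correct recursion yields $n_e(2)=b_0+a_0\cdot b_0(a_0+1)=b_0(1+a_0+a_0^2)$, while your swapped version gives $b_0(1+2a_0)$.

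The paper carries out exactly the recursion $n_e(k+1)=n_v(k-1)\,n_e(k)+n_e(k-1)$ and performs the Fibonacci split of the geometric sum in the form
\[
\sum_{j=0}^{r(k+1)-1}a_0^{j}=\sum_{j=0}^{r(k-1)-1}a_0^{j}+a_0^{\,r(k-1)}\sum_{j=0}^{r(k)-1}a_0^{j},
\]
the mirror image of your displayed identity. That computation in fact produces $n_e(t+1)=b_0\sum_{k=0}^{r(t+1)-1}a_0^{k}$; the upper index $r(t)-1$ in the statement appears to be a misprint (compare the paper's base case $n_e(2)=b_0\sum_{j=0}^{r(2)-1}a_0^{j}$). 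Your swapped recursion looks as though it was reverse-engineered to match that misprinted bound rather than derived from the construction. Fixing the swap and re-running your induction with the corrected sum-splitting recovers the paper's argument verbatim.
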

\begin{proof}
The proof of the formula (\ref{eqa:vertex-coincided-algorithm-11}) is by the mathematical induction on time step $t$. Notice that $n_v(0)=a^{r(0)}_0$, $n_v(1)=a^{r(2)}_0$, and
\begin{equation}\label{eqa:vertex-coincided algorithm-22}
n_e(1)=n_v(0)n_e(0)+n_e(0)=b_0(a_0+1).
\end{equation}

\emph{Step-1.} $t=1$. We get
$${
\begin{split}
n_v(2)&=n_v(0)[n_v(1)-1]+n_v(0)=n_v(0)n_v(1)\\
&=a_0^{r(0)+r(1)}=a_0^{r(2)}
\end{split}}
$$ since $N(2)=N(1)\overrightarrow{\odot }N(0)$. Next, we compute
$${
\begin{split}
n_e(2)&=n_v(0)n_e(1)+n_e(0)=a_0b_0(a_0+1)+b_0\\
&=b_0(a^2_0+a_0+1)=b_0\sum^{r(2)-1}_{j=0}a^j_0.
\end{split}}$$

\emph{Step-2.} Assume that the formula (\ref{eqa:vertex-coincided-algorithm-11}) holds true for $t=k$. By
$N(k+1)=N(k)\overrightarrow{\odot }N(k-1)$, we have
$${
\begin{split}
n_v(k+1)&=n_v(k-1)[n_v(k)-1]+n_v(k-1)\\
&=n_v(k-1)n_v(k)=a_0^{r(k-1)+r(k)}\\
&=a_0^{r(k+1)},
\end{split}}$$
and
$${
\begin{split}
n_e(k+1)&=n_v(k-1)n_e(k)+n_e(k-1)\\
&=a_0^{r(k-1)}n_e(k)+n_e(k-1)\\
&=a_0^{r(k-1)}b_0\sum^{r(k)-1}_{k=0}a^k_0+b_0\sum^{r(k-1)-1}_{k=0}a^k_0\\
&=b_0\sum^{r(k+1)-1}_{k=0}a^k_0.
\end{split}}$$
Thereby, we claim that the formula (\ref{eqa:vertex-coincided-algorithm-11}) holds true for $t\geq 1$ by the hypothesis of induction.
\end{proof}

\vskip 0.2cm

The number of vertices of an edge-split $n_e(t)$-scaling $M(0)$-similar network $M(t)$ is denoted as $m_v(t)$, the number of edges of the network $M(t)$ is denoted as $m_e(t)$. Let $m_v(0)=c_0$ and $m_e(0)=d_0$. We show a group of formulas to compute $m_v(t)$ and $m_e(t)$ as follows.
\begin{thm}\label{thm:edge-coincided-algorithm-numbers}
For $N(t+1)=N(t)\overrightarrow{\ominus}N(t-1)$ with $M(1)=M(0)\overrightarrow{\ominus }M(0)$ and $t\geq 1$, we have
\begin{equation}\label{eqa:edge-coincided-algorithm-11}
\left\{
\begin{array}{l}
\displaystyle m_v(t+1)=2+(c_0-2)\sum^{r(t)-1}_{j=0}a^j_0;\\
\displaystyle m_e(t+1)=m_e(t-1)m_e(t)=d_0^{r(t+1)}
\end{array}
\right.
\end{equation}
where $r(0)=1$, $r(1)=2$, $r(2)=3$, $r(3)=5$, $\dots$, $r(t+1)=r(t-1)+r(t)$, and $\{r(t)\}$ is called a generalized Fibonacci sequence with the initial values $r(0)=1$ and $r(1)=2$.
\end{thm}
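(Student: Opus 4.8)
The plan is to follow the argument used for Theorem~\ref{thm:vertex-coincided-algorithm-numbers}: first convert the edge-coincided algorithm-II into two coupled recursions for $m_v$ and $m_e$, and then solve them by induction on the time step $t$, treating $M(1)=M(0)\overrightarrow{\ominus}M(0)$ as a separate base step since it is not of the form $M(t)\overrightarrow{\ominus}M(t-1)$. The local accounting is the following. Overlapping the unique active edge of one copy of $M(t)$ with an edge of $M(t-1)$ identifies exactly two vertices and one edge of that copy with a vertex pair and an edge already present, so each such copy adds $m_v(t)-2$ fresh vertices and $m_e(t)-1$ fresh edges. Since $M(t-1)$ is simple, two of its edges share at most one vertex, and the interiors of distinct attached copies are vertex-disjoint; hence, running over the $|E(M(t-1))|=m_e(t-1)$ edges of $M(t-1)$,
\begin{gather*}
m_v(t+1)=m_v(t-1)+m_e(t-1)\bigl(m_v(t)-2\bigr),\\
m_e(t+1)=m_e(t-1)+m_e(t-1)\bigl(m_e(t)-1\bigr)=m_e(t-1)m_e(t),
\end{gather*}
with base data $m_e(1)=d_0+d_0(d_0-1)=d_0^{2}$ and $m_v(1)=c_0+d_0(c_0-2)$, i.e. $m_v(1)-2=(c_0-2)(1+d_0)$.

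For the edge count, I would first prove $m_e(t)=d_0^{r(t)}$ for all $t\ge 0$ by induction: the base cases are $m_e(0)=d_0=d_0^{r(0)}$ and $m_e(1)=d_0^{2}=d_0^{r(1)}$, and the inductive step is $m_e(t+1)=m_e(t-1)m_e(t)=d_0^{r(t-1)}d_0^{r(t)}=d_0^{r(t-1)+r(t)}=d_0^{r(t+1)}$, using the generalized Fibonacci recursion $r(t+1)=r(t-1)+r(t)$. In particular $m_e(t+1)=m_e(t-1)m_e(t)=d_0^{r(t+1)}$ for $t\ge 1$, which is the second identity. For the vertex count, set $h(t)=m_v(t)-2$; substituting $m_e(t-1)=d_0^{r(t-1)}$ turns the first recursion into $h(t+1)=h(t-1)+d_0^{r(t-1)}h(t)$, with $h(0)=c_0-2$ and $h(1)=(c_0-2)(1+d_0)$. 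I would then show $h(t)=(c_0-2)\sum_{j=0}^{r(t)-1}d_0^{j}$ by induction; the step is the telescoping identity
\begin{align*}
h(t+1)&=(c_0-2)\Biggl(\sum_{j=0}^{r(t-1)-1}d_0^{j}+d_0^{r(t-1)}\sum_{j=0}^{r(t)-1}d_0^{j}\Biggr)\\
&=(c_0-2)\sum_{j=0}^{r(t-1)+r(t)-1}d_0^{j}=(c_0-2)\sum_{j=0}^{r(t+1)-1}d_0^{j},
\end{align*}
again by $r(t+1)=r(t-1)+r(t)$. This yields $m_v(t+1)=2+(c_0-2)\sum_{j=0}^{r(t+1)-1}d_0^{j}$, which is the first identity, the geometric ratio being the seed edge count $m_e(0)=d_0$ and the Fibonacci index being handled exactly as in Theorem~\ref{thm:vertex-coincided-algorithm-numbers}.

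The hard part is not the algebra but the first step: justifying the two recursions, i.e. checking that the repeated edge-overlapping creates no unintended vertex or edge coincidences. This is precisely where one uses that $M(t-1)$ is simple and that the attached copies are edge-disjoint except along their identified active edges; a careless count here would change both closed forms and is the place where the companion proof of Theorem~\ref{thm:vertex-coincided-algorithm-numbers} is also delicate. Once the recursions and the separate base step $M(1)=M(0)\overrightarrow{\ominus}M(0)$ are pinned down, the rest reduces to the Fibonacci recursion $r(t+1)=r(t-1)+r(t)$ together with the elementary identity $\sum_{j=0}^{a-1}d_0^{j}+d_0^{a}\sum_{j=0}^{b-1}d_0^{j}=\sum_{j=0}^{a+b-1}d_0^{j}$.
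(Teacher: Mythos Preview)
Your proposal is correct and follows essentially the same inductive route as the paper: derive the coupled recursions $m_e(t+1)=m_e(t-1)m_e(t)$ and $m_v(t+1)=m_v(t-1)+m_e(t-1)\bigl(m_v(t)-2\bigr)$ from the edge-overlapping construction, handle $M(1)$ separately, and then close both by the Fibonacci identity $r(t+1)=r(t-1)+r(t)$ together with the geometric telescoping. Your write-up is in fact a bit cleaner than the paper's, and you implicitly correct two slips in the stated formula (the geometric base should be $d_0=m_e(0)$ rather than $a_0$, and the upper index of the sum for $m_v(t+1)$ should be $r(t+1)-1$, as your induction shows).
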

\begin{proof} By induction on time step $t$. For $M(1)=M(0)\overrightarrow{\ominus }M(0)$, we compute:

$m_v(1)=d_0[c_0-2]+c_0=c_0(d_0+1)-2d_0=2+[c_0-2]\sum^{r(1)-1}_{k=0}a^k_0$ and

$m_e(1)=d_0[d_0-1]+d_0=d^2_0=d^{r(1)}_0$.

As $M(2)=M(1)\overrightarrow{\ominus }M(0)$, we have

$${
\begin{split}
m_v(2)&=d_0[m_v(1)-2]+c_0\\
&=d_0[c_0(d_0+1)-2d_0-2]+c_0\\
&=c_0(d^2_0+d_0+1)-2d_0(d_0+1)-2+2\\
&=2+[c_0-2]\sum^2_{k=0}d^k_0\\
&=2+[c_0-2]\sum^{r(2)-1}_{k=0}a^k_0
\end{split}}$$ and
$$m_e(2)=d_0[m_e(1)-1]+d_0=d^3_0=d^{r(0)+r(1)}_0=d^{r(2)}_0.$$

Assume the formula (\ref{eqa:edge-coincided-algorithm-11}) is true for $t=k$. For $N(k+1)=N(k)\overrightarrow{\ominus}N(k-1)$, then we come to compute
$${
\begin{split}
m_v(k+1)&=m_e(k-1)[m_v(k)-2]+m_v(k-1)\\
&=d_0^{r(k-1)}(c_0-2)\sum^{r(k)-1}_{j=0}a^j_0\\
&\quad +2+(c_0-2)\sum^{r(k-1)-1}_{j=0}a^j_0\\
&=2+(c_0-2)\sum^{r(k)-1}_{j=0}a^j_0
\end{split}}
$$
and
$${
\begin{split}
m_e(k+1)&=m_e(k-1)[m_e(k)-1]+m_e(k-1)\\
&=d^{r(k-1)+r(k)}_0\\
&=d^{r(k+1)}_0.
\end{split}}
$$
Thereby, the formula (\ref{eqa:edge-coincided-algorithm-11}) holds true by the hypothesis of induction.
\end{proof}

Since the edge-coincided algorithm $N(t+1)=N(t)\overrightarrow{\ominus}N(t-1)$ with $M(1)=M(0)\overrightarrow{\ominus }M(0)$ produces a set of distinct self-similar networks more than two, so we claim
\begin{equation}\label{eqa:not-unique}
N(t)\overrightarrow{\ominus}N(t-1)\neq N(t-1)\overrightarrow{\ominus}N(t).
\end{equation}

\vskip 0.2cm

It is noticeable in Theorem \ref{thm:vertex-coincided-algorithm-numbers}, the number $n_v(t)=a_0^{r(t)}$ of vertices of a self-similar network $N(t)$ and the number $m_e(t)=d_0^{r(t)}$ of edges of a self-similar network $M(t)$ for time step $t\geq 1$ obey \emph{Fibonacci's phenomena}.

\vskip 0.2cm

\subsubsection{\textbf{Uniqueness on the operation} ``$\odot$''} We show the following formula (\ref{eqa:Uniqueness-diandot-operation})
\begin{lem}\label{thm:vertex-coincided-exchanged}
By the vertex-coincided algorithm under the operation $\overrightarrow{\odot} $, we have
\begin{equation}\label{eqa:Uniqueness-diandot-operation}
N(k-1)\overrightarrow{\odot}N(k)=N(k)\overrightarrow{\odot}N(k-1)
\end{equation}
\end{lem}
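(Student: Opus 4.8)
The plan is to prove the identity by producing an explicit isomorphism between the two networks, reducing everything to bookkeeping on copies. First I would pin down the conventions of the operation $\overrightarrow{\odot}$: in $A\overrightarrow{\odot}B$ the \emph{base} is $B$, one copy of $A$ is planted at every vertex of $B$ by identifying that copy's unique active vertex with the base vertex, and --- by the vertex-coincided algorithms --- the active vertex of $A\overrightarrow{\odot}B$ is inherited from the base $B$. Writing $a_{k-1},a_k$ for the active vertices of $N(k-1),N(k)$, this lets me coordinatise $N(k-1)\overrightarrow{\odot}N(k)$ by pairs $(v,w)\in V(N(k))\times V(N(k-1))$, with $(v,a_{k-1})$ read as the base vertex $v$, and adjacency $(v,w)\sim(v',w')$ precisely when either $v=v'$ and $ww'\in E(N(k-1))$, or $w=w'=a_{k-1}$ and $vv'\in E(N(k))$; symmetrically $N(k)\overrightarrow{\odot}N(k-1)$ is coordinatised by $V(N(k-1))\times V(N(k))$ with the analogous rule (roles of $N(k)$ and $N(k-1)$, and of $a_k$ and $a_{k-1}$, interchanged).

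The second step is to unfold the recursion. Since vertex-coincided algorithm-I gives $N(k)=N(k-1)\overrightarrow{\odot}N(k-2)$, I would rewrite $N(k)\overrightarrow{\odot}N(k-1)=(N(k-1)\overrightarrow{\odot}N(k-2))\overrightarrow{\odot}N(k-1)$ and $N(k-1)\overrightarrow{\odot}N(k)=N(k-1)\overrightarrow{\odot}(N(k-1)\overrightarrow{\odot}N(k-2))$, so that both sides become ``a copy of $N(k-2)$ at the core carrying copies of $N(k-1)$''; the target isomorphism should then match the copy of $N(k-1)$ hung at a given core vertex on one side with the correspondingly indexed copy on the other, using Theorem~\ref{thm:vertex-coincided-algorithm-numbers} to keep the vertex count $n_v(k)n_v(k-1)$ aligned and the active-vertex conventions to keep the cores aligned. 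I would begin by checking the numerical compatibility that this forces, namely $n_e(k)\,(n_v(k-1)-1)=n_e(k-1)\,(n_v(k)-1)$, since it is exactly the obstruction to the two edge sets having the same size, and it also pins down which copies must correspond.

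I expect this adjacency/edge-count check to be the real obstacle. The coordinate descriptions above are \emph{not} symmetric under swapping the two coordinates: in $N(k-1)\overrightarrow{\odot}N(k)$ the fibre over every base vertex $v$ carries a full copy of $N(k-1)$, whereas in $N(k)\overrightarrow{\odot}N(k-1)$ the $N(k-1)$-type edges survive only along the base layer --- so the naive bijection $(v,w)\mapsto(w,v)$ fails, and any genuine isomorphism has to exploit the internal structure of $N(k)$ as a $\overrightarrow{\odot}$-product. Concretely, the plan stands or falls on whether the active vertices of $N(k-1)$ and $N(k)$ can be chosen compatibly (equivalently, whether the parameter identity above is consistent for the chosen $N(0)$); I would resolve that first, and only then carry out the routine case analysis, over the two adjacency clauses, that the resulting copy-matching map preserves adjacency in both directions.
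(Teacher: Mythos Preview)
Your route is genuinely different from the paper's. The paper gives a three-line inductive argument: assuming the statement at level $k-1$, one has $N(k)=N(k-1)\overrightarrow{\odot}N(k-2)=N(k-2)\overrightarrow{\odot}N(k-1)$, and then
\[
N(k)\overrightarrow{\odot}N(k-1)=N(k-2)\overrightarrow{\odot}N(k-1)\overrightarrow{\odot}N(k-1)=N(k-1)\overrightarrow{\odot}N(k-2)\overrightarrow{\odot}N(k-1)=N(k-1)\overrightarrow{\odot}N(k),
\]
the middle step by the induction hypothesis and the regrouping by the associativity of $\overrightarrow{\odot}$ (established in the theorem immediately following the lemma). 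No coordinates, no explicit bijection.

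Your edge-count obstruction $n_e(k)(n_v(k-1)-1)=n_e(k-1)(n_v(k)-1)$ is exactly the right thing to isolate, and it does hold for this recursive family: by Theorem~\ref{thm:vertex-coincided-algorithm-numbers} one has $n_e(t)=b_0\sum_{j=0}^{r(t)-1}a_0^{\,j}=\frac{b_0}{a_0-1}\bigl(n_v(t)-1\bigr)$, so both sides equal $\frac{b_0}{a_0-1}(n_v(k)-1)(n_v(k-1)-1)$. This is precisely why the commutation works for the sequence $\{N(k)\}$ even though $A\overrightarrow{\odot}B\cong B\overrightarrow{\odot}A$ can fail for arbitrary rooted graphs. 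The paper never surfaces this identity --- it is hidden inside the appeal to associativity and the inductive substitution --- whereas your plan would make it the visible hinge of the argument. What the paper buys is brevity (and no need to track the active-vertex conventions through an explicit map); what your approach buys is a self-contained proof that does not forward-reference the semi-group theorem and that explains \emph{why} the lemma is special to this sequence.
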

\begin{proof} By induction, we have
$$N(k)=N(k-1)\overrightarrow{\odot}N(k-2)=N(k-2)\overrightarrow{\odot}N(k-1),$$
then
$${
\begin{split}
N(k)\overrightarrow{\odot}N(k-1)&=N(k-2)\overrightarrow{\odot}N(k-1)\overrightarrow{\odot}N(k-1)\\
&=N(k-1)\overrightarrow{\odot}N(k-2)\overrightarrow{\odot}N(k-1)\\
&=N(k-1)\overrightarrow{\odot}N(k),
\end{split}}
$$ so the formula (\ref{eqa:Uniqueness-diandot-operation}) is true, as we have desired.
\end{proof}

By Lemma \ref{thm:vertex-coincided-exchanged}, we have

\begin{thm}\label{thm:vertex-coincided-semi-group}
Let $\{N(0),\overrightarrow{\odot} \}$ be a set containing self-similar graphs made by the vertex-coincided algorithm under the operation $\overrightarrow{\odot} $ with respect to an initial graph $N(0)$, then $\{N(0),\overrightarrow{\odot} \}$ is a \emph{graphic semi-group}, and moreover $\{N(0),\overrightarrow{\odot} \}$ obeys the \emph{commutative law}, namely, $\{N(0),\overrightarrow{\odot} \}$ is a \emph{commutative graphic semi-group}.
\end{thm}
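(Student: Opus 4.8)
The plan is to check, in turn, that $(\{N(0),\overrightarrow{\odot}\},\overrightarrow{\odot})$ is closed, associative and commutative, the last point being where Lemma~\ref{thm:vertex-coincided-exchanged} enters. First I would fix the abstract description of the operation: for rooted graphs $G$ (active vertex $u_G$) and $H$ (active vertex $u_H$, vertex set $\{v_1,\dots ,v_{n_v(H)}\}$), the graph $G\overrightarrow{\odot}H$ is obtained by taking $n_v(H)$ disjoint copies $G^{(1)},\dots ,G^{(n_v(H))}$ of $G$, identifying the image of $u_G$ in $G^{(s)}$ with $v_s$ for every $s$, and declaring $u_H$ to be the active vertex of the result; this is exactly ``plant a rooted copy of $G$ at each vertex of $H$''. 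With this convention the output is again a rooted graph, so the operation is genuinely iterable, and since $\{N(0),\overrightarrow{\odot}\}$ is by construction the collection of all networks obtainable from $N(0)$ by finitely many applications of $\overrightarrow{\odot}$ (every vertex-coincided algorithm outputs such a product), closure is immediate.

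For associativity I would prove $(G\overrightarrow{\odot}H)\overrightarrow{\odot}K\cong G\overrightarrow{\odot}(H\overrightarrow{\odot}K)$ by exhibiting a single ``three-layer'' model $W$ for both sides: start from $K$; plant a rooted copy $H^{(w)}$ of $H$ at each $w\in V(K)$ (so the root of $H^{(w)}$ is $w$); and plant a rooted copy $G^{(w,y)}$ of $G$ at every vertex $y$ of every $H^{(w)}$. The left-hand side matches $W$ because a planted copy of $G\overrightarrow{\odot}H$ at $w$ is, unfolding the definition, a copy of $H$ with root $w$ carrying a copy of $G$ at each of its vertices; the right-hand side matches $W$ because planting a copy of $G$ at every vertex of $H\overrightarrow{\odot}K$ means planting one at every $K$-vertex and at every vertex of every planted $H$-copy. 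One then checks that the natural vertex bijection ($K$-vertices to $K$-vertices, $H^{(w)}$-vertices to $H^{(w)}$-vertices, $G^{(w,y)}$-vertices to $G^{(w,y)}$-vertices) carries edges to edges at the glued root vertices and sends active vertex $u_K$ to $u_K$. I expect this isomorphism --- in particular the bookkeeping that forbids accidentally planting $G$ only on the original $K$-vertices rather than on all of $H\overrightarrow{\odot}K$, or conversely --- to be the main technical obstacle; everything else is formal.

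Finally, for commutativity I would bootstrap Lemma~\ref{thm:vertex-coincided-exchanged} (which gives $N(k-1)\overrightarrow{\odot}N(k)=N(k)\overrightarrow{\odot}N(k-1)$) to all pairs $G,H\in\{N(0),\overrightarrow{\odot}\}$ by induction on the total number of $\overrightarrow{\odot}$-factors in $G$ and $H$. The base case $G=H=N(0)$ is trivial; for the step, write $G=G_1\overrightarrow{\odot}G_2$ and compute $G\overrightarrow{\odot}H=G_1\overrightarrow{\odot}(G_2\overrightarrow{\odot}H)=G_1\overrightarrow{\odot}(H\overrightarrow{\odot}G_2)=(G_1\overrightarrow{\odot}H)\overrightarrow{\odot}G_2=(H\overrightarrow{\odot}G_1)\overrightarrow{\odot}G_2=H\overrightarrow{\odot}(G_1\overrightarrow{\odot}G_2)=H\overrightarrow{\odot}G$, using associativity for the rebracketings and the induction hypothesis on the strictly smaller pairs $\{G_2,H\}$ and $\{G_1,H\}$ for the two middle equalities; this is exactly the three-line pattern already used to prove Lemma~\ref{thm:vertex-coincided-exchanged}. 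Combining closure, associativity and commutativity gives that $(\{N(0),\overrightarrow{\odot}\},\overrightarrow{\odot})$ is a commutative graphic semi-group, as claimed.
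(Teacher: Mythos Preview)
Your proposal is correct and follows essentially the same route as the paper: closure by definition, associativity by unfolding the ``plant a rooted copy at every vertex'' description (your three-layer model $W$ is exactly the informal picture the paper sketches in one sentence), and commutativity via the chain-of-equalities rebracketing trick. The paper is much terser than you are---for commutativity it simply asserts ``it is easy to see $G\overrightarrow{\odot}H=H\overrightarrow{\odot}G$'' for $G,H$ in the family and then writes out the same three-factor permutation chain you use in your induction step---so your explicit induction on the number of $N(0)$-factors is a welcome tightening rather than a different idea.

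One small remark: your induction does not actually need Lemma~\ref{thm:vertex-coincided-exchanged} as input. Your base case $G=H=N(0)$ is tautological, and the step uses only associativity and the induction hypothesis; the Lemma is a special instance of your conclusion, not a prerequisite. This is worth noting because, as your counterexample intuition may suggest, $G\overrightarrow{\odot}H=H\overrightarrow{\odot}G$ is \emph{not} true for arbitrary rooted graphs (the rooted product is not commutative in general); it holds here precisely because every element of $\{N(0),\overrightarrow{\odot}\}$ is an $\overrightarrow{\odot}$-power of the single generator $N(0)$, which is what your induction is really exploiting.
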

\begin{proof} Let $G,H,J$ are self-similar graphs with respect to a graph $N_0$. Clearly, $G\overrightarrow{\odot} H\in \{N(0),\overrightarrow{\odot} \}$, that is the closure. In the left form in the equation (\ref{eqa:dot-semi-group}) means that each vertex $w$ of $J$ is vertex-planted with a $G\overrightarrow{\odot} H$, which is equivalent to each vertex $w$ of $J$ is vertex-planted with a $H$ and then each vertex $y$ of $H\overrightarrow{\odot}J$ is vertex-planted with a $G$.
\begin{equation}\label{eqa:dot-semi-group}
(G\overrightarrow{\odot} H)\overrightarrow{\odot}J=G\overrightarrow{\odot} (H\overrightarrow{\odot}J).
\end{equation}
Thereby, we claim that the equation (\ref{eqa:dot-semi-group}) holds true. On the other hands, it is easy to see $G\overrightarrow{\odot} H=H\overrightarrow{\odot} G$, so
\begin{equation}\label{eqa:dot-semi-group-commotative}
{
\begin{split}
(G\overrightarrow{\odot} H)\overrightarrow{\odot}J&=(H\overrightarrow{\odot} G)\overrightarrow{\odot}J\\
&=H\overrightarrow{\odot} (G\overrightarrow{\odot}J)=H\overrightarrow{\odot} (J\overrightarrow{\odot}G)\\
&=(H\overrightarrow{\odot} J)\overrightarrow{\odot}G=(J\overrightarrow{\odot} H)\overrightarrow{\odot}G.
\end{split}}
\end{equation}
It has shown that the semi-group $\{N(0),\overrightarrow{\odot} \}$ obeys the commutative law.
\end{proof}

In general, we have a commutative graphic semi-group $\{S,\overrightarrow{\odot} \}$, where $S$ is the set of connected and simple graphs. Unfortunately, this commutative graphic semi-group $\{S,\overrightarrow{\odot} \}$ has infinite connected and simple graphs.

\subsubsection{\textbf{Scale-freeness of self-similar networks made by the operations} ``$\odot$'' and ``$\ominus$''} Self-similar networks have \emph{growth} and \emph{preferential attachment}, these are two important criterions for networks having scale-free structure. Let $P(k)$ be the \emph{probability distribution} of a vertex joined with $k$ vertices inn a network $N(t)$. Barab\'{a}si and Albert \cite{Barabasi-Albert1999} have shown
\begin{equation}\label{eqa:Barabasi-Albert1999}
P(k)=\textrm{Pr}(x=k)\propto k^{-\alpha}
\end{equation}
to be one of standard topological properties of a scale-free network, where $\alpha$ falls in the range $2<\alpha<3$. The criterion (\ref{eqa:Barabasi-Albert1999}) has been shown in many real networks that have scale-free behavior. Another approach proposed by Newman in \cite{M-E-J-Newman-SIAM-2003} and by Dorogovtsev \emph{et al.} \cite{Dorogovtsev-Goltsev-Mendes-2002} for verifying the scale-freeness of a network $N(t)$ is given as
\begin{equation}\label{eqa:Newman-Dorogovtsev}
P_{cum}(k)=\sum_{k'>k}\frac{N(k', t)}{n_v(t)}\sim k^{1-\gamma},
\end{equation}
where $N(k', t)$ represents the number of vertices which the degree greater than $k$ at time step $t$, and $2 < \gamma=1 + \frac{\ln 3}{\ln 2}< 3$. Dorogovtsev \emph{et al.} called $P_{cum}(k)$ as the \emph{cumulative degree distribution} of the network $N(t)$, so we obtain exact (analytical) and precise (numerical) answers
for main structural and topological characteristics of scale-free networks.

However, it is not a slight job to verify the scale-freeness of a self-similar network obtained by the vertex/edge-coincided algorithms though computation of $P(k)$ and $P_{cum}(k)$.
\begin{defn}\label{defn:any-one-vertex-edge-coincided}
$^*$ (1) There are a vertex-split $n_v(k)$-scaling $N(0)$-similar network $N(k)$ and another vertex-split $n_v(s)$-scaling $N(0)$-similar network $N(s)$ with its vertex set $V(N(s))=\{u_j:j\in[1,n_v(s)]\}$, where $n_v(s)$ is the number of vertices of $N(s)$. We make $n_v(s)$ copies $N_j(k)$ of $N(k)$ with $j\in[1,n_v(s)]$, and coincide the vertex $u_j$ of $N(s)$ with some vertex (probabilistic selected) of the $j$th copy $N_j(k)$ into one vertex for each $j\in[1,n_v(s)]$. The resultant network is denoted as $N(k)\overrightarrow{\odot }N(s)$. Clearly, $N(k)\overrightarrow{\odot }N(s)$ is a vertex-split $n_v(k)n_v(s)$-scaling $N(0)$-similar network.

(2) We use an edge-split $n_e(a)$-scaling $M(0)$-similar network $M(a)$ and another edge-split $n_e(b)$-scaling $M(0)$-similar network $M(b)$ with its edge set $E(M(b))=\{x_iy_i:i\in[1,n_e(b)]\}$, here $n_e(b)$ is the number of edges of $M(b)$, to make $M(a)\overrightarrow{\ominus }M(b)$ by making $n_e(b)$ copies $M_i(a)$ of $M(a)$ with $i\in[1,n_e(b)]$, and overlapping the edge $x_iy_i\in E(M(b))$ with some edge (probabilistic selected) of the $i$th copy $M_i(a)$ into one edge for each $i\in[1,n_e(b)]$.\qqed
\end{defn}

Definition \ref{defn:any-one-vertex-edge-coincided} can help us to construct self-similar networks having scale-free behavior, although this definition will produce deterministic scale-free networks. For example, suppose that $N(0)$ is a scale-free tree, also, a BA-model:

(1) We coincide the vertex $u_j$ of $N(s)$ with the vertex $v$ of the $j$th copy $N_j(k)$, if the degree difference $|deg_{N(s)}(u_j)-deg_{N_j(k)}(v)|$ is the smallest, into one vertex for each $j\in[1,n_v(s)]$. There resultant self-similar network $N(k)\overrightarrow{\odot }N(s)$ is a tree self-similarly with $N(0)$ and holds scale-free behavior.

(2) We coincide the vertex $u_j$ of $N(s)$ with the vertex $v$ of the $j$th copy $N_j(k)$ by $\textrm{deg}_{N(s)}(u_j)+\textrm{deg}_{N_j(k)}(v)$ obeys $\prod(k_i)=k_i/\sum_j k_j$ into one vertex for each $j\in[1,n_v(s)]$. See a scale-free self-similar tree shown in Fig.\ref{fig:1-scale-free-tree}.

\begin{figure}[h]
\centering
\includegraphics[width=8.2cm]{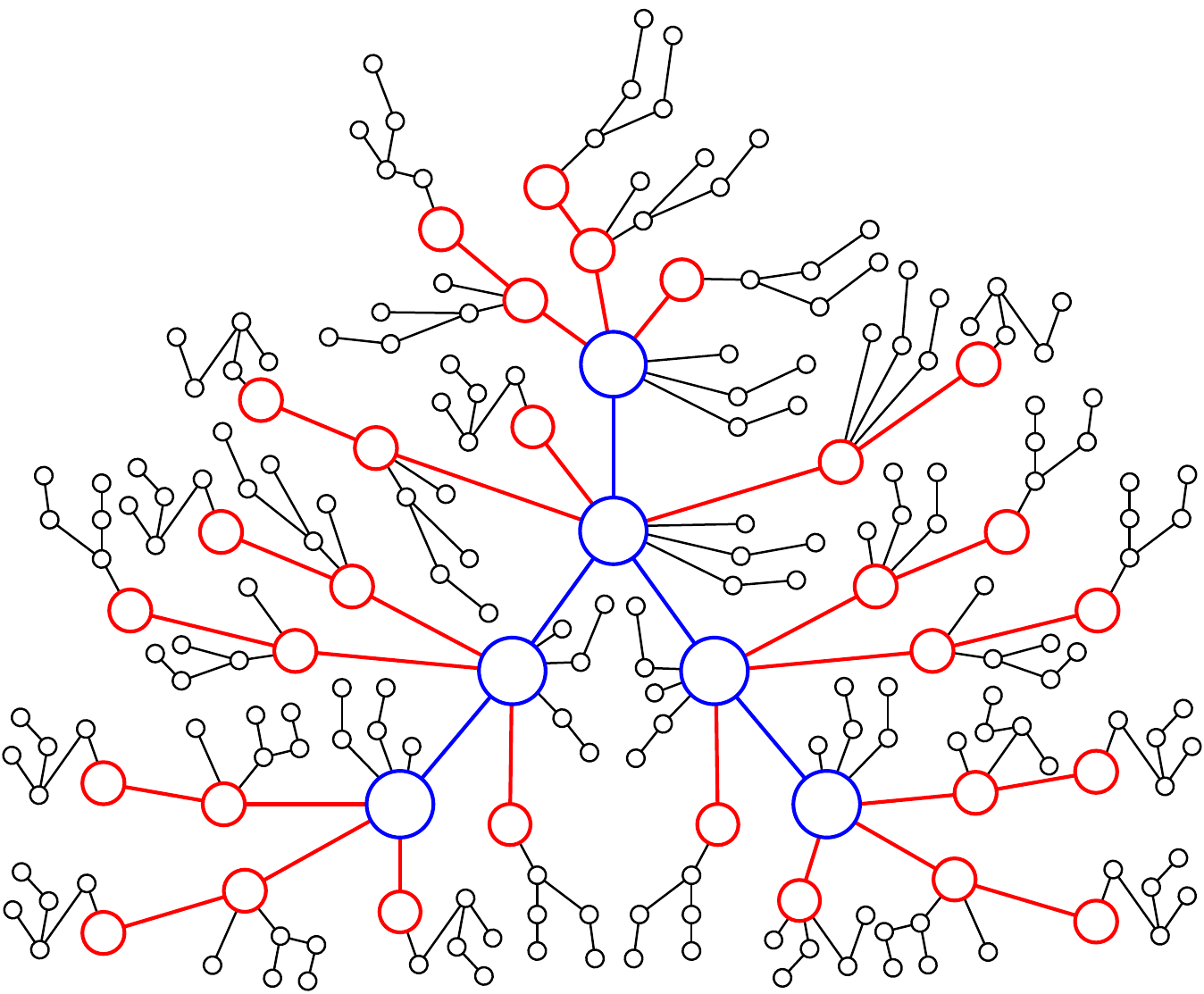}\\
\caption{\label{fig:1-scale-free-tree} {\small A scale-free self-similar Hanzi-tree made by two self-similar Hanzi-trees shown in Fig.\ref{fig:vertex-coincided-algorithm-b} (a)$=T_{4043}$ and (b).}}
\end{figure}

We will apply a technique introduced in \cite{Ma-Wang-Yao-2019}, since it can produce Fibonacci series trees obeying power-law $k^{-\gamma}$ with $\gamma>3$. Let $F(0)$ be a Hanzi-network. We present an algorithm for constructing \emph{Fibonacci self-similar Hanzi-networks} having scale-free features by the following rules:

\vskip 0.4cm

\begin{asparaenum}[Rule-1. ]
\item The vertex-planting operation: The sentence ``vertex-planting disjoint graphs $H_1,H_2,\dots, H_s$ on a vertex $u$ of a given graph $T$'' means that we coincide a vertex $v_i$ of $H_i$ with the vertex $u$ into one vertex for $i\in [1,s]$, such that the resultant graph $H_{u}=\{H_i\}^s_1\overrightarrow{\odot }T$ holds $V(H_{u})=V(T)\bigcup \big\{\bigcup ^s_{i=1}[ V(H_i)\setminus \{v_i\}]\big\}$ and $E(H_{u})=E(T)\bigcup \big[\bigcup ^s_{i=1} E(H_i)\big]$. Notice that $[V(H_i)\setminus \{v_i\}]\cap [V(H_j)\setminus \{v_j\}]=\emptyset$ for $i\neq j$.
\item The edge-planting operation: The sentence ``edge-planting disjoint graphs $H_1,H_2,\dots, H_s$ on a vertex $u$ of a given graph $T$'' means that we join a vertex $v_i$ of $H_i$ with the vertex $u$ by an new edge $uv_i$ for $i\in [1,s]$, such that the resultant graph $H_{vu}=\{H_i\}^s_1\overrightarrow{\ominus}T$ holds $V(H_{vu})=V(T)\bigcup \big[\bigcup ^s_{i=1} V(H_i)\big]$ and $E(H_{vu})=E(T)\bigcup \{uv_i:i\in [1,s]\}\bigcup \big[\bigcup ^s_{i=1} E(H_i)\big]$.
\end{asparaenum}

\vskip 0.4cm

Let $\{F_n\}$ be Fibonacci sequence with $F_1=1$, $F_2=1$ and $F_{k+1}=F_{k-1}+F_{k}$ for $k\geq 2$. Fibonacci self-similar Hanzi-networks $N(t)$ with time step $t\geq 0$ are obtained by the following FIBONACCI-VERTEX algorithm. Since we will use the famous Fibonacci sequence $\{F_n\}$, for distinguishing, we write ``$\overrightarrow{\odot}$'' by ``$\overrightarrow{\odot_F}$'', and ``$\overrightarrow{\ominus}$'' by ``$\overrightarrow{\ominus\odot_F}$'' in the following process of making Fibonacci self-similar Hanzi-networks.

\vskip 0.2cm

\textbf{FIBONACCI-VERTEX algorithm based on the vertex-planting operation:}

\emph{Step 0.} Let $C_0$ be the set of copies $H_i$ of an initial Hanzi-network $N(0)$ with a \emph{unique active vertex} $u_1$, such that each copy $H_i$ has a \emph{unique active vertex} $u^i_1$, which is the image of $u_1$. We define a level-label function to label the vertices of the initial Hanzi-network $N(0)$ with $f(v_{0,a})=0a$ for $v_{0,a}\in V(N(0))$ with $a\in [1,n_v(0)]$, where $n_v(0)$ is the number of vertices of $N(0)$.

\emph{Step 1.} A Hanzi-network $N(1)=N(0)\overrightarrow{\odot_F}N(0)$ is obtained by doing the vertex-planting a graph $H_i\in C_0$ on each vertex $x$ of the initial Hanzi-network $N(0)$, such that the unique active vertex $u^i_1$ of $H_i$ is coincided with $x$ into one vertex. So, we can divide the vertex set of $N(1)$ as $V(N(1))=V_0\bigcup V_1$ and $V_0\cap V_1=\emptyset$ with $V_0=V(N(0))$, and furthermore we label each vertex $x_{1,i}$ with $f(x_{1,i})=1i$ for $x_{1,i}\in V_1$.

\emph{Step 2.} A Hanzi-network $N(2)=N(0)\overrightarrow{\odot_F}N(1)$ is made in the following way: doing the vertex-planting a graph $H_i\in C_0$ on each vertex $x$ of the Hanzi-network $N(1)$ by coinciding the unique active vertex $u^i_1$ of $H_i$ with $x$ into one vertex. And then, $V(N(2))=V(N(1))\bigcup V_2$, where $V(N(1))=V_0\bigcup V_1$, we label each vertex $x_{2,i}$ with $f(x_{2,i})=2i$ for $x_{2,i}\in V_2$.

A Hanzi-network $N(3)=N(0)\overrightarrow{\odot_F}N(2)$ is constructed by doing the vertex-planting disjoint graph $H_1,H_2,\dots H_{F_{3-i}}\in C_0$ on each vertex $x\in V_i$ by coinciding the unique active vertex $u^a_0$ of $H_a$ with $x$ into one vertex, where $a\in[1,3-i]$ and $i\in [0,2]$. Thereby, $V(N(3))=V(N(2))\bigcup V_3$, where $V(N(2))=V_0\bigcup V_1\bigcup V_2$, we label each vertex $x_{3,i}$ with $f(x_{3,i})=3i$ for $x_{3,i}\in V_3$.

\emph{Step 3.} Each Hanzi-network $N(k+1)=N(0)\overrightarrow{\odot_F}N(k)$ is obtained by implementing the vertex-planting disjoint graph $H_1,H_2,\dots H_{F_{k-i}}\in C_0$ on each vertex $x\in V_i$ by coinciding the unique active vertex $u^a_0$ of $H_a$ with $x$ into one vertex, where $a\in[1,k-i]$ and $i\in [0,k-1]$. Thereby, $V(N(k+1))=V(N(k))\bigcup V_{k+1}$, where $V(N(k))=\bigcup ^k_{j=0} V_j$, we label each vertex $x_{3,i}\in V_{k+1}$ with $f(x_{k+1,i})=(k+1)i$.

\vskip 0.2cm

In Fig.\ref{fig:Fibonacci-self-similar}, we give a vertex $02$ that is planted with the initial network shown in Fig.\ref{fig:Fibonacci-self-similar} (d). By FIBONACCI-VERTEX algorithm, we have: (a) the vertex $02$ is \emph{vertex-planted} with $F_1=1$ initial network in $N(1)$; (b) the vertex $02$ is vertex-planted with $F_2=2$ initial network in $N(2)$; (c) the vertex $02$ is vertex-planted with $F_3=2$ initial networks in $N(3)$; and we claim that the vertex $02$ is vertex-planted with $F_k$ initial networks in $N(k)$.

\begin{figure}[h]
\centering
\includegraphics[height=8.8cm]{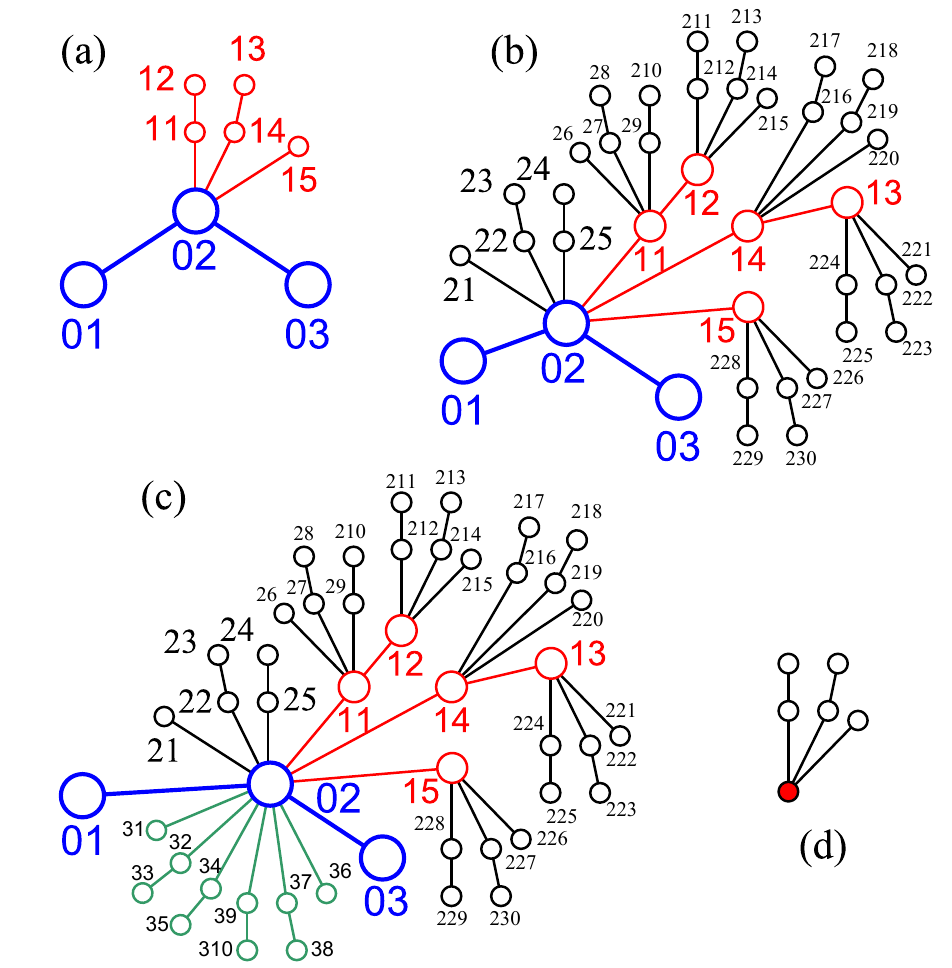}\\
\caption{\label{fig:Fibonacci-self-similar} {\small (a) $N(1)$; (b) $N(2)$; (c) $N(3)$; (d) $N(0)$.}}
\end{figure}

Fibonacci self-similar Hanzi-networks obtained by FIBONACCI-EDGE algorithm based on the edge-planting operation are denoted as $N'(t)$ with time step $t\geq 0$. Here we omit the definition of \emph{FIBONACCI-EDGE algorithm} since it is very similar with \emph{FIBONACCI-VERTEX algorithm}.

We will show some properties of Fibonacci self-similar Hanzi-networks $N(t)$ made by FIBONACCI-VERTEX algorithm. Let $n_v(t)$ and $n_e(t)$ be the numbers of vertices and edges of Fibonacci self-similar Hanzi-networks $N(t)$ with time step $t\geq 0$, respectively.

$V(N(0))=\{u_i:~i\in [1,n_v(0)]\}$, we rewrite $n_v(0)$ and $n_e(0)$ by $\alpha_0$ and $\beta_0$ respectively, and each degree $deg_{N(0)}(u_i)$ is simply written as $d_{0,i}$ with $i\in [1,n_v(0)]$.
\begin{lem}\label{thm:Fibonacci-self-similar Hanzi-networks-vv}
Let $N(t)$ with $t\geq 1$ be Fibonacci self-similar Hanzi-networks defined by FIBONACCI-VERTEX algorithm based on the vertex-planting operation, then the number $n_v(t)$ of vertices and the number $n_e(t)$ of edges of $N(t)$ are computed by
\begin{equation}\label{eqa:Fibonacci-self-similar-Hanzi-networks-ve}
\left\{
\begin{array}{ll}
\displaystyle n_v(t)=n_v(t-1)+(\alpha_0-1)\sum^{t-1}_{j=0}F_{t-j}\cdot |V_j|;\\
\displaystyle n_e(t)=n_e(t-1)+\beta_0\sum^{t-1}_{j=0}F_{t-j}\cdot |V_j|.
\end{array}
\right.
\end{equation}
\end{lem}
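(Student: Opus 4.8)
The plan is to argue by induction on the time step $t$, exploiting the fact that the vertex-planting operation only \emph{identifies} the unique active vertex of a fresh copy of $N(0)$ with a vertex already present, and never merges two old vertices or two fresh copies with each other. Consequently, at every step the new network decomposes cleanly: writing $V_t$ for the set of vertices that first appear when $N(t)$ is built from $N(t-1)$, and $E_t$ for the set of edges that first appear at that step, one has $V(N(t))=V(N(t-1))\sqcup V_t$ and $E(N(t))=E(N(t-1))\sqcup E_t$, so that $n_v(t)=n_v(t-1)+|V_t|$ and $n_e(t)=n_e(t-1)+|E_t|$. Thus the entire statement reduces to evaluating $|V_t|$ and $|E_t|$, and the running partition $V(N(t))=\bigsqcup_{j=0}^{t}V_j$ (with $|V_j|=n_v(j)-n_v(j-1)$ for $j\ge 1$ and $|V_0|=\alpha_0$) is exactly the data the FIBONACCI-VERTEX algorithm maintains.

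To count $|V_t|$ and $|E_t|$ I would track, for each vertex $x$ of $N(t-1)$, how many copies of $N(0)$ get vertex-planted onto $x$ during the $t$-th step. By Step~3 of the algorithm, if $x\in V_j$ — i.e. $x$ was born at time $j$, for some $j\in[0,t-1]$ — then exactly $F_{t-j}$ pairwise disjoint copies of $N(0)$ are planted on $x$, each by coinciding its active vertex with $x$. Each such copy contributes $|V(N(0))|-1=\alpha_0-1$ genuinely new vertices (all of $V(N(0))$ except the anchor) and $|E(N(0))|=\beta_0$ genuinely new edges; distinct copies share no vertex other than their common anchor $x$, and copies planted on distinct anchors are entirely disjoint. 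Summing over all anchors therefore gives
\[
|V_t|=(\alpha_0-1)\sum_{j=0}^{t-1}F_{t-j}\,|V_j|,\qquad |E_t|=\beta_0\sum_{j=0}^{t-1}F_{t-j}\,|V_j|,
\]
and substituting these into $n_v(t)=n_v(t-1)+|V_t|$ and $n_e(t)=n_e(t-1)+|E_t|$ yields precisely the claimed recursion (\ref{eqa:Fibonacci-self-similar-Hanzi-networks-ve}).

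For a sanity check of the base case $t=1$ one verifies directly that $N(1)=N(0)\overrightarrow{\odot_F}N(0)$ plants a single copy on each of the $\alpha_0$ vertices of $N(0)$, so $n_v(1)=\alpha_0+\alpha_0(\alpha_0-1)=\alpha_0^2$ and $n_e(1)=\beta_0+\alpha_0\beta_0=\beta_0(\alpha_0+1)$, in agreement with $F_1=1$ and $|V_0|=\alpha_0$. The inductive step is then immediate from the decomposition above, since the hypothesis supplies $n_v(t-1)$ and $n_e(t-1)$ together with the classes $V_0,\dots,V_{t-1}$.

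The step I expect to be the main obstacle is the bookkeeping behind the multiplicity $F_{t-j}$: one must verify from the recursive description of the algorithm that a vertex born at time $j$ receives \emph{precisely} $F_{t-j}$ fresh copies when $N(t)$ is formed — not $F_{t-1-j}$ or $F_{t-j+1}$ — which forces one to fix the index convention $F_1=F_2=1$, $F_{k+1}=F_{k-1}+F_k$ and use it consistently, and to reconcile it with the (somewhat loosely stated) ranges $a\in[1,k-i]$, $i\in[0,k-1]$ appearing in Step~3. Care is also needed to confirm that no vertex is ever created twice, that an anchor is never itself counted as ``new'', and that every vertex present in $N(t-1)$ (not merely those in some $V_j$ with $j<t-1$) is an eligible anchor with the stated multiplicity. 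Once these combinatorial points are nailed down, both the vertex and the edge recursions follow from a single routine double count.
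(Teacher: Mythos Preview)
Your proposal is correct and follows essentially the same route as the paper's own proof: induction on $t$, the decomposition $V(N(t))=V(N(t-1))\sqcup V_t$ and $E(N(t))=E(N(t-1))\sqcup E_t$, and the key double count that each vertex in $V_j$ receives $F_{t-j}$ fresh copies of $N(0)$, each contributing $\alpha_0-1$ new vertices and $\beta_0$ new edges. Your write-up is in fact somewhat more careful than the paper's about the disjointness of the planted copies and about the index convention (the paper's Step~3 and proof have mildly inconsistent ranges for $i$), so the ``main obstacle'' you flag is exactly where the paper is loosest.
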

\begin{proof} The proof is by induction on time step $t$. For $t=1$, we have

$n_v(1)=n_v(0)+F_{1}\cdot |V_0|\cdot [n_v(0)-1]=n_v(0)+(\alpha_0-1)F_{1}\cdot |V_0|$, and

$n_e(1)=\beta_0+|V_0|\cdot F_1\beta_0$.

Since $V(N(2))=\bigcup^2_{i=0}V_i$, we get

$n_v(2)=n_v(1)+F_{1}\cdot |V_1|\cdot (\alpha_0-1)+F_{2}\cdot |V_0|(\alpha_0-1)$, and

$n_e(2)=n_e(1)+|V_1|\cdot F_1\beta_0+|V_0|\cdot F_2\beta_0=\beta_0+|V_0|\beta_0+|V_1|\beta_0+|V_0|\beta_0$.

Suppose that the formula (\ref{eqa:Fibonacci-self-similar-Hanzi-networks-ve}) are true for $t=k$.

As $t=k+1$, since $N(k+1)=N(0)\overrightarrow{\odot_F}N(k)$ is obtained in Step 3 of FIBONACCI-VERTEX algorithm based on the vertex-planting operation. We have $V(N(k+1))=\bigcup ^{k+1}_{j=0} V_j$, where $V(N(k))=\bigcup ^{k}_{j=0} V_j$, and the vertices of $V_{k+1}$ are newly added into $N(k)$. Thereby, each vertex $x\in V_i$ with $i\in[0,k]$ is planted with disjoint graph $H_1,H_2,\dots H_{F_{k-i}}\in C_0$, thus we have $F_{k-i}\cdot (\alpha_0-1)|V_i|$ new vertices added into $N(k)$, and furthermore we have $|V_{k+1}|=(\alpha_0-1)\sum^{k}_{j=0}F_{k-j}\cdot |V_j|$ new vertices added into $N(k)$ in total.

We come to consider the set $E_{k+1}$ of edges newly added into $N(k)$. Notice that $H_i\cong N(0)$ for each $H_i\in C_0$, each vertex $x\in V_i$ is related with $F_{k-i}\cdot \beta_0$, so the set $V_i$ distributes us  $F_{k-i}\cdot \beta_0\cdot |V_i|$ new edges, finally,
$$|E_{k+1}|=\beta_0\sum^{k}_{j=0}F_{k-j}\cdot |V_j|.$$ Because of $n_e(k+1)=n_e(k)+|E_{k+1}|$, we have shown the formula (\ref{eqa:Fibonacci-self-similar-Hanzi-networks-ve}).
\end{proof}

\begin{lem}\label{thm:Fibonacci-self-similar Hanzi-networks-ee}
Let $N'(t)$ with $t\geq 1$ be Fibonacci self-similar Hanzi-networks defined by FIBONACCI-EDGE algorithm based on the edhe-planting operation, then the number $n'_v(t)$ of vertices and the number $n'_e(t)$ of edges of $N'(t)$ can be computed as
\begin{equation}\label{eqa:Fibonacci-self-similar-Hanzi-networks-ee-ve}
\left\{
\begin{array}{ll}
\displaystyle n'_v(t)=n'_v(t-1)+\alpha_0\sum^{t-1}_{j=0}F_{t-j}\cdot |V_j|;\\
\displaystyle n'_e(t)=n'_e(t-1)+(\beta_0+1)\sum^{t-1}_{j=0}F_{t-j}\cdot |V_j|.
\end{array}
\right.
\end{equation}
\end{lem}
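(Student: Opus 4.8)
The plan is to prove the two recurrences in~(\ref{eqa:Fibonacci-self-similar-Hanzi-networks-ee-ve}) by induction on the time step $t$, mirroring the argument for Lemma~\ref{thm:Fibonacci-self-similar Hanzi-networks-vv}, with exactly two bookkeeping changes forced by passing from the vertex-planting operation to the edge-planting operation. Recall from the edge-planting rule that edge-planting a copy $H_i\cong N(0)$ on a vertex $u$ of a host graph joins a vertex $v_i$ of $H_i$ to $u$ by a brand-new edge $uv_i$ and identifies no vertices. Hence, unlike in the vertex-planting case, each planted copy contributes all $|V(N(0))|=\alpha_0$ of its vertices as genuinely new vertices (not $\alpha_0-1$), and it contributes $|E(N(0))|+1=\beta_0+1$ new edges, namely the $\beta_0$ internal edges of the copy together with the new joining edge, rather than just $\beta_0$.

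First I would fix the level decomposition coming from \textbf{FIBONACCI-EDGE algorithm}: write $V(N'(k))=\bigcup_{j=0}^{k}V_j$, where $V_j$ is the set of vertices first created at step $j$, and record the defining rule that in forming $N'(k+1)=N'(0)\overrightarrow{\ominus\odot_F}N'(k)$ every vertex lying in level $V_j$ is edge-planted with exactly $F_{k-j}$ disjoint copies of $N(0)$ (this is where the Fibonacci recursion $F_{k+1}=F_{k-1}+F_{k}$ enters, via the active-edge mechanism pushing plantings down the levels, precisely as in the vertex version). Combining this with the per-copy counts above, the number of vertices added in passing from $N'(k)$ to $N'(k+1)$ is $|V_{k+1}|=\alpha_0\sum_{j=0}^{k}F_{k-j}|V_j|$ and the number of edges added is $(\beta_0+1)\sum_{j=0}^{k}F_{k-j}|V_j|$; since $n'_v(k+1)=n'_v(k)+|V_{k+1}|$ and $n'_e(k+1)=n'_e(k)+(\beta_0+1)\sum_{j=0}^{k}F_{k-j}|V_j|$, this is exactly~(\ref{eqa:Fibonacci-self-similar-Hanzi-networks-ee-ve}) with $t=k+1$. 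For the base case $t=1$ I would simply note that $N'(1)=N'(0)\overrightarrow{\ominus\odot_F}N'(0)$ edge-plants $F_1=1$ copy of $N(0)$ on each of the $|V_0|=\alpha_0$ vertices of $N'(0)$, giving $n'_v(1)=\alpha_0+\alpha_0F_1|V_0|$ and $n'_e(1)=\beta_0+(\beta_0+1)F_1|V_0|$, and I would also spell out $t=2$ to make the Fibonacci indexing transparent. As in the previous lemma, the quantities $|V_j|$ are left as given data rather than solved for.

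I expect the main obstacle to be purely combinatorial bookkeeping, not anything conceptual: one must be scrupulous about (i) the range of the levels present in $N'(k)$ and the claim that a vertex of level $V_j$ receives $F_{k-j}$ fresh copies when going from $N'(k)$ to $N'(k+1)$, and (ii) not double-counting the joining edges against the internal edges of the planted copies. Since \textbf{FIBONACCI-EDGE algorithm} is declared to be very similar to \textbf{FIBONACCI-VERTEX algorithm}, the cleanest write-up is to state the two per-copy contribution numbers $(\alpha_0,\ \beta_0+1)$, point to the analogous inductive step in the proof of Lemma~\ref{thm:Fibonacci-self-similar Hanzi-networks-vv}, and then carry out only the short verification that the constants change from $(\alpha_0-1,\ \beta_0)$ there to $(\alpha_0,\ \beta_0+1)$ here, which yields the stated formulas.
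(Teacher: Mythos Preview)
Your proposal is correct and follows exactly the approach the paper intends: the paper does not give an explicit proof of this lemma, stating only that the FIBONACCI-EDGE algorithm is ``very similar with FIBONACCI-VERTEX algorithm'' and leaving the reader to adapt the inductive proof of Lemma~\ref{thm:Fibonacci-self-similar Hanzi-networks-vv}. Your identification of the two bookkeeping changes, namely that each edge-planted copy contributes $\alpha_0$ new vertices rather than $\alpha_0-1$ and $\beta_0+1$ new edges rather than $\beta_0$, is precisely the content needed to carry the analogous induction through.
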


A vertex $02$ shown in Fig.\ref{fig:Fibonacci-self-similar-edge} is planted with disjoint initial networks $N'(0)$ shown in Fig.\ref{fig:Fibonacci-self-similar-edge} (d). One, according to FIBONACCI-EDGE algorithm, can see: (a) the vertex $02$ is \emph{edge-planted} with $F_1=1$ initial network in $N'(1)=N'(0)\overrightarrow{\ominus\odot_F}N'(0)$; (b) the vertex $02$ is edge-planted with $F_2=2$ initial network in $N'(2)=N'(0)\overrightarrow{\ominus\odot_F}N'(1)$; (c) the vertex $02$ is edge-planted with $F_3=2$ initial networks in $N'(3)=N'(0)\overrightarrow{\ominus\odot_F}N'(2)$. Thereby, the vertex $02$ is edge-planted with $F_k$ initial networks in $N'(k)$. In general, we have $N'(k+1)=N'(0)\overrightarrow{\ominus\odot_F} N'(k)$.

\begin{figure}[h]
\centering
\includegraphics[height=10cm]{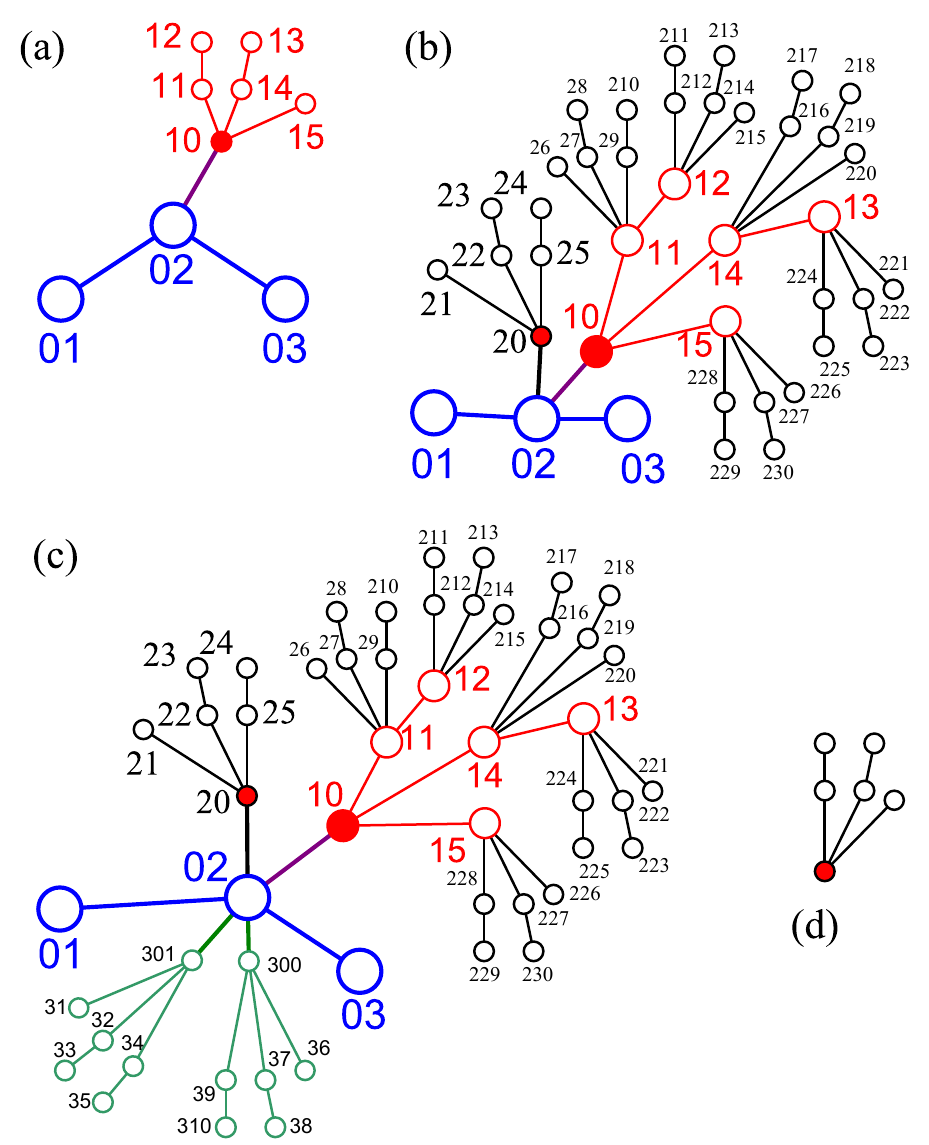}\\
\caption{\label{fig:Fibonacci-self-similar-edge} {\small (a) $N'(1)$; (b) $N'(2)$; (c) $N'(3)$; (d) $N'(0)$.}}
\end{figure}

Combining the formula (\ref{eqa:Fibonacci-self-similar-Hanzi-networks-ve}) and the formula (\ref{eqa:Fibonacci-self-similar-Hanzi-networks-ee-ve}) together produces the following formula
\begin{equation}\label{eqa:Fibonacci-self-similar-Hanzi-networks-ve-00}
\left\{
\begin{array}{ll}
\displaystyle n_v(t)=(\alpha_0-1)M(t,k);\\
\displaystyle n_e(t)=\beta_0M(t,k).
\end{array}
\right.
\end{equation}
and
\begin{equation}\label{eqa:Fibonacci-self-similar-Hanzi-networks-ee-ve-00}
\left\{
\begin{array}{ll}
\displaystyle n'_v(t)=\alpha_0M(t,k);\\
\displaystyle n'_e(t)=(\beta_0+1)M(t,k).
\end{array}
\right.
\end{equation}
where $M(t,k)=\sum^{t-1}_{k=1}\sum^{k-1}_{j=0}F_{k-j}\cdot |V_j|$. Moreover, the formula (\ref{eqa:Fibonacci-self-similar-Hanzi-networks-ve}) and the formula (\ref{eqa:Fibonacci-self-similar-Hanzi-networks-ee-ve}) give us
\begin{equation}\label{eqa:0000}
\left\{
\begin{array}{ll}
\displaystyle n'_v(t)-n_v(t)=n'_v(t-1)-n_v(t-1)+F\bullet V;\\
\displaystyle n'_e(t)-n_e(t)=n'_e(t-1)-n_e(t-1)+F\bullet V.
\end{array}
\right.
\end{equation}
where the vector dot-product
$${
\begin{split}
F\bullet V&=(F_{1}~F_{2}~\cdots ~F_{t})\bullet (|V_{t-1}|~|V_{t-2}|~\cdots ~|V_0|)^{-1}\\
&=\sum^{t-1}_{j=0}F_{t-j}\cdot |V_j|,
\end{split}}
$$ it is not related with the numbers of vertices and edges of the initial network $N(0)$ at all.

The scale-freeness of two types of Fibonacci self-similar Hanzi-networks $N(t)$ and $N'(t)$ (concluded the network models introduced in\cite{Ma-Wang-Yao-2019}) can be verified by the formula (\ref{eqa:Newman-Dorogovtsev}) (Ref. \cite{M-E-J-Newman-SIAM-2003, Dorogovtsev-Goltsev-Mendes-2002}), which is a powerful tool in investigation of scale-free networks. Other topological properties of Fibonacci self-similar Hanzi-networks $N(t)$ and $N'(t)$ are: Degree distribution, Cumulative degree distribution, Correlation coefficient, Clustering coefficient, Diameter, Average distance, Small-world, and so on.

\subsection{Encrypting Hanzi-networks}

Encrypting a network is new in our knowledge. Here, our `encrypting network' differs greatly from `network encryption' of network science.

\subsubsection{Strategy of encrypting networks} First of all, we think of encrypting larger scale of networks with computer, not manually. A computer encryption is cryptographic hash function that is used in `Blockchain' invented by Satoshi Nakamoto in 2008 to serve as the public transaction ledger of the cryptocurrency bitcoin.

We assign a private key for each vertex (node) of a network $N(t)$, and distribute an authentication of each edge (link) of $N(t)$ at time step $t$, such that an end node $u$ of an edge $uv$ can communicate with another end node $v$ of the edge $uv$ though the edge $uv$. We write this encryption system by $E_n(N(t))$ at time step $t$. Clearly, $E_n(N(t_1))\neq E_n(N(t_2))$ if two time steps $t_1\neq t_2$.

Second, we have known the following cipher codes:

1. The authors, in \cite{Sun-Zhang-Zhao-Yao-2017, Yao-Sun-Zhao-Li-Yan-2017, Yao-Mu-Sun-Zhang-Wang-Su-Ma-IAEAC-2018, Yao-Sun-Zhang-Mu-Sun-Wang-Su-Zhang-Yang-Yang-2018arXiv, Yao-Zhang-Sun-Mu-Sun-Wang-Wang-Ma-Su-Yang-Yang-Zhang-2018arXiv}, have introduced Topsnut-gpws with variable labellings related with large scale of \emph{Abelian groups}, also \emph{graphic groups}, for exploring the encryption of dynamic networks.

2. Another way is to use commutative graphic semi-groups $\{S,\overrightarrow{\odot} \}$ to implement encryption of self-similar Hanzi-networks, such that the resultant networks are `\emph{super-networks}'' or ``\emph{hypergraphs}''. We divide $S$ into two subsets $S(V)$ and $S(E)$, and use elements of $S(V)$ to replace the vertices of a self-similar Hanzi-network $N(t)$, in other word, we have a mapping $\theta: V(N(t))\rightarrow S(V)$, and each edge $uv\in E(N(t))$ is substituted by $\theta(uv)=\theta(u)\overrightarrow{\odot} \theta(v)\in S(E)$, and then join a vertex of $\theta(u)$ with some vertex of $\theta(uv)$ by an edge, and join a vertex of $\theta(uv)$ with some vertex of $\theta(v)$ by another edge; the resultant network is denoted as $N_{super}(t)=\langle N(t); \{S,\overrightarrow{\odot} \}\rangle$. Clearly, there are two or more super-networks $N_{super}(t)$, and each one has more vertices and edges for opposing attachers.

3. We propose using Hanzis (not Hanzi-graphs and Hanzi-gpws) to encrypting a network. For example, a group of Hanzis $H_{4043}$, $H_{4043}$, $H_{2635}$, $H_{2511}$, $H_{5282}$, $H_{4476}$, $H_{4734}$, $H_{4411}$, $H_{3829}$, we have used here, distributes us $9!=362,880$ sentences, so they can be candidate cipher codes for encrypting smaller networks. Chinese couplets are naturally public keys and private keys mentioned in Section II.

Since networks, very often, contain thousand and thousand of vertices and edges, we think that ``graphic groups'' made by Topsnut-gpws is better than commutative graphic semi-groups $\{S,\overrightarrow{\odot} \}$ and Hanzis in terms of memory, convenience and large quantity.

However, we are facing a difficult problem: How to construct quickly large scale of Topsnut-gpws with thousand and thousand of vertices? As known, the number of elements in a graphic group made by a Topsnut-gpw $G$ is equal to the number of vertices of graphical configuration of $G$ or the range size of the labelling $f$ of $G$. In other words, we will make a (compound) Topsnut-gpw $G(t)$ such that the number of vertices of $G(t)$ equals the number of nodes in a network $N(t)$ to be encrypted at time step $t$.

\subsubsection{Every-zero graphic groups, every-zero string groups} By means of graceful labelling, we restate graphic group definition and properties as follows (Ref. \cite{Sun-Zhang-Zhao-Yao-2017, Yao-Sun-Zhao-Li-Yan-2017, Yao-Mu-Sun-Zhang-Wang-Su-Ma-IAEAC-2018, Yao-Sun-Zhang-Mu-Sun-Wang-Su-Zhang-Yang-Yang-2018arXiv, Yao-Zhang-Sun-Mu-Sun-Wang-Wang-Ma-Su-Yang-Yang-Zhang-2018arXiv}):

Let a Topsnut-gpw $G$ be made by a $(p,q)$-graph $G$ with a graceful labelling $f: V(G)\rightarrow [0,q]$. We set each Topsnut-gpw $G_i$ such that (i) $G_i$ is a copy of $G$, namely, $G_i\cong G$, so $V(G_i)=V(G)$; and (ii) $G_i$ admits a labelling $f_i: V(G_i)\rightarrow [0,q]$ defined by $f_i(x)=f(x)+i-1~(\bmod~q)$ with $i\in [1,q]$. Then the set $F_f(G)=\{G_1,G_2,\dots, G_q\}$ and an additive operation $\oplus$ form a \emph{graphic group} $\{F_f(G),\oplus\}$, also, called an \emph{Ablian additive group}, where the additive operation $G_i\oplus G_j$ is defined as follows:
\begin{equation}\label{eqa:basic-operation}
[f_i(x)+f_j(x)-f_k(x)]~(\bmod~q)=f_{\lambda}(x)
\end{equation}where $\lambda=i+j-k~(\bmod~q)$ for $x\in V(G)$, and $G_i,G_j\in F_f(G)$, and a fixed element $G_k\in F_f(G)$. Notice that $\{F_f(G),\oplus\}$ holds
\begin{asparaenum}[(1) ]
\item \emph{Zero}. $G_k$ is the \emph{zero} such that $G_i\oplus G_k=G_i$.

\item \emph{Closure and uniqueness}. If $G_i\oplus G_j=G_s$, $G_i\oplus G_j=G_t$, then $s=t$. And $G_i\oplus G_j\in \{F_f(G),\oplus\}$.

\item \emph{Inverse}. Each $G_i$ has its own \emph{inverse} $G_{i^{-1}}$ such that $G_i\oplus G_{i^{-1}}=G_k$ determined by $f_i(x)+f_{i^{-1}}(x)=2f_{k}(x)$ for $x\in V(G)$.

\item \emph{Associative law}. $G_i\oplus [G_j\oplus G_s]=[G_i\oplus G_j]\oplus G_s$.
\end{asparaenum}

Moreover, this graphic group $\{F_f(G),\oplus\}$ has the following properties:

\begin{asparaenum}[GP-1. ]
\item \emph{Every-zero}. Each element of $\{F_f(G),\oplus\}$ can be as ``zero'' by the additive operation $\oplus $ defined in (\ref{eqa:basic-operation}), also, we call $\{F_f(G),\oplus\}$ an \emph{every-zero graphic group}.
\item \emph{Commutative}. $G_i\oplus G_j=G_j\oplus G_i$, so the every-zero graphic group $\{F_f(G),\oplus\}$ is \emph{commutative}.
\item If $G_s=G_i\oplus G_j$, then $G_{s^{-1}}=G_{i^{-1}}\oplus G_{j^{-1}}=G_{j^{-1}}\oplus G_{i^{-1}}$.
\item If $q$ is even, then there exists an element $G_{i_0}$ to be itself inverse, that is, $G_{i_0}\oplus G_{i_0}$ equals to the zero.
\item Each $G_i$ has its own \emph{inverse} $G_{i+2j}$ based on the zero $G_{i+j}$, since $f_i(x)+f_{i+2j}(x)=2f_{i+j}(x)$. Thereby, each $G_i$ has \emph{different inverses} based on \emph{different zeros}.
\end{asparaenum}

As examples, a tree $G$ is a $(13,12)$-graph with an \emph{odd-even separable 6C-labelling} $f$ (Ref. \cite{Yao-Sun-Zhang-Mu-Sun-Wang-Su-Zhang-Yang-Yang-2018arXiv}). By this $(13,12)$-graph $G$, we construct an \emph{every-zero graphic group} $G_{roup}=\{F^{odd}_f(G)\cup F^{even}_f(G),\oplus\}$ in the following: In Fig.\ref{fig:6C-group-odd}), the Topsnut-gpw set $F^{odd}_f(G)=\{G_{1},G_{3},\dots ,G_{25}\}$ holds $G_{2i-1}\cong G$ and admits a labelling $f_{2i-1}$ with $i\in [1,13]$, and each $f_{2i-1}$ is defined by $f_{2i-1}(x)=f(x)+2(i-1)~(\bmod~25)$ for $x\in V(G)$ and $f_{2i-1}(xy)=f(xy)+2(i-1)~(\bmod~24)$ for $xy\in E(G)$. In Fig.\ref{fig:6C-group-even}, we have another Topsnut-gpw set $F^{even}_f(G)=\{G_{2},G_{4},\dots ,G_{26}\}$ has each element $G_{2i}$ holding $G_{2i}\cong G$ and admitting a labelling $f_{2i}$ with $i\in [1,13]$, where each $f_{2i}$ is defined by $f_{2i}(u)=f(u)+(2i-1)~(\bmod~25)$ for $u\in V(G)$ and $f_{2i}(uv)=f(uv)+(2i-1)~(\bmod~24)$ for $uv\in E(G)$. Thereby, we have
\begin{equation}\label{eqa:c3xxxxx}
[f_a(x)+f_b(x)-f_c(x)]~(\bmod~25)=f_{\lambda}(x)
\end{equation}
where $\lambda=a+b-c~(\bmod~25)$ for $x\in V(G)$, and $a,b,c\in [1,26]$.

Furthermore, each element $G_{2i-1}$ is a Topsnut-gpw and distributes us a TB-paw $D_{2i-1}$ as
$${
\begin{split}
D_{2i-1}=&f_{2i-1}(x_1)f_{2i-1}(x_1x_2)f_{2i-1}(x_2)f_{2i-1}(x_2x_3)\\
&f_{2i-1}(x_3)f_{2i-1}(x_3x_4)f_{2i-1}(x_4)f_{2i-1}(x_3x_5)\\
&f_{2i-1}(x_5)f_{2i-1}(x_3x_6)f_{2i-1}(x_6)f_{2i-1}(x_3x_7)\\
&f_{2i-1}(x_7)f_{2i-1}(x_7x_3)f_{2i-1}(x_3)f_{2i-1}(x_7x_8)\\
&f_{2i-1}(x_8)f_{2i-1}(x_8x_7)f_{2i-1}(x_7)f_{2i-1}(x_8x_{11})\\
&f_{2i-1}(x_{11})f_{2i-1}(x_{11}x_{12})f_{2i-1}(x_{12})\\
&f_{2i-1}(x_{11}x_{13})f_{2i-1}(x_{13})
\end{split}}
$$ with $i\in [1,13]$. Similarly, the string set $\{D_{2i}:~i\in [1,13]\}$ is defined by
$${
\begin{split}
D_{2i}=&f_{2i}(x_1)f_{2i}(x_1x_2)f_{2i}(x_2)f_{2i}(x_2x_3)f_{2i}(x_3)f_{2i}(x_3x_4)\\
&f_{2i}(x_4)f_{2i}(x_3x_5)f_{2i}(x_5)f_{2i}(x_3x_6)f_{2i}(x_6)f_{2i}(x_3x_7)\\
&f_{2i}(x_7)f_{2i}(x_7x_3)f_{2i}(x_3)f_{2i}(x_7x_8)f_{2i}(x_8)f_{2i}(x_8x_7)\\
&f_{2i}(x_7)f_{2i}(x_8x_{11})f_{2i}(x_{11})f_{2i}(x_{11}x_{12})\\
&f_{2i}(x_{12})f_{2i}(x_{11}x_{13})f_{2i}(x_{13}).
\end{split}}
$$ It is not hard to verify that $\{D_{2i-1}:~i\in [1,13]\}\cup \{D_{2i}:~i\in [1,13]\}$ forms an \emph{every-zero string group} under the operation $\oplus $.

On the other hands, we say an element $G_k\in F^{odd}_f(G)\cup F^{even}_f(G)$ to be the \emph{zero}, this particular group is denoted as $G^k_{roup}=\{F^{odd}_f(G)\cup F^{even}_f(G),\oplus_k\}$ with $k\in [1,26]$. Observe the every-zero graphic group $G_{roup}=\{F^{odd}_f(G)\cup F^{even}_f(G),\oplus\}$ again, we can discover that
\begin{equation}\label{eqa:c3xxxxx}
[_{a'}(uv)+f_{b'}(uv)-f_{c'}(uv)]~(\bmod~26)=f_{a'+b'-c'}(uv)
\end{equation}
where $\lambda'=a+b-c~(\bmod~26)$ for $uv \in E(G)$, and $a',b',c'\in [1,26]$. So, $G_{roup}=\{F^{odd}_f(G)\cup F^{even}_f(G),\oplus\}$ is an \emph{edge-every-zero graphic group}.

\begin{figure}[h]
\centering
\includegraphics[width=8.2cm]{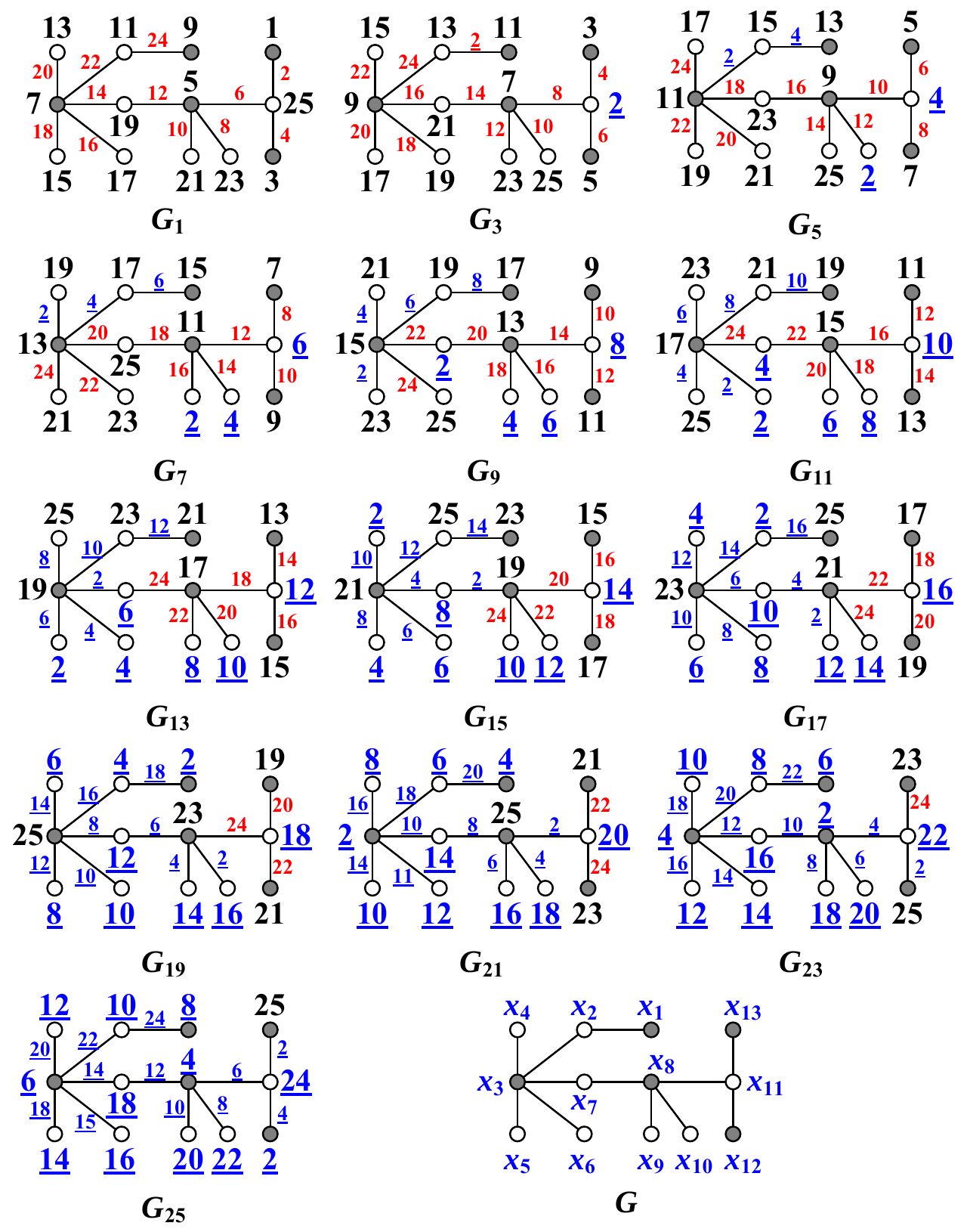}\\
\caption{\label{fig:6C-group-odd} {\small A Topsnut-gpw set $F^{odd}_f(G)$.}}
\end{figure}

\begin{figure}[h]
\centering
\includegraphics[width=8.2cm]{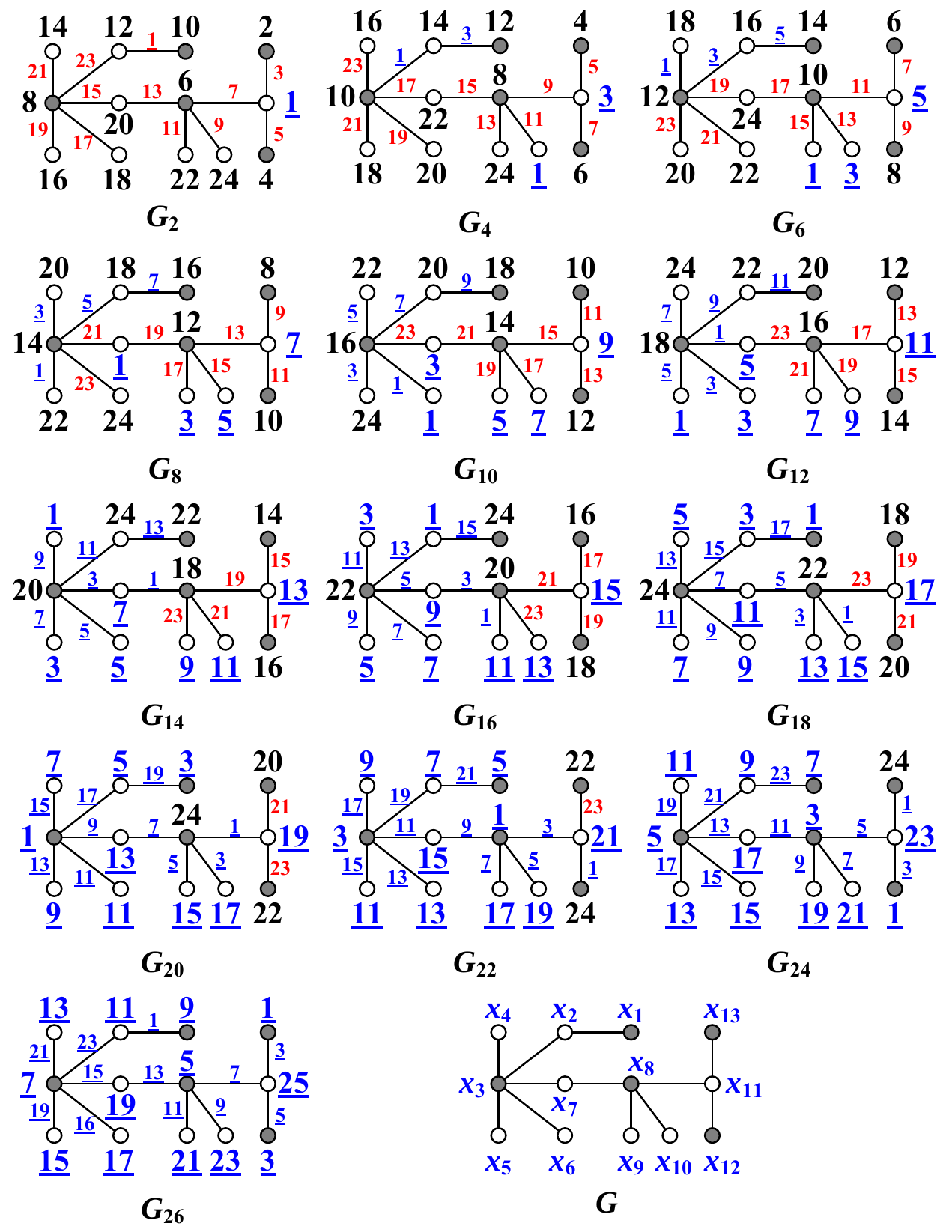}\\
\caption{\label{fig:6C-group-even} {\small Another Topsnut-gpw set $F^{odd}_f(G)$ differing from one shown in Fig.\ref{fig:6C-group-odd}.}}
\end{figure}

Obviously, an every-zero graphic group $\{F_f(G),\oplus\}$, very often, is made by a simple graph $G$ (may be disconnected) admitting a (flawed) graph labelling $f$ (Ref. \cite{Gallian2018}, and see for examples shown in Fig.\ref{fig:tian-group-formulae}).

\subsubsection{Encrypting Hanzi-networks by every-zero graphic/string groups}

Let $N(t)$ be a Hanzi-network, and let $G_{roup}=\{F_f(G),\oplus\}$ be an every-zero graphic group or an edge-every-zero graphic group. We use a mapping $F$ to assign each edge $uv\in E(N(t))$ and its two ends $u,v$ with three Topsnut-gpws $G_i,G_j,G_{i+i-k}\in G_{roup}$, such that $F(u)=G_i$, $F(v)=G_j$, $F(uv)=G_{i+i-k}$ holding $G_i\oplus_{k_{uv}}G_j=G_{i+i-k}~(\bmod~q)$, also, $F(u)\oplus_{k_{uv}}F(v)=F(uv)$, where $k_{uv}$ is an index of an element $G_{k_{uv}}\in G_{roup}$. Thereby, we have
\begin{equation}\label{eqa:c3xxxxx}
[f_{i}(x)+f_{j}(x)-f_{k_{uv}}(x)]~(\bmod~q)=f_{\mu}(x)
\end{equation}where $\mu=i+j-k_{uv}~(\bmod~q)$ for each vertex $x\in V(G)$. For distinguishing, we restrict $F(u)\oplus_{k_{uv}}F(v)=F(uv)\neq F(uw)=F(u)\oplus_{k_{uw}}F(w)$ by selecting suitable indices $k_{uv}$ and $k_{uw}$.

\begin{thm}\label{thm:encrypting-Hanzi-network-11}
If $N(t)$ is a tree-like Hanzi-network, and an every-zero graphic group $\{F_f(G),\oplus\}$ has $q$ Topsnut-gpws such that $q$ is not less than the number of neighbors of any vertex of $N(t)$, then we can have a mapping $F$ to encrypt $N(t)$ with $\{F_f(G),\oplus\}$ such that $F(uv)\neq F(uw)$, that is,
\begin{equation}\label{eqa:graphic-labelling-distinguishing}
F(u)\oplus_{k_{uv}}F(v)\neq F(u)\oplus_{k_{uw}}F(w)
\end{equation} for any pair of adjacent edges $uv$ and $uw$ of $N(t)$.
\end{thm}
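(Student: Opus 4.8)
\noindent\emph{Proof plan.} The plan is to reduce the statement to the existence of a proper edge colouring of the tree $N(t)$, and then to realise that colouring inside the every-zero graphic group by choosing the per-edge zeros appropriately. First I would identify the every-zero graphic group $\{F_f(G),\oplus\}$ with the cyclic group $\mathbb{Z}_q$ via the correspondence $G_i\leftrightarrow i$; by the defining rule $G_i\oplus_k G_j=G_{i+j-k\ (\bmod\ q)}$, choosing the mapping $F$ amounts to choosing an index assignment $a\colon V(N(t))\to[1,q]$ with $F(x)=G_{a(x)}$ and, for every edge $uv$, a zero index $k_{uv}\in[1,q]$, and this forces $F(uv)=G_{\ell(uv)}$ with $\ell(uv)\equiv a(u)+a(v)-k_{uv}\ (\bmod\ q)$. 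Under this dictionary, the requirement $F(uv)\neq F(uw)$ for every pair of adjacent edges $uv,uw$ is exactly the requirement that $\ell$ restricted to the edges incident with each vertex be injective, i.e.\ that $\ell$ be a \emph{proper edge colouring} of $N(t)$ with colours from $[1,q]$.

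Next I would produce such a colouring. Since $N(t)$ is tree-like it is a tree, hence bipartite, so by K\"{o}nig's edge-colouring theorem it admits a proper edge colouring $c\colon E(N(t))\to[1,\Delta]$, where $\Delta=\max_{x}\textrm{deg}(x)$ is the maximum degree; the hypothesis that $q$ is not less than the number of neighbours of any vertex gives $\Delta\le q$, so $c$ takes values in $[1,q]$. (If a self-contained argument is preferred: root $N(t)$, scan the edges in breadth-first order, and colour the edge joining each non-root vertex to its parent with any colour of $[1,q]$ not yet used on the at most $q-1$ already coloured edges sharing an endpoint with it; such a colour always exists because $q\ge\Delta$.)

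Finally I would transport $c$ into the group. Fix any index assignment $a$ (for instance the constant map, or whatever the application prescribes for the node private keys), and for each edge $uv$ let $k_{uv}$ be the unique element of $[1,q]$ congruent to $a(u)+a(v)-c(uv)$ modulo $q$. Then $F(u)\oplus_{k_{uv}}F(v)=G_{a(u)+a(v)-k_{uv}\ (\bmod\ q)}=G_{c(uv)}$, so $F(uv)$ depends only on $c(uv)$; for adjacent edges $uv,uw$ we have $c(uv)\neq c(uw)$ in $[1,\Delta]\subseteq[1,q]$, whence $F(uv)=G_{c(uv)}\neq G_{c(uw)}=F(uw)$, which is the inequality (\ref{eqa:graphic-labelling-distinguishing}). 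This exhibits the desired $F$ and completes the argument.

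The proof is short, so there is no genuine combinatorial obstacle; the two points that deserve care are (i) the bookkeeping between the additive group $\mathbb{Z}_q$ and the index set $[1,q]$ with its cyclic convention $q\equiv 0$, so that the colours $1,\dots,\Delta$ are realised without wrap-around collisions, and (ii) the fact that the bound $q\ge\Delta$ (rather than $q\ge\Delta+1$) already suffices, which uses that trees, being bipartite, are $\Delta$-edge-colourable by K\"{o}nig's theorem and not merely $(\Delta+1)$-edge-colourable by Vizing's theorem. It is also worth remarking that the construction leaves the vertex assignment $a$ entirely free, which is precisely what makes it usable for the intended encryption of $N(t)$.
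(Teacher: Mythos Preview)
Your argument is correct. The paper states this theorem without supplying a proof, so there is no line-by-line comparison to make. Your reduction---observe that the per-edge zero $k_{uv}$ is a free parameter, so $F(uv)$ can be steered to any prescribed group element, and hence the condition $F(uv)\neq F(uw)$ is exactly a proper edge colouring with $q$ colours---is clean and makes transparent why the hypothesis $q\ge\Delta$ (rather than $q\ge\Delta+1$) is already enough, via K\"{o}nig's theorem for bipartite graphs.

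For context, the paper's proofs of the neighbouring results on gg-colourings of trees proceed by leaf-deletion induction: remove a leaf, colour the smaller tree by the inductive hypothesis, then reinstate the leaf and pick an unused element for the new edge. That route would also establish the present theorem, but your approach is more conceptual: it cleanly separates the combinatorial content (a tree is $\Delta$-edge-colourable) from the group bookkeeping (the map $k_{uv}\mapsto a(u)+a(v)-k_{uv}\pmod q$ is a bijection on $\mathbb{Z}_q$), and it leaves the vertex assignment $a$ explicitly unconstrained, which is exactly what the encryption application wants. The only cosmetic point to tidy is the $[1,q]$ versus $\mathbb{Z}_q$ indexing you already flagged; once a single convention is fixed the argument goes through verbatim.
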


\begin{thm}\label{thm:encrypting-Hanzi-network-22}
If a Hanzi-network $N(t)$ has its cardinality not greater than that of an every-zero graphic group $\{F_f(G),\oplus\}$, then we have a labelling $F$ to encrypt $N(t)$ with $\{F_f(G),\oplus\}$ such that
\begin{equation}\label{eqa:graphic-labelling-distinguishing-1}
F(uv)=F(u)\oplus_{k}F(v)\neq F(u)\oplus_{k}F(w)=F(uw)
\end{equation} for any pair of edges $uv$ and $uw$ of $N(t)$, where $k$ is a fixed index of some Topsnut-gpw $G_k\in F_f(G)$.
\end{thm}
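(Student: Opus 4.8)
The plan is to reduce the statement to the cancellation law in the cyclic group underlying $\{F_f(G),\oplus\}$. Recall from the construction preceding the theorem that $F_f(G)=\{G_1,G_2,\dots ,G_q\}$ with each $G_i\cong G$ and $f_i(x)=f(x)+i-1~(\bmod~q)$, and that once a zero $G_k$ is fixed the addition is $G_i\oplus_k G_j=G_{\lambda}$ with $\lambda=i+j-k~(\bmod~q)$. Thus the index map $G_i\mapsto i$ is an isomorphism from $\{F_f(G),\oplus_k\}$ onto $\mathbb{Z}_q$ (carrying $G_k$ to $0$), so $\{F_f(G),\oplus_k\}$ is a genuine Abelian group and in particular obeys the cancellation rule: $G_i\oplus_k G_a=G_i\oplus_k G_b$ forces $G_a=G_b$. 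This is the only structural fact the argument needs.

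First I would make the hypothesis precise: ``the cardinality of $N(t)$ is not greater than that of $\{F_f(G),\oplus\}$'' means $|V(N(t))|\le q$. Hence there exists an injection $\psi:V(N(t))\to[1,q]$, and I define the vertex labelling (the assignment of private keys) by $F(x)=G_{\psi(x)}$, so distinct vertices receive distinct Topsnut-gpws, hence distinct TB-paws. Fix once and for all an index $k\in[1,q]$, which designates $G_k$ as the zero. For every edge $uv\in E(N(t))$ put the authentication $F(uv)=F(u)\oplus_k F(v)=G_{\psi(u)+\psi(v)-k~(\bmod~q)}$; by commutativity of $\oplus_k$ this is independent of which end is read first, so $F$ is well defined on $E(N(t))$, and by construction $F(u)\oplus_k F(v)=F(uv)$ holds for every edge, which is exactly the relation enabling the end nodes $u,v$ to authenticate across $uv$.

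It then remains to verify the inequality (\ref{eqa:graphic-labelling-distinguishing-1}). Let $uv$ and $uw$ be two edges sharing the vertex $u$, so $v\ne w$. Since $\psi$ is injective with range in $[1,q]$, we have $\psi(v)\not\equiv\psi(w)~(\bmod~q)$, i.e. $F(v)\ne F(w)$. If $F(uv)=F(uw)$, that is $F(u)\oplus_k F(v)=F(u)\oplus_k F(w)$, then cancellation in $\{F_f(G),\oplus_k\}$ gives $F(v)=F(w)$, a contradiction; equivalently, $\psi(u)+\psi(v)-k\not\equiv\psi(u)+\psi(w)-k~(\bmod~q)$ directly. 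Hence $F(uv)\ne F(uw)$ for every pair of adjacent edges, and $F$ is the desired encryption labelling.

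As for difficulty, there is no deep obstacle once $\{F_f(G),\oplus_k\}$ is identified with $\mathbb{Z}_q$: the whole proof is the cancellation law applied to an injective choice of vertex keys. The one point that deserves care is the interpretation of ``cardinality'': the argument genuinely uses $|V(N(t))|\le q$, so that distinct vertices can be assigned indices with distinct residues mod $q$; a hypothesis phrased only in terms of $|E(N(t))|$ would call for a different, more delicate counting argument. I would also note, for completeness, that this $F$ distinguishes only adjacent edges — a labelling giving pairwise distinct labels on all edges is not claimed and would in general require a much larger group (of order at least ${|V(N(t))| \choose 2}$) or a cleverer assignment.
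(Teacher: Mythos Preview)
Your argument is correct, and in fact the paper states this theorem without supplying a proof at all, so there is nothing to compare against directly. Your reduction to the cancellation law in $\mathbb{Z}_q$ via the index isomorphism $G_i\mapsto i$ is exactly the mechanism the paper's construction of $\{F_f(G),\oplus_k\}$ makes available, and the injective vertex assignment is the natural reading of the cardinality hypothesis; your closing remarks on the interpretation of ``cardinality'' and on the distinction between adjacent-edge and all-edge distinguishability are apt caveats that the paper leaves implicit.
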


\begin{figure}[h]
\centering
\includegraphics[width=8.2cm]{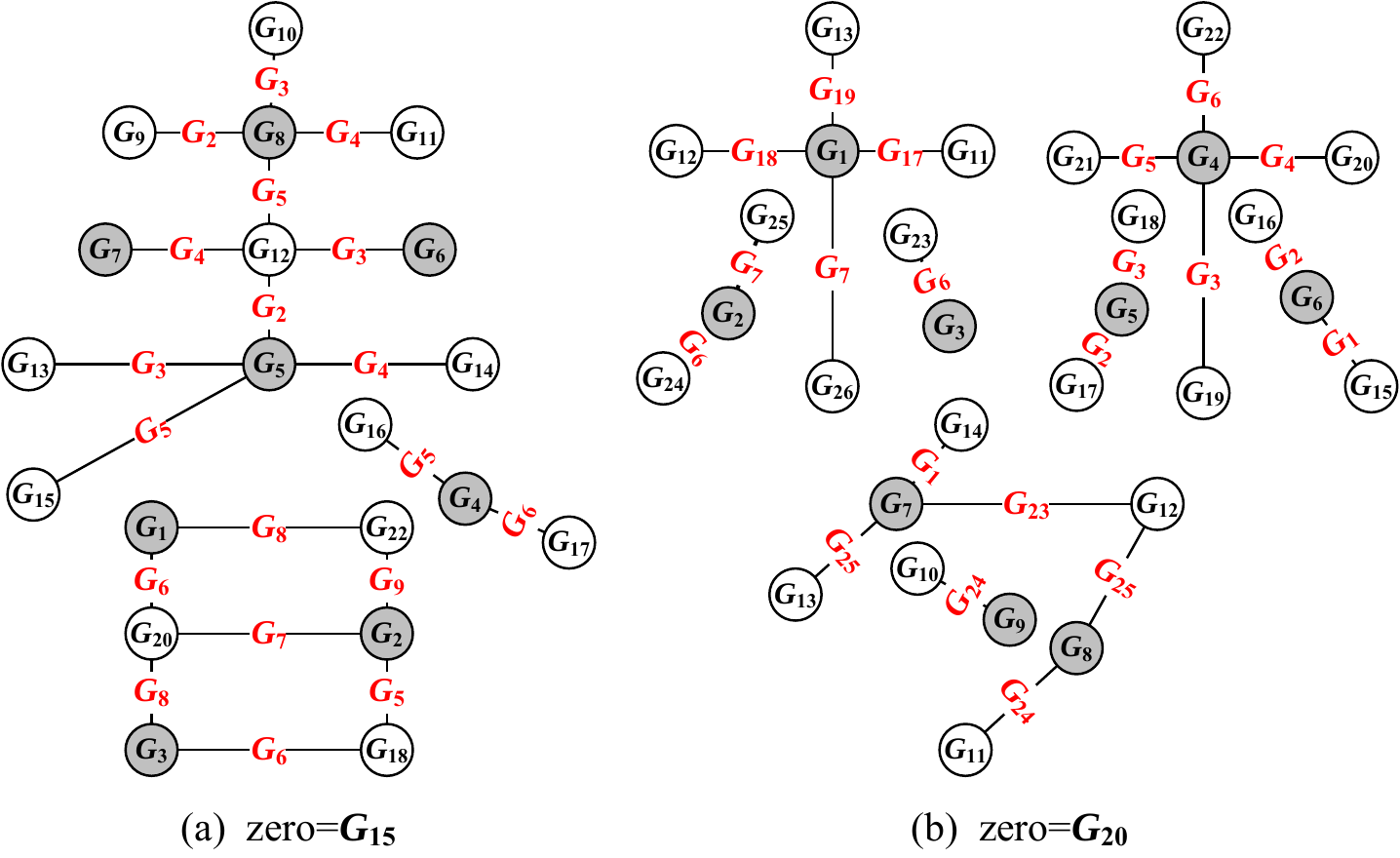}\\
\caption{\label{fig:6C-group-chunmeng} {\small (a) A Hanzi-graph $T_{2026}$ is labelled with an every-zero graphic group $G^{15}_{roup}=\{F^{odd}_f(G)\cup F^{even}_f(G),\oplus_k\}$ under the zero $G_{15}$; (b) a Hanzi-graph $T_{3546}$ is labelled with an every-zero graphic group $G^{20}_{roup}=\{F^{odd}_f(G)\cup F^{even}_f(G),\oplus_k\}$ under the zero $G_{20}$.}}
\end{figure}

By observing Fig.\ref{fig:6C-group-chunmeng} carefully, the indices of Topsnut-gpw labellings of vertices of the Hanzi-graph $T_{2026}$ forms a flawed graceful labelling. In Fig.\ref{fig:6C-group-chunmeng-edge}, the edges of two Hanzi-graphs $T_{2026}$ and $T_{3546}$ are labelled with two edge-every-zero graphic groups under two edge-zeros $G_{15}$ and $G_{20}$, respectively. However, (a)=(c), (b)$\neq $(d) after comparing Fig.\ref{fig:6C-group-chunmeng} and Fig.\ref{fig:6C-group-chunmeng-edge}. Clearly, the above examples inspire us to do more works on the encryption of Hanzi-networks by graphic groups.

\begin{figure}[h]
\centering
\includegraphics[width=8.2cm]{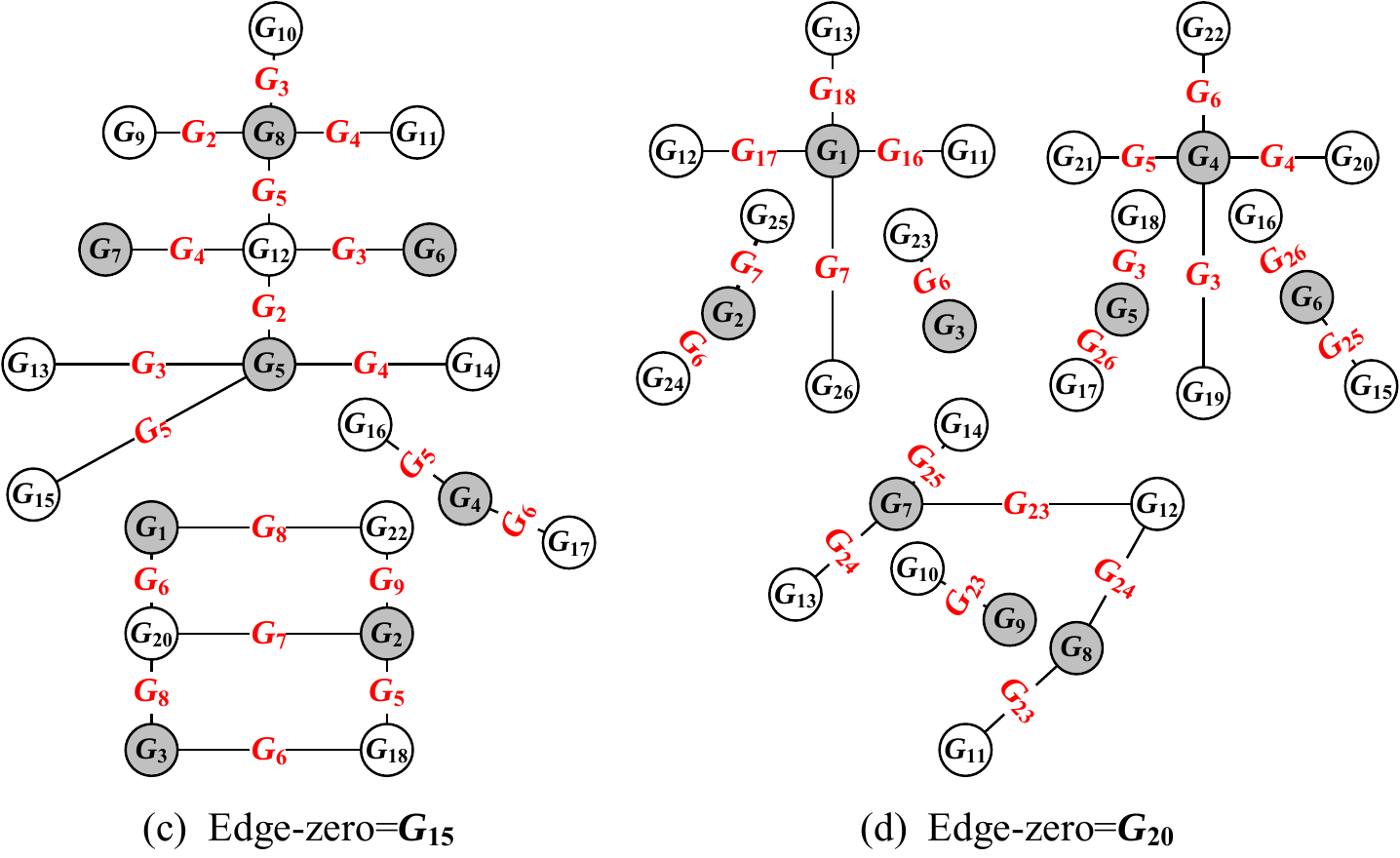}\\
\caption{\label{fig:6C-group-chunmeng-edge} {\small (c) A Hanzi-graph $T_{2026}$ is labelled with an edge-every-zero graphic group $G^{15}_{roup}=\{F^{odd}_f(G)\cup F^{even}_f(G),\oplus_k\}$ under the edge-zero $G_{15}$; (d) a Hanzi-graph $T_{3546}$ is labelled with an edge-every-zero graphic group $G^{20}_{roup}=\{F^{odd}_f(G)\cup F^{even}_f(G),\oplus_k\}$ under the edge-zero $G_{20}$.}}
\end{figure}

\subsubsection{Encrypting networks constructed by communities (blocks, subgraphs)}

Suppose that $N(t)$ is a $k$-partite network, that is, $V(N(t))=\bigcup ^k_{i=1}V_i$, each $V_i$ induces a subgraph $T_i$ such that $V(T_i)=V_i$, let $E_{i,j}$ be the set of edges in which each edge has one end in $T_i$ and another end in $T_j$. Thereby, we have the edge set $E(N(t))=\big(\bigcup^k_{i=1}E(T_i)\big)\bigcup \big(\bigcup_{i\neq j}E_{i,j}\big)$. Notice that each $E_{i,j}$ can induce a subgraph $H_{i,j}$ with $E(H_{i,j})=E_{i,j}$, so $N(t)=\big(\bigcup^k_{i=1}T_i\big)\bigcup \big(\bigcup_{i\neq j}H_{i,j}\big)$, and each $H_{i,j}$ joins $G_i$ and $G_j$ together. In general, each subgraph $T_i$ is a \emph{community} in $N(t)$.

We take an every-zero graphic group $\{F_f(G),\oplus\}$ with $F_f(G)=\{G_i\}^n_1$ and $G$ is a $(p,q)$-graph, such that, $n\geq k$ and $n\geq \max\{|V(T_i)|:i\in[1,k]\}$ to encrypt $N(t)$ in the way: each subgraph $T_i$ is encrypted with an every-zero graphic group $\{F_f(G),\oplus_{k_{i}}\}$ under the zero $G_{k_{i}}$, and each subgraph $H_{i,j}$ is encrypted with an every-zero graphic group $\{F_f(G),\oplus_{k_{i,j}}\}$ under the zero $G_{k_{i,j}}$, where $G_{k_{i,j}}\neq G_{k_{i}}$ and $G_{k_{i,j}}\neq G_{k_{j}}$ since $n\geq k$. See an illustration shown in Fig.\ref{fig:community}.

\begin{figure}[h]
\centering
\includegraphics[width=8cm]{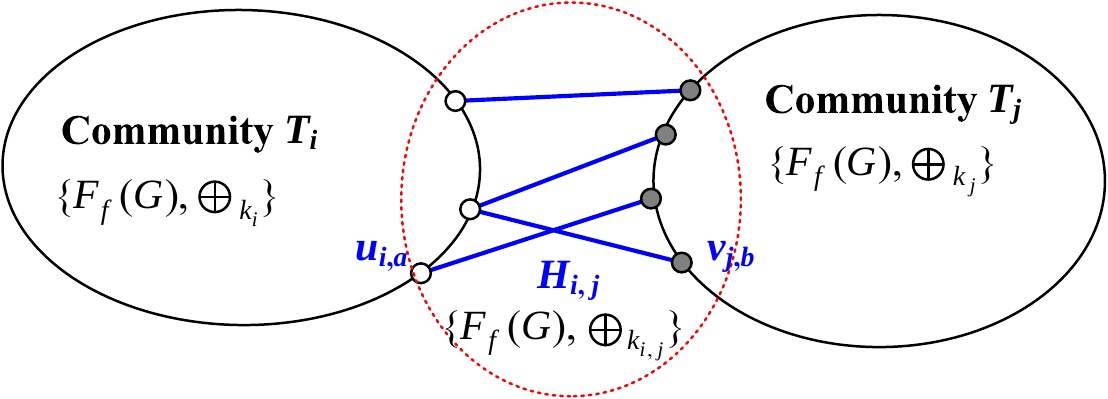}\\
\caption{\label{fig:community} {\small A scheme for encrypting communities.}}
\end{figure}

We have ${n \choose k}$ groups of $G_{i_1},G_{i_2},\dots, G_{i_k}$ as zeros for encrypting communities $T_1,T_2,\dots ,T_k$, where $i_1,i_2,\dots, i_k$ is a permutation of $k$ number of $1,2,\dots ,n $. Thereby, $N(t)$ can be encrypted by at least ${n \choose k}k!$ methods.

Suppose that the graph $G$ in an every-zero graphic group $\{F_f(G),\oplus\}$ admits $m$ labellings $f^{(l)}$ with $l\in [1,m]$. If these labellings $f^{(1)},f^{(2)}, \dots, f^{(m)}$ are equivalent to each other, that is, there exists a one-one mapping $\theta_{a,b}$ for any pair of $f^{(a)}$ and $f^{(b)}$ such that
\begin{equation}\label{eqa:equivalent-mapping}
f^{(a)}(x)=\theta_{a,b}(f^{(b)}(x)),~ f^{(b)}(x)=\theta^{-1}_{a,b}(f^{(a)}(x))
\end{equation} for $x\in V(G)$. Then we get $m$ every-zero graphic groups $\{F_{f^{(l)}}(G),\oplus\}$ with $l\in [1,m]$. Each community $T_i$ of $N(t)$ can be encrypted by $\{F_{f^{(l)}}(G),\oplus_{l,m_i}\}$ with $l\in [1,m]$ under the zero $G_{l,m_i}$ admitting the labelling $f^{(l)}$. By the equivalent transformation (\ref{eqa:equivalent-mapping}), we are easy to build up an equivalent transformation between two every-zero graphic groups $\{F_{f^{(a)}}(G),\oplus_{a,m_i}\}$ and $\{F_{f^{(b)}}(G),\oplus_{b,m_i}\}$, namely, we can change quickly Topsnut-gpws for the network $N(t)$.

If $G$ admits $c(l)$ labellings like $f^{(l)}$, then by the above deduction, $N(t)$ can be encrypted by approaches with number
\begin{equation}\label{eqa:encrypt-methods-number}
N_{encrypt}(N(t))\geq m{n \choose k}k!\sum ^k_{l=1}c(l)
\end{equation} for supporting us to encrypt the network $N(t)$. On the other hand, there are many $(p,q)$-graphs $G$ admitting labellings $f^{(l)}$ with $l\in [1,m]$, which means we have many every-zero graphic groups $\{F_{f^{(l)}}(G),\oplus\}$ for encrypting the network $N(t)$. There are more caterpillars with $p$ vertices, and they admit at least over 25 equivalent labellings (see Table-2 shown in Appendix B).

\subsubsection{Definitions of graphic/string group colorings} Using every-zero graphic groups and edge-every-zero graphic groups to encrypting networks are similar with graph colorings.

\begin{defn} \label{defn:graphic-group-coloring}
$^*$ Let $H$ be a $(p,q)$-graph, and $\{F_f(G),\oplus\}$ be an every-zero graphic group containing Topsnut-gpws $G_i$ with $i\in [1,M_G]$. Suppose that $H$ admits a mapping $\theta: S\rightarrow F_f(G)$ with $S\subseteq V(H)\cup E(H)$. Write $\theta(S)=\{\theta(w):w\in S\}$, $\theta(N_{ei}(u))=\{\theta(v):v\in N_{ei}(u)\}$ and $\theta(N'_{ei}(u))=\{\theta(uv):v\in N_{ei}(u)\}$. There are the following so-called \emph{graphic group constraints}:
\begin{asparaenum}[Gg-1. ]
\item \label{prob:only-vertex} $S=V(H)$.
\item \label{prob:only-edge} $S=E(H)$.
\item \label{prob:total} $S=V(H)\cup E(H)$.
\item \label{prob:vertex-not} $\theta(u)\neq \theta(v)$ for $v\in N_{ei}(u)$.
\item \label{prob:edge-not} $\theta(uv)\neq \theta(uw)$ for $v,w\in N_{ei}(u)$.
\item \label{prob:vertex-distinghuishing} $\theta(N_{ei}(x))\neq \theta(N_{ei}(y))$ for any pair of vertices $x,y\in V(H)$.
\item \label{prob:adjacent-vertex-distinghuishing} $\theta(N_{ei}(u))\neq \theta(N_{ei}(v))$ for each edge $uv\in E(H)$.
\item \label{prob:vertex-edge-distinghuishing} $\theta(N'_{ei}(x))\neq \theta(N'_{ei}(y))$ for any pair of vertices $x,y\in V(H)$.
\item \label{prob:adjacent-vertex-edge-distinghuishing} $\theta(N'_{ei}(u))\neq \theta(N'_{ei}(v))$ for each edge $uv\in E(H)$.
\item \label{prob:adjacent-vertex-equitable} $\big ||\theta(N_{ei}(u))|-|\theta(N_{ei}(v))|\big |\leq 1$ for each edge $uv\in E(H)$.
\item \label{prob:adjacent-edge-equitable} $\big ||\theta(N'_{ei}(u))|-|\theta(N'_{ei}(v))|\big |\leq 1$ for each edge $uv\in E(H)$.
\item \label{prob:adjacent-total-distinghuishing} $\{\theta(u)\}\cup \theta(N_{ei}(u))\cup \theta(N'_{ei}(u))\neq \{\theta(v)\}\cup \theta(N_{ei}(v))\cup \theta(N'_{ei}(v))$ for each edge $uv\in E(H)$.
\item \label{prob:induced-edge} There exists an index $k_{uv}$ such that $\theta(uv)=\theta(u)\oplus _{k_{uv}}\theta(v)\in F_f(G)$ for each edge $uv\in E(H)$.
\item \label{prob:induced-vertex} Each vertex $w$ is assigned with a set $\{\theta(ww_i)\oplus _{k_{ij}}\theta(ww_j): w_i,w_j\in N_{ei}(w)\}$.
\end{asparaenum}

Then we call $\theta$ as: (1) a \emph{proper gg-coloring} if Gg-\ref{prob:only-vertex} and Gg-\ref{prob:vertex-not} hold true; (2) a \emph{proper edge-gg-coloring} if Gg-\ref{prob:only-edge} and Gg-\ref{prob:edge-not} hold true; (3) a \emph{proper total gg-coloring} if Gg-\ref{prob:total}, Gg-\ref{prob:vertex-not} and Gg-\ref{prob:edge-not} hold true; (4) a \emph{vertex distinguishing proper gg-coloring} if Gg-\ref{prob:only-vertex}, Gg-\ref{prob:vertex-not} and Gg-\ref{prob:vertex-distinghuishing} hold true; (5) an \emph{adjacent vertex distinguishing proper gg-coloring} if Gg-\ref{prob:only-vertex}, Gg-\ref{prob:vertex-not} and Gg-\ref{prob:adjacent-vertex-distinghuishing} hold true; (6) an \emph{edge distinguishing proper gg-coloring} if Gg-\ref{prob:only-edge}, Gg-\ref{prob:edge-not} and Gg-\ref{prob:vertex-edge-distinghuishing} hold true; (7) an \emph{adjacent edge distinguishing proper gg-coloring} if Gg-\ref{prob:only-edge}, Gg-\ref{prob:edge-not} and Gg-\ref{prob:adjacent-vertex-edge-distinghuishing} hold true; (8) an \emph{equitable adjacent-v proper gg-coloring} if Gg-\ref{prob:only-vertex}, Gg-\ref{prob:vertex-not} and Gg-\ref{prob:adjacent-vertex-equitable} hold true; (9) an \emph{equitable adjacent-e proper gg-coloring} if Gg-\ref{prob:only-edge}, Gg-\ref{prob:edge-not} and Gg-\ref{prob:adjacent-edge-equitable} hold true; (10) an \emph{adjacent total distinguishing proper gg-coloring} if Gg-\ref{prob:total}, Gg-\ref{prob:vertex-not}, Gg-\ref{prob:edge-not} and Gg-\ref{prob:adjacent-total-distinghuishing} hold true; (11) a \emph{v-induced total proper gg-coloring} if Gg-\ref{prob:only-vertex}, Gg-\ref{prob:vertex-not} and Gg-\ref{prob:induced-edge} hold true; (12) an \emph{induced e-proper v-set gg-coloring} if Gg-\ref{prob:only-edge}, Gg-\ref{prob:edge-not} and Gg-\ref{prob:induced-vertex} hold true.\qqed
\end{defn}

Based on Definition \ref{defn:graphic-group-coloring}, we have new parameters:
\begin{asparaenum}[New-1. ]
\item $\chi _{gg}(H)$ is the minimum number of $M_G$ Topsnut-gpws $G_i$ in some every-zero graphic group $\{F_f(G),\oplus\}$ for which $H$ admits a \emph{proper gg-coloring}. Bruce Reed in 1998 conjectured that $\chi(H)\leq \lceil \frac{1}{2}[\Delta(H)+1+K(H)]\rceil$ (Ref. \cite{Bondy-2008}), where $\chi _{gg}(H)=\chi(H)$.
\item $\chi' _{gg}(H)$ is the minimum number of $M_G$ Topsnut-gpws $G_i$ in some every-zero graphic group $\{F_f(G),\oplus\}$ for which $H$ admits a \emph{proper edge-gg-coloring}. We have $\Delta(H)\leq \chi'(H)\leq \Delta(H)+1$ (Vadim G. Vizing, 1964; Guppta, 1966 \cite{Bondy-2008}), where $\chi' _{gg}(H)=\chi'(H)$.

\item $\chi'' _{gg}(H)$ is the minimum number of $M_G$ Topsnut-gpws $G_i$ in some every-zero graphic group $\{F_f(G),\oplus\}$ for which $H$ admits a \emph{proper total gg-coloring}. It was conjectured $\Delta(H)+1\leq \chi''(H)\leq \Delta(H)+2$ (Behzad, 1965; Vadim G. Vizing, 1964 \cite{Bondy-2008}), where $\chi'' _{gg}(H)=\chi''(H)$.

\item $\chi _{ggs}(H)$ is the minimum number of $M_G$ Topsnut-gpws $G_i$ in some every-zero graphic group $\{F_f(G),\oplus\}$ for which $H$ admits a \emph{vertex distinguishing proper gg-coloring}.

\item $\chi _{ggas}(H)$ is the minimum number of $M_G$ Topsnut-gpws $G_i$ in some every-zero graphic group $\{F_f(G),\oplus\}$ for which $H$ admits an \emph{adjacent vertex distinguishing proper gg-coloring}.

\item $\chi' _{ggs}(H)$ is the minimum number of $M_G$ Topsnut-gpws $G_i$ in some every-zero graphic group $\{F_f(G),\oplus\}$ for which $H$ admits an \emph{edge distinguishing proper gg-coloring}.

\item $\chi' _{ggas}(H)$ is the minimum number of $M_G$ Topsnut-gpws $G_i$ in some every-zero graphic group $\{F_f(G),\oplus\}$ for which $H$ admits an \emph{adjacent edge distinguishing proper gg-coloring}. We have a conjecture $\chi' _{as}(H)\leq \Delta(H)+2$ by Zhang Zhongfu, Liu Linzhong, Wang Jianfang, 2002 \cite{Zhang-Liu-Wang-2002-strong}, where $\chi' _{ggas}(H)=\chi' _{as}(H)$.
\item $\chi _{ggeq}(H)$ is the minimum number of $M_G$ Topsnut-gpws $G_i$ in some every-zero graphic group $\{F_f(G),\oplus\}$ for which $H$ admits an \emph{equitable adjacent-v proper gg-coloring}.
\item $\chi' _{ggeq}(H)$ is the minimum number of $M_G$ Topsnut-gpws $G_i$ in some every-zero graphic group $\{F_f(G),\oplus\}$ for which $H$ admits an \emph{equitable adjacent-e proper gg-coloring}.
\item $\chi'' _{ggas}(H)$ is the minimum number of $M_G$ Topsnut-gpws $G_i$ in some every-zero graphic group $\{F_f(G),\oplus\}$ for which $H$ admits an \emph{adjacent total distinguishing proper gg-coloring}.
\end{asparaenum}

\vskip 0.4cm

\subsubsection{Transformation between six every-zero graphic groups}

Let $T_{4476}=G,O,M,L,E,C$ in the following discussion. Based on Hanzi-graph $T_{4476}$ and some graph labellings, we have the following six every-zero graphic groups $\{F_{f_X}(X),\oplus\}$ with $X=G,O,M,L,E,C$:

\begin{asparaenum}[Group-1. ]
\item $\{F_{f_G}(G),\oplus\}$ with \emph{flawed set-ordered graceful labellings} $f^{(i)}_G$ with $i\in [1,11]$, where $G=G_1$ shown in Fig.\ref{fig:tian-group-formulae} (a).
\item $\{F_{f_O}(O),\oplus\}$ with\emph{ flawed set-ordered odd-graceful labellings} $f^{(j)}_O$ with $j\in [1,21]$, where $O=O_1$ shown in Fig.\ref{fig:tian-group-formulae} (b).
\item $\{F_{f_M}(M),\oplus\}$ with \emph{flawed set-ordered edge-magic total labellings} $f^{(k)}_M$ with $k\in [1,21]$, where $M=M_1$ shown in Fig.\ref{fig:tian-group-formulae} (c).
\item $\{F_{f_L}(L),\oplus\}$ with \emph{flawed set-ordered odd-even separable edge-magic total labellings} $f^{(i)}_L$ with $i\in [1,21]$, where $L=L_1$ shown in Fig.\ref{fig:tian-group-formulae} (d).
\item $\{F_{f_E}(E),\oplus\}$ with \emph{flawed set-ordered odd-elegant labellings} $f^{(j)}_E$ with $j\in [1,11]$, where $E=E_1$ shown in Fig.\ref{fig:tian-group-formulae} (e).
\item $\{F_{f_C}(C),\oplus\}$ with \emph{flawed set-ordered odd-elegant labellings} $f^{(k)}_C$ with $k\in [1,20]$, where $C=C_1$ shown in Fig.\ref{fig:tian-group-formulae} (f).
\end{asparaenum}

\begin{figure}[h]
\centering
\includegraphics[width=7.2cm]{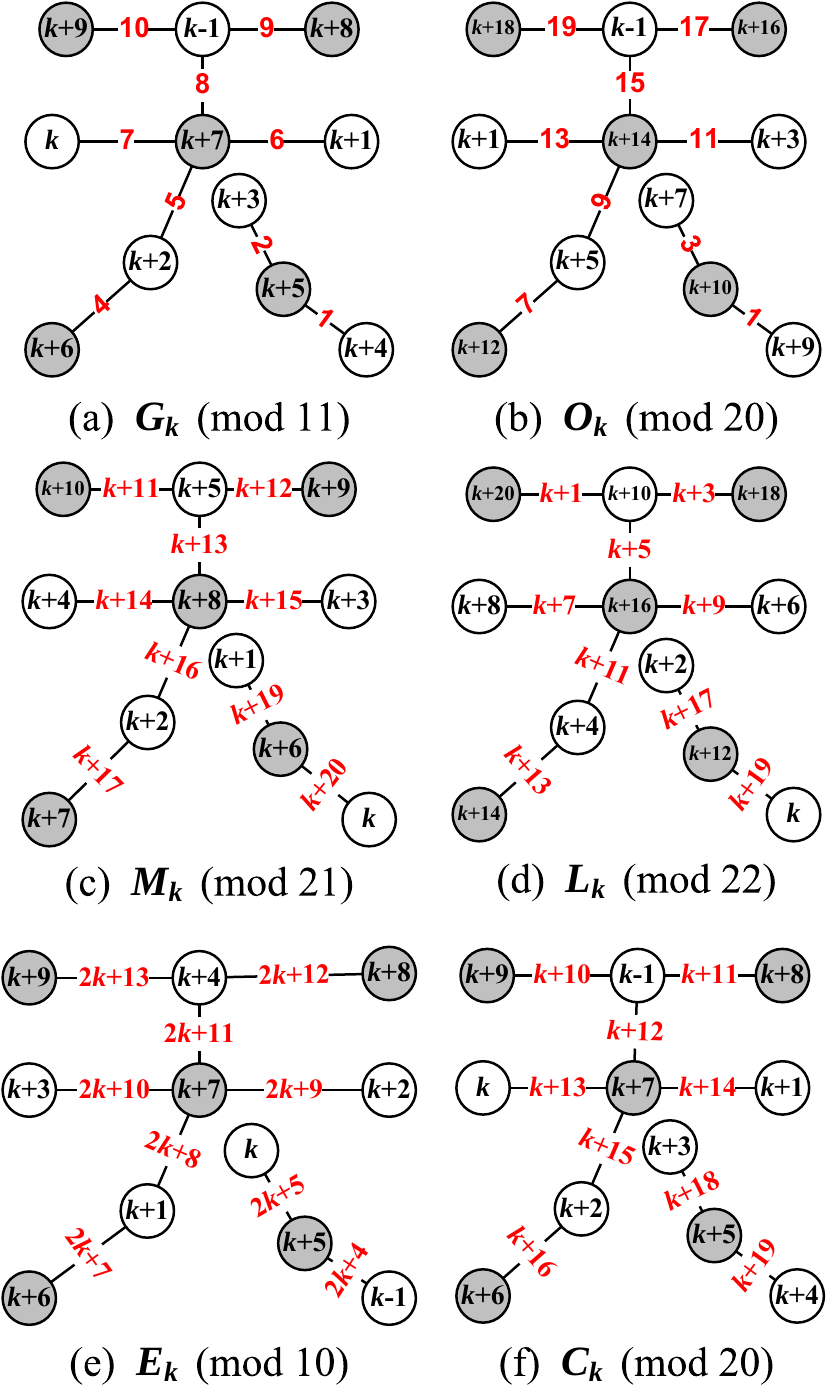}\\
\caption{\label{fig:tian-group-formulae} {\small Six every-zero graphic groups.}}
\end{figure}

\begin{figure}[h]
\centering
\includegraphics[width=8cm]{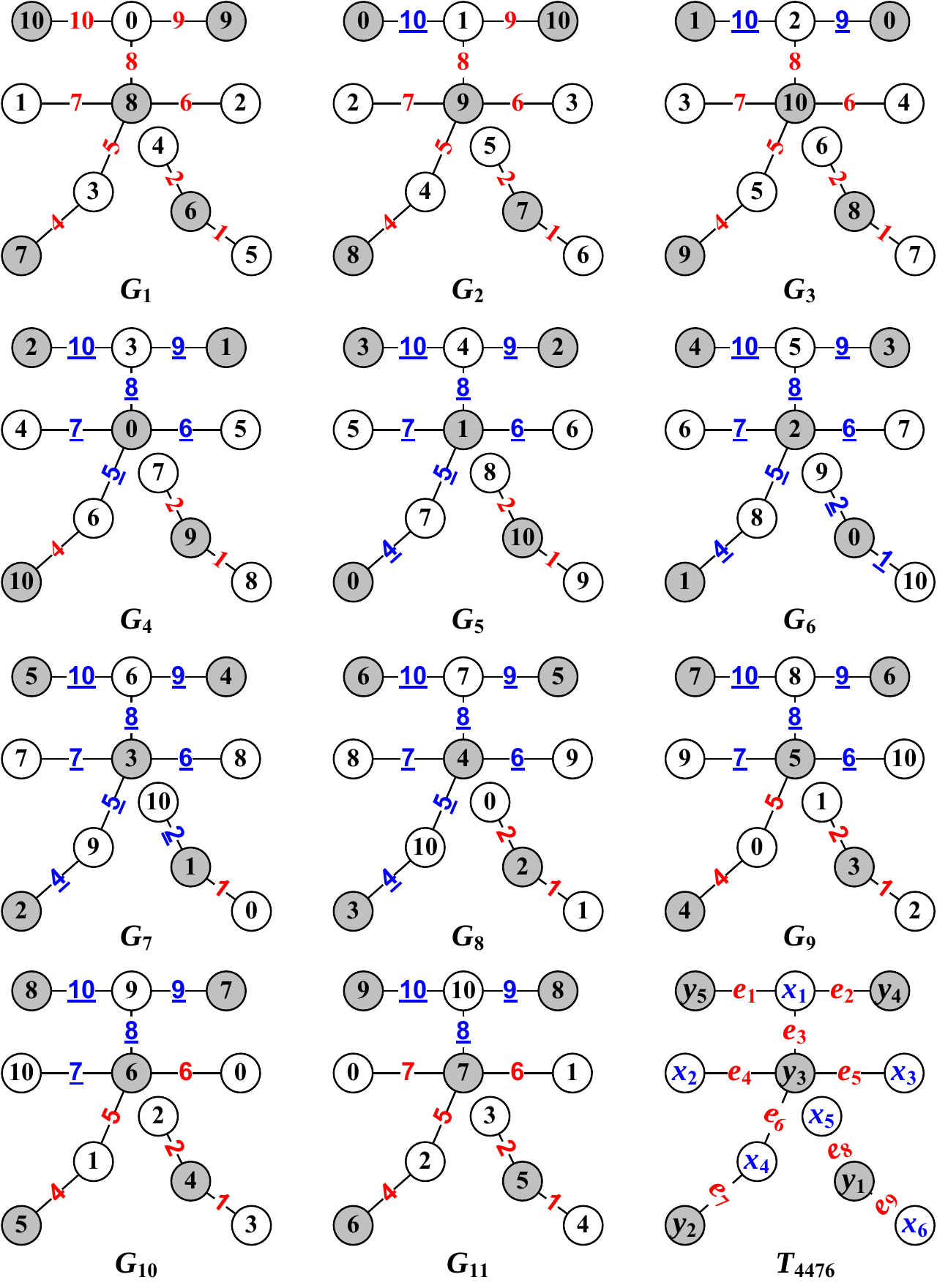}\\
\caption{\label{fig:tian-group} {\small An every-zero graphic group $\{F_{f_G}(G),\oplus\}$ made by Hanzi-graph $T_{4476}$ admitting a set-ordered graceful labelling.}}
\end{figure}

Let $V(T_{4476})=X\cup Y$ be the set of vertices of a Hanzi-graph $T_{4476}$ shown in Fig.\ref{fig:tian-group}, where \begin{equation}\label{eqa:bipartite-sets}
{
\begin{split}
&X=\{x_1,x_2,x_3,x_4,x_5,x_6\},~Y=\{y_1,y_2,y_3,y_4,y_5\};\\
&E(T_{4476})=\{e_i:~i\in [1,9]\}.
\end{split}}
\end{equation} where $e_1=x_1y_5$, $e_2=x_1y_4$, $e_3=x_1y_3$, $e_4=y_3x_2$, $e_5=y_3x_3$, $e_6=y_3x_4$, $e_7=y_2x_4$, $e_8=y_1x_5$, $e_9=y_1x_6$.

The property of ``set-ordered'' tells us: $f^{(j)}_G(x_i)<f^{(j)}_G(x_{i+1})$, $f^{(j)}_G(x_6)<f^{(j)}_G(y_1)$, $f^{(j)}_G(y_j)<f^{(j)}_G(y_{j+1})$ for $T_{4476}=G=G_1$ shown in Fig.\ref{fig:tian-group}, and $j\in [1,11]$. Also, we have $f^{(k)}_X(x_i)<f^{(k)}_X(x_{i+1})$, $f^{(k)}_X(x_6)<f^{(k)}_X(y_1)$, $f^{(k)}_X(y_j)<f^{(k)}_X(y_{j+1})$ for $X=O_1,M_1,L_1,E_1,C_1$ shown in Fig.\ref{fig:tian-group-formulae}.

Thereby, $G=G_1$ admits its own flawed set-ordered graceful labelling as: $f^{(1)}_G(x_i)=i-1$ for $i\in [1,6]$, and $f^{(1)}_G(y_j)=6+j-1$ for $i\in [1,5]$, and the induced edge labels $f^{(1)}_O(e_{s})=11-s$ for $s\in [1,7]$, as well as $f^{(1)}_O(e_{s})=10-s$ for $s=8,9$. We can write $f^{(k)}_G(x_i)=k+i-2$ with $i\in [1,6]$ and $k\in [1,11]$, $f^{(k)}_G(y_j)=k+i-2$ with $j\in [1,5]$ and $k\in [1,11]$. Hence, $f^{(k)}_G(w)=f^{(1)}_G(w)+k-1$ for $w\in V(G)$ and $k\in [1,11]$.

We have the following equivalent relationships:
\begin{asparaenum}[(Rel-1) ]
\item $\{F_{f_G}(G),\oplus\}$ and $\{F_{f_O}(O),\oplus\}$ under $(\bmod~21)$: $f^{(1)}_O(x_i)=2f^{(1)}_G(x_i)$ for $x_i\in X$ and $f^{(1)}_O(y_i)=2f^{(1)}_G(y_i)-1$ for $y_i\in Y$, and the induced edge labels $f^{(1)}_O(e_{s})=2f^{(1)}_G(e_{s})-1$ for $s\in [1,9]$ (see Fig.\ref{fig:tian-group-head}). Thereby, we get $f^{(k)}_O(z)=f^{(1)}_O(z)+k-1$ for $z\in V(G)$, $f^{(k)}_O(e_j)=|f^{(k)}_O(y_l)-f^{(k)}_O(x_i)|$ for $e_j=x_iy_l\in E(G)$ and $k\in [1,21]$, and $G_j$ corresponds with $O_{2j-1}$ for $i\in [1,11]$.

\item $\{F_{f_G}(G),\oplus\}$ and $\{F_{f_M}(M),\oplus\}$ under $(\bmod~21)$: $f^{(1)}_M(x_i)=f^{(1)}_G(x_{6-i+1})+1$ for $i\in [1,6]$ and $f^{(1)}_M(y_i)=f^{(1)}_G(y_i)+1$ for $y_i\in Y$, and $f^{(1)}_M(e_j)=f^{(1)}_G(e_{6-j+1})+11$ for $i\in [1,9]$ (see Fig.\ref{fig:tian-group-head}). Hence, $f^{(k)}_M(z)=f^{(1)}_M(z)+k-1$ for $z\in V(G)$, $f^{(k)}_M(e_j)=f^{(1)}_M(e_j)+k-1$ for $e_j\in E(G)$ and $k\in [1,21]$.

\item $\{F_{f_G}(G),\oplus\}$ and $\{F_{f_L}(L),\oplus\}$ under $(\bmod~21)$: $f^{(1)}_L(x_i)=2f^{(1)}_G(x_{6-i+1})+1$ for $i\in [1,6]$ and $f^{(1)}_L(y_i)=2f^{(1)}_G(y_i)+1$ for $y_i\in Y$, and $f^{(1)}_L(e_j)=2f^{(1)}_G(e_{6-j+1})$ for $i\in [1,9]$ (see Fig.\ref{fig:tian-group-head}). In general, $f^{(k)}_L(z)=f^{(1)}_L(z)+k-1$ for $z\in V(G)$, $f^{(k)}_L(e_i)=f^{(1)}_L(e_i)+k-1$ for $e_i\in E(G)$ and $k\in [1,21]$.

\item $\{F_{f_G}(G),\oplus\}$ and $\{F_{f_E}(E),\oplus\}$ under $(\bmod~10)$: $f^{(1)}_E(x_i)=f^{(1)}_G(x_{6-i+1})$ for $i\in [1,6]$ and $f^{(1)}_E(y_i)=f^{(1)}_G(y_i)$ for $y_i\in Y$, and $f^{(1)}_E(e_j)=f^{(1)}_G(x_{6-i+1})+f^{(1)}_G(y_i)~(\bmod~10)$ for $i\in [1,9]$ (see Fig.\ref{fig:tian-group-head}). We obtain $f^{(k)}_E(z)=f^{(1)}_E(z)+k-1$ for $z\in V(G)$, $f^{(k)}_E(e_j)=f^{(k)}_E(x_i)+f^{(k)}_E(y_l)~(\bmod~10)$ for $e_j=x_iy_l\in E(G)$ and $k\in [1,11]$.

\item $\{F_{f_G}(G),\oplus\}$ and $\{F_{f_C}(C),\oplus\}$ under $(\bmod~20)$: $f^{(1)}_C(w)=f^{(1)}_G(w)+1$ for $w\in V(G)$, and $f^{(1)}_C(e_i)=f^{(1)}_G(e_{9-i+1})+9$ for $i\in [1,9]$ (see Fig.\ref{fig:tian-group-head}). The above transformation enables us to claim $f^{(k)}_E(z)=f^{(1)}_E(z)+k-1$ for $z\in V(G)$, $f^{(k)}_C(e_i)=f^{(1)}_C(e_i)+k$ for $e_i\in E(G)$ and $k\in [1,20]$.
\end{asparaenum}

\begin{figure}[h]
\centering
\includegraphics[width=8.2cm]{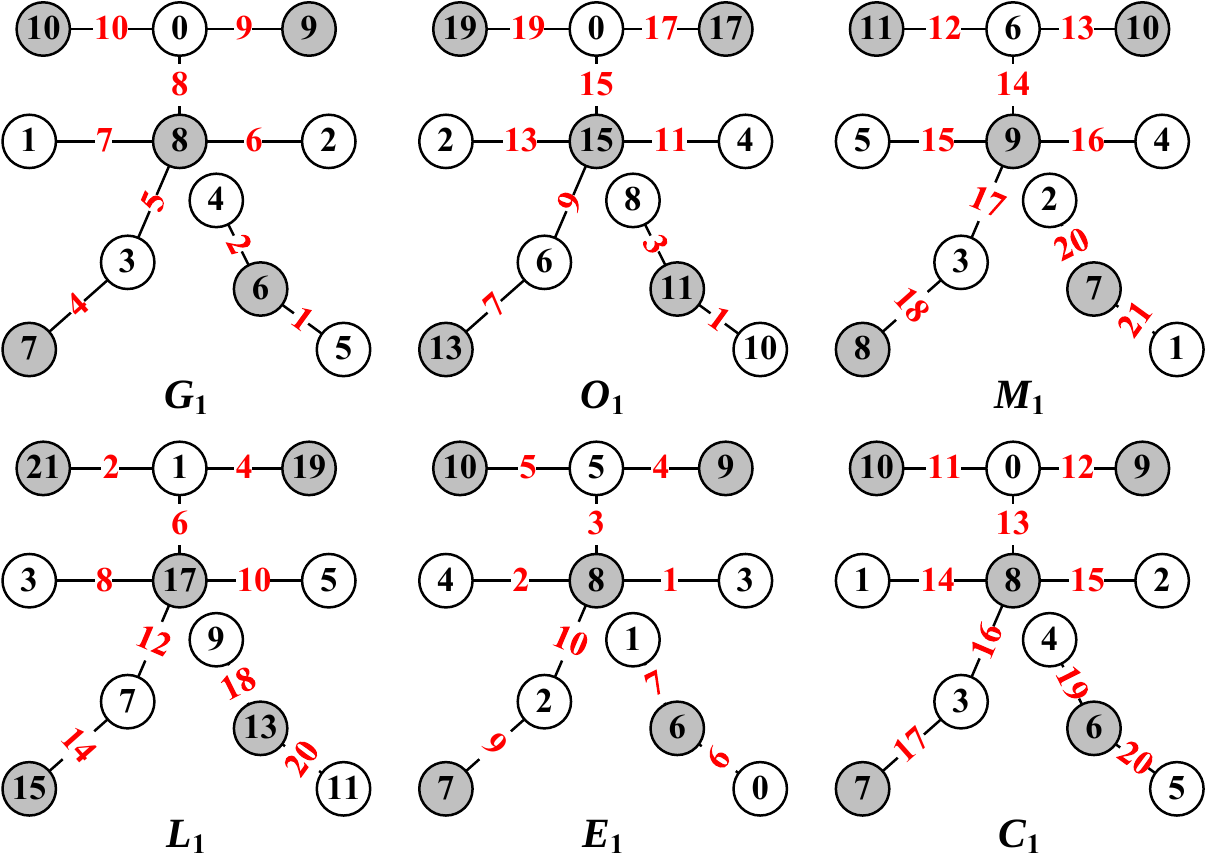}\\
\caption{\label{fig:tian-group-head} {\small The first Topsnut-gpw of six every-zero graphic groups shown in Fig.\ref{fig:tian-group-formulae}.}}
\end{figure}

Hanzi-graph $T_{4476}$, in fact, admits a flawed set-ordered $0$-rotatable system of (odd-)graceful labellings according to Definition \ref{defn:mf-graceful-mf-odd-graceful}, see Fig.\ref{fig:tian-0-rotatable}, namely, each vertex of Hanzi-graph $T_{4476}$ can grow new vertices and edges (see examples shown in Fig.\ref{fig:self-growing-11} and Fig.\ref{fig:self-growing-22}). Thereby, we have at least eight every-zero graphic groups $\{F_{f_G}(G),\oplus\}$ based on $T_{4476}$.

\begin{figure}[h]
\centering
\includegraphics[width=8cm]{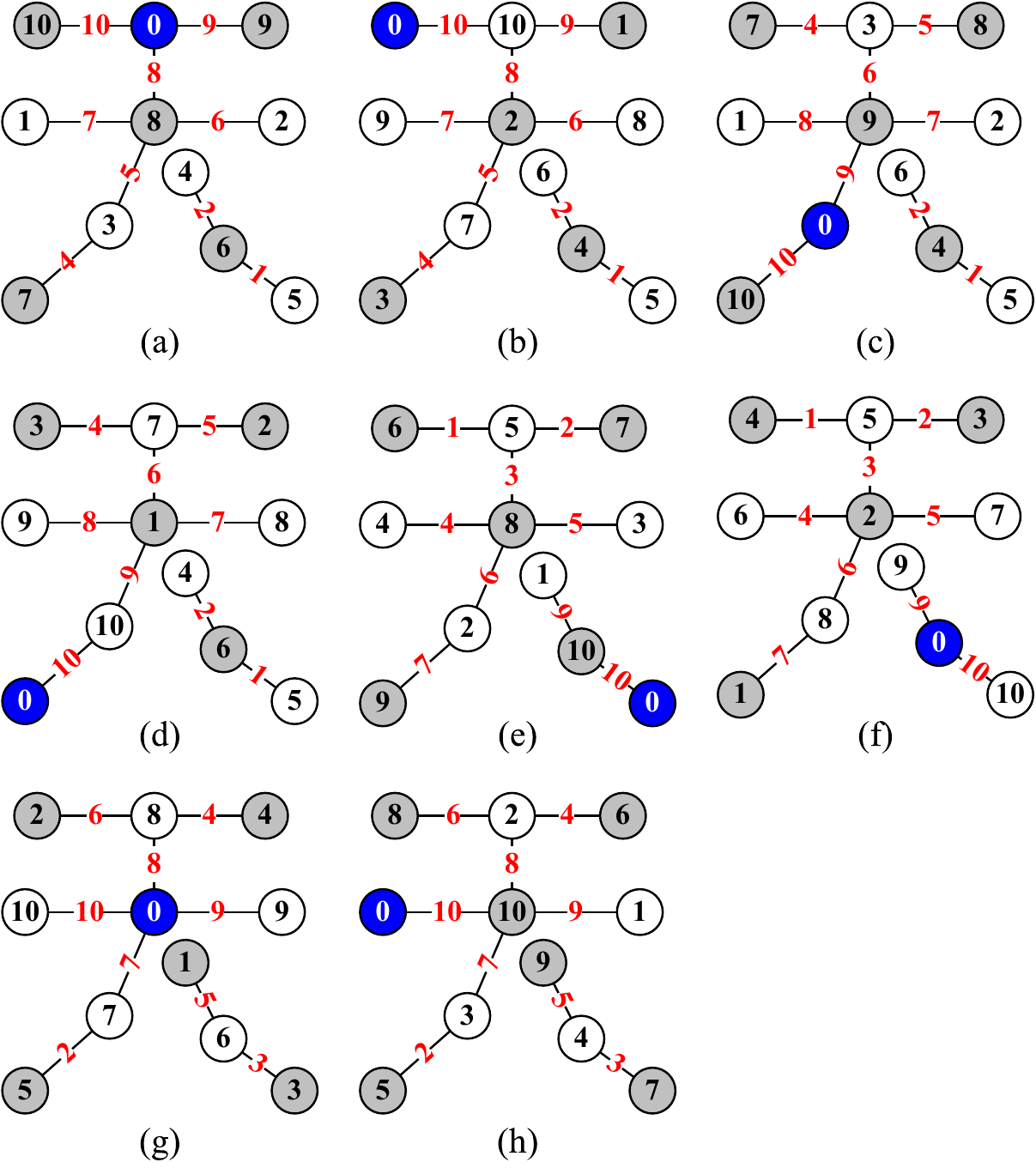}\\
\caption{\label{fig:tian-0-rotatable} {\small Hanzi-graph $T_{4476}$ admits a set-ordered $0$-rotatable system of (odd-)graceful labellings.}}
\end{figure}

By the writing stroke order of Hanzis, a Topsnut-gpw shown in Fig.\ref{fig:tian-0-rotatable} (a) enables us to obtain a TB-paw
\begin{equation}\label{eqa:t-4476-tb-paw}
{
\begin{split}
D(G_1)=101009917862088534742615
\end{split}}
\end{equation}
with $24$ bytes. Furthermore, a Hanzi-network $T_{4476}$ shown in Fig.\ref{fig:tian-encrypted-6-groups} (a), which was encrypted by an every-zero graphic group $\{F_{f_G}(G),\oplus\}$, can induce a TB-paw
\begin{equation}\label{eqa:t-4476-tb-paw}
{
\begin{split}
D(T_{4476})=&D(G_{11})D(G_4)D(G_1)D(G_3)D(G_{10})\\
&D(G_{2})D(G_3)D(G_9)D(G_4)D(G_{3})\\
&D(G_{1})D(G_2)D(G_9)D(G_5)D(G_{4})\\
&D(G_4)D(G_{8})D(G_{5})D(G_4)D(G_7)\\
&D(G_5)D(G_{6})
\end{split}}
\end{equation} with at least $500$ bytes ($500 \times 8=4000$ bits). Moreover, this TB-paw $D(T_{4476})$ has its own \emph{strong-rank} $H(D(T_{4476}))$ computed by
\begin{equation}\label{eqa:strong-rank}
{
\begin{split}
H(D(T_{4476}))&=L(D(T_{4476}))\cdot \log_2 |X|\\
&=500\cdot \log_2~10\\
&\approx 1661 ~(\textrm{bits}),
\end{split}}
\end{equation} here $X=[0,9]$ in (\ref{eqa:strong-rank}). If $X=[0,9]\cup \{a,b,\dots ,z\}\cup \{A,B,\dots ,Z\}$, then $\log_2~|X|=\log_2 62\approx 5.9542$ bits. A string $x$ has its own strong-rank $H(x)$ can be computed by the strong-rank formula $H(x)=L(x)\cdot \log_2 |X|$ (``Is your password secure? How to design a password that others can't guess and they can remember easily?'', WWW), where $L(x)$ is the number of  bytes of the string $x$ written by the letters of $X$.

We have known: Each permutation $G_{a_1}G_{a_2}\dots G_{a_{11}}$ of $G_1G_2 \dots G_{11}$ in the every-zero graphic group $\{F_{f_G}(G),\oplus\}$ can be used to label the vertices of the Hanzi-network $T_{4476}$, and each permutation $G_{a_1}G_{a_2}\dots G_{a_{11}}$ has $11$ zeros to encrypt the Hanzi-network $T_{4476}$, so we have at least $11\cdot (11)!~(=439,084,800> 2^{28})$ different approaches to encrypt the Hanzi-network $T_{4476}$. Moreover, this Hanzi-network $T_{4476}$ induces a matrix $A_{(a)}$ as follows:

\begin{equation}\label{eqa:tian-network-matrix}
\centering
A_{(a)}= \left(
\begin{array}{l}
G_1 ~ ~G_1 ~~ G_1 ~ G_3~ G_9~ G_9~ G_4~ G_5~ G_7\\
G_4 ~ ~G_3 ~ ~G_2 ~ G_2~ G_4~ G_5~ G_4~ G_4~ G_5\\
G_{11} ~ G_{10} ~ G_9 ~ G_9~ G_3~ G_4~ G_8~ G_7~ G_6
\end{array}
\right)
\end{equation}
\\So, we have $27!~(>2^{93})$ permutations obtained from the matrix $A_{(a)}$ shown in (\ref{eqa:tian-network-matrix}), and each permutation distributes us a TB-paw with at least 648 bytes.

\begin{figure}[h]
\centering
\includegraphics[width=8.2cm]{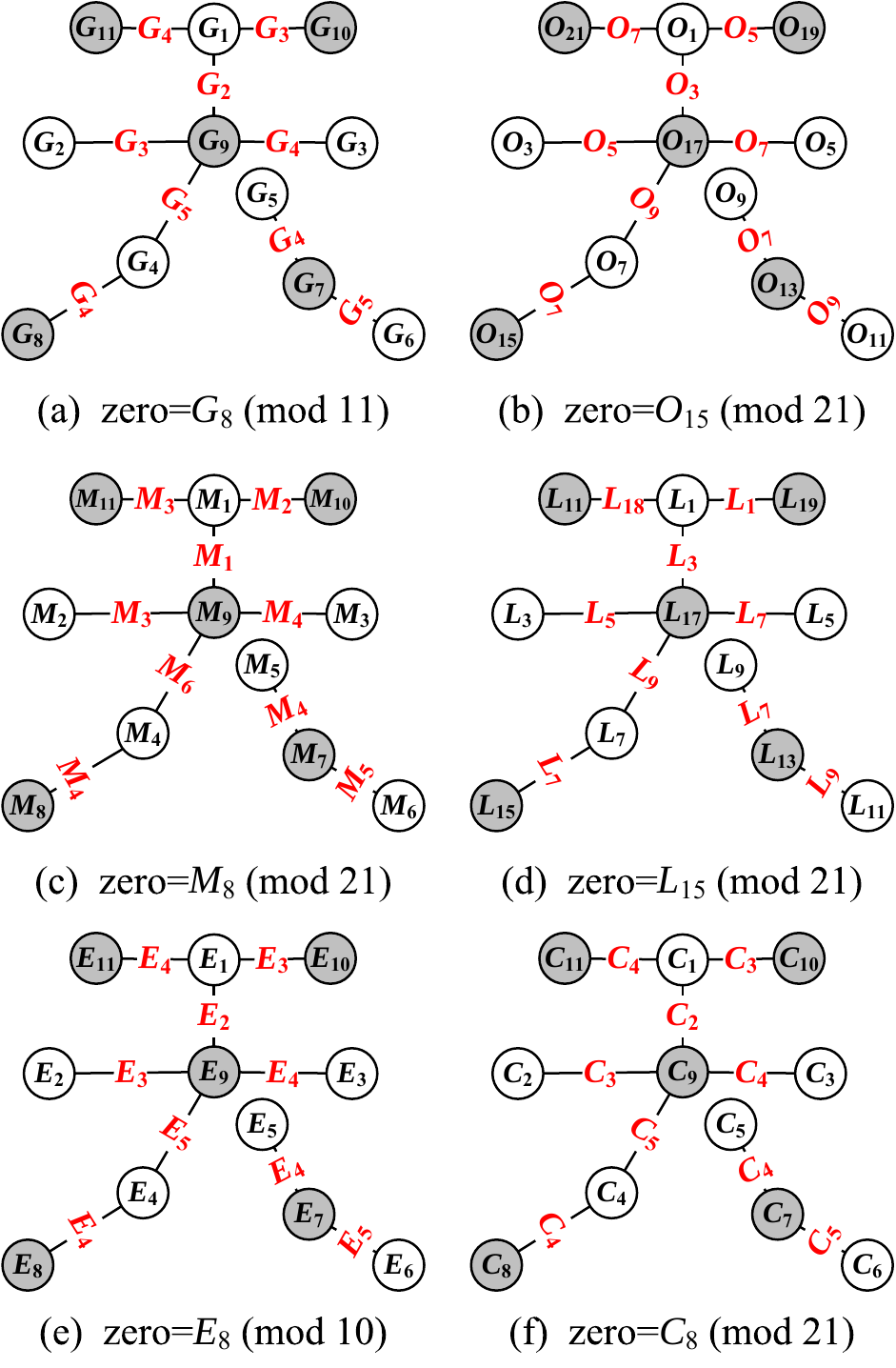}\\
\caption{\label{fig:tian-encrypted-6-groups} {\small A Hanzi-network $T_{4476}$ is encrypted by six every-zero graphic groups shown in Fig.\ref{fig:tian-group-formulae}, respectively.}}
\end{figure}

\begin{thm}\label{thm:equivalent-graphic-groups}
If a $(p,q)$-graph $G$ admits two mutually equivalent labellings $f_a$ and $f_b$, then two every-zero graphic groups $\{F_{f_a}(G),\oplus\}$ and $\{F_{f_b}(G),\oplus\}$ induced by these two labellings of $G$ are equivalent to each other.
\end{thm}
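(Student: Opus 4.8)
The plan is to produce an explicit correspondence between the two every-zero graphic groups that is built directly from the equivalence map $\theta_{a,b}$ of the labellings, and to verify that it transports simultaneously the additive operation $\oplus$ of (\ref{eqa:basic-operation}) and the underlying vertex-labelling data. This is exactly the ``equivalent transformation between two every-zero graphic groups'' alluded to just after (\ref{eqa:equivalent-mapping}), and it is what makes it possible to swap the Topsnut-gpw encryption of a Hanzi-network from one group to the other.

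First I would unwind the two constructions. Let $M_a$ and $M_b$ be the moduli of the graphic groups attached to $f_a$ and $f_b$; each element of $\{F_{f_a}(G),\oplus\}$ is a copy $G^{a}_i$ of $G$ carrying the labelling $(f_a)_i(x)=f_a(x)+(i-1)~(\bmod~M_a)$ for $x\in V(G)$, $i\in[1,M_a]$, and likewise $G^{b}_i$ carries $(f_b)_i(x)=f_b(x)+(i-1)~(\bmod~M_b)$. The point that makes the argument go through is that the operation $G_i\oplus_k G_j=G_{\lambda}$ with $\lambda\equiv i+j-k$ is defined by index arithmetic alone: for every $x$ one has $(f_a)_i(x)+(f_a)_j(x)-(f_a)_k(x)\equiv f_a(x)+(i+j-k-1)\equiv (f_a)_{\lambda}(x)~(\bmod~M_a)$, and the same with $a$ replaced by $b$. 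Hence both groups are governed by the same kind of cyclic-shift arithmetic, and the whole content of the theorem is to pin this down compatibly with the labellings.

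Then I would carry out the following steps. \emph{Step 1.} From the mutual equivalence (\ref{eqa:equivalent-mapping}), $\theta_{a,b}=f_a\circ f_b^{-1}$ is a canonical bijection of vertex-label sets; using it I would read off the induced reindexing $i\mapsto\sigma(i)$ on the index sets (the affine map $\sigma$ realised, in the worked example, by the relations (Rel-1)--(Rel-5), e.g.\ $\sigma(j)=2j-1$ for the graceful-to-odd-graceful pair), so that $(f_b)_{\sigma(i)}$ is the shifted labelling matching $(f_a)_i$ through $\theta_{a,b}$; when $f_a$ and $f_b$ have the same width one simply has $\sigma=\mathrm{id}$ and $M_a=M_b$. \emph{Step 2.} Define $\Phi:F_{f_a}(G)\to F_{f_b}(G)$ by $\Phi(G^{a}_i)=G^{b}_{\sigma(i)}$; by Step 1 this is a well-defined bijection onto the sub-family of $F_{f_b}(G)$ singled out by $\theta_{a,b}$ (an honest bijection when $\sigma=\mathrm{id}$). \emph{Step 3.} Verify $\Phi$ intertwines the operations: since $\lambda$ is computed purely from indices and $\sigma$ is affine, $\Phi(G^{a}_i\oplus_k G^{a}_j)=\Phi(G^{a}_{\lambda})=G^{b}_{\sigma(\lambda)}=G^{b}_{\sigma(i)}\oplus_{\sigma(k)}G^{b}_{\sigma(j)}=\Phi(G^{a}_i)\oplus_{\sigma(k)}\Phi(G^{a}_j)$, and $\Phi$ sends the zero $G^{a}_k$ to the zero $G^{b}_{\sigma(k)}$; the every-zero/inverse/commutativity properties listed after (\ref{eqa:basic-operation}) then pass over verbatim. \emph{Step 4.} Verify $\Phi$ transports the labelling data: for each $i$ the pair $(f_a)_i$ and $(f_b)_{\sigma(i)}$ is again mutually equivalent, via $\theta^{(i)}_{a,b}=(f_a)_i\circ\big((f_b)_{\sigma(i)}\big)^{-1}$, a further canonical label-set bijection obtained by conjugating $\theta_{a,b}$ with the appropriate cyclic shifts. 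Combining Steps 3 and 4 yields that $\Phi$ is an equivalence of every-zero graphic groups, and as an immediate corollary any encryption $F:V(N(t))\to F_{f_a}(G)$ with the authentication identities $F(u)\oplus_{k_{uv}}F(v)=F(uv)$ is carried by $\Phi$ to an encryption $\Phi\circ F$ over $F_{f_b}(G)$ with the corresponding identities.

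The main obstacle will be making the word ``equivalent'' precise enough that the theorem is substantive rather than the near-tautology ``both groups are cyclic''. The intended assertion is that $\Phi$ conjugates the algebraic structure \emph{and} the vertex-labelling structure at the same time; when $f_a$ and $f_b$ are of different widths (graceful versus odd-graceful, set-ordered edge-magic total, odd-elegant, and the other pairs in Theorem \ref{thm:connections-several-labellings}) one must identify the right reindexing $\sigma$ and check that it matches the two families of shifted labellings and respects the index arithmetic modulo the possibly different $M_a$ and $M_b$. Once that bookkeeping is fixed, the homomorphism and bijectivity checks in Steps 2--4 are routine, precisely because $\oplus$ in (\ref{eqa:basic-operation}) does not see the labelling at all.
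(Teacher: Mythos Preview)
The paper states this theorem without proof; it is presented as a summary of the phenomenon illustrated concretely by the relations (Rel-1)--(Rel-5) among the six every-zero graphic groups built on the Hanzi-graph $T_{4476}$, and no \texttt{proof} environment follows it in the source.

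Your proposal is therefore not competing against a written argument but supplying one. The approach you outline is sound and matches the paper's implicit reasoning: you correctly isolate the key structural fact that the operation $\oplus$ in (\ref{eqa:basic-operation}) depends only on index arithmetic $\lambda\equiv i+j-k$, so that any bijection on indices that is affine (hence preserves $i+j-k$) automatically intertwines $\oplus$. Your Step~1 recovery of the reindexing $\sigma$ from $\theta_{a,b}$ is exactly what the paper does case-by-case in (Rel-1)--(Rel-5), and your Step~3 homomorphism check is the right general statement behind those examples. Your identification of the main obstacle --- that ``equivalent'' must mean compatibility with both the group law and the labelling data, and that $\sigma$ must be chosen carefully when $M_a\neq M_b$ --- is also apt: the paper's worked examples have $M_a\in\{11,20,21\}$ and the correspondences there are onto proper subfamilies (e.g.\ $G_j\leftrightarrow O_{2j-1}$), which is precisely the subtlety you flag. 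One small caution: when $M_a\neq M_b$ your $\Phi$ is only an injection onto a subfamily, so the word ``equivalent'' in the theorem should be read in the loose sense the paper uses (an explicit, invertible transformation of one encryption system into the other), not as a group isomorphism in the strict sense.
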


\subsubsection{Encrypting Hanzi-networks with many restrictions}

Notice that there are many gg-colorings introduced in Definition \ref{defn:graphic-group-coloring}, they are unlike popular graph colorings. First of all, we prove a lemma as follows:
\begin{lem}\label{thm:graphic-groups-to-trees}
Suppose that $T$ is a tree, and $\{F_{f}(G),\oplus\}$ is an every-zero graphic group made by a  $(p,q)$-graph $G$ admitting a labelling $f$. If the number $|F_{f}(G)|$ is greater than the maximum degree $\Delta(T)$ of $T$, then $T$ admits a proper total gg-coloring $\theta: V(T)\cup E(T)\rightarrow F_f(G)$.
\end{lem}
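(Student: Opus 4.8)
The plan is to notice that, in the sense of Definition~\ref{defn:graphic-group-coloring}, a \emph{proper total gg-coloring} (conditions Gg-\ref{prob:total}, Gg-\ref{prob:vertex-not} and Gg-\ref{prob:edge-not}) requires only a map $\theta:V(T)\cup E(T)\rightarrow F_f(G)$ whose restriction to $V(T)$ is a proper vertex coloring and whose restriction to $E(T)$ is a proper edge coloring; the additive operation $\oplus$ of the every-zero graphic group is not invoked at all in this weakest of the gg-colorings. So the claim reduces to the classical fact that a tree can be vertex- and edge-colored using colors taken from any prescribed set of size exceeding $\Delta(T)$, here the set $F_f(G)$.

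First I would treat the vertices. We may assume $T$ has at least one edge (otherwise the statement is trivial), so $\Delta(T)\ge 1$ and hence $|F_f(G)|\ge 2$. Being a tree, $T$ is bipartite with a bipartition $(A,B)$ of $V(T)$; choosing two distinct elements $G_a,G_b\in F_f(G)$ and setting $\theta(v)=G_a$ for $v\in A$ and $\theta(v)=G_b$ for $v\in B$ makes Gg-\ref{prob:vertex-not} hold. Next, for the edges, König's edge-coloring theorem for bipartite graphs (Ref.~\cite{Bondy-2008}) yields a partition $E(T)=M_1\cup M_2\cup\cdots\cup M_{\Delta(T)}$ of the edge set into matchings. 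Since $|F_f(G)|>\Delta(T)$, I can pick $\Delta(T)$ distinct elements $G_{c_1},\dots,G_{c_{\Delta(T)}}\in F_f(G)$ and put $\theta(e)=G_{c_i}$ for every $e\in M_i$; two edges sharing an endpoint lie in distinct matchings, hence receive distinct group elements, which is exactly Gg-\ref{prob:edge-not}. Gluing the two partial maps gives the required total map $\theta$ with domain $V(T)\cup E(T)$, i.e. a proper total gg-coloring.

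The only non-elementary ingredient is König's theorem that bipartite graphs are $\Delta$-edge-colorable, so there is no genuine obstacle here; the point worth stressing is that the hypothesis $|F_f(G)|>\Delta(T)$ is present precisely to guarantee enough elements of $F_f(G)$ for the edge coloring, two elements already sufficing for the vertices. If one instead wanted the stronger, genuinely \emph{total} version in which a vertex and an incident edge must also receive different elements, the same scheme still works with $\Delta(T)+1\le|F_f(G)|$ elements, since the total chromatic number of a tree is $\Delta(T)+1$; that refinement follows from a routine induction that removes a leaf together with its pendant edge and recolors.
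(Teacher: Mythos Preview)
Your proof is correct for the lemma exactly as stated: a \emph{proper total gg-coloring} in Definition~\ref{defn:graphic-group-coloring} only imposes Gg-\ref{prob:total}, Gg-\ref{prob:vertex-not} and Gg-\ref{prob:edge-not}, so your reduction to ``proper vertex coloring plus proper edge coloring from a palette of size $>\Delta(T)$'' is valid, and the bipartite $2$-coloring together with K\"onig's edge-coloring theorem (or, since $T$ is a tree, a simple greedy argument down from a root) finishes it.

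The paper's route is different in spirit. It argues by induction on $|V(T)|$: delete a leaf $u$ with neighbor $v$, color $T-u$ by induction, and then---crucially---extend by setting $\theta(u)=G_k$ for a suitable element and $\theta(uv)=\theta(u)\oplus\theta(v)$. The check that the new edge color does not clash with an existing color $\theta(vw)$ uses the cancellation law in the group: $\theta(u)\oplus\theta(v)=\theta(w)\oplus\theta(v)$ forces $\theta(u)=\theta(w)$. So the paper is implicitly building not just a proper total gg-coloring but one in which every edge label is \emph{induced} by its endpoint labels via $\oplus$ (this is what feeds directly into Theorem~\ref{thm:Hanzi-self-similar-networks-trees} and the encryption applications that follow). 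Your construction is cleaner and more elementary for the bare lemma, but it does not produce that extra algebraic structure; the paper's inductive argument, while heavier, yields the stronger $\oplus$-compatible coloring that the surrounding section actually exploits.
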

\begin{proof} By induction on the number of vertices of $T$. We take a leaf $u$ of $T$, and $v$ is adjacent with $u$ in $T$, such that the number of neighbors of the neighbor set $N_{ei}(v)$ is the least in $T$. Notice that $|F_{f}(G)|\geq \Delta(T)+1$ by the hypothesis of the lemma.

If $\Delta(T-u)<\Delta(T)$, then $T$ is a star, we are done. Otherwise, $\Delta(T-u)=\Delta(T)$, so the tree $T-u$ admits a proper total gg-coloring $\theta': V(T-u)\cup E(T-u)\rightarrow F_f(G)$. Clearly, the neighbor set $N'_{ei}(v)$ in the tree $T-u$ holds $|N'_{ei}(v)|<\Delta(T)$ true. On the other hand, $|F_{f}(G)|-1\geq \Delta(T)>|N'_{ei}(v)|$, there exists a Topsnut-gpw $G_k$ holding $G_k\not \in \{\theta'(v)\}\cup \{\theta'(w):~w\in N'_{ei}(v)\}$. We define a  proper total gg-coloring of $T$ as: $\theta(z)=\theta'(z)$ for $z\in V(T-u)\cup E(T-u)$, $\theta(u)=G_k$, and $\theta(uv)=\theta(u)\oplus \theta(v)$. Notice that if $\theta(uv)=\theta(vw)$ for some $w\in N'_{ei}(v)$, thus, $\theta(u)\oplus \theta(v)=\theta(w)\oplus \theta(v)$, which shows $\theta(w)=\theta(u)$, a contradiction. The proof is complete by the principle of induction.
\end{proof}

Lemma \ref{thm:graphic-groups-to-trees} can help us to get a  result as follows:

\begin{thm}\label{thm:Hanzi-self-similar-networks-trees}
If each Hanzi-network  $N(t)$ is a tree at time step $t$, for a proper total gg-coloring $\theta: V(N(t))\cup E(N(t))\rightarrow F_f(G(t))$ based on an every-zero graphic group $\{F_{f}(G(t)),\oplus\}$, write $C'_{\theta}(x,t)= \{\theta(x),\theta(xy):y\in N_{ei}(x)\}$ for each $x\in V(N(t))$, and $I_{\theta}(x)=\{k(xy):y\in N_{ei}(x)\}$ for $\theta(xy)=\theta(x)\oplus_{k(xy)} \theta(y)$ for $y\in N_{ei}(x)$, called an \emph{index set} of a vertex $x$.

$(1)$ There are infinite every-zero graphic groups $\{F_{f}(G(t)),\oplus\}$ with $|F_{f}(G(t))|\geq 1+\Delta(N(t))$ at time step $t$, such that $N(t)$ admits a proper total gg-coloring $\theta: V(N(t))\cup E(N(t))\rightarrow F_f(G(t))$ for each every-zero graphic group.

$(2)$ $N(t)$ admits a proper total gg-coloring $\theta: V(N(t))\cup E(N(t))\rightarrow F_f(G(t))$, and holds $C'_{\theta}(u,t)\neq C'_{\theta}(v,t)$ for each edge $uv\in E(N(t))$ if each every-zero graphic group $\{F_{f}(G(t)),\oplus\}$ with $|F_{f}(G(t))|\geq 2+\Delta(N(t))$ at time step $t$.

$(3)$ $N(t)$ admits a proper total gg-coloring $\theta: V(N(t))\cup E(N(t))\rightarrow F_f(G(t))$ with $I_{\theta}(u)\neq I_{\theta}(v)$ for each edge $uv\in E(N(t))$ if each every-zero graphic group $\{F_{f}(G(t)),\oplus\}$ with $|F_{f}(G(t))|\geq 1+\Delta(N(t))$ at time step $t$.
\end{thm}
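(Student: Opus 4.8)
\textbf{Proof proposal for Theorem~\ref{thm:Hanzi-self-similar-networks-trees}.} All three assertions concern a fixed tree $T=N(t)$ at a fixed time step, and each will be reduced to a classical colouring fact about trees together with the leaf-deletion idea already used in Lemma~\ref{thm:graphic-groups-to-trees}. Part $(1)$ I would obtain at once: for every integer $q\ge 1+\Delta(N(t))$ the path $P_{q+1}$ is graceful (a classical fact, \cite{Gallian2018}), so the every-zero graphic group construction recalled above produces a group $\{F_{f}(P_{q+1}),\oplus\}$ with $|F_{f}(P_{q+1})|=q>\Delta(N(t))$; there are infinitely many such $q$, hence infinitely many pairwise distinct such groups, and each one meets the hypothesis $|F_{f}(G)|>\Delta(T)$ of Lemma~\ref{thm:graphic-groups-to-trees} with $T=N(t)$, which then yields a proper total gg-coloring of $N(t)$. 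This is exactly $(1)$.

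For $(2)$ the plan is a \emph{palette separation}. Let $\{F_{f}(G(t)),\oplus\}$ be any every-zero graphic group with $|F_{f}(G(t))|\ge 2+\Delta(N(t))$, and split its elements into a $2$-element set $A$ and a disjoint $\Delta(N(t))$-element set $B$. Colour the vertices of $N(t)$ by the two members of $A$ according to the bipartition of the tree (a proper vertex colouring) and colour the edges by members of $B$ using a proper edge colouring, which exists since a tree is bipartite and thus $\chi'(N(t))=\Delta(N(t))\le|B|$ (\cite{Bondy-2008}). The resulting $\theta$ is a proper total gg-coloring. For every edge $uv$ we have $C'_{\theta}(u)\cap A=\{\theta(u)\}$ and $C'_{\theta}(v)\cap A=\{\theta(v)\}$ because all edge-colours lie in $B$; since $\theta(u)\neq\theta(v)$ by properness, $C'_{\theta}(u)\neq C'_{\theta}(v)$. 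As this works for every group of cardinality at least $2+\Delta(N(t))$, part $(2)$ follows.

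For $(3)$, given any every-zero graphic group $\{F_{f}(G(t)),\oplus\}$ with $|F_{f}(G(t))|\ge 1+\Delta(N(t))$, I would colour the vertices of $N(t)$ by the bipartition, using two group elements (reuse on edges now being allowed). Because all neighbours of a vertex $x$ then receive the same vertex-colour, the group identity $\lambda = i+j-k~(\bmod~q)$ shows that at $x$ the edge-properness condition $\theta(xy)\neq\theta(xw)$ is equivalent to $k(xy)\neq k(xw)$; moreover, for fixed endpoint colours, choosing $\theta(xy)\in F_{f}(G(t))$ amounts to choosing the index $k(xy)$ freely in $[1,q]$. Hence a proper total gg-coloring of $N(t)$ with $I_{\theta}(u)\neq I_{\theta}(v)$ on every edge is exactly an \emph{adjacent vertex distinguishing proper edge colouring} of the tree $N(t)$, transported through the bijection between edge-colours and indices. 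Invoking the known bound that every tree on at least three vertices has such a colouring with at most $\Delta+1$ colours (the forest strengthening of the general estimate $\chi'_{as}\le\Delta+2$ of Zhang, Liu and Wang, \cite{Zhang-Liu-Wang-2002-strong}) gives $\theta$ using at most $1+\Delta(N(t))$ members of the group. The one unavoidable exception is $K_{2}$, for which $I_{\theta}(u)=I_{\theta}(v)$ is forced; I would exclude it under the standing assumption that $N(t)$ is a non-trivial self-similar Hanzi-network.

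I expect parts $(1)$ and $(2)$ to be essentially bookkeeping once the reductions above are set up. The real obstacle is in $(3)$: the sharp bound $\chi'_{as}(T)\le\Delta(T)+1$ for trees with at least three vertices is genuinely needed (the naive leaf-deletion count only gives about $2\Delta$ colours), so the proof hinges on importing or reproving that result, together with the ad hoc treatment of the degenerate small trees.
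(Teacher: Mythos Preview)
Your argument for $(1)$ matches the paper's: both exhibit infinitely many graceful graphs (you take paths, the paper takes caterpillars) of size at least $1+\Delta(N(t))$ and then invoke Lemma~\ref{thm:graphic-groups-to-trees}.

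For $(2)$ and $(3)$ your routes are genuinely different from the paper's. The paper proves both by a direct leaf-deletion induction along a longest path: for $(2)$ it deletes a leaf $x_m$, extends the inductive proper total gg-coloring by choosing an unused edge colour for $x_{m-1}x_m$ (using the slack $|F_f(G(t))|\ge\Delta+2$ to avoid a clash of the sets $C'_{\theta}$), and for $(3)$ it picks an index $k_s\notin I_{\phi'}(y)$ for the neighbour $y$ of degree at least $2$ and an unused edge colour simultaneously. Your approaches are instead structural reductions: for $(2)$ a clean palette split (two vertex colours from $A$, $\Delta$ edge colours from $B$, so adjacent $C'_{\theta}$-sets already differ on their $A$-part), and for $(3)$ a bijection between edge colours and indices at each vertex of the bipartition, turning the statement into the tree case of the adjacent vertex distinguishing edge-colouring bound $\chi'_{as}(T)\le\Delta(T)+1$. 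Both of your reductions are correct under the paper's definition of ``proper total gg-coloring'' (which imposes no vertex--incident-edge constraint), and they are shorter and more transparent than the paper's inductions; the trade-off is that your proof of $(3)$ imports an external colouring result, whereas the paper's argument is self-contained. Your explicit $K_2$ caveat in $(3)$ is appropriate and is implicit (but not stated) in the paper's induction as well.
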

\begin{proof} (1) In fact, we take $N(t)$ to be a caterpillar  with $\Delta(N(t))\geq 1+\Delta(N(t))$ at each time step $t$, since any caterpillar admits over $35$ graph labellings.

(2) By induction on the numbers of vertices of three-like networks. Let $P=x_1x_2\dots x_{m-1}x_m$ be a longest path in $N(t)$, and $d_{N(t)}(x_{m-1})\leq d_{N(t)}(x_{2})\leq \Delta(N(t))$. Clearly, the tree $N(t)-x_m$ obtained from the deletion of a leaf $x_m$ of $N(t)$ holds $d_{N(t)-x_m}(x_{m-1})=d'\leq  \Delta(N(t))-1$. Thereby, $N(t)-x_m$ admits a proper total gg-coloring $\theta': V(N(t)-x_m)\cup E(N(t)-x_m)\rightarrow F_f(G(t))$ such that $\theta'(uv)=\theta'(u)\oplus_s \theta'(v)$, and $\{\theta'(u),\theta'(uy):y\in N_{ei}(u)\}=C'_{\theta'}(u,t)\neq C'_{\theta'}(v,t)=\{\theta'(v),\theta'(vw):w\in N_{ei}(v)\}$ for each edge $uv\in E(N(t)-x_m)$ according to the hypothesis of induction. By $d'\leq  \Delta(N(t))-1$, we have a Topsnut-gpw $G_i\not \in C'_{\theta'}(x_{m-1},t)=\{\theta'(x_{m-1}),\theta'(x_{m-1}w):w\in N_{ei}(x_{m-1})\}\}$. Notice that there is a vertex $z\in N_{ei}(x_{m-1})$ such that $d_{N(t)}(z)\geq 2$ and $C'_{\theta'}(z,t)\neq C'_{\theta'}(x_{m-1},t)$.

We define a  proper total  gg-coloring $\theta$ of the three-like network $N(t)$ as: $\theta(x)=\theta'(x)$ for $x\in V(N(t)-x_m)\cup E(N(t)-x_m) \subset V(N(t))\cup E(N(t))$, and $\theta(x_{m-1}x_{m})=G_i$, and select a Topsnut-gpw $G_j$ to be $G_j\theta(x_{m})$ holding $\theta(x_{m-1}x_{m})=\theta(x_{m-1})\oplus_s \theta(x_{m})$. If $C'_{\theta'}(z,t)=C'_{\theta}(z,t)=C'_{\theta}(x_{m-1},t)=C'_{\theta'}(x_{m-1},t)\cup \{G_i\}$, we get a contradiction if $d_{N(t)}(x_{m-1})=\Delta(N(t))$, since $|C'_{\theta}(x_{m-1},t)=C'_{\theta'}(x_{m-1},t)\cup \{G_i\}|=\Delta(N(t))+2>|C'_{\theta'}(z,t)|$; so we discuss the case $d_{N(t)}(x_{m-1})=\Delta(N(t))-1$, then there are two distinct Topsnut-gpws $G_i,G'_i\not \in C'_{\theta'}(x_{m-1},t)$, we choose $G'_i$ to be $\theta(x_{m-1}x_{m})=G'_i$. Thus, the  proper total  gg-coloring $\theta$ holds $C'_{\theta}(u,t)\neq C'_{\theta}(v,t)$ for each edge $uv\in E(N(t))$ by the principle of induction.

(3) We choose a leaf $u_0$ of  the three-like network  $N(t)$ if $u_0$ is on a longest path of $N(t)$, and $u_0\in N_{ei}(x)$, as well as $y\in N_{ei}(x)$ with $d_{N(t)}(y)\geq 2$. Assume that the subnetwork $N(t)-u_0$ admits a proper total gg-coloring $\phi': V(N(t)-u_0)\cup E(N(t)-u_0)\rightarrow F_f(G(t))$ with $I_{\phi'}(u)\neq I_{\phi'}(v)$ for each edge $uv\in E(N(t)-u_0)$ by induction. There are $k_s\not \in I_{\phi'}(y)$ and $G_j\not \in \{\phi'(xx_i): x_i\in N_{ei}(x)\}$ since $|I_{\phi'}(y)|\leq \Delta(N(t))$ and $d_{N(t)-u_0}(x)\leq \Delta(N(t))-1$. We define a proper total  gg-coloring $\phi$  of $N(t)$ by setting $\phi(z)=\phi'(z)$ for $z\in (V(N(t))\cup E(N(t)))\setminus \{u_0\}$, and $\phi(xu_0)=G_j$ holding $\phi(xu_0)=\phi(x)\oplus _{k_s}\phi(u_0)$ for $\phi(u_0)=G_l\in F_f(G(t))$. Clearly, $I_{\phi}(y)\neq I_{\phi}(x)$, we are done according to the principle of induction.
\end{proof}

\begin{rem}\label{thm:CCCCCC}
We  ask for other types of gg-colorings as follows:

\vskip 0.4cm

\begin{asparaenum}[LC-1.  ]
\item  \label{prob:total-graceful-gg-coloring} If an every-zero graphic group $\{F_{f}(G(t)),\oplus\}$ holds that $f$ is a graceful labelling of the $(p,q)$-graph $G$, and $N(t)$ is a \emph{tree-like network} and admits a proper total gg-coloring $\theta: V(N(t))\cup E(N(t))\rightarrow F_f(G(t))$ at time step $t$, such that the \emph{vertex-index set} $I_{\theta}(V(N(t)))=\{i:\theta(y)=G_i \in F_f(G(t))\}$ satisfy $\{|i-j|:~\theta(x)=G_i,\theta(y)=G_j,~xy\in E(N(t))\}=[1,|V(N(t))|-1]$. Then we say the gg-coloring $\theta$ matches with $f$, call it a \emph{totally graceful gg-coloring}. Find such tree-like network $N(t)$ and its total graceful gg-colorings.
\item  \label{prob:perfectly-total-graceful-gg-coloring}  Suppose that $\theta$ is a totally graceful gg-coloring of a tree-like network $N(t)$, so we have the  \emph{edge-index set} $I_{\theta}(E(N(t)))=\{k:\theta(u)\oplus \theta(v)=\theta(uv)=G_k, ~uv \in E(N(t))\}$, and call $\theta$ a \emph{perfectly total graceful gg-coloring} of $N(t)$ if $I_{\theta}(E(N(t)))=[1,|V(N(t))|-1]$. Determine tree-like networks $N(t)$ admitting perfectly  total graceful gg-colorings.
\item  \label{prob:perfectly-total-odd-graceful-gg-coloring} If $f$ is an odd-graceful labelling in an every-zero graphic group $\{F_{f}(G(t)),\oplus\}$, we replace ``graceful'' by ``odd-graceful'' and substitute ``$[1,|V(N(t))|-1]$'' by ``$[1,2|V(N(t))|-3]$'' in LC-\ref{prob:total-graceful-gg-coloring} and LC-\ref{prob:perfectly-total-graceful-gg-coloring}, so we want to look for tree-like networks $N(t)$ admitting \emph{totally odd-graceful gg-colorings} and \emph{perfectly total odd-graceful gg-colorings}.
\end{asparaenum}

\vskip 0.4cm

The above LC-\ref{prob:total-graceful-gg-coloring}, LC-\ref{prob:perfectly-total-graceful-gg-coloring} and LC-\ref{prob:perfectly-total-odd-graceful-gg-coloring} can be handled if $N(t)$ and $G(t)$ are both caterpillars of $p(t)$  vertices at each time step $t$.
\end{rem}

Definition \ref{defn:graphic-group-coloring} can be generalized into dynamic networks:
\begin{defn} \label{defn:gg-coloring-dynamic-networks}
$^*$ Let $N(t)$ be a network  admitting a mapping $\theta_t: V(N(t))\rightarrow F_f(G)$ with  $\theta_t(u)\neq \theta_t(v)$ for  each edge $uv\in E(N(t))$, where $\{F_f(G),\oplus\}$ is an every-zero graphic group containing Topsnut-gpws $G_i$ with $i\in [1,M_G]$. If each edge $uv\in E(N(t))$ corresponds an index $k(t,uv)$ such that $\theta_t(uv)=\theta_t(u)\oplus _{k(t,uv)}\theta_t(v)\in F_f(G)$, we call $\theta_t$ a v-induced total dynamic gg-coloring at time step $t$.\qqed
\end{defn}

\begin{rem}\label{thm:gg-colorings}
(1) A v-induced total proper gg-coloring $\theta$ defined in Definition \ref{defn:gg-coloring-dynamic-networks} may be not a proper total gg-coloring, since such cases $\theta(uv)=\theta(u)$ or $\theta(uv)=\theta(uw)$ may happen. Thereby, we can add $\theta(uv)\neq \theta(u)$ or $\theta(uv)\neq \theta(uw)$ to guarantee $\theta$ is really  a proper total gg-coloring.

(2) A  v-induced total dynamic gg-coloring $\theta_t$ of a network $N(t)$  defined in Definition \ref{defn:gg-coloring-dynamic-networks} can be used as a dynamic encryption of $N(t)$. For example, we let $k(t,uv)=k(t)$ for  each edge $uv\in E(N(t))$, and $k(t_1)\neq k(t_2)$ for two distinct  time steps $t_1,t_2$, as well as $\theta_{t_1}\neq \theta_{t_2}$, in other word, an encrypted network $N(t_1)$ is not equal to another encrypted network $N(t_1)$. It is easy to find a connected graph $G$ having thousand and thousand vertices and admitting a set-ordered graceful labelling (since such labelling can deduce many other labellings).
\end{rem}

\subsection{Labelling self-similar Hanzi-graphs}

It has been conjectured that every tree is graceful (Graceful Tree Conjecture, Alexander Rosa, 1966) and odd-graceful (R.B. Gnanajothi, 1991) in \cite{Gallian2018}. As far as we know, these two conjectures are open now. It is noticeable, no polynomial algorithm is for labelling trees to be graceful, or odd-graceful, except a few number of particular trees. So, we can say it is not easy to labelling self-similar Hanzi-graphs having large scales.

\begin{thm} \label{thm:two-graphs-link-identify}
\cite{Mu-Sun-Zhang-Yao-2019} Let $G_1,G_2$ be two disjoint bipartite graphs admits set-ordered graceful labellings. Then there exist vertices $u\in
V(G_1)$ and $v\in V(G_2)$ such that the graph obtained by joining
$u$ with $v$ with an edge or by identifying $u$ with $v$ into one
vertex is a set-ordered graceful labelling.
\end{thm}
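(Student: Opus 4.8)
The plan is to exhibit, from the given data, \emph{both} a ``joining'' graph and an ``identifying'' graph together with an explicit set-ordered graceful labelling of each. Throughout, let $f_1$ be the set-ordered graceful labelling of $G_1$ with ordered bipartition $(X_1,Y_1)$ and $q_1$ edges, and $f_2$ that of $G_2$ with $(X_2,Y_2)$ and $q_2$ edges. First I would record the rigid features forced by Definition \ref{defn:proper-bipartite-labelling-ongraphs}: since $\min f_i(V(G_i))=0$, $f_i(V(G_i))\subseteq[0,q_i]$ and $\max f_i(X_i)<\min f_i(Y_i)$, the label $0$ lies in $f_i(X_i)$; and since the edges labelled $q_i$ and $1$ both exist while all vertex labels lie in $[0,q_i]$, we get $q_i=\max f_i(Y_i)\in f_i(Y_i)$ and $\min f_i(Y_i)=\max f_i(X_i)+1$. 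Writing $\alpha_i=\max f_i(X_i)$, this says $f_i(X_i)\subseteq[0,\alpha_i]$ and $f_i(Y_i)\subseteq[\alpha_i+1,q_i]$ with all four interval endpoints attained, and $q_i>\alpha_i$. These inclusions (no consecutiveness) are all the construction will use, so the argument covers bipartite graphs, not merely trees.

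For the joining case, I would choose $u\in V(G_1)$ with $f_1(u)=0$ and $v\in V(G_2)$ with $f_2(v)=\alpha_2+1$ (so $v\in Y_2$), form $H=(G_1\cup G_2)+uv$ with bipartition $(X_1\cup X_2,\;Y_1\cup Y_2)$ and $q:=q_1+q_2+1$ edges, and define $g$ on $V(H)$ by $g(x)=f_2(x)$ for $x\in X_2$, $g(x)=f_1(x)+\alpha_2+1$ for $x\in X_1$, $g(y)=f_1(y)+\alpha_2+1$ for $y\in Y_1$, and $g(y)=f_2(y)+q_1+1$ for $y\in Y_2$. Then the four label ranges $[0,\alpha_2]$, $[\alpha_2+1,\alpha_1+\alpha_2+1]$, $[\alpha_1+\alpha_2+2,q_1+\alpha_2+1]$, $[q_1+\alpha_2+2,q]$ are consecutive disjoint subintervals of $[0,q]$, so $g$ is injective with $\min g(V(H))=0$ and $\max g(X_1\cup X_2)=\alpha_1+\alpha_2+1<\alpha_1+\alpha_2+2=\min g(Y_1\cup Y_2)$; the $G_1$-edges keep labels $[1,q_1]$ (both endpoint-parts shifted by the same constant), the $G_2$-edges become $[q_1+2,q]$, and $uv$ receives $g(v)-g(u)=(\alpha_2+q_1+2)-(\alpha_2+1)=q_1+1$, whence $g(E(H))=[1,q]$. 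So $g$ is a set-ordered graceful labelling of $H$.

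For the identifying case, I would choose $u\in V(G_1)$ with $f_1(u)=0$ and $v\in V(G_2)$ with $f_2(v)=\alpha_2$, identify $u$ with $v$ into one vertex $w$; the result $H'$ is bipartite with $q_1+q_2$ edges, and I would set $g(x)=f_2(x)$ on $X_2$, $g(y)=f_2(y)+q_1$ on $Y_2$, $g(x)=f_1(x)+\alpha_2$ on $X_1$, $g(y)=f_1(y)+\alpha_2$ on $Y_1$. The two rules agree at $w$ (both give $\alpha_2$); the two $X$-ranges $[0,\alpha_2]$ and $[\alpha_2,\alpha_1+\alpha_2]$ meet only at $w$; the $Y$-ranges $[\alpha_1+\alpha_2+1,q_1+\alpha_2]$ and $[q_1+\alpha_2+1,q_1+q_2]$ are consecutive; $G_1$-edges carry $[1,q_1]$ and $G_2$-edges carry $[q_1+1,q_1+q_2]$; so $g$ is a set-ordered graceful labelling of $H'$. (One could alternatively deduce this case from the previous one by applying the edge-contracting operation of Definition \ref{defn:split-operation-combinatoric} to a suitable connecting edge.)

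The main obstacle is precisely the edge-label bookkeeping: the $q_1$ labels from $G_1$, the $q_2$ labels from $G_2$, and (in the joining case) the single new label must tile $[1,q_1+q_2+1]$ with no repetition, \emph{while} the vertex labels stay injective inside $[0,q]$ and all of $X$ stays below all of $Y$; this is what pins down the shift constants ($\alpha_2+1$ on $X_1,Y_1$ and $q_1+1$ on $Y_2$ for joining; $\alpha_2$ on $G_1$ and $q_1$ on $Y_2$ for identifying) and the two distinguished vertices. Once the structural facts of the first paragraph are in place, everything else is routine verification of the four conditions of Definition \ref{defn:proper-bipartite-labelling-ongraphs}; the whole construction is in fact the $m=2$ instance of the scheme behind Theorem \ref{thm:flawed-graceful-labelling-forest} and Corollary \ref{thm:flawed-graceful-labelling-graph}.
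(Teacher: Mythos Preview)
Your argument is correct. The structural observations in your first paragraph (that $0\in f_i(X_i)$, $q_i\in f_i(Y_i)$, and $\min f_i(Y_i)=\max f_i(X_i)+1$, the last forced by the existence of an edge with label $1$) are exactly what is needed, and your shift constants are well chosen: in the joining case the edge-label blocks $[1,q_1]$, $\{q_1+1\}$, $[q_1+2,q]$ tile $[1,q]$, and in the identifying case $[1,q_1]$ and $[q_1+1,q_1+q_2]$ tile $[1,q_1+q_2]$, while injectivity and the set-ordered condition follow from the disjointness (respectively, single-point overlap at $w$) of the shifted vertex-label intervals.

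As for comparison: the paper does not supply its own proof of this theorem; it is quoted from \cite{Mu-Sun-Zhang-Yao-2019} without argument. Your construction is, as you observe, the $m=2$ specialisation of the block-shift scheme the paper uses to prove Theorem~\ref{thm:flawed-graceful-labelling-forest}, augmented by the bookkeeping for the single new edge (respectively, the identified vertex). So your approach is entirely in the spirit of the paper's own techniques, and in fact slightly more explicit than what the paper records for the disconnected-forest case: you track the attained endpoints $0,\alpha_i,\alpha_i+1,q_i$ and use them to pin down the choice of $u$ and $v$, which is precisely the extra care the joining/identifying variants require.
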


Let $T$ be  a $(n,m)$-graph and let $G$ be a $(p,q)$-graph. We define a  \emph{near-symmetric graph} $H=\langle T \circ G\rangle$ such that $H$ contains $T$ and $n$ edge-disjoint copies $G_i$ of $G$ with $|E(H-E(T))|=nq$, $|H|\leq np$ and $i\in [1,n]$.

\begin{thm} \label{thm:two-graphs-link-identify}
\cite{Mu-Sun-Zhang-Yao-2019} Let $T$ be a set-ordered graceful Hanzi-graph and $H=\langle T \circ G\rangle$ is graceful.
\end{thm}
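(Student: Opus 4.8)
The plan is to reduce the statement to the closure property for set-ordered graceful labellings from \cite{Mu-Sun-Zhang-Yao-2019} (the link-or-identify theorem: any two disjoint bipartite graphs carrying set-ordered graceful labellings can be glued, either by inserting one new edge or by identifying one vertex of each, so that the result again carries a set-ordered graceful labelling), and then to induct on the number $n=|V(T)|$ of anchor vertices. First I would make explicit the hypothesis, implicit in the word ``near-symmetric'', that $G$ itself admits a set-ordered graceful labelling; since such a labelling forces a graph to be bipartite with no odd cycle, both $T$ and $G$ are bipartite, say with bipartitions $(X_T,Y_T)$ and $(A,B)$, and because $T$ is a Hanzi-graph it is (forest-)tree-like, so its labelling has the canonical consecutive-interval shape $f_T(X_T)=[0,s-1]$, $f_T(Y_T)=[s,m]$ with $m=|E(T)|$, and likewise for $G$. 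Note that once $H=\langle T\circ G\rangle$ is shown to be \emph{set-ordered} graceful it is a fortiori graceful, which is all the theorem asks.

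For the induction, the base case $n=1$ is immediate: $H$ is a single copy $G_1$ of $G$ with the lone vertex of $T$ identified with a vertex of $G_1$, so $H\cong G$ is set-ordered graceful. For the inductive step I would peel off one copy $G_i$ together with its anchor vertex: write $H=\langle T\circ G\rangle$ as $H'$ glued to $G_i$, where $H'=\langle T\circ G\rangle$ with that one copy removed; the induction hypothesis gives $H'$ a set-ordered graceful labelling, and the link-or-identify theorem, applied to the disjoint set-ordered graceful graphs $H'$ and $G_i$, produces an admissible gluing, again set-ordered graceful. After $n$ such steps one reaches $H$, with $|E(H-E(T))|=nq$ edges added in $n$ copies of $G$, and the theorem follows.

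The main obstacle — and the place where the argument needs real care — is that the link-or-identify theorem only asserts the \emph{existence} of good gluing vertices, whereas $\langle T\circ G\rangle$ prescribes \emph{which} vertex of $T$ carries the $i$-th copy (one per vertex) and that the $n$ gluings together reproduce exactly $H$. So the closure step must be strengthened: I would show that for a tree-like Hanzi-graph $T$ in its canonical consecutive-interval set-ordered labelling, \emph{every} vertex of $T$ is a legal anchor for a fresh copy of $G$ — matching an $X_T$-vertex to the $A$-side and a $Y_T$-vertex to the $B$-side — and that these choices can be realised simultaneously. The relabelling I expect to use mirrors the shift trick of Lemma~\ref{thm:symmetric-tree}: keep $f_T$ on $V(T)$ and, on the $i$-th copy, use $f_G$ shifted by an amount $\sigma_i$ depending on the position of its anchor in the $f_T$-order, chosen so that the induced edge labels of the copies fill consecutive blocks of length $q$ that are pairwise disjoint and disjoint from $[1,m]=f_T(E(T))$; ordering the blocks (copies anchored on $X_T$ first, by increasing label, then those on $Y_T$) makes the global edge-label set exactly $[1,\,m+nq]=[1,|E(H)|]$ and preserves $\max(\text{small side})<\min(\text{large side})$. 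The remaining work is the bookkeeping that no two vertex labels coincide across $T$ and the shifted copies, which is exactly where the order bound $|H|\le np$ and the consecutive-interval structure are used; it is routine but must be checked. If this explicit construction turns out to be too rigid for some Hanzi-graphs $T$, the fallback is the inductive argument above supplemented by a lemma that the canonical labelling always leaves at least one admissible anchor untouched at each stage — either route terminates in $n$ steps and yields a set-ordered graceful, hence graceful, labelling of $H$.
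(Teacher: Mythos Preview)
The paper does not supply a proof of this theorem: it is stated with a bare citation to \cite{Mu-Sun-Zhang-Yao-2019} and nothing more. There is therefore no argument in the paper to compare your proposal against.

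On the merits of your proposal itself: the overall strategy --- iterate the link-or-identify closure theorem to attach the copies of $G$ one by one --- is the natural route, and you have correctly located the real difficulty. That closure theorem promises only \emph{some} good gluing vertex, not the prescribed anchor, so the naive induction does not automatically produce the specific graph $H=\langle T\circ G\rangle$. Your explicit block-shift construction is the right way around this and is exactly in the spirit of the paper's proofs of Theorem~\ref{thm:flawed-graceful-labelling-forest} and Lemma~\ref{thm:symmetric-tree}; that is where the actual content lies, and the inductive wrapper adds little.

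Two points need tightening. First, your inductive step ``peel off one copy $G_i$ together with its anchor vertex'' is not well-formed as written: removing the anchor vertex alters $T$, and the residual graph is not of the form $\langle T'\circ G\rangle$ for a smaller $T'$ unless the anchor you remove is a \emph{leaf} of $T$ (and you must also check that the leaf's copy of $G$ detaches cleanly, i.e.\ does not share vertices with neighbouring copies --- the definition only requires the copies to be edge-disjoint, and allows $|H|\le np$). State and use the leaf restriction explicitly. Second, and more seriously, the theorem as printed imposes no hypothesis whatsoever on $G$, yet every line of your argument needs $G$ to carry a set-ordered graceful labelling. That assumption is not ``implicit in the word near-symmetric'' --- the definition of $\langle T\circ G\rangle$ says nothing about labellings of $G$. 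Either the statement in the paper is missing a hypothesis (very plausible given its telegraphic phrasing), or a genuinely different argument is required; you should flag this as a missing hypothesis rather than absorb it silently.
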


\begin{thm} \label{thm:11111}
\cite{Mu-Sun-Zhang-Yao-2019} Suppose that a Hanzi $N_{zi}(0)$ admits a set-ordered graceful labelling. Let $N_{zi}(0)$ execute the Leaf-algorithm  to get a large self-similar network $N_{zi}(n)$ with $n\in integer$, then each $N_{zi}(n)$ admits a  set-ordered graceful labelling and an edge-magic total labelling.
\end{thm}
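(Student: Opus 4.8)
The plan is to induct on the rank (time step) $n$, the base case $n=0$ being exactly the hypothesis that $N_{zi}(0)$ admits a set-ordered graceful labelling. The edge-magic total labelling comes for free at the end: every $N_{zi}(n)$ produced by a Leaf-algorithm is a tree, so once we know it carries a set-ordered graceful labelling with bipartition $(X,Y)$, Theorem \ref{thm:connections-several-labellings}(4) immediately upgrades this to a super edge-magic total labelling with magic constant $|X|+2|V(N_{zi}(n))|+1$. Hence the entire proof reduces to showing that each leaf-substitution step preserves set-ordered gracefulness.

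For the inductive step I would first record the rigid shape of a set-ordered graceful labelling $f$ of a tree $T$ on $p$ vertices with parts $(X,Y)$: the maximum edge label $p-1$ forces $\min f(X)=0$ and $\max f(Y)=p-1$; the edge label $1$ forces $\min f(Y)=\max f(X)+1$; a counting argument then pins down $f(X)=[0,|X|-1]$, $f(Y)=[|X|,p-1]$, and $f(E(T))=[1,p-1]$. With this normal form, the leaf-algorithm step can be analysed as in the proof of Theorem \ref{thm:flawed-graceful-labelling-forest}: $N_{zi}(n)$ is obtained from $N_{zi}(n-1)$ (Leaf-algorithm-A/B), or from $N_{zi}(n-1)$ and its copies (Leaf-algorithm-C), by deleting the leaves $u_i$ and identifying the root $u_{0,i}$ of the $i$-th copy with the neighbour $v_i$ of $u_i$ via the vertex-coincident operation of Definition \ref{defn:split-operation-combinatoric}. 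I would relabel the $i$-th copy by an additive shift of its $X$-labels and a different additive shift of its $Y$-labels, the shifts being the cumulative counts $\sum_{k<i}|X_k|$ and $M+\sum_{k}|Y_k|$ exactly as in Theorem \ref{thm:flawed-graceful-labelling-forest}, so that the edge labels of consecutive copies tile $[1,q]$ with no gap or repetition and the global inequality $\max f(X)<\min f(Y)$ survives; the one edge label lost when each pendant edge $u_iv_i$ is removed is precisely the label recreated by the extremal edge of the freshly glued copy, which is where the normal form is used. Alternatively one may apply the link/identify theorem for two set-ordered graceful bipartite graphs (Theorem \ref{thm:two-graphs-link-identify}, \cite{Mu-Sun-Zhang-Yao-2019}) repeatedly, using a $0$-rotatable set-ordered system of graceful labellings of $N_{zi}(0)$ (Lemma \ref{thm:symmetric-tree}, Theorem \ref{thm:0-rotatable-set-ordered-system00}) to force the root $u_0$ to carry the extreme label demanded by each identification.

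The main obstacle is the bookkeeping of the prescribed gluing vertices: the leaf-algorithm attaches each copy at its root and at a neighbour of a deleted leaf, so one must check that the root $u_0$ sits in the same part (say $X$) of $T_0$ in every copy and can simultaneously receive the value required by the shift scheme, and that each $v_i$ ends with the residual label making the reused edge label correct, i.e.\ that the slot vacated by the pendant edge has exactly the width of one shifted copy's edge-label block. A secondary point is that Leaf-algorithm-C changes the leaf set at every rank, so the induction must be run on $G_{k-1}$ (the current self-similar Hanzi-tree) rather than on $G_0$; bipartiteness and absence of odd cycles are automatic since all objects are trees, so the set-ordered hypothesis is the only invariant that must be propagated. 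Once set-ordered gracefulness of $N_{zi}(n)$ is established, Theorem \ref{thm:connections-several-labellings}(4) closes the argument and supplies the edge-magic total labelling.
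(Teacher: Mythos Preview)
The paper does not prove this theorem; it is stated with a citation to \cite{Mu-Sun-Zhang-Yao-2019} and no argument is given in the present text. So there is nothing in the paper to compare your proposal against directly, and one can only judge your outline against the tools the paper makes available.

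Your overall architecture is the natural one and matches those tools: induction on the rank, the additive-shift scheme of Theorem~\ref{thm:flawed-graceful-labelling-forest} to rebuild a set-ordered graceful labelling after each leaf-substitution, and then the equivalence Theorem~\ref{thm:connections-several-labellings}(4) to convert set-ordered gracefulness into a (super) edge-magic total labelling. The normal-form observation $f(X)=[0,|X|-1]$, $f(Y)=[|X|,p-1]$ is exactly what makes the shifts tile $[1,q]$ without gaps, and you correctly isolate the one genuine issue, namely that the Leaf-algorithm prescribes \emph{which} vertices are identified, so one must check that the root and the neighbours $v_i$ land at the extreme labels demanded by the shift scheme.

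Two cautions on your ``alternative'' route. First, Theorem~\ref{thm:two-graphs-link-identify} only asserts the \emph{existence} of a good identification vertex in each factor; it does not say the root $u_0$ and the specific $v_i$ chosen by the Leaf-algorithm will serve. Invoking it repeatedly therefore proves only that \emph{some} iterated identification of copies of $N_{zi}(0)$ is set-ordered graceful, not that the particular tree $N_{zi}(n)$ produced by the Leaf-algorithm is. Second, Lemma~\ref{thm:symmetric-tree} and Theorem~\ref{thm:0-rotatable-set-ordered-system00} concern the symmetric double $T\ominus T'$, not an arbitrary Hanzi-graph, so they do not by themselves furnish a $0$-rotatable set-ordered system for $N_{zi}(0)$; you would need that as an extra hypothesis, which the theorem does not grant. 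The safe path is therefore your primary one: carry out the shift construction explicitly, verifying that when the leaf $u_i$ is removed the vacated edge label equals the top (or bottom) label of the block assigned to the $i$-th copy, and that the root of each copy can be placed in the correct bipartition class. That bookkeeping is the whole proof; once it is done, Theorem~\ref{thm:connections-several-labellings} finishes the edge-magic claim exactly as you say.
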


\section{Directed Hanzi-gpws}

Directed Hanzi-gpw is a new topic in our memory. We will present terminology and notation, definitions about directed  Hanzi-gpws. We call a one-way directed edge as an \emph{arc} $\overrightarrow{uv}$, where $v$ with a narrow is the \emph{head} of the arc $\overrightarrow{uv}$ and $u$ is the \emph{tail} of the arc $\overrightarrow{uv}$. A $(p,q)$-graph having some arcs and the remainder being proper edges is called a \emph{half-directed $(p,q)$-graph}, denoted as $G^{\rightarrow}$; a $(p,q)$-graph having only arcs is called a \emph{directed $(p,q)$-graph} or a \emph{$(p,q)$-digraph}, and denoted as $\overrightarrow{G}$. Particularly, the set of all arcs of a digraph $\overrightarrow{G}$ is denoted as $A(\overrightarrow{G})$.

\begin{figure}[h]
\centering
\includegraphics[height=4cm]{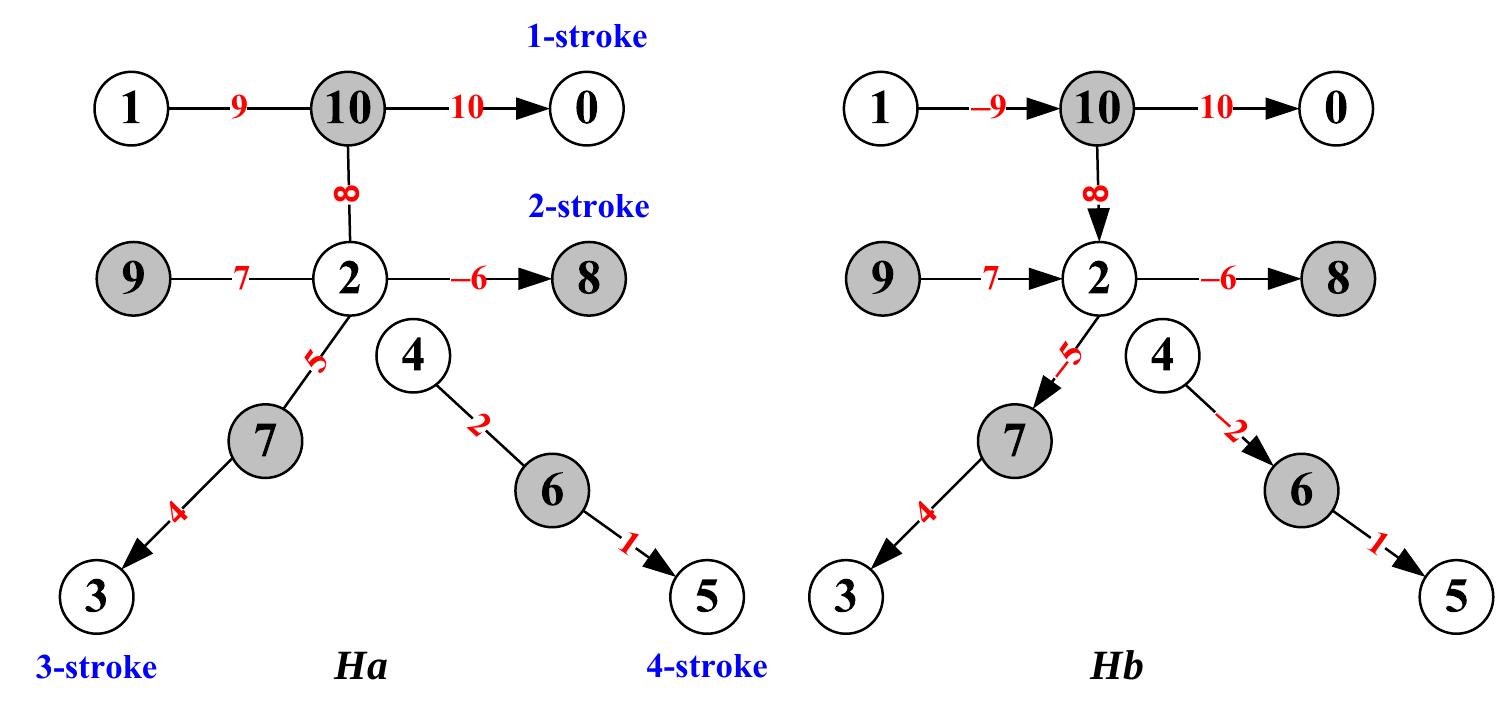}\\
\caption{\label{fig:directed-tian-00} {\small $Ha$ (left) is a half-directed Hanzi-gpw made by a Hanzi $H_{4476}$, and $Hb$ (right) is a directed Hanzi-gpw made by a Hanzi $H_{4476}$.}}
\end{figure}

Here, replacing arcs by proper edges in a directed graph, or a half-directed graph produces a graph, we call this graph as the \emph{underlying graph} of the directed graph, or the half-directed graph.

\begin{defn}\label{defn:half-directed-graceful}
$^*$ Let $f:V(G^{\rightarrow}) \rightarrow [0,q]$ (resp. $[0,2q-1]$) be a labelling of a half-directed $(p,q)$-graph $G^{\rightarrow}$ with its connected underlying graph, we define $f(xy)=|f(x)-f(y)|$ for proper edge $xy$ of $G^{\rightarrow}$ and $f(\overrightarrow{uv})=f(u)-f(v)$ for each arc $\overrightarrow{uv}$ of $G^{\rightarrow}$. If $\{f(xy),f(\overrightarrow{uv}):~uv, \overrightarrow{uv}\in E(G)\cup A(G)\}=[1,q]$ (resp. $[1,2q-1]^o$), we call $f$ a \emph{half-directed graceful labelling} (resp. \emph{half-directed odd-graceful labelling}) of $G^{\rightarrow}$.\qqed
\end{defn}

\begin{defn}\label{defn:directed-graceful}
$^*$ Let $g:V(\overrightarrow{G}) \rightarrow [0,q]$ (resp. $[0,2q-1]$) be a labelling of a $(p,q)$-digraph $\overrightarrow{G}$ with its connected underlying graph, and define $g(\overrightarrow{uv})=g(u)-g(v)$ for each arc $\overrightarrow{uv}$ of $\overrightarrow{G}$. If $\{|g(\overrightarrow{uv})|:~\overrightarrow{uv}\in A(\overrightarrow{G})\}=[1,q]$ (resp. $[1,2q-1]^o$), we call $g$ a \emph{directed graceful labelling} (resp. \emph{directed odd-graceful labelling}) of the digraph $\overrightarrow{G}$. Moreover, if $g(\overrightarrow{uv})>0$ (resp. $g(\overrightarrow{uv})<0$) for each arc $\overrightarrow{uv}\in A(\overrightarrow{G})$, we say $g$ to be uniformly.\qqed
\end{defn}

\begin{equation}\label{eqa:directed-1}
\centering
A(G^{\rightarrow})={\Big \downarrow} \left(
\begin{array}{l}
6~6~7~7\quad 2\quad 9~10~10~10\\
1~2~4~5~-6~7~~8~~9~~10\\
5~4~3~2\quad 8\quad 2~~2~~1~~0
\end{array}
\right)
\end{equation}
\begin{equation}\label{eqa:directed-2}
\centering
A(\overrightarrow{G})={\Big \downarrow} \left(
\begin{array}{l}
6~~~~4~~7~~~2~\quad 2~\quad 9~ ~10~~1~~~10\\
1~-2~4~-5~-6~~7~~~8~-9~10\\
5~~~~6~~3~~~7~\quad 8~\quad 2~~ ~2~~10~~~0
\end{array}
\right)
\end{equation}

A half-directed $(11,9)$-graph $G^{\rightarrow}$ of a Hanzi-gpw $Ha$ shown in Fig.\ref{fig:directed-tian-00} has its own half-directed Hanzi-matrix $A(G^{\rightarrow})$ shown in (\ref{eqa:directed-1}), and a $(11,9)$-digraph $\overrightarrow{G}$ of a Hanzi-gpw $Hb$ shown in Fig.\ref{fig:directed-tian-00} has its own directed Hanzi-matrix $A(\overrightarrow{G})$ shown in (\ref{eqa:directed-2}). The ``down-arrow $\downarrow$'' means that the elements in first line minus the elements in third line respectively, and the results are in second line in $A(\overrightarrow{G})$. We can derive TB-paws from half-directed Hanzi-matrices and directed Hanzi-matrices by replacing minus sign ``$-$'' with a letter ``$x$'' (or other letters, or signs). For example, the half-directed Hanzi-matrix $A(G^{\rightarrow})$ shown in (\ref{eqa:directed-1}) distributes us a TB-paw
$$D(A(G^{\rightarrow}))=66772910101010987x65421543282.$$

There is no equivalent relation between uniformly directed graceful labellings and set-ordered graceful labellings. In Fig.\ref{fig:uniformly-dir-graceful}, a directed tree $\overrightarrow{T}$ admits a uniformly directed graceful labelling, but its underlying tree $T$ does not admit a set-ordered graceful labelling. Thereby, directed labellings of directed graphs are complex than that of undirected graphs.

\begin{figure}[h]
\centering
\includegraphics[height=4cm]{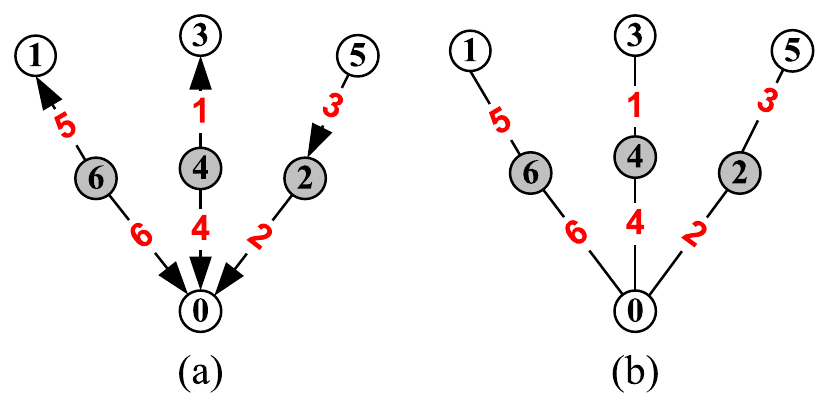}\\
\caption{\label{fig:uniformly-dir-graceful} {\small (a) A directed tree admits a uniformly directed graceful labelling; (b) a tree admits a graceful labelling.}}
\end{figure}

\begin{prop}\label{thm:GTC-conjecture}
Let $\overrightarrow{O}(T)$ be a set of directed trees having the same underlying tree $T$. If GTC-conjecture (resp. OGTC-conjecture) holds true, then $\overrightarrow{O}(T)$ contains at least two directed trees admitting uniformly directed (odd-)graceful labellings, and each directed tree of $\overrightarrow{O}(T)$ admits a directed (odd-)graceful labelling.
\end{prop}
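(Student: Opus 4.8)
The plan is to derive everything from a single graceful (resp. odd-graceful) labelling of the underlying tree $T$, whose existence is exactly what the assumed GTC-conjecture (resp. OGTC-conjecture) provides. Throughout I would read $\overrightarrow{O}(T)$ as the family of \emph{all} directed trees whose underlying graph is $T$, i.e. all orientations of $T$; write $q=|E(T)|$ and assume $q\ge 1$ (a nondegenerate tree), so that distinct orientations exist.

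First I would dispatch the second assertion: every directed tree in $\overrightarrow{O}(T)$ admits a directed (odd-)graceful labelling. Fix any $\overrightarrow{T}\in\overrightarrow{O}(T)$ and let $f$ be a graceful labelling of $T$ supplied by GTC. Put $g=f$ on $V(\overrightarrow{T})=V(T)$ and, following Definition \ref{defn:directed-graceful}, let $g(\overrightarrow{uv})=g(u)-g(v)$ on each arc. Since $f$ is graceful, $\{|g(\overrightarrow{uv})|:\overrightarrow{uv}\in A(\overrightarrow{T})\}=\{|f(u)-f(v)|:uv\in E(T)\}=[1,q]$, so $g$ is a directed graceful labelling of $\overrightarrow{T}$ (not necessarily uniform). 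For the odd-graceful case, take $f$ odd-graceful by OGTC and replace $[1,q]$ by $[1,2q-1]^o$ verbatim.

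Next I would exhibit the two distinguished members. With $f$ as above, orient $T$ by the rule: for each edge $uv$ with $f(u)>f(v)$ install the arc $\overrightarrow{uv}$ (pointing from the larger label to the smaller), and call the result $\overrightarrow{T}_+$. Then $g=f$ satisfies $g(\overrightarrow{uv})=f(u)-f(v)>0$ on every arc while $\{|g(\overrightarrow{uv})|\}=[1,q]$, so $g$ is a uniformly (positive) directed graceful labelling of $\overrightarrow{T}_+$. Reversing every arc produces $\overrightarrow{T}_-$; the same $g$ now gives $g(\overrightarrow{vu})=f(v)-f(u)<0$ throughout, which matches the ``resp. $g(\overrightarrow{uv})<0$'' branch of Definition \ref{defn:directed-graceful}, so $\overrightarrow{T}_-$ also carries a uniformly directed graceful labelling. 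Because $q\ge 1$, the orientations $\overrightarrow{T}_+$ and $\overrightarrow{T}_-$ differ on at least one edge and hence are two distinct elements of $\overrightarrow{O}(T)$. The odd-graceful statement is identical with $[1,2q-1]^o$ in place of $[1,q]$ and OGTC in place of GTC.

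The argument is essentially routine, so I do not expect a deep obstacle; the only points that need care are bookkeeping ones. First, the reading of $\overrightarrow{O}(T)$ as \emph{all} orientations of $T$ must be made explicit, since otherwise the ``at least two'' clause is vacuously false for a singleton family. Second, one must verify $\overrightarrow{T}_+\neq\overrightarrow{T}_-$, which holds precisely because a nondegenerate tree has at least one edge. Finally I would remark that, although GTC and OGTC are only conjectural and no polynomial-time labelling algorithm is known (as noted earlier in the paper), the proof uses only their existential content, so nothing further is required.
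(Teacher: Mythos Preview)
The paper states this proposition without proof; it is presented as an immediate observation, followed only by the counting remark that $\overrightarrow{O}(T)$ has $2^{p-1}$ members. Your argument is correct and is exactly the natural derivation the authors evidently had in mind: transport a single (odd-)graceful labelling $f$ of $T$ to any orientation to get the non-uniform claim, and orient every edge from the larger $f$-value to the smaller (respectively the reverse) to obtain two distinct orientations with uniformly positive (respectively negative) arc labels.
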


Since there are $2^{p-1}$ directed trees in $\overrightarrow{O}(T)$ with the same underlying tree $T$ of $p$ vertices, so we can make $2^{p-1}\cdot (3p-3)!$ TB-paws by directed Hanzi-matrices on $\overrightarrow{O}(T)$.

If the underlying graph of a half-directed $(p,q)$-graph $G^{\rightarrow}$, or a $(p,q)$-digraph $\overrightarrow{G}$ is disconnected, we can add a set $E^*$ of arcs or edges to join some vertices of $G^{\rightarrow}$ or $\overrightarrow{G}$ such that the resultant half-directed $G^{\rightarrow}+E^*$ or the resultant digraph $\overrightarrow{G}+E^*$ has a connected underlying graph, we say $G^{\rightarrow}+E^*$ to be a \emph{connected half-directed graph}, or $\overrightarrow{G}+E^*$ a \emph{connected digraph}, not based on directed paths. We define a flawed half-directed graceful labelling and a flawed directed graceful labelling on half-directed graphs and digraphs in the following Definition \ref{defn:flawed-half-directed-graceful} and Definition \ref{defn:flawed-directed-graceful}, respectively.

\begin{defn}\label{defn:flawed-half-directed-graceful}
$^*$ Suppose that a half-directed $(p,q)$-graph $G^{\rightarrow}$ has its disconnected underlying graph, and $G^{\rightarrow}+E^*$ is a connected half-directed $(p,q+q')$-graph, where $q'=|E^*|$. Let $f:V(G^{\rightarrow}+E^*) \rightarrow [0,q+q']$ (resp. $[0,2(q+q')-1]$) be a half-directed graceful labelling (resp. a half-directed odd-graceful labelling) $f$ of $G^{\rightarrow}+E^*$, then $f$ is called a \emph{flawed half-directed graceful labelling} (resp. \emph{flawed half-directed odd-graceful labelling}) of the half-directed $(p,q)$-graph $G^{\rightarrow}$.\qqed
\end{defn}

\begin{defn}\label{defn:flawed-directed-graceful}
$^*$ Suppose that the underlying graph of a $(p,q)$-digraph $\overrightarrow{G}$ is disconnected, and $\overrightarrow{G}+E^*$ is a connected directed $(p,q+q')$-graph, where $q'=|E^*|$. Let $f:V(\overrightarrow{G}+E^*) \rightarrow [0,q+q']$ (resp. $[0,2(q+q')-1]$) be a directed graceful labelling (resp. a directed odd-graceful labelling) $f$ of $\overrightarrow{G}+E^*$, then $f$ is called a \emph{flawed directed graceful labelling} (resp. \emph{flawed directed odd-graceful labelling}) of the $(p,q)$-digraph $\overrightarrow{G}$.\qqed
\end{defn}

We divide the arc label set $f(A(\overrightarrow{G}))$ into $f^+(A(\overrightarrow{G}))=\{f(\overrightarrow{uv})>0:~\overrightarrow{uv}\in A(\overrightarrow{G})\}$ and $f^-(A(\overrightarrow{G}))=\{f(\overrightarrow{uv})<0:~\overrightarrow{uv}\in A(\overrightarrow{G})\}$, namely, $f(A(\overrightarrow{G}))=f^+(A(\overrightarrow{G}))\cup f^-(A(\overrightarrow{G}))$. We say $f$ to be a \emph{uniformly (flawed) directed (odd-)graceful labelling} if one of $f^+(A(\overrightarrow{G}))=\emptyset $ and $f^-(A(\overrightarrow{G}))=\emptyset $ holds true. Determining the existence of directed (odd-)graceful labellings of digraphs are very difficult, even for directed trees (Ref. \cite{Yao-Yao-Cheng-2012}).

It is not hard to define other \emph{flawed directed labellings} or \emph{flawed half-directed labellings} of digraphs and half-directed graphs by the ways used in Definition \ref{defn:flawed-half-directed-graceful} and Definition \ref{defn:flawed-directed-graceful}. For example, we modify the 6c-labelling in \cite{Yao-Sun-Zhang-Mu-Sun-Wang-Su-Zhang-Yang-Yang-2018arXiv} into a flawed directed 6C-labelling:

\begin{defn}\label{defn:directed-6C-labelling}
$^*$ A total labelling $f:V(\overrightarrow{G})\cup A(\overrightarrow{G})\rightarrow [1,p+q]$ for a bipartite and connected $(p,q)$-digraph $\overrightarrow{G}$ is a bijection and holds:

(i) (e-magic) $f(\overrightarrow{uv})+|f(u)-f(v)|=k$, a constant;

(ii) (ee-difference) each arc $\overrightarrow{uv}$ matches with another arc $\overrightarrow{xy}$ holding $f(\overrightarrow{uv})=|f(x)-f(y)|$ (or $f(\overrightarrow{uv})=2(p+q)-|f(x)-f(y)|$);

(iii) (ee-balanced) let $s(\overrightarrow{uv})=|f(u)-f(v)|-f(\overrightarrow{uv})$ for $\overrightarrow{uv}\in A(\overrightarrow{G})$, then there exists a constant $k'$ such that each arc $\overrightarrow{uv}$ matches with another arc $\overrightarrow{u'v'}$ holding $s(\overrightarrow{uv})+s(\overrightarrow{u'v'})=k'$ (or $2(p+q)+s(\overrightarrow{uv})+s(\overrightarrow{u'v'})=k'$) true;

(iv) (EV-ordered) $f_{\min}(V(\overrightarrow{G}))>f_{\max}(A(\overrightarrow{G}))$ (or $\max f(V(\overrightarrow{G}))<\min f(A(\overrightarrow{G}))$, or $f(V(\overrightarrow{G}))\subseteq f(A(\overrightarrow{G}))$, or $f(A(\overrightarrow{G}))\subseteq f(V(\overrightarrow{G}))$, or $f(V(\overrightarrow{G}))$ is an odd-set and $f(A(\overrightarrow{G}))$ is an even-set);

(v) (ve-matching) there exists a constant $k''$ such thateach arc $\overrightarrow{uv}$ matches with one vertex $w$ such that $f(\overrightarrow{uv})+f(w)=k''$, and each vertex $z$ matches with one arc $\overrightarrow{xy}$ such that $f(z)+f(\overrightarrow{xy})=k''$, except the \emph{singularity} $f(x_0)=\lfloor \frac{p+q+1}{2}\rfloor $;

(vi) (set-ordered) $\max f(X)<\min f(Y)$ (or $\min f(X)>\max f(Y)$) for the bipartition $(X,Y)$ of $V(\overrightarrow{G})$.

We call $f$ a \emph{directed 6C-labelling}.\qqed
\end{defn}

\begin{figure}[h]
\centering
\includegraphics[height=6cm]{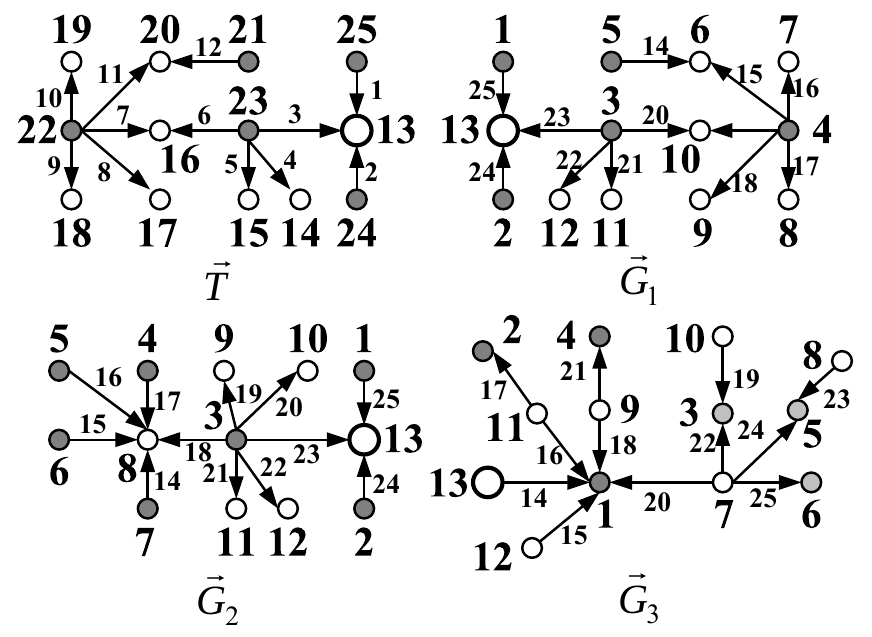}\\
\caption{\label{fig:directed-6c-labelling} {\small Four bipartite and connected digraphs admit directed 6C-labellings.}}
\end{figure}

Four digraphs shown in Fig.\ref{fig:directed-6c-labelling} admit their own directed 6C-labellings.

We can remove a set of arcs from the bipartite and connected $(p,q)$-digraph $\overrightarrow{G}$ admitting a directed 6C-labelling $f$ such that the resultant digraph $\overrightarrow{H}$ is disconnected. Then we say that the digraph $\overrightarrow{H}$ admits a \emph{flawed directed 6C-labelling} $f$ if $\overrightarrow{H}$ holds each of the restrict conditions (i)-(vi) of Definition \ref{defn:directed-6C-labelling}.

Proposition \ref{thm:obtained-caterpillar-directed-graceful} and Proposition \ref{thm:obtained-lobster-directed-graceful-11} are two corollaries of caterpillars admitting (odd-)graceful labellings.

\begin{prop}\label{thm:obtained-caterpillar-directed-graceful}
A digraph $\overrightarrow{T}$ obtained by orienting all of $q$ edges of a caterpillar admits a uniformly directed (odd-)graceful labelling $f$ such that $\{f(\overrightarrow{uv}):~\overrightarrow{uv}\in A(\overrightarrow{T})\}=[1,q]$ (or $[1,2q-1]^o$).
\end{prop}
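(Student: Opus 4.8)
The plan is to reduce the statement to the classical fact that every caterpillar admits a graceful labelling and an odd-graceful labelling (see \cite{Gallian2018}, and recall that caterpillars even admit \emph{set-ordered} graceful labellings, as used earlier in this paper), and then to convert such a labelling into a \emph{uniformly} directed (odd-)graceful labelling by orienting each edge from its larger-labelled endpoint to its smaller-labelled one. The key observation is that, once the orientation is chosen this way, the signed arc value $f(\overrightarrow{uv})=f(u)-f(v)$ equals the unsigned edge value $|f(u)-f(v)|$, so the gracefulness (resp. odd-gracefulness) of $f$ transfers verbatim to the arc labels, while positivity of every arc value delivers the ``uniformly'' clause of Definition \ref{defn:directed-graceful}.

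Concretely, I would first fix a caterpillar $T$ with $q$ edges together with a graceful labelling $g:V(T)\to[0,q]$; since $g$ is injective on $V(T)$, we have $g(u)\neq g(v)$ for every edge $uv\in E(T)$. Define the digraph $\overrightarrow{T}$ on the same vertex set by declaring $\overrightarrow{uv}\in A(\overrightarrow{T})$ whenever $uv\in E(T)$ and $g(u)>g(v)$. Then for each arc, $g(\overrightarrow{uv})=g(u)-g(v)=|g(u)-g(v)|>0$, and as $uv$ ranges over $E(T)$ the numbers $|g(u)-g(v)|$ range exactly over $[1,q]$ by the definition of a graceful labelling. Hence $\{g(\overrightarrow{uv}):\overrightarrow{uv}\in A(\overrightarrow{T})\}=[1,q]$ with all arc values positive, so $g$ is a uniformly directed graceful labelling of $\overrightarrow{T}$; this gives the first assertion. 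For the odd-graceful case I would repeat the argument starting from an odd-graceful labelling $h:V(T)\to[0,2q-1]$ of the caterpillar (also classical, \cite{Gallian2018}), orient each edge from larger to smaller $h$-value, and note that $\{h(\overrightarrow{uv})\}=\{|h(u)-h(v)|:uv\in E(T)\}=[1,2q-1]^o$ with every arc value positive. I would also remark that reversing all arcs yields a second digraph on the same underlying caterpillar on which the same numerical labelling is uniformly directed (odd-)graceful with all arc values negative, which is precisely the ``at least two'' phenomenon recorded in Proposition \ref{thm:GTC-conjecture}.

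The only point requiring care is the precise reading of ``a digraph obtained by orienting all of the $q$ edges'': the conclusion fails for an arbitrary orientation — for instance the directed path $\overrightarrow{P_3}=\overrightarrow{v_1v_2}\,\overrightarrow{v_2v_3}$ forces $f(v_1)-f(v_3)$ to be the sum of two distinct elements of $\{1,2\}$, i.e. $3$, impossible inside $[0,2]$ — so what is being claimed is the existence of such an orientation, namely the label-descending one above. Beyond this the argument is pure bookkeeping with signs and difference sets, and I do not expect any step to be genuinely hard: the entire content is carried by the known (set-ordered) gracefulness and odd-gracefulness of caterpillars. If a self-contained proof of caterpillar gracefulness were wanted, one would label the spine in zig-zag order $0,q,1,q-1,\dots$ and attach the legs greedily, but I would prefer simply to cite it.
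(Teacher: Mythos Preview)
Your proposal is correct and matches the paper's own treatment: the paper does not give a standalone proof but simply declares the proposition (together with the lobster analogue) to be ``two corollaries of caterpillars admitting (odd-)graceful labellings,'' which is precisely the reduction you carry out by orienting each edge from larger to smaller label. Your additional remark that the statement must be read existentially (not for an arbitrary orientation) is well taken and consistent with the paper's surrounding context, in particular Proposition~\ref{thm:GTC-conjecture} and Theorem~\ref{thm:set-ordered-vs-uniformly-directed-graceful}.
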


\begin{prop}\label{thm:obtained-lobster-directed-graceful-11}
A digraph $\overrightarrow{T}$ obtained by orienting all of $q$ edges of a lobster admits a uniformly directed odd-graceful labelling $f$ such that $\{f(\overrightarrow{uv}):~\overrightarrow{uv}\in A(\overrightarrow{T})\}=[1,2q-1]^o$.
\end{prop}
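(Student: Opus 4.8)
\textbf{Proof proposal for Proposition \ref{thm:obtained-lobster-directed-graceful-11}.}

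The plan is to reduce the statement about the digraph $\overrightarrow{T}$ to a known labelling fact about the underlying lobster $T$, and then choose the orientation carefully so that the arc-value function $f(\overrightarrow{uv})=f(u)-f(v)$ is \emph{uniformly} signed (all values positive) and hits exactly $[1,2q-1]^o$. First I would recall that a lobster (a tree in which deleting all leaves yields a caterpillar) is known to admit an odd-graceful labelling; in fact every caterpillar admits a \emph{set-ordered} odd-graceful labelling, and lobsters obtained by attaching legs to a caterpillar's spine inherit an odd-graceful labelling $h:V(T)\to [0,2q-1]$ with $h(E(T))=\{h(uv)=|h(u)-h(v)|:uv\in E(T)\}=[1,2q-1]^o$. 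I would cite this as the input fact (it is the content of the open/known results invoked just before the Proposition, namely that caterpillars and lobsters are odd-graceful, together with Definition \ref{defn:proper-bipartite-labelling-ongraphs}).

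The key steps, in order: (1) Fix an odd-graceful labelling $h$ of the underlying lobster $T$ with vertex bipartition $(X,Y)$; for a lobster one may take $h$ to be set-ordered, so $\max h(X)<\min h(Y)$. (2) Orient each edge $uv\in E(T)$ with $u\in X$, $v\in Y$ as the arc $\overrightarrow{vu}$, i.e. from the $Y$-end to the $X$-end (the end with the larger label toward the end with the smaller label). Call the resulting digraph $\overrightarrow{T}$; its underlying graph is $T$, so $|A(\overrightarrow{T})|=q$. (3) Define $f=h$ on vertices and set $f(\overrightarrow{vu})=f(v)-f(u)=h(v)-h(u)>0$ for each arc. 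Since for every edge $uv$ with $u\in X,\ v\in Y$ we have $h(v)>h(u)$ by set-orderedness, $|h(u)-h(v)|=h(v)-h(u)$, hence $f(\overrightarrow{vu})=h(uv)$ for the corresponding edge. (4) Conclude $\{f(\overrightarrow{uv}):\overrightarrow{uv}\in A(\overrightarrow{T})\}=\{h(uv):uv\in E(T)\}=[1,2q-1]^o$ and every arc value is positive, so $f$ is a uniformly directed odd-graceful labelling by Definition \ref{defn:directed-graceful}. (5) Verify the injectivity requirement on vertices: $f$ is injective on $V(\overrightarrow{T})$ because $h$ is (an odd-graceful labelling satisfies $|h(V)|=p$), with the usual caveat handled exactly as in the caterpillar case.

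The main obstacle I anticipate is Step (1): the statement implicitly rests on "lobsters are odd-graceful," which is not proved in this excerpt and is genuinely delicate in full generality (odd-gracefulness of all lobsters is itself a much-studied question). If one only wants the Proposition as a corollary of caterpillar results — which is how the surrounding text frames Proposition \ref{thm:obtained-caterpillar-directed-graceful} and Proposition \ref{thm:obtained-lobster-directed-graceful-11} — then the honest route is to restrict to the class of lobsters for which a set-ordered odd-graceful labelling is already known (e.g. lobsters whose spine is a caterpillar with each spine vertex carrying legs of a controlled form), construct $h$ there by the standard "zig-zag" assignment on the bipartition, and then Steps (2)–(5) are purely mechanical. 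The orientation choice in Step (2) is the only genuinely "directed" ingredient, and it is forced: to make the labelling uniformly positive one must orient every edge from its higher-labelled end to its lower-labelled end, which is consistent precisely because the chosen $h$ is set-ordered. Everything after that is bookkeeping, so I would not grind through it in the write-up.
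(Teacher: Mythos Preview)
Your overall shape is right and matches the paper's intended one-line argument: take an odd-graceful labelling of the underlying lobster, then orient every edge from its larger-labelled end to its smaller-labelled end so that each arc value $f(\overrightarrow{uv})=f(u)-f(v)$ is positive and equals the original edge label. The paper treats this proposition as an immediate corollary of the odd-gracefulness of lobsters (reference \cite{Zhou-Yao-Chen-Tao2012}), with no further proof.

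There is, however, a genuine but easily repaired gap in your write-up. You insist that the odd-graceful labelling $h$ be \emph{set-ordered} and then orient edges by bipartition class (from $Y$ to $X$). Set-orderedness of odd-graceful labellings on arbitrary lobsters is \emph{not} what \cite{Zhou-Yao-Chen-Tao2012} gives you, and the paper itself (see the discussion around Fig.~\ref{fig:a-counterexample}) stresses that set-orderedness is a strictly stronger property that can fail even for trees. So your Step~(1) as stated does not follow from the cited input. The fix is to drop set-orderedness entirely: for \emph{any} odd-graceful labelling $h$ of $T$, each edge $uv$ has one end with the larger label; orient that edge from the larger-labelled end to the smaller-labelled end, edge by edge, with no reference to the bipartition. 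Then every arc value is positive and equals $|h(u)-h(v)|$, so the arc-label set is $[1,2q-1]^o$ and $f$ is uniformly directed odd-graceful. Your remark that the orientation is ``consistent precisely because $h$ is set-ordered'' is a red herring---no global consistency with the bipartition is required by Definition~\ref{defn:directed-graceful}.

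Finally, your worry that ``lobsters are odd-graceful'' is itself delicate is misplaced here: that is exactly the result of \cite{Zhou-Yao-Chen-Tao2012}, already in the paper's bibliography, so you may invoke it without hedging or restricting the class of lobsters.
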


\begin{thm}\label{thm:set-ordered-vs-uniformly-directed-graceful}
There is a way to orient all edges of a connected graph $G$ admitting a set-ordered graceful labelling $f$ such that it results in a connected digraph $\overrightarrow{G}$ admitting a uniformly directed graceful labelling $g$ holding $f(E(G))=g(A(\overrightarrow{G}))$ true.
\end{thm}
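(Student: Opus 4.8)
The plan is to take a connected $(p,q)$-graph $G$ with a set-ordered graceful labelling $f$ and its bipartition $(X,Y)$ satisfying $\max f(X) < \min f(Y)$, and to orient every edge from its $X$-endpoint to its $Y$-endpoint. That is, for each edge $xy \in E(G)$ with $x\in X$ and $y\in Y$, introduce the arc $\overrightarrow{yx}$ oriented so that the tail has the larger label; concretely, since $f(y) > f(x)$ for every edge, orienting from the larger-labelled end to the smaller-labelled end makes $g(\overrightarrow{yx}) = f(y) - f(x) = |f(y)-f(x)| > 0$. Call the resulting digraph $\overrightarrow{G}$ and set $g = f$ on $V(\overrightarrow{G})$.

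First I would verify that $g$ is a directed graceful labelling of $\overrightarrow{G}$ in the sense of Definition \ref{defn:directed-graceful}: we need $g(V(\overrightarrow{G})) \subseteq [0,q]$ with the vertex labels distinct (immediate, since $f$ is graceful), and $\{|g(\overrightarrow{uv})| : \overrightarrow{uv} \in A(\overrightarrow{G})\} = [1,q]$. The key observation is that each arc $\overrightarrow{yx}$ of $\overrightarrow{G}$ sits over the edge $xy$ of $G$, and $|g(\overrightarrow{yx})| = |f(y)-f(x)|$ is exactly the induced graceful edge label $f(xy)$. Since $f$ is graceful, $\{f(xy) : xy \in E(G)\} = [1,q]$, hence $\{|g(\overrightarrow{uv})| : \overrightarrow{uv} \in A(\overrightarrow{G})\} = [1,q]$, which also gives $f(E(G)) = g(A(\overrightarrow{G}))$ as required in the statement. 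Second, I would check uniformity: by the set-ordered property, for every edge $xy$ with $x\in X$, $y\in Y$ we have $f(x) \le \max f(X) < \min f(Y) \le f(y)$, so $g(\overrightarrow{yx}) = f(y)-f(x) > 0$ for every arc; thus $g^-(A(\overrightarrow{G})) = \emptyset$ and $g$ is uniformly directed graceful. Third, I would note that the underlying graph of $\overrightarrow{G}$ is $G$ itself, which is connected by hypothesis, so the connectivity requirement of Definition \ref{defn:directed-graceful} is met.

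The argument is short because the set-ordered hypothesis is doing almost all the work: it guarantees a globally consistent ``big-to-small'' orientation, which is precisely what forces every arc value $f(y)-f(x)$ to be positive and to coincide with the absolute-value edge label of $f$. The only point requiring a little care — and the place I would expect a referee to look hardest — is confirming that the orientation is well defined on \emph{every} edge, i.e.\ that $G$ being bipartite with bipartition $(X,Y)$ means each edge genuinely has one end in $X$ and one in $Y$, so there is never an ambiguous edge within a part; this is built into the definition of a set-ordered labelling (a bipartite graph with bipartition $(X,Y)$), so no odd cycle obstruction arises. I would close by remarking that the same recipe, applied to a set-ordered odd-graceful labelling, yields a uniformly directed odd-graceful labelling with $f(E(G)) = g(A(\overrightarrow{G})) = [1,2q-1]^o$, which explains the parenthetical ``(odd-)'' phrasing elsewhere in the paper.
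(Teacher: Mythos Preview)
The paper states this theorem without proof, so there is no authorial argument to compare against; your proposal supplies exactly the natural construction one would expect and is correct. One small wording slip: you first say ``orient every edge from its $X$-endpoint to its $Y$-endpoint'' but then (correctly) introduce the arc $\overrightarrow{yx}$ with tail at the larger label $y\in Y$ --- the orientation you actually use, and need, is from $Y$ to $X$, so just fix that opening clause. Everything else (verification of the arc-label set, uniformity via the set-ordered inequality, connectivity of the underlying graph, and the odd-graceful remark) is sound.
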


\section{Problems in researching Hanzi-graphs and Hanzi-gpws}

There are a number of unanswered questions generated from our researching Hanzi-graphs and Hanzi-gpws for deeply study. We, in the first subsection, propose these questions, some of them may be challenged and excited; and discuss some questions in the second subsection.
\subsection{Problems for further research}
\begin{asparaenum}[\textrm{Problem}-1. ]
\item \textbf{Find} a Hanzi $H$ such that $D(H)=s(T^{CC})-s(S^{CC})$ is the largest in all Hanzis.
\item \textbf{Find} a Hanzi $H_{abcd}$ which is a noun and can yields the largest numbers of word groups and idioms with other Hanzis. For example, ``$H_{4043}$ (= man)'' can construct many word groups and idioms shown in Fig.\ref{fig:man-word-group} and Fig.\ref{fig:man-chengyu}. As known, a Hanzi $H_{5551}$ has over 4400 word groups matching with it, and the number of  Hanzis matching with $H_{5551}$ is over 4000 too.
\item \textbf{Construct} various Hanzi idiom-graphs (see Fig.\ref{fig:2-idioms}), and study them.
\item Given a group of $m$ Hanzis, how many paragraphs made by this group of Hanzis are there? And how many such paragraphs are meaningful in Chinese? Some numbers $n!$ are given in Table-1 in Appendix A.
\item \textbf{Conjecture 1}: $T=\bigcup ^m_{i=1}T_i$ is a forest having trees $T_1,T_2,\dots ,T_m$, then $T$ admits a flawed graceful labelling.
\item \textbf{Conjecture 2}: $T=\bigcup ^m_{i=1}T_i$ is a forest having trees $T_1,T_2,\dots ,T_m$, then $T$ admits a flawed odd-graceful labelling.
\item Can we have mathematical approaches to design Chinese ``Duilian'' couplets? See examples shown in Fig.\ref{fig:formula-couplets}. How many Chinese sentences having real meanings can be produced by a mathematical formula like one of (a) and (b) shown in Fig.\ref{fig:formula-couplets}? Clearly, it is not easy to answer this question since we do not know how many such mathematical formulae are there first of all. One may use an approach called ``'graph network' introduced in \cite{Battaglia-27-authors-arXiv1806-01261v2} to research this question by Deep Learning, or Strong Deep Learning. On the other hands, a language formula ``'$x,x,x,xxyz$' shown in Fig.\ref{fig:formula-couplets} (a) can be considered as a public key, so there are many private keys matching with it.

\begin{figure}[h]
\centering
\includegraphics[height=4.4cm]{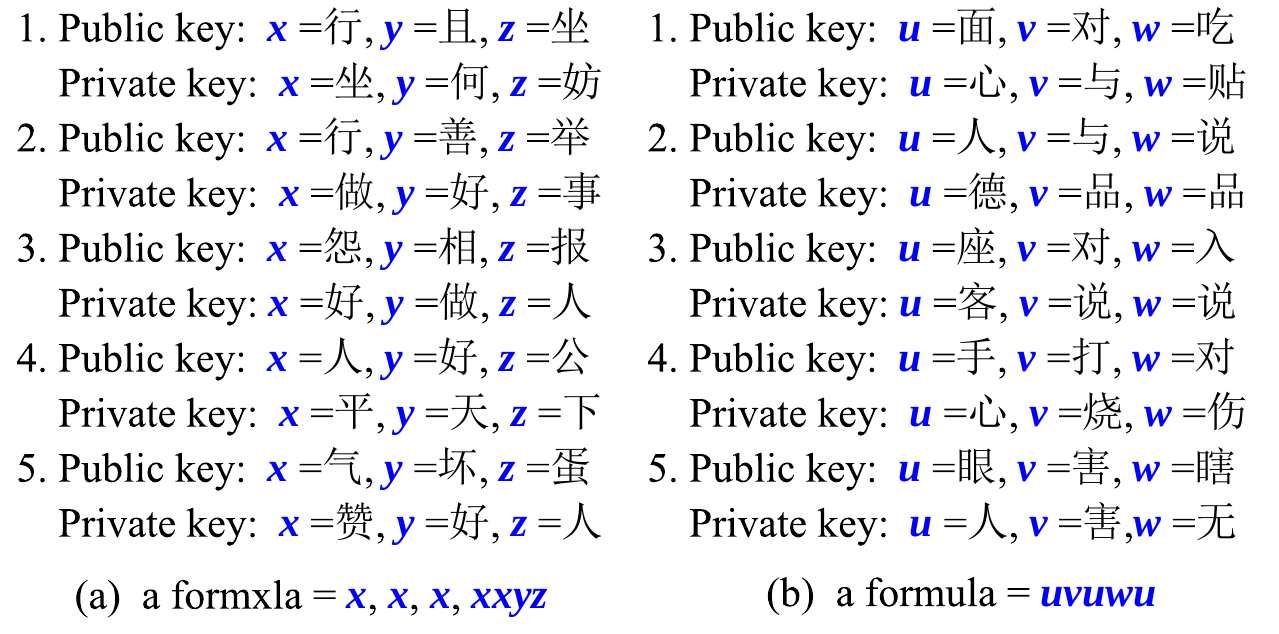}\\
\caption{\label{fig:formula-couplets} {\small Two groups of formula couplets.}}
\end{figure}

\item Suppose that a graph $G$ can be split into some Hanzi-graphs $H_1,H_2,\dots ,H_n$ by the v-split operation, we write $n=vs(G)$. \textbf{Find} $\min \{vs(G)\}$ over all v-splitting results to $G$.
\item In Fig.\ref{fig:2-hanzi-plane}, we put two Hanzis $H_{4535}$ and $H_{8630}$ into xOy-plane, and get their graphic expressions, called \emph{analytic Hanzis}. \textbf{Find} possible applications of analytic Hanzis in graphical passwords or other areas. For example, we can express some strokes of Hanzis by smooth functions having derivatives, it may be interesting in researching Chinese handwriting, ancient Hanzis and `Jia gu wen', and so on.
\item \textbf{Do} both non-isomorphic Hanzi-gpws introduce the same TB-paw?
\item \textbf{Use} Hanzis or Hanzi-gpws to encrypt a dynamic network.
\item Build up hz-$\varepsilon$-graphs for some $\varepsilon$-labelling based on Hanzis from GB2312-80 \cite{GB2312-80}, and then show properties of these hz-$\varepsilon$-graphs.
\item In \cite{YAO-SUN-WANG-SU-XU2018arXiv}, the authors defined: ``Let $\eta$-labeling be a given graph labelling, and let a connected graph $G$ admit an $\eta$-labeling. If every connected proper subgraph of $G$ also admits a labelling to be $\eta$-labeling, then we call $G$ a \emph{perfect $\eta$-labeling graph}.'' Caterpillars are perfect $\eta$-labeling graphs if these $\eta$-labelings are listed in Theorem \ref{defn:group-flawed-labellings}, and each lobster is a perfect (odd-)graceful labeling graph. Conversely, we ask for: \emph{If every connected proper subgraph of a connected graph $G$ admits an $\eta$-labelling, then does $G$ admit this $\eta$-labelling too}?
\item Suppose that each tree $H_i$ of $p$ vertices has $k$ leaves and admits an $\varepsilon$-labelling $f_i$ with $i=1,2$. If any leaf $x'$ of $H_i$ corresponds a leaf $x''$ of $H_{3-i}$ such that $H_i-x'$ is isomorphic to $H_{3-i}-x''$ holding $f_i(y)=f_{3-i}(y)$ for $y\in V(H_{3-i}-x'')=V(H_i-x')$ with $i=1,2$, then can we claim that $H_1\cong H_2$ with $f_1(y)=f_2(y)$ for $y\in V(H_1)=V(H_2)$?

\item How many Hanzi-graphs is a simple graph decomposed  by the operations defined in Definition \ref{defn:split-operation-combinatoric}?
\item How many Hanzi-graphs can a simple graph be decomposed  only by the vertex-split operation defined in Definition \ref{defn:split-operation-combinatoric}?
\item What is a simple graph that can be decomposed into different Hanzi-graphs by the vertex-split operation defined in Definition \ref{defn:split-operation-combinatoric}? Or we consider this question only on trees, see some examples shown in Fig.\ref{fig:split-stars-Hanzi}.

\begin{figure}[h]
\centering
\includegraphics[height=4.8cm]{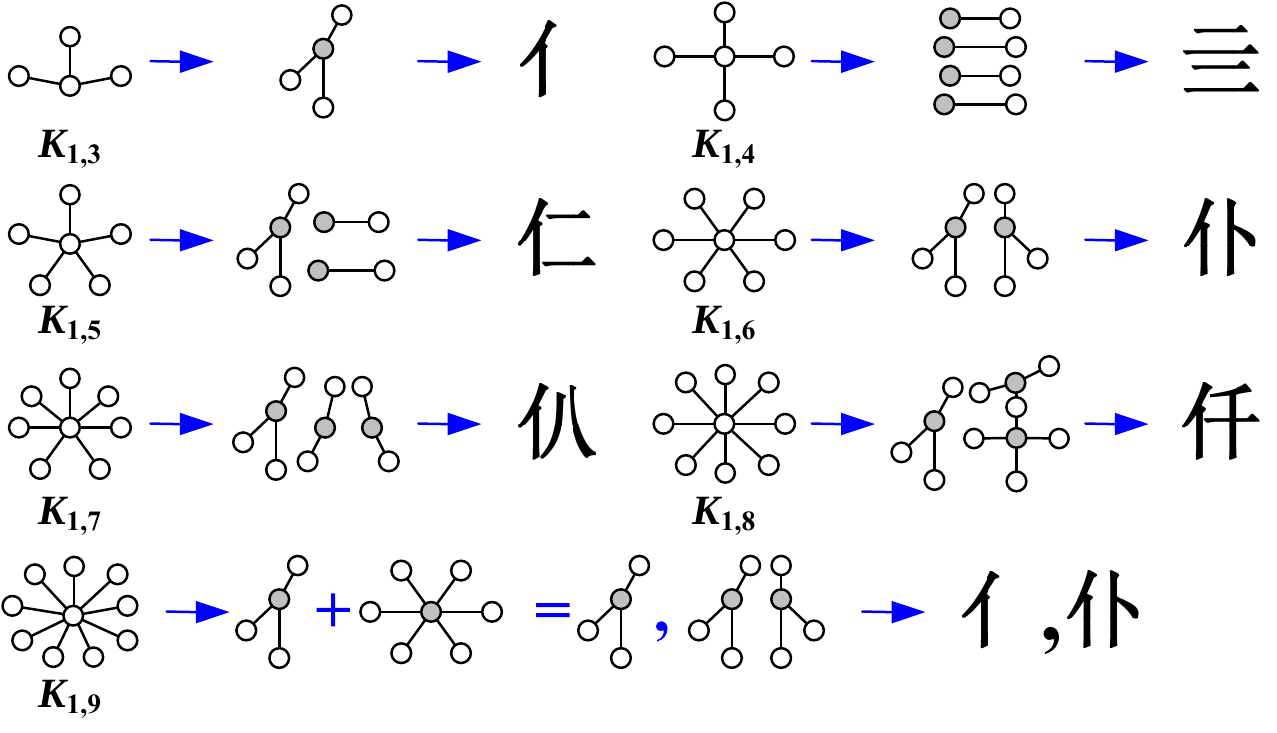}\\
\caption{\label{fig:split-stars-Hanzi} {\small Stars $K_{1,j}$ with $j\in [3,9]$ can be decomposed into Hanzi-graphs only by the vertex-split operation defined in Definition \ref{defn:split-operation-combinatoric}.}}
\end{figure}
\item Since any graph can be decomposed into Hanzi-graphs, \textbf{determine} the smallest number of such Hanzi-graphs from the decomposition of this graph (Ref. \cite{Yao-Sun-Zhang-Mu-Wang-Xu-2018}). We may provide a public key (graph) $G$ to ask for private keys made by some groups of Hanzi-graphs from the decomposition of $G$, and these groups of Hanzi-graphs produce meaningful sentences or paragraphs in Chinese.

\item In Fig.\ref{fig:directed-rrhgztxtp-digraph}, labelling a digraph $\overrightarrow{H}$ with a \emph{Hanzi labelling} $f$ deduces a directed Hanzi-gpw, where $f:V(\overrightarrow{H})\rightarrow H_F$, and $H_F$ is a set of Hanzis defined in \cite{GB2312-80}. For any connected digraph $\overrightarrow{L}$, can we label its vertices with Hanzis such that $\overrightarrow{L}$ can be decomposed into several arc-disjoint digraphs $\overrightarrow{L}_1,\overrightarrow{L}_2,\dots ,\overrightarrow{L}_m$, and each digraph $\overrightarrow{L}_j$ has at least one arc and can express a \emph{complete meaning paragraph} in Chinese (see examples shown in Fig.\ref{fig:directed-rrhgztxtp-digraph}).

\begin{figure}[h]
\centering
\includegraphics[height=9cm]{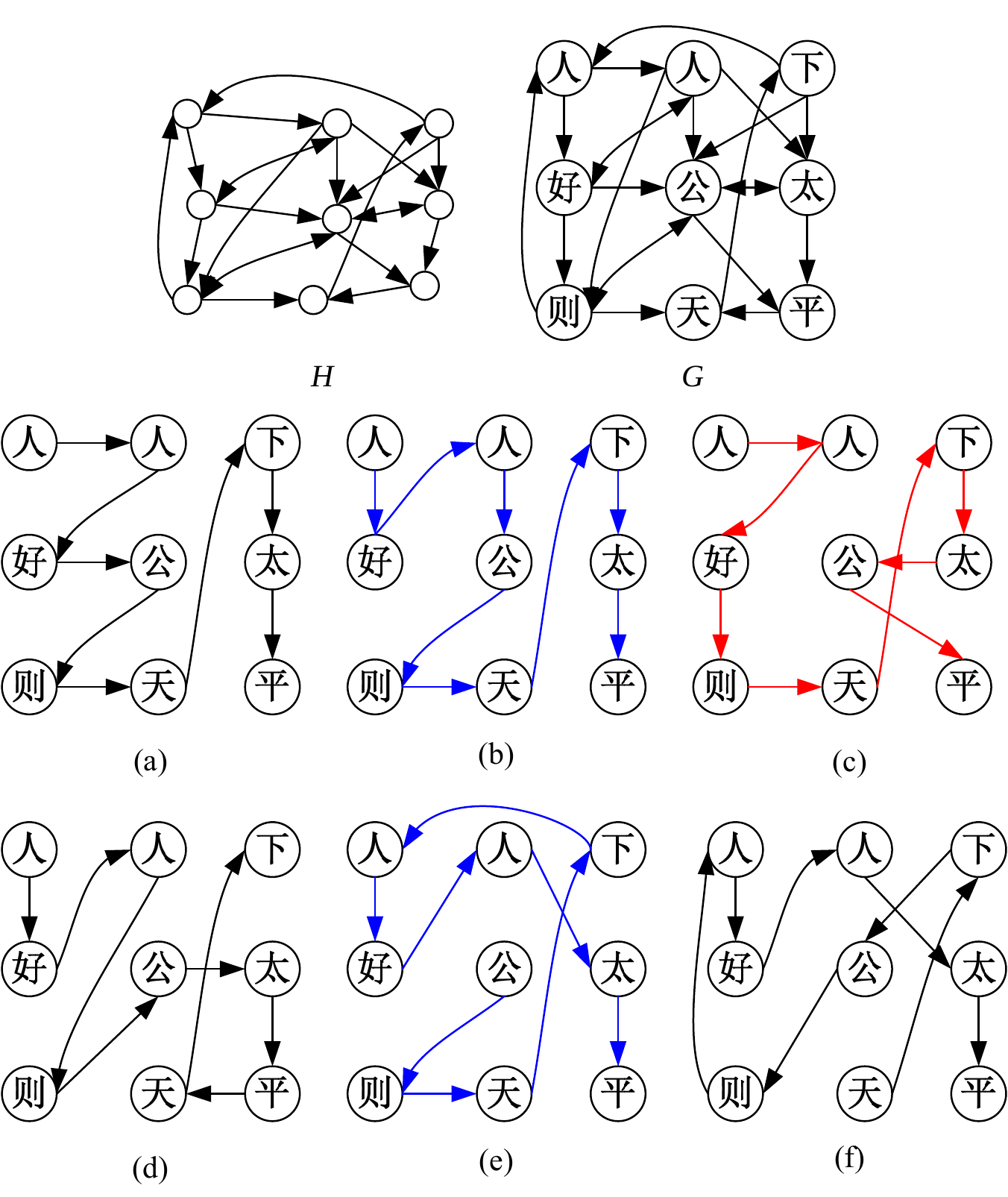}\\
\caption{\label{fig:directed-rrhgztxtp-digraph} {\small A digraph $G$ labellined with Hanzis is made by six directed paths shown in (a)-(f).}}
\end{figure}

\item In Definition \ref{defn:generalization-flawed-labellings-00}, it is easy to see $m-1=\min\{|E_j|:j\in [1,n]\}$, we hope to determine $\max \{|E_j|:j\in [1,n]\}$ for each disconnected graph $G=\bigcup^m_{i=1}G_i$ having disjoint components $G_1,G_2,\dots, G_m$.
\item \label{prob:anti-problem} How to replace each triangular vertex (see Fig.\ref{fig:anti-problem}) of a \emph{public key} $T$ being a labelled graph with a total labelling $f$ by a connected and labelled graph to obtain a \emph{private key} $H$ (Hanzi-graph) with a set-ordered graceful labelling $g$, such that the disconnected graph $H-E(T)$ admits a flawed set-ordered graceful labelling $h$ holding $f(V(T))\subset g(V(T))=h(V(H-E(T)))$ true? We have the following questions:

\quad Problem-\ref{prob:anti-problem}-1. Fig.\ref{fig:anti-problem} shows two public keys $T_1$ and $T_2$ to ask for solutions, and Fig.\ref{fig:anti-problem-solutions} shows two private keys $H_1$ and $H_2$ to answer the questions proposed above.

\quad Problem-\ref{prob:anti-problem}-2. In general, the number of connected private keys corresponding a public key $T$ may be two or more. In other word, a public key (Topsnut-gpw) $T$ with triangular vertices and labelled well may have private keys (Topsnut-gpws) $H_1,H_2,\dots ,H_m$ with $m\geq 2$, such that $H_i$ is not isomorphic to $H_j$ if $i\neq j$.

\begin{figure}[h]
\centering
\includegraphics[height=2.6cm]{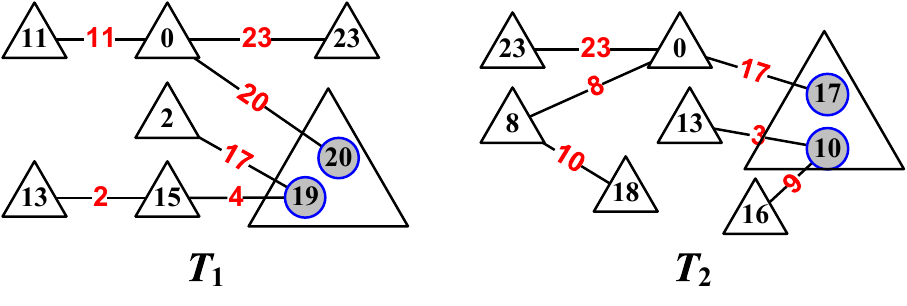}\\
\caption{\label{fig:anti-problem} {\small Two public key Topsnut-gpw $T_1$ and $T_2$ for finding private key Topsnut-gpws.}}
\end{figure}

\begin{figure}[h]
\centering
\includegraphics[height=6.8cm]{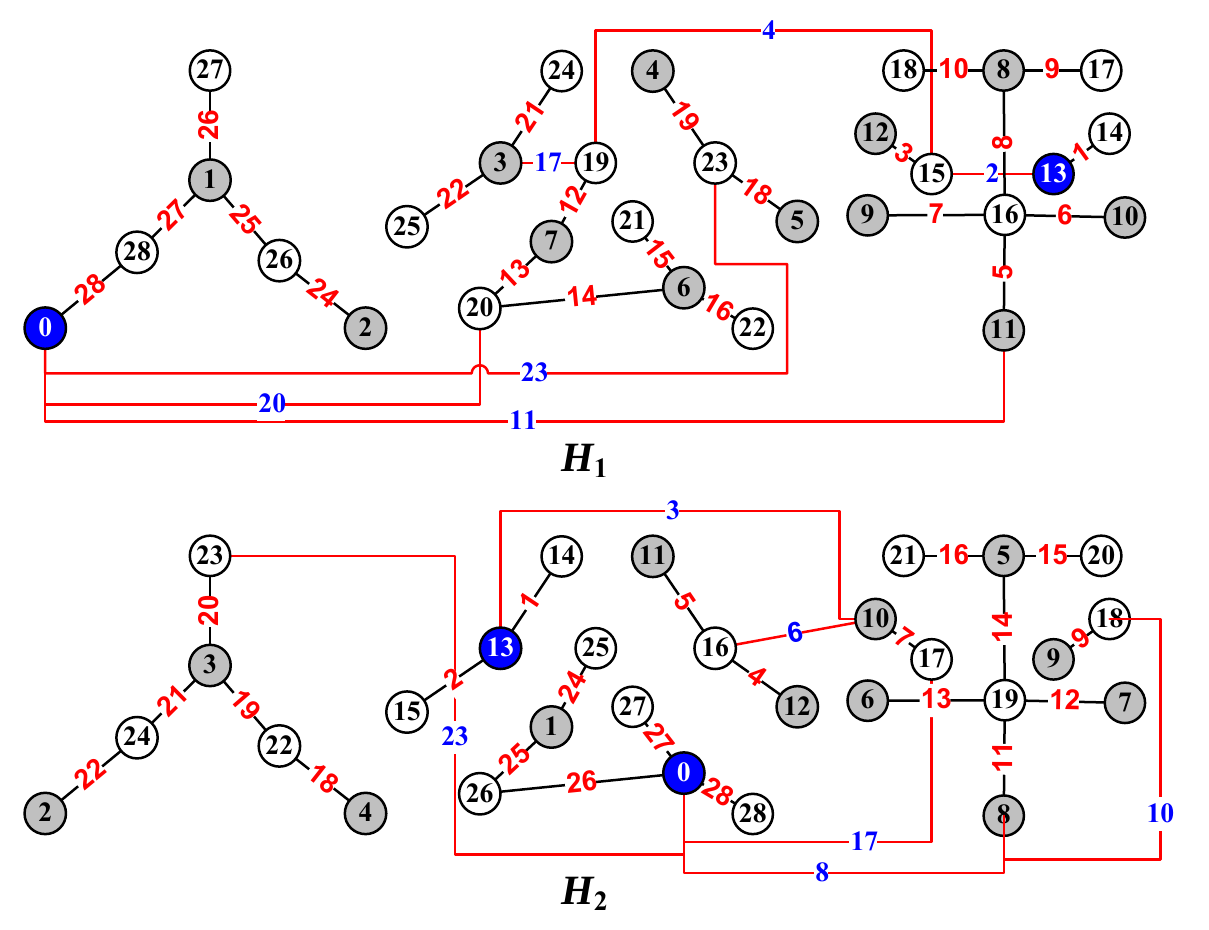}\\
\caption{\label{fig:anti-problem-solutions} {\small Two private key Topsnut-gpws $H_1$ and $H_2$ as two solutions for the question based on two public key Topsnut-gpws $T_1$ and $T_2$ shown in Fig.\ref{fig:anti-problem}.}}
\end{figure}

\item \textbf{Determine} graphs admitting $0$-rotatable set-ordered system of (odd-)graceful labellings. If every tree admits $0$-rotatable system of graceful labellings, then it sets down a famous conjecture ``\emph{Every tree admits a graceful labelling}'' proposed first by Rosa in 1967 (Ref. \cite{A-Rosa-1966}), and most graph labeling approaches trace their origin to a labelling (also, $\beta$-valuation) introduced by Rosa (Ref. \cite{Gallian2018}). However, determining graphs admitting $0$-rotatable set-ordered system of (odd-)graceful labellings is far more difficult to finding a graceful labelling for a graph, even for trees.

\item \textbf{Find} graph labellings of \emph{Fibonacci self-similar Hanzi-networks} obtained by FIBONACCI-EDGE algorithm or FIBONACCI-VERTEX algorithm.
\item \label{prob:matching-pairs} Given a disconnected graph $H=\bigcup^m_{j=1}K^j_{2}$ having disjoint subgraphs $K^1_{2},K^2_{2},\dots ,K^m_{2}$ holding $K^j_{2}\cong K_2$ for $j\in[1,m]$. Assume that $H$ admits a total labelling $h:V(H)\cup E(H)\rightarrow [0,M]$, \textbf{find} another disconnected graph $G=\bigcup^{m+1}_{i=1}G_{i}$ having disjoint subgraphs $G_{1},G_{2},\dots ,G_{m+1}$ and admitting a flawed graceful labelling $g$, such that the graph $H\odot G$, obtained by coinciding the vertices of two graphs $H$ and $G$ labelled with the same labels into one vertex, is connected, and admits a graceful labelling $F$ deduced by $h$ and $g$, write $F=\langle h\odot g \rangle$, and holds $V(H\odot G)=V(G)$ and $E(H\odot G)=E(H)\cup E(G)$. See such a disconnected graph $H$ (as a public key) shown in Fig.\ref{fig:matching-problem-0} and another disconnected graph $G$ (as a private key) shown in Fig.\ref{fig:matching-problem-1}. It is not hard to see that there are two or more private keys like $G$ shown in Fig.\ref{fig:matching-problem-1} to be the matchings with the public key $H$. In fact, there are two or more such matching pairs $\langle H,G\rangle$.
\item \label{prob:group-encryption} For a $(p,q)$-graph $H$ in Definition \ref{defn:graphic-group-coloring}, \textbf{determine} parameters $\chi _{gg}(H)$, $\chi' _{gg}(H)$, $\chi'' _{gg}(H)$, $\chi _{ggs}(H)$, $\chi _{ggas}(H)$, $\chi' _{ggs}(H)$, $\chi' _{ggas}(H)$, $\chi _{ggeq}(H)$, $\chi' _{ggeq}(H)$ and $\chi'' _{ggas}(H)$.

\item In Fig.\ref{fig:pseudo-Hanzi-pianpang}, we show several \emph{pseudo Hanzis} that are not allowed to use in China. Clearly, these pseudo Hanzis can be decomposed into the pianpangs in \cite{GB2312-80}, so each graph can be decomposed into Hanzi-graphs and so-called \emph{pseudo Hanzi-graphs}. We cite a short introduction cited from WWW. Baidu Encyclopedia as: ``Xu Bing, a professor at the Central Academy of Fine Arts, has personally designed and printed thousands of `\emph{new Hanzis}' to explore the essence and mode of thinking of Chinese culture in terms of graphs and symbolism, which has become a classic in the history of Chinese contemporary art.'' Here, `\emph{new Hanzis}' are pseudo Hanzis. Some pseudo Hanzis shown in Fig.\ref{fig:pseudo-xubing} and Fig.\ref{fig:pseudo-xubing-11} were made by English characters of Bing Xu. In Fig.\ref{fig:pseudo-xubing-11}, ``\emph{Nothing gold can stay so dawn goes down to day leaf so Eden Sanr to grief an hour, then leaf subsides to leafs a flower, but only so hardest hue to hold her early natures, first green is gold her nothing gold can stay Robert frost}. Calligraphy by Xu Bing.''

\begin{figure}[h]
\centering
\includegraphics[width=8cm]{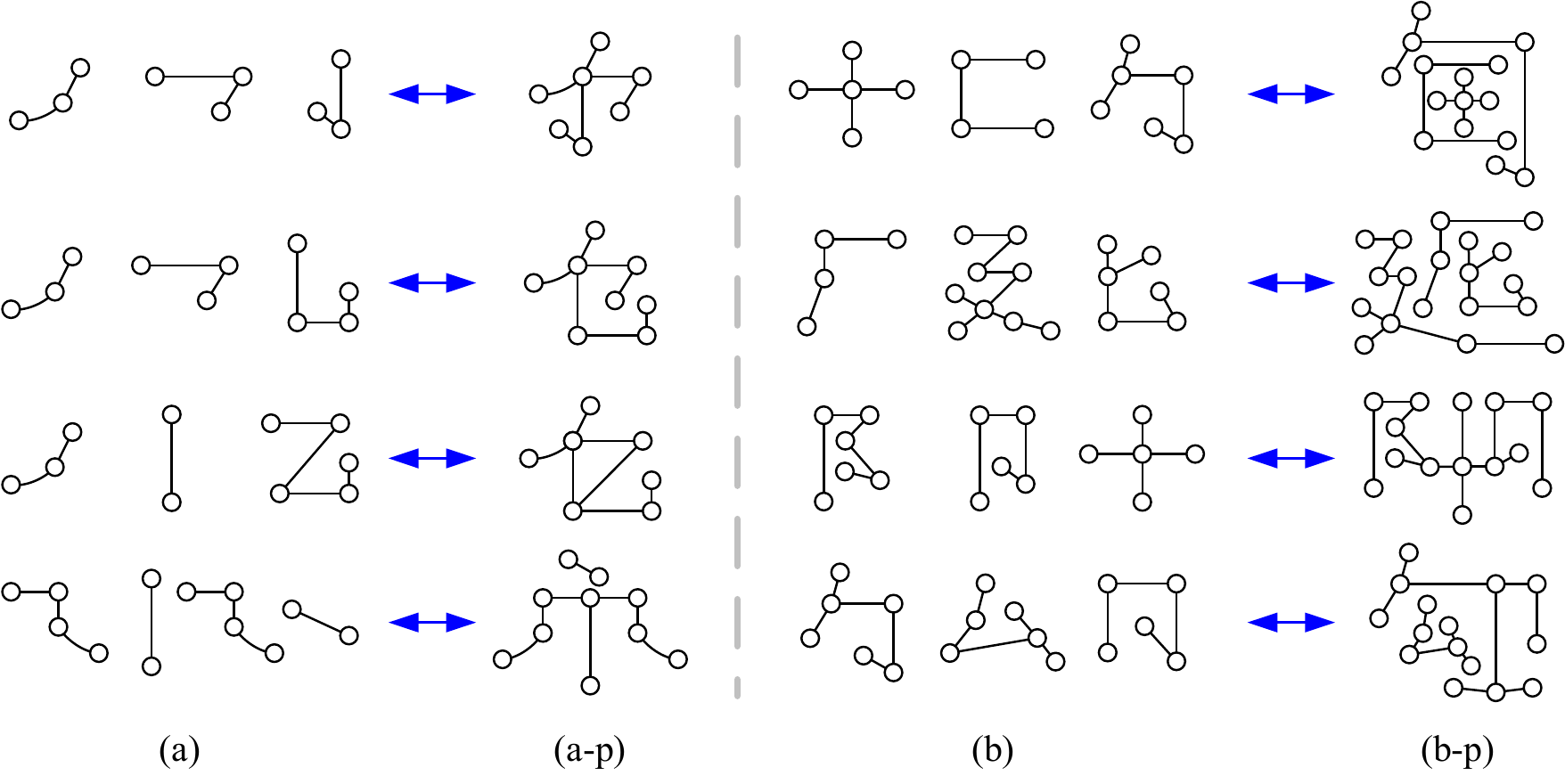}\\
\caption{\label{fig:pseudo-Hanzi-pianpang} {\small (a) On-stroke pianpangs; (a-p) pseudo Hanzis made by (a); (b) two-stroke pianpangs; (b-p) pseudo Hanzis made by (b).}}
\end{figure}
\begin{figure}[h]
\centering
\includegraphics[width=8.2cm]{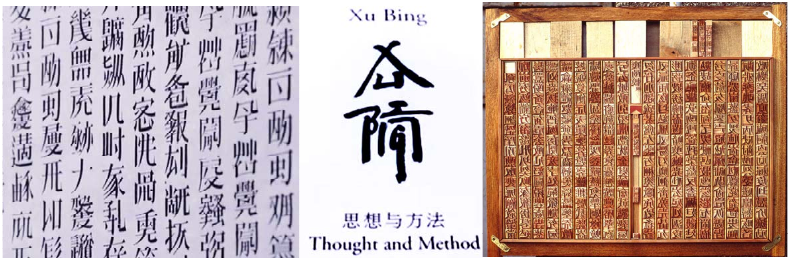}\\
\caption{\label{fig:pseudo-xubing} {\small Prof. Xu's pseudo Hanzis.}}
\end{figure}

\begin{figure}[h]
\centering
\includegraphics[width=8.2cm]{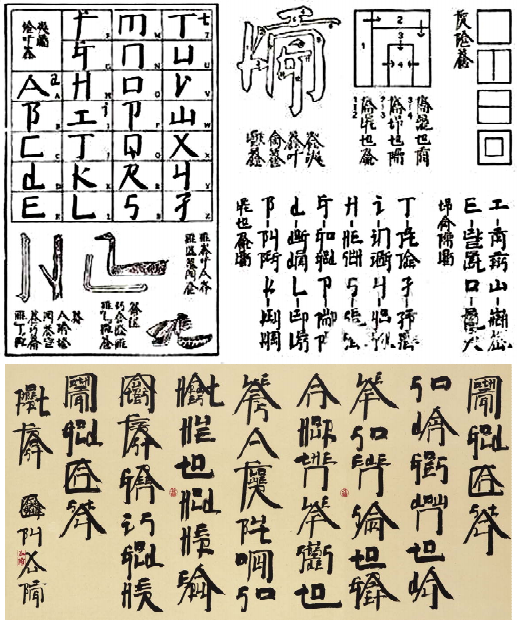}\\
\caption{\label{fig:pseudo-xubing-11} {\small Top is a connection of English letters and Hanzis made by Prof.  Xu Bing; Lower is a calligraphy in Prof.  Xu's pseudo Hanzis  written by Prof. Xu Bing.}}
\end{figure}

\item We show some other type of connected parameters for further researching graph connectivity:
\begin{defn}\label{defn:H-connected}
Let $H$ be a proper subgraph of a graph $G$. We say $G$ to be $H$-connected if $G-V(H)$ is disconnected. Particularly, we have new parameters:

(1) $\kappa_{cycle}(G)=\min \{m:~\textrm{$G$ is $C_m$-connected}\}$, where each $C_m$ is a cycle of $m$ vertices in $G$.

(2) $\kappa_{path}(G)=\min \{n:~\textrm{$G$ is $P_n$-connected}\}$, where each $P_n$ is a path of $n$ vertices in  $G$.

(3) $\kappa_{tree}(G)=\min \{|V(T)|:~\textrm{$G$ is $T$-connected}\}$, where each $T$ is a tree of $G$.

(4) $\kappa_{indep}(G)=\min \{|V(H)|:~\textrm{$G$ is $H$-connected}\}$, where each $H$ has no edge in $G$.

(5) $\kappa_{edge}(G)=\min \{|E(H)|:~\textrm{$G$ is $H$-connected}\}$.
\end{defn}

\item \textbf{Labelling Hanzi-graphs with Chinese phrases and idioms.} We label the vertices of a $(p,q)$-graph $G$ with  Chinese phrases and idioms, and obtain a labelling $\phi$ such that  each edge $uv\in E(G)$ is labelled as $\phi(uv)=\phi(u)\cap \phi(v)$, where $\phi(u)\cap \phi(v)$ means the common Hanzis.  So, we called $\phi$ \emph{Chinese phrase/idiom labelling} (or \emph{Hanzi-labelling}), and get a new graphic password, called as \emph{a password made by topological structure plus Chinese phrases/idioms}. In Fig.\ref{fig:idiom-labelling}, a graph $G$ admits an \emph{idiom labelling} $\phi:V(G)\rightarrow H_{idiom}$, where $H_{idiom}$ is the set of all Chinese idioms. Another example is shown in Fig.\ref{fig:6-idiom-graph}. It is not hard to see that Fig.\ref{fig:idiom-labelling} and Fig.\ref{fig:6-idiom-graph} show two different ways for Hanzi-idiom labellings on graphs.

\begin{figure}[h]
\centering
\includegraphics[width=8.4cm]{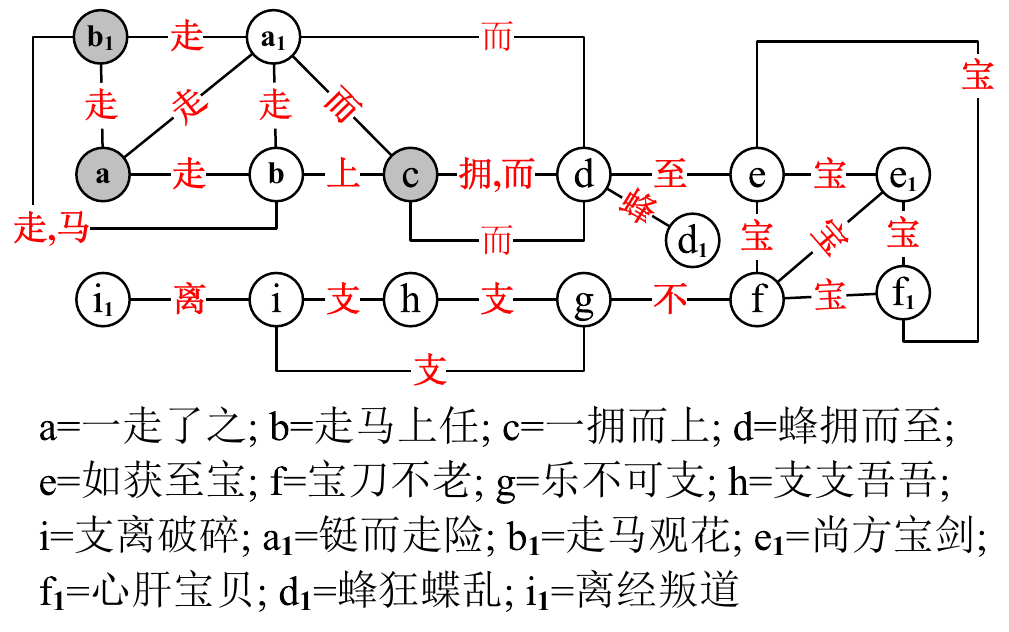}\\
\caption{\label{fig:idiom-labelling} {\small A  graph admitting a \emph{Hanzi-idiom labelling}.}}
\end{figure}
\begin{figure}[h]
\centering
\includegraphics[width=8.4cm]{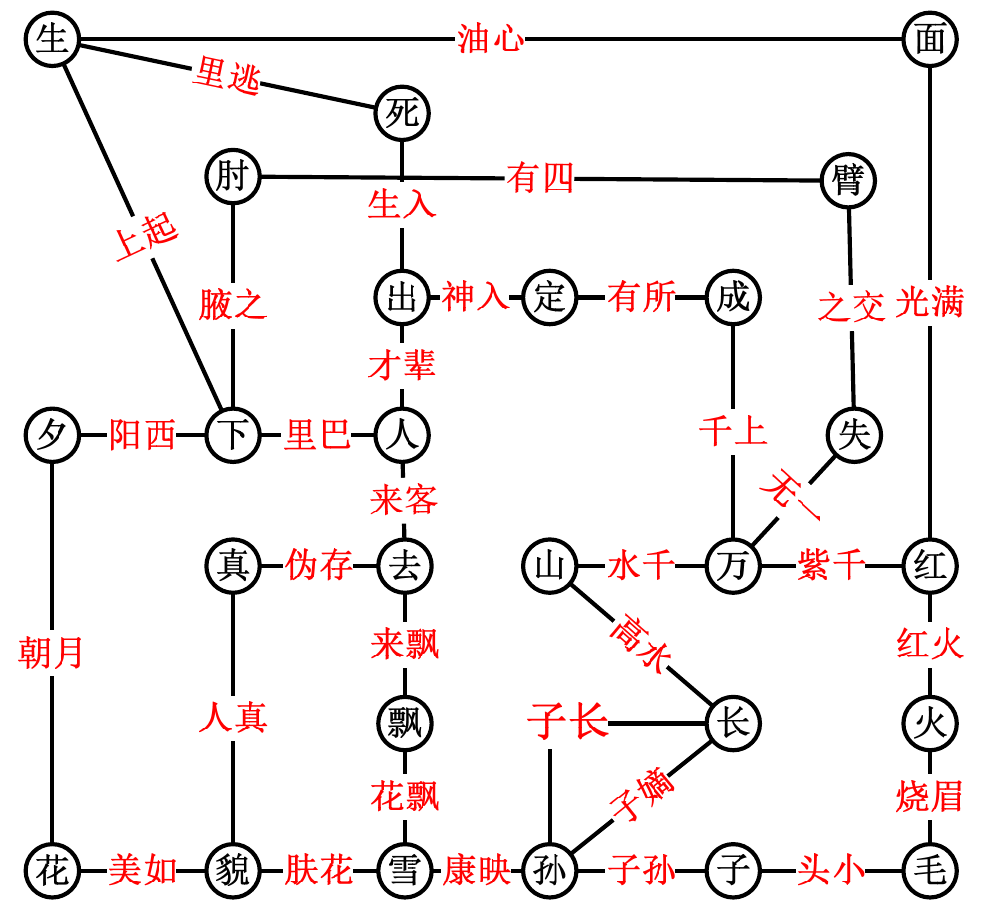}\\
\caption{\label{fig:6-idiom-graph} {\small Another  graph admitting a  \emph{Hanzi-idiom labelling}.}}
\end{figure}
\item A group of topological structures in Fig.\ref{fig:1-group-more-flawed-graceful}(a) can be considered as a public key, so we want to find some private keys shown in  Fig.\ref{fig:1-group-more-flawed-graceful}(b). Hence, for a given group of topological structures, \textbf{how many} paragraphs written in Chinese are there?
\end{asparaenum}

\subsection{Discussion}

\begin{figure}[h]
\centering
\includegraphics[height=2cm]{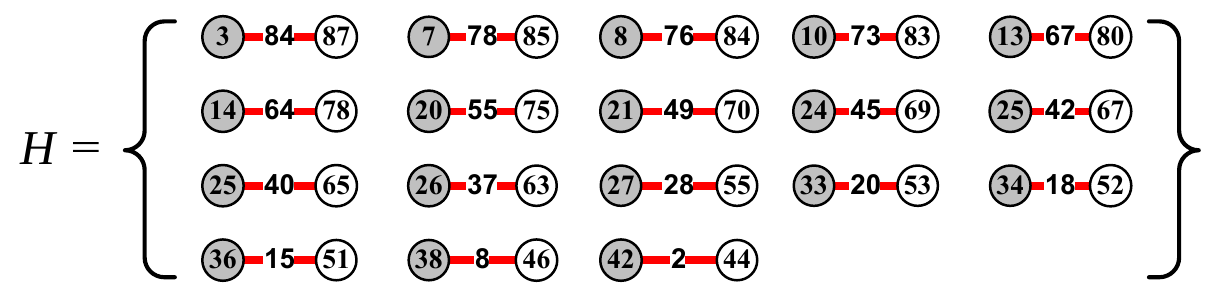}\\
\caption{\label{fig:matching-problem-0} {\small A disconnected graph $H$ admitting a total labelling.}}
\end{figure}

\begin{figure}[h]
\centering
\includegraphics[height=2.8cm]{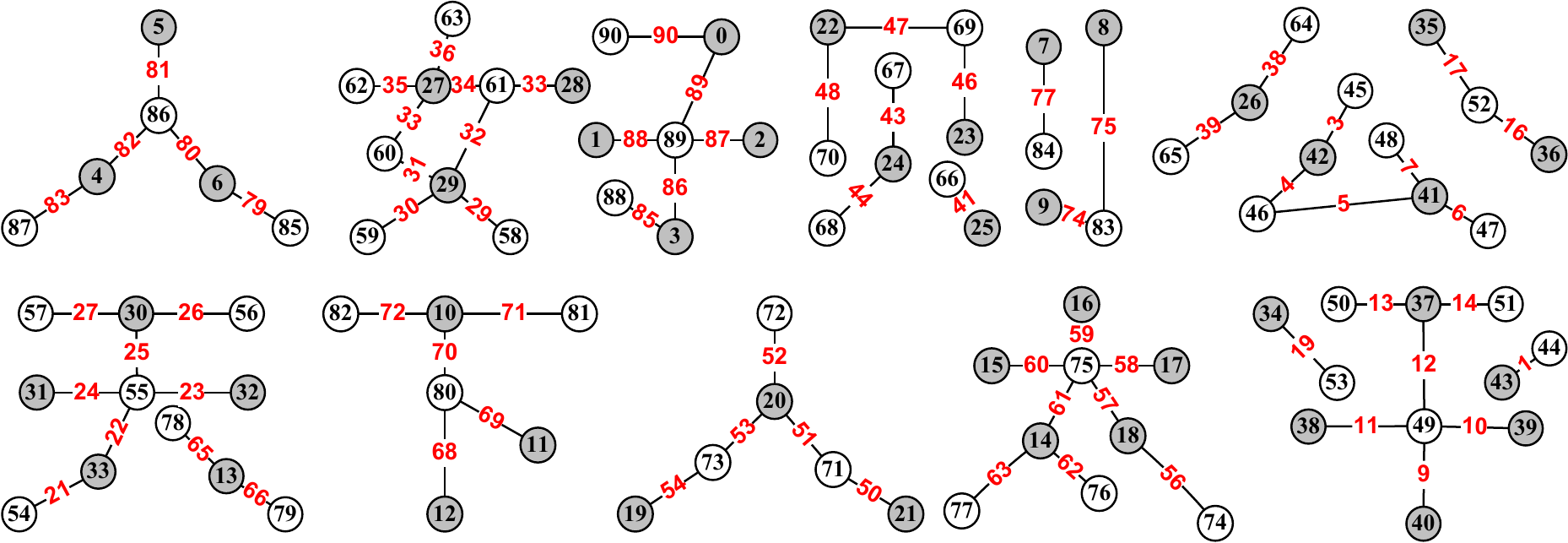}\\
\caption{\label{fig:matching-problem-1} {\small A disconnected graph $G$ admitting a flawed graceful labelling.}}
\end{figure}

(1) Problem-\ref{prob:matching-pairs} can be planted into the graph labellings defined or listed here and others defined in \cite{Gallian2018}. We show some solutions for Problem Problem-\ref{prob:matching-pairs} below.

\begin{thm}\label{thm:G-matchs-K2s}
Given a disconnected graph $H=\bigcup^m_{j=1}K^j_{2}$ having disjoint subgraphs $K^1_{2},K^2_{2},\dots ,K^m_{2}$ holding $K^j_{2}\cong K_2$ for $j\in[1,m]$. Assume that $H$ admits a total labelling $h:V(H)\cup E(H)\rightarrow [0,M]$, such that $h(xy)=|h(x)-h(y)|>1$ for each edge $xy\in E(H)$. We rewrite the edge set $E(H)=\{x_1y_1,x_2y_2,\dots,x_my_m\}$ with $E(K^i_{2})=\{x_iy_i\}$ and $h(x_iy_i)=h(y_i)-h(x_i)>1$ for $j\in[1,m]$, as well as $h(x_j)<h(x_{j+1})$ and $h(y_j)<h(y_{j+1})$ with $j\in [1,m-1]$.

If $1<h(x_jy_j)<h(x_{j+1}y_{j+1})$ with $j\in [1,m-1]$, then we can find another disconnected graph $G=\bigcup^{m+1}_{i=1}G_{i}$ having disjoint subgraphs $G_{1},G_{2},\dots ,G_{m+1}$ such that $G$ admits a flawed graceful labelling $g$. Furthermore, we have a connected graph $H\odot G$ obtained by coinciding the vertices of two graphs $H$ and $G$ labelled with the same labels into one, such that $H\odot G$ holds $V(H\odot G)=V(G)$ and $E(H\odot G)=E(H)\cup E(G)$, and admits a graceful labelling $f=\langle h\odot g \rangle$ deduced by two labellings $h$ and $g$.
\end{thm}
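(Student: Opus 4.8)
The plan is to construct the private key $G$ explicitly from the data of $H$ and its total labelling $h$, then verify that the coincidence graph $H\odot G$ receives a genuine graceful labelling by patching together the labels coming from $h$ and from the flawed graceful labelling $g$ of $G$. First I would set up notation: since $H=\bigcup^m_{j=1}K^j_2$ with $h(x_iy_i)=h(y_i)-h(x_i)>1$ and the values $h(x_jy_j)$ are strictly increasing in $j$, the edge-label set $h(E(H))=\{d_1<d_2<\cdots<d_m\}$ is a set of $m$ distinct integers, each at least $2$. The connected graph $H\odot G$ is to have $q:=|E(H)|+|E(G)|$ edges, and a graceful labelling of it must use edge labels exactly $[1,q]$. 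The $m$ edges of $H$ already carry the labels $d_1,\dots,d_m$; so the job of $G$ is to carry, via a flawed graceful labelling, exactly the complementary set $[1,q]\setminus\{d_1,\dots,d_m\}$ as its edge labels, while its vertex labels fill out the rest of $[0,q]$ consistently with the vertices of $H$.

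The key construction step is to build $G=\bigcup^{m+1}_{i=1}G_i$ as a forest whose components are paths (or stars), chosen so that each ``missing'' value $c\in[1,q]\setminus\{d_1,\dots,d_m\}$ appears as $|g(u)-g(v)|$ for exactly one edge $uv$ of $G$, and so that the vertex labels of the components $G_i$ overlap the vertex labels $h(x_i),h(y_i)$ of $H$ at precisely the vertices where the coincidence ``$\odot$'' is performed. Concretely, I would apply Theorem~\ref{thm:flawed-graceful-labelling-forest} / Corollary~\ref{thm:flawed-graceful-labelling-graph}: if each $G_i$ is a caterpillar admitting a set-ordered graceful labelling, then the forest $G$ admits a flawed set-ordered graceful labelling, and one shows (as in the discussion around Definition~\ref{defn:generalization-flawed-labellings-00}) that the edge set $E^*$ needed to connect $G$ into $H\odot G$ can be realized with edge labels exactly $\{d_1,\dots,d_m\}$ — these are precisely the labels already present on $E(H)$. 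The numerology $m-1=\min\{|E_j|\}$ mentioned after Definition~\ref{defn:generalization-flawed-labellings-00} guarantees that $m$ joining edges suffice to make $H\odot G$ connected, and the hypothesis $1<d_1<d_2<\cdots<d_m$ is exactly what lets each $d_j$ be written as a difference $(d_j+k_j)-k_j$ with $k_j$ ranging in an admissible interval $[a_j,b_j]$, so that the $K_2$'s of $H$ slot in as the ``adding-edge'' copies $E^s_k$.

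After the construction, the verification steps are: (i) check that $g$ restricted to each $G_i$ is set-ordered graceful, using the caterpillar labelling recipe; (ii) check that $H$ and $G$ share vertex labels only at the intended coincidence vertices, so that $V(H\odot G)=V(G)$ and the induced labelling $f=\langle h\odot g\rangle$ is well defined and injective on $V(H\odot G)$ with image contained in $[0,q]$; (iii) check that $f(E(H\odot G))=f(E(H))\cup f(E(G))=\{d_1,\dots,d_m\}\cup\big([1,q]\setminus\{d_1,\dots,d_m\}\big)=[1,q]$, which is the defining property of a graceful labelling. Step (iii) is essentially a bookkeeping identity once the construction is right, and step (i) follows from known caterpillar results.

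The main obstacle will be step (ii) together with the existence half of step (i): arranging the vertex labels of $G$ so that they are simultaneously (a) a valid set-ordered graceful labelling on each component, (b) compatible with the fixed vertex labels $h(x_i),h(y_i)$ of $H$ at the coincidence points and nowhere else, and (c) globally injective so no two distinct vertices of $H\odot G$ collapse. In effect one must thread the free parameters $k_j\in[a_j,b_j]$ and the internal structure of the caterpillars $G_i$ through all three constraints at once; showing that the intervals $[a_j,b_j]$ are nonempty and that a consistent choice always exists (when the $d_j$ are strictly increasing and $>1$) is the crux. I expect this to reduce, after the right indexing, to the same kind of interval-counting argument used to establish Definition~\ref{defn:generalization-flawed-labellings-00}'s $f_k$-base machinery, so the proof will be a careful but not deep elaboration of that idea; the strict-increase hypothesis on $h(x_jy_j)$ is precisely what prevents two joining edges from demanding the same label and is therefore where that hypothesis gets used.
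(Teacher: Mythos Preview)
Your high-level framework is correct and matches the paper's intent: build $G$ so that its edge labels under a flawed graceful labelling are exactly $[1,q]\setminus\{d_1,\dots,d_m\}$, arrange that $h(V(H))\subset g(V(G))$ at the coincidence vertices, and then $f=\langle h\odot g\rangle$ is graceful on $H\odot G$. The difference is in the execution.

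You propose to invoke Theorem~\ref{thm:flawed-graceful-labelling-forest} and the $f_k$-base machinery of Definition~\ref{defn:generalization-flawed-labellings-00} as black boxes. This is where the plan is shaky. Theorem~\ref{thm:flawed-graceful-labelling-forest} produces a \emph{specific} flawed set-ordered graceful labelling whose vertex labels are a contiguous block $[0,M+N-1]$ and whose missing edge labels are determined by the component sizes; it does not let you prescribe that the vertex label set contain the particular numbers $h(x_1),\dots,h(x_m),h(y_1),\dots,h(y_m)$ at designated vertices, nor that the missing edge labels be exactly the given $d_1,\dots,d_m$. Likewise, the $f_k$-base discussion shows that \emph{given} a connected $H_k$ with a set-ordered graceful labelling, the joining-edge labels can be realized in many ways---but it does not run in reverse to manufacture a forest whose $f$-base equals a prescribed set. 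Your step~(ii), which you correctly flag as the crux, therefore cannot be discharged by those results as stated.

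The paper bypasses this by an explicit, iterative construction using stars rather than an appeal to general machinery. It processes the edges $x_1y_1,x_2y_2,\dots$ one at a time: to the two ends of $x_jy_j$ it attaches stars $K_{1,n_j}$ whose centres get small labels ($0,1,\dots$) and whose leaves get consecutive labels filling the interval between $h(y_{j-1})$ and $h(y_j)$ (and between successive $d_j$'s on the edge side). After incorporating $x_jy_j$, all existing star labels are shifted by $+1$ to make room. The sizes $n_j$ are chosen so that the star edge-labels cover precisely $[1,d_1-1]$, $[d_1+1,d_2-1]$, etc., and the hypothesis $d_j>1$ and $d_j<d_{j+1}$ guarantees each such interval is nonempty and the bookkeeping closes. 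This direct approach gives simultaneous control over vertex positions and edge labels, which is what your black-box route lacks. If you want to salvage your plan, you would need to replace the appeal to Theorem~\ref{thm:flawed-graceful-labelling-forest} with an explicit choice of the caterpillar sizes and an explicit labelling formula---at which point you are essentially reproducing the paper's star construction.
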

\begin{proof} Let $a_i=h(x_iy_i)$ in the following process. We use stars $K^i_{1,n_i}$ with $i\in [1,n]$ to complete the proof, and let $K^i_{1,n_i}$ has its own vertex set and edge set as

$V(K^i_{1,n_i})=\{u_{i,1},v_{i,1},v_{i,2},\dots ,v_{i,n_i}\}$ and

$E(K^i_{1,n_i})=\{u_{i,1}v_{i,j}:~j\in [1,n_i]\}$, $i\in [1,n]$.

\vskip 0.2cm

\emph{Step 1.} In $K^1_2=x_1y_1$ of $H$, we have $h(x_1y_1)=h(y_1)-h(x_1)>1$ and $h(x_2)<h(y_1)$.

We join $u_{2,1}$ of $K^2_{1,n_2}$ with $y_1$, and join $u_{1,1}$ of $K^1_{1,n_1}$ with $x_1$, thus, we get a tree $H_1=K^2_{1,n_2}+x_1y_1+K^1_{1,n_1}$. We define a labelling $F_1$ as:

(i) label $F_1(u_{2,1})=0$, $F_1(v_{2,j})=h(y_1)-j$ with $j\in [1,n_2]$, such that $F_1(u_{2,1}v_{2,n_2})=h(y_1)-n_2=a_1+1$, also $n_2=h(y_1)-a_1-1$.

(ii) label $F_1(u_{1,1})=1$, $F_1(v_{1,i})=i+1$ with $i\in [1,h(x_1)-2]$, $F_1(v_{1,i})=i+1$ with $i\in [h(x_1),n_1]$, such that $F_1(u_{1,1}v_{2,n_1})=F_1(v_{2,n_1})-1=a_1-1$, also $n_1+1=a_1$, $n_1=a_1-1$.

So, $F_1(E(K^1_{1,n_1}))=[1,a_1-1]$, $F_1(E(K^2_{1,n_2}))=[a_1+1,h(y_1)-1]$, notice $F_1(u_{2,1}y_1)=h(y_1)$, $F_1(u_{1,1}x_1)=h(x_1)-1$ and $h(x_1y_1)=a_1$, we get $F_1(E(H_1))=[1,h(y_1)]$.

If $h(x_2)>h(y_1)$, we add new vertices $v'_{2,k}$ ($k\in [1,m_1]$) to join each of them with $u_{2,1}$, and label $F_1(v'_{2,k})=h(y_1)+k$, such that $F_1(v'_{2,m_1})=h(x_2)+a_2-1$, the resultant tree is denoted as $H'_1=H_1+\{u_{2,1}v'_{2,k}:k\in [1,m_1]\}$ with a graceful labelling $F_1$ ($F_1(E(H'_1))=[1,h(x_2)+a_2-1]$). (See an example shown in Fig.\ref{fig:G-matchs-K2s-1})

\emph{Step 2.} Consider the second graph $K^2_2=x_2y_2$ of $H$, $a_2=h(x_2y_2)=h(y_2)-h(x_2)$, we join $y_2$ with $x_2$ of $H_1$ (or $H'_1$ if $h(x_3)>h(y_2)$), next we join $u_{3,1}$ of $K^3_{1,n_3}$ with $y_2$ for forming the resultant tree $H_2=K^3_{1,n_3}+H_1$ (or $H_2=K^3_{1,n_3}+H'_1$ if $h(x_3)>h(y_2)$). We define a labelling $F_2$ for $H_2$ in the following way:

(1) $F_2(w)=F_1(w)+1$ for $w\in V(K^1_{1,n_1})$;

(2) $F_2(z)=F_1(z)+1$ for $z\in V(K^1_{2,n_1})$;

(3) $F_2(\alpha )=F_1(\alpha)$ for $\alpha\in \{x_1,x_2,y_1,y_2\}$;

(4) $F_2(u_{3,1})=0$, $F_2(v_{3,j})=h(y_2)-j$ with $j\in [1,n_3]$, such that $F_2(u_{3,1}v_{3,n_3})=h(y_2)-n_3=a_2+1$, also $n_2=h(y_2)-a_2-1$.

Thereby, $H_2=H_1+K^3_{1,n_3}$ (or $H'_2=H_2+\{u_{2,1}v'_{2,k}:k\in [1,m_2]\}$) admits a graceful labelling $F_2$, $H'_2\setminus E(H_2)$ admits a flawed graceful labelling. (See an example shown in Fig.\ref{fig:G-matchs-K2s-2})

\emph{Step 3.} Iteration. We use the method introduced in Step 1 and Step 2, until to a graceful tree $H\odot G$, where $H=\bigcup^m_{j=1}K^j_{2}$ admits a total labelling $h$ and $G=\bigcup^{m+1}_{i=1}G_{i}$ admits a flawed graceful labelling $g$.
\end{proof}
\begin{figure}[h]
\centering
\includegraphics[height=4.2cm]{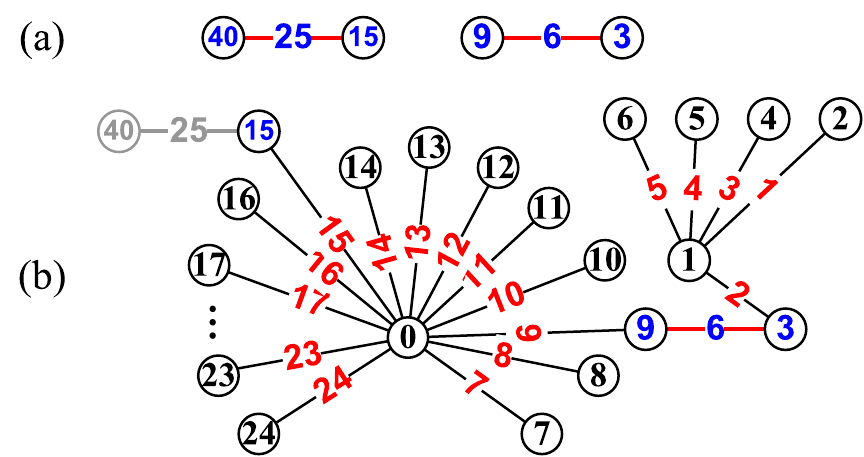}\\
\caption{\label{fig:G-matchs-K2s-1} {\small (a) $H=K^1_{2}\cup K^2_{2}$; (b) $H'_1$.}}
\end{figure}

\begin{figure}[h]
\centering
\includegraphics[height=10cm]{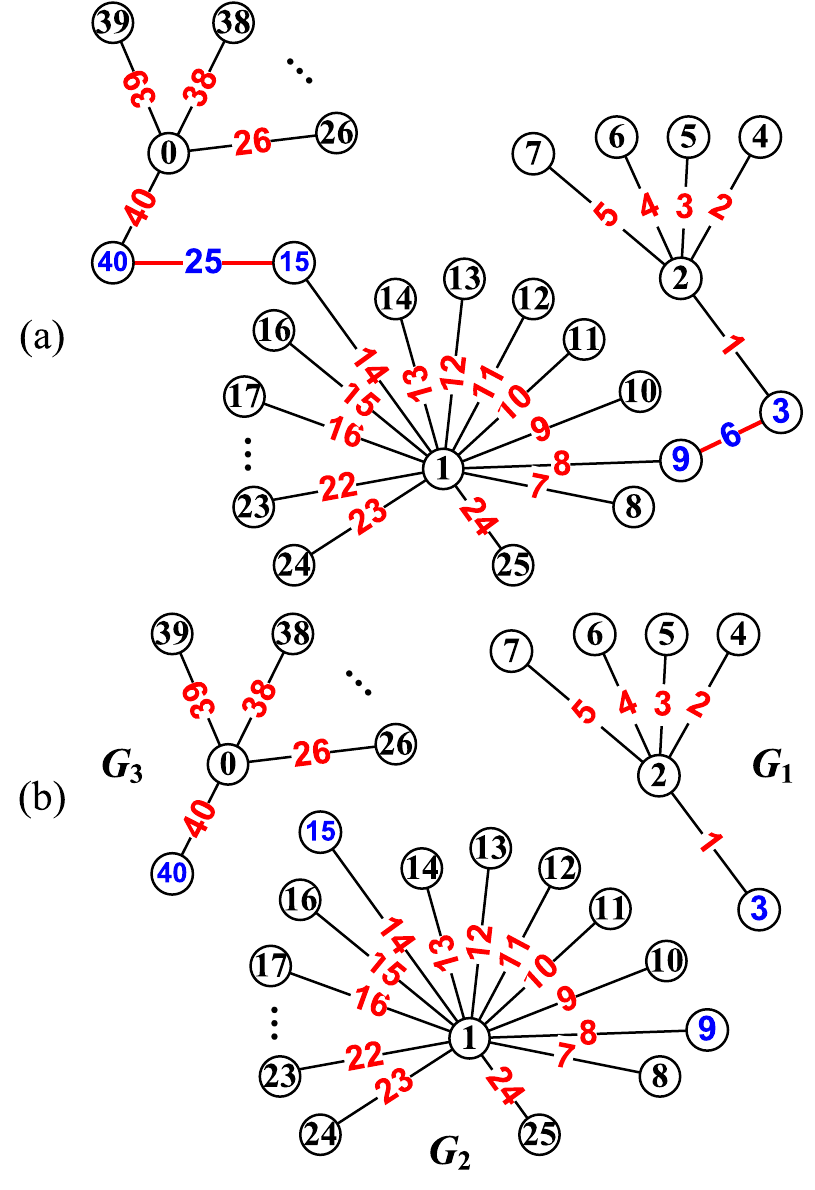}\\
\caption{\label{fig:G-matchs-K2s-2} {\small (a) $H_2$; (b) $G=\bigcup^3_{i=1}G_i$.}}
\end{figure}

\begin{rem}\label{thm:CCCCCC}
(1) In the proof of Theorem \ref{thm:G-matchs-K2s}, we use stars to provide a solution. In general, assume that each tree $T_k$ of $p_k$ vertices admits a set-ordered graceful labelling $f_k$ such that $\max f_k(A_k)<\min f_k(B_k)$, where $V(T_k)=A_k\cup B_k$, $A_k=\{u_{k,1},u_{k,2},\dots ,u_{k,s_k}\}$ and
$$B_k=\{w_{k,1},w_{k,2},\dots ,w_{k,t_k}\},~p_k=|V(T_k)|=s_k+t_k,$$
as well as $f_k(u_{k,i}w_{k,j})=f_k(w_{k,j})-f_k(u_{k,i})$ for each edge $u_{k,i}w_{k,j}\in E(T_k)$. So, we can replace stars $G_k$ by some non-stars $T_k$. (See an example shown in Fig.\ref{fig:G-matchs-K2s-no-star})

\begin{figure}[h]
\centering
\includegraphics[height=5.4cm]{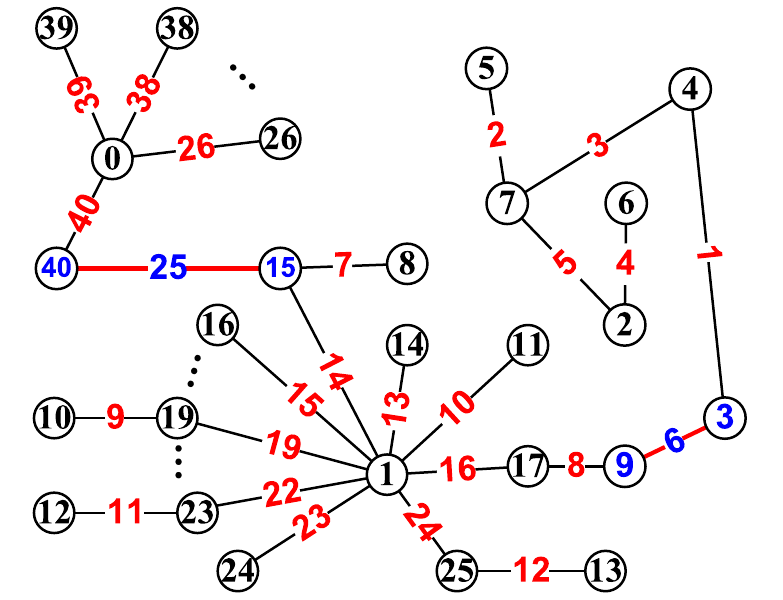}\\
\caption{\label{fig:G-matchs-K2s-no-star} {\small Some $G_k$ are not stars.}}
\end{figure}

(2) Find conditions for the tree $H\odot G$ in Theorem \ref{thm:G-matchs-K2s} to admit a set-ordered graceful labelling.

(3) The result of Theorem \ref{thm:G-matchs-K2s} is about graceful labellings, however, we can replace graceful labelling by odd-graceful labelling, elegant labelling, odd-elegant labelling, edge-magic total labelling, and so on.

\begin{figure}[h]
\centering
\includegraphics[height=5.4cm]{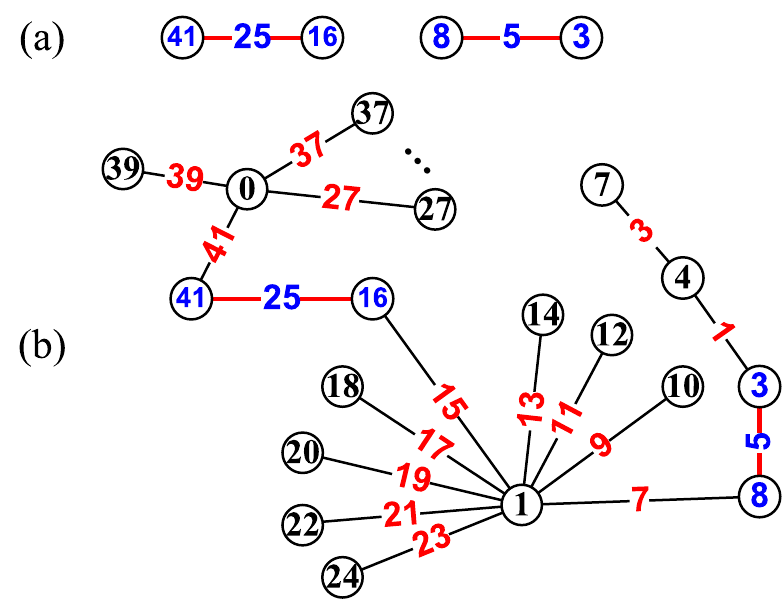}\\
\caption{\label{fig:G-matchs-K2s-odd-graceful} {\small $H\odot G$ admits an odd-graceful labelling, $G=(H\odot G)-E(H)$ admits a flawed odd-graceful labelling.}}
\end{figure}
\end{rem}

(2)  Outline Hanzis (hollow font) are popular in Chinese cultural. Some Hanzis can be obtained by cycles $C_n$ with some labellings and Euler graphs (see Fig.\ref{fig:hollow-ren}). We can split some Euler-gpws into hollowed Hanzi-gpws. Two Euler-gpws with v-set e-proper graceful labellings shown in  Fig.\ref{fig:hollow-ren}(c) and Fig.\ref{fig:hollow-wang}(c) can be split into two hollowed Hanzi-gpws  shown in  Fig.\ref{fig:hollow-ren}(b) and Fig.\ref{fig:hollow-wang}(b), conversely, we can coincide some vertices of a hollowed Hanzi-gpw into one for producing Euler-gpws with v-set e-proper graceful labellings. \textbf{Determine} v-set e-proper $\varepsilon$-labellings for Euler graphs with $\varepsilon \in \{$graceful, odd-graceful, elegant, odd-elegant, edge-magic total, $k$-graceful, 6C, etc.$\}$ (Ref. \cite{Gallian2018, Yao-Sun-Zhang-Mu-Sun-Wang-Su-Zhang-Yang-Yang-2018arXiv, Yao-Zhang-Sun-Mu-Sun-Wang-Wang-Ma-Su-Yang-Yang-Zhang-2018arXiv}).  As known, the cycles $C_{4m+1}$ and $C_{4m+2}$ are not graceful. Rosa  \cite{Gallian2018} showed that the $n$-cycle $C_{n}$ is graceful if and only if $n\equiv 0$ or 3 ($\bmod~4$).

\begin{figure}[h]
\centering
\includegraphics[width=8cm]{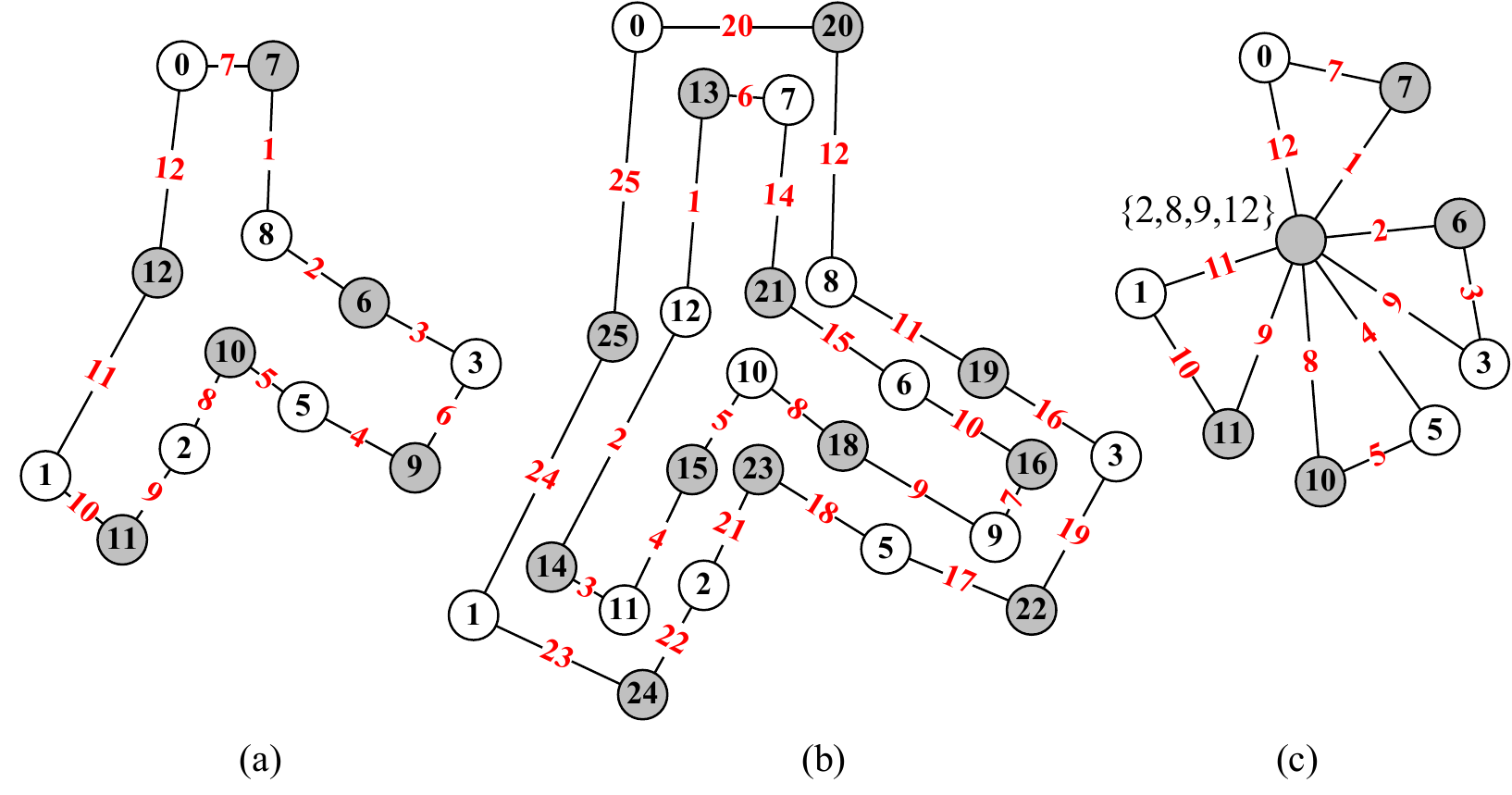}\\
\caption{\label{fig:hollow-ren} {\small (a) A Hanzi-gpw $H^{gpw}_{4043}$; (b) a hollowed Hanzi-gpw with a  flawed set-ordered graceful labelling based on Hanzi-graph $T_{4043}$; (c) an Euler-gpw with a  v-set e-proper graceful labelling.}}
\end{figure}
\begin{figure}[h]
\centering
\includegraphics[width=8cm]{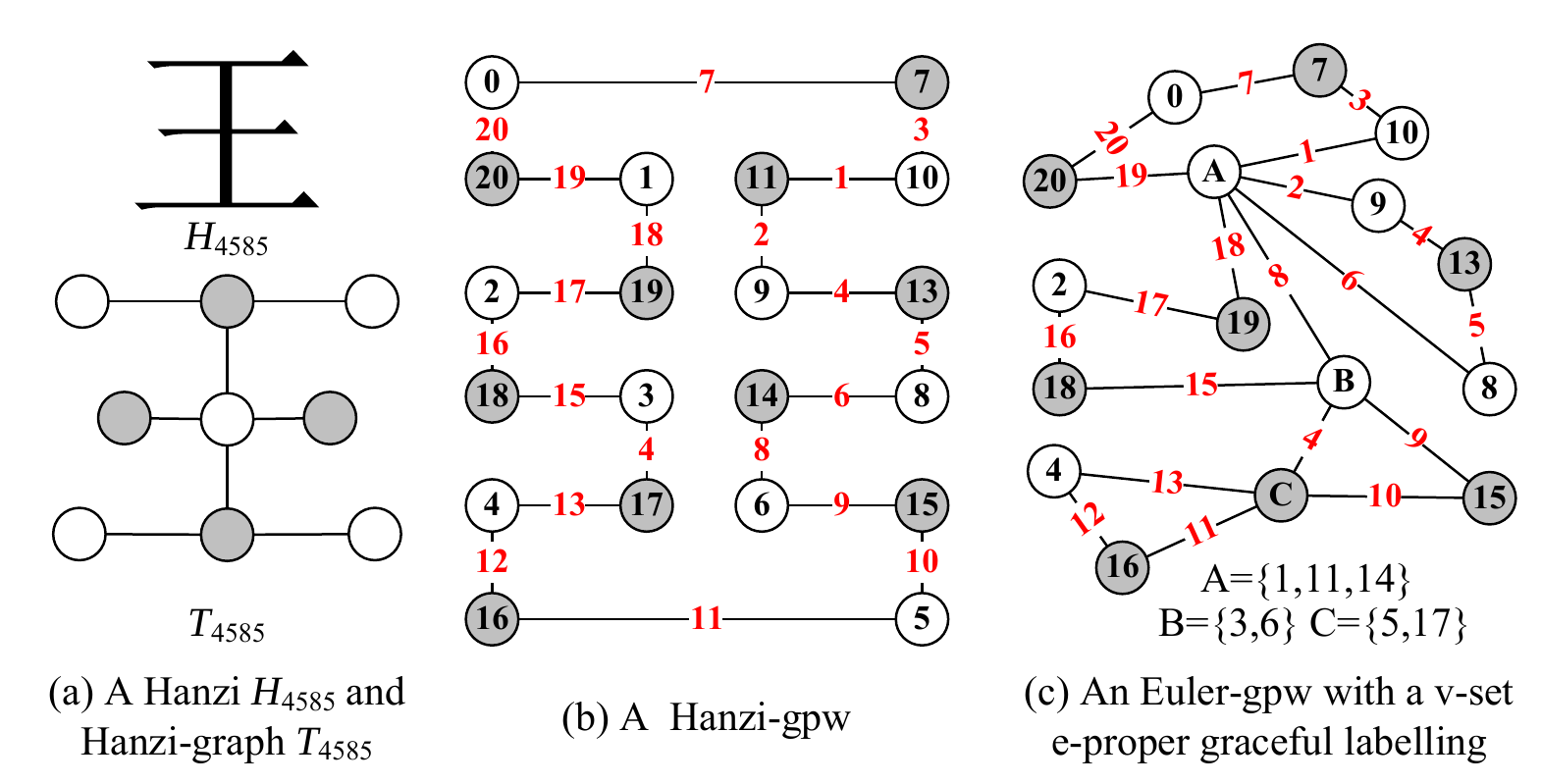}\\
\caption{\label{fig:hollow-wang} {\small (a) A Hanzi $H_{4585}$ and its Hanzi-graph $H_{4585}$; (b) a hollowed Hanzi-gpw with a graceful labelling based on Hanzi $H_{4585}$; (c) an Euler-gpw with a  v-set e-proper graceful labelling.}}
\end{figure}

In \cite{Wang-Wang-Yao2019-Euler-Split}, the authors answer the above question, they show:

\begin{thm}\label{thm:Euler-v-set-e-proper-graceful-labelling}
\emph{\cite{Wang-Wang-Yao2019-Euler-Split}} Every connected Euler's graph of $n$ edges with $n\equiv 0,3$ ($\bmod~4$) admits a v-set e-proper graceful labelling.
\end{thm}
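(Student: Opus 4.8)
The plan is to reduce the statement to the gracefulness of cycles $C_n$ with $n\equiv 0,3\pmod 4$ (Rosa's theorem, quoted above) by ``unfolding'' a connected Euler graph along one of its Eulerian circuits. Let $G$ be a connected Euler $(p,q)$-graph, so $q=n$ and every vertex of $G$ has even degree; fix an Eulerian circuit $W=v_0e_1v_1e_2v_2\cdots e_nv_n$ with $v_n=v_0$ traversing each edge of $G$ exactly once. The circuit $W$ has $n$ \emph{corners} $c_1,\dots,c_n$, where $c_i$ sits at the vertex $v_{i-1}$, and the corners $c_i,c_{i+1}$ (indices read cyclically) are joined precisely by the edge $e_i$. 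Thus the corners together with these consecutive adjacencies form a cycle $C_n$ on $\{c_1,\dots,c_n\}$, and the map $\phi:V(C_n)\to V(G)$, $\phi(c_i)=v_{i-1}$, is surjective (the circuit visits every vertex) and induces a \emph{bijection} $E(C_n)\to E(G)$, $c_ic_{i+1}\mapsto e_i$ (each edge is used once, and $G$ is loopless so consecutive corners map to distinct vertices).

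First I would invoke Rosa's theorem: since $n\equiv 0,3\pmod 4$, the cycle $C_n$ admits a graceful labelling $\theta:V(C_n)\to[0,n]$ with $\theta$ injective and $\{\,|\theta(c_i)-\theta(c_{i+1})|:\ i\ \mathrm{cyclic}\,\}=[1,n]$. Then I would transport $\theta$ to $G$: define a vertex set-labelling $F:V(G)\to[0,p+q]^2$ by $F(u)=\{\theta(c):\ c\in\phi^{-1}(u)\}$, a nonempty subset of $[0,n]\subseteq[0,p+q]$, and an edge labelling $g:E(G)\to[1,n]$ by $g(e_i)=|\theta(c_i)-\theta(c_{i+1})|$. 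Two checks remain, both routine given the setup: (i) $F(u)\neq F(u')$ for $u\neq u'$, since $\phi^{-1}(u)\cap\phi^{-1}(u')=\emptyset$ and $\theta$ is injective, so $F(u)\cap F(u')=\emptyset$ with both nonempty; (ii) $g$ is proper and graceful, since $c_ic_{i+1}\mapsto e_i$ is a bijection $E(C_n)\to E(G)$ and the edge labels of $\theta$ exhaust $[1,n]$ bijectively, so $g$ is a bijection onto $[1,q]$, hence injective. Finally, for each edge $e_i=uv$ with $\phi(c_i)=u$, $\phi(c_{i+1})=v$ one has representatives $\theta(c_i)\in F(u)$, $\theta(c_{i+1})\in F(v)$ with $g(e_i)=|\theta(c_i)-\theta(c_{i+1})|$, so $(F,g)$ meets the v-set e-proper requirement of Definition~\ref{defn:set-labelling}(iv) together with the graceful condition $g(E(G))=[1,q]$, which would complete the proof.

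The main obstacle I anticipate is bookkeeping rather than a deep idea: making the corner/cycle correspondence fully rigorous (the cyclic indexing, the degenerate small cases such as $n\in\{3,4\}$, and — if the ambient class permits them — parallel edges, which the argument actually tolerates), and, more importantly, pinning down exactly which induced-edge rule ``v-set e-proper graceful labelling'' intends. If the definition demands that $g(uv)=|a-b|$ hold for \emph{every} $a\in F(u),b\in F(v)$, the naive unfolding fails and one must instead argue that a \emph{representative} realizing $g(uv)$ exists (as above) or refine $\theta$; this distinction should be settled before the final write-up. A secondary remark worth including is multiplicity: different Eulerian circuits and different graceful labellings of $C_n$ yield genuinely different v-set e-proper graceful labellings of $G$, which is precisely what is wanted for generating many Topsnut-gpws.
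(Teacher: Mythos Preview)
Your approach is correct and coincides with the paper's: the paper reduces to cycles by noting (Theorem~\ref{thm:equivalent-Euler-graph}) that a connected Euler graph on $n$ edges can be vertex-split into a cycle $C_n$ --- exactly your ``unfolding along an Eulerian circuit'' --- and then invokes Rosa's result that $C_n$ is graceful iff $n\equiv 0,3\pmod 4$, with the vertex set-labels arising as the preimages under the coincidence map. Your worry about the induced-edge rule is unwarranted here: Definition~\ref{defn:set-labelling}(iv) only asks that $F$ be an injective vertex set-map and $g$ an injective edge map (with $g(E(G))=[1,q]$ for ``graceful''), and the companion Definition~\ref{defn:graceful-odd-graceful-total-set-labelling} confirms the ``representative'' reading you adopted.
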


\begin{thm}\label{thm:Euler-v-set-e-proper-harmonious-labelling}
\emph{\cite{Wang-Wang-Yao2019-Euler-Split}} Every connected Euler's graph with odd $n$ edges admits a v-set e-proper harmonious labelling.
\end{thm}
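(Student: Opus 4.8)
The plan is to reduce the statement to the classical fact that the cycle $C_n$ is harmonious precisely when $n$ is odd (Graham--Sloane; see \cite{Gallian2018}), by linearizing $G$ along an Euler tour. First I would fix a connected Euler graph $G$ with $q=n$ edges, $n$ odd, and choose a closed Euler tour $W=v_0e_1v_1e_2v_2\cdots e_nv_n$ with $v_n=v_0$; since $G$ is Eulerian such a tour exists and traverses every edge exactly once. I regard the positions $0,1,\dots,n-1$ (with $n$ identified with $0$) as the vertex set of a virtual cycle $C_n$ whose $j$-th edge joins positions $j-1$ and $j$ modulo $n$, and let $\phi\colon V(C_n)\to V(G)$ be the natural surjection $\phi(i)=v_i$. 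Then $e_j\mapsto$ ($j$-th edge of $C_n$) is a bijection $E(G)\to E(C_n)$, and consecutive edges of the tour share exactly the vertex indexed by their common position, so the abstract incidence structure is that of $C_n$.

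Next, because $n$ is odd, $C_n$ admits a harmonious labelling $h\colon V(C_n)\to[0,n-1]$; since $C_n$ is not a tree, $h$ is injective, and the induced edge values $h(i-1)+h(i)~(\bmod~n)$ exhaust $[0,n-1]$. I would transport $h$ to $G$ by defining, for each $v\in V(G)$, the vertex label set $F(v)=\{h(i):\phi(i)=v\}$, a nonempty subset of $[0,n-1]$ (hence $F(v)\in[0,p+q]^2$) with $\min\bigcup_v F(v)=0$; and for each edge $e_j=v_{j-1}v_j$ of $G$ the edge label $g(e_j)=h(j-1)+h(j)~(\bmod~n)$. Since the fibres $\phi^{-1}(x)$ are pairwise disjoint and $h$ is injective, $F(x)\neq F(y)$ for distinct $x,y$, so $F$ is a legitimate vertex set-labelling in the sense of Definition \ref{defn:set-labelling}; since $h$ is harmonious on $C_n$, $\{g(e_j):j\in[1,n]\}=[0,n-1]$, so $g$ is injective on edges, i.e.\ proper. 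Finally $g(e_j)$ lies in $\{a+b~(\bmod~n):a\in F(v_{j-1}),\,b\in F(v_j)\}$, witnessed by the representatives $h(j-1)\in F(v_{j-1})$ and $h(j)\in F(v_j)$, and these representative edge values form $[0,n-1]$; this is exactly the harmonious analogue of the representative condition of Definition \ref{defn:graceful-odd-graceful-total-set-labelling} combined with Definition \ref{defn:connections-several-labellings}(7), so $(F,g)$ is a v-set e-proper harmonious labelling of $G$. (The same scheme with $C_n$ harmonious replaced by $C_n$ graceful, which needs $n\equiv0,3~(\bmod~4)$, yields the graceful companion Theorem \ref{thm:Euler-v-set-e-proper-graceful-labelling} cited from \cite{Wang-Wang-Yao2019-Euler-Split}.)

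The main obstacle is not the construction but nailing down precisely what ``v-set e-proper harmonious labelling'' means: the excerpt supplies the ingredients separately (v-set e-proper labellings in Definition \ref{defn:set-labelling}, harmonious labellings in Definition \ref{defn:connections-several-labellings}, the representative mechanism in Definition \ref{defn:graceful-odd-graceful-total-set-labelling}) but not their combination, so the proof must first state this definition explicitly and then check every clause of it for the Euler-tour lift: nonemptiness of each $F(v)$, the range bound, the min-zero normalization, properness of $g$, and the identity $\{a_{uv}:uv\in E(G)\}=[0,q-1]$. A secondary point is robustness to multigraph features: a repeated edge or a loop at $v$ simply produces a position pair $(i-1,i)$ with $\phi(i-1)=\phi(i)=v$, which is harmless for the set-labelling but forces $F(v)$ to absorb two distinct values $h(i-1),h(i)$, so one should confirm the intended definition tolerates this.
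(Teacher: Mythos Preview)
Your approach is correct and matches the paper's methodology. The paper does not give a self-contained proof of this particular theorem (it is merely cited from \cite{Wang-Wang-Yao2019-Euler-Split}), but its intended argument is visible in the proof of the adjacent Theorem~\ref{thm:3-equivalent-labellings} and in Theorem~\ref{thm:equivalent-Euler-graph}: a connected Euler graph on $n$ edges can be vertex-split into a single cycle $C_n$, so it suffices to label $C_n$, and when the split vertices are re-coincided each vertex of $G$ inherits a \emph{set} of labels while the edge labels remain proper --- hence ``v-set e-proper''. Your Euler-tour lift is exactly the dual description of this vertex-splitting: the surjection $\phi\colon V(C_n)\to V(G)$ you build is the inverse of the paper's v-split operation, and your $F(v)=\{h(i):\phi(i)=v\}$ is precisely the set label that arises from coinciding the split copies. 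Combined with Graham--Sloane's result that $C_n$ is harmonious for odd $n$, this is the argument the paper has in mind; your caveat about pinning down the definition of ``v-set e-proper harmonious labelling'' is well taken, since the paper indeed never states it explicitly.
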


\begin{defn}\label{defn:4C-k-labelling}
$^*$ A $(p,q)$-graph $G$ admits a vertex labelling $f:V(G)\rightarrow \{0,1,\dots ,q\}$, and an edge labelling $g:E(G)\rightarrow \{1,2,\dots \}$, such that

(1) (e-magic) each edge $xy\in E(G)$ holds $g(xy)+|f(x)-f(y)|=k$, a constant;

(2) (ee-balanced) let $s(uv)=|f(u)-f(v)|-g(uv)$ for an edge, and each edge $uv$ matches another edge $xy$ holding $s(uv)+s(xy)=k'$ true;

(3) (EV-ordered)  $\max f(V(G))< \min g(E(G))$ (or $\min f(V(G))> \max g(E(G))$);

(4) (set-ordered) $\max f(X)<\min f(Y)$ for the bipartition $(X,Y)$ of $V(G)$.

We call the labelling pair $(f,g)$ as a \emph{$(4C,k,k')$-labelling}.\qqed
\end{defn}

\begin{thm}\label{thm:3-equivalent-labellings}
\emph{\cite{Wang-Wang-Yao2019-Euler-Split}} If an Euler's graph has $4m$ edges for $m\geq 2$, then it admits a v-set e-proper graceful labelling, a v-set e-proper odd-graceful labelling, a v-set e-proper edge-magic total labelling, and a v-set e-proper $(4C,k,k')$-labelling.
\end{thm}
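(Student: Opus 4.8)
The plan is to reduce the statement about Euler's graphs with $4m$ edges to the already-available Theorem \ref{thm:Euler-v-set-e-proper-graceful-labelling}, which gives a v-set e-proper graceful labelling whenever $n\equiv 0,3\ (\bmod~4)$, in particular when $n=4m$. So the starting point is a v-set e-proper graceful labelling $(F,g)$ of the Euler's graph $G$, where $F:V(G)\to [0,q]^2$ is a vertex set-labelling, $g:E(G)\to [1,q]$ is a proper edge labelling, and for each edge $uv$ a representative $a_{uv}\in F(u)\cap F(v)$ can be selected so that $g(uv)=a_{uv}$ and $\{g(uv):uv\in E(G)\}=[1,q]$ with $q=4m$. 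First I would record the underlying bipartition: since $G$ is Eulerian with $n\equiv 0\ (\bmod~4)$, by Lemma \ref{thm:set-ordered-matrix} the labelling can be taken set-ordered, so $V(G)=X\cup Y$ with every edge between $X$ and $Y$; this is exactly the structural input the other three labellings need.

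Next I would produce the v-set e-proper odd-graceful labelling and the v-set e-proper edge-magic total labelling by the standard ``stretching'' and ``reflection'' transforms applied to the edge-value function rather than to a single vertex labelling. For odd-gracefulness, replace the selected representative $g(uv)$ by $2g(uv)-1$; since $g$ hit $[1,q]$ bijectively, the new edge labels hit $[1,2q-1]^o$ bijectively, and one enlarges the vertex sets $F(v)$ accordingly (union the scaled copies with the needed small integers) so that each $2g(uv)-1$ still lies in the intersection $F(u)\cap F(v)$ — this is where the set-labelling, as opposed to an ordinary labelling, buys the extra freedom. For the edge-magic total version, use the set-ordered property: define a vertex function $h$ on $X$ increasingly and on $Y$ increasingly with $\max h(X)<\min h(Y)$, shift $g$ to an edge-set so that $h(u)+g(uv)+h(v)=c$ is constant for all $uv$; gracefulness of $g$ guarantees the sums are all distinct before the shift, and the constant can be forced because $|f(u)-f(v)|$ ranges over $[1,q]$. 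The $(4C,k,k')$-labelling is obtained by checking the four clauses of Definition \ref{defn:4C-k-labelling} one at a time against the already-constructed set-ordered data: (e-magic) and (EV-ordered) are immediate from the edge-magic step, (set-ordered) is the bipartition we fixed at the start, and (ee-balanced) uses $q=4m$ with $m\geq 2$ so that the $q$ edges can be paired $uv\leftrightarrow xy$ with $s(uv)+s(xy)$ constant — the pairing being $g(uv)=i\leftrightarrow g(xy)=q+1-i$, which is well defined precisely because $q$ is even.

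The hard part will be the (ee-balanced) condition for the $(4C,k,k')$-labelling, and more generally making all four transforms live on the \emph{same} underlying Eulerian structure simultaneously: the scaling that fixes odd-gracefulness changes the magnitudes of $|f(u)-f(v)|$, so the edge-matching that balances $s(uv)=|f(u)-f(v)|-g(uv)$ has to be re-verified after each transform rather than quoted. I expect to handle this by doing the matching combinatorially on the index set $[1,q]$ first — pairing $i$ with $q+1-i$ — and only then pulling it back through each of the three value-transforms, checking that in every case $s(uv)+s(x y)$ collapses to a constant because the two contributions $|f(u)-f(v)|$ and $g(uv)$ are affine images of the same index $i$. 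The case $m\geq 2$ (rather than $m\geq 1$) is needed exactly to rule out the degenerate $q=4$ situation where the self-paired middle index would break balance; I would note this explicitly. Everything else is bookkeeping on set-labellings, so the write-up should be short once the index-level matching is in place.
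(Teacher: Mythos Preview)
Your proposal has a genuine gap at the very first structural step. You invoke Lemma~\ref{thm:set-ordered-matrix} to obtain a bipartition $V(G)=X\cup Y$ of the Euler graph $G$, but that lemma says the Topsnut-matrix is set-ordered if and only if $G$ has no odd cycle. An Euler graph with $4m$ edges need not be bipartite: identify one vertex of a $C_3$ with one vertex of a $C_5$ to get an Eulerian graph on $8=4\cdot 2$ edges that contains a triangle. So in general there is no bipartition of $V(G)$ to work with, and your subsequent transforms --- all of which lean on the set-ordered inequality $\max f(X)<\min f(Y)$ --- have nothing to stand on. The vague step ``enlarge the vertex sets $F(v)$ accordingly'' is also not innocuous: once you start unioning in extra integers you must re-verify that distinct vertices still carry distinct sets, and you give no mechanism for that.

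The paper's route avoids this by the reduction you skipped. By Theorem~\ref{thm:equivalent-Euler-graph}(E-1), every connected Euler graph on $n$ edges can be v-divided into a single cycle $C_n$, so it suffices to build \emph{ordinary} labellings on $C_{4m}$. Since $C_{4m}$ is an even cycle it \emph{is} bipartite, and the paper writes down an explicit set-ordered graceful labelling $f$ of $C_{4m}$ by formula, then derives the odd-graceful, edge-magic total, and $(4C,k,k')$ labellings from $f$ by concrete vertex- and edge-transforms on the cycle. When the cycle is re-coincided to recover $G$, each vertex of $G$ is the image of several cycle vertices and inherits their labels as a \emph{set}; that identification is precisely what manufactures the ``v-set e-proper'' versions on $G$. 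In short, the bipartition and the set-ordered structure live on $C_{4m}$, not on $G$, and the v-set labelling is a byproduct of vertex coincidence rather than an object you manipulate directly. Finally, your explanation of the hypothesis $m\geq 2$ is off: for even $q=4m$ the pairing $i\leftrightarrow q+1-i$ never fixes an index, so there is no ``self-paired middle'' to worry about.
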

\begin{proof} Since each Euler's graph of $n$ edges can be v-divided into a cycle of $n$ edges, so we prove this theorem only to cycles.

When $n=4m$, we make a labelling $f$ for the cycle $C_{4m}=x_1x_2\cdots x_{4m}x_1$ as: $f(x_1)=0$, $f(x_2)=2m$; $f(x_{2j+1})=2m-j-1$ with $j=1,2,\dots ,2m-2$; and $f(x_{4m-2i+2})=4m-i+1$ with $i=1,2,\dots ,m$. Compute the edge labels: $f(x_1x_{4m})=4m$, $f(x_1x_{2})=2m$; $\{f(x_{4m-i}x_{4m-i-1})=f(x_{4m-i})-f(x_{4m-i-1}):i=0,1,2,\dots ,m\}=\{4m-1,4m-2,\dots, 2m+1\}$, $\{f(x_{2i}x_{2i+1})=f(x_{2i})-f(x_{2i+1}):i=1,2,\dots ,m\}=\{1,2,\dots, 2m-1\}$. Therefore, $f$ is a graceful labelling, since $f(E(C_{4m}))=\{1,2,\dots, 4m\}$. Moreover, we have $V(C_{4m})=V_{odd}\cup V_{even}$, where $V_{odd}=\{x_1,x_3,\dots, x_{4m-1}\}$ and $V_{even}=\{x_2,x_4,\dots, x_{4m}\}$. Clearly, $f(x_1)<f(x_{4k-1})<f(x_2)<f(x_{2s})<f(x_{4m})$, we write this case by $\max f(V_{even})<\min f(V_{odd})$, and call $f$ a \emph{set-ordered graceful labelling} of the cycle $C_{4m}$ By Definition \ref{defn:proper-bipartite-labelling-ongraphs}.

Next, we set a labelling $h_1$ of the cycle $C_{4m}$ in the way: $h_1(x_{4k-1})=2f(x_{4k-1})$ for $x_{4k-1}\in V_{odd}$, and  $h_1(x_{2i})=2f(x_{2i})-1$ for $x_{2i}\in V_{even}$. We compte edge labels $h_1(x_{1}x_{4m})=2f(x_{4m})-1-2f(x_1)=8m-1$, $h_1(x_{1}x_{2})=2f(x_{2})-1-2f(x_1)=4m-1$; $\{h_1(x_{4m-i}x_{4m-i-1})=2f(x_{4m-i}x_{4m-i-1})-1:i=0,1,2,\dots ,m\}$; $\{h_1(x_{2j}x_{2j+1})=2f(x_{2j}x_{2j+1})-1:j=1,2,\dots ,m\}$. Thereby, we get $h_1(E(C_{4m}))=\{1,3,5,\dots, 8m-1\}$. Hence, we prove that $h_1$ is a \emph{set-ordered odd-graceful labelling} of  the cycle $C_{4m}$  By Definition \ref{defn:proper-bipartite-labelling-ongraphs}.

For $h_2$ (edge-magic total labelling): $h_2(x_{4k-1})=f(x_{4k-1})$ for $x_{4k-1}\in V_{odd}$, $h_2(x_{2i})=f(x_{4m-2i})$ for $x_{2i}\in V_{even}$; and $h_2(uv)=f(uv)$ with $uv\in E(C_{4m})$.  For the edge $x_1x_{4m}$ of $C_{4m}$, we have

$h_2(x_1)+h_2(x_1x_{4m})+h_2(x_{4m})=f(x_1)+f(x_1x_{4m})+f(x_{2})=0+4m+2m=6m$.

Consider the edge $x_1x_{4m}$ of $C_{4m}$, we get

$h_2(x_1)+h_2(x_1x_{2})+h_2(x_{2})=f(x_1)+f(x_1x_{2})+f(x_{4m})=0+2m+4m=6m$.

Notice that $f(x_{4m-2i})+f(x_{2i})=6m$. For edges $x_{4m-2j+1}x_{4m-2j}$ with $j=1,2,\dots ,2m-1$, we compute

$h_2(x_{4m-2j+1})+h_2(x_{4m-2j+1}x_{4m-2j})+h_2(x_{4m-2j})=f(x_{4m-2j+1})+f(x_{4m-2j+1}x_{4m-2j})+f(x_{2j})
=f(x_{4m-2j+1})+f(x_{4m-2j+1}x_{4m-2j})+6m-f(x_{4m-2i})=f(x_{4m-2j+1})+f(x_{4m-2j})-f(x_{4m-2j+1})+6m-f(x_{4m-2i})=6m$.

For edges $x_{4m-2j}x_{4m-2j-1}$ with $j=1,2,\dots ,2m-1$, we obtain

$h_2(x_{4m-2j})+h_2(x_{4m-2j}x_{4m-2j-1})+h_2(x_{4m-2j-1})=f(x_{2j})+f(x_{4m-2j}x_{4m-2j-1})+f(x_{4m-2j-1})
=6m-f(x_{4m-2i})+f(x_{4m-2j})-f(x_{4m-2j-1})+f(x_{4m-2j-1})=6m$.

Now, we have found a constant $6m$ such that  $h_2(u)+h_2(uv)+h_2(v)=6m$ for each edge $uv\in E(C_{4m})$. Therefore, $h_2$ is really a \emph{set-ordered edge-magic total labelling} of  the cycle $C_{4m}$ by Definition \ref{defn:connections-several-labellings}.

Finally, a labelling pair $(h_3,g)$ for the cycle $C_{4m}$ is defined as: $h_3(x)=f(x)$ for $x\in V(C_{4m})$, and $g(xy)=8m+t-f(xy)$ for each edge  $xy\in E(C_{4m})$ with an constant $t\geq 1$. We verify the restrictions in Definition \ref{defn:4C-k-labelling} to the labelling pair $(h_3,g)$.

(1) (e-magic) each edge $xy$ holds $g(xy)+|h_3(x)-h_3(y)|=8m+t-f(xy)+|f(x)-f(y)|=8m+t$;

(2) (ee-balanced) since $s(uv)=|h_3(u)-h_3(v)|-g(uv)=|f(u)-f(v)|-[8m+t-f(xy)]=-8m-t$, so each edge $uv$ matches another edge $xy$ holding $s(uv)+s(xy)=-16m-2t$ true;

(3) (EV-ordered)  $\max h_3(V(G))=4m< 4m+t=\min g(E(G))$;

(4) (set-ordered) $\max h_3(X)<\min h_3(Y)$ for the bipartition $(X,Y)$ of $V(G)$, since $C_{4m}$ has its own bipartition $(X,Y)$, and $f$ is set-ordered.

So, we claim that $h_3$ is really a \emph{$(4C,8m+t,-16m-2t)$-labelling} of  the cycle $C_{4m}$.

The proof of the theorem is complete.
\end{proof}

\begin{cor}\label{thm:edge-deleted-graphs}
\emph{\cite{Wang-Wang-Yao2019-Euler-Split}} If an Euler's graph has $n$ edges holding $n\equiv 1,2~(\bmod~4)$, then it admits a v-set e-proper $\varepsilon$-labelling, where $\varepsilon$-labelling  $\in \{$odd-graceful labelling, edge-magic total labelling, $(4C,k,k')$-labelling$\}$.
\end{cor}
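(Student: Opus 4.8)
The plan is to reduce the statement to cycles and then to write down explicit labellings. By Theorem~\ref{thm:equivalent-Euler-graph}(E-1), a connected Euler's graph $G$ with $n$ edges can be turned into the cycle $C_n$ by a finite series of vertex-divided operations of Definition~\ref{defn:split-operation-combinatoric}; reversing this series is a sequence of vertex-coincident operations that rebuilds $G$ from $C_n$, during which every edge keeps its label while each vertex of $G$ acquires, as its label, the union of the labels of the $C_n$-vertices coincided into it. Consequently a numeric labelling of $C_n$ of one of the three required types pulls back to a v-set e-proper labelling of $G$ in the sense of Definition~\ref{defn:set-labelling}(iv): the edge labels are unchanged, so the edge map stays proper; distinct vertices of $G$ receive disjoint, hence distinct, vertex-sets; and for each edge $uv$ of $G$ one may take as representatives the two endpoint values that the corresponding $C_n$-edge actually used. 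So it suffices to construct, on $C_{4m+1}$ and $C_{4m+2}$, an odd-graceful labelling, an edge-magic total labelling, and a $(4C,k,k')$-labelling (in the set-valued sense where genuine ones are obstructed).

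First I would treat $n\equiv 2\pmod 4$, i.e.\ the even cycle $C_{4m+2}$. Even though $C_{4m+2}$ is not graceful, it is odd-graceful, which I would exhibit by an explicit assignment to the two colour classes, chosen so that the differences $|f(x)-f(y)|$ run over $\{1,3,\dots,2(4m+2)-1\}$, in the same spirit as the vertex formulas used in the proof of Theorem~\ref{thm:3-equivalent-labellings}. An edge-magic total labelling of $C_{4m+2}$ then follows from the ``reflect the labels of one colour class'' device used in that proof, the magic constant coming from the induced identity of the form $f(x_{2j})+f(x_{2i})=\mathrm{const}$. For the $(4C,k,k')$-labelling I would keep the vertex function coming from the bipartition and set $g(xy)=N-|f(x)-f(y)|$ for a large constant $N$; since the codomain of $g$ in Definition~\ref{defn:4C-k-labelling} is all of $\{1,2,\dots\}$, no gracefulness of the cycle is needed, and the axioms (e-magic), (ee-balanced), (EV-ordered), (set-ordered) are checked exactly as in the verification at the end of the proof of Theorem~\ref{thm:3-equivalent-labellings}.

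The case $n\equiv 1\pmod 4$ is where the real difficulty lies, and I expect it to be the main obstacle. Here $C_{4m+1}$ is an odd cycle, and a parity count (using $|a-b|\equiv a+b\pmod 2$ summed around the cycle, which would force alternating parities, impossible on an odd cycle) shows that $C_{4m+1}$ admits no genuine odd-graceful labelling and, for the same two-colourability reason, no bipartition-based $(4C,k,k')$-labelling. Two ingredients are meant to rescue the situation. For the edge-magic total labelling one uses the classical fact that \emph{every} cycle, odd or even, is edge-magic total, and pulls it back by the reduction above. For odd-graceful and for $(4C,k,k')$ one exploits the set-valued half of Definition~\ref{defn:set-labelling}(iv): when $G$ is not itself a cycle it splits into at least two edge-disjoint cycles, so the vertex-divided reduction can be chosen to leave at least one genuine coincidence, producing a vertex of $G$ whose set-label contains numbers of both parities; distinct edges at that vertex may then use representatives of different parity, which defeats the parity obstruction.

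I expect the delicate bookkeeping to be exactly this last point: showing that the extra freedom is \emph{globally} usable — that one can choose the representatives consistently at every vertex so that the resulting edge labels still exhaust $[1,2n-1]^o$ (resp.\ satisfy all four $(4C,k,k')$ axioms with a uniform pair of constants $k,k'$) — and separately handling the degenerate sub-case in which $G$ has too few cycles, or is itself a single cycle, to offer any maneuvering room; there the statement for odd-graceful and $(4C,k,k')$ genuinely needs the hypothesis to be read as applying to Euler's graphs that properly contain a cycle, with the edge-magic total conclusion being the only one available for $G=C_{4m+1}$. Once the representative-selection lemma is in place, the corollary follows by combining it with the cycle pullback and the explicit $C_{4m+2}$ constructions.
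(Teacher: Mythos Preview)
The paper states this corollary without proof (citing external work), so there is no in-paper argument to compare against. Your reduction to cycles and your handling of $n\equiv 2\pmod 4$ and of the edge-magic total case are sound and follow the pattern of the proof of Theorem~\ref{thm:3-equivalent-labellings}.

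The gap is in your $n\equiv 1\pmod 4$ workaround for the odd-graceful and $(4C,k,k')$ cases. Your ``representative-selection'' mechanism does not operate inside the cycle-reduction framework you set up: once a vertex labelling of $C_n$ is fixed, each edge of $G$ inherits a \emph{determined} label from its specific $C_n$-endpoints, and the vertex-coincidence only pools those endpoint values into a set \emph{after} the edge labels are already decided. There is no stage at which you may re-pick representatives edge by edge, so the parity obstruction on $C_{4m+1}$ is not ``defeated'' this way. Conversely, you undersell the freedom that Definition~\ref{defn:set-labelling}(iv) actually grants: the vertex map may take arbitrary subsets as values, not only those arising from a cycle split, so nothing prevents you from placing a two-element set on a vertex of $C_{4m+1}$ itself. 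For example on $C_5$ the assignment $\{0\},\{9\},\{2\},\{7\},\{1,4\}$, with representatives chosen so that the five edge differences are $9,7,5,3,1$, is already a v-set e-proper odd-graceful labelling; hence your conclusion that ``the edge-magic total conclusion is the only one available for $G=C_{4m+1}$'' is wrong for odd-graceful. What you \emph{are} right about is the $(4C,k,k')$ part: condition (4) of Definition~\ref{defn:4C-k-labelling} is a structural bipartiteness requirement on $G$, and no set-valued relabelling can supply a bipartition of an odd cycle, so that clause of the corollary needs either an additional hypothesis or a relaxed reading of ``v-set e-proper $(4C,k,k')$'' that the paper does not spell out.
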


\begin{cor}\label{thm:general-graphs}
\emph{\cite{Wang-Wang-Yao2019-Euler-Split}} Any non-Euler $(p,q)$-graph $G$ of $n$ edges corresponds an Euler's graph $H=G+E^*$  admitting a v-set e-proper $\varepsilon$-labelling $f$ for $\varepsilon$-labelling $\in \{$odd-graceful labelling, edge-magic total labelling, $(4C,k,k')$-labelling$\}$, such that $G$ admits a v-set e-proper labelling $g$ induced by $f$ and $g(E(G))=\{1,2,\dots ,|E^*|+q\}\setminus \{f(xy):xy\in E^*\}$.
\end{cor}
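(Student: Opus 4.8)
The plan is to reduce the statement to the Euler's-graph case already settled by Theorem~\ref{thm:3-equivalent-labellings}: complete $G$ to a connected Euler's graph $H=G+E^*$ whose number of edges is a multiple of $4$ (and at least $8$), apply Theorem~\ref{thm:3-equivalent-labellings} to $H$, and then restrict the resulting v-set e-proper labelling back to $G$. First I would build $E^*$. Because $G$ is not an Euler's graph, it is either disconnected or has a vertex of odd degree (possibly both). I would connect the components by picking one vertex from each and joining all of them to a single fresh vertex; then, since the set of odd-degree vertices of the current graph is even, pair it up arbitrarily and, for each pair $\{a,b\}$, insert a path $a\,w\,b$ through a new vertex $w$. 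Every vertex now has even degree, so the graph $H_{0}$ obtained this way is a connected Euler's graph containing $G$ (the introduction of auxiliary vertices is legitimate under the notation $H=E^{*}+G$ as used in Theorem~\ref{defn:group-flawed-labellings}). Finally, attaching a cycle $C_{\ell}$ at some vertex of $H_{0}$ through $\ell-1$ new vertices adds exactly $\ell$ edges for any $\ell\ge 3$ while preserving connectivity and the Euler property, and one cycle plus $\{0\}$ realizes every admissible increment; hence $H_{0}$ can be enlarged to a connected Euler's graph $H=G+E^{*}$ with $|E(H)|=4m$ for some $m\ge 2$.

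Next I would invoke Theorem~\ref{thm:3-equivalent-labellings}: the connected Euler's graph $H$ with $4m$ edges, $m\ge 2$, admits a v-set e-proper graceful labelling, a v-set e-proper odd-graceful labelling, a v-set e-proper edge-magic total labelling, and a v-set e-proper $(4C,k,k')$-labelling; denote any one of these by $f$. In each case $f$ is a pair consisting of a vertex set-labelling $F\colon V(H)\rightarrow [0,p_{H}+q_{H}]^{2}$ and a proper edge labelling of $E(H)$ whose image is the full edge-label set prescribed by the corresponding $\varepsilon$-labelling, i.e. $\{1,2,\dots,|E(H)|\}=\{1,2,\dots,|E^{*}|+q\}$ in the graceful, edge-magic total and $(4C,k,k')$ cases, and $[1,2|E(H)|-1]^{o}$ in the odd-graceful case.

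The last step is the restriction. Since $V(G)\subseteq V(H)$ and $E(G)\subseteq E(H)$, I would define $g$ on $G$ by keeping the vertex set-labels $F$ on the vertices of $G$, keeping the edge labels of $f$ on $E(G)$, and discarding the labels of the edges of $E^{*}$. Distinctness of the vertex set-labels and of the edge labels is inherited from $H$, so $(F|_{V(G)},g|_{E(G)})$ is a v-set e-proper labelling of $G$ in the sense of Definition~\ref{defn:set-labelling}; and its edge-label set is precisely $\{1,2,\dots,|E^{*}|+q\}\setminus\{f(xy):xy\in E^{*}\}$ for the graceful / edge-magic total / $(4C,k,k')$ choices (and the corresponding odd-set analogue for the odd-graceful choice), which is the claimed description, with $|E(G)|=|E(H)|-|E^{*}|=q$ labels as it must be.

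The main obstacle is entirely concentrated in the first step: one must check that the completion $H=G+E^{*}$ can always be made \emph{simultaneously} connected, Eulerian, simple, and of edge-count $4m$ with $m\ge 2$, for \emph{every} non-Euler $G$, including very sparse or highly degenerate instances, and that the auxiliary vertices and edges adjoined in forming $E^{*}$ do not create multi-edges and do not interfere with the restriction (they simply fall outside $V(G)$ and $E(G)$). Once this pairing-and-padding construction is carried out with care, Steps~2 and~3 follow immediately from Theorem~\ref{thm:3-equivalent-labellings} and the definitions of v-set e-proper $\varepsilon$-labellings, so no further hard work is needed; the argument also shows the conclusion holds for the v-set e-proper graceful labelling as a bonus.
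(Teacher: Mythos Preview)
The paper does not supply an explicit proof of this corollary; it is stated immediately after Theorem~\ref{thm:3-equivalent-labellings} and Corollary~\ref{thm:edge-deleted-graphs} and is evidently meant to follow from them by the ``complete to an Euler's graph, label, then restrict'' mechanism. Your proposal is exactly that mechanism, so it matches the intended derivation.

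One small point of friction: you justify adjoining \emph{new vertices} when forming $E^{*}$ by appealing to Theorem~\ref{defn:group-flawed-labellings}, but in that theorem each edge of $E^{*}$ joins two vertices already present in the components $G_i$, so strictly speaking $H=G+E^{*}$ there has $V(H)=V(G)$. In the present corollary, however, the authors do not insist on $V(H)=V(G)$ (and the conclusion only concerns $g(E(G))$), so your construction is still admissible in spirit; if one wished to avoid new vertices entirely, one could instead add edges between existing odd-degree vertices and, to adjust $|E(H)|\bmod 4$, walk around existing cycles---but this only works when $G$ has enough room, which is why your padding-by-auxiliary-vertices is the cleaner universal fix. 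Also note you need not force $|E(H)|\equiv 0\pmod 4$: Corollary~\ref{thm:edge-deleted-graphs} already covers $|E(H)|\equiv 1,2\pmod 4$ for the three $\varepsilon$-labellings in the statement, so a lighter completion would suffice; your choice is simply a safe overshoot.
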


\begin{thm}\label{thm:more-Euler-graphs}
\emph{\cite{Wang-Wang-Yao2019-Euler-Split}} There are connected and disjoint Euler's graphs $H_1,H_2,\dots,H_m$, such that another connected Euler's graph $G$ is obtained by coinciding $H_i$ with some $H_j$ for $i\neq j$. Then $G$ admits a v-set e-proper $\varepsilon$-labelling for $\varepsilon$-labelling  $\in \{$odd-graceful labelling, edge-magic total labelling, $(4C,k,k')$-labelling$\}$ if $\sum^m_{i=1} |E(H_i)|\equiv 0~(\bmod~4)$.
\end{thm}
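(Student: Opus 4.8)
The plan is to reduce the claim to the single-graph result Theorem~\ref{thm:3-equivalent-labellings} by a short counting observation. First I would record that coinciding vertices of pairwise disjoint graphs leaves the edge set unchanged: every edge of some $H_i$ persists in $G$ and no new edge is created (the coincidences being carried out so that $G$ stays a simple graph, as is already implicit in the hypothesis that $G$ is a connected Euler's graph). Hence $|E(G)|=\sum_{i=1}^{m}|E(H_i)|$, so the congruence hypothesis says exactly $|E(G)|=4m'$ for some integer $m'\ge 0$. I would also observe that the Euler property of $G$ is forced rather than assumed: each vertex of $G$ is a coincidence of vertices drawn from one or more $H_i$, and a sum of even degrees is even; thus the only genuine content of the hypothesis is the mod-$4$ condition on $|E(G)|$.

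For the generic case $m'\ge 2$ the proof would then be immediate: applying Theorem~\ref{thm:3-equivalent-labellings} to $G$ itself (a connected Euler's graph with $4m'$ edges, $m'\ge 2$) yields a v-set e-proper graceful labelling, a v-set e-proper odd-graceful labelling, a v-set e-proper edge-magic total labelling, and a v-set e-proper $(4C,k,k')$-labelling of $G$, which is more than what is asked.

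It remains to dispose of the boundary values $m'\in\{0,1\}$. If $m'=0$ then $G$ has no edge and, being connected, is a single vertex, so the statement is vacuous. If $m'=1$, so $|E(G)|=4$, I would note that a simple connected Euler's graph with four edges has all degrees even and at least $2$, with degree sum $8$, which forces the degree sequence $(2,2,2,2)$, i.e. $G\cong C_4$ (equivalently, $G$ is v-divided into $C_4$ by Theorem~\ref{thm:equivalent-Euler-graph}(E-1)). Since $4\equiv 0~(\bmod~4)$, the cycle $C_4$ is graceful, and I would write down an explicit set-ordered graceful labelling of it, for instance $f(x_1)=0,\ f(x_2)=4,\ f(x_3)=1,\ f(x_4)=2$ on $C_4=x_1x_2x_3x_4x_1$, whose edge labels are $4,3,1,2$ and which is set-ordered since $\max f(\{x_1,x_3\})<\min f(\{x_2,x_4\})$. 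Running the three conversions from the proof of Theorem~\ref{thm:3-equivalent-labellings} on it (doubling to an odd-graceful labelling, the magic-constant reassignment, and the complement $g(xy)=8m+t-f(xy)$ with $m=1$ for the $(4C,k,k')$-labelling) then produces the required v-set e-proper labellings of $G$.

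The hard part here is not a deep difficulty but keeping the hypotheses honest: one must make sure the vertex-coincidence operations are performed so that $G$ remains simple (otherwise the v-divided machinery of Theorem~\ref{thm:equivalent-Euler-graph}, and hence Theorem~\ref{thm:3-equivalent-labellings}, does not apply verbatim), and one must not overlook that the explicit $C_{4m}$-labelling in the proof of Theorem~\ref{thm:3-equivalent-labellings} degenerates at $m=1$, so the four-edge case has to be treated by hand. With those two points settled, the theorem follows as a corollary of the single-graph result.
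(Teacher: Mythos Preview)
Your argument is correct and matches the route the paper implicitly relies on: the statement is presented (with citation) immediately after Theorem~\ref{thm:3-equivalent-labellings}, and the intended proof is exactly the reduction you carry out---observe that vertex-coinciding preserves the edge multiset, so $|E(G)|=\sum_i|E(H_i)|\equiv 0~(\bmod~4)$, and then invoke the single-graph result (whose own proof v-divides $G$ into a cycle $C_{|E(G)|}$ and labels that cycle).

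One small remark: your separate treatment of $m'\in\{0,1\}$ is harmless but unnecessary under the stated hypotheses. Since each $H_i$ is a connected Euler's graph it has at least three edges, and the phrase ``coinciding $H_i$ with some $H_j$ for $i\neq j$'' forces $m\ge 2$; hence $\sum_i|E(H_i)|\ge 6$, and the congruence then gives $|E(G)|\ge 8$, i.e.\ $m'\ge 2$ automatically, so Theorem~\ref{thm:3-equivalent-labellings} applies without any boundary analysis.
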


\begin{thm}\label{thm:v-pseudo-e-proper-graceful}
A connected Euler's graph $G$ of $n$ edges admits a graceful labelling $f$ if and only if a cycle $C_n$ obtained by dividing some vertices of $G$ admits this labelling $f$ as a its v-pseudo e-proper graceful labelling.
\end{thm}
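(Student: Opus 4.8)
The plan is to prove the equivalence stated in Theorem~\ref{thm:v-pseudo-e-proper-graceful} by exploiting the fact, recorded in Theorem~\ref{thm:equivalent-Euler-graph}(E-1), that every connected Euler's graph $G$ of $n$ edges can be transformed into a cycle $C_n$ by a series of vertex-divided operations (equivalently, $G$ is recovered from $C_n$ by a series of vertex-coincident operations). The key observation is that the vertex-divided and vertex-coincident operations of Definition~\ref{defn:split-operation-combinatoric} do not change the edge set: if $G\wedge x$ splits $x$ into $x',x''$, then $E(G\wedge x)=E(G)$ under the natural correspondence of edges. Hence any labelling that is defined on \emph{edges} is carried along unchanged under this process, which is exactly what the notion of a ``v-pseudo e-proper'' labelling is meant to capture.

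First I would set up notation precisely: fix the sequence of vertex-divided operations carrying $G$ to $C_n$, and let $\varphi:E(G)\to E(C_n)$ be the induced bijection on edge sets (the identity on ``abstract'' edges, only the endpoints change). For the forward direction, suppose $G$ admits a graceful labelling $f$, so $f:V(G)\to[0,n]$ with $|f(u)-f(v)|$ ranging exactly over $[1,n]$ on $E(G)$. I would push $f$ through the splitting: each time a vertex $x$ is split into $x',x''$, assign both $x'$ and $x''$ the value $f(x)$. The resulting map $g$ on $V(C_n)$ is no longer injective in general (the same value $f(x)$ appears at every copy of a split vertex), which is precisely why it is only a ``pseudo'' labelling rather than a genuine vertex labelling; but the induced edge labels $|g(a)-g(b)|$ for $ab\in E(C_n)$ equal $|f(u)-f(v)|$ for the corresponding edge $uv\in E(G)$, so the edge-label set is still exactly $[1,n]$. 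This shows $C_n$ admits $f$ as a v-pseudo e-proper graceful labelling, once I state the definition of that term carefully (a vertex value assignment, possibly non-injective, whose induced edge labels $\{|g(a)-g(b)|:ab\in E(C_n)\}=[1,n]$, which becomes a genuine graceful labelling on the graph obtained by coinciding equally-labelled vertices back into $G$).

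For the converse, suppose the cycle $C_n$ obtained by dividing some vertices of $G$ admits $f$ as a v-pseudo e-proper graceful labelling. Then by definition the non-injective vertex assignment on $C_n$ becomes consistent under the inverse vertex-coincident operations: whenever two copies $x',x''$ in $C_n$ are to be coincided to recover a vertex $x$ of $G$, the pseudo-labelling assigns them a common value, so coinciding them yields a well-defined value $f(x)$. Running all the vertex-coincident operations of Theorem~\ref{thm:equivalent-Euler-graph}(E-1) in reverse produces a genuine labelling $f:V(G)\to[0,n]$, and since the edge set is unchanged the induced edge labels still exhaust $[1,n]$; injectivity of $f$ on $V(G)$ is not an issue for gracefulness (gracefulness only requires the edge labels to be a permutation of $[1,n]$, with the vertex labels automatically distinct on any edge). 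Thus $f$ is a graceful labelling of $G$.

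The main obstacle, and the part requiring the most care, is pinning down the precise definition of a ``v-pseudo e-proper'' labelling so that the two directions glue correctly: one must insist that the non-injectivity of the cycle assignment is exactly the non-injectivity forced by the coincidence pattern that reconstructs $G$ (otherwise an arbitrary v-pseudo labelling of $C_n$ need not descend to $G$). In other words the theorem is really an equivalence conditioned on a fixed vertex-division scheme, and I would make that explicit, invoking Theorem~\ref{thm:equivalent-Euler-graph} to guarantee at least one such scheme exists for every connected Euler's graph. A secondary point to check is that the vertex-divided operation as defined genuinely preserves $|E|$ and the edge-endpoint incidence up to the relabelling $\varphi$; this is immediate from Definition~\ref{defn:split-operation-combinatoric}, but should be stated as a short lemma so the edge-label computation $|g(a)-g(b)|=|f(u)-f(v)|$ is justified rather than asserted.
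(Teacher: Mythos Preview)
Your approach is essentially the proof the paper has in mind: the theorem is stated in the paper with no accompanying proof environment at all, evidently because the authors regard it as an immediate consequence of Theorem~\ref{thm:equivalent-Euler-graph}(E-1) together with the fact that the vertex-divided and vertex-coincident operations of Definition~\ref{defn:split-operation-combinatoric} leave the edge set (and hence induced edge labels) unchanged. Your write-up makes that reasoning explicit, including the care about what ``v-pseudo e-proper'' must mean relative to a fixed division scheme, which the paper never spells out.

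One correction is needed. In the converse direction you write that ``injectivity of $f$ on $V(G)$ is not an issue for gracefulness (gracefulness only requires the edge labels to be a permutation of $[1,n]$ \ldots)''. That is false in this paper's conventions: Definition~\ref{defn:proper-bipartite-labelling-ongraphs} requires condition~(\ref{Proper01}), namely $|\theta(V(G))|=p$, so vertex injectivity is part of being graceful. The fix is exactly the one you already anticipate in your ``main obstacle'' paragraph: the phrase ``admits this labelling $f$ as a v-pseudo e-proper graceful labelling'' must be read so that the only repetitions among vertex values on $C_n$ are those forced by the chosen division scheme --- equivalently, so that after coincidence the map on $V(G)$ is the original $f$, which is injective by hypothesis in one direction and by definition of the term in the other. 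Once you build that into the definition (rather than dismissing injectivity), both implications go through as you describe.
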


By the above investigation, we present an obvious result as follows:

\begin{thm}\label{thm:pseudo-v-set-e-proper-graceful}
Every connected graph admits a v-pseudo e-proper graceful labelling.
\end{thm}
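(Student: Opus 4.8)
The plan is to reduce the general statement to the Euler case, which is already available to us. First I would recall the relevant definitions: a \emph{v-pseudo e-proper graceful labelling} of a graph $G$ of $n$ edges is a labelling $f$ that becomes a v-set e-proper graceful labelling after dividing some vertices of $G$ so that the resulting graph is a cycle $C_n$ (this is exactly the content of Theorem~\ref{thm:v-pseudo-e-proper-graceful}, restricted there to Euler's graphs). So the task is: given an arbitrary connected graph $G$ with $q$ edges, produce a labelling $f$ and a sequence of vertex-divided operations turning $G$ into $C_q$ such that $f$ is graceful on $C_q$ in the v-pseudo e-proper sense.

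The key steps, in order, are as follows. (1) Observe that the vertex-divided operation of Definition~\ref{defn:split-operation-combinatoric} only splits a vertex into two vertices with disjoint neighbourhoods and never changes the edge set, so it preserves the number of edges and (as recorded in Lemma~\ref{thm:equivalent-proof} and the surrounding discussion) preserves connectivity in the sense needed here; moreover a connected graph can always be reduced by a series of such operations to a single cycle on its $q$ edges precisely when it has a closed spanning walk using each edge once, i.e. when it is Eulerian — and Theorem~\ref{thm:equivalent-Euler-graph}(E-1) states exactly that an Euler graph of $n$ edges can be divided into $C_n$. (2) For a \emph{non-Euler} connected $(p,q)$-graph $G$, invoke Corollary~\ref{thm:general-graphs}: there is an edge set $E^*$ such that $H=G+E^*$ is Eulerian and admits a v-set e-proper graceful-type labelling $f$ with $g(E(G))=\{1,2,\dots,|E^*|+q\}\setminus\{f(xy):xy\in E^*\}$. (3) Apply Theorem~\ref{thm:v-pseudo-e-proper-graceful} to the Euler graph $H$: divide vertices of $H$ to obtain the cycle $C_{q+|E^*|}$ on which $f$ is graceful; then deleting the arcs of $E^*$ (equivalently restricting to the sub-walk corresponding to $E(G)$) yields the claimed v-pseudo e-proper graceful labelling of $G$ itself, with the edge-label set being the $q$ values of $\{1,\dots,q+|E^*|\}$ not used on $E^*$. (4) Finally note the Euler case is itself covered directly by Theorem~\ref{thm:v-pseudo-e-proper-graceful}, so the two cases together exhaust all connected graphs.

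The main obstacle I expect is step~(3): making precise what ``graceful'' means on the non-cycle graph $G$ once we only have a graceful labelling on the larger cycle $C_{q+|E^*|}$ coming from $H$. The edge labels of $G$ will be a $q$-element subset of $[1,q+|E^*|]$ rather than $[1,q]$, so one has to confirm that this is precisely what the definition of a v-pseudo e-proper graceful labelling allows (a v-\emph{set} relaxation where vertices may carry sets / the e-proper induced edge labels need only be pairwise distinct, not an initial segment), and to check that the vertex divisions used to straighten $H$ into a cycle restrict consistently to $G$. A secondary technical point is verifying that the division sequence of Theorem~\ref{thm:equivalent-Euler-graph} can be chosen to respect the distinguished edge set $E^*$, so that deleting $E^*$ afterwards leaves a genuine (open) walk realising $G$; this is routine given that an Eulerian graph has an Euler circuit through any prescribed edge, but it should be stated carefully. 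Everything else — preservation of edge count, distinctness of the induced edge labels inherited from the cycle — is immediate from the cited results.
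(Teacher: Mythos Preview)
The paper does not actually give a proof of this theorem; it simply writes ``By the above investigation, we present an obvious result as follows'' and states the theorem. So there is no detailed argument in the paper to compare against, and your proposal is an attempt to make that ``obvious'' explicit along the same lines the surrounding material suggests (Euler graphs, cycles, Corollary~\ref{thm:general-graphs}).

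That said, there are two concrete gaps in your reconstruction.

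First, step~(4) misreads Theorem~\ref{thm:v-pseudo-e-proper-graceful}. That theorem is a \emph{biconditional}: a connected Euler graph $G$ on $n$ edges admits a graceful labelling $f$ \emph{if and only if} the cycle $C_n$ obtained by vertex-dividing $G$ carries $f$ as a v-pseudo e-proper graceful labelling. It does \emph{not} assert that every Euler graph admits such a labelling; it presupposes that $G$ is already graceful. Euler graphs that are not graceful (for example $K_5$, which is Eulerian with $10$ edges but not graceful) are not covered by invoking that theorem alone. So the Euler case is not ``covered directly'' as you claim.

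Second, in step~(2) you appeal to Corollary~\ref{thm:general-graphs}, but look at the list of labellings it actually provides: odd-graceful, edge-magic total, and $(4C,k,k')$. Graceful is absent, precisely because $C_n$ is graceful only when $n\equiv 0$ or $3\pmod 4$. To salvage the graceful case you would instead need Theorem~\ref{thm:Euler-v-set-e-proper-graceful-labelling} together with an argument that $E^*$ can be chosen so that $q+|E^*|\equiv 0$ or $3\pmod 4$ \emph{and} $G+E^*$ is Eulerian. This is plausible (one can add enough edges to fix both the degree parities and the residue class), but it must be said explicitly; as written, the cited corollary does not supply what you need.

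The obstacle you flagged yourself --- that after deleting $E^*$ the edge labels occupy a $q$-subset of $[1,q+|E^*|]$ rather than $[1,q]$ --- is exactly the point on which the paper is loose, and your honesty about it is well placed. Whether the paper intends ``graceful'' here in the strict sense (edge labels exactly $[1,q]$) or in the relaxed v-set/e-proper sense (edge labels merely distinct, inherited from a larger graceful object) is never made precise; the surrounding corollaries suggest the latter. If the strict sense is intended, then even the paper's ``obvious'' claim runs into the $K_5$ obstruction above. You should state explicitly which reading you adopt and check that your argument actually delivers it.
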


If an Euler's graph has odd edges, then it admits a v-set e-proper harmonious labelling. For odd $m$, each complete graph $K_{2m+1}$ admits a v-set e-proper harmonious labelling. See examples shown in Fig.\ref{fig:graceful-euler-graphs}.

\begin{figure}[h]
\centering
\includegraphics[width=8.6cm]{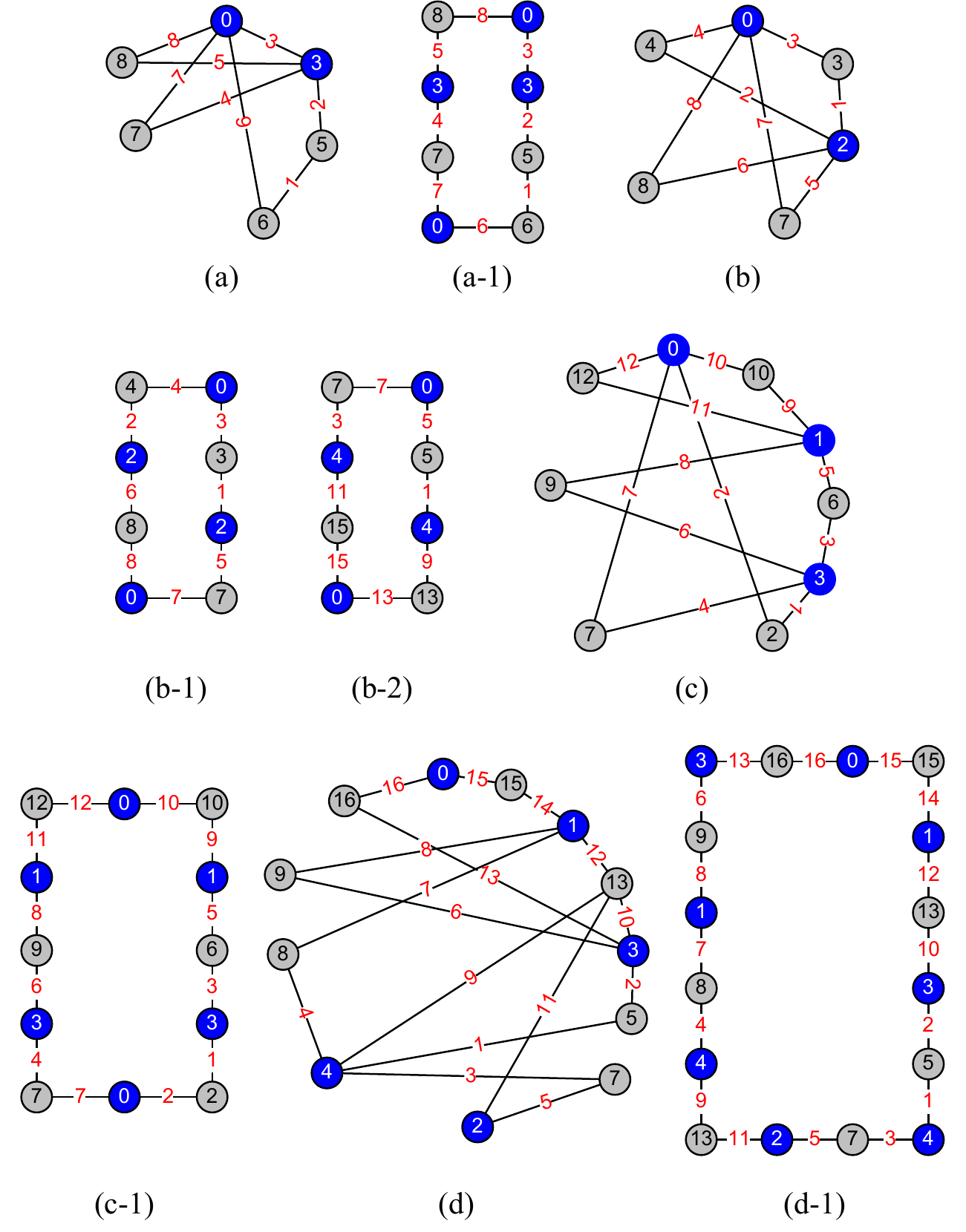}
\caption{\label{fig:graceful-euler-graphs}{\small Each cycle ($k$-1) admits a v-pseudo e-proper graceful labelling for $k=$a,b,c,d.}}
\end{figure}

\section{Conclusion}

In this paper, we have built up the mathematical model of Hanzis (simplified Chinese characters), called \emph{Hanzi-graphs}, and have made the topological graphic passwords of Hanzis by graph labellings, called \emph{Hanzi-gpws} for short, since it may be good for speaking or write conveniently Hanzis as passwords in communication information of Chinese people. In building up the mathematical models of Hanzis, we have shown some printed Hanzis because of there are differences between printed Hanzis.  Hanzis for mathematical models are based on GB2312-80. Hanzi-gpws introduced in Section 4 are constructed by two constituent parts:  Hanzi-graphs in Section 2 and graph labelling/colorings in Section 3. By the disconnected property of Hanzi-gpws we define a new type of graph labellings, called \emph{flawed graph labellings}. Joining disconnected graphs admitting flawed graceful/odd-graceful labellings by labelled edges can produce many pairwise non-isomorphic Hanzi-gpws, this is a good harvest for us in constructing Topsnut-gpws. One disconnected Hanzi-gpw admitting flawed graceful/odd-graceful labelling will distribute us a set of connected Topsnut-gpws admitting  graceful/odd-graceful labellings. Such disconnected Hanzi-gpws are easily used in safeguard of networks constructed by communities.

In the transformation from Hanzi-gpws to TB-paws in Section 5, we have provided several algorithms on Topsnut-matrices, adjacency e-value and ve-value matrices, such as, ALGORITHM-$x$ (or $1$-line-O-$k$) with $x\in \{$I, II, III, VI$\}$ and $k\in [1,4]$. Here, the e-value and ve-value matrices are new approaches in producing TB-paws. In this section 5, we have introduced ``analytic Hanzis'' in $xOy$-plane first, which can distribute more complex TB-paws from analytic Hanzi-matrices. Hanzi-gpws made by graph labellings with variables are useful in making TB-paws with longer bytes.

We, in Section 6, have explored the encryption of networks, a new topic in safety of networks. Especially, we have given some self-similar Hanzi-networks by algorithms, for example, FIBONACCI-VERTEX algorithm based on the vertex-planting operation. Similarity is related with the graph isomorphism problem, which is the computational problem of determining whether two finite graphs are isomorphic. The problem is not known to be solvable in polynomial time nor to be NP-complete, and therefore may be in the computational complexity class NP-intermediate. It is known that the graph isomorphism problem is in the low hierarchy of class NP, which implies that it is not NP-complete unless the polynomial time hierarchy collapses to its second level (Wikipedia). Based on simple Hanzis admitting graph labellings  used frequently, we can quickly set up self-similar Hanzi-networks by our vertex-coincided algorithm-$k$ and edge-coincided algorithm-$k$ with $k=$I, II, III, IV, and self-similar mixed-coincided algorithm, by these algorithms we get graphic semi-groups.

In the new topic of encrypting networks, we employee graphic groups to make passwords for large-scale networks in short time.  Our  every-zero graphic groups can ba used to encrypt every network/graph, namely ``graph-to-graph'', and such graphic groups run in encrypting network/graphs are supported well by coloring theory of graph theory, such as traditional colorings, distinguishing colorings etc. It is noticeable, graphic groups can be used in machine encryption, or AI  encryption for networks. We hope algebraic methods may solve some open problems of graph theory, for example, Graceful Tree Conjecture, and try to use algebraic methods for discovering possible regularity among labellings  all in confusion.

Motivated from the writing stroke order of Hanzis, we have studied directed Hanzi-gpws in Section 7. Directed graphs/networks are important in real life and scientific researches. For instance, ``graph network'' first proposed by Battaglia \emph{et al.} in \cite{Battaglia-27-authors-arXiv1806-01261v2} may be basic as strong AI, Graph NNs (Graph Neural Networks) and Relational Networks, also, may be the hotspot of the next Deep Learning and AI algorithm. We have worked on directed Hanzi-gpws, and obtained some results. Clearly, the investigation of directed Hanzi-gpws just is a beginning, one need to pay more hard attention on it.

Through investigation of  Hanzi-gpws we have met some mathematical problems in Section 8, such as, the number of Hanzi-graphs after decomposing a graph, determining graphs admitting $0$-rotatable set-ordered system of (odd-)graceful labellings, labelling a digraph  with a \emph{Hanzi labelling} (a new type of labellings based on Hanzis, but numbers and functions), finding graph labellings of self-similar Hanzi-networks.

\emph{No doubt, investigating Hanzi-gpws helps us to create new graph labellings for new Hanzi-gpws, and find new transformations for making more complex TB-paws, and think of new mathematical problems, as well as do more works on application of Hanzi-gpws.}

\emph{If there is a country in the world, every word of her can become a poem, a painting, then there must be Chinese characters}.

\section*{Acknowledgment}

The author, \emph{Bing Yao}, is delight for supported by the National Natural Science Foundation of China under grants 61163054, 61363060 and 61662066. Bing Yao, with his heart, thanks every member of Symposium of Hanzi Topological Graphic Passwords in the first semester of 2018-2019 academic year for their helpful discussions and part of their works, and part of members are supported by Scientific Research Project Of Gansu University under grants 2016A-067, 2017A-047 and 2017A-254.



%


{\large \textbf{Appendix A.}}

\begin{flushleft}
{\small
\textbf{Table-1.} Stirling's approximation $n!=(\frac{n}{e})^n\sqrt{2n\pi}$

$${
\begin{split}
9!&=362,880\approx 2^{18.46913302}\approx 2^{18.5}\\
10!&=3,628,800\approx 2^{21.791061114717}\approx 2^{21.8}\\
11!&=39,916,800\approx 2^{25.2504927333542}\approx 2^{25.3}\\
12!&=479,001,600\approx 2^{28.8354552340754}\approx 2^{28.8}\\
13!&=6,227,020,800\approx 2^{32.5358949522165}\approx 2^{32.5}\\
14!&=87,178,291,200\approx 2^{36.3432498742741}\approx 2^{36.3}\\
15!&\approx 2^{40.2501404698826}\approx 2^{40}\\
20!&\approx 2^{61.0773839209062}\approx 2^{61}\\
30!&\approx 2^{107.709067341973}\approx 2^{107.7}\\
50!&\approx 2^{214.20813806359}\approx 2^{214}\\
100!&\approx 2^{524.76499329006}\approx 2^{524.8}\\
150!&\approx 2^{872.859506347079}\approx 2^{872.9}\\
160!&\approx 2^{945.664752154612}\approx 2^{945.7}\\
170!&\approx 2^{1019.36945292773}\approx 2^{1019}
\end{split}}$$
}
\end{flushleft}

\vskip 1cm

{\large \textbf{Appendix B.}}

\begin{flushleft}
{\footnotesize
\textbf{Table-2.} The numbers of trees of order $p$ \cite{Harary-Palmer-1973}.
\begin{tabular}{cll}
$p$&$t_p$&$T_p$\\
6&6&2\\
7&11&48\\
8&23&115\\
9&47&286\\
10&106&719\\
11&235&1,842\\
12&551&4,766\\
13&1,301&12,486\\
14&3,159&32,973\\
15&7,741&87,811\\
16&19,320&235,381\\
17&48,629&634,847\\
18&123,867&1,721,159\\
19&317,955&4,688,676\\
20&823,065&12,826,228\\
21&2,144,505&35,221,832\\
22&5,623,756&97,055,181\\
23&14,828,074&268,282,855\\
24&39,299,897&743,724,984\\
25&104,636,890&2,067,174,645\\
26&279,793,450&5,759,636,510\\
\end{tabular}
}

where $t_p$ is the number of trees of order $p$, and $T_p$ is the number of rooted trees of order $p$.
\end{flushleft}

\vskip 1cm

{\large \textbf{Appendix C.}}

\begin{figure}[h]
\centering
\includegraphics[width=8.6cm]{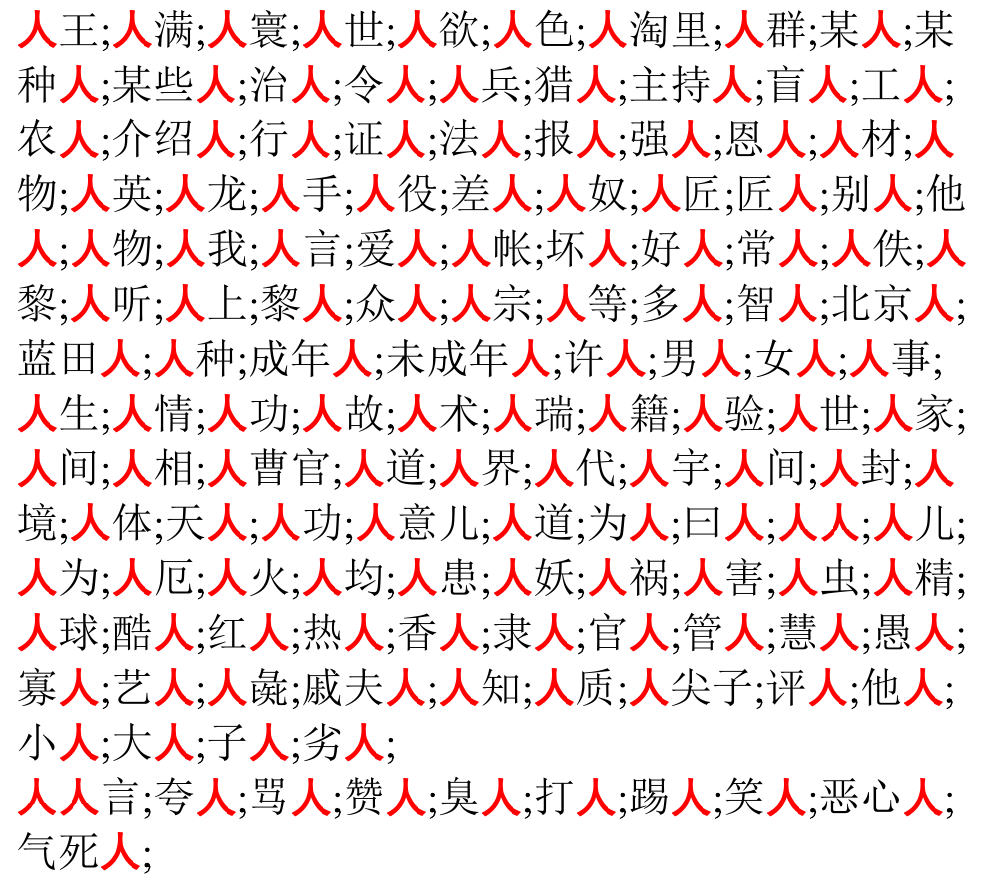}\\
\caption{\label{fig:man-word-group} {\small The word groups related with a Hanzi $H_{4043}$, also $H_{4043}=$man.}}
\end{figure}

\begin{figure}[h]
\centering
\includegraphics[width=8.6cm]{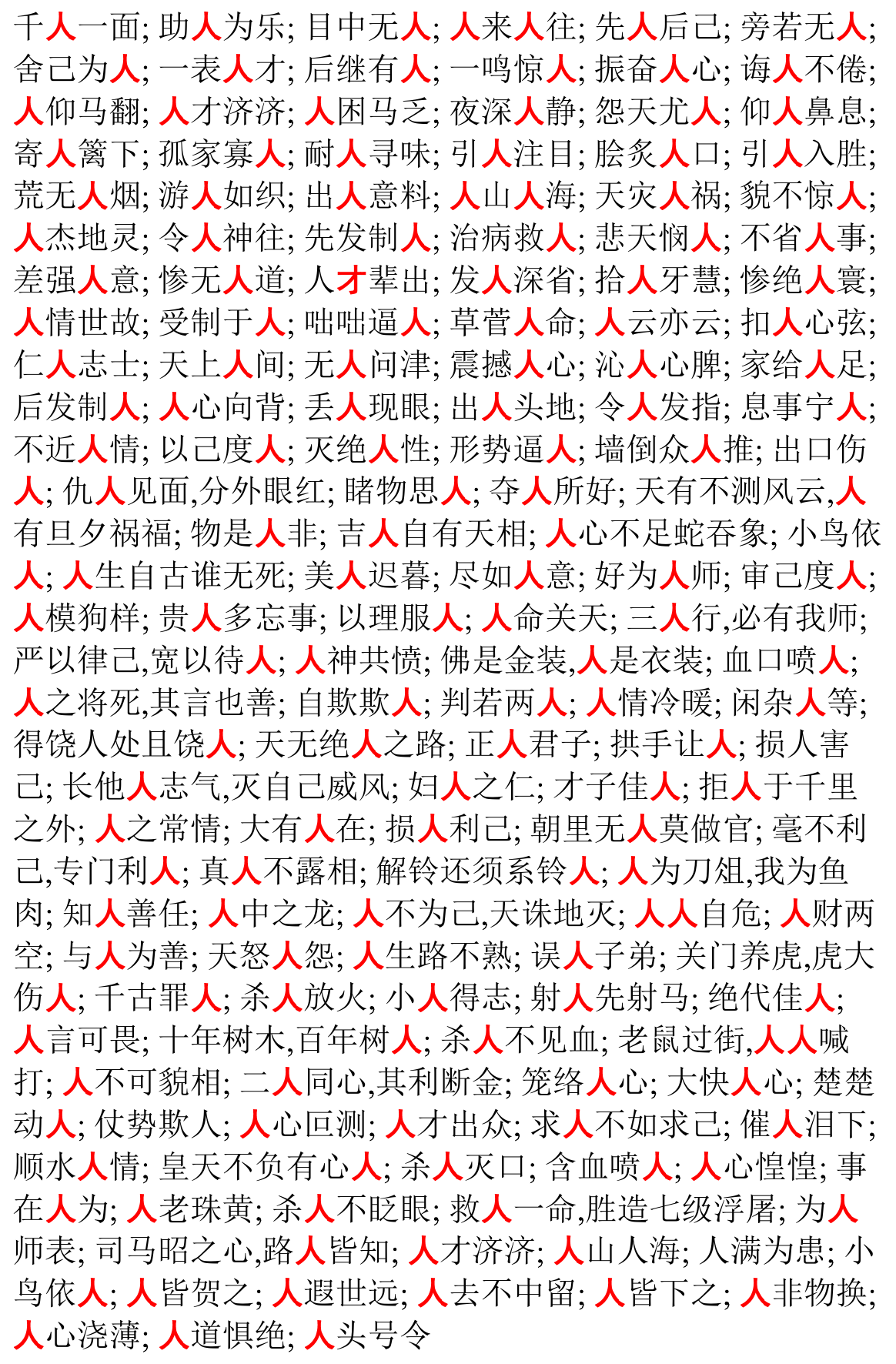}\\
\caption{\label{fig:man-chengyu} {\small The idioms related with a Hanzi $H_{4043}$.}}
\end{figure}

\end{document}